  \renewcommand{\chaptermark}[1]{\markboth{\chaptername \ \thechapter \ \ #1}{}}
\newcommand{\abs}[1]{\lvert{#1}\rvert}
\newcommand{\iid}{i.\@i.\@d.\ }
\newcommand{\tabitem}{~~\llap{\textbullet}~~}
\DeclareMathOperator{\Tr}{Tr}
\DeclareMathOperator{\maxo}{maximize}
\DeclareMathOperator{\mino}{minimize}
\newtheorem{Thm}{Theorem}[chapter]
\newtheorem{Lem}{Lemma}[chapter]
\newtheorem{Cor}{Corollary}[chapter]
\newtheorem{Def}{Definition}[chapter]
\newtheorem{T-Prob}{Transformed Problem}[chapter]
\newtheorem{remark}{Remark}[chapter]
\pretocmd{\tableofcontents}{%
  \if@openright\cleardoublepage\else\clearpage\fi
  \pdfbookmark[0]{\contentsname}{toc}%
}{}{}%
\begin{document}
\begin{titlepage}
\begin{center}
\vspace*{1cm}
\Huge \hspace{-15mm}\textbf{Performance Analysis and Design of Non-orthogonal Multiple Access for Wireless Communications}\\
\vspace{1.5cm}
\hspace{-15mm}\normalsize\textbf{Zhiqiang Wei}
\\
\vspace{2cm}
\normalsize
{\hspace{-15mm}A thesis submitted to the Graduate Research School of\\
\hspace{-15mm}The University of New South Wales\\
\hspace{-15mm}in partial fulfillment of the requirements for the degree of\\
\text{ \ }\\
\hspace{-15mm}\textbf{Doctor of Philosophy}}\\
\vspace{2cm}
\hspace{-15mm}\includegraphics[width=0.4\textwidth]{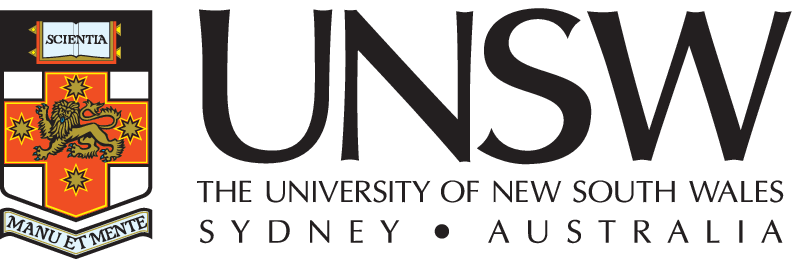}\\
\vspace{2cm}
\textbf{\hspace{-15mm}School of Electrical Engineering and Telecommunications\\
\hspace{-15mm}Faculty of Engineering\\
\hspace{-15mm}The University of New South Wales}\\
\vspace{2cm}
\hspace{-15mm}{June 2019}
\end{center}
\end{titlepage}

\frontmatter
\onehalfspacing
\pagestyle{empty}

\section*{Copyright Statement}

I hereby grant The University of New South Wales or its agents the right
to archive and to make available my thesis or dissertation in whole or part
in the University libraries in all forms of media, now or hereafter known,
subject to the provisions of the Copyright Act 1968. I retain all proprietary
rights, such as patent rights. I also retain the right to use in future works
(such as articles or books) all or part of this thesis or dissertation.

I also authorise University Microfilms to use the abstract of my thesis
in Dissertation Abstract International (this is applicable to doctoral
thesis only).

I have either used no substantial portions of copyright material in my thesis
or I have obtained permission to use copyright material; where permission
has not been granted I have applied/will apply for a partial restriction
of the digital copy of my thesis or dissertation.

\vspace*{1.5cm}
\noindent
\parbox{3in}{Signed \dotfill}

\vspace*{0.5cm}
\noindent
\parbox{3in}{Date  \dotfill}

\vspace{1cm}

\section*{Authenticity Statement}

I certify that the Library deposit digital copy is a direct equivalent of
the final officially approved version of my thesis. No emendation of content
has occurred and if there are any minor variations in formatting, they are
the result of the conversion to digital format.

\vspace*{1.5cm}
\noindent
\parbox{3in}{Signed \dotfill}

\vspace*{0.5cm}
\noindent
\parbox{3in}{Date \dotfill}

\clearpage{\thispagestyle{empty}\cleardoublepage}
\section*{Originality Statement}

I hereby declare that this submission is my own work and to the best
of my knowledge it contains no material previously published or written
by another person, or substantial portions of material which have been
accepted for the award of any other degree or diploma at UNSW or any
other educational institute, except where due acknowledgment is made
in the thesis.  Any contribution made to the research by others, with
whom I have worked at UNSW or elsewhere, is explicitly acknowledged in
the thesis. I also declare that the intellectual content of this thesis
is the product of my own work, except to the extent that assistance from
others in the project's design and conception or in style, presentation
and linguistic expression is acknowledged.

\vspace*{1.5cm}
\noindent
\parbox{3in}{Signed \dotfill}

\vspace*{0.5cm}
\noindent
\parbox{3in}{Date  \dotfill}

\clearpage{\thispagestyle{empty}\cleardoublepage}

\pagenumbering{roman}
\pagestyle{fancy}
\fancyhf{}
\setlength{\headheight}{15pt}
\fancyhead[LE,RO]{\footnotesize \textbf \thepage}

\chapter*{}

\vspace{+5cm}
\begin{center}
    \large \emph{Dedicated to my parents, wife, and daughter.}
\end{center}

\doublespacing
\chapter*{Abstract}
\addcontentsline{toc}{chapter}{\protect\numberline{}{Abstract}}
In this thesis, we study performance analysis and resource allocation design for non-orthogonal multiple access (NOMA) in wireless communication systems.
In contrast to conventional orthogonal multiple access (OMA) schemes, NOMA allows multiple users to share the same degree of freedom via superposition coding and successive interference cancelation (SIC) decoding.
Inspired by the solid foundations from information theory, NOMA has rekindled the interests of researchers as a benefit of the recent advancement in signal processing and silicon technologies.
However, comprehensive performance analysis on NOMA and practical resource allocation designs to exploit potential gains of NOMA in terms of spectral and energy efficiency have not been fully studied and investigated in the literature.
This thesis attempts to address these problems by providing a unified performance analysis and a systematic resource allocation design for NOMA in wireless communication systems.
The research work of this thesis can be divided into five parts.

In the first part of our research, we focus on investigating the ergodic sum-rate gain (ESG) in uplink communications attained by NOMA over OMA in single-antenna, multi-antenna, and massive MIMO systems with both single-cell and multi-cell deployments.
Employing the asymptotic analysis, we quantify the ESG of NOMA over OMA when adopting practical signal reception schemes, e.g. maximal ratio combining SIC (MRC-SIC) and minimum mean square error SIC (MMSE-SIC), at the base station.
The distinctive behaviors of ESG under different scenarios are unveiled which provide some important and interesting insights for the practical implementation of NOMA in future wireless networks.
In addition, extensive simulations are performed to confirm the accuracy of our performance analyses and validate the insights obtained in this work.

Our second work proposes a joint pilot and payload power allocation (JPA) scheme for uplink NOMA communications to reduce the effect of error propagation during SIC decoding due to channel estimation error, via exploiting the trade-off between pilot and payload power allocation.
In particular, reducing payload power would introduce less inter-user interference (IUI) for other users in NOMA systems.
Meanwhile, this also saves the power for transmitting pilots which yields a better channel estimation.
To this end, we propose an optimal joint power allocation design for pilots and payload to exploit this underlying trade-off to enhance NOMA's robustness against channel uncertainty.
Simulation results demonstrate that our proposed JPA scheme can effectively alleviate the error propagation and enhance the data detection performance, especially in the moderate energy budget regime.

In the third part of our research, we focus on the power-efficient resource allocation design for downlink multi-carrier NOMA (MC-NOMA) systems.
We consider a joint design in power allocation, rate allocation, user scheduling, as well as the SIC decoding policy.
Taking into account the imperfect channel state information and heterogeneous traffic demands, the resource allocation design is formulated as a non-convex optimization problem.
A globally optimal design via the branch-and-bound approach is proposed.
Furthermore, a low-complexity suboptimal iterative resource allocation algorithm is developed which can converge to a close-to-optimal solution rapidly.
Simulation results show that our proposed resource allocation schemes provide significant transmit power savings and enhanced robustness against channel uncertainty via exploiting the heterogeneity in channel conditions and traffic requirements of users in MC-NOMA systems.

Our fourth work extends NOMA from microwave communications to millimeter wave (mmWave) communications.
In particular, we propose a multi-beam NOMA framework for hybrid mmWave systems and investigate its resource allocation design.
Our proposed scheme is more practical than the existing conventional single-beam mmWave-NOMA schemes, as our scheme can flexibly pair NOMA users with an arbitrary angle-of-departure distribution.
A two-stage resource allocation design is proposed to maximize the system sum-rate utilizing the coalition formation game theory and the difference of convex (D.C.) programming technique, respectively.
Simulation results demonstrate that the proposed resource allocation can achieve a close-to-optimal performance in each stage.
Our proposed multi-beam mmWave NOMA scheme can offer a substantial system sum-rate improvement compared to the conventional mmWave-OMA scheme and the single-beam mmWave-NOMA scheme.

Finally, in the last part of our research, we further investigate the application of NOMA to downlink mmWave communications.
To fully exploit the potential performance gain brought by the combination of NOMA and mmWave communication, we propose a novel beamwidth control-based mmWave NOMA scheme and study its energy-efficient resource allocation design.
Enjoying the widened analog beamwidth, our proposed scheme can increase the number of potential NOMA groups compared to conventional scheme, which improves the energy efficiency of mmWave NOMA systems.
Specifically, a user grouping algorithm based on coalition formation game theory is developed and a low-complexity iterative digital precoder design algorithm is proposed to obtain an efficient suboptimal solution.
Simulation results show that the proposed scheme with beamwidth control offers a substantial energy efficiency gain over the conventional OMA and NOMA schemes without beamwidth control.

\doublespacing

\singlespacing

\chapter*{List of Publications} \label{listofpublications}
\addcontentsline{toc}{chapter}{\protect\numberline{}{List of Publications}}
%

\ifpdf
    \graphicspath{{1_introduction/figures/PNG/}{1_introduction/figures/PDF/}{1_introduction/figures/}}
\else
    \graphicspath{{1_introduction/figures/EPS/}{1_introduction/figures/}}
\fi

The work in this thesis has been published or has been submitted for publication as journal papers. These papers are:

\vspace{0.1in}\noindent{\textbf{Journal Articles:}} \vspace{0.1in}
\begin{enumerate}[J1]
\item \textbf{Z. Wei}, J. Yuan, D. W. K. Ng, and H.-M. Wang
    ``Optimal Resource Allocation for Power-Efficient MC-NOMA with Imperfect Channel State Information,'' {\em IEEE Trans. Commun.}, vol.~65, no.~9, pp. 3944--3961, Sep. 2017.

\item \textbf{Z. Wei}, D. W. K. Ng, and J. Yuan,
    ``Joint Pilot and Payload Power Control for Uplink MIMO-NOMA with MRC-SIC Receivers,'' {\em IEEE Commun. Lett.}, vol.~22, no.~4, pp. 692--695, Apr. 2018.

\item \textbf{Z. Wei}, L. Zhao, J. Guo, D. W. K. Ng, and J. Yuan, ``Multi-Beam NOMA for Hybrid mmWave Systems'', {\em IEEE Trans. Commun.}, vol.~67, no.~2, pp. 1705--1719, Feb. 2019.

\item \textbf{Z. Wei}, D. W. K. Ng, and J. Yuan, ``NOMA for Hybrid MmWave Communication Systems with Beamwidth Control'', {\em IEEE J. Select. Topics Signal Process.}, vol. 13, no. 3, pp. 567-583, Jun. 2019.

\item \textbf{Z. Wei}, L. Yang, D. W. K. Ng, J. Yuan, and L. Hanzo ``On the Performance Gain of NOMA over OMA in Uplink Communication Systems'', {\em IEEE Trans. Commun.}, major revision, 30th Apr. 2019.
\end{enumerate}

The following publications are also the results from my Ph.D. study but not included in this thesis:

\vspace{0.1in} \noindent{\textbf{Journal Articles:}} \vspace{0.1in}
\begin{enumerate}[J1]
\setcounter{enumi}{5}
\item \textbf{Z. Wei}, J. Yuan, D. W. K. Ng, M. Elkashlan, and Z. Ding
    ``A Survey of Downlink Non-orthogonal Multiple Access for 5G Wireless
    Communication Networks,'' {\em ZTE Commun.}, vol.~14, no.~4, pp. 17--25, Oct. 2016.

\item Z. Sun, \textbf{Z. Wei}, L. Yang, J. Yuan, X. Cheng, and L. Wan, ``Exploiting Transmission Control for Joint User Identification and Channel Estimation in Massive Connectivity'', {\em IEEE Trans. Commun.}, accepted, Apr. 2019.

\item L. Zhao, \textbf{Z. Wei}, D. W. K. Ng, J. Yuan, and M. Reed, ``Multi-cell Hybrid Millimeter Wave Systems: Pilot Contamination and Interference Mitigation,'' {\em IEEE Trans. Commun.}, vol.~66, no.~11, pp. 5740--5755, Nov. 2018.
\end{enumerate}

\vspace{0.1in} \noindent{\textbf{Conference Articles:}} \vspace{0.1in}
\begin{enumerate}[C1]
\item \textbf{Z. Wei}, D. W. K. Ng, and J. Yuan ``Power-Efficient Resource Allocation for MC-NOMA with Statistical Channel State Information,'' in \emph{Proc. IEEE Global Commun. Conf.}, pp. 1--7, Dec. 2016.

\item \textbf{Z. Wei}, J. Guo, D. W. K. Ng, and J. Yuan, ``Fairness Comparison of Uplink NOMA and OMA,'' in \emph{Proc. IEEE Veh. Techn. Conf.}, pp. 1--6, 2017.

\item \textbf{Z. Wei}, L. Dai, D. W. K. Ng, and J. Yuan, ``Performance Analysis of a Hybrid Downlink-Uplink Cooperative NOMA Scheme,'' in \emph{Proc. IEEE Veh. Techn. Conf.}, pp. 1--7, 2017.

\item \textbf{Z. Wei}, L. Zhao, J. Guo, D. W. K. Ng, and J. Yuan, ``A Multi-Beam NOMA Framework for Hybrid mmWave Systems,'' in \emph{Proc. IEEE Intern. Commun. Conf.}, pp. 1--7, 2018.

\item \textbf{Z. Wei}, L. Zhao, J. Guo, D. W. K. Ng, and J. Yuan, ``On the Performance Gain of NOMA over OMA in Uplink Single-cell Systems,'' in \emph{Proc. IEEE Global Commun. Conf.}, pp. 1--7, 2018.

\item \textbf{Z. Wei}, D. W. K. Ng, and J. Yuan, ``Beamwidth Control for NOMA in Hybrid mmWave Communication Systems,'' in \emph{Proc. IEEE Intern. Commun. Conf.}, pp. 1--6, 2019.

\item \textbf{Z. Wei}, M. Qiu, D. W. K. Ng, and J. Yuan, ``A Two-Stage Beam Alignment Framework for Hybrid MmWave Distributed Antenna Systems,'' in \emph{Proc. IEEE Intern. Workshop on Signal Process. Advances in Wireless Commun.}, pp. 1--6, 2019.

\item L. Zhao, \textbf{Z. Wei}, D. W. K. Ng, J. Yuan, and M. Reed, ``Mitigating Pilot Contamination in Multi-cell Hybrid Millimeter Wave Systems,'' in \emph{Proc. IEEE Intern. Commun. Conf.}, pp. 1-7, 2018.

\item Z. Sun, \textbf{Z. Wei}, L. Yang, J. Yuan, X. Cheng, and L. Wan ``Joint User Identification and Channel Estimation in Massive Connectivity with Transmission Control,'' in \emph{Proc. IEEE Intern. Symp. on Turbo Codes $\&$ Iterative Information Process.}, pp. 1-6, 2018.

\item L. Zhao, J. Guo, \textbf{Z. Wei}, D. W. K. Ng, and J. Yuan, ``A Distributed Multi-RF Chain Hybrid mmWave Scheme for Small-cell Systems,'' in \emph{Proc. IEEE Intern. Commun. Conf.}, pp. 1-7, 2019.

\item T. Zheng, Q. Yang, K. Huang, H. Wang, \textbf{Z. Wei}, and J. Yuan, ``Physical-Layer Secure Transmissions in Cache-Enabled Cooperative Small Cell Networks,'' in \emph{Proc. IEEE Global Commun. Conf.}, pp. 1-6, 2018.

\item R. Li, \textbf{Z. Wei}, L. Yang, D. W. K. Ng, N. Yang, J. Yuan, and J. An, ``Joint Trajectory and Resource Allocation Design for UAV Communication Systems,'' in \emph{Proc. IEEE Global Commun. Conf.}, pp. 1-6, 2018.

\item Y. Cai, \textbf{Z. Wei}, R. Li, D. W. K. Ng, and J. Yuan, ``Energy-Efficient Resource Allocation for Secure UAV Communication Systems,'' in \emph{Proc. IEEE Wireless Commun. and Networking Conf.}, pp. 1-8, 2019.
\end{enumerate}
\vspace{0.1in}

\chapter*{Abbreviations} \label{abbreviations}
\addcontentsline{toc}{chapter}{\protect\numberline{}{Abbreviations}}
\markboth{ABBREVIATIONS}{}

\begin{longtable}[t]{ll}
\textbf{1G} \quad\quad&\mbox{The first generation} \vspace{0.1in}\\
\textbf{2G} \quad\quad&\mbox{The second generation} \vspace{0.1in}\\
\textbf{3G} \quad\quad&\mbox{The third generation} \vspace{0.1in}\\
\textbf{3GPP} \quad\quad&\mbox{3rd Generation Partnership Project} \vspace{0.1in}\\
\textbf{4G} \quad\quad&\mbox{The fourth generation} \vspace{0.1in}\\
\textbf{5G} \quad\quad&\mbox{The fifth generation} \vspace{0.1in}\\
\textbf{AOA} \quad\quad&\mbox{Angle-of-arrival} \vspace{0.1in}\\
\textbf{AOD} \quad\quad&\mbox{Angle-of-departure} \vspace{0.1in}\\
\textbf{AR} \quad\quad&\mbox{Augmented reality} \vspace{0.1in}\\
\textbf{ASINR} \quad\quad&\mbox{Average signal-to-interference-plus-noise ratio} \vspace{0.1in}\\
\textbf{AWGN} \quad\quad&\mbox{Additive white Gaussian noise} \vspace{0.1in}\\
\textbf{BER} \quad\quad&\mbox{Bit error rate} \vspace{0.1in}\\
\textbf{BS} \quad\quad&\mbox{Base station} \vspace{0.1in}\\
\textbf{CDF} \quad\quad&\mbox{Cumulative distribution function} \vspace{0.1in}\\
\textbf{CDMA} \quad\quad&\mbox{Code division multiple access} \vspace{0.1in}\\
\textbf{CSI} \quad\quad&\mbox{Channel state information} \vspace{0.1in}\\
\textbf{dB} \quad\quad&\mbox{Decibel} \vspace{0.1in}\\
\textbf{D2D} \quad\quad&\mbox{Device-to-device} \vspace{0.1in}\\
\textbf{DoF} \quad\quad&\mbox{Degree of freedom } \vspace{0.1in}\\
\textbf{DPC} \quad\quad&\mbox{Dirty paper coding} \vspace{0.1in}\\
\textbf{EE} \quad\quad&\mbox{Energy efficiency} \vspace{0.1in}\\
\textbf{eMBB} \quad\quad&\mbox{Enhanced mobile broadband} \vspace{0.1in}\\
\textbf{ESG} \quad\quad&\mbox{Ergodic sum-rate gain} \vspace{0.1in}\\
\textbf{FDD} \quad\quad&\mbox{Frequency division duplex} \vspace{0.1in}\\
\textbf{FD} \quad\quad&\mbox{Full-duplex} \vspace{0.1in}\\
\textbf{FDMA} \quad\quad&\mbox{Frequency division multiple access} \vspace{0.1in}\\
\textbf{FFT} \quad\quad&\mbox{Fast Fourier transform} \vspace{0.1in}\\
\textbf{H-CRAN} \quad\quad&\mbox{Heterogeneous cloud radio access networks} \vspace{0.1in}\\
\textbf{ICT} \quad\quad&\mbox{Information and communication technology} \vspace{0.1in}\\
\textbf{IDMA} \quad\quad&\mbox{Interleave division multiple access} \vspace{0.1in}\\
\textbf{ICI} \quad\quad&\mbox{Inter-cell interference} \vspace{0.1in}\\
\textbf{i.i.d.} \quad\quad&\mbox{Independent and identically distributed} \vspace{0.1in}\\
\textbf{I/Q}  \quad\quad&\mbox{In-phase/Quadrature} \vspace{0.1in}\\
\textbf{IoT}  \quad\quad&\mbox{Internet-of-things} \vspace{0.1in}\\
\textbf{ISI} \quad\quad&\mbox{Inter-symbol interference} \vspace{0.1in}\\
\textbf{IUI} \quad\quad&\mbox{Inter-user interference} \vspace{0.1in}\\
\textbf{JPA} \quad\quad&\mbox{Joint pilot and payload power allocation } \vspace{0.1in}\\
\textbf{LDS} \quad\quad&\mbox{Low-density spreading} \vspace{0.1in}\\
\textbf{LOS} \quad\quad&\mbox{Line-of-sight} \vspace{0.1in}\\
\textbf{LTE} \quad\quad&\mbox{Long-term evolution} \vspace{0.1in}\\
\textbf{MAI} \quad\quad&\mbox{Multiple access interference} \vspace{0.1in}\\
\textbf{MC}  \quad\quad&\mbox{Multi-carrier} \vspace{0.1in}\\
\textbf{MIMO} \quad\quad&\mbox{Multiple-input multiple-output} \vspace{0.1in}\\
\textbf{MISO} \quad\quad&\mbox{Multiple-input single-output} \vspace{0.1in}\\
\textbf{ML}  \quad\quad&\mbox{Maximum-likelihood} \vspace{0.1in}\\
\textbf{MMSE}  \quad\quad&\mbox{Minimum mean square error} \vspace{0.1in}\\
\textbf{mMTC} \quad\quad&\mbox{Massive machine type communications} \vspace{0.1in}\\
\textbf{mmWave} \quad\quad&\mbox{Millimeter wave} \vspace{0.1in}\\
\textbf{MRC} \quad\quad&\mbox{Maximal-ratio combining} \vspace{0.1in}\\
\textbf{MRT} \quad\quad&\mbox{Maximal-ratio transmission} \vspace{0.1in}\\
\textbf{MSE}  \quad\quad&\mbox{Mean square error} \vspace{0.1in}\\
\textbf{MUD}  \quad\quad&\mbox{Multi-user detection} \vspace{0.1in}\\
\textbf{NLOS}  \quad\quad&\mbox{Non-line-of-sight} \vspace{0.1in}\\
\textbf{NMSE}  \quad\quad&\mbox{Normalized mean squared error} \vspace{0.1in}\\
\textbf{NOMA} \quad\quad&\mbox{Non-orthogonal multiple access} \vspace{0.1in}\\
\textbf{OFDM} \quad\quad&\mbox{Orthogonal frequency division multiplexing} \vspace{0.1in}\\
\textbf{OFDMA} \quad\quad&\mbox{Orthogonal frequency division multiple access} \vspace{0.1in}\\
\textbf{OMA} \quad\quad&\mbox{Orthogonal multiple access} \vspace{0.1in}\\
\textbf{OVSF} \quad\quad&\mbox{Orthogonal variable spreading factor} \vspace{0.1in}\\
\textbf{PDF} \quad\quad&\mbox{Probability density function} \vspace{0.1in}\\
\textbf{PF} \quad\quad&\mbox{Proportional fairness} \vspace{0.1in}\\
\textbf{PDMA} \quad\quad&\mbox{Pattern division multiple
access} \vspace{0.1in}\\
\textbf{PS} \quad\quad&\mbox{Phase shifter} \vspace{0.1in}\\
\textbf{QAM} \quad\quad&\mbox{Quadrature amplitude modulation} \vspace{0.1in}\\
\textbf{QoS} \quad\quad&\mbox{Quality-of-service} \vspace{0.1in}\\
\textbf{RF} \quad\quad&\mbox{Radio frequency} \vspace{0.1in}\\
\textbf{SC} \quad\quad&\mbox{Superposition coding} \vspace{0.1in}\\
\textbf{SCMA} \quad\quad&\mbox{Sparse code multiple access} \vspace{0.1in}\\
\textbf{SDP} \quad\quad&\mbox{Semi-definite programming} \vspace{0.1in}\\
\textbf{SDMA} \quad\quad&\mbox{Spatial division multiple access} \vspace{0.1in}\\
\textbf{SE} \quad\quad&\mbox{Spectral efficiency} \vspace{0.1in}\\
\textbf{SER} \quad\quad&\mbox{Symbol error rate} \vspace{0.1in}\\
\textbf{SIC} \quad\quad&\mbox{Successive interference cancelation}\vspace{0.1in}\\
\textbf{SNR}  \quad\quad&\mbox{Signal-to-noise ratio} \vspace{0.1in}\\
\textbf{SINR}  \quad\quad&\mbox{Signal-to-interference-plus-noise ratio} \vspace{0.1in}\\
\textbf{SISO}  \quad\quad&\mbox{Single-input single-output} \vspace{0.1in}\\
\textbf{SVD}  \quad\quad&\mbox{Singular value decomposition} \vspace{0.1in}\\
\textbf{TDD} \quad\quad&\mbox{Time division duplex} \vspace{0.1in}\\
\textbf{TDMA} \quad\quad&\mbox{Time division multiple access} \vspace{0.1in}\\
\textbf{s.t.}  \quad\quad&\mbox{Subject to} \vspace{0.1in}\\
\textbf{UMi} \quad\quad&\mbox{Urban micro}\vspace{0.1in}\\
\textbf{URLLC} \quad\quad&\mbox{Ultra reliable and low latency communications}\vspace{0.1in}\\
\textbf{V2V} \quad\quad&\mbox{Vehicle-to-vehicle}\vspace{0.1in}\\
\textbf{V2X} \quad\quad&\mbox{Vehicle-to-everything}\vspace{0.1in}\\
\textbf{V-BLAST} \quad\quad&\mbox{Vertical-bell laboratories layered space-time}\vspace{0.1in}\\
\textbf{VR} \quad\quad&\mbox{Virtual reality}\vspace{0.1in}\\
\textbf{w.r.t.} \quad\quad&\mbox{With respect to}\vspace{0.1in}\\
\textbf{ZF} \quad\quad&\mbox{Zero-forcing}\vspace{0.1in}\\
\end{longtable}

\chapter*{List of Notations} \label{listofnotations}
\addcontentsline{toc}{chapter}{\protect\numberline{}{List of Notations}}
\markboth{LIST OF NOTATIONS}{}
Scalars, vectors and matrices are written in italic, boldface lower-case and upper-case letters, respectively, e.g. $x$, $\mathbf{x}$, and $\mathbf{X}$.
\begin{longtable}[t]{ll}
$\mathbf{X}^\mathrm{T}$ \quad\quad&\mbox{Transpose of $\mathbf{X}$} \vspace{0.1in}\\
$\mathbf{X}^\mathrm{H}$ \quad\quad&\mbox{Hermitian transpose of $\mathbf{X}$ } \vspace{0.1in}\\
$\mathbf{X}^{-1}$ \quad\quad&\mbox{Inverse of $\mathbf{X}$ } \vspace{0.1in}\\
$\mathbf{X}_{i,j}$ \quad\quad&\mbox{The element in the row $i$ and the column $j$ of $\mathbf{X}$ } \vspace{0.1in}\\
$\det\left(\mathbf{X}\right)$ \quad\quad&\mbox{Determinant of a square matrix $\mathbf{X}$} \vspace{0.1in}\\
$\mbox{Tr}\left(\mathbf{X}\right)$ \quad\quad&\mbox{Trace of a square matrix $\mathbf{X}$} \vspace{0.1in}\\
$\mbox{Rank}\left(\mathbf{X}\right)$ \quad\quad&\mbox{Rank of a matrix $\mathbf{X}$} \vspace{0.1in}\\
$|x|$ \quad\quad&\mbox{Absolute value (modulus) of the complex scalar $x$} \vspace{0.1in}\\
$\|\mathbf{x}\|$ \quad\quad&\mbox{The Euclidean norm of a vector $\mathbf{x}$} \vspace{0.1in}\\
$\|\mathbf{X}\|_{\mathrm{F}}$ \quad\quad&\mbox{The Frobenius norm of a matrix $\mathbf{X}$} \vspace{0.1in}\\
$[x]^{+}$ \quad\quad&\mbox{$\max\left(0,x\right)$.} \vspace{0.1in}\\
$\mathbb{R}^{M\times N}$ \quad\quad&\mbox{The set of all $M\times N$ matrices with real entries} \vspace{0.1in}\\
$\mathbb{C}^{M\times N}$ \quad\quad&\mbox{The set of all $M\times N$ matrices with complex entries} \vspace{0.1in}\\
$\mathbb{B}^{M\times N}$ \quad\quad&\mbox{The set of all $M\times N$ matrices with binary entries} \vspace{0.1in}\\
$\mathrm{Pr}\{E\}$ \quad\quad&\mbox{The probability of event $E$ occurs} \vspace{0.1in}\\
$x \sim p(x)$ \quad\quad&\mbox{$x$ is distributed according to $p(x)$} \vspace{0.1in}\\
$\left\lfloor \cdot \right\rfloor$ \quad\quad&\mbox{The maximum integer smaller than the input value} \vspace{0.1in}\\
$\Re(z)$ \quad\quad&\mbox{The real part of a complex number $z$} \vspace{0.1in}\\
$\Im(z)$ \quad\quad&\mbox{The imaginary part of a complex number $z$} \vspace{0.1in}\\
$\emptyset$ \quad\quad&\mbox{An empty set, a null set} \vspace{0.1in}\\
$|\mathcal{S}|$ \quad\quad&\mbox{The cardinality of a set $\mathcal{S}$} \vspace{0.1in}\\
$\mathbf{0}$ \quad\quad&\mbox{A vector or a matrix with all-zero entries} \vspace{0.1in}\\
$\mathbf{I}_{\mathrm{N}}$ \quad\quad&\mbox{$N$ dimension identity matrix} \vspace{0.1in}\\
${\mathrm{E}}\left\{ \cdot \right\}$ \quad\quad&\mbox{Statistical expectation} \vspace{0.1in}\\
$\mathcal{N}(\mu,\sigma^2)$ \quad\quad&\mbox{Real Gaussian random variable with mean $\mu$ and variance $\sigma^2$} \vspace{0.1in}\\
$\mathcal{CN}(\mu,\sigma^2)$ \quad\quad&\mbox{Circularly symmetric complex Gaussian random variable:}\vspace{0.1in} \\
\quad\quad&\mbox{the real and imaginary parts are i.i.d. $\mathcal{N}(\mu/2,\sigma^2/2)$} \vspace{0.1in}\\
$\mathcal{CN}(\mathbf{0},\mathbf{K})$ \quad\quad&\mbox{Circularly symmetric Gaussian random vector with}\vspace{0.1in} \\
\quad\quad&\mbox{a zero-mean vector and covariance matrix $\mathbf{K}$} \vspace{0.1in}\\
$\ln(\cdot)$ \quad\quad&\mbox{Natural logarithm} \vspace{0.1in}\\
$\log_a(\cdot)$ \quad\quad&\mbox{Logarithm in base $a$} \vspace{0.1in}\\
$\mathrm{diag}\left\{\bm{a}\right\}$ \quad\quad&\mbox{A diagonal matrix with the entries of $\bm{a}$ on its diagonal} \vspace{0.1in}\\
$\lim$ \quad\quad&\mbox{Limit} \vspace{0.1in}\\
$\max\left\{\cdot\right\}$ \quad\quad&\mbox{Return the maximum element of its input} \vspace{0.1in}\\
$\min\left\{\cdot\right\}$ \quad\quad&\mbox{Return the minimum element of its input} \vspace{0.1in}\\
$e^x,\exp(x)$ \quad\quad&\mbox{Natural exponential function} \vspace{0.1in}\\
$\mathcal{O}(g(x))$ \quad\quad&\mbox{An asymptotic upper bound, i.e., $f(x) = \mathcal{O}(g(x))$ }\vspace{0.1in} \\
\quad\quad&\mbox{if $\lim_{x\to \infty} |\frac{f(x)}{g(x)}| \le N$ for $0 < N < \infty$} \vspace{0.1in}\\
\end{longtable}

\fancyhead[CE]{\footnotesize \leftmark}
\fancyhead[CO]{\footnotesize \rightmark}

\renewcommand{\chaptermark}[1]{%
\markboth{\MakeUppercase{%
\thechapter.%
\ #1}}{}}

\tableofcontents

\listoffigures

\listoftables

\listofalgorithms
\addcontentsline{toc}{chapter}{\protect\numberline{}{List of Algorithms}}



\mainmatter
\doublespacing

\chapter{Introduction}\label{C1:chapter1}
In this chapter, we first provide an overview on the fifth-generation (5G) wireless networks before introducing the motivation of our research for studying non-orthogonal multiple access (NOMA) in this thesis.
Then, a literature review on NOMA and a brief introduction on the research challenges of NOMA are presented.
In addition, we also outline the main contributions of the thesis.

\section{Overview of 5G}
A networked world is on its way.
Wireless communications have become one of the revolutionary technologies in modern societies and have offered many business opportunities across industrial, public, and government sectors\cite{Andrews2014,wong2017key}.
In particular, the development of wireless communications worldwide fuels the massive growth in the number of wireless communication devices and sensors for emerging applications, such as smart logistics \& transportation, environmental monitoring, energy management, safety management, and industry automation, just to name a few.
It is expected that in the Internet-of-Things (IoT) era \cite{Zorzi2010}, there will be $50$ billion wireless communication devices connected worldwide with a connection density up to a million devices per $\mathrm{km}^{\mathrm{2}}$ \cite{QualComm,SunALOHA}.

A widely held view is that the 5G is not just an evolutionary version of the current fourth-generation (4G) communication systems \cite{Andrews2014}, due to not only the exponentially increasing demand of data traffic but also the energy-hungry emerging services and functionalities.
To be more specific, three main envisioned usage scenarios have been proposed for 5G as follows, which are expected to revolutionize our future daily life.
\begin{itemize}
  \item Enhanced Mobile Broadband (eMBB)\cite{PopovskiAccess}: high-resolution video streaming, virtual reality (VR), augmented reality (AR), etc.
  \item Massive Machine Type Communications (mMTC)\cite{PopovskiAccess,SunJUICE,SunALOHA,SunPNC,liu2019deep,SunCSTCOM}: IoT services, metering, monitoring, and measuring, smart agriculture, smart city, smart port, etc.
  \item Ultra-Reliable and Low Latency Communications (URLLC)\cite{PopovskiAccess}: vehicle-to-\newline vehicle (V2V) and vehicle-to-everything (V2X) communications, autonomous driving, remote control surgery, etc.
\end{itemize}
These new services impose unprecedentedly challenges for the development of 5G wireless communication systems, such as the requirement of ultra-high data rates ($100\sim1000 \times$ of current 4G technology), lower latency (1 ms for a roundtrip latency), massive connectivity ($10^6 \;\mathrm{devices/km}^2$), and the support of diverse quality of service (QoS)\cite{Andrews2014}.

In addition, energy-efficient communications have become an important focus in both academia and industry due to the growing demands of energy consumption and the arising environmental concerns around the world\cite{QingqingEE,WuEESE2018,DerrickEEOFDMA,DerrickEESWIPT,DerrickEERobust,DerrickLimitedBackhaul,NgDUALII,cai2019energy}.
It is predicted that billions of information and communication technology (ICT) devices could create up to 3.5\% of global emissions by 2020 and up to 14\% by 2040\cite{ClimateHN}.
Also, in 2025, it is expected that the communications industry will be responsible for 20\% of all the world’s electricity\cite{ClimateHN}.
The escalating energy costs and the associated global carbon dioxide ($\mathrm{CO}_2$) emission of information and communication technology (ICT) devices have stimulated the interest of researchers in an emerging area of energy-efficient radio management.
To this end, studying energy-efficient wireless communications is critical to strike a balance between system energy consumption and throughput\cite{QingqingEE,WuEESE2018,DerrickEEOFDMA,DerrickEESWIPT,DerrickEERobust,DerrickLimitedBackhaul,NgDUALII,cai2019energy}.

On the path to 5G, compelling new technologies, such as massive multiple-input multiple-output (MIMO)\cite{Marzetta2010,ngo2013energy}, energy harvesting communications\cite{BoshkovskaEE,AhmedEH2013,Zlatanov2017}, non-orthogonal multiple access (NOMA)\cite{Dai2015,Ding2015b,WeiSurvey2016,QiuLOMA,QiuLOMASlowFading}, millimeter wave (mmWave) communications\cite{Rappaport2013,XiaoMing2017,zhao2017multiuser,wei2018multibeam,WEIMultiBeamFramework,WeiBeamWidthControl,wei2019beamwidthConference}, and ultra densification and offloading, \cite{Andrews2012,Andrews2013,Ramasamy2013} etc., have been proposed to address the aforementioned issues and have been identified as the key technologies by researchers\cite{wong2017key}.
In particular, it is highly expected to employ a future radio access technology, which is flexible, reliable\cite{DerrickEERobust}, and efficient in terms of energy and spectrum\cite{DerrickEEOFDMA,DerrickEESWIPT} for 5G or beyond 5G networks.
In fact, multiple access technology is the most fundamental aspect in physical layer and it significantly affects the whole system performance in each generation of wireless networks.
As such, this thesis focuses on non-orthogonal multiple access (NOMA), which is a promising multiple access technology and is expected to address some of the aforementioned key challenges of 5G.

\section{Motivation}
The main motivations of studying NOMA in the thesis are listed as follows.
In fact, the motivations are to exploit the main advantages and to avoid the disadvantages of applying NOMA in future wireless networks.
{The advantages and disadvantages of NOMA have been intensively studied, e.g. \cite{Ding2015b,Dai2015}, but are not limited to the following issues.}
Here, we only present the most related points of interest of the thesis and the details on the features of NOMA can be found in Table \ref{NOMAvsOMA} in Section \ref{C1:lITERATURE}.

\noindent\underline{\textbf{Supporting Overload Transmission}}

The massive growth in the number of wireless communication devices and sensors creates a serious performance bottleneck in realizing reliable and ubiquitous wireless
communication networks\cite{SunALOHA,SunPNC,SunJUICE}.
It is highly desirable to have an innovative multiple access technique which is able to accommodate more users with limited system resource, i.e., time-frequency resource block, spatial subspace, and/or radio frequency (RF) chain.
{A set of system resources can be interpreted as a kind of system degrees of freedom (DoF), which allows the multiple data streams to be transmitted on each DoF without interfering each other.}
In other words, the system DoF is equal to the maximum number of the independently deliverable data streams.
{When the number of users is larger than the system DoF, an overload scenario occurs.}
In such scenarios, NOMA is an efficient multiple access scheme which paves the way to support the overload transmission via DoF sharing.
In particular, NOMA transmission technique allows multiple users to simultaneously share the same DoF and consequently is beneficial to increase the number of supported connections by introducing controllable interference.
Despite of the inherent DoF-sharing feature of NOMA, it highly relies on user scheduling and resource allocation design to support overload transmission in a spectral-efficient and/or energy-efficient way, which is one of the main focuses of the thesis.
Conceptually, NOMA is a generalization of orthogonal multiple access (OMA) where the latter is more conservative which does not allow DoF sharing among users.
Therefore, a thorough study of NOMA and OMA is highly expected to reveal some insights about the performance gain of NOMA over OMA, which can shed light on the practical implementation of NOMA in future wireless communication networks.

\noindent\underline{\textbf{Improving Flexibility in Resource Allocation}}

%
In contrast to allocate a resource block exclusively to a single user in OMA scheme, NOMA can utilize the spectrum more efficiently by admitting strong users into the resource block occupied by weak users without compromising much their performance via utilizing appropriate power allocation and successive interference cancelation (SIC) techniques.
From the optimization point of view, by relaxing the orthogonal constraint of OMA, NOMA enables a more flexible management of radio resources and offers an efficient way to improve the spectral efficiency via appropriate resource allocation\cite{Timotheou2015}.
To this end, studying the radio management via exploiting limited system resources is critical to improve the spectral efficiency of NOMA, which is one of the main focuses of this thesis.

On the other hand, although some existing contributions\cite{Cover1991,Tse2005,WangPowerEfficiency,Ding2014} have demonstrated the NOMA's spectral improvement compared to OMA when the QoSs of users are taken into account, the source of the performance gain of NOMA over OMA has not been well understood.
More importantly, the impact of specific system parameters on the spectral efficiency of NOMA, such as the number of NOMA users, the signal-to-noise ratio (SNR), and the cell size, have not been revealed in the open literature.
To fill this gap, the thesis offers a unified analysis on the ergodic sum-rate gain of NOMA over OMA in single-antenna, multi-antenna, and massive MIMO systems with both single-cell and multi-cell deployments.

\noindent\underline{\textbf{Improving Energy Efficiency}}

The advantages of NOMA for supporting massive connectivity and improving the resource allocation flexibility do not come for free.
There are significant concerns of financial implications to the service providers due to the huge power consumption in wireless communication networks\cite{QingqingEE,WuEESE2018,DerrickEEOFDMA,DerrickEESWIPT,DerrickEERobust,DerrickLimitedBackhaul,NgDUALII,cai2019energy}.
In particular, in the absence of a cautious design on user scheduling and SIC decoding order, NOMA may consume more power than that of OMA due to the existence of inter-user interference (IUI) and to meet the QoS requirement of each user\cite{Wei2017}.
In addition, it is well-known that the spectral efficiency gain of NOMA over OMA is larger when channel conditions of the multiplexed users become more distinctive \cite{Dingtobepublished}.
However, a higher transmit power is needed to satisfying the QoS of the weak user if its channel condition is too poor.
Therefore, there is a non-trivial trade-off between the spectral efficiency and energy efficiency which should be taken into account for resource allocation algorithm design for NOMA.

On the other hand, applying NOMA to millimeter wave (mmWave) communications has a high potential to improve the system energy efficiency.
In particular, the tremendous energy consumption associated with RF chains constitutes a large part of the total system energy consumption in mmWave communication systems.
Therefore, clustering users within the same analog beam as a NOMA group and allowing them to share the energy-hungry RF chain can substantially improve the system energy efficiency.
However, how to clustering NOMA users effectively and how to perform resource allocation in an mmWave-NOMA system are still open problems and deserve our efforts to explore.

\noindent\underline{\textbf{Improving Robustness of NOMA Against Channel Uncertainty}}

The spectral efficiency and energy efficiency gain of NOMA arises from exploiting the heterogeneity of channel conditions of NOMA users\cite{Ding2014,Xu2017,Wei2018PerformanceGain}.
As a result, the performance of NOMA is sensitive to the channel uncertainty.
In practice, the user scheduling strategy and the SIC decoding order are determined by the CSI.
However, most of existing works \cite{Lei2016NOMA,Di2016sub,Sun2016Fullduplex} on resource allocation of NOMA are based on the assumption of perfect channel state information (CSI) at the transmitter side, which is difficult to obtain in practice due to either the estimation error or the feedback delay.
Therefore, it is necessary to investigate how CSI error affects the performance of NOMA and to consider robust resource allocation under imperfect CSI\cite{Wei2016NOMA}.

On the other hand, the implementation of SIC at the receiver of NOMA system suffers from the inherent problem of error propagation, which is mainly caused by inaccurate channel estimation and subsequently degraded detection quality \cite{Nara2005Error}.
Specifically, imperfect channel estimates deteriorate the system performance via not only encouraging the inter-users crosstalk during interference cancelation process but also affecting the optimal SIC decoding order, which undermining the benefits of NOMA.
Therefore, it is expected to design an innovative NOMA scheme to alleviate the error propagation during SIC detection in NOMA systems.
\section{Literature Review}\label{C1:lITERATURE}
In this section, the related topics of this thesis, e.g. orthogonal multiple access, non-orthogonal multiple access, and millimeter wave communications, are discussed and reviewed.

\subsection{From OMA to NOMA}
Radio access technologies for cellular communications are characterized by multiple access schemes, such as frequency-division multiple access (FDMA) for the first generation (1G), time-division multiple access (TDMA) for the second generation (2G), code-division multiple access (CDMA) used by both 2G and the third generation (3G), and orthogonal frequency-division multiple access (OFDMA) for 4G.
Equipping multiple antennas at transceivers, spatial division multiple access (SDMA) is a potential technology to be exploited for supporting multiple users.
All these conventional multiple access schemes are categorized as OMA technologies, where different users are allocated to orthogonal resources in either time, frequency, spatial, or code domain in order to mitigate multiple access interference (MAI).
However, OMA schemes are not sufficient to support the massive connectivity with diverse QoS requirements.
In fact, due to the limited system DoF, some users with better channel quality have a higher priority to be served while other users with poor channel quality have to wait to access, which leads to high unfairness and large latency.
Besides, it is inefficient in terms of system resources when allocating a DoF solely to users with poor channel quality.

\begin{figure}[t]
\centering
\includegraphics[width=5.5in]{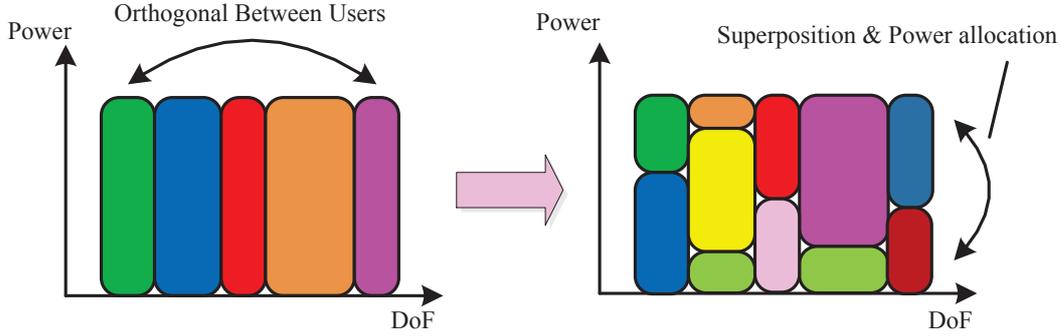}
\caption{From OMA to NOMA via power domain multiplexing.}%
\label{NOMAPrinciple}%
\end{figure}

In contrast to conventional OMA, NOMA transmission techniques intend to share a DoF among users via superposition and consequently need to employ multiple user detection (MUD)\cite{verdu1998multiuser,IDMAPingLI} to separate interfered users sharing the same DoF, as illustrated in Figure \ref{NOMAPrinciple}.
Therefore, NOMA is beneficial to supporting
a large number of connections to achieve spectrally-efficient communications by introducing controllable symbol collisions in the same DoF.
In addition, multiple users with different types of traffic requests can be multiplexed to transmit concurrently on the same DoF to improve the latency and fairness performance.
The comparison of OMA and NOMA is summarized in Table \ref{NOMAvsOMA}.
As a result, NOMA has been recognized as a promising multiple access technique for the 5G wireless networks due to its high spectral efficiency, massive connectivity, low latency, and high user fairness \cite{Dai2015}.
For example, the industrial community has proposed up to 16 various forms of NOMA as the potential multiple access schemes for the forthcoming fifth-generation (5G) networks\cite{ChenNOMAScheme}.

\begin{table}[t]
\center\small
\caption{Comparison of OMA and NOMA}
  \begin{tabular}{lll}
  \hline
    & Advantages & Disadvantages \\ \hline
  \multirow{3}{*}{OMA}
    & \tabitem Simpler receiver detection & \tabitem Lower spectral efficiency\\
    &  & \tabitem Limited number of users\\
    &  & \tabitem Unfairness for users\\\hline
  \multirow{5}{*}{NOMA}
    & \tabitem Higher spectral efficiency & \tabitem Increased complexity of receivers \\
    & \tabitem Higher connection density & \tabitem Higher sensitivity to channel uncertainty\\
    & \tabitem Enhanced user fairness & \\
    & \tabitem Lower latency & \\
    & \tabitem Supporting diverse QoS & \\\hline
  \end{tabular}
\label{NOMAvsOMA}
\end{table}

The concept of non-orthogonal transmissions dates back to the 1990s, e.g. \cite{Cover1991,Tse2005}, which serves as a foundation for the development of the power-domain NOMA.
Indeed, NOMA schemes relying on non-orthogonal spreading sequences have led to popular CDMA arrangements, even though eventually the so-called orthogonal variable spreading factor (OVSF) code was selected for the global 3G wireless systems\cite{Verdu1999,YangCDMA,Hanzo2003,WangPowerEfficiency}.
To elaborate a little further, the spectral efficiency of CDMA was analyzed in \cite{Verdu1999}.
In \cite{YangCDMA}, the authors compared the benefits and deficiencies of three typical CDMA schemes: single-carrier direct-sequence CDMA (SC DS-CDMA), multi-carrier CDMA (MC-CDMA), and multi-carrier DS-CDMA (MC DS-CDMA).
Furthermore, a comparative study of OMA and NOMA was carried out in \cite{WangPowerEfficiency}.
It has been shown that NOMA possesses a spectral-power efficiency advantage over OMA\cite{WangPowerEfficiency} and this theoretical gain can be realized with the aid of the interleave division multiple access (IDMA) technique proposed in \cite{Ping2006IDMA}.
Despite the initial efforts on the study of NOMA, the employment of NOMA in practical systems has been relatively slow due to the requirement of sophisticated hardware for its implementation.
Recently, NOMA has rekindled the interests of researchers as a benefit of the recent advances in signal processing and silicon technologies\cite{Access2015,NOMADOCOMO}.

According to the domain of multiplexing, we can divide the existing NOMA techniques into two categories\cite{WeiSurvey2016}, i.e., code domain multiplexing and power domain multiplexing.
The code domain NOMA techniques, including low-density spreading (LDS)\cite{Hoshyar2008,Hoshyar2010,Razavi2012,HuangTaoLDS}, sparse code multiple access (SCMA)\cite{Nikopour2013}, pattern division multiple
access (PDMA)\cite{Dai2014PDMA}, etc, introduce redundancy via coding/spreading to facilitate the users separation at the receiver. For instance, LDS-CDMA\cite{Hoshyar2008} intentionally arranges each user to spread its data over a small number of chips and then interleave uniquely, which makes optimal MUD affordable at receiver and exploits the intrinsic interference diversity.
LDS-OFDM\cite{Hoshyar2010,Razavi2012}, as shown in Figure \ref{LDSOFDMBLOCK}, can be interpreted as a system which applies LDS for multiple access and OFDM for multi-carrier modulation.
Besides, SCMA is a generalization of LDS methods where the modulator and LDS spreader are merged.
The principle of power domain NOMA is to exploit the users' power difference for multiuser multiplexing together with superposition coding at the transmitter, while applying SIC at the receivers for alleviating the IUI\cite{WeiSurvey2016}.

\begin{figure}[t]
\centering
\includegraphics[width=5.5in]{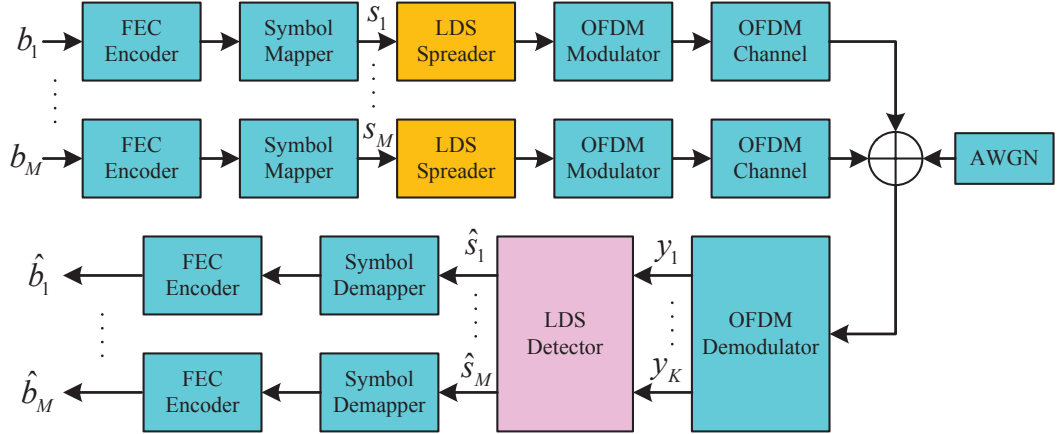}
\caption{Block diagram of an uplink LDS-OFDM system.}%
\label{LDSOFDMBLOCK}%
\end{figure}

In fact, the non-orthogonal feature can be introduced either in the power domain only or in the hybrid code and power domain.
{Although code domain NOMA offers a potential coding gain to improve spectral efficiency, power domain NOMA is simpler since only minor changes are required in the physical layer operations at the transmitter side compared to current 4G technologies.}
In addition, power domain NOMA paves the way for flexible resource allocation via relaxing the orthogonality requirement to improve the performance of NOMA, such as spectral efficiency\cite{Ding2014,Hanif2016,sun2016optimal}, energy efficiency\cite{Sun2015a,zhang2016energy,Zhang2018NOMA,Fuhui2018NOMA}, and user fairness\cite{Timotheou2015,LiuFairnessNOMA,wei2017fairness}.
Therefore, this thesis focuses on the power domain NOMA, including the system performance analysis and some specific designs when applying NOMA in diverse types of wireless communication systems\footnote{We note that the authors of \cite{ProceedingLiu,DaiCST2018} surveyed the state-of-the-art research on NOMA and offered a high-level discussion of the challenges and research opportunities for NOMA systems.
Interested readers are referred to \cite{ProceedingLiu,DaiCST2018} for more details.}.

\subsection{Performance Analysis on NOMA}
It has been shown that NOMA offers considerable performance gain over OMA in terms of spectral efficiency and outage probability\cite{Otao2012,Saito2013,Saito2013a,Xu2015,Ding2014,Dingtobepublished,Chen2017,Yang2016}.
Initially, the performance of NOMA was evaluated through simulations with perfect CSI by utilizing the proportional fairness scheduler\cite{Otao2012,Saito2013}, fractional transmission power allocation (FTPA)\cite{Saito2013}, and tree-search based transmission power allocation (TTPA)\cite{Saito2013a}.
These works showed that the overall cell throughput, cell-edge user throughput, and the degrees of proportional fairness achieved by NOMA are all superior to those of OMA.
More specifically, based on the achievable rate region, Xu \emph{et al.} proved in \cite{Xu2015} that NOMA outperforms time division multiple access (TDMA) with a high probability in terms of both its overall sum-rate and the individual user-rate.
Furthermore, the ergodic sum-rate of single-input single-output NOMA (SISO-NOMA) was derived and the performance gain of SISO-NOMA over SISO-OMA was demonstrated via simulations by Ding \emph{et al.} in a cellular downlink scenario with randomly deployed users\cite{Ding2014}.
With the proposed asymptotic analysis, it was shown that NOMA is asymptotically equivalent to the opportunistic multiple access technique.
Upon relying on their new dynamic resource allocation design, Chen \emph{et al.} \cite{Chen2017} proved that SISO-NOMA always outperforms SISO-OMA using a rigorous optimization technique.
In \cite{Yang2016}, Yang \emph{et al.} analyzed the outage probability degradation and the ergodic sum-rate of SISO-NOMA systems by taking into account the impact of partial channel state information (CSI).
It it shown that NOMA based on the second order statistical CSI always achieves a better performance than that of NOMA based on imperfect CSI, while it can achieve similar performance to the NOMA with perfect CSI in the low SNR region.
A remarkable work in \cite{Dingtobepublished} characterized the impact of user pairing on the performance of a two-user SC-NOMA system with fixed power allocation and cognitive radio inspired power allocation, respectively.
The authors proved that, for fixed power allocation, the performance gain of NOMA over OMA increases when the difference in channel gains between the paired users becomes larger.
The aforementioned contributions studied the performance of NOMA systems or discussed the superiority of NOMA over OMA in different contexts.
However, analytical results quantifying the ergodic sum-rate gain (ESG) of SISO-NOMA over SISO-OMA have not been reported.
Further exploration on performance analysis of NOMA system should be carried out to investigate the underlying insights of NOMA.

To achieve a higher spectral efficiency, the concept of NOMA has also been extended to the case of multi-antenna systems, resulting in the notion of multiple-input multiple-output NOMA (MIMO-NOMA), for example, by invoking the signal alignment technique of Ding \emph{et al.} \cite{DingSignalAlignment} and the quasi degradation based precoding design of Chen \emph{et al.} \cite{ChenQuasiDegradation}.
Although the performance gain of MIMO-NOMA over MIMO-OMA has indeed been shown in \cite{DingSignalAlignment,ChenQuasiDegradation} with the aid of simulations, the performance gain due to additional antennas has not been quantified analytically.
Moreover, how the ESG of NOMA over OMA increases upon upgrading the system from having a single antenna to multiple antennas is still an open problem at the time of writing, which deserves our efforts to explore.
The answers to these questions can shed light on the practical implementation of NOMA in future wireless networks.
On the other hand, there are only some preliminary results on applying the NOMA principle to massive MIMO systems.
For instance, Zhang \emph{et al.} \cite{ZhangDiMassive} investigated the outage probability of massive MIMO-NOMA (\emph{m}MIMO-NOMA).
Furthermore, Ding and Poor \cite{DingMassive} analyzed the outage performance of \emph{m}MIMO-NOMA relying on realistic limited feedback and demonstrated a substantial performance improvement for \emph{m}MIMO-NOMA over \emph{m}MIMO-OMA.
Upon extending NOMA to an mmWave massive MIMO system, the capacity attained in the high-SNR regime and low-SNR regime were analyzed by Zhang \emph{et al.} \cite{ZhangDimmWave}.
Yet, the ESG of \emph{m}MIMO-NOMA over \emph{m}MIMO-OMA remains unknown and the investigation of \emph{m}MIMO-NOMA has the promise attaining NOMA gains in large-scale systems in the networks of the near future.

While single-cell NOMA has received significant research attention \cite{Xu2015,Ding2014,Chen2017,Yang2016,Sun2016Fullduplex,WeiTCOM2017,DingSignalAlignment,ChenQuasiDegradation,ZhangDiMassive,DingMassive,ZhangDimmWave}, the performance of NOMA in multi-cell scenarios remains unexplored but critically important for practical deployment, where the inter-cell interference becomes a major obstacle\cite{ShinMulticellNOMA}.
Centralized resource optimization of multi-cell NOMA was proposed by You \emph{et al.} in \cite{YouMulticellNOMA}, while a distributed power control scheme was studied in \cite{FuMulticellNOMA}.
The transmit precoder design of MIMO-NOMA aided multi-cell networks designed for maximizing the overall sum throughput was proposed by Nguyen \emph{et al.} \cite{NguyenMulticellNOMA} and a computationally efficient algorithm was proposed for achieving a locally optimal solution.
Despite the fact that the simulation results provided by \cite{ShinMulticellNOMA,YouMulticellNOMA,FuMulticellNOMA,NguyenMulticellNOMA} have demonstrated a performance gain for applying NOMA in multi-cell cellular networks, the analytical results quantifying the ESG of NOMA over OMA for multi-cell systems relying on single-antenna, multi-antenna, and massive MIMO arrays at the base station (BS) have not been reported in the open literature.
Furthermore, the performance gains disseminated in the literature have been achieved for systems having a high transmit power or operating in the high-SNR regime.
However, a high transmit power inflicts a strong inter-cell interference, which imposes a challenge for the design of inter-cell interference management.
Therefore, there are many practical considerations related to the NOMA principle in multi-cell systems to be investigated.

\subsection{Resource Allocation Design for NOMA}
Resource allocation has received significant attention since it is critical in exploiting the potential performance gain of NOMA.
However, optimal resource allocation is very challenging for NOMA systems, since user scheduling and power allocation design couple with each other and generally result in NP-hard optimization problems\cite{Lei2016NOMA}.
Some initial works on resource allocation in \cite{Otao2012,Saito2013,Saito2013a} have been reported, but their results and achieved performance are generally far from optimal.
In \cite{Hojeij2015,Hojeij2015a}, Hojeij \emph{et al.} studied a two-user multi-carrier NOMA (MC-NOMA) system by minimizing the number of subcarriers assigned under the constraints of maximum allowed transmit power and requested data rates, and further introduced a hybrid orthogonal-nonorthogonal scheme.
In \cite{Di2015}, Di \emph{et al.} formulated the resource allocation problem to maximize the sum rate, which is a non-convex optimization problem due to the binary constraint and the existence of the interference term in the objective function.
As a compromise solution, they proposed a suboptimal algorithm by employing matching theory and water-filling power allocation.
In \cite{Lei2015}, Lei \emph{et al.} presented a systematic approach for NOMA resource allocation from a mathematical optimization point of view.
They formulated the joint power and channel allocation problem of a downlink multiuser MC-NOMA system, and proved its NP-hardness based on \cite{Liu2014}.
Furthermore, they proposed a competitive suboptimal algorithm based on Lagrangian duality and dynamic programming, which significantly outperforms OFDMA as well as NOMA with FTPA.
To fully reveal the maximum performance gain brought by NOMA, a globally optimal resource allocation solution is highly expected.

Most of aforementioned contributions  focused on the optimal resource allocation for maximizing the sum rate.
However, fairness is another important objective to optimize for resource allocation of NOMA.
Proportional fairness (PF) has been adopted as a metric to balance the transmission efficiency and user fairness in many works\cite{Kim2004PF,Wengerter2005PF}.
In \cite{Liu2015b}, Liu \emph{et al.} proposed a user pairing and power allocation scheme for downlink two-user MC-NOMA based on a PF objective.
A prerequisite for user pairing was given and a closed-form optimal solution for power allocation was derived.
Apart from PF, max-min or min-max methods are usually adopted to achieve user fairness.
Given a preset user group, Timotheou \emph{et al.} in \cite{Timotheou2015} studied the power allocation problem from a fairness standpoint by maximizing the minimum achievable user rate with instantaneous CSI and minimizing the maximum outage probability with statistical CSI.
Although the resulting problems are non-convex, simple low-complexity algorithms were developed to provide close-to-optimal solutions.
Similarly, another paper \cite{Shi2015} studied the outage balancing problem of a downlink multiuser MC-NOMA system to maximize the minimum weighted success probability with and without user grouping.
Joint power allocation and decoding order selection solutions were given, and the inter-group power and resource allocation solutions were also provided.

In addition to the spectral efficiency maximization, the power-efficient or energy-efficient resource allocation for NOMA has become an important focus to enable environment friendly and cost-effective wireless communication networks.
Note that power-efficient resource allocation has been extensively studied in OMA systems\cite{DerrickEEOFDMA,DerrickFD2016}.
Ng \emph{et al.} in \cite{DerrickEEOFDMA,DerrickFD2016} studied the power-efficient resource allocation design for large number of base station antennas and full-duplex radio base stations, respectively.
To address the green radio design for NOMA systems, Zhang \emph{et al.} in \cite{zhang2016energy} proposed an optimal power allocation strategy for a single-carrier NOMA system to maximize the energy efficiency, while a separate subcarrier assignment and power allocation scheme was proposed for MC-NOMA systems in \cite{FangEnergyEfficientNOMAJournal}.
To minimize the total power consumption, Lei \emph{et al.} in \cite{Lei2016NPM} designed a suboptimal ``relax-then-adjust" algorithm for MC-NOMA systems.
Most recently, Zhang \emph{et al.} \cite{Zhang2018NOMA} and Zhou \emph{et al.} \cite{Fuhui2018NOMA} studied the energy-efficient resource allocation design in NOMA heterogeneous networks and NOMA-enabled heterogeneous cloud radio access networks (H-CRAN), respectively.

In summary, many existing works focused on the resource allocation for NOMA systems under the assumption of perfect CSI at the transmitter side.
However, there are only few works on the joint user scheduling and power allocation problem for MC-NOMA systems under imperfect CSI, not to mention the SIC decoding order selection problem.
In fact, under imperfect CSI, the SIC decoding order cannot be determined by channel gain order, and some other metrics, such as distance, priority, and target rates, are potential criteria to decide the SIC decoding order.

To further enhance the system performance, MIMO-NOMA has become one of the main research focuses in the literature, where the BS and users are equipped with multiple antennas\cite{DingSignalAlignment,ChenQuasiDegradation}.
In MIMO-NOMA systems, basically, multiple users are clustered into several groups, where multiple user groups are multiplexed in spatial domain and users within a group are multiplexed in power domain.
As a result, MIMO-NOMA suffers from both the inter-cluster interference and intra-cluster interference.
Therefore, user clustering, precoding design, and power allocation are critical for improving the system performance of MIMO-NOMA.

Initially, Kim \emph{et al.} in \cite{Kim2013} proposed a clustering and power allocation algorithm for a two-user MIMO-NOMA system, where there are at most two users in each user group.
Both the channel correlation and channel gain difference were taken into consideration to reduce intra-cluster interference and inter-cluster interference simultaneously \cite{Kim2013}.
In \cite{Choi2015}, Choi proposed a minimum power multicast beamforming scheme and applied to two-user NOMA systems for multi-resolution broadcasting.
The proposed two-stage beamforming method outperforms the zero-forcing (ZF) beamforming scheme in \cite{Kim2013}.
Sun \emph{et al.} in \cite{Sun2015} studied the ergodic sum capacity maximization problem of a two-user MIMO-NOMA system under statistical CSI with the total power constraint and minimum rate constraint for the weak user.
This work derived the optimal input covariance matrix and proposed the optimal power allocation scheme as well as a low complexity suboptimal solution.
Furthermore, Sun \emph{et al.} \cite{Sun2015b} further extend their study to the sum-rate optimization problem, where NOMA users have different precoders.
The optimal precoding covariance matrix was derived by utilizing the duality between uplink and downlink, and a low complexity suboptimal solution based on singular value decomposition (SVD) was also provided.

Then, with relaxing the constraint of having no more than two users in each NOMA group, some researchers studied the resource allocation in a multi-user MIMO-NOMA system.
In \cite{Dingtobepublisheda}, Ding \emph{et al.} proposed a new design of precoding and detection matrices for a downlink multiuser MIMO-NOMA system, then analyzed the impact of user clustering as well as power allocation on the sum rate and outage probability of MIMO-NOMA system.
Furthermore, in \cite{DingSignalAlignment}, a transmission framework based on signal alignment was proposed for downlink and uplink two-user MIMO-NOMA systems.
Hanif \emph{et al.} in \cite{Hanif2016} studied the sum rate maximization problem of a downlink multiuser multiple-input single-output NOMA (MISO-NOMA) system.
Recently, a multiuser MIMO-NOMA scheme based on limited feedback was proposed and analyzed in \cite{DingMassive}.
Invoking the concept of quasi-degradation in a MISO-NOMA system, Chen \emph{et al.} \cite{ChenQuasiDegradation} derived the closed-form expressions of the optimal precoding vectors for the minimization of total power consumption and proposed a hybrid-NOMA precoding scheme to yield a practical transmission scheme.
The notion of NOMA was further extend to the case of secure communications.
Sun \emph{et al.} \cite{SunMISONOMA} studied the optimal resource allocation for maximization of the weighted system throughput of a MISO-NOMA system while the information leakage is constrained and artificial noise is injected to guarantee secure communication in the presence of multiple potential eavesdroppers.

\subsection{The Coexistence of NOMA and MmWave Communications}
Recently, spectrum congestion under 6 GHz frequency bands in current cellular systems creates a fundamental bottleneck for capacity improvement and sustainable system evolution\cite{Rappaport2013,Andrews2014}.
Subsequently, it is necessary and desirable to use high frequency bands, where a wider frequency bandwidth is available, such as mmWave bands \cite{Rappaport2013,ZhaommWaveMulticell,ZhaoMulticellMMwAVE} ranging from 30 GHz to 300 GHz.
On the other hand, multiple access technology is fundamentally important to support multi-user communications in wireless cellular networks.
Although communication systems utilizing microwave bands, i.e., sub-6 GHz, have been widely investigated, the potential multiple access scheme for mmWave communication systems is still unclear.
Meanwhile, NOMA has been recognized as a promising multiple access technique for the 5G wireless networks due to its higher spectral efficiency and capability to support massive connectivity\cite{Dai2015,Ding2015b,WeiSurvey2016,Mao2018,QiuLOMA,QiuLOMASlowFading}.
The interwork between the two important techniques via applying the NOMA concept in mmWave communications are explored by the thesis, which offers a viable solution to overcome the limitation of the limited number of RF chains and to improve the system energy efficiency.

In the literature, two kinds of architectures have been proposed for mmWave communications, i.e., fully digital architecture and hybrid architecture \cite{XiaoMing2017,zhao2017multiuser,zhao2019distributed,GaoSubarray,lin2016energy,Dai2017}.
Specifically, the fully digital architecture requires a dedicated RF chain\footnote{A RF chain consists of an ADC/DAC, a power amplifier, a mixer, and a local oscillator, etc.\cite{lin2016energy}.} for each antenna.
Hence, the tremendous energy consumption of RF chains, and the dramatically increased signal processing complexity and cost become a major obstacle in applying the fully digital architecture to mmWave systems in practical implementations.
Hybrid architectures, including fully access\cite{zhao2017multiuser} and subarray structures \cite{GaoSubarray}, provide a feasible and compromise solution for implementing mmWave systems which strike a balance between energy consumption, system complexity, and system performance.
In particular, for the fully access hybrid structures, each RF chain is connected to all antennas through an individual group of phase shifters \cite{zhao2017multiuser}.
For subarray hybrid structures\cite{GaoSubarray}, each RF chain has access to only a disjoint subset of antennas through an exclusive phase shifter for each antenna.
In essence, the two kinds of hybrid structures separate the signal processing into a digital precoder operated in baseband and an analog beamformer operated in RF band.
The comparison of the two types of hybrid structures can be found in \cite{lin2016energy}.
In general, the hybrid mmWave architectures are practical for implementation due to the promising system performance, which is also the focus of the thesis.
Lately, most of existing works \cite{zhao2017multiuser,GaoSubarray,lin2016energy} have investigated the channel estimation and hybrid precoding design for hybrid mmWave architectures.
However, the design of potential and efficient multiple access schemes for hybrid mmWave systems is rarely discussed.

Conventional OMA schemes adopted in previous generations of wireless networks cannot be applied directly to the hybrid mmWave systems, due to the associated special propagation features and hardware constraints\cite{wong2017key,Rappaport2013}.
For instance, in hybrid mmWave systems, an analog beamformer is usually shared by all frequency components in the whole frequency band.
Subsequently, FDMA and OFDMA are only applicable to the users covered by the same analog beam.
Unfortunately, the beamwidth of an analog beam in mmWave frequency band is typically narrow with a large antenna array\footnote{The $-3$ dB beamwidth of a uniform linear array with $M$ half wavelength spacing antennas is about $\frac{{102.1}}{M}$ degrees \cite{van2002optimum}.} and hence only limited number of users can be served via the same analog beam.
As a result, the limited beamwidth in practical systems reduces the capability of accommodating multiple users via FDMA and OFDMA, despite the tremendous bandwidth in mmWave frequency band.
Similarly, CDMA also suffers from the problem of narrow beamwidth, where only the users located within the same analog beam can be served via CDMA.
Even worse, the CDMA system performance is sensitive to the power control quality due to the near-far effects.
Another OMA scheme, TDMA, might be a good candidate to facilitate multi-user communication in hybrid mmWave systems, where users share the spectrum via orthogonal time slots.
However, it is well-known that the spectral efficiency of TDMA is inferior to that of NOMA \cite{Ding2015b,zhang2016energy}.
Moreover, the key challenge of implementing TDMA in hybrid mmWave systems is the requirement of high precision in performing fast timing synchronization since mmWave communications usually provide a high symbol rate.
On the other hand, SDMA \cite{zhao2017multiuser} is a potential technology for supporting multi-user communications, provided that the base station is equipped with enough number of RF chains and antennas.
However, in hybrid mmWave systems, the limited number of RF chains restricts the number of users that can be served simultaneously via SDMA, i.e., one RF chain can serve at most one user.
In particular, in overloaded scenarios, i.e., the number of users is larger than the number of RF chains, SDMA fails to accommodate all the users.
More importantly, in order to serve a large number of users via SDMA, more RF chains are required which translates to a higher implementation cost, hardware complexity, and energy consumption.
Thus, the combination of SDMA and mmWave \cite{zhao2017multiuser} is unable to cope with the emerging need of massive connectivity required in the future 5G communication systems\cite{Andrews2014}.
Therefore, we attempt to overcome the limitation incurred by the small number of RF chains in hybrid mmWave systems.
To this end, we introduce the concept of NOMA into hybrid mmWave systems, which allows the system to serve more users with a limited number of RF chains.

Another benefit of applying NOMA in hybrid mmWave communications is the enhanced system energy efficiency, especially considering the practical blockage effect in mmWave channels.
In particular, when the line-of-sight (LOS) path of a user is blocked, the RF chain serving this user can only rely on non-line-of-sight (NLOS) paths to provide communications.
However, the channel gain of the NLOS paths is usually much weaker than that of the LOS path in mmWave frequency bands, due to the high attenuation in reflection and penetration\cite{Rappaport2013}.
As a result, the dedicated RF chain serving this user consumes a lot of system resources for achieving only a low data rate, which translates into a low energy efficiency.
In fact, when the LOS path of a user is blocked, it should be treated as a weak user in the context of NOMA.
Thus, the weak user can be clustered with a strong user possessing a LOS link to form a NOMA group serving by only one RF chain to exploit the near-far diversity.
Meanwhile, the original RF chain and its associated phase shifter (PS) dedicated to the weak user become idle, which can substantially save the associated circuit power consumption.
Inspired by these observations, the combination of NOMA and hybrid mmWave technology is a promising solution to improve the system energy efficiency.

In contrast to conventional OMA schemes mentioned above, NOMA can serve multiple users via the same DoF and achieve a higher spectral efficiency\cite{Ding2014}.
Several preliminary works considered NOMA schemes for mmWave communications, which were shown to offer a higher spectral efficiency compared to conventional OMA schemes\cite{DingFRAB,Ding2017RandomBeamforming,Cui2017Optimal, WangBeamSpace2017}.
Ding \emph{et al.} in \cite{DingFRAB} proposed a NOMA transmission scheme in massive MIMO mmWave communication systems via exploiting the features of finite resolution of PSs.
Then, the outage performance analysis for mmWave NOMA systems with random beamforming was studied in \cite{Ding2017RandomBeamforming}.
Subsequently, Cui \emph{et al.} in \cite{Cui2017Optimal} proposed a user scheduling and power allocation design for the random beamforming mmWave NOMA  scheme.
In \cite{WangBeamSpace2017}, a beamspace mmWave MIMO-NOMA scheme was proposed by using a lens antenna array and a power allocation algorithm was designed to maximize the system sum-rate.

Although all these works have demonstrated the potential spectral efficiency gain of applying NOMA in mmWave systems, the energy efficiency gain brought by NOMA for mmWave communication systems is rarely discussed in the literature.
More importantly, due to the high carrier frequency in mmWave frequency band, massive numbers of antennas are usually equipped at transceivers and thus the beamwidths of the associated analog beams are typically narrow \cite{BusariSurveyonMMWave}.
As a result, the number of users that can be served concurrently by the mmWave-NOMA scheme is very limited and it depends on the users' angle-of-departure (AOD) distribution.
This reduces the potential spectral efficiency and energy efficiency gain brought by NOMA, which calls for new designs of applying NOMA to hybrid mmWave communication systems.

{\subsection{Summary}
Based on the literature review, we summarize the main research questions and knowledge gaps of NOMA for future wireless networks, which motivates this thesis.
\begin{itemize}
  \item There is a paucity of literature on the comprehensive
   analysis of performance gain of NOMA over OMA in various specifically considered system setups, including single-antenna, multi-antenna, and massive-MIMO systems with both single-cell and
   multi-cell deployments.
   Some basic questions about the performance gain of NOMA over OMA are remained to be answered, such as where it comes from and what its behavior is in different system setups.
  \item SIC decoding plays an important role for realizing the performance gain of NOMA, but it suffers from the well-known error propagation\cite{Nara2005Error}. Therefore, how to mitigate the error propagation of SIC decoding in NOMA systems is worth to explore.
  \item As NOMA can be realized via power domain multiplexing, resource allocation design is crucial for exploiting the potential performance gain of NOMA systems.
      However, most of existing resource allocation designs \cite{Lei2016NOMA,Di2016sub,Sun2016Fullduplex,Liu2015b} are based on perfect CSIT, which is unlikely to acquire in practice.
      Therefore, how to design the robust resource allocation strategy for NOMA systems with CSIT uncertainty is interesting and remained for further investigation.
  \item Due to the spectrum congestion in the currently microwave frequency band, it is expected to push the carrier frequency of future wireless network to high frequency bands, where a wider frequency bandwidth is available, such as mmWave bands ranging from 30 GHz to 300 GHz.
      Therefore, the coexistence of NOMA and mmWave communications is remained to explore, including the feasibility of NOMA in mmWave frequency band and the potential benefits of applying NOMA in mmWave communication systems.
\end{itemize}}

\section{Thesis Outline and Main Contributions}

\subsection{Thesis Organization}
In this subsection, the outline of each chapter in this thesis is given.
There are eight chapters in total, including an introduction of the thesis, the necessary background knowledge, the technical details of the conducted research, and the conclusion of this thesis.
The technical parts from Chapter 3 to Chapter 7 of the thesis are ordered in a logic from uplink NOMA systems to downlink NOMA systems as well as from NOMA in microwave communications to NOMA in mmWave communications.

\textbf{Chapter 1}

This chapter provides an overview of 5G communication usage scenarios, key requirements, and enabled technologies.
Then, the motivation of the considered research questions of the thesis and the relevant existing works are presented.
It also provides the outline and the main contributions of this thesis.

\textbf{Chapter 2}

In this chapter, some related background knowledges of the thesis are presented, including the channel models, the fundamental concepts of NOMA, spectral efficiency, energy efficiency, and resource allocation design methodologies. Toy examples with relevant figures are provided to help understanding these concepts. The materials presented in this chapter will be used throughout the rest of this thesis.

\textbf{Chapter 3}

In this chapter, we investigate and reveal the ESG of NOMA over OMA in uplink cellular communication systems.
{A unified performance analysis on ESG of NOMA over OMA in single-antenna, multi-antenna and massive MIMO systems considering both single-cell and multi-cell deployments are presented.}
The distinctive behaviors of ESG of NOMA over OMA under different scenarios are unveiled to provide some interesting insights.

\textbf{Chapter 4}

In this chapter, we propose a joint pilot and payload power allocation (JPA) scheme to reduce the effect of error propagation issue for an uplink MIMO-NOMA system with a maximal ratio combining and SIC (MRC-SIC) receiver.
Details on how to formulate the JPA design as an optimization problem and how to solve the formulated problem are presented in this chapter.

\textbf{Chapter 5}

In this chapter, we study the power-efficient resource allocation design for an downlink MC-NOMA system by taking into account the imperfection of CSI
at transmitter.
A joint design of power allocation, rate allocation, user scheduling, and SIC decoding policy for
minimizing the total transmit power is formulated and solved.
A globally optimal solution is obtained as a performance benchmark, while a low-complexity suboptimal iterative algorithm is developed for practical implementations.

\textbf{Chapter 6}

In this chapter, we propose a multi-beam NOMA scheme for downlink hybrid mmWave communication systems and study its resource allocation.
The proposed scheme can perform NOMA transmission for the users with an arbitrary angle-of-departure distribution, which provides a higher flexibility for NOMA user clustering and thus can efficiently exploit the potential multi-user diversity.
A suboptimal two-stage resource allocation design for maximizing the system sum-rate is also presented in this chapter.

\textbf{Chapter 7}

In this chapter, we present a further work of applying NOMA in downlink mmWave communications.
In particular, we propose a novel NOMA scheme with beamwidth control for hybrid mmWave communication systems and study the resource allocation design to maximize the system energy efficiency.
The details on how to control the analog beamwidth and how to design energy-efficient resource allocation policy are presented in this chapter.

\textbf{Chapter 8}

This chapter concludes the thesis by summarizing the main ideas of each chapter and the contributions of all the works conducted during my Ph.D. research.
{The potential future works arising from this thesis are also outlined.}

\subsection{Research Contributions}
This thesis studies the performance analysis and design for NOMA in different types of wireless communication systems, covering from uplink to downlink communication systems and from microwave to mmWave communication systems.
The developed theoretical results can serve as guidelines for the practical implementation of NOMA in future wireless communication systems.
The designed spectral-efficient and energy-efficient resource allocation algorithms have high potential to find applications in several current or upcoming wireless
communication standards.
In what follows, a detailed list of the research contributions in Chapters 3-7 are presented.

\begin{enumerate}
  \item Chapter 3 presents {a unified} performance analysis on ESG of NOMA over OMA in single-antenna, multi-antenna and massive MIMO uplink communication systems.
      We first focus on the ESG analysis in single-cell systems and then extend our analytical results to multi-cell systems by taking into account the inter-cell interference (ICI).
      We quantify the ESG of NOMA over OMA relying on practical signal reception schemes at the base station for both NOMA as well as OMA and unveil its behavior under different scenarios.
      Our simulation results confirm the accuracy of our performance analyses and provide some interesting insights.

      In particular, in single-antenna systems, we identify two types of gains brought about by NOMA: 1) a large-scale near-far gain arising from the distance discrepancy between the base station and users; 2) a small-scale fading gain originating from the multi-path channel fading.
      Furthermore, we reveal that the large-scale near-far gain increases with the normalized cell size, while the small-scale fading gain is a constant, given by $\gamma = 0.57721$ nat/s/Hz, in Rayleigh fading channels.
      When extending single-antenna NOMA to $M$-antenna NOMA, we prove that both the large-scale near-far gain and small-scale fading gain achieved by single-antenna NOMA can be increased by a factor of $M$ for a large number of users.
      Moreover, given a massive antenna array at the base station and considering a fixed ratio between the number of antennas, $M$, and the number of users, $K$, the ESG of NOMA over OMA increases linearly with both $M$ and $K$.
      Compared to the single-cell case, the ESG in multi-cell systems degrades as NOMA faces more severe inter-cell interference due to the non-orthogonal transmissions.
      Besides, we unveil that a large cell size is always beneficial to the ergodic sum-rate performance of NOMA in both single-cell and multi-cell systems.
      Numerical results verify the accuracy of the analytical results derived and confirm the insights revealed about the ESG of NOMA over OMA in different scenarios.

The results in Chapter 3 have been presented in the following publications:
\begin{itemize}
\item \textbf{Z. Wei}, L. Yang, D. W. K. Ng, and J. Yuan, ``On the Performance Gain of NOMA over OMA in Uplink Single-cell Systems,'' in \emph{Proc. IEEE Global Commun. Conf.}, pp. 1--7, 2018.

\item \textbf{Z. Wei}, L. Yang, D. W. K. Ng, J. Yuan, and L. Hanzo ``On the Performance Gain of NOMA over OMA in Uplink Communication Systems'', {\em IEEE Trans. Commun.}, major revision, 30th Apr. 2019.
\end{itemize}

  \item In Chapter 4, a JPA scheme for uplink MIMO-NOMA with an MRC-SIC receiver is proposed to alleviate the error propagation in SIC decoding of NOMA.
      The average signal-to-interference-plus-noise ratio (ASINR) of each user during the MRC-SIC decoding is analyzed by taking into account the error propagation due to the channel estimation error.
      Furthermore, the JPA design is formulated as a non-convex optimization problem to maximize the minimum weighted ASINR.
      The formulated problem is transformed into an equivalent geometric programming problem and is solved optimally.
      Simulation results confirm the developed performance analysis and show that our proposed scheme can effectively alleviate the error propagation of MRC-SIC and enhance the detection performance, especially for users with moderate energy budgets.

      The results in Chapter 4 have been presented in the following publication:
        \begin{itemize}
            \item \textbf{Z. Wei}, D. W. K. Ng, J. Yuan,
                ``Joint Pilot and Payload Power Control for Uplink MIMO-NOMA with MRC-SIC Receivers,'' {\em IEEE Commun. Lett.}, vol.~22, no.~4, pp. 692--695, Apr. 2018.
        \end{itemize}
  \item In Chapter 5, the problem of power-efficient resource allocation design for downlink MC-NOMA systems is studied.
      The resource allocation design is formulated as a non-convex optimization problem which jointly designs the power allocation, rate allocation, user scheduling, and SIC decoding policy for minimizing the total transmit power.
      The proposed framework takes into account the imperfection of CSI at transmitter and QoS requirements of users.
      A \emph{channel-to-noise ratio outage threshold} is defined to facilitate the design of optimal SIC decoding policy on each subcarrier.
      Subsequently, the considered non-convex optimization problem is recast as a generalized linear multiplicative programming problem, for which a globally optimal solution is obtained via employing the branch-and-bound approach.
      The optimal resource allocation policy serves as a system performance benchmark due to its high computational complexity.
      To strike a balance between system performance and computational complexity, we propose a suboptimal iterative resource allocation algorithm based on difference of convex programming.
      Simulation results demonstrate that the suboptimal scheme achieves a close-to-optimal performance. Also, both proposed schemes provide significant transmit power savings, compared to that of conventional OMA schemes.

      The results in Chapter 5 have been presented in the following publications:
        \begin{itemize}
            \item \textbf{Z. Wei}, D. W. K. Ng, J. Yuan ``Power-Efficient Resource Allocation for MC-NOMA with Statistical Channel State Information,'' in \emph{Proc. IEEE Global Commun. Conf.}, pp. 1--7, Dec. 2016.
            \item \textbf{Z. Wei}, J. Yuan, D. W. K. Ng, H.-M. Wang
    ``Optimal Resource Allocation for Power-Efficient MC-NOMA with Imperfect Channel State Information,'' {\em IEEE Trans. Commun.}, vol.~65, no.~9, pp. 3944--3961, Sep. 2017.
        \end{itemize}
  \item In Chapter 6, we propose a multi-beam NOMA scheme for hybrid mmWave systems and study its resource allocation.
        A beam splitting technique is designed to generate multiple analog beams to serve multiple NOMA users on each radio frequency chain.
        In contrast to the recently proposed single-beam mmWave-NOMA scheme which can only serve multiple NOMA users within the same analog beam, the proposed scheme can perform NOMA transmission for the users with an arbitrary angle-of-departure distribution.
        This provides a higher flexibility for applying NOMA in mmWave communications and thus can efficiently exploit the potential multi-user diversity.
        Then, we design a suboptimal two-stage resource allocation for maximizing the system sum-rate.
        In the first stage, assuming that only analog beamforming is available, a user grouping and antenna allocation algorithm is proposed to maximize the conditional system sum-rate based on the coalition formation game theory.
        In the second stage, with the zero-forcing digital precoder, a suboptimal solution is devised to solve a non-convex power allocation optimization problem for the maximization of the system sum-rate which takes into account the quality of service constraints.
        Simulation results show that our designed resource allocation can achieve a close-to-optimal performance in each stage.
        In addition, we demonstrate that the proposed multi-beam mmWave-NOMA scheme offers a substantial spectral efficiency improvement compared to that of the single-beam mmWave-NOMA and the mmWave orthogonal multiple access schemes.

        The results in Chapter 6 have been presented in the following publications:
        \begin{itemize}
        \item \textbf{Z. Wei}, L. Zhao, J. Guo, D. W. K. Ng, and J. Yuan, ``A Multi-Beam NOMA Framework for Hybrid mmWave Systems,'' in \emph{Proc. IEEE Intern. Commun. Conf.}, pp. 1--7, 2018.

        \item \textbf{Z. Wei}, L. Zhao, J. Guo, D. W. K. Ng, and J. Yuan, ``Multi-Beam NOMA for Hybrid mmWave Systems'', {\em IEEE Trans. Commun.}, vol.~67, no.~2, pp. 1705--1719, Feb. 2019.
        \end{itemize}
  \item In Chapter 7, we propose a novel beamwidth control-based mmWave NOMA scheme and studied its energy-efficient resource allocation design.
      The proposed beamwidth control can increase the number of served NOMA groups by widening the beamwidth which can further exploit the energy efficiency gain brought by NOMA.
        To this end, two beamwidth control methods, based on the conventional beamforming and the Dolph-Chebyshev beamforming, respectively, are proposed.
        We firstly characterize the main lobe power losses due to the two beamwidth control methods and propose an effective analog beamformer design to minimize the power loss.
        Then, we formulate the energy-efficient resource allocation design as a non-convex optimization problem which takes into account the minimum required user data rate.
        A NOMA user grouping algorithm based on the coalition formation game theory is developed and a low-complexity iterative digital precoder design is proposed to achieve a locally optimal solution utilizing the quadratic transformation.
        Simulation results verify the fast convergence and effectiveness of our proposed algorithms.
        In addition, our results demonstrate the superior energy efficiency achieved by our proposed beamwidth controlling NOMA scheme compared to the conventional orthogonal multiple access and NOMA schemes without beamwidth control.

        The results in Chapter 7 have been presented in the following publications:
        \begin{itemize}
        \item \textbf{Z. Wei}, D. W. K. Ng, and J. Yuan, ``Beamwidth Control for NOMA in Hybrid mmWave Communication Systems'', in \emph{Proc. IEEE Intern. Commun. Conf.}, pp. 1--6, 2019.
        \item \textbf{Z. Wei}, D. W. K. Ng, and J. Yuan, ``NOMA for Hybrid MmWave Communication Systems with Beamwidth Control'', {\em IEEE J. Select. Topics Signal Process.}, vol. 13, no. 3, pp. 567-583, Jun. 2019.
        \end{itemize}
\end{enumerate}

\chapter{Background}\label{C2:chapter2}

\section{Introduction}
In this part, we first introduce channel models for wireless communications and then present the fundamental concepts of NOMA.
%
%
After that, the definitions of spectral efficiency and energy efficiency and their fundamental trade-off are also discussed.
Subsequently, we briefly introduce the optimization-based resource allocation design framework.
The materials in this chapter serve as the technical guidelines to provide the necessary background
to understand the works in the later chapters.

\section{Channel Models}
Wireless communication systems operate with electromagnetic waves, where the transmitted signal propagates through the physical medium from the transmitter to the receiver\cite{Tse2005}.
The radio propagation characteristics have a fundamental impact on each aspect of the wireless communication systems, in particular, cell-site planning, system
performance, equipment design, and signal processing requirements, etc.\cite{Shafi2018microwave}.
A defining characteristic of mobile wireless channels is the variations of the channel strength over time and over frequency, i.e., channel fading.
Roughly, channel fading can be divided into two types:
\begin{itemize}
  \item \emph{Large-scale fading}: Large-scale fading characterizes the channel attenuations which is mainly due to path loss of signal as a function of distance and shadowing by large objects such as buildings and hills.
      This occurs over a substantially larger distance compared to the carrier wavelength and is typically frequency independent.
  \item \emph{Small-scale fading}: Small-scale fading characterizes the channel fluctuations which is mainly caused by the constructive and destructive additions of the multi-path signals between the transmitter and receiver.
      This occurs at the spatial scale of the order of the carrier wavelength, and is usually frequency dependent.
\end{itemize}
To describe the statistics of small-scale and large-scale variations in different environments, many channel models have been suggested for wireless communications in various environments \cite{Goldsmith2005wireless,Tse2005,AkdenizChannelMmWave,HemadehmmWaveChannelModels}, just to cite a few.
However, the nature of radio propagation between microwave frequency band, i.e., sub-6GHz, and mmWave frequency band, i.e., 30-300 GHz is severely different, which results in different channel models.
Here, we review several probabilistic channel models which are widely used in wireless communications, mainly due to their simplicity and their ability to form a good approximation of the actual physical channels.
Also, some specific empirical channel models employed for microwave and mmWave communications in this thesis are presented.

\subsection{Large-scale Fading Models}

\underline{\textbf{Free Space Path Loss Model:}}

Consider a signal transmitted through free space without any obstructions between the transmitter and receiver.
Free-space path loss is defined as the path loss of the free-space model\cite{Goldsmith2005wireless}:
\begin{equation}\label{C2:FreeSpaceChannelModel}
  {P_{\mathrm{L}}}\left(d\right)\left[ {{\mathrm{dB}}} \right] = 10{\log _{10}}\frac{{{P_{\mathrm{t}}}}}{{{P_{\mathrm{r}}}}} = 10{\log _{10}}\frac{{{{\left( {4\pi d} \right)}^2}}}{{{G_t}{G_r}{\lambda ^2}}},
\end{equation}
where $\lambda$ is the wavelength of the information-carrying signal, $d$ is the distance between transmitter and receiver, $P_{\mathrm{t}}$ denotes the transmit power, $P_{\mathrm{r}}$ denotes the received power,
$G_{\mathrm{t}}$ represents the transmit antenna gain, and $G_{\mathrm{r}}$ is the receiving antenna gain.

\noindent\underline{\textbf{Simplified Distance-based Path Loss Model}}

A further simplified distance based path loss model has been widely used in the literature\cite{Ding2014,wei2019performanceGainJournal} to emphasize the impact of propagation distance $d$, which is given by:
\begin{equation}\label{C2:Distance-based}
  {P_{\mathrm{L}}}\left(d\right) = \frac{1}{{1+d^{n_{\mathrm{L}}}}},
\end{equation}
where $n_{\mathrm{L}}$ is the path loss exponent.
In this model, the path loss is only a simple function with respect to (w.r.t.) the distance $d$ between transmitter and receiver, without taking into account the effect of shadowing.
This simplified model is usually adopted to facilitate the performance analysis to reveal some underlying insights about the impact the large-scale fading on the system performance\cite{Ding2014,wei2019performanceGainJournal}, which is used in Chapter 3 of this thesis.

\noindent\underline{\textbf{Path Loss Channel Model with Shadowing}}

The free space model and the simplified distance-based model are the foundation of most large-scale fading models but they are inaccurate for mobile systems.
Therefore, a path loss model with shadowing has been proposed to account the large-scale variation at a given propagation distance $d$\cite{Tse2005,Goldsmith2005wireless}:
\begin{equation}\label{C2:LogDistanceChannelModel}
  {P_{\mathrm{L}}}\left(d\right)\left[ {{\mathrm{dB}}} \right] = {P_{\mathrm{L}}}\left(d_0\right)\left[ {{\mathrm{dB}}} \right] + 10n_{\mathrm{L}}\log_{10}\left(\frac{d}{d_0}\right) + X_{\sigma_{\mathrm{L}}},
\end{equation}
where $d_0$ is a reference distance for the antenna far-field and ${P_{\mathrm{L}}}\left(d_0\right)\left[ {{\mathrm{dB}}} \right]$ represents the reference path loss measured at $d_0$.
In addition, $X_{\sigma_{\mathrm{L}}}$ is a zero mean Gaussian random variable with variance $\sigma_{\mathrm{L}}^2$, i.e., $X_{\sigma_{\mathrm{L}}} \sim \mathcal{N}(0,\sigma_{\mathrm{L}}^2)$ which captures the effect of log-normal shadowing in wireless channels.
Note that the parameters $\sigma_{\mathrm{L}}$ and $n_{\mathrm{L}}$ would be found from empirical measurements.

\noindent\underline{\textbf{Empirical Large-scale Fading Models}}

Extensive measurement campaigns have been conducted to understand the physical characteristics in microwave\cite{Access2010,Goldsmith2005wireless} and mmWave frequency bands\cite{AkdenizChannelMmWave,HemadehmmWaveChannelModels}.
As a result, several empirical channel models have been adopted in both academia and industry.

For microwave communications, the standard 3rd Generation Partnership Project (3GPP) urban micro (UMi) path
loss model with hexagonal deployments \cite{Access2010} is given by
\begin{equation}\label{C2:UMiModel}
  {P_{\mathrm{L}}}\left( d \right)\left[ {{\mathrm{dB}}} \right] = 22.7 + 36.7{\log _{10}}\left( d \right) + 26{\log _{10}}\left( {{f_c}} \right)
\end{equation}
where $10 \;\mathrm{m} < d < 2000 \;\mathrm{m}$ is distance in meters and $0.45 \;\mathrm{GHz} \le f_c \le 6 \; \mathrm{GHz}$ is the carrier frequency in GHz.

For mmWave communications, the New York City (NYC) empirical model \cite{AkdenizChannelMmWave} has been widely used since this location is representative of likely initial deployment of mmWave cellular systems due to the high user density.
In particular, the path loss at a propagation distance $d$ is given by\cite{AkdenizChannelMmWave}
\begin{equation}\label{C2:NYCModel}
  {P_{\mathrm{L}}}\left( d \right)\left[ {{\mathrm{dB}}} \right] = \alpha_{\mathrm{L}}  + \beta_{\mathrm{L}} 10{\log _{10}}\left( d \right) + X_{\sigma_{\mathrm{L}}}, \;\; X_{\sigma_{\mathrm{L}}} \sim \mathcal{N}(0,\sigma_{\mathrm{L}}^2),
\end{equation}
where the propagation distance $d$ is in the range $30 \;\mathrm{m} < d < 200 \;\mathrm{m}$ with the unit of meters, $\alpha_{\mathrm{L}}$ and $\beta_{\mathrm{L}}$ are the least square
fits of floating intercept and slope over the measured distances, and $\sigma_{\mathrm{L}}^2$ is the log-normal shadowing variance.
The values of $\alpha_{\mathrm{L}}$, $\beta_{\mathrm{L}}$, and $\sigma_{\mathrm{L}}^2$ are fitted according to the NYC measurement data\cite{AkdenizChannelMmWave}.
Since the path losses for line-of-sight (LOS) link and non-line-of-sight (NLOS) link in mmWave channels are significantly different,  it is common to fit the LOS and NLOS path losses separately.
The model parameters for LOS and NLOS environments at carrier frequency $f_c = 28$ GHz and $f_c = 73$ GHz are given in Table \ref{C2:NYCModelParameters} according to the reference \cite{AkdenizChannelMmWave}.

\begin{table}[t]
\begin{center}
\small
\caption{NYC Empirical Model Parameters}
  \begin{tabular}{lll}
  \hline
                           & $f_c$ = 28 GHz                                       & $f_c$ = 73 GHz                                        \\ \hline
     \multirow{2}{*}{NLOS} & $\alpha_{\mathrm{L}} = 72.0$, $\beta_{\mathrm{L}} = 2.92$, $\sigma_{\mathrm{L}} = 8.7$ dB   & $\alpha_{\mathrm{L}} = 86.6$, $\beta_{\mathrm{L}} = 2.45$, $\sigma_{\mathrm{L}} = 8.0$ dB (\dag) \\
                           &                                                      & $\alpha_{\mathrm{L}} = 82.7$, $\beta_{\mathrm{L}} = 2.69$, $\sigma_{\mathrm{L}} = 7.7$ dB (\ddag) \\ \hline
     LOS                   & $\alpha_{\mathrm{L}} = 61.4$, $\beta_{\mathrm{L}} = 2$, $\sigma_{\mathrm{L}} = 5.8$ dB      & $\alpha_{\mathrm{L}} = 69.8$, $\beta_{\mathrm{L}} = 2$, $\sigma_{\mathrm{L}} = 5.8$ dB       \\ \hline
  \end{tabular}
  \label{C2:NYCModelParameters}
  \end{center}
\vspace{-8mm}
\begin{flushleft}
\footnotesize
\hspace{6mm}(\dag): Parameters for the 2 m-RX-height data and 4.06 m-RX-height data combined.\\
\hspace{6mm}(\ddag): Parameters for the 2 m-RX-height data only.\\
\end{flushleft}
\end{table}

\subsection{Small-scale Fading Models}
The large-scale fading model parameters are associated with the macro-scattering environment and change with time relatively slowly\cite{AkdenizChannelMmWave}.
In contrast, small-scale fading models are mainly built to characterize the constructive and destructive addition of different multi-path components introduced by the channel, which is usually to show rapid fluctuations in the signal's envelope.
The channel fading gain resulting from the effects of path loss, shadowing, and multi-path are compared in Figure \ref{C2:LargeScaleVersusSmallScale}.

\begin{figure}[t]
\centering
\includegraphics[width=4.5in]{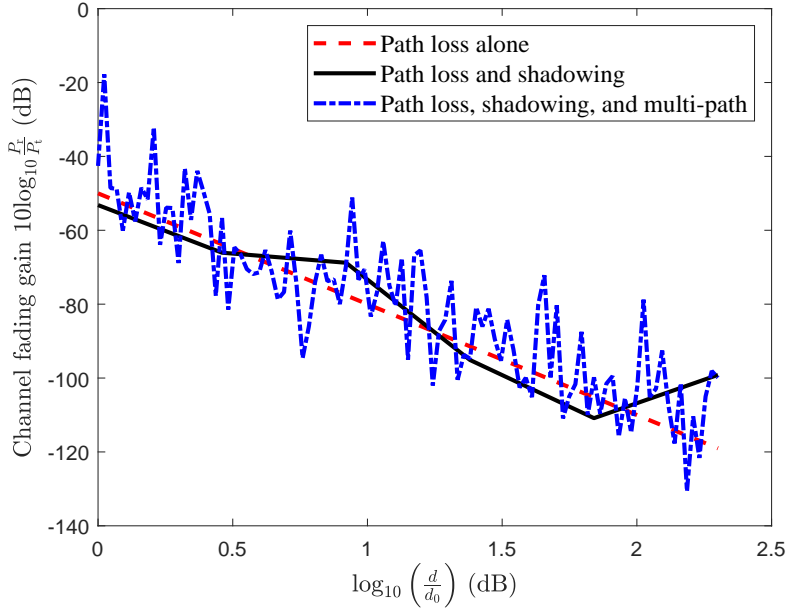}
\caption{An illustration of channel fading gain versus propagation distance resulting from large-scale fading and small-scale fading. For large-scaling fading, the log distance path loss with shadowing model in \eqref{C2:LogDistanceChannelModel} is adopted with $d_0 = 10$ m, ${P_{\mathrm{L}}}\left(d_0\right) = 50$ dB, $n_{\mathrm{L}} = 3$, and $\sigma_{\mathrm{L}} = 8$ dB. The Rayleigh fading model in \eqref{C2:RayleighFading} with $\sigma_l^2 = 1$ is adopted for small-scale fading.}%
\label{C2:LargeScaleVersusSmallScale}%
\end{figure}

\noindent\underline{\textbf{Rayleigh Fading Model}}

In rich-scattering environment with many small reflectors, the $l$-th channel filter tap $h_l[m]$ at time stamp $m$ is the sum of a large number of small independent random variables.
As a result, according to the Central Limit Theorem, it can reasonably be modeled as a zero-mean Gaussian random variable.
With the assumed Gaussian distribution of channel coefficients, the magnitude $\left|h_l[m]\right|$ of the $l$-th tap can be modeled by a Rayleigh random variable, i.e.,
\begin{equation}\label{C2:RayleighFading}
  \left|h_l[m]\right| \sim \frac{x}{{\sigma _l^2}}\exp \left\{ { - \frac{{{x^2}}}{{2\sigma _l^2}}} \right\}, x\ge 0,
\end{equation}
and the squared magnitude $\left|h_l[m]\right|^2$ is exponentially distributed, i.e.,
\begin{equation}\label{C2:ExponentFading}
  \left|h_l[m]\right|^2 \sim  \frac{1}{{\sigma _l^2}}\exp \left\{ { - \frac{{{x}}}{{2\sigma _l^2}}} \right\}, x\ge 0,
\end{equation}
where ${\sigma _l^2}$ denotes the variance of the complex channel coefficient of the $l$-th path.

\noindent\underline{\textbf{Rician Fading Model}}

\begin{figure}[ptb]
\centering
\includegraphics[width=4.5in]{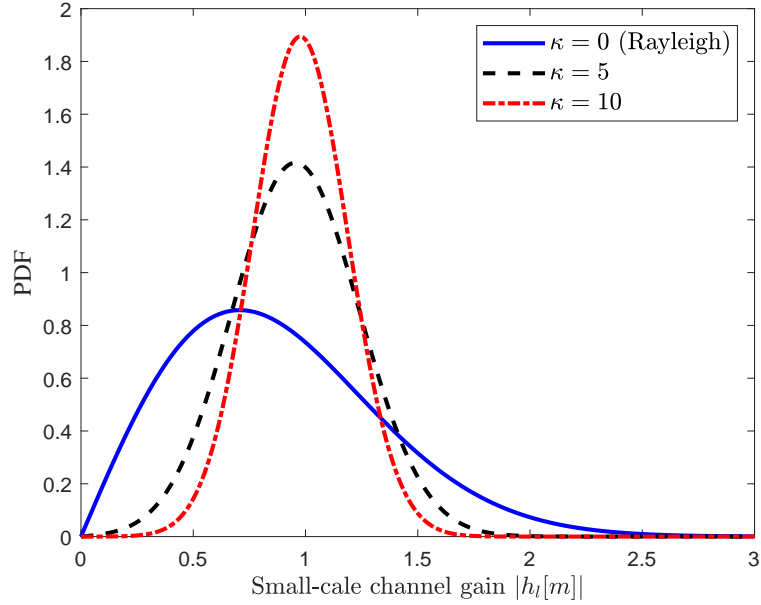}
\caption{PDF comparison of Rayleigh and Rician distributions with different Rician K-factor, $\kappa$, and $\sigma _l^2 = 1$.}%
\label{C2:RayleighRician}%
\end{figure}

This is a frequently used alternative channel model in which the LOS path dominates the signal propagation compared to other NLOS paths.
In this case, the $l$-th channel filter tap $h_l[m]$ at time $m$ can be modeled as
\begin{equation}\label{C2:RicianFading}
  h_l[m]  =  \sqrt{\frac{\kappa}{\kappa+1}}\sigma _l e^{j\theta} + \sqrt{\frac{1}{\kappa+1}}\mathcal{CN}(0,\sigma _l^2),
\end{equation}
where the first term corresponding to the LOS path arriving with a uniform distributed phase $\theta \in [0, 2\pi]$ and the second term corresponding to the aggregation of the large number of reflected and scattered NLOS paths, independent of $\theta$.
The parameter $\kappa$ (so-called Rician K-factor) is the ratio of the energy in the LOS path to the energy in the scattered paths.
The magnitude $\left|h_l[m]\right|$ follows a Rician
distribution.
Thus, this channel model is referred to as the Rician fading model.
Note that it is often a better channel fading model than the Rayleigh model since practical environments are usually not that rich-scattering.
Figure \ref{C2:RayleighRician} compares the probability density functions (PDFs) of the small-scale channel gain under Rayleigh and Rician fading models.
We can observe that when $\kappa = 0$, Rician fading model is equivalent to Rayleigh fading model.
In addition, the larger $\kappa$ is, the more deterministic the channel is.

\noindent\underline{\textbf{Saleh-Valenzuela Model}}

The above Rayleigh and Rician models are commonly adopted to model the small-scale fading channels in microwave communication systems, depending on the scattering environments.
However, in mmWave communications, an important feature is channel sparsity in the sense that majority of reflected multi-path components do not carry sufficient electromagnetic energy\cite{Shafi2018microwave}, due to the high attenuation in reflection\cite{Rappaport2013}.
Therefore, a Saleh-Valenzuela model has been widely adopted in the literature to model the sparse channel in mmWave communication systems\cite{XinyuGaoLetter,Brady2013,WangBeamSpace2017}.
In particular, considering an mmWave communication system with $M_{\mathrm{t}}$ transmit antennas and $M_{\mathrm{r}}$ receiving antennas, the channel matrix, ${{\bf{H}}} \in \mathbb{C}^{{ M_{\mathrm{r}} \times M_{\mathrm{t}}}}$, can be modelled as
\begin{equation}\label{C2:NarrowBandModel}
{{\bf{H}}} = {a _{0}}{{\bf{H}}_{0}} + \sum\limits_{l = 1}^L {{a _{l}}{{\bf{H}}_{l}}},
\end{equation}
where ${\mathbf{H}}_{0} \in \mathbb{C}^{ M_{\mathrm{r}} \times M_{\mathrm{t}} }$ is the LOS channel matrix with ${a _{0}}$ denoting the LOS complex path gain, ${\mathbf{H}}_{l} \in \mathbb{C}^{ M_{\mathrm{r}} \times M_{\mathrm{t}} }$ denotes the $l$-th NLOS path channel matrix with ${a _{l}}$ denoting the corresponding NLOS complex path gains, $1 \le l \le L$, and $L$ denoting the total number of NLOS paths.
In particular, ${\mathbf{H}}_{l}$, $\forall l \in \{0,\ldots,L\}$, is given by
\begin{equation}
{\mathbf{H}}_{l} = {\mathbf{a}}_{\mathrm{r}} \left(  \phi _{l} \right){\mathbf{a}}_{\mathrm{t}}^{\mathrm{H}}\left( \theta _{l} \right),
\end{equation}
with
\begin{equation}
{\mathbf{a}}_{\mathrm{t}}\left( \theta _{l} \right) = \left[ {1,{e^{ - j2\pi  \cos \theta _{l} }}, \ldots ,{e^{ - j{\left({M_{{\mathrm{t}}}} - 1\right)}\pi  \cos \theta _{l} }}}\right]^{\mathrm{T}},
\end{equation}
denoting the array response vector \cite{van2002optimum} for the angle-of-departure (AOD) of the $l$-th path ${\theta _{l}}$ at the transmitter and
\begin{equation}
{\mathbf{a}}_{\mathrm{r}}\left( \phi _{l} \right) = \left[ 1, {e^{ - j2\pi \cos \phi _{l} }}, \ldots ,{e^{ - j{\left({M_{{\mathrm{r}}}} - 1\right)}\pi \cos \phi _{l} }} \right] ^ {\mathrm{T}},
\end{equation}
denoting the array response vector for the angle-of-arrival (AOA) of the $l$-th path ${\phi _{l}}$ at the receiver.

\subsection{Blockage Model for MmWave Communications}
One important feature of mmWave communications is their vulnerability to blockage due to its higher penetration loss and deficiency of diffraction in mmWave frequency band\cite{Andrews2017}.
To capture the blockage effects in mmWave systems, a probabilistic model was proposed in \cite{Andrews2017} and has been widely adopted in the literature \cite{Ding2017RandomBeamforming,WeiBeamWidthControl}.
The probability for an arbitrary link with a distance $d$ having a LOS link is given by
\begin{equation}
{{\mathrm{P}}_{{\mathrm{LOS}}}}\left( d \right) = {e^{{-d}/\varrho}},
\end{equation}
where $\varrho$ is determined by the building density and the shape of the buildings, etc..

\section{Fundamental Concepts of NOMA}
In this section, we first introduce two key enabling technologies for NOMA, including superposition coding (SC) and SIC.
Then, the basic system models and concepts for downlink and uplink NOMA are presented.
In the last part, the multi-user capacity regions for downlink and uplink NOMA are provided and discussed.

\subsection{Superposition Coding and Successive Interference Cancelation}
The concept of NOMA was originally proposed for downlink multi-user communications or so called broadcast channel\cite{CoverBC}, where the two fundamental building blocks are SC and SIC.
However, NOMA inherently exists in uplink communications, since the electromagnetic waves are naturally superimposed at a receiving base station (BS) and the implementation of SIC is more affordable at the BS than at user terminals.
Therefore, the NOMA concept was recently generalized to uplink multi-user communications\cite{Al-Imari2014,Al-Imari2015,Yang2016NOMA,wei2017fairness}, i.e., multiple access channel, which will be detailed in the latter subsection.
In this subsection, we introduce the concepts of SC and SIC in the context of downlink communications.

At the transmitter side, the coded signals intended for different users are superimposed with different power levels.
Without loss of generality, the channel gains of users are assumed to be w.r.t. a particular ordering.
In the literature\cite{Ding2014, WeiSurvey2016, Liu2016}, the user with a better channel quality is usually called a \emph{strong user}, while the user with a worse channel quality is called a \emph{weak user}.
The transmit powers for the strong and weak users are allocated according to their channel gain order.
To provide fairness and to facilitate the SIC decoding, the transmitter usually allocate more power to the weak user with a poor channel condition\footnote{However, allocating a higher power to the user with the worse channel is not necessarily required in NOMA, as shown in \cite{Vaezi2018non}, especially when there is an explicit minimum data rate requirement for each user.}.

At the receiver, SIC decoding is employed to exploit the heterogeneity in channel gains and transmit powers.
For an SIC receiver, it first decodes other users' signals one by one based on a decoding order before decoding its own signal.
The decoding order depends on the order of the received signal powers, where the user with a higher received power is decoded first during SIC decoding.
Upon finishing decoding one user's signal, the receiver subtracts it from its received signal.
As a result, the interference can be successively removed and the achievable data rate is improved.
For downlink communications, users with better channel conditions can perform SIC to mitigate the
inter-user interference (IUI).
Due to its advantages, SIC has been employed in practical systems such as CDMA \cite{PatelCDMA} and vertical-bell laboratories layered space-time (V-BLAST) \cite{WolnianskyVBLAST}.

Compared to OMA schemes, SC combined with SIC can provide a comparable rate to the strong user, while achieving close to the single-user bound for the weak user.
Intuitively, the strong user, being at a high signal-to-noise ratio (SNR), is degree-of-freedom (DoF) limited and superposition coding allows it to use the full DoF of the channel while being
allocated only a small amount of transmit power, thus causing small amount of interference to the weak user.
In contrast, an orthogonal scheme has to allocate a significant fraction of the degrees of freedom to the weak user to achieve near single-user bound performance, and this causes a large degradation in the performance of the strong user.
In fact, it has been proved that NOMA with SC and SIC is capable of achieving the capacities of general degraded broadcast channels \cite{Cover:2006:EIT:1146355}.

\subsection{Downlink NOMA}

\begin{figure}[t]
\centering
\includegraphics[width=4.5in]{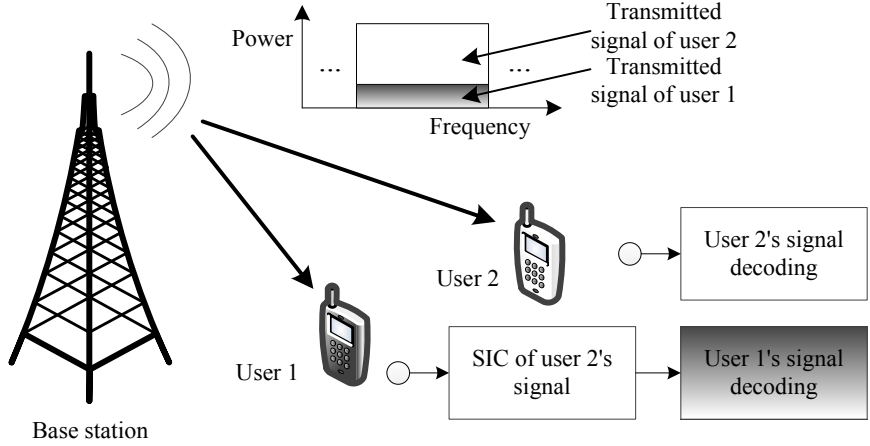}
\caption{The downlink NOMA system model with one BS and two users.}%
\label{C2:DownlinkNOMA}%
\end{figure}

In this chapter, to facilitate the presentation for the basic concepts of NOMA, we consider a simple single-carrier two-user downlink NOMA system.
The generalization to the case of multi-carrier, multi-antenna, and multi-user communications will be presented in the following technical chapters when necessary.
The generic system model for downlink NOMA is illustrated in Figure \ref{C2:DownlinkNOMA} with one BS and two users.
The BS transmits the messages of both user 1 and user 2, i.e., $s_1$ and $s_2$, with different transmit powers $p_1$ and $p_2$, on the same frequency band, respectively.
The corresponding transmitted signal is represented by
\begin{equation}\label{C2:TWO_UE_TRANSMISSION}
  x = \sqrt {{p_1}} {s_1} + \sqrt {{p_2}} {s_2},
\end{equation}
where transmit power is constrained by $p_1+p_2 = 1$.
The received signal at user $k$ is given by
\begin{equation}\label{C2:TWO_UE_RECEIVE1}
  {y_k} = {h_k}x + {n_k}, k \in \{1,2\},
\end{equation}
where $h_k$ denotes the complex channel coefficient between the BS and user $k$, including the joint effect of large-scale fading and small-scale fading. Variable ${n_k}$ denotes the additive white Gaussian noise (AWGN) at user $k$ with a noise power of $\sigma _k^2$, i.e., ${n_k} \sim \mathcal{CN}\left( {0,\;\sigma _k^2} \right)$.
We assume that user 1 is the strong user with a better channel quality, while user 2 is the weak user with a worse channel quality, i.e., $\frac{{{{\left| {{h_1}} \right|}^2}}}{{\sigma _1^2}} \ge \frac{{{{\left| {{h_2}} \right|}^2}}}{{\sigma _2^2}}$.

In downlink NOMA systems, the SIC decoding is implemented at the user side.
The optimal SIC decoding order is in the descending order of channel gains normalized by noise.
It means that user 1 decodes ${{s}_2}$ first and removes the IUI of user 2 by subtracting ${{s}_2}$ from the received signal ${y_1}$ before decoding its own message ${{s}_1}$.
On the other hand, user 2 does not perform interference cancelation and directly decodes its own message ${{s}_2}$ with interference from user 1.
As a result, we can easily obtain the following achievable rates \cite{Ding2014}:
\begin{align}
{R_{1,2}} &= {\log _2}\left( {1 + \frac{{{p_2}{{\left| {{h_1}} \right|}^2}}}{{{p_1}{{\left| {{h_1}} \right|}^2} + {\sigma_1 ^2}}}} \right), \label{C2:AchievableRateDownlink1}\\
{R_{1}} &= {\log _2}\left( {1 + \frac{{{p_1}{{\left| {{h_1}} \right|}^2}}}{{{\sigma_1 ^2}}}} \right),\; \text{and} \label{C2:AchievableRateDownlink2}\\
{R_{2}} &= {\log _2}\left( {1 + \frac{{{p_2}{{\left| {{h_2}} \right|}^2}}}{{{p_1}{{\left| {{h_2}} \right|}^2} + {\sigma_2^2}}}} \right),\label{C2:AchievableRateDownlink3}
\end{align}
where ${R_{1,2}}$ denotes the achievable rate at user 1 to decode the message of user 2, ${R_{1}}$ denotes the achievable rate for user 1 to decode its own message after decoding and subtracting the signal of user 2, and ${R_{2}}$ denotes the achievable rate for user 2 to decode its own message.
Note that the necessary condition for achieving the data rate ${R_{1}}$ in \eqref{C2:AchievableRateDownlink2} is
\begin{equation}\label{C2:SICDecodingCondition}
{R_{1,2}} \ge {R_{2}},
\end{equation}
which implies the possibility of successful interference cancelation at user 1.
Furthermore, we can observe that for any arbitrary power allocation $p_1$ and $p_2$, the condition in \eqref{C2:SICDecodingCondition} can always be satisfied owing to $\frac{{{{\left| {{h_1}} \right|}^2}}}{{\sigma _1^2}} \ge \frac{{{{\left| {{h_2}} \right|}^2}}}{{\sigma _2^2}}$.

\begin{figure}[t]
\centering
\includegraphics[width=4.5in]{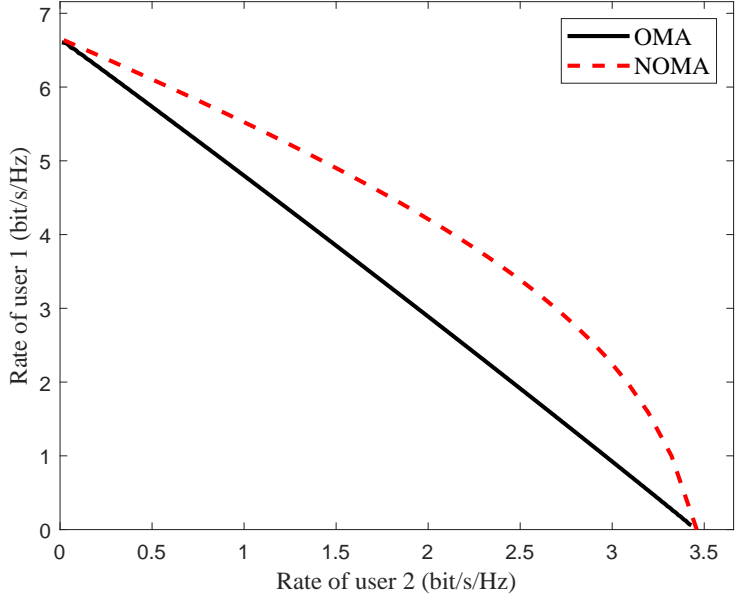}
\caption{The multi-user achievable rate region for downlink NOMA with one BS and two users. User 1 is the strong user with $\frac{{{{\left| {{h_1}} \right|}^2}}}{{\sigma _1^2}} = 100$, while user 2 is the weak user with $\frac{{{{\left| {{h_2}} \right|}^2}}}{{\sigma _2^2}} = 10$.}%
\label{C2:NOMA_over_OMA}%
\end{figure}

Note that SIC is not able to eliminate the interference caused by user 1 for user 2.
Fortunately, if the power allocated to user 2 is larger than that to user 1 in the aggregate received signal ${y_2}$, it does not introduce much performance degradation compared to allocating user 2 on this frequency band exclusively.
The achievable rate region of downlink NOMA is illustrated in Figure \ref{C2:NOMA_over_OMA} in comparison with that of OMA.
We can observe that the achievable rate region of OMA is only a subset of that of NOMA.
As a result, NOMA provides a higher flexibility in resource allocation for improving the system spectral efficiency, especially considering the diverse QoS requirements of users.

\subsection{Uplink NOMA}

\begin{figure}[t]
\centering
\includegraphics[width=4.5in]{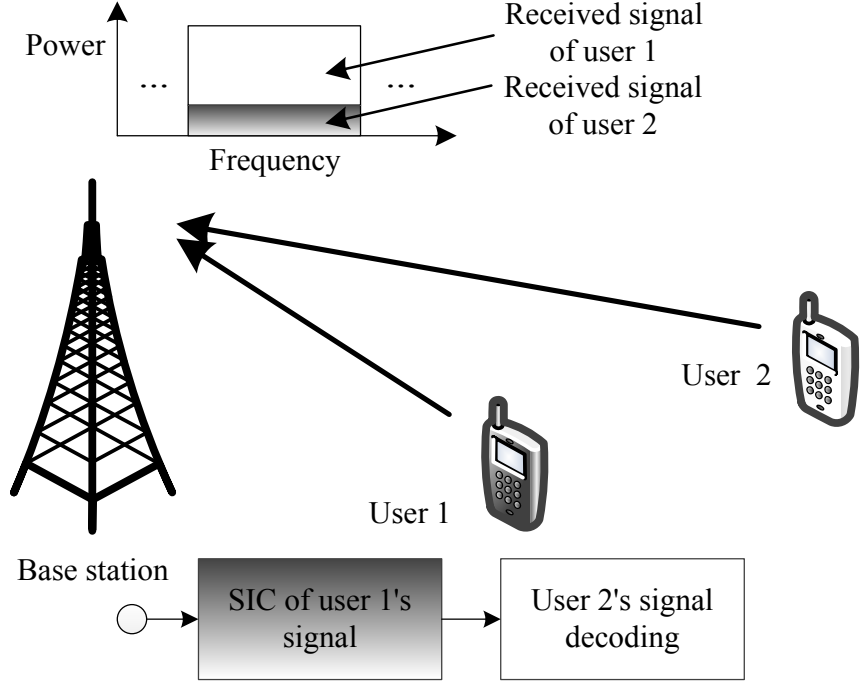}
\caption{The uplink NOMA system model with one BS and two users.}%
\label{C2:UplinkNOMA}%
\end{figure}

The generic system model for uplink NOMA is illustrated in Figure \ref{C2:UplinkNOMA} with one BS and two users.
Both users are transmitting their messages within the same frequency band with the same transmit power $p$.
The received signal at the BS is given by
\begin{equation}\label{C2:SystemModelUplinkNOMA}
  y = \sqrt{p} h_1 s_1 + \sqrt{p} h_2 s_2 + n,
\end{equation}
where $h_k \in \mathbb{C}$ denotes the channel coefficient between the BS and user $k$, $s_k$ denotes the modulated symbol of user $k$, i.e., $k \in \{1,2\}$, and $n\sim{\cal CN}(0,\sigma^2)$ denotes the AWGN at the BS.
Without loss of generality, we assume that user 1 is the strong user and user 2 is the weak user, i.e.,$\left|h_1\right|^2 \ge \left|h_2\right|^2$.
Due to their distinctive channel gains, the received signal power of user 1 is higher than that of the user 2, as shown in Figure \ref{C2:UplinkNOMA}.

In uplink NOMA, the SIC decoding is performed at the BS to retrieve the messages $s_1$ and $s_2$ from the superimposed signal $y$.
In particular, the BS first decodes the message of user 1 and then subtracts $s_1$ from the superimposed signal $y$.
Then, the BS can decode the message of user 2 without IUI.
As a result, we can obtain the individual data rates for user 1 and user 2 as follows:
\begin{align}\label{C2:UplinkNOMARates}
  R_{1} &= {\log _2}\left( 1 + \frac{p \left| {h_1} \right|^2}{ p {\left| {h_2} \right|}^2 + \sigma ^2}\right) \;\text{and} \\
  R_{2} &= {\log _2}\left( 1 + \frac{p \left| {h_2} \right|^2}{ \sigma ^2}\right)
\end{align}

We note that, with the same transmit power, the optimal SIC decoding order of uplink NOMA is the ascending order of channel gains, which is opposite to that of downlink NOMA.
The same transmit power constraint for all uplink NOMA users can be relaxed if employing a closed-loop uplink power control to satisfy the diverse quality-of-service (QoS) requirement of each user.
In addition, compared to downlink NOMA, uplink NOMA requires a tight time synchronization among users and a closed-loop power control, which both occur a system overhead.
However, uplink NOMA is appealing as advanced signal detection/decoding algorithms, e.g. SIC and minimum mean square error SIC (MMSE-SIC), are more affordable at the BS rather than at the user equipments.

\begin{figure}[t]
\centering
\includegraphics[width=4.5in]{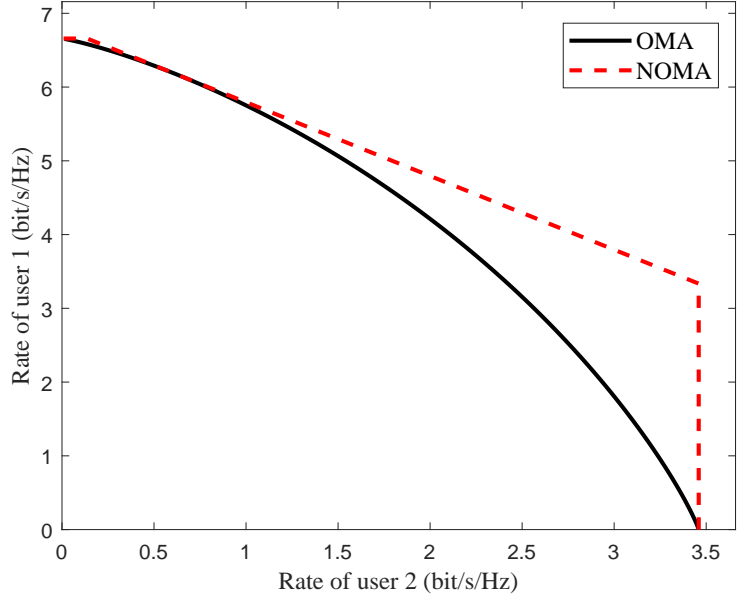}
\caption{The multi-user achievable rate region for uplink NOMA with one BS and two users. User 1 is the strong user with $\frac{{{{\left| {{h_1}} \right|}^2}}}{{\sigma ^2}} = 100$, while user 2 is the weak user with $\frac{{{{\left| {{h_2}} \right|}^2}}}{{\sigma ^2}} = 10$.}%
\label{C2:NOMA_over_OMA_Uplink}%
\end{figure}

The achievable rate region of uplink NOMA is illustrated in Figure \ref{C2:NOMA_over_OMA_Uplink} in comparison with that of OMA.
Similar to downlink NOMA, the achievable rate region of OMA is also a subset of that of NOMA, which provides a higher flexibility in resource allocation design.

\section{Spectral Efficiency and Energy Efficiency}
In this section, we introduce the basic definitions of spectral efficiency (SE) and energy efficiency (EE) adopted in this thesis.
The fundamental trade-off between SE and EE is then discussed.
\subsection{Spectral Efficiency}
Improving the spectral efficiency has been a centric and continuous research topic in the filed of wireless communications for decades, due to the scarce and expensive spectrum resource \cite{Kwan_AF_2010,DerrickOFDMARelay,li2018joint}.
The SE quantifies how many information bits can be delivered within a unit of time and system bandwidth (bit/s/Hz).
Therefore, based on the celebrated formula attained by Shannon\cite{Shannon1948mathematical}, the SE is defined as
\begin{equation}\label{C2:SE}
  \mathrm{SE} = {{{\log }_2}\left( {1 + \frac{p \left|h\right|^2}{\sigma^2}} \right)} (\mathrm{bit/s/Hz}),
\end{equation}
where $\sigma^2$ denotes the noise power, $p$ represents the transmit power, and $\left|h\right|^2$ denotes the channel gain.

\subsection{Energy Efficiency}
The energy efficiency has emerged as a new prominent and fundamental figure of merit for wireless communication systems as the energy consumptions and related environment problems become a major issue currently.
In general, EE is essentially in the form of a benefit-cost ratio\footnote{We need to note that there are alternative types of EE definitions such as from facility level, equipment level, and network level, respectively \cite{HasanGreen}, depending on the design of particular systems.} to evaluate the amount of data delivered by utilizing the limited system energy resource (bits/Joule).
The EE is defined as\cite{DerrickEEOFDMA,DerrickEERobust,DerrickEESWIPT}
\begin{equation}\label{C2:EE}
\mathrm{EE} = {\frac{W\cdot\mathrm{SE}}{{\delta {p} + {P_{{\mathrm{C}}}}}}},
\end{equation}
where ${P_{\mathrm{C}}}$ denotes the static circuit power consumption associated with communications and $\delta>1$ captures the inefficiency of the transmit power amplifier\footnote{Here, we assume that the power amplifier operates in its linear region and the hardware power consumption ${P_{\mathrm{C}}}$ is a constant.}.

\begin{figure}[ptb]
	\centering
	\subfigure[EE (bit/Joule) versus transmit power $p$ (dBm).]
	{\label{C2:EEVsPower} 
		\includegraphics[width=4.5in]{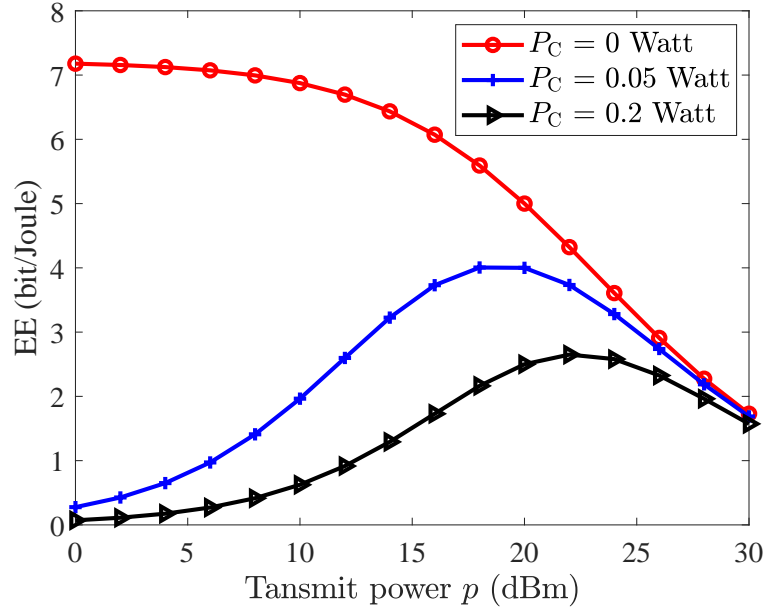}}
	\subfigure[EE (bit/Joule) versus SE (bit/s/Hz).]
	{\label{C2:EEVsSE} 
		\includegraphics[width=4.5in]{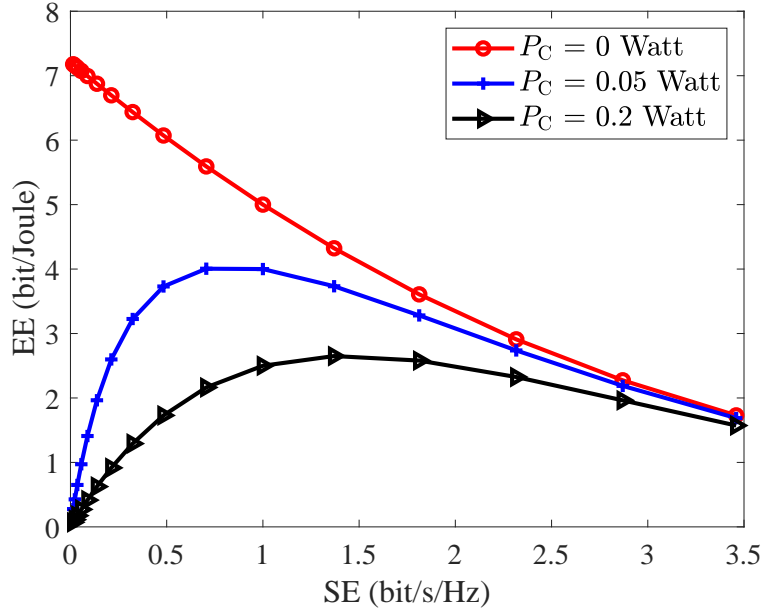}}
	\caption{An illustration of the trade-off between EE, transmit power, and SE. The simulation setups are $W = 1$ Hz,  $\frac{\left|h\right|^2}{N_0} = 10$ dB, $P_{\mathrm{C}} = [0, 0.05, 0.2]$ Watt, and $\delta = 2$.}
	\label{Tradeoff}
\end{figure}

\subsection{The Trade-off Between Energy Efficiency and Spectral Efficiency}
Based on \eqref{C2:SE}, it is clear that the SE monotonically increases with an increased transmit power ${p}$, but with a diminishing return due to the logarithmic nature of the achievable rate function.
On the other hand, the denominator of \eqref{C2:EE} is a linear function of $p$.
As a result, there is a non-trivial trade-off between the EE and SE which should be taken into account for resource allocation algorithm design in wireless communication systems.

Figure \ref{Tradeoff} illustrates the trade-off between EE, transmit power, and SE.
When the total circuit power consumption is negligibly small, i.e., $P_{\mathrm{C}} = 0$ Watt, the EE is a monotonically decreasing function of both the transmit power and the SE, as illustrated in Figure \ref{C2:EEVsPower} and Figure \ref{C2:EEVsSE}, respectively.
In other words, transmission with an arbitrarily low power, i.e., $p \to 0$, is the optimal operation point for maximizing the system EE and the resulting system EE is $\mathop {\lim }\limits_{p \to 0,P_{\mathrm{C}} = 0} \mathrm{EE}= \frac{ \left|h\right|^2}{\delta N_0}$.
In addition, we can observe that when $P_{\mathrm{C}} > 0$, the system EE first increases with the transmit power and then decreases with it.
In fact, in the low SNR regime, the EE is mainly limited by the fixed circuit power consumption ${P_{{\mathrm{C}}}}$ and SE scales almost linearly w.r.t. the transmit power $p$.
Hence, increasing the transmit power can effectively increase both the SE and the EE.
On the other hand, in the high SNR regime, the transmit power, $p$, dominates the total power consumption and there is only a marginal gain in SE when increasing the transmit power.
As a result, after reaching the maximum system EE, as shown in Figure \ref{C2:EEVsPower}, further increasing the transmit power decreases the system EE.
Furthermore, in Figure \ref{C2:EEVsPower}, we can observe that with increasing the circuit power consumption, the optimal operation point is pushed towards the high SNR regime.
It is due to the fact that the larger the circuit power consumption, the higher transmit power is needed to outweigh the impact of the circuit power consumption on the EE.
As a result, for a practical case of $P_{\mathrm{C}} > 0$, there is always a non-trivial trade-off between EE and SE.
Hence, finding the optimal operation point to maximize the system EE has attracted significant attention in the literature in the past few years\cite{DerrickLimitedBackhaul,NgDUALII}.

\section{Optimization and Resource Allocation Designs}

\begin{figure}[ptb]
\centering
\includegraphics[width=4.5in]{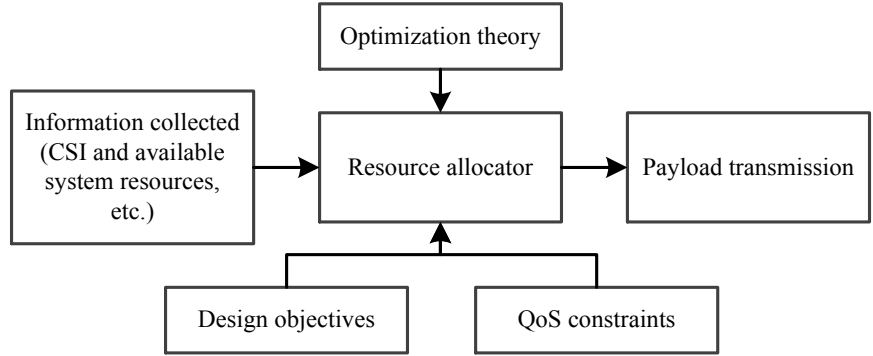}
\caption{Optimization-based resource allocation design framework.}%
\label{C2:ResourceAllocationDesign}%
\end{figure}

In wireless communications, resource allocation is the concept of making the best use of limited communication resources based on the information available at the resource allocator to improve the system performance, as shown in Figure \ref{C2:ResourceAllocationDesign}.
In general, the system resources are the transmit power, the available bandwidth and time, as well as the available space if multiple antennas are employed at terminals.
The available information at the resource allocator usually includes channel state information (CSI) and the available system resources.
In particular, the CSI can be obtained from user feedback in frequency division duplex (FDD) systems or from uplink channel estimation in time division duplex (TDD) systems.
Besides, the QoS requirements, such as the minimum data rate requirement and outage probability requirement, act as constraints in resource allocation optimization framework.
The design objectives are usually maximizing the system sum-rate\cite{Kwan_AF_2010,DerrickOFDMARelay,li2018joint}, maximizing the system energy efficiency\cite{QingqingEE,WuEESE2018,DerrickEEOFDMA,DerrickEESWIPT,DerrickEERobust,DerrickLimitedBackhaul,NgDUALII,cai2019energy}, or minimizing the system power consumption\cite{LengPowerEFFICIENT,DerrickPowerEFFCIEINT2015,DerrickFD2016,Wei2016NOMA,WeiTCOM2017,Sun2016MultipleObj,Boshkovska2018}.
Specifically, resource allocation designs rely on the application of the optimization theory to optimize the system performance taking into account various QoS constraints.
Finally, the transmitter will transmit the payload according to its obtained results of resource allocation designs, i.e., power allocation, user scheduling, and rate allocation, etc..

In the next section, some classic problem formulations used for resource allocation designs in wireless communications, including the commonly adopted design objectives and QoS constraints, are presented.
Optimization theories, including convex \cite{Boyd2004} and non-convex optimization\cite{horst2013global}, are the key tools being used in resource allocation designs.
Interested readers are referred to \cite{Boyd2004,horst2013global} for more details.
\subsection{Design Objectives}
In the following, we briefly introduce three kinds of design objectives commonly used in the literature.

\noindent\underline{\textbf{Weighted System Sum-rate Maximization}}

Considering a communication system with $K$ users, the objective function (utility function) for maximizing the weighted system sum-rate is given by
\begin{equation}\label{C2:Objective1}
  {\mathcal{U}_1}\left( \mathbf{x} \right) = \sum_{k=1}^{K} \omega_k R_k \left( \mathbf{x} \right),
\end{equation}
where $\mathbf{x}$ denotes the optimization variable and it may include the variables for power allocation, user scheduling, and rate allocation, etc..
The individual rate of user $k$, $R_k \left( \mathbf{x} \right)$, is a function of the optimization variables.
The weights $\omega_k \ge 0$ with $\sum_{k=1}^{K} \omega_k  = 1$ are non-negative constants provided by the upper layers, which allow the resource allocator to give different priorities to different users and to enforce certain
notions of fairness.
When $\omega_1 = \omega_2 =,\ldots, = \omega_K = \frac{1}{K}$, it degenerates to the design objective of maximizing the system sum-rate which does not take into account fairness in resource allocation design.

\noindent\underline{\textbf{System Power Consumption Minimization}}

If the resource allocation design is applied to a communication system which aims to minimize the system radiation power consumption, the design objective is given by
\begin{equation}\label{C2:Objective2}
  {\mathcal{U}_2}\left( \mathbf{x} \right) = \sum_{k=1}^{K} {{p_k}}\left( \mathbf{x} \right),
\end{equation}
where ${{p_k}}\left( \mathbf{x} \right)$ represents the transmit power for user $k$ which is a function of other optimization variables, such as power allocation for other users, user scheduling, and rate allocation, etc..

\noindent\underline{\textbf{System Energy Efficiency Maximization}}

When the resource allocation is designed to maximize the system energy efficiency, the objective function can be expressed as
\begin{equation}\label{C2:Objective3}
  {\mathcal{U}_3}\left( \mathbf{x} \right) = \frac{\sum_{k=1}^{K} \omega_k R_k \left( \mathbf{x} \right)}{{ \sum\limits_{k = 1}^K \left(\delta{{p_k}}\left( \mathbf{x} \right)  + {P_{\mathrm{C},k}}\right)}},
\end{equation}
where the numerator is actually the weighted system sum-rate ${\mathcal{U}_1}\left( \mathbf{x} \right)$.
In the denominator, ${P_{\mathrm{C},k}}$, denotes the static circuit power consumption associated with user $k$.

\subsection{QoS Constraints}
To satisfy diverse QoS requirements for different applications, different types of QoS constraints can be incorporated in the problem formulation of resource allocation designs.
In fact, QoS constraints combined with the system resource limitations usually span the feasible solution set $\mathcal{X}$ for the optimization variable, i.e., $\mathbf{x} \in \mathcal{X}$.
In general, two kinds of QoS constraints commonly adopted in the literature are introduced in the following.

\noindent\underline{\textbf{Minimum Data Rate Requirement}}

As the name implies, the minimum data rate requirement is the constant minimum data rate $R_k^{\mathrm{min}}$ to support the application of communication user $k$, i.e.,
\begin{equation}\label{C2:QoSConstrain}
  R_k \left( \mathbf{x} \right) \ge R_k^{\mathrm{min}},
\end{equation}
where $R_k^{\mathrm{min}}$ is usually preset and obtained during the information collection phase in Figure \ref{C2:ResourceAllocationDesign}.
The minimum data rate requirement is imposed for resource allocation design to guarantee the QoS of each user.
In particular, for the design objective in \eqref{C2:Objective1} with $\omega_1 = \omega_2 =,\ldots, = \omega_K = \frac{1}{K}$, according the water-filling principle\cite{Tse2005}, it results in the optimal utilization of the system resources from the channel capacity point of view.
However, users with poor channel conditions may suffer from starvation since they are rarely selected for transmission.
Hence, introducing a minimum data rate requirement can effectively balance the system performance and each user's QoS requirement.

\noindent\underline{\textbf{Outage Probability Requirement}}

For a communication system with imperfect CSI at transmitter side, there exists a non-zero probability that the scheduled data rate exceeds the instantaneous channel capacity.
In this case, even applying powerful error correction coding cannot prevent packet error and thus an outage occurs.
As a result, the outage probability for the communication link of user $k$ can be defined as
\begin{equation}\label{C2:OutagePro}
  \mathrm{P}_{\mathrm{out},k}\left(\mathbf{x}\right) = \mathrm{Pr}\{C_k\left(\mathbf{x}\right) < R_k\left(\mathbf{x}\right)\},
\end{equation}
where $R_k\left(\mathbf{x}\right)$ is the allocated data rate for user $k$ and $C_k\left(\mathbf{x}\right)$ is the channel capacity of the communication link for user $k$.
They both depend on the resource allocation policy and thus the outage probability is a function of the resource allocation variables $\mathbf{x}$.
The outage probability constraint has been employed for resource allocation design to enhance the communication reliability\cite{ZhangStochastic,WeiTCOM2017}.
In particular, the outage probability of user $k$ should be smaller than the maximum tolerable outage probability $\overline{\mathrm{P}}_{\mathrm{out},k}$, i.e.,
\begin{equation}\label{C2:OutageConstraint}
  \mathrm{P}_{\mathrm{out},k}\left(\mathbf{x}\right) \le \overline{\mathrm{P}}_{\mathrm{out},k}.
\end{equation}
This probabilistic constraint takes the CSI imperfectness into consideration and hence is very useful for robust resource allocation in wireless communications.
We note that the robust resource allocation design based on outage probability only needs to know the statistical CSI at the transmitter, rather than the instantaneous CSI.
This makes the outage-constrained resource allocation design more practical since statistical CSI is usually available based on the long term measurements and does not change so fast as the instantaneous CSI.

\section{Summary}
In this chapter, we present the background materials on wireless communications and resource allocation which are closely related to and required by the research work in this thesis.
The main points presented in this chapter are summarized as follows.
\begin{itemize}
\item We briefly introduced the channel characteristics in wireless communications, including large-scale fading and small-scale fading and presented the typical channel models adopted in the later chapters.
\item We introduced some fundamental concepts of NOMA, including SC and SIC, system models for downlink NOMA as well as uplink NOMA, and their multi-user achievable rate regions.
\item We provided some basic knowledge and definitions on spectral efficiency and energy efficiency.
\item We also presented the optimization-based resource allocation design framework, including the commonly adopted design objectives and QoS constraints.
\end{itemize}

\chapter{On the Performance Gain of NOMA over OMA in Uplink Communication Systems}\label{C3:chapter3}

%
\section{Introduction}
The massive number of devices and explosive data traffic has posed challenging requirements, such as massive connectivity \cite{SunJUICE} and ultra-high spectral efficiency for future wireless networks\cite{Andrews2014,wong2017key}.
Compared to the conventional orthogonal multiple access (OMA) schemes, non-orthogonal multiple access (NOMA) allows users to simultaneously share the same resource blocks and hence it is beneficial for supporting a large number of connections for spectrally efficient communications.
In this chapter, we analyze the performance of NOMA for furthering the understanding on the performance gain of NOMA over OMA, which serves as building blocks for the specific designs in the next few chapters.
Although the existing treatises have investigated the system performance of NOMA from different perspectives, such as the outage probability \cite{Ding2014,DingSignalAlignment,ChenQuasiDegradation,DingMassive} and the ergodic sum-rate \cite{Ding2014}, in various specifically considered system setups, no unified analysis has been published to discuss the performance gain of NOMA over OMA.
To fill this gap, our work offers a unified analysis on the ergodic sum-rate gain (ESG) of NOMA over OMA in single-antenna, multi-antenna, and massive
antenna array aided systems relying on both single-cell and multi-cell deployments.
{In particular, the ESG of NOMA over OMA is defined as difference between the ergodic sum-rates of NOMA and OMA schemes. Hence, the ESG is an additive gain.}
We first focus our attention on the ESG analysis in single-cell systems and then extend our analytical results to multi-cell systems by taking into account the inter-cell interference (ICI).
We quantify the ESG of NOMA over OMA relying on practical signal reception schemes at the base station for both NOMA as well as OMA and unveil its behavior under different scenarios.
Our simulation results confirm the accuracy of our performance analyses and provide some interesting insights, which are summarized in the following:
\begin{itemize}
  \item In all the cases considered, a high ESG can be achieved by NOMA over OMA in the high-SNR regime, but the ESG vanishes in the low-SNR regime.
  \item In the single-antenna scenario, we identify two types of gains attained by NOMA and characterize their different behaviors.
      In particular, we show that the \emph{large-scale near-far gain} achieved by exploiting the distance-discrepancy between the base station and users increases with the cell size, while the \emph{small-scale fading gain} is given by an Euler-Mascheroni constant\cite{abramowitz1964handbook} of $\gamma = 0.57721$ nat/s/Hz in Rayleigh fading channels.
  \item When applying NOMA in multi-antenna systems, compared to the MIMO-OMA utilizing zero-forcing detection, we analytically show that the ESG of SISO-NOMA over SISO-OMA can be increased by $M$-fold, when the base station is equipped with $M$ antennas and serves a sufficiently large number of users $K$.
  \item Compared to MIMO-OMA utilizing a maximal ratio combining (MRC) detector, an $\left(M-1\right)$-fold degrees of freedom (DoF) gain can be achieved by MIMO-NOMA.
      In particular, the ESG in this case increases linearly with the system's SNR quantified in dB with a slope of $\left(M-1\right)$ in the high-SNR regime.
  \item For massive MIMO systems with a fixed ratio between the number of antennas, $M$, and the number of users, $K$, i.e., $\delta = \frac{M}{K}$, the ESG of \emph{m}MIMO-NOMA over \emph{m}MIMO-OMA increases linearly with both $K$ and $M$ using MRC detection.
  \item In practical multi-cell systems operating without joint cell signal processing, the ESG of NOMA over OMA is degraded due to the existence of ICI, especially for a small cell size with a dense cell deployment.
      Furthermore, no DoF gain can be achieved by NOMA in multi-cell systems due to the lack of joint multi-cell signal processing to handle the ICI.
      In other words, all the ESGs of NOMA over OMA in single-antenna, multi-antenna, and massive MIMO multi-cell systems saturate in the high-SNR regime.
  \item For both single-cell and multi-cell systems, a large cell size is always beneficial to the performance of NOMA.
      In particular, in single-cell systems, the ESG of NOMA over OMA is increased for a larger cell size due to the enhanced large-scale near-far gain.
      For multi-cell systems, a larger cell size reduces the ICI level, which prevents a severe ESG degradation.
\end{itemize}

\section{System Model}
\begin{figure}[t]
\centering
\includegraphics[width=3in]{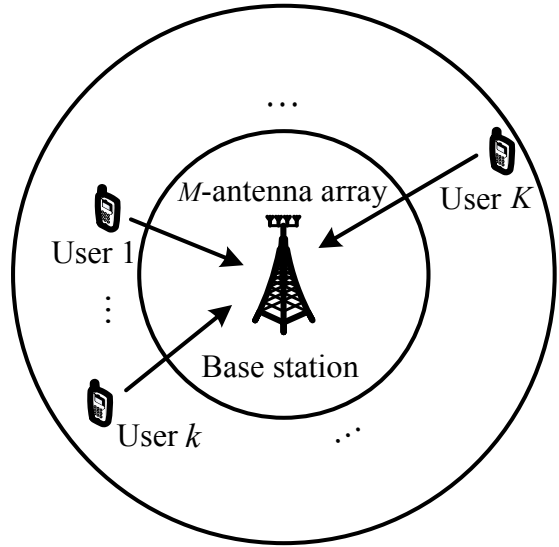}
\caption{The system model of the single-cell uplink communication with one base station and $K$ users.}
\label{C3:NOMA_Uplink_Model}
\end{figure}
\subsection{System Model}
We first consider the uplink of a single-cell\footnote{We first focus on the ESG analysis for single-cell systems, which serves as a building block for the analyses for multi-cell systems presented in Section \ref{C3:DiscussionsMulticell}.} NOMA system with a single base station (BS) supporting $K$ users, as shown in Figure \ref{C3:NOMA_Uplink_Model}.
The cell is modeled by a ring-shaped disc.
{The BS is located at the center of the ring-shaped disc with the inner radius of $D_0$ and outer radius of $D$, where all the $K$ users are scattered uniformly within the ring-shaped disc.}
%
%
For the NOMA scheme, all the $K$ users are multiplexed on the same frequency band and time slot, while for the OMA scheme, $K$ users utilize the frequency or time resources orthogonally.
Without loss of generality, we consider frequency division multiple access (FDMA) as a typical OMA scheme.

In this chapter, we consider three typical types of communication systems:
\begin{itemize}
  \item SISO-NOMA and SISO-OMA: the BS is equipped with a single-antenna ($M=1$) and all the $K$ users also have a single-antenna.
  \item MIMO-NOMA and MIMO-OMA: the BS is equipped with a multi-antenna array ($M>1$) and all the $K$ users have a single-antenna associated with $K > M$.
  \item Massive MIMO-NOMA (\emph{m}MIMO-NOMA) and massive MIMO-OMA \\(\emph{m}MIMO-OMA): the BS is equipped with a large-scale antenna array ($M \to \infty$), while all the $K$ users are equipped with a single antenna, associated with $\frac{M}{K} = \delta < 1$, i.e., {the number of antennas $M$ at the BS is smaller than the number of users $K$}, but with a fixed ratio of $\delta < 1$.
\end{itemize}

\subsection{Signal and Channel Model}
The signal received at the BS is given by
\begin{equation}\label{C3:MIMONOMASystemModel}
{\bf{y}} = \sum\limits_{k = 1}^K {{{\bf{h}}_k}} \sqrt {{p_k}} {x_k} + {\bf{v}},
\end{equation}
where ${\bf{y}}\in \mathbb{C}^{ M \times 1}$, $p_k$ denotes the power transmitted by user $k$, $x_k$ is the normalized modulated symbol of user $k$ with ${\mathrm{E}}\left\{ \left|{x_k}\right|^2 \right\} = 1$, and ${\bf{v}}\sim \mathcal{CN}\left(\mathbf{0},N_0 {{\bf{I}}_M}\right)$ represents the additive white Gaussian noise (AWGN) at the BS with zero mean and covariance matrix of $N_0 {{\bf{I}}_M}$.
To emphasize the impact of the number of users $K$ on the performance gain of NOMA over OMA, we fix the total power consumption of all the uplink users and thus we have
\begin{equation}\label{C3:SumPowerConstraint}
\sum\limits_{k = 1}^K {{p_{k}}}  \le {P_{\mathrm{max}}},
\end{equation}
where ${P_{\mathrm{max}}}$ is the maximum total transmit power for all the users.
Note that the sum-power constraint is a commonly adopted assumption in the literature\cite{Vishwanath2003,WangMUG,Xu2017} for simplifying the performance analysis of uplink communications.
In fact, the sum-power constraint is a reasonable assumption for practical cellular communication systems, where a total transmit power limitation is intentionally imposed to limit the ICI.

The uplink (UL) channel vector between user $k$ and the BS is modeled as
\begin{equation}\label{C3:ChannelModel}
{{{\bf{h}}_k}} = \frac{{\bf{g}}_k}{\sqrt{1+d_k^{\alpha}}},
\end{equation}
where ${\bf{g}}_k \in \mathbb{C}^{ M \times 1}$ denotes the Rayleigh fading coefficients, i.e., ${\bf{g}}_k \sim \mathcal{CN}\left(\mathbf{0},{{\bf{I}}_M}\right)$, $d_k$ is the distance between user $k$ and the BS in the unit of meter, and $\alpha$ represents the path loss exponent\footnote{In this chapter, we ignore the impact of shadowing to simplify our performance analysis.
Note that, shadowing only introduces an additional power factor to ${{\bf{g}}_k}$ in the channel model in \eqref{C3:ChannelModel}.
Although the introduction of shadowing may change the resulting channel distribution of ${{{\bf{h}}_k}}$, the distance-based channel model is sufficient to characterize the large-scale near-far gain exploited by NOMA, as will be discussed in this chapter.}.
We denote the UL channel matrix between all the $K$ users and the BS by ${\bf{H}} = \left[ {{{\bf{h}}_1}, \ldots ,{{\bf{h}}_K}} \right] \in \mathbb{C}^{ M \times K}$.
Note that the system model in \eqref{C3:MIMONOMASystemModel} and the channel model in \eqref{C3:ChannelModel} include the cases of single-antenna and massive MIMO aided BS associated with $M = 1$ and $M \to \infty$, respectively.
For instance, when $M=1$, ${{{{h}}_k}} = \frac{{{g}}_k}{\sqrt{1+d_k^{\alpha}}}$ denotes the corresponding channel coefficient of user $k$ in single-antenna systems.
We assume that the channel coefficients are independent and identically distributed (i.i.d.) over all the users and antennas.
Since this chapter aims for providing some insights concerning the performance gain of NOMA over OMA, we assume that perfect UL CSI knowledge is available at the BS for coherent detection.
%
\subsection{Signal Detection and Resource Allocation Strategy}\label{C3:ResourceAllocation}
To facilitate our performance analyses, we focus our attention on the following efficient signal detection and practical resource allocation strategies.

\noindent\textbf{\underline{Signal Detection}}

\begin{table}
\center\small
\caption{Signal Detection Techniques for NOMA and OMA Systems}
\centering
\begin{tabular}{cc|cc}
  \hline
  NOMA system  & Reception technique & OMA system & Reception technique\\ \hline
  SISO-NOMA  & SIC & SISO-OMA & FDMA-SUD \\
  MIMO-NOMA & MMSE-SIC & MIMO-OMA & FDMA-ZF, FDMA-MRC\\
  \emph{m}MIMO-NOMA & MRC-SIC & \emph{m}MIMO-OMA & FDMA-MRC\\\hline
\end{tabular}\label{C3:TransceiverProtocol}
\end{table}

The signal detection techniques adopted in this chapter for NOMA and OMA systems are shown in Table \ref{C3:TransceiverProtocol}, which are detailed in the following.

For SISO-NOMA, we adopt the commonly used successive interference cancelation (SIC) receiver \cite{wei2017performance} at the BS, since its performance approaches the capacity of single-antenna systems\cite{Tse2005}.
On the other hand, given that all the users are separated orthogonally by different frequency subbands for SISO-OMA, the simple single-user detection (SUD) technique can be used to achieve the optimal performance.

For MIMO-NOMA, the minimum mean square error criterion based successive interference cancelation (MMSE-SIC) constitutes an appealing receiver algorithm, since its performance approaches the capacity \cite{Tse2005} at an acceptable computational complexity for a finite number of antennas $M$ at the BS.
On the other hand, two types of signal detection schemes are considered for MIMO-OMA, namely FDMA zero forcing (FDMA-ZF) and FDMA maximal ratio combining (FDMA-MRC).
Exploiting the extra spatial degrees of freedom (DoF) attained by multiple antennas at the BS, ZF can be used for multi-user detection (MUD), as its achievable rate approaches the capacity in the high-SNR regime\cite{Tse2005}.
In particular, all the users are categorized into $G = K/M$ groups\footnote{Without loss of generality, we consider the case with $G$ as an integer in this chapter.} with each group containing $M$ users.
Then, ZF is utilized for handling the inter-user interference (IUI) within each group and FDMA is employed to separate all the $G$ groups on orthogonal frequency subbands.
In the low-SNR regime, the performance of ZF fails to approach the capacity \cite{Tse2005}, thus a simple low-complexity MRC scheme is adopted for single user detection on each frequency subband.
We note that there is only a single user in each frequency subband of our considered FDMA-MRC aided MIMO-OMA systems, i.e., no user grouping.

With a massive number of UL receiving antennas employed at the BS, we circumvent the excessive complexity of matrix inversion involved in ZF and MMSE detection by adopting the low-complexity MRC-SIC detection \cite{WeiLetter2018} for \emph{m}MIMO-NOMA systems and the FDMA-MRC scheme for \emph{m}MIMO-OMA systems.
Given the favorable propagation property of massive MIMO systems\cite{Ngo2013}, the orthogonality among the channel vectors of multiple users holds fairly well, provided that the number of users is sufficiently lower than the number of antennas.
Therefore, we can assign $W \ll M$ users to every frequency subband and perform the simple MRC detection while enjoying negligible IUIs in each subband.
In this chapter, we consider a fixed ratio between the group size and the number of antennas, namely, $\varsigma = \frac{W}{M} \ll 1$, and assume that the above-mentioned favorable propagation property holds under the fixed ratio $\varsigma$ considered.

\noindent\textbf{\underline{Resource Allocation Strategy}}

To facilitate our analytical study in this chapter, we consider an equal resource allocation strategy for both NOMA and OMA schemes.
In particular, equal power allocation is adopted for NOMA schemes\footnote{As shown in \cite{Vaezi2018non}, allocating a higher power to the user with the worse channel is not necessarily required in NOMA\cite{Vaezi2018non}.}.
On the other hand, equal power and frequency allocation is adopted for OMA schemes.
%
Note that the equal resource allocation is a typical selected strategy for applications bearing only a limited system overhead, e.g. machine-type communications (MTC).
%

We note that beneficial user grouping design is important for the MIMO-OMA system relying on FDMA-ZF and for the \emph{m}MIMO-OMA system using FDMA-MRC.
In general, finding the optimal user grouping strategy is an NP-hard problem and the performance analysis based on the optimal user grouping strategy is generally intractable.
Furthermore, the optimal SIC decoding order of NOMA in multi-antenna and massive MIMO systems is still an open problem in the literature, since the channel gains on different antennas are usually diverse.
To avoid tedious comparison and to facilitate our performance analysis, we adopt a random user grouping strategy for the OMA systems considered and a fixed SIC decoding order for the NOMA systems investigated.
%
%
In particular, we randomly select $M$ and $W$ users for each group on each frequency subband for the MIMO-OMA and \emph{m}MIMO-OMA systems, respectively.
For NOMA systems, without loss of generality, we assume $\left\| {{{\bf{h}}_1}} \right\| \ge \left\| {{{\bf{h}}_2}} \right\|, \ldots, \ge\left\| {{{\bf{h}}_K}} \right\|$, that the users are indexed based on their channel gains, and the SIC/MMSE-SIC/MRC-SIC decoding order\footnote{Note that, in general, the adopted decoding order is not the optimal SIC decoding order for maximizing the achievable sum-rate of the considered MIMO-NOMA and \emph{m}MIMO-NOMA systems.} at the BS is $1,2,\ldots,K$.
%
%
Additionally, to unveil insights about the performance gain of NOMA over OMA, we assume that there is no error propagation during SIC/MMSE-SIC/MRC-SIC decoding at the BS.
\section{ESG of SISO-NOMA over SISO-OMA}
In this section, we first derive the ergodic sum-rate of SISO-NOMA and SISO-OMA.
Then, the asymptotic ESG of SISO-NOMA over SISO-OMA is discussed under different scenarios.

\subsection{Ergodic Sum-rate of SISO-NOMA and SISO-OMA}
When decoding the messages of user $k$, the interferences imposed by users $1,2,\ldots,(k-1)$ have been canceled in the SISO-NOMA system by SIC reception.
Therefore, the instantaneous achievable data rate of user $k$ in the SISO-NOMA system considered is given by:
\begin{equation}\label{C3:SISONOMAIndividualAchievableRate}
R_{k}^{\mathrm{SISO-NOMA}} = {\ln}\left( 1 + \frac{{{p_k}{{\left| {{h_k}} \right|}^2}}}{{\sum\limits_{i = k + 1}^K {{p_i}{{\left| {{h_i}} \right|}^2}}  + {N_0}}} \right).
\end{equation}
On the other hand, in the SISO-OMA system considered, user $k$ is allocated to a subband exclusively, thus there is no inter-user interference (IUI).
As a result, the instantaneous achievable data rate of user $k$ in the SISO-OMA system considered is given by:
\begin{equation}\label{C3:SISOOMAIndividualAchievableRate}
R_{k}^{{\mathrm{SISO-OMA}}} = f_k {\ln}\left(1+ {\frac{{{p_{k }}{{\left| {{{{h}}_k}} \right|}^2}}}{{f_k N_0}}} \right),
\end{equation}
with ${p_k}$ and $f_k$ denoting the power allocation and frequency allocation of user $k$.
Note that we consider a normalized frequency bandwidth for both the NOMA and OMA schemes in this chapter, i.e., $\sum \limits_{k=1}^{K} f_k = 1$.
Under the identical resource allocation strategy, i.e., for ${{p_{k}}} = \frac{{P_{\mathrm{max}}}}{K}$ and $f_k = 1/K$, we have the instantaneous sum-rate of SISO-NOMA and SISO-OMA given by
\begin{align}
R_{\mathrm{sum}}^{{\mathrm{SISO-NOMA}}} &= \sum\limits_{k = 1}^K R_{k}^{\mathrm{SISO-NOMA}}=  {\ln}\left( 1 + \frac{{P_{\mathrm{max}}}}{K {N_0}} \sum\limits_{k = 1}^K {{\left| {{{h}_k}} \right|}^2} \right) \;\text{and} \label{C3:InstantSumRateSISONOMA}\\
R_{\mathrm{sum}}^{{\mathrm{SISO-OMA}}} &= \sum\limits_{k = 1}^K R_{k}^{\mathrm{SISO-OMA}}= \frac{1}{K}\sum\limits_{k = 1}^K {\ln}\left( 1 + \frac{P_{\mathrm{max}}}{{N_0}} {{\left| {{{h}_k}} \right|}^2} \right), \label{C3:InstantSumRateSISOOMA}
\end{align}
respectively.
%

Given the instantaneous sum-rates in \eqref{C3:InstantSumRateSISONOMA} and \eqref{C3:InstantSumRateSISOOMA}, firstly we have to investigate the channel gain distribution before embarking on the derivation of the corresponding ergodic sum-rates.
Since all the users are scattered uniformly across the pair of concentric rings between the inner radius of $D_0$ and the outer radius of $D$ in Figure \ref{C3:NOMA_Uplink_Model}, the cumulative distribution function (CDF) of the channel gain\footnote{As mentioned before, we assumed that the channel gains of all the users are ordered as $\left| {{{{h}}_1}} \right| \ge \left| {{{{h}}_2}} \right|, \ldots, \ge\left| {{{{h}}_K}} \right|$ in Section \ref{C3:ResourceAllocation}.
However, as shown in \eqref{C3:InstantSumRateSISONOMA}, the system sum-rate for the considered SISO-NOMA system is actually independent of the SIC decoding order.
Therefore, we can safely assume that all the users have i.i.d. channel distribution, which does not affect the performance analysis results.
In the sequel of this chapter, the subscript $k$ is dropped without loss of generality.} ${{\left| {{h}} \right|}^2}$ is given by
\begin{equation}\label{C3:ChannelDistributionCDF}
{F_{{{\left| {{h}} \right|}^2}}}\left( x \right) = \int_{D_0}^D {\left( {1 - {e^{ - \left( {1 + {z^\alpha }} \right)x}}} \right)} {f_{{d}}}\left( z \right)dz,
\end{equation}
where ${f_{{d}}}\left( z \right) = \frac{2z}{D^2 - D_0^2}$, $D_0 \le z \le D$, denotes the probability density function (PDF) for the random distance $d$.
%
With the Gaussian-Chebyshev quadrature approximation\cite{abramowitz1964handbook}, the CDF and PDF of ${{\left| {{h}} \right|}^2}$ can be approximated by
\begin{align}
{F_{{{\left| {{h}} \right|}^2}}}\left( x \right) &\approx 1 - \frac{1}{D+D_0}\sum\limits_{n = 1}^N {{\beta _n}{e^{ - {c_n}x}}} \; \text{and} \label{C3:SISOChannelDistributionPDF}\\
{f_{{{\left| {{h}} \right|}^2}}}\left( x \right) &\approx \frac{1}{D+D_0}\sum\limits_{n = 1}^N {{\beta _n}{c_n}{e^{ - {c_n}x}}}, x \ge 0, \label{C3:SISOChannelDistributionCDF}
\end{align}
respectively, where the parameters in \eqref{C3:SISOChannelDistributionPDF} and \eqref{C3:SISOChannelDistributionCDF} are:
\begin{align}\label{C3:BetaCn}
{\beta_n} &= \frac{\pi }{N}\left| {\sin \frac{{2n \hspace{-1mm}-\hspace{-1mm} 1}}{{2N}}\pi } \right|\left( {\frac{D\hspace{-1mm}-\hspace{-1mm}D_0}{2}\cos \frac{{2n \hspace{-1mm}-\hspace{-1mm} 1}}{{2N}}\pi  + \frac{D\hspace{-1mm}+\hspace{-1mm}D_0}{2}} \right) \;\text{and}\notag\\
{c_n} &= 1 + {\left( {\frac{D\hspace{-1mm}-\hspace{-1mm}D_0}{2}\cos \frac{{2n \hspace{-1mm}-\hspace{-1mm} 1}}{{2N}}\pi  + \frac{D\hspace{-1mm}+\hspace{-1mm}D_0}{2}} \right)^\alpha },
\end{align}
while $N$ denotes the number of terms for integral approximation.
The larger $N$, the higher the approximation accuracy becomes.
%

Based on \eqref{C3:InstantSumRateSISONOMA}, the ergodic sum-rate of the SISO-NOMA system considered is defined as:
\begin{align}\label{C3:ErgodicSumRateSISONOMADefine}
\overline{R_{{\mathrm{sum}}}^{{\mathrm{SISO - NOMA}}}} = {{\mathrm{E}}_{\mathbf{H}}}\left\{ {R_{{\mathrm{sum}}}^{{\mathrm{SISO - NOMA}}}} \right\} = {{\mathrm{E}}_{\mathbf{H}}}\left\{ {\ln \left( {1 + \frac{{{P_{{\mathrm{max}}}}}}{{K{N_0}}}\sum\limits_{k = 1}^K {{{\left| {{h_k}} \right|}^2}} } \right)} \right\},
\end{align}
where the expectation ${{\mathrm{E}}_{\mathbf{H}}}\left\{ \cdot \right\}$ is averaged over both the large-scale fading and small-scale fading in the overall channel matrix ${\mathbf{H}}$.
For a large number of users, i.e., $K\rightarrow \infty$, the sum of channel gains of all the users within the $\ln\left(\cdot\right)$ in \eqref{C3:ErgodicSumRateSISONOMADefine} becomes a deterministic value due to the strong law of large number, i.e., $\mathop {\lim } \limits_{K \to \infty } {\frac{1}{K}\sum\limits_{k = 1}^K {{{\left| {{h_k}} \right|}^2}} } = \overline{{{\left| {{h}} \right|}^2}}$, where $\overline{{{\left| {{h}} \right|}^2}}$ denotes the average channel power gain and it is given by
\begin{equation}\label{C3:Mean_ChannelPowerGainSISO}
\overline{{{\left| {{h}} \right|}^2}} = \int_0^\infty  {x{f_{{{\left| {{h}} \right|}^2}}}\left( x \right)} dx
\approx \frac{1}{D+D_0}\sum\limits_{n = 1}^N {\frac{{{\beta _n}}}{{{c_n}}}}.
\end{equation}
Therefore, the asymptotic ergodic sum-rate of the SISO-NOMA system considered is given by
%
\begin{align}\label{C3:ErgodicSumRateSISONOMA}
\hspace{-2mm}\mathop {\lim }\limits_{K \to \infty }  \overline{R_{\mathrm{sum}}^{{\mathrm{SISO-NOMA}}}}
& \mathop  = \limits^{(a)}  {{\mathrm{E}}_{\mathbf{H}}}\left\{ \mathop {\lim }\limits_{K \to \infty } {R_{\mathrm{sum}}^{{\mathrm{SISO-NOMA}}}} \right\}
= {\ln}\left( 1 + \frac{P_{\mathrm{max}}}{N_0} \overline{{{\left| {{h}} \right|}^2}} \right) \notag\\
& \approx {\ln}\left(\hspace{-1mm} {1 \hspace{-1mm}+ \hspace{-1mm} \frac{P_{\mathrm{max}}}{{{\left(D\hspace{-1mm}+\hspace{-1mm}D_0\right)}N_0}}\sum\limits_{n = 1}^N \hspace{-1mm} {\frac{{{\beta _n}}}{{{c_n}}}} } \hspace{-1mm}\right),
\end{align}
where the equality $(a)$ is due to the bounded convergence theorem\cite{bartle2014elements} and owing to the finite channel capacity.
Note that for a finite number of users $K$, the asymptotic ergodic sum-rate in \eqref{C3:ErgodicSumRateSISONOMA} serves as an upper bound for the actual ergodic sum-rate in \eqref{C3:ErgodicSumRateSISONOMADefine}, i.e., we have $\mathop {\lim }\limits_{K \to \infty }  \overline{R_{\mathrm{sum}}^{{\mathrm{SISO-NOMA}}}} \ge \overline{R_{\mathrm{sum}}^{{\mathrm{SISO-NOMA}}}}$, owing to the concavity of the logarithmic function and the Jensen's inequality.
In the Section \ref{C3:Simulations}, we will show that the asymptotic analysis in \eqref{C3:ErgodicSumRateSISONOMA} is also accurate for a finite value of $K$ and becomes tighter upon increasing $K$.

Similarly, based on \eqref{C3:InstantSumRateSISOOMA}, we can obtain the ergodic sum-rate of the SISO-OMA system as follows:
\begin{align}\label{C3:ErgodicSumRateSISOOMA}
\hspace{-2mm}\overline{R_{\mathrm{sum}}^{{\mathrm{SISO-OMA}}}}
& = {{\mathrm{E}}_{\mathbf{H}}}\left\{ \frac{1}{K}\sum\limits_{k = 1}^K {\ln}\left( 1 + \frac{P_{\mathrm{max}}}{{N_0}} {{\left| {{{h}_k}} \right|}^2} \right) \right\} \notag\\
& \mathop = \limits^{(a)} \int_0^\infty  {{{\ln }}\left( {1 + \frac{P_{\mathrm{max}}}{N_0}x} \right){f_{{{\left| {{h}} \right|}^2}}}\left( x \right)} dx \notag\\
& = \frac{1}{\left(D\hspace{-1mm}+\hspace{-1mm}D_0\right)}\sum\limits_{n = 1}^N {{\beta _n}{e^{\frac{{{c_n N_0}}}{{P_{\mathrm{max}}}}}} {\mathcal{E}_1}\left( {\frac{{{c_n N_0}}}{{P_{\mathrm{max}}}}} \right)},
\end{align}
where ${\mathcal{E}_l}\left( x \right) = \int_1^\infty  {\frac{{{e^{ - xt}}}}{t^l}} dt$ denotes the $l$-order exponential integral\cite{abramowitz1964handbook}.
The equality $(a)$ in \eqref{C3:ErgodicSumRateSISOOMA} is obtained since all the users have i.i.d. channel distributions.
Note that in contrast to SISO-NOMA, $\overline{R_{\mathrm{sum}}^{{\mathrm{SISO-OMA}}}}$ in \eqref{C3:ErgodicSumRateSISOOMA} is applicable to SISO-OMA supporting an arbitrary number of users.

\subsection{ESG in Single-antenna Systems}
Comparing \eqref{C3:ErgodicSumRateSISONOMA} and \eqref{C3:ErgodicSumRateSISOOMA}, the asymptotic ESG of SISO-NOMA over SISO-OMA with ${K \rightarrow \infty}$ can be expressed as follows:
\begin{align}\label{C3:EPGSISOERA}
\hspace{-2mm}\mathop {\lim }\limits_{K \rightarrow \infty}  \overline{G^{\mathrm{SISO}}} &= \mathop {\lim }\limits_{K \to \infty }  \overline{R_{\mathrm{sum}}^{{\mathrm{SISO-NOMA}}}} - \overline{R_{\mathrm{sum}}^{{\mathrm{SISO-OMA}}}} \\
&\approx {\ln}\left( {1 + \frac{{P_{\mathrm{max}}}}{{{\left(D+D_0\right)N_0}}}\sum\limits_{n = 1}^N {\frac{{{\beta _n}}}{{{c_n}}}} } \right)
- \frac{1}{\left(D\hspace{-1mm}+\hspace{-1mm}D_0\right)}\sum\limits_{n = 1}^N {{\beta _n}{e^{\frac{{{c_n}N_0}}{{P_{\mathrm{max}}}}}} {\mathcal{E}_1}\left( {\frac{{{c_n}N_0}}{{P_{\mathrm{max}}}}} \right)}.\notag
\end{align}
Then, in the high-SNR regime, we can approximate the asymptotic ESG\footnote{Under the sum-power constraint, the system SNR directly depends on the total system power budget ${P_{\mathrm{max}}}$, and thus the system SNR and ${P_{\mathrm{max}}}$ are interchangeably in this chapter.} in \eqref{C3:EPGSISOERA} by applying $\mathop {\lim }\limits_{x \to 0} {\mathcal{E}_1}\left( x \right)\approx - \ln \left( x \right) - \gamma $ \cite{abramowitz1964handbook} as
\begin{equation}\label{C3:EPGSISOERA2}
\mathop {\lim }\limits_{K \rightarrow \infty, {P_{\mathrm{max}}} \rightarrow \infty}  \overline{G^{\mathrm{SISO}}} \approx \vartheta \left( {D,{D_0}} \right) + \gamma,
\end{equation}
where $\vartheta \left( {D,{D_0}} \right)$ is given by
\begin{equation}\label{C3:NearFarDiversity}
\vartheta \left( {D,{D_0}} \right)
 = \ln \left( {\frac{{\sum\limits_{n = 1}^N {\left( {\frac{1}{{{c_n}}}} \right)\frac{{{\beta _n}}}{{\left( {D + {D_0}} \right)}}} }}{{\mathop \Pi \limits_{n = 1}^N {{\left( {\frac{1}{{{c_n}}}} \right)}^{\frac{{{\beta _n}}}{{\left( {D + {D_0}} \right)}}}}}}} \right)
\end{equation}
and $\gamma = 0.57721$ is the Euler-Mascheroni constant\cite{abramowitz1964handbook}.
Based on the weighted arithmetic and geometric means (AM-GM) inequality\cite{kedlaya1994proof}, we can observe that $\vartheta \left( {D,{D_0}} \right) \ge 0$.
This implies that $\mathop {\lim }\limits_{K \rightarrow \infty, {P_{\mathrm{max}}} \rightarrow \infty}  \overline{G^{\mathrm{SISO}}}>0$ and SISO-NOMA provides a higher asymptotic ergodic sum-rate than SISO-OMA in the system considered.

To further simplify the expression of ESG, we consider path loss exponents $\alpha$ in the range of $\alpha \in \left[3,6\right]$ in \eqref{C3:BetaCn}, which usually holds in urban environments\cite{Access2010}.
As a result, $c_n \gg 1$.
Hence, $\vartheta \left( {D,{D_0}} \right)$ in \eqref{C3:NearFarDiversity} can be further simplified as follows:
\begin{equation}\label{C3:NearFarDiversity2}
\vartheta \left( {D,{D_0}} \right)\approx \vartheta \left( \eta  \right) =\ln \left( {\frac{{\frac{\pi }{{N\left( {1 + \eta } \right)}}\sum\limits_{n = 1}^N {\left[ {{\lambda _n}\left( \eta  \right)} \right]^{1 - \alpha }}\left| {\sin \frac{{2n - 1}}{{2N}}\pi } \right| }}{{\mathop \Pi \limits_{n = 1}^N {{\left[ {{\lambda_n}\left( \eta  \right)} \right]}^{ - \frac{{\alpha \pi {\lambda_n}\left( \eta  \right)}}{{N\left( {1 + \eta } \right)}}\left| {\sin \frac{{2n - 1}}{{2N}}\pi } \right|}}}}} \right),
\end{equation}
where ${\lambda _n}\left( \eta  \right) = \left( {\frac{{\eta - 1}}{2}\cos \left( {\frac{{2n - 1}}{{2N}}\pi } \right) + \frac{{\eta +1}}{2}} \right) \in \left[1, \eta\right)$.
The normalized cell size of $\eta = \frac{D}{D_0} \ge 1$ is the ratio between the outer radius $D$ and the inner radius ${D_0}$, which also serves as a metric of the path loss discrepancy.

We can see that the asymptotic ESG of SISO-NOMA over SISO-OMA in \eqref{C3:EPGSISOERA2} is composed of two components, i.e., $\vartheta \left( {D,{D_0}} \right)$ and $\gamma$.
As observed in \eqref{C3:NearFarDiversity2}, the former component of $\vartheta \left( {D,{D_0}} \right) \approx \vartheta \left( \eta  \right) $ only depends on the normalized cell size of $\eta = \frac{D}{D_0}$ instead of the absolute values of $D$ and ${D_0}$.
In fact, it can characterize the \emph{large-scale near-far gain} attained by NOMA via exploiting the discrepancy in distances among NOMA users.
Interestingly, for the extreme case that all the users are randomly deployed on a circle, i.e., $D = D_0$, we have $\eta = 1$, ${\lambda _n}\left( \eta  \right) = 1$, and $\vartheta \left( \eta  \right) = 0$.
In other words, the large-scale near-far gain disappears, when all the users are of identical distance away from the BS.
With the aid of $\vartheta \left( \eta  \right) = 0$, we can observe in \eqref{C3:EPGSISOERA2} that the performance gain achieved by NOMA is a constant value of $\gamma = 0.57721$ nat/s/Hz.
Since all the users are set to have the same distance when $D = D_0$, the minimum asymptotic ESG $\gamma$ arising from the small-scale Rayleigh fading is named as the \emph{small-scale fading gain} in this chapter.
In fact, in the asymptotic case associated with $K \rightarrow \infty$ and ${P_{\mathrm{max}}} \rightarrow \infty$, SISO-NOMA provides \emph{at least $\gamma =  0.57721$} nat/s/Hz spectral efficiency gain over SISO-OMA for an arbitrary cell size in Rayleigh fading channels.
%
Additionally, we can see that the ESG of SISO-NOMA over SISO-OMA is saturated in the high-SNR regime.
This is because the instantaneous sum-rates of both the SISO-NOMA system in \eqref{C3:InstantSumRateSISONOMA} and the SISO-OMA system in \eqref{C3:InstantSumRateSISOOMA} increase logarithmically with ${P_{\mathrm{max}}} \rightarrow \infty$.
%

%


%
To visualize the large-scale near-far gain, we illustrate the asymptotic ESG in \eqref{C3:EPGSISOERA2} versus $D$ and $D_0$ in Figure \ref{C3:ESG_ERA}.
We can observe that when $\eta = 1$,  the large-scale near-far gain disappears and the asymptotic ESG is bounded from below by its minimum value of $\gamma = 0.57721$ nat/s/Hz due to the small-scale fading gain.
Additionally, for different values of $D$ and $D_0$ but with a fixed $\eta = \frac{D}{D_0}$, SISO-NOMA offers the same ESG compared to SISO-OMA.
This is because as predicted in \eqref{C3:NearFarDiversity2}, the large-scale near-far gain only depends on the normalized cell size $\eta$.
More importantly, we can observe that the large-scale near-far gain increases with the normalized cell size $\eta$.
In fact, for a larger normalized cell size $\eta$, the heterogeneity in the large-scale fading among users becomes higher and SISO-NOMA attains a higher near-far gain, hence improving the sum-rate performance.

%
%
%

\begin{figure}[t]
\centering
\includegraphics[width=4in]{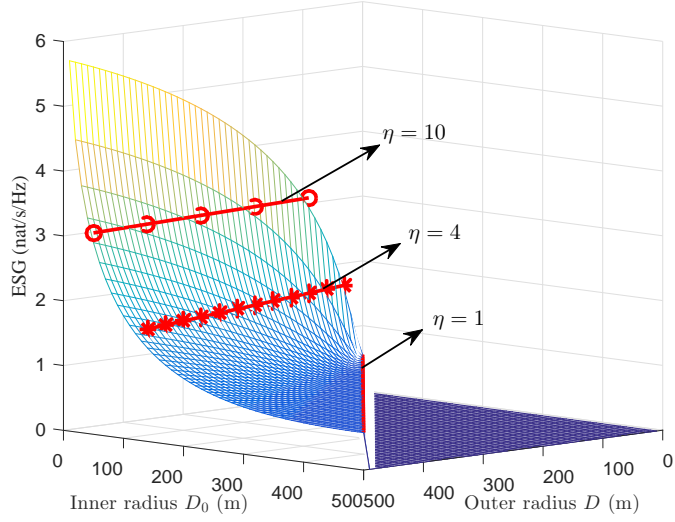}
\caption{The asymptotic ESG in \eqref{C3:EPGSISOERA2} under equal resource allocation versus $D$ and $D_0$ with $K \rightarrow \infty$ and ${P_{\mathrm{max}}} \rightarrow \infty$.}
\label{C3:ESG_ERA}
\end{figure}

\begin{remark}
Note that it has been analytically shown in \cite{Dingtobepublished} that two users with a large distance difference (or equivalently channel gain difference) are preferred to be paired.
This is consistent with our conclusion in this chapter, where a larger normalized cell size $\eta$ enables a higher ESG of NOMA over OMA.
However, \cite{Dingtobepublished} only considered a pair of two NOMA users.
%
%
In this chapter, we analytically obtain the ESG of NOMA over OMA for a more general NOMA system supporting a large number of UL users.
More importantly, we identify two kinds of gains in the ESG derived and reveal their different behaviors.
\end{remark}
\section{ESG of MIMO-NOMA over MIMO-OMA}
In this section, the ergodic sum-rates of MIMO-NOMA and MIMO-OMA associated with FDMA-ZF as well as FDMA-MRC are firstly analyzed. Then, the asymptotic ESGs of MIMO-NOMA over MIMO-OMA with FDMA-ZF and FDMA-MRC detection are investigated.
\subsection{Ergodic Sum-rate of MIMO-NOMA with MMSE-SIC}
Let us consider that an $M$-antenna BS serves $K$ single-antenna non-orthogonal users relying on MIMO-NOMA.
The BS employs MMSE-SIC detection for retrieving the messages of all the users.
The instantaneous achievable data rate of user $k$ in the MIMO-NOMA system relying on MMSE-SIC detection\footnote{The derivation of individual rates in \eqref{C3:MIMONOMAIndividualAchievableRate} for MMSE-SIC detection of MIMO-NOMA is based on the matrix inversion lemma: \[\log \left| {{\bf{A}} + {\bf{h}}{{\bf{h}}^{\mathrm{H}}}} \right| - \log \left| {\bf{A}} \right| = \log \left| {1 + {{\bf{h}}^{\mathrm{H}}}{{\bf{A}}^{ - 1}}{\bf{h}}} \right|.\]
	Interested readers are referred to \cite{Tse2005} for a detailed derivation.} is given by\cite{Tse2005}:
\begin{align}\label{C3:MIMONOMAIndividualAchievableRate}
R_{k}^{{\mathrm{MIMO-NOMA}}} = \ln \left| {{{\bf{I}}_M} + \frac{1}{{{N_0}}}\sum\limits_{i = k}^K {{p_i}{{\bf{h}}_i}{\bf{h}}_i^{\mathrm{H}}} } \right|
- \ln \left| {{{\bf{I}}_M} + \frac{1}{{{N_0}}}\sum\limits_{i = k + 1}^K {{p_i}{{\bf{h}}_i}{\bf{h}}_i^{\mathrm{H}}} } \right|.
\end{align}
As a result, the instantaneous sum-rate of MIMO-NOMA is obtained as
\begin{equation}\label{C3:InstantSumRateMIMONOMA}
R_{\mathrm{sum}}^{{\mathrm{MIMO-NOMA}}} = \sum\limits_{k = 1}^K R_{k}^{{\mathrm{MIMO-NOMA}}}
= {\ln}\left| {{{\bf{I}}_M} + \frac{1}{{{N_0}}}\sum\limits_{k = 1}^K p_k {{{\bf{h}}_k}{\bf{h}}_k^{\mathrm{H}}} } \right|.
\end{equation}

In fact, MMSE-SIC is capacity-achieving \cite{Tse2005} and \eqref{C3:InstantSumRateMIMONOMA} is the channel capacity for a given instantaneous channel matrix ${\bf{H}}$\cite{GoldsmithMIMOCapacity2003}.
In general, it is a challenge to obtain a closed-form expression for the instantaneous channel capacity above due to the determinant of the summation of matrices in \eqref{C3:InstantSumRateMIMONOMA}.
To provide some insights, in the following theorem, we consider an asymptotically tight upper bound for the achievable sum-rate in \eqref{C3:InstantSumRateMIMONOMA} associated with $K \to \infty$.

\begin{Thm}\label{C3:Theorem1}
For the MIMO-NOMA system considered in \eqref{C3:MIMONOMASystemModel} relying on MMSE-SIC detection, given any power allocation strategy ${\bf{p}} = \left[ {{p_1}, \ldots ,{p_K}} \right]$, the achievable sum-rate in \eqref{C3:InstantSumRateMIMONOMA} is upper bounded by
\begin{equation}\label{C3:InstantSumRateMIMONOMA_UpperBound}
\hspace{-2mm}R_{\mathrm{sum}}^{{\mathrm{MIMO-NOMA}}} \le M{\ln}\left( {1 + \frac{1}{{M{N_0}}}\sum\limits_{k = 1}^K {{p_k}{{\left\| {{{\bf{h}}_k}} \right\|}^2}} } \right).
\end{equation}
This upper bound is asymptotically tight, when $K \to \infty$, i.e.,
\begin{equation}\label{C3:AsympInstantSumRateMIMONOMA}
\hspace{-2mm}\mathop {\lim }\limits_{K \rightarrow \infty} \hspace{-1mm} R_{\mathrm{sum}}^{{\mathrm{MIMO-NOMA}}}
\hspace{-0.5mm}=\hspace{-0.5mm} \mathop {\lim }\limits_{K \rightarrow \infty} M{\ln}\hspace{-0.5mm}\left( \hspace{-0.5mm}{1 \hspace{-0.5mm}+\hspace{-0.5mm} \frac{1}{{M{N_0}}}\sum\limits_{k = 1}^K {{p_k}{{\left\| {{{\bf{h}}_k}} \right\|}^2}} } \hspace{-0.5mm}\right)\hspace{-0.5mm}.
\end{equation}
\end{Thm}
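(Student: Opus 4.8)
The plan is to establish the two assertions separately: first the deterministic upper bound in \eqref{C3:InstantSumRateMIMONOMA_UpperBound}, which I would prove for every channel realisation and every power allocation $\mathbf{p}$, and then its asymptotic tightness as $K\to\infty$. For the upper bound, I would write $\mathbf{A} = {{\bf{I}}_M} + \frac{1}{N_0}\sum_{k=1}^K p_k {{\bf{h}}_k}{\bf{h}}_k^{\mathrm{H}}$, which is Hermitian positive definite, and let $\lambda_1,\ldots,\lambda_M$ denote its eigenvalues, so that $R_{\mathrm{sum}}^{{\mathrm{MIMO-NOMA}}} = \ln|\mathbf{A}| = \sum_{m=1}^M \ln\lambda_m$. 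Applying Jensen's inequality to the concave function $\ln(\cdot)$ at the arithmetic mean of the eigenvalues gives $\frac{1}{M}\sum_{m=1}^M \ln\lambda_m \le \ln\left(\frac{1}{M}\sum_{m=1}^M \lambda_m\right) = \ln\left(\frac{1}{M}\mathrm{Tr}(\mathbf{A})\right)$. I would then evaluate the trace explicitly: since $\mathrm{Tr}({{\bf{h}}_k}{\bf{h}}_k^{\mathrm{H}}) = \left\|{{\bf{h}}_k}\right\|^2$, we obtain $\mathrm{Tr}(\mathbf{A}) = M + \frac{1}{N_0}\sum_{k=1}^K p_k \left\|{{\bf{h}}_k}\right\|^2$. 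Substituting this into the Jensen bound and multiplying through by $M$ reproduces exactly \eqref{C3:InstantSumRateMIMONOMA_UpperBound}.

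For the tightness, the crucial observation is that the Jensen step above holds with equality precisely when all eigenvalues $\lambda_m$ coincide, i.e. when $\mathbf{A}$ is a scalar multiple of ${{\bf{I}}_M}$. I would therefore show that, under the equal power allocation $p_k = P_{\mathrm{max}}/K$ adopted in Section \ref{C3:ResourceAllocation}, the matrix $\frac{1}{N_0}\sum_{k=1}^K p_k {{\bf{h}}_k}{\bf{h}}_k^{\mathrm{H}} = \frac{P_{\mathrm{max}}}{N_0}\cdot\frac{1}{K}\sum_{k=1}^K {{\bf{h}}_k}{\bf{h}}_k^{\mathrm{H}}$ converges to a scaled identity as $K\to\infty$. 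Since the ${{\bf{h}}_k} = {\bf{g}}_k/\sqrt{1+d_k^{\alpha}}$ are i.i.d. across users with $\mathbb{E}[{{\bf{h}}_k}{\bf{h}}_k^{\mathrm{H}}] = \bar{c}\,{{\bf{I}}_M}$, where $\bar{c} = \mathbb{E}[(1+d^{\alpha})^{-1}]$ (the off-diagonal entries have zero mean by independence of the components of ${\bf{g}}_k$, and each entry has finite variance because $\mathbb{E}[|g|^4]<\infty$), the strong law of large numbers applied entrywise yields $\frac{1}{K}\sum_{k=1}^K {{\bf{h}}_k}{\bf{h}}_k^{\mathrm{H}} \to \bar{c}\,{{\bf{I}}_M}$ almost surely. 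Hence $\mathbf{A} \to \left(1 + \frac{P_{\mathrm{max}}\bar{c}}{N_0}\right){{\bf{I}}_M}$ and all eigenvalues converge to the common value $1 + \frac{P_{\mathrm{max}}\bar{c}}{N_0}$. By continuity of $\det(\cdot)$ and $\ln(\cdot)$, both $\ln|\mathbf{A}|$ and $M\ln\left(\mathrm{Tr}(\mathbf{A})/M\right)$ then converge almost surely to $M\ln\left(1 + \frac{P_{\mathrm{max}}\bar{c}}{N_0}\right)$, which establishes \eqref{C3:AsympInstantSumRateMIMONOMA}.

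The main obstacle is the matrix law-of-large-numbers step: one must argue that the entire spread of the eigenvalues of $\mathbf{A}$ collapses, not merely that the diagonal entries concentrate. Controlling the off-diagonal cross terms amounts to showing that their fluctuations, whose variances scale like $\sum_k p_k^2$, vanish; this requires $\sum_k p_k^2 \to 0$, which the equal allocation satisfies ($\sum_k p_k^2 = P_{\mathrm{max}}^2/K \to 0$) but an arbitrary $\mathbf{p}$ need not. This is exactly where the distinction matters between the upper bound holding for \emph{any} $\mathbf{p}$ and the bound becoming \emph{tight} under the adopted equal allocation, and it is the point I would treat most carefully.
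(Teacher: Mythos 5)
Your proposal is correct, but it proves the theorem by a genuinely different route than the paper. For the upper bound, you use the purely algebraic inequality $\ln\left|{\bf{A}}\right| = \sum_{m}\ln\lambda_m \le M\ln\left(\mathrm{Tr}({\bf{A}})/M\right)$ (Jensen/AM--GM on the eigenvalues), whereas the paper constructs a virtual $K$-user $M\times M$ MIMO system with isotropic per-user channels $\left\|{\bf{h}}_k\right\|{\bf{I}}_M$, observes that the true system corresponds to the precoder choice ${\bf{u}}_k = {\bf{h}}_k/\left\|{\bf{h}}_k\right\|$, and bounds the capacity by optimizing over precoders; both yield exactly \eqref{C3:InstantSumRateMIMONOMA_UpperBound}. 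For tightness, you argue that the Jensen slack collapses because $\frac{1}{K}\sum_k {\bf{h}}_k{\bf{h}}_k^{\mathrm{H}}$ converges almost surely to a scaled identity (entrywise SLLN), so both sides of the bound converge to the same deterministic limit; the paper instead sandwiches the capacity from below by the rate achievable with an MRC-SIC receiver, and shows via spherical-symmetry lemmas (channel directions uniform on the sphere, independent of gains, with cross-projections $|{\bf{e}}_k^{\mathrm{H}}{\bf{e}}_i|^2$ of mean $1/M$) that this achievable rate converges to the upper bound as $K\to\infty$. Your route is shorter and self-contained, and it has the merit of making explicit the condition under which tightness actually holds, namely that the power allocation spreads out ($\sum_k p_k^2 \to 0$, satisfied by equal allocation): this caveat is real, since concentrating all power on one user leaves a persistent gap $M\ln\bigl(1+\tfrac{P_{\mathrm{max}}\|{\bf{h}}_1\|^2}{MN_0}\bigr) - \ln\bigl(1+\tfrac{P_{\mathrm{max}}\|{\bf{h}}_1\|^2}{N_0}\bigr) > 0$ for $M>1$, so the theorem's ``any power allocation'' phrasing is too strong, and the paper's own proof needs the same spreading condition at the step where the weighted sum $\sum_{i>k}p_i\|{\bf{h}}_i\|^2|\nu_{k,i}|^2$ is replaced by its mean. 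What the paper's longer route buys is operational content: it establishes that the low-complexity MRC-SIC receiver is asymptotically capacity-achieving, and its intermediate results (the direction-statistics lemmas and the asymptotic MRC-SIC rate expression) are reused directly in the proof of Theorem~\ref{C3:Theorem2} for the massive-MIMO case, which your purely spectral argument would not provide.
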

\begin{proof}
Please refer to Appendix \ref{C3:AppendixA} for the proof of Theorem \ref{C3:Theorem1}.
\end{proof}

Now, given the instantaneous achievable sum-rate obtained in \eqref{C3:AsympInstantSumRateMIMONOMA}, we proceed to calculate the ergodic sum-rate.
Given the distance from a user to the BS as $d$, the channel gain ${{\left\| {{\mathbf{h}}} \right\|}^2}$ follows the Gamma distribution\cite{yang2017noma}, whose conditional PDF and CDF are given by\footnote{Similar to \eqref{C3:SISOChannelDistributionPDF} and \eqref{C3:SISOChannelDistributionCDF}, we can safely assume that all the users have i.i.d. channel distribution within the cell and drop the subscript $k$ in \eqref{C3:MIMOChannelDistribution}, since the system sum-rate in \eqref{C3:InstantSumRateMIMONOMA} is independent of the MMSE-SIC decoding order\cite{Tse2005}.}
\begin{equation}\label{C3:MIMOChannelDistribution}
{f_{{{\left\| {{\mathbf{h}}} \right\|}^2 | d}}}\left( x \right) = {{\mathrm{Gamma}}} \left(M, 1+ d^\alpha,x\right)\; \text{and} \;
{F_{{{\left\| {{\mathbf{h}}} \right\|}^2 | d}}}\left( x \right) = \frac{{\gamma_{L} \left( {M,\left( {1 + d^\alpha } \right)x} \right)}}{\Gamma \left( M \right)},
\end{equation}
respectively, where ${\mathrm{Gamma}} \left(M, \lambda,x\right) = \frac{{{{ \lambda }^M}{x^{M - 1}}{e^{ -  \lambda x}}}}{\Gamma \left( M \right)}$ denotes the PDF of a random variable obeying a Gamma distribution, ${\Gamma \left( M \right)}$ denotes the Gamma function, and ${\gamma_{L} \left( {M,\left( {1 + d^\alpha } \right)x} \right)}$ denotes the lower incomplete Gamma function.
Then, the CDF of the channel gain ${{\left\| {{\mathbf{h}}} \right\|}^2}$ can be obtained by
\begin{equation}
{F_{{{\left\| {{\mathbf{h}}} \right\|}^2}}}\left( x \right) = \int_{D_0}^D \frac{{\gamma_{L} \left( {M,\left( {1 + d^\alpha } \right)x} \right)}}{{\Gamma \left(M\right)}} {f_{{d}}}\left( z \right)dz.
\end{equation}
By applying the Gaussian-Chebyshev quadrature approximation\cite{abramowitz1964handbook}, the CDF and PDF of ${{\left\| {{\mathbf{h}}} \right\|}^2}$ can be written as
\begin{align}
\hspace{-2mm}{F_{{{\left\| {{\mathbf{h}}} \right\|}^2}}}\left( x \right) &\approx 1 - \frac{1}{D+D_0}\sum\limits_{n = 1}^N \frac{{{\beta _n}\gamma_{L} \left( {M,{c_n}x} \right)}}{{\Gamma \left(M\right)}} \; \text{and} \notag\\
\hspace{-2mm}{f_{{{\left\| {{\mathbf{h}}} \right\|}^2}}}\left( x \right) &\approx \frac{1}{D+D_0}\sum\limits_{n = 1}^N {{\beta _n} {\mathrm{Gamma}} \left(M, c_n,x\right)}, x \ge 0,
\end{align}
respectively, where ${\beta _n}$ and ${c_n}$ are given in \eqref{C3:BetaCn}.

According to \eqref{C3:AsympInstantSumRateMIMONOMA}, given the equal resource allocation strategy, i.e., ${{p_{k}}} = \frac{{P_{\mathrm{max}}}}{K}$, the asymptotic ergodic sum-rate of MIMO-NOMA associated with $K\rightarrow \infty$ can be obtained as follows:
\begin{align} \label{C3:ErgodicSumRateMIMONOMA}
\mathop {\lim }\limits_{K \to \infty } \overline{R_{\mathrm{sum}}^{{\mathrm{MIMO-NOMA}}}}
& = \mathop {\lim }\limits_{K \to \infty } {{\mathrm{E}}_{\mathbf{H}}}\left\{ {R_{{\mathrm{sum}}}^{{\mathrm{MIMO - NOMA}}}} \right\} = M{\ln}\left( 1 + \frac{{P_{\mathrm{max}}}}{MN_0} \overline{{{\left\| {{\mathbf{h}}} \right\|}^2}} \right) \\
& \approx M{\ln}\left(\hspace{-1mm}{1+ \frac{{P_{\mathrm{max}}}}{{{\left(D\hspace{-1mm}+\hspace{-1mm}D_0\right)N_0}}}
\sum\limits_{n = 1}^N \hspace{-1mm}{\frac{{{\beta _n}}}{{{c_n}}}} } \hspace{-1mm}\right),\notag
\end{align}
where $\overline{{{\left\| {{\mathbf{h}}} \right\|}^2}}$ denotes the average channel gain, which is given by
\begin{equation}\label{C3:Mean_ChannelPowerGainMIMO}
\overline{{{\left\| {{\mathbf{h}}} \right\|}^2}} = \int_0^\infty  {x{f_{{{\left\| {{\mathbf{h}}} \right\|}^2}}}\left( x \right)} dx
\approx \frac{M}{D+D_0}\sum\limits_{n = 1}^N {\frac{{{\beta _n}}}{{{c_n}}}}.
\end{equation}

\begin{remark} \label{C3:remark2}
Comparing \eqref{C3:ErgodicSumRateSISONOMA} and \eqref{C3:ErgodicSumRateMIMONOMA}, we can observe that for a sufficiently large number of users, the considered MIMO-NOMA system is asymptotically equivalent to a SISO-NOMA system with $M$-fold increases in DoF and an equivalent average channel gain of $\overline{{{\left\| {{{\bf{h}}}} \right\|}^2}}$ in each DoF.
Intuitively, when the number of UL receiver antennas at the BS, $M$, is much smaller than the number of users, $K \to \infty$, which corresponds to the extreme asymmetric case of MIMO-NOMA, the multi-antenna BS behaves asymptotically in the same way as a single-antenna BS.
Additionally, when $K \gg M$, due to the diverse channel directions of all the users, the received signals fully span the $M$-dimensional signal space\cite{WangMUG}.
Therefore, MIMO-NOMA using MMSE-SIC reception can fully exploit the system's spatial DoF, $M$, and its performance can be approximated by that of a SISO-NOMA system with $M$-fold DoF.
\end{remark}

\subsection{Ergodic Sum-rate of MIMO-OMA with FDMA-ZF}
Upon installing more UL receiver antennas at the BS, ZF can be employed for MUD and the MIMO-OMA system using FDMA-ZF can accommodate $M$ users on each frequency subband.
As mentioned before, we adopt a random user grouping strategy for the MIMO-OMA system using FDMA-ZF detection, where we randomly select $M$ users as a group and denote the composite channel matrix of the $g$-th group by ${\bf{H}}_g = \left[ {{{\bf{h}}_{(g-1)M+1}},{{\bf{h}}_{(g-1)M+2}}, \ldots ,{{\bf{h}}_{gM}}} \right] \in \mathbb{C}^{ M \times M}$.
Then, the instantaneous achievable data rate of user $k$ in the MIMO-OMA system is given by
\begin{equation}\label{C3:MIMOOMAZFIndividualAchievableRate}
R_{k,\mathrm{FDMA-ZF}}^{{\mathrm{MIMO-OMA}}} = f_g {\ln}\left(1+ {\frac{{{p_{k }}{{\left| {{{\mathbf{w}_{g,k}^{\mathrm{H}}\mathbf{h}}_k}} \right|}^2}}}{{f_g N_0}}} \right),
\end{equation}
where $f_g$ denotes the normalized frequency allocation for the $g$-th group.
The vector $\mathbf{w}_{g,k} \in \mathbb{C}^{ M \times 1}$ denotes the normalized ZF detection vector for user $k$ with ${\left\| {{{\mathbf{w}_{g,k}}}} \right\|}^2 = 1$, which is obtained based on the pseudoinverse of the composite channel matrix ${\bf{H}}_g$ in the $g$-th user group\cite{Tse2005}.

Given the equal resource allocation strategy, i.e., ${{p_{k}}} = \frac{{P_{\mathrm{max}}}}{K}$ and $f_g = 1/G = \frac{M}{K}$, the instantaneous sum-rate of MIMO-OMA using FDMA-ZF can be formulated as:
\begin{equation}\label{C3:InstantSumRateMIMOOMAZF}
R_{\mathrm{sum,FDMA-ZF}}^{{\mathrm{MIMO-OMA}}}
= \sum\limits_{k = 1}^K R_{k,\mathrm{FDMA-ZF}}^{{\mathrm{MIMO-OMA}}}
= \frac{M}{K}\sum\limits_{k = 1}^K {\ln}\left(1+ \frac{{P_{\mathrm{max}}}}{MN_0}{\left| {{{\mathbf{w}_{g,k}^{\mathrm{H}}\mathbf{h}}_k}} \right|}^2 \right).
\end{equation}
Since ${\left\| {{{\mathbf{w}_{g,k}}}} \right\|}^2 = 1$ and ${\bf{g}}_k \sim \mathcal{CN}\left(\mathbf{0},{{\bf{I}}_M}\right)$, we have ${{{{\mathbf{w}_{g,k}^{\mathrm{H}}\mathbf{g}}_k}}} \sim \mathcal{CN}\left({0},1\right)$ \cite{Tse2005}.
As a result, ${\left| {{{\mathbf{w}_{g,k}^{\mathrm{H}}\mathbf{h}}_k}} \right|^2}$ in \eqref{C3:InstantSumRateMIMOOMAZF} has an identical distribution with ${{\left| {{h}} \right|}^2}$, i.e.,  its CDF and PDF are given by \eqref{C3:SISOChannelDistributionPDF} and \eqref{C3:SISOChannelDistributionCDF}, respectively.
Therefore, the ergodic sum-rate of the MIMO-OMA system considered can be expressed as:
\vspace{-2.5mm}
\begin{align}\label{C3:ErgodicSumRateMIMOOMAZF}
\overline{R_{\mathrm{sum,FDMA-ZF}}^{{\mathrm{MIMO-OMA}}}} &= {{\mathrm{E}}_{\mathbf{H}}}\left\{ {R_{\mathrm{sum,FDMA-ZF}}^{{\mathrm{MIMO-OMA}}}} \right\} = \int_0^\infty  M{{{\ln }}\left( {1 + \frac{P_{\mathrm{max}}}{MN_0}x} \right){f_{{{\left| {{h}} \right|}^2}}}\left( x \right)} dx \notag\\
& = \frac{M}{\left(D+D_0\right)}\sum\limits_{n = 1}^N {{\beta _n}{e^{\frac{{{c_n M N_0}}}{{P_{\mathrm{max}}}}}} {\mathcal{E}_1}\left( {\frac{{{c_n M N_0}}}{{P_{\mathrm{max}}}}} \right)}.
\end{align}
\par\noindent\vspace{-22.5mm}

\subsection{Ergodic Sum-rate of MIMO-OMA with FDMA-MRC}
The instantaneous achievable data rate of user $k$ in the MIMO-OMA system using the FDMA-MRC receiver is given by
\vspace{-2.5mm}
\begin{equation}\label{C3:MIMOOMAMRCIndividualAchievableRate}
R_{k,\mathrm{FDMA-MRC}}^{{\mathrm{MIMO-OMA}}} =
f_k {\ln}\left(1+ {\frac{{{p_{k }}{{\left\| {{{{\mathbf{h}}}_k}} \right\|}^2}}}{{f_k N_0}}} \right).
\end{equation}
Upon adopting the equal resource allocation strategy, i.e., ${{p_{k}}} = \frac{{P_{\mathrm{max}}}}{K}$ and $f_k = 1/K$, the instantaneous sum-rate of MIMO-OMA relying on FDMA-MRC is obtained by
\vspace{-5mm}
\begin{equation}\label{C3:InstantSumRateMIMOOMAMRC}
R_{\mathrm{sum,FDMA-MRC}}^{{\mathrm{MIMO-OMA}}}
= \sum\limits_{k = 1}^K R_{k,\mathrm{FDMA-MRC}}^{{\mathrm{MIMO-OMA}}}
= \frac{1}{K}\sum\limits_{k = 1}^K {\ln}\left(1+ \frac{{P_{\mathrm{max}}}}{N_0}{\left\| {{{\mathbf{h}}_k}} \right\|}^2 \right).
\end{equation}
Averaging $R_{\mathrm{sum,FDMA-MRC}}^{{\mathrm{MIMO-OMA}}}$ over the channel fading, we arrive at the ergodic sum-rate of MIMO-OMA using FDMA-MRC as
\begin{align}\label{C3:ErgodicSumRateMIMOOMAMRC}
\overline{R_{\mathrm{sum,FDMA-MRC}}^{{\mathrm{MIMO-OMA}}}} &= {\mathrm{E}_{\bf{H}}}\left\{ \frac{1}{K}\sum\limits_{k = 1}^K{\ln}\left(1+ \frac{{P_{\mathrm{max}}}}{N_0}{\left\| {{{\mathbf{h}}_k}} \right\|}^2 \right) \right\} \\
& = \int_0^\infty  {{{\ln }}\left( {1 + \frac{{P_{\mathrm{max}}}}{{N_0}}x} \right){f_{{{\left\| {{\mathbf{h}}} \right\|}^2}}}\left( x \right)} dx \notag\\
& = \frac{1}{{\left( {D + {D_0}} \right)}}\sum\limits_{n = 1}^N {{\beta _n}\underbrace {\int_0^\infty  {\ln \left( {1 + \frac{{P_{\mathrm{max}}}}{{N_0}}x} \right)} {{\mathrm{Gamma}}}\left( {M,{c_n}} \right)dx}_{{T_n}}},\notag
\end{align}
with $T_n$ given by
\begin{align}\label{C3:T_N_App}
{T_n} & \mathop = \limits^{(a)} {\int_0^\infty  {\ln \left( 1 + t \right)} {{\mathrm{Gamma}}}\left( {M,\frac{{{N_0}{c_n}}}{{{P_{\mathrm{max}}}}}} \right)dt} \notag\\
& \mathop = \limits^{(b)}  \frac{\left( \frac{{{N_0}{c_n}}}{{{P_{\mathrm{max}}}}} \right)^M}{{\Gamma \left(M\right)}} G_{2,3}^{3,1}\left( {\begin{array}{*{20}{c}}
{ - M, - M + 1}\\
{ - M, - M, 0}
\end{array}\left| {\frac{{{N_0}{c_n}}}{{{P_{\mathrm{max}}}}}} \right.} \right),
\end{align}
where $G_{p,q}^{m,n}\left(  \cdot  \right)$ denotes the Meijer G-function.
The equality $(a)$ in \eqref{C3:T_N_App} is obtained due to $t = \frac{{{P_{\mathrm{max}}}}}{{N_0}}x \sim {{\mathrm{Gamma}}}\left( {{M},\frac{{{N_0}{c_n}}}{{{P_{\mathrm{max}}}}}} \right)$ and the equality $(b)$ in \eqref{C3:T_N_App} is based on Equation (3) in \cite{Heath2011}.
Now, the ergodic sum-rate of MIMO-OMA using FDMA-MRC can be written as
\begin{equation}\label{C3:ErgodicSumRateMIMOOMAMRC2}
\overline{R_{\mathrm{sum,FDMA-MRC}}^{{\mathrm{MIMO-OMA}}}}
= \frac{1}{{\left( {D + {D_0}} \right)}}\sum\limits_{n = 1}^N {\beta _n}\left( \frac{\left( \frac{{N_0{c_n}}}{{{P_{\mathrm{max}}}}} \right)^M}{{\Gamma \left(M\right)}} G_{2,3}^{3,1}\left( {\begin{array}{*{20}{c}}
{ - M, - M + 1}\\
{ - M, - M, 0}
\end{array}\left| {\frac{{N_0{c_n}}}{{P_{\mathrm{max}}}}} \right.} \right)  \right).
\end{equation}

Note that, the ergodic sum-rate in \eqref{C3:ErgodicSumRateMIMOOMAMRC2} is applicable to an arbitrary number of users $K$ and an arbitrary SNR, but it is too complicated to offer insights concerning the ESG of MIMO-NOMA over MIMO-OMA.
Hence, based on \eqref{C3:ErgodicSumRateMIMOOMAMRC}, we derive the asymptotic ergodic sum-rate of MIMO-OMA with FDMA-MRC in the low-SNR regime with ${P_{\mathrm{max}}} \rightarrow 0$ as follows:
\begin{equation}\label{C3:ErgodicSumRateMIMOOMAMRC3}
\mathop {\lim }\limits_{{P_{\mathrm{max}}} \rightarrow 0} \overline{R_{\mathrm{sum,FDMA-MRC}}^{{\mathrm{MIMO-OMA}}}} = \frac{{P_{\mathrm{max}}}}{{N_0}} \overline{{{\left\| {{\mathbf{h}}} \right\|}^2}}
= \frac{M{P_{\mathrm{max}}}}{{N_0} \left(D+D_0\right)}\sum\limits_{n = 1}^N {\frac{{{\beta _n}}}{{{c_n}}}}.
\end{equation}
On the other hand, in the high-SNR regime, based on \eqref{C3:ErgodicSumRateMIMOOMAMRC}, the asymptotic ergodic sum-rate of MIMO-OMA using FDMA-MRC is given by
\begin{equation}\label{C3:ErgodicSumRateMIMOOMAMRC4}
\mathop {\lim }\limits_{{P_{\mathrm{max}}} \rightarrow \infty}  \overline{R_{\mathrm{sum,FDMA-MRC}}^{{\mathrm{MIMO-OMA}}}}
= {\ln}\left(\frac{{P_{\mathrm{max}}}}{N_0}\right) + {\mathrm{E}_{\bf{h}}}\left\{ {\ln}\left({\left\| {{{\mathbf{h}}}} \right\|}^2 \right) \right\}.
\end{equation}
\subsection{ESG in Multi-antenna Systems}
By comparing \eqref{C3:ErgodicSumRateMIMONOMA} and \eqref{C3:ErgodicSumRateMIMOOMAZF}, we have the asymptotic ESG of MIMO-NOMA over MIMO-OMA relying on FDMA-ZF as follows:
\begin{align}\label{C3:EPGMIMOERA}
\hspace{-3mm}\mathop {\lim }\limits_{K \rightarrow \infty}  \overline{G^{\mathrm{MIMO}}_{\mathrm{FDMA-ZF}}} &\hspace{-1mm}=
\mathop {\lim }\limits_{K \to \infty }  \overline{R_{\mathrm{sum}}^{{\mathrm{MIMO-NOMA}}}} - \overline{R_{\mathrm{sum,FDMA-ZF}}^{{\mathrm{MIMO-OMA}}}} \\
&\hspace{-1mm}\approx M{\ln}\left( {1 \hspace{-1mm}+\hspace{-1mm} \frac{{P_{\mathrm{max}}}}{{{\left(D\hspace{-1mm}+\hspace{-1mm}D_0\right)N_0}}}\sum\limits_{n = 1}^N {\frac{{{\beta _n}}}{{{c_n}}}} } \right)\hspace{-1mm}- \hspace{-1mm} \frac{M}{\left(D\hspace{-1mm}+\hspace{-1mm}D_0\right)}\hspace{-1mm}\sum\limits_{n = 1}^N \hspace{-1mm}{{\beta _n}{e^{\frac{{{c_n}MN_0}}{{P_{\mathrm{max}}}}}} \hspace{-1mm}{\mathcal{E}_1}\hspace{-1mm}\left( \hspace{-0.5mm} {\frac{{{c_n}MN_0}}{{P_{\mathrm{max}}}}} \hspace{-0.5mm} \right)}.\notag
\end{align}
To unveil some insights, we consider the asymptotic ESG in the high-SNR regime as follows
\begin{equation}\label{C3:EPGMIMOERA2}
\hspace{-2mm}\mathop {\lim }\limits_{K \rightarrow \infty, {P_{\mathrm{max}}} \rightarrow \infty}  \overline{G^{\mathrm{MIMO}}_{\mathrm{FDMA-ZF}}} \approx M  \underbrace{\vartheta \left( {D,{D_0}} \right)}_{\mathrm{large-scale\;near-far\;gain}} + M\ln\left(M\right) + M \underbrace{\gamma}_{\mathrm{small-scale\;fading\;gain}},\hspace{-0.5mm}
\end{equation}
where $\vartheta \left( {D,{D_0}} \right)$ denotes the large-scale near-far gain given in \eqref{C3:NearFarDiversity}.

\begin{remark}
The identified two kinds of gains in ESG of the single-antenna system in \eqref{C3:EPGSISOERA2} are also observed in the ESG of MIMO-NOMA over MIMO-OMA using FDMA-ZF in \eqref{C3:EPGMIMOERA2}.
Moreover, it can be observed that both the large-scale near-far gain $\vartheta \left( {D,{D_0}} \right)$ and the small-scale fading gain $\gamma$ are increased by $M$ times as indicated in \eqref{C3:EPGMIMOERA2}.
In fact, upon comparing \eqref{C3:EPGSISOERA2} and \eqref{C3:EPGMIMOERA2}, we have
\begin{equation}\label{C3:EPGMIMOERA3}
\mathop {\lim }\limits_{K \rightarrow \infty, {P_{\mathrm{max}}} \rightarrow \infty}  \overline{G^{\mathrm{MIMO}}_{\mathrm{FDMA-ZF}}} = M\mathop {\lim }\limits_{K \rightarrow \infty, {P_{\mathrm{max}}} \rightarrow \infty} \overline{G^{{\mathrm{SISO}}}} + M\ln\left(M\right),
\end{equation}
which implies that the asymptotic ESG of MIMO-NOMA over MIMO-OMA is $M$-times of that in single-antenna systems, when there are $M$ UL receiver antennas at the BS.
In fact, for $K \rightarrow \infty$, the heterogeneity in channel directions of all the users allows the received signals to fully span across the $M$-dimensional signal space.
Hence, MIMO-NOMA and MIMO-OMA using FDMA-ZF can fully exploit the system's maximal spatial DoF $M$.
Furthermore, we have an additional power gain of $\ln\left(M\right)$ in the second term in \eqref{C3:EPGMIMOERA3}.
This is due to a factor of $\frac{1}{M}$ average power loss within each group for ZF projection to suppress the IUI in the MIMO-OMA system considered\cite{Tse2005}.
\end{remark}

Comparing \eqref{C3:ErgodicSumRateMIMONOMA} and \eqref{C3:ErgodicSumRateMIMOOMAMRC2}, the asymptotic ESG of MIMO-NOMA over MIMO-OMA with FDMA-MRC is obtained by:
\begin{align}\label{C3:EPGMIMOERAMRC}
\hspace{-3mm}\mathop {\lim }\limits_{K \rightarrow \infty}  \overline{G^{\mathrm{MIMO}}_{\mathrm{FDMA-MRC}}} &\hspace{-1mm}=
\mathop {\lim }\limits_{K \to \infty }  \overline{R_{\mathrm{sum}}^{{\mathrm{MIMO-NOMA}}}} - \overline{R_{\mathrm{sum,FDMA-MRC}}^{{\mathrm{MIMO-OMA}}}} \\
&\hspace{-1mm}\approx M{\ln}\left( {1 \hspace{-1mm}+\hspace{-1mm} \frac{{P_{\mathrm{max}}}}{{{\left(D\hspace{-1mm}+\hspace{-1mm}D_0\right)N_0}}}\sum\limits_{n = 1}^N {\frac{{{\beta _n}}}{{{c_n}}}} } \right) \notag\\ &-\frac{1}{{\left( {D + {D_0}} \right)}}\sum\limits_{n = 1}^N {\beta _n}\left( \frac{\left( \frac{{N_0{c_n}}}{{{P_{\mathrm{max}}}}} \right)^M}{{\Gamma \left(M\right)}} G_{2,3}^{3,1}\left( {\begin{array}{*{20}{c}}
{ - M, - M + 1}\\
{ - M, - M, 0}
\end{array}\left| {\frac{{N_0{c_n}}}{{P_{\mathrm{max}}}}} \right.} \right)  \right).\notag
\end{align}
Then, based on \eqref{C3:ErgodicSumRateMIMONOMA} and \eqref{C3:ErgodicSumRateMIMOOMAMRC3}, the asymptotic ESG of MIMO-NOMA over MIMO-OMA with FDMA-MRC in the low-SNR regime is given by
\begin{align}\label{C3:EPGMIMOERA5}
\mathop {\lim }\limits_{K \rightarrow \infty, {P_{\mathrm{max}}} \rightarrow 0}  \overline{R_{\mathrm{sum,FDMA-MRC}}^{{\mathrm{MIMO-OMA}}}} = 0.
\end{align}
Not surprisingly, the performance gain of MIMO-NOMA over MIMO-OMA with FDMA-MRC vanishes in the low-SNR regime, which has been shown by simulations in existing works, \cite{Ding2014} for example.
In the high-SNR regime, the asymptotic ESG of MIMO-NOMA over MIMO-OMA with FDMA-MRC can be obtained from \eqref{C3:ErgodicSumRateMIMONOMA} and \eqref{C3:ErgodicSumRateMIMOOMAMRC4} by
\begin{align}\label{C3:EPGMIMOERA6}
\mathop {\lim }\limits_{K \rightarrow \infty, {P_{\mathrm{max}}} \rightarrow \infty}  \overline{G^{\mathrm{MIMO}}_{\mathrm{FDMA-MRC}}} &\approx \left( {M - 1} \right)\ln \left( {\frac{{{P_{{\mathrm{max}}}}}}{{\left( {D + {D_0}} \right){N_0}}}\sum\limits_{n = 1}^N {\frac{{{\beta _n}}}{{{c_n}}}} } \right) - \ln \left( M \right) + \Delta,
\end{align}
where $\Delta = \ln \left( {{{\mathrm{E}}_{\bf{h}}}\left\{ {{{\left\| {\bf{h}} \right\|}^2}} \right\}} \right) - {{\mathrm{E}}_{\bf{h}}}\left\{ {\ln \left( {{{\left\| {\bf{h}} \right\|}^2}} \right)} \right\}$ denotes the gap between $\ln \left( {{{\mathrm{E}}_{\bf{h}}}\left\{ {{{\left\| {\bf{h}} \right\|}^2}} \right\}} \right)$ and ${{\mathrm{E}}_{\bf{h}}}\left\{ {\ln \left( {{{\left\| {\bf{h}} \right\|}^2}} \right)} \right\}$.

Although the closed-form ESG of MIMO-NOMA over MIMO-OMA is not available for the case of FDMA-MRC, the third term $\Delta$ in \eqref{C3:EPGMIMOERA6} is a constant for a given outer radius $D$ and inner radius $D_0$.
Besides, it is expected that the first term in \eqref{C3:EPGMIMOERA6} dominates the ESG in the high-SNR regime.
%
%
We can observe that the first term in \eqref{C3:EPGMIMOERA6} increases linearly with the system SNR in dB with a slope of $(M-1)$ in the high-SNR regime.
In other words, there is an $(M-1)$-fold DoF gain \cite{DMTadeoff} in the asymptotic ESG of MIMO-NOMA over MIMO-OMA using FDMA-MRC.
In fact, MIMO-NOMA is essentially an $M \times K$ MIMO system on all resource blocks, i.e., time slots and frequency subbands, where the system maximal spatial DoF is limited by $M$ due to $M<K$.
On the other hand, MIMO-OMA using the FDMA-MRC reception is always an $M \times 1$ MIMO system in each resource block, and thus it can only have a  spatial DoF, which is one.
As a result, an $(M-1)$-fold DoF gain can be achieved by MIMO-NOMA compared to MIMO-OMA using FDMA-MRC.
However, MIMO-OMA is only capable of offering a power gain of $\ln\left( M \right)$ owing to the MRC detection utilized at the BS and thus the asymptotic ESG in \eqref{C3:EPGMIMOERA6} suffers from a power reduction by a factor of $\ln\left( M \right)$ in the second term.
\section{ESG of \emph{m}MIMO-NOMA over \emph{m}MIMO-OMA}
In this section, we first derive the ergodic sum-rate of both \emph{m}MIMO-NOMA and \emph{m}MIMO-OMA and then discuss the asymptotic ESG of \emph{m}MIMO-NOMA over \emph{m}MIMO-OMA.

\subsection{Ergodic Sum-rate with $D>D_0$}
Let us now apply NOMA to massive MIMO systems, where a large-scale antenna array ($M \to \infty$) is employed at the BS and all the $K$ users are equipped with a single antenna.
A simple MRC-SIC receiver is adopted at the BS for data detection of \emph{m}MIMO-NOMA.
The instantaneous achievable data rate of user $k$ and the sum-rate of the \emph{m}MIMO-NOMA system using the MRC-SIC reception are given by
\begin{align}
R_{k}^{\mathrm{\emph{m}MIMO-NOMA}} &= {\ln}\left(1+ {\frac{{{p_k}{{\left\| {{{\bf{h}}_k}} \right\|}^2}}}{{\sum\limits_{i = k + 1}^K {p_i}{{\left\| {{{\bf{h}}_i}} \right\|}^2} {{\left| {\bf{e}}_k^{\mathrm{H}}{{\bf{e}}_i} \right|}^2}  + {N_0}}}} \right)\;\text{and} \label{C3:mMIMONOMAIndividualAchievableRate}\\
R_{\mathrm{sum}}^{\mathrm{\emph{m}MIMO-NOMA}} &= \sum\limits_{k = 1}^K R_{k}^{\mathrm{\emph{m}MIMO-NOMA}},\label{C3:mMIMONOMASumRate}
\end{align}
respectively, where ${{\bf{e}}_k} = \frac{{{{\bf{h}}_k}}}{{\left\| {{{\bf{h}}_k}} \right\|}}$ denotes the channel direction of user $k$.
For the massive MIMO system associated with $D>D_0$, the asymptotic ergodic sum-rate of ${K \rightarrow \infty}$ and ${M \rightarrow \infty}$ is given in the following theorem.

\begin{Thm}\label{C3:Theorem2}
For the \emph{m}MIMO-NOMA system considered in \eqref{C3:MIMONOMASystemModel} in conjunction with $D>D_0$ and MRC-SIC detection at the BS, under the equal resource allocation strategy, i.e., ${{p_{k}}} = \frac{{P_{\mathrm{max}}}}{K}$, $\forall k$, the asymptotic ergodic sum-rate can be approximated by
\begin{align}\label{C3:ErgodicSumRatemMIMONOMA}
&\mathop {\lim }\limits_{K \rightarrow \infty, M \rightarrow \infty} \overline{R_{\mathrm{sum}}^{\mathrm{\emph{m}MIMO-NOMA}}} = \mathop {\lim }\limits_{K \rightarrow \infty, M \rightarrow \infty} {{\mathrm{E}}_{\mathbf{H}}}\left\{ {R_{\mathrm{sum}}^{\mathrm{\emph{m}MIMO-NOMA}}} \right\} \\
&\approx \mathop {\lim }\limits_{K \to \infty ,M \to \infty } \sum\limits_{k = 1}^K \left( {\begin{array}{*{20}{c}}
K\\
k
\end{array}} \right){\frac{{k}}{{D + {D_0}}}} \sum\limits_{n = 1}^N {{\beta _n}\ln \left( {1 + \frac{{{\psi _k}}}{{{c_n}}}} \right)} {\left( {\frac{{\phi _n^2 \hspace{-1mm}-\hspace{-1mm} D_0^2}}{{{D^2} \hspace{-1mm}-\hspace{-1mm} D_0^2}}} \right)^{k - 1}}{\left( {\frac{{{D^2} \hspace{-1mm}-\hspace{-1mm} \phi _n^2}}{{{D^2} \hspace{-1mm}-\hspace{-1mm} D_0^2}}} \right)^{K - k}},\notag
\end{align}
with
\begin{align}\label{C3:ParametersformMIMONOMA}
\phi_n &= {\frac{D-D_0}{2}\cos \frac{{2n - 1}}{{2N}}\pi  + \frac{D+D_0}{2}},
{\psi _k} = \frac{{{P_{\mathrm{max}}}M}}{{\sum\limits_{i = k + 1}^K {{P_{\mathrm{max}}}{I_i} + K{N_0}} }}, \text{and} \\
{I_k} & = {{\mathrm{E}}_{{d_k}}}\left\{ {\frac{1}{{1 + d_k^\alpha }}} \right\} = \left( {\begin{array}{*{20}{c}}
K\\
k
\end{array}} \right){\frac{{k}}{{D + {D_0}}}} \sum\limits_{n = 1}^N \frac{{\beta _n}}{c_n} {\left( {\frac{{\phi _n^2 \hspace{-1mm}-\hspace{-1mm} D_0^2}}{{{D^2} \hspace{-1mm}-\hspace{-1mm} D_0^2}}} \right)^{k - 1}}{\left( {\frac{{{D^2} \hspace{-1mm}-\hspace{-1mm} \phi _n^2}}{{{D^2} \hspace{-1mm}-\hspace{-1mm} D_0^2}}} \right)^{K - k}}.\notag
\end{align}
\end{Thm}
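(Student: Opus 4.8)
The plan is to reduce the random per-user SINR to a deterministic quantity by exploiting the two large-system limits ($M\to\infty$ and $K\to\infty$) in turn, and then to average the resulting rate over the distance order statistics using the same Gaussian-Chebyshev quadrature already employed for \eqref{C3:SISOChannelDistributionPDF}. First I would invoke channel hardening: since ${\bf{g}}_k \sim \mathcal{CN}(\mathbf{0},{\bf{I}}_M)$, the strong law of large numbers gives $\|{\bf{g}}_k\|^2/M \to 1$, hence $\|{\bf{h}}_k\|^2 \to M/(1+d_k^\alpha)$, so the signal term in \eqref{C3:mMIMONOMAIndividualAchievableRate} behaves as $p_k\|{\bf{h}}_k\|^2 \to P_{\mathrm{max}}M/(K(1+d_k^\alpha))$. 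The same hardening renders the channel-gain ordering $\|{\bf{h}}_1\|\ge\cdots\ge\|{\bf{h}}_K\|$ asymptotically equivalent to the distance ordering $d_1\le\cdots\le d_K$, so that $d_k$ is distributed as the $k$-th order statistic of $K$ i.i.d. distances drawn from $f_d$.

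Next I would handle the interference. Rewriting each cross term as $\|{\bf{h}}_i\|^2|{\bf{e}}_k^{\mathrm{H}}{\bf{e}}_i|^2 = |{\bf{e}}_k^{\mathrm{H}}{\bf{g}}_i|^2/(1+d_i^\alpha)$, and using that the isotropic directions ${\bf{e}}_i$ are independent of the magnitudes that fix the ordering, conditioning on ${\bf{e}}_k$ leaves ${\bf{e}}_k^{\mathrm{H}}{\bf{g}}_i \sim \mathcal{CN}(0,1)$ for $i\neq k$, so each $|{\bf{e}}_k^{\mathrm{H}}{\bf{g}}_i|^2$ is a unit-mean exponential. The interference $\sum_{i=k+1}^K (p_i/(1+d_i^\alpha))|{\bf{e}}_k^{\mathrm{H}}{\bf{g}}_i|^2$ is then a sum of many independent terms; invoking the law of large numbers in $K$ replaces the exponential factors and the random distances by their means, yielding $\sum_{i=k+1}^K p_i I_i = (P_{\mathrm{max}}/K)\sum_{i=k+1}^K I_i$ with $I_i = {\mathrm{E}}\{1/(1+d_i^\alpha)\}$. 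Combining numerator and denominator and clearing the common factor $K$ gives $\mathrm{SINR}_k \to \psi_k/(1+d_k^\alpha)$ with $\psi_k$ exactly as in \eqref{C3:ParametersformMIMONOMA}, so $R_{k}^{\mathrm{\emph{m}MIMO-NOMA}} \to \ln(1+\psi_k/(1+d_k^\alpha))$.

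Finally I would take the expectation over the order-statistic density $f_{d_{(k)}}(z) = \frac{K!}{(k-1)!(K-k)!}[F(z)]^{k-1}[1-F(z)]^{K-k}f_d(z)$, where $F(z) = (z^2-D_0^2)/(D^2-D_0^2)$, and apply the rule $\int_{D_0}^D h(z)f_d(z)\,dz \approx \frac{1}{D+D_0}\sum_n \beta_n h(\phi_n)$ already used to derive \eqref{C3:SISOChannelDistributionPDF}. Evaluating at the nodes gives $1+\phi_n^\alpha = c_n$, $F(\phi_n) = (\phi_n^2-D_0^2)/(D^2-D_0^2)$ and $1-F(\phi_n) = (D^2-\phi_n^2)/(D^2-D_0^2)$; summing over $k$ reproduces \eqref{C3:ErgodicSumRatemMIMONOMA}, and the same quadrature applied to $h(z)=1/(1+z^\alpha)$ yields the stated $I_k$. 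The interchange of limit and expectation would be justified exactly as step $(a)$ in \eqref{C3:ErgodicSumRateSISONOMA}, via the bounded convergence theorem and the finiteness of the channel capacity.

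The hard part will be the rigorous justification of the double concentration in the interference term: one must show that, as $M\to\infty$ and $K\to\infty$ jointly, the product of the near-orthogonality $|{\bf{e}}_k^{\mathrm{H}}{\bf{e}}_i|^2 \approx 1/M$ with the growing number $K-k$ of interferers produces a finite deterministic limit, and that replacing the random sum of order statistics $\frac{1}{K}\sum_{i>k}1/(1+d_i^\alpha)$ by $\frac{1}{K}\sum_{i>k}I_i$ is legitimate uniformly in $k$. Extra care is needed because $\psi_k$ itself couples to all $I_i$ with $i>k$, so the cleanest route is to take the limits in the order $M\to\infty$ then $K\to\infty$, supplying a dominated-convergence bound that guarantees the approximations survive the outer expectation over ${\mathbf{H}}$.
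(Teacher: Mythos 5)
Your proposal follows essentially the same route as the paper's proof: channel hardening to replace $\|{\bf{h}}_k\|^2/M$ by $1/(1+d_k^\alpha)$, concentration of the MRC-SIC interference term (the paper invokes the $1/M$-variance of the direction projections from Theorem \ref{C3:Theorem1}'s proof, while you equivalently package $\|{\bf{h}}_i\|^2|{\bf{e}}_k^{\mathrm{H}}{\bf{e}}_i|^2$ as a unit-mean exponential scaled by the path loss), then the law of large numbers over $K$ to replace the interference by $\sum_{i>k}P_{\mathrm{max}}I_i$, followed by averaging against the distance order-statistic density and Gaussian-Chebyshev quadrature. Your explicit remark that hardening makes the channel-gain ordering coincide with the distance ordering is a point the paper uses only implicitly, but it does not change the argument.
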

\begin{proof}
Please refer to Appendix \ref{C3:AppendixB} for the proof of Theorem \ref{C3:Theorem2}.
\end{proof}

%

For the \emph{m}MIMO-OMA system using the FDMA-MRC detection, we can allocate more than one user to each frequency subband due to the above-mentioned favorable propagation property\cite{Ngo2013}.
In particular, upon allocating $W = \varsigma M$ users to each frequency subband with $\varsigma = \frac{W}{M} \ll 1$, the orthogonality among channel vectors of the $W$ users holds fairly well, hence the IUI becomes negligible.
Therefore, a random user grouping strategy is adopted, where we randomly select $W = \varsigma M$ users as a group and there are $G = \frac{K}{W}$ groups\footnote{Without loss of generality, we consider that $K$ is an integer multiple of $G$ and $W$.} separated using orthogonal frequency subbands.
%
%
In each subband, low-complexity MRC detection can be employed for each individual user and thus the instantaneous achievable data rate of user $k$ can be expressed by
\begin{equation}\label{C3:mMIMOOMAIndividualAchievableRate}
R_{k}^{\mathrm{\emph{m}MIMO-OMA}} = f_g{\ln}\left(1+ {\frac{{{p_k}{{\left\| {{{\bf{h}}_k}} \right\|}^2}}}{{{f_gN_0}}}} \right),
\end{equation}
where $f_g$ denotes the normalized frequency allocation of the $g$-th group.
Note that \eqref{C3:mMIMOOMAIndividualAchievableRate} serves as an upper bound of the instantaneous achievable data rate of user $k$ in the \emph{m}MIMO-OMA system, since we assumed it to be IUI-free.
Then, under the equal resource allocation strategy, i.e., ${{p_{k}}} = \frac{{P_{\mathrm{max}}}}{K}$ and $f_g = 1/G = \frac{W}{K} = \delta\varsigma$, we have the asymptotic ergodic sum-rate of the \emph{m}MIMO-OMA system associated with $D > D_0$ as follows:
\begin{align}\label{C3:ErgodicSumRatemMIMOOMA}
&\mathop {\lim }\limits_{M \rightarrow \infty} \overline{R_{\mathrm{sum}}^{\mathrm{\emph{m}MIMO-OMA}}} = \mathop {\lim }\limits_{M \rightarrow \infty} {{\mathrm{E}}_{\mathbf{H}}}\left\{ {R_{\mathrm{sum}}^{\mathrm{\emph{m}MIMO-OMA}}} \right\} \\
&= \mathop {\lim }\limits_{M \to \infty } \delta\varsigma\sum\limits_{k = 1}^K \left( {\begin{array}{*{20}{c}}
K\\
k
\end{array}} \right){\frac{{k}}{{D + {D_0}}}} \sum\limits_{n = 1}^N {{\beta _n}\ln \left( {1 + \frac{{{\xi}}}{{{c_n}}}} \right)} {\left( {\frac{{\phi _n^2 - D_0^2}}{{{D^2} - D_0^2}}} \right)^{k - 1}}{\left( {\frac{{{D^2} - \phi _n^2}}{{{D^2} - D_0^2}}} \right)^{K - k}},\notag
\end{align}
where $\phi_n$ is given in \eqref{C3:ParametersformMIMONOMA} and ${\xi} = \frac{{{P_{\mathrm{max}}}}}{\varsigma N_0}$.
%

\subsection{Ergodic Sum-rate with $D=D_0$}
We note that the analytical results in \eqref{C3:ErgodicSumRatemMIMONOMA} and \eqref{C3:ErgodicSumRatemMIMOOMA} are only applicable to the system having $D>D_0$.
The asymptotic ergodic sum-rate of the \emph{m}MIMO-NOMA system with $D=D_0$ can be expressed using the following theorem.

\begin{Thm}\label{C3:Theorem3}
With $D = D_0$ and the equal resource allocation strategy, i.e., ${{p_{k}}} = \frac{{P_{\mathrm{max}}}}{K}$ and $f_g = 1/G = \frac{W}{K}= \delta\varsigma$, the asymptotic ergodic sum-rate of the \emph{m}MIMO-NOMA system and of the \emph{m}MIMO-OMA system can be formulated by
\begin{align}
\mathop {\lim }\limits_{K \rightarrow \infty, M \rightarrow \infty} \overline{R_{\mathrm{sum}}^{\mathrm{\emph{m}MIMO-NOMA}}}
&\approx \mathop {\lim }\limits_{K \to \infty ,M \to \infty } \frac{{M}}{\varpi\delta} \left[ \ln \left( 1 + \varpi\delta + \varpi \right) \left( 1 + \varpi\delta + \varpi \right) \right. \notag\\
&- \left. \ln \left( 1 + \varpi\delta \right)\left( 1 + \varpi\delta \right) - \ln \left( 1 + \varpi \right)\left( 1 + \varpi \right)\right] \;\;\text{and}\label{C3:DD0ErgodicSumRatemMIMONOMA}\\
\mathop {\lim }\limits_{M \rightarrow \infty} \overline{R_{\mathrm{sum}}^{\mathrm{\emph{m}MIMO-OMA}}}
&= \mathop {\lim }\limits_{M \to \infty } {\varsigma M} \ln \left( {1 + \frac{\varpi}{\varsigma}} \right),\label{C3:DD0ErgodicSumRatemMIMOOMA}
\end{align}
respectively, where $\delta = \frac{M}{K}$ and $\varsigma = \frac{W}{M}$ are constants and $\varpi = \frac{{P_{\mathrm{max}}}}{\left({1+D_0^{\alpha}}\right) N_0}$ denotes the total average received SNR of all the users.
\end{Thm}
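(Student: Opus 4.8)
The plan is to exploit the fact that setting $D = D_0$ collapses the user-distance distribution to a point mass at $D_0$, so that every user shares the identical large-scale fading factor $1/(1+D_0^\alpha)$ and the channels become i.i.d.\ with $\mathbf{h}_k = \mathbf{g}_k / \sqrt{1 + D_0^\alpha}$, $\mathbf{g}_k \sim \mathcal{CN}(\mathbf{0},\mathbf{I}_M)$. This is precisely the degenerate case in which the distance-order-statistic factors appearing in Theorem~\ref{C3:Theorem2} (the binomial weights and the terms $((\phi_n^2 - D_0^2)/(D^2 - D_0^2))^{k-1}$) become indeterminate, which is why a separate derivation is required rather than a naive specialization of \eqref{C3:ErgodicSumRatemMIMONOMA}. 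The core idea is to invoke the massive-MIMO channel-hardening and favorable-propagation properties to replace the random SINR of each user by a deterministic limit, and then to recognize the resulting sum over users as a Riemann sum converging to a tractable integral.

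For the NOMA case I would proceed as follows. First, using channel hardening, $\|\mathbf{g}_k\|^2/M \to 1$, the desired signal power of the $k$-th decoded user in \eqref{C3:mMIMONOMAIndividualAchievableRate} satisfies $p_k\|\mathbf{h}_k\|^2 \to P_{\mathrm{max}}M/(K(1+D_0^\alpha))$. Next, for the residual MRC-SIC interference I would use the representation $\mathbf{g}_i = \|\mathbf{g}_i\|\mathbf{e}_i$ to observe that $\|\mathbf{h}_i\|^2 |\mathbf{e}_k^H\mathbf{e}_i|^2 = |\mathbf{e}_k^H\mathbf{g}_i|^2/(1+D_0^\alpha)$, and that conditioned on the direction $\mathbf{e}_k$ (which is independent of $\mathbf{g}_i$ for $i\neq k$) the scalar $\mathbf{e}_k^H\mathbf{g}_i \sim \mathcal{CN}(0,1)$, so each leakage term is a unit-mean exponential. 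By the law of large numbers the aggregate interference then concentrates at $P_{\mathrm{max}}(K-k)/(K(1+D_0^\alpha))$. Substituting these limits and using $\varpi = P_{\mathrm{max}}/((1+D_0^\alpha)N_0)$ and $\delta = M/K$ yields the per-user rate $R_k \to \ln(1 + \varpi\delta/(1+\varpi(1-k/K)))$. Letting $K\to\infty$, the normalized sum $\frac{1}{K}\sum_{k}R_k$ converges to $\int_0^1 \ln(1 + \varpi\delta/(1+\varpi u))\,du$ after the substitution $u = 1 - k/K$; evaluating this integral with the antiderivative $\int \ln(a+\varpi u)\,du = [(a+\varpi u)\ln(a+\varpi u) - \varpi u]/\varpi$ and multiplying by $K = M/\delta$ reproduces \eqref{C3:DD0ErgodicSumRatemMIMONOMA}.

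For the OMA case the argument is shorter: the rate in \eqref{C3:mMIMOOMAIndividualAchievableRate} is already the IUI-free expression guaranteed by favorable propagation, so I would only apply channel hardening, $\|\mathbf{h}_k\|^2 \to M/(1+D_0^\alpha)$, together with $p_k = P_{\mathrm{max}}/K$ and $f_g = \delta\varsigma$ to get $R_k \to \delta\varsigma\ln(1 + \varpi/\varsigma)$. Since all $K$ users are statistically identical, the sum is simply $K$ times this value, and $K\delta = M$ gives \eqref{C3:DD0ErgodicSumRatemMIMOOMA}. The main obstacle is the NOMA interference step: justifying that the aggregate residual interference concentrates uniformly enough in $k$ to permit the exchange of limit and expectation (via the bounded convergence theorem, as in the proof of Theorem~\ref{C3:Theorem1}) and the subsequent passage to the Riemann integral. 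Concentration of the unit-mean exponential leakage terms is only guaranteed when $K - k \to \infty$, so I would have to argue separately that the last-decoded users (for which $K-k$ stays bounded) constitute a vanishing fraction of the $O(K)$ summands and therefore do not affect the limiting integral.
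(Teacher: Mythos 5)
Your proposal is correct and follows essentially the same route as the paper's proof: channel hardening to make each user's SINR deterministic, concentration of the aggregate MRC-SIC leakage (your unit-mean exponential argument is just an inlined re-derivation of what the paper inherits from Lemma~\ref{C3:Lemma2} and equation \eqref{C3:mMIMONOMAIndividualAchievableRate2} in the proofs of Theorems~\ref{C3:Theorem1} and~\ref{C3:Theorem2}), conversion of the per-user sum to the Riemann integral $\int_0^1 \ln\bigl(1+\tfrac{\delta\varpi}{(1-x)\varpi+1}\bigr)dx$, and the same closed-form evaluation; the OMA part likewise matches. Your closing remark about the last-decoded users with bounded $K-k$ is a point of rigor the paper passes over silently, but it refines rather than changes the argument.
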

\begin{proof}
Please refer to Appendix \ref{C3:AppendixC} for the proof of Theorem \ref{C3:Theorem3}.
\end{proof}

\subsection{ESG in Massive-antenna Systems}
Based on \eqref{C3:ErgodicSumRatemMIMONOMA} and \eqref{C3:ErgodicSumRatemMIMOOMA}, when $D>D_0$, the asymptotic ESG of \emph{m}MIMO-NOMA over \emph{m}MIMO-OMA associated with ${K \to \infty}$ and ${M \to \infty}$ can be expressed as follows:
\begin{align}\label{C3:DD0EPGmMIMOERA0}
&\mathop {\lim }\limits_{K \to \infty, {M \to \infty}}  \overline{G^{\mathrm{\emph{m}MIMO}}_{D > D_0}} \notag\\
&\approx \mathop {\lim }\limits_{K \to \infty ,M \to \infty } \sum\limits_{k = 1}^K {\left( \hspace{-1mm} {\begin{array}{*{20}{c}}
K\\
k
\end{array}} \hspace{-1mm}\right)} \frac{k}{{D \hspace{-1mm}+\hspace{-1mm} {D_0}}}\sum\limits_{n = 1}^N {{\beta _n}\left[ {\ln \left( {1 \hspace{-1mm}+\hspace{-1mm} \frac{{{\psi _k}}}{{{c_n}}}} \right) - \delta\varsigma\ln \left( {1 \hspace{-1mm}+\hspace{-1mm} \frac{{{\xi}}}{{{c_n}}}} \right)} \right]}  \notag\\
&\times {\left( {\frac{{\phi _n^2 - D_0^2}}{{{D^2} - D_0^2}}} \right)^{k - 1}}{\left( {\frac{{{D^2} - \phi _n^2}}{{{D^2} - D_0^2}}} \right)^{K - k}}.
\end{align}
However, the expression in \eqref{C3:DD0EPGmMIMOERA0} is too complicated and does not provide immediate insights.
Hence, we focus on the case of $D = D_0$ to unveil some important and plausible insights on the ESG of NOMA over OMA in the massive MIMO system.
The simulation results of Section \ref{C3:Simulations} will show that the insights obtained from the case of $D=D_0$ are also applicable to the general scenario of $D>D_0$.

Comparing \eqref{C3:DD0ErgodicSumRatemMIMONOMA} and \eqref{C3:DD0ErgodicSumRatemMIMOOMA}, when $D=D_0$, we have the asymptotic ESG of \emph{m}MIMO-NOMA over \emph{m}MIMO-OMA for ${K \to \infty}$ and ${M \to \infty}$ as follows:
\begin{align}\label{C3:DD0EPGmMIMOERA}
\hspace{-3mm}\mathop {\lim }\limits_{K \to \infty, {M \to \infty}}  \overline{G^{\mathrm{\emph{m}MIMO}}_{D = D_0}} &= \mathop {\lim }\limits_{K \to \infty, {M \to \infty}}  \overline{R_{\mathrm{sum}}^{{\mathrm{\emph{m}MIMO-NOMA}}}} - \overline{R_{\mathrm{sum}}^{{\mathrm{\emph{m}MIMO-OMA}}}} \\
& \approx \mathop {\lim }\limits_{K \to \infty ,M \to \infty } \frac{{M}}{\varpi\delta} \left[ \ln \left( 1 + \varpi\delta + \varpi \right) \left( 1 + \varpi\delta + \varpi \right) \right. \notag\\
& - \left.\ln \left( 1 + \varpi\delta \right)\left( 1 + \varpi\delta \right) - \ln \left( 1 + \varpi \right)\left( 1 + \varpi \right)\right] -  {\varsigma M} \ln \left( {1 + \frac{\varpi}{\varsigma}} \right).\notag
\end{align}
In the low-SNR regime, we can observe that $\mathop {\lim }\limits_{K \to \infty, {M \to \infty}, {P_{\mathrm{max}}} \to 0}  \overline{G^{\mathrm{\emph{m}MIMO}}_{D = D_0}} \to 0$.
This implies that no gain can be achieved by NOMA in the low-SNR regime, which is consistent with \eqref{C3:EPGMIMOERA5}.
By contrast, in the high-SNR regime, we have
\begin{align}
\mathop {\lim }\limits_{K \to \infty, {M \to \infty}, {P_{\mathrm{max}}} \to \infty}  \overline{G^{\mathrm{\emph{m}MIMO}}_{D = D_0}}
&\approx \mathop {\lim }\limits_{K \to \infty ,M \to \infty ,{P_{{\mathrm{max}}}} \to \infty } \frac{{M}}{\delta} \zeta  -  {\varsigma M} \ln \left( {1 + \frac{\varpi}{\varsigma}} \right) \label{C3:DD0EPGmMIMOERA2}\\
&= \mathop {\lim }\limits_{K \to \infty ,M \to \infty ,{P_{{\mathrm{max}}}} \to \infty } K \zeta  -  {\delta\varsigma K} \ln \left( {1 + \frac{\varpi}{\varsigma}} \right),\label{C3:DD0EPGmMIMOERA3}
\end{align}
where $\zeta = \left[ \ln \left( {1 \hspace{-0.5mm}+\hspace{-0.5mm} \varpi \delta  \hspace{-0.5mm}+\hspace{-0.5mm} \varpi } \right)\left( {1 \hspace{-0.5mm}+\hspace{-0.5mm} \delta } \right) \hspace{-0.5mm}-\hspace{-0.5mm} \ln \left( {1 \hspace{-0.5mm}+\hspace{-0.5mm} \varpi \delta } \right)\delta  \hspace{-0.5mm}-\hspace{-0.5mm} \ln \left( {1 \hspace{-0.5mm}+\hspace{-0.5mm} \varpi } \right) \right]$ represents the extra ergodic sum-rate gain upon supporting an extra user by the \emph{m}MIMO-NOMA system considered.
Explicitly, for $K \to \infty$, ${M \to \infty}$, and ${P_{\mathrm{max}}} \rightarrow \infty$, the resultant extra benefit $\zeta$ is jointly determined by the average received sum SNR $\varpi$ and the fixed ratio $\delta$.
Observe in \eqref{C3:DD0EPGmMIMOERA2} and \eqref{C3:DD0EPGmMIMOERA3} that given the average received sum SNR $\varpi$ and the fixed ratios $\delta$ and $\varsigma$, the asymptotic ESG scales linearly with both the number of UL receiver antennas at the BS, $M$ and the number of users, $K$, respectively.
In other words, the asymptotic ESG per user and the asymptotic ESG per antenna of \emph{m}MIMO-NOMA over \emph{m}MIMO-OMA are constant and they are given by
\begin{align}
\mathop {\lim }\limits_{K \to \infty, {M \to \infty}, {P_{\mathrm{max}}} \to \infty}  \frac{\overline{G^{\mathrm{\emph{m}MIMO}}_{D = D_0}}}{K} &= \zeta  -  {\delta\varsigma } \ln \left( {1 + \frac{\varpi}{\varsigma}} \right)\;\text{and} \label{C3:DD0EPGPerUsermMIMOERA2}\\
\mathop {\lim }\limits_{K \to \infty, {M \to \infty}, {P_{\mathrm{max}}} \to \infty}  \frac{\overline{G^{\mathrm{\emph{m}MIMO}}_{D = D_0}}}{M} &= \frac{\zeta}{\delta} -  {\varsigma} \ln \left( {1 + \frac{\varpi}{\varsigma}} \right),\; \label{C3:DD0EPGPerAntennamMIMOERA3}
\end{align}
respectively.
We can explain this observation from the spatial DoF perspective, since it determines the pre-log factor for the ergodic sum-rate of both \emph{m}MIMO-NOMA and \emph{m}MIMO-OMA and thus also determines the pre-log factor of the corresponding ESG.
In particular, the \emph{m}MIMO-NOMA system considered is basically an $(M \times K)$ MIMO system associated with $M<K$, since all the $K$ users transmit their signals simultaneously in the same frequency band.
When scaling up the \emph{m}MIMO-NOMA system while maintaining a fixed ratio $\delta = \frac{M}{K}$, the system's spatial DoF increases linearly  both with $M$ and $K$.
On the other hand, the spatial DoF of the \emph{m}MIMO-OMA system is limited by the group size $W$, since it is always an $(M \times W)$ MIMO system associated with $W \ll M$ in each time slot and frequency subband.
Therefore, the system's spatial DoF increases linearly with both $W$, and $M$, as well as $K$, when scaling up the \emph{m}MIMO-OMA system under fixed ratios of $\delta = \frac{M}{K}$ and $\varsigma = \frac{W}{M}$.
As a result, due to the linear increase of the spatial DoF with $M$ as well as $K$ for both the \emph{m}MIMO-NOMA and \emph{m}MIMO-OMA systems, the asymptotic ESG increases linearly with both $M$ and $K$.
Note that in contrast to \eqref{C3:EPGMIMOERA6}, there is no DoF gain, despite the fact that the asymptotic ESG scales linearly both with $M$ as well as $K$.
This is because the extra benefit $\zeta$ does not increase linearly with the system's SNR in dB.
As a result, the asymptotic ESG of \emph{m}MIMO-NOMA over \emph{m}MIMO-OMA cannot increase linearly with the system's SNR in dB, as it will be shown in Section \ref{C3:Simulations}.

\section{ESG in Multi-cell Systems}\label{C3:DiscussionsMulticell}
\begin{figure}[t]
\centering
\includegraphics[width=4in]{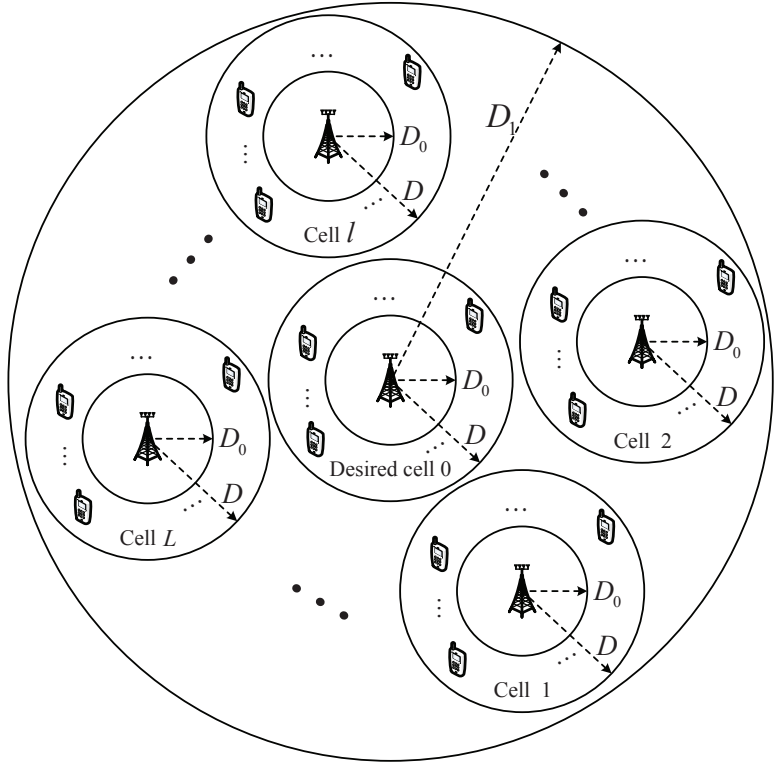}
\caption{The system model of the multi-cell uplink communication with one serving cell and $L$ adjacent cells.}
\label{C3:Multi-cellNOMA_Uplink_Model}
\end{figure}

In previous sections, the performance gain of NOMA over OMA has been investigated in single-cell systems, since these analytical results are easily comprehensible and reveal directly plausible insights.
Naturally, the performance gain of NOMA over OMA in single-cell systems serves as an upper bound on that of non-cooperative multi-cell systems, which can be approached by employing conservative frequency reuse strategy.
In practice, cellular networks consist of multiple cells where the inter-cell interference (ICI) is inevitable.
Furthermore, the characteristics of the ICI for NOMA and OMA schemes are different.
In particular, ICI is imposed by all the users in adjacent cells for NOMA schemes, while only a subset of users inflict ICI in OMA schemes, as an explicit benefit of orthogonal time or frequency allocation.
As a result, NOMA systems face more severe ICI than that of OMA, hence it remains unclear, if applying NOMA is still beneficial in multi-cell systems.
Therefore, in this section, we investigate the ESG of NOMA over OMA in multi-cell systems.
\subsection{Inter-cell Interference in NOMA and OMA Systems}
Consider a multi-cell system having multiple non-overlapped adjacent cells with index $l = 1,\ldots,L$, which are randomly deployed and  surround the serving cell $l=0$, as shown in Figure \ref{C3:Multi-cellNOMA_Uplink_Model}.
We assume that the $L$ interfering cells have the same structure as the serving cell and they are uniformly distributed in a ring-shaped disc of Figure \ref{C3:Multi-cellNOMA_Uplink_Model} having the inner radius of $D$ and outer radius of $D_1$.
Furthermore, we adopt the radical frequency reuse factor of 1, i.e., using the same frequency band for all cells to facilitate the performance analysis\footnote{With a less-aggressive frequency reuse strategy in multi-cell systems, both NOMA and OMA schemes endure less ICI since only the adjacent cells using the same frequency band with the serving cell are taken into account. As a result, the performance analyses derived in this chapter can be extended to the case with a lower frequency reuse ratio by simply decreasing number of adjacent cells $L$. Again, the resultant performance will then approach the performance upper-bound of the single-cell scenario.}.
Again, we are assuming that in each cell there is a single $M$-antenna BS serving $K$ single-antenna users in the UL and thus there are $KL$ users imposing interference on the serving BS.
Additionally, to reduce both the system's overhead and its complexity, no cooperative multi-cell processing is included in our multi-cell system considered.
In the following, we first investigate the resultant ICI distribution and then derive the total received ICI power contaminating over NOMA and OMA systems.

Given the normalized UL receive beamforming vector of user $k$ in the serving cell at the serving BS represented by $\mathbf{w}_k \in \mathbb{C}^{M \times 1}$ with $\left\|\mathbf{w}_k\right\|^2 = 1$, the effective ICI channel spanning from user $k'$ in adjacent cell $l$ to the serving BS can be formulated as:
\begin{equation}\label{C3:MulticelleEffectiveChannelModel}
{h_{k',l}} = {\bf{w}}_k^{\mathrm{H}}{\mathbf{h}_{k',l}} = \frac{{\bf{w}}_k^{\mathrm{H}}{\bf{g}}_{k',l}}{\sqrt{1+d_{k',l}^{\alpha}}},
\end{equation}
where ${\mathbf{h}_{k',l}} = \frac{{\bf{g}}_{k',l}}{\sqrt{1+d_{k',l}^{\alpha}}}$ denotes the channel vector from user $k'$ in adjacent cell $l$ to the serving BS, ${\bf{g}}_{k',l} \in \mathbb{C}^{ M \times 1}$ represents the Rayleigh fading coefficients, i.e., ${\bf{g}}_{k',l} \sim \mathcal{CN}\left(\mathbf{0},{{\bf{I}}_M}\right)$, and $d_{k',l}$ denotes the distance between user $k'$ in adjacent cell $l$ and the serving BS with the unit of meter.
Similar to the single-cell system considered, we assume that the CSI of all the users within the serving cell is perfectly known at the serving BS.
However, the ICI channel is unknown for the serving BS.
Note that the receive beamformer ${\bf{w}}_k$ of the serving BS depends on the instantaneous channel vector of user $k$, ${\mathbf{h}_{k}}$, and/or on the multiple access interference structure $\left[ {{{\bf{h}}_1}, \ldots ,{{\bf{h}}_{k - 1}},{{\bf{h}}_{k + 1}}, \ldots ,{{\bf{h}}_K}} \right]$ in the serving cell.
Therefore, the receive beamformer ${\bf{w}}_k$ of the serving cell is independent of the ICI channel ${\bf{g}}_{k',l}$.
As a result, owing to $\left\|\mathbf{w}_k\right\|^2 = 1$, it can be readily observed that ${{\bf{w}}_k^{\mathrm{H}}{\bf{g}}_{k',l}}$ obeys the circularly symmetric complex Gaussian distribution conditioned on the given $\mathbf{w}_k$, i.e., we have ${\bf{w}}_k^{\mathrm{H}}{{\bf{g}}_{k',l}}\left| {_{{{\bf{w}}_k}}} \right. \sim \mathcal{CN}\left(0,1\right)$.
However, since the resultant distribution $\mathcal{CN}\left(0,1\right)$ is independent of ${\bf{w}}_k$, we can safely drop the condition and directly apply ${\bf{w}}_k^{\mathrm{H}}{{\bf{g}}_{k',l}} \sim \mathcal{CN}\left(0,1\right)$.
Now, based on \eqref{C3:MulticelleEffectiveChannelModel}, we can observe that the effective ICI channel ${h_{k',l}}$ is equivalent to a single-antenna Rayleigh fading channel associated with a distance of $d_{k',l}$, regardless of how many antennas are employed at the serving BS.

Given that each user is equipped with a single-antenna, the transmission of each user is omnidirectional.
Therefore, to facilitate the analysis of the ICI power, we assume that there is no gap between the adjacent cells and that the inner radius of each adjacent cell is zero, i.e., $D_0 = 0$.
Hence, we can further assume that the ICI emanates from $KL$ users uniformly distributed within the ring-shaped disc having the inner radius of $D$ and outer radius of $D_1$.
Similar to \eqref{C3:SISOChannelDistributionPDF} and \eqref{C3:SISOChannelDistributionCDF}, the CDF and PDF of ${{\left| {h_{k',l}} \right|}^2}$ are given by
\begin{align}
{F_{{{\left| {h_{k',l}} \right|}^2}}}\left( x \right) &\approx 1 - \frac{1}{D+D_1}\sum\limits_{n = 1}^N {{\beta' _n}{e^{ - {c'_n}x}}} \; \text{and} \label{C3:MulticellEffectiveChannelDistributionPDF}\\
{f_{{{\left| {h_{k',l}} \right|}^2}}}\left( x \right) &\approx \frac{1}{D+D_1}\sum\limits_{n = 1}^N {{\beta' _n}{c'_n}{e^{ - {c'_n}x}}}, x \ge 0, \forall k',l \label{C3:MulticellEffectiveChannelDistributionCDF}
\end{align}
respectively, with parameters of
\begin{align}\label{C3:MulticellEffectiveChannelBetaCn}
{\beta'_n} &= \frac{\pi }{N}\left| {\sin \frac{{2n \hspace{-1mm}-\hspace{-1mm} 1}}{{2N}}\pi } \right|\left( {\frac{D_1\hspace{-1mm}-\hspace{-1mm}D}{2}\cos \frac{{2n \hspace{-1mm}-\hspace{-1mm} 1}}{{2N}}\pi  + \frac{D_1\hspace{-1mm}+\hspace{-1mm}D}{2}} \right) \;\text{and}\notag\\
{c'_n} &= 1 + {\left( {\frac{D_1\hspace{-1mm}-\hspace{-1mm}D}{2}\cos \frac{{2n \hspace{-1mm}-\hspace{-1mm} 1}}{{2N}}\pi  + \frac{D_1\hspace{-1mm}+\hspace{-1mm}D}{2}} \right)^\alpha}.
\end{align}
Note that all the adjacent cell users have i.i.d. channel distributions since we ignore the adjacent cells' structure.

Due to the ICI encountered in unity-frequency-reuse multi-cell systems, the performance is determined by the signal-to-interference-plus-noise ratio (SINR) instead of the SNR of single-cell systems.
Assuming that the ICI is treated as AWGN by the detector, the system's SINR can be defined as follows:
\begin{equation}\label{C3:MulticellSINR}
{\rm{SINR}_{sum}^{multicell}} = \frac{{P_{\mathrm{max}}}}{{I_{{\mathrm{inter}}}} + N_0} {\overline{{{\left| {{{h}}} \right|}^2}}},
\end{equation}
where ${I_{{\mathrm{inter}}}}$ characterizes the ICI power in multi-cell systems and ${P_{\mathrm{max}}}$ denotes the same system power budget in each single cell.

To facilitate our performance analysis, we assume that the equal resource allocation strategy is adopted in all the adjacent cells, i.e., $p_{k',l} = \frac{{P_{\mathrm{max}}}}{K}$, $\forall k',l$.
When invoking NOMA in a multi-cell system, the ICI power can be modeled as
\begin{equation}\label{C3:MulticellInterference}
{I^{\mathrm{NOMA}}_{{\mathrm{inter}}}} = \sum\limits_{l = 1}^L {\sum\limits_{k' = 1}^K {\frac{{{P_{{\mathrm{max}}}}}}{K}} } {\left| {{h_{k',l}}} \right|^2}.
\end{equation}
For $KL \to \infty$, ${I^{\mathrm{NOMA}}_{{\mathrm{inter}}}}$ becomes a deterministic value, which can be approximated by
\begin{equation}\label{C3:MulticellInterference2}
\mathop {\lim }\limits_{KL \to \infty} {I^{\mathrm{NOMA}}_{{\mathrm{inter}}}} \approx L {P_{{\mathrm{max}}}} \overline{{\left| {{h_{k',l}}} \right|^2}} \approx \frac{L {P_{{\mathrm{max}}}} }{D+D_1}\sum\limits_{n = 1}^N {\frac{{{\beta' _n}}}{{{c'_n}}}}.
\end{equation}
As a result, the SINR of the multi-cell NOMA system considered is given by
\begin{equation}\label{C3:MulticellSINRNOMA}
{\rm{SINR}_{sum,NOMA}^{multicell}} = \frac{{P_{\mathrm{max}}}}{\frac{L {P_{{\mathrm{max}}}} }{D+D_1}\sum\limits_{n = 1}^N {\frac{{{\beta' _n}}}{{{c'_n}}}} + N_0} {\overline{{{\left| {{{h}}} \right|}^2}}}.
\end{equation}

For OMA schemes, we assume that all the $K$ users in each cell are clustered into $G$ groups, with each group allocated to a frequency subband exclusively.
Since only $\frac{1}{G}$ of users in each adjacent cell are simultaneously transmitting their signals in each frequency subband, the ICI power in a multi-cell OMA system can be expressed as:
\begin{equation}\label{C3:MulticellInterference3}
\mathop {\lim }\limits_{KL \to \infty} {I^{\mathrm{OMA}}_{{\mathrm{inter}}}} = \frac{1}{G} \mathop {\lim }\limits_{KL \to \infty} {I^{\mathrm{NOMA}}_{{\mathrm{inter}}}} \approx
\frac{L {P_{{\mathrm{max}}}}}{G(D+D_1)}\sum\limits_{n = 1}^N {\frac{{{\beta' _n}}}{{{c'_n}}}}.
\end{equation}
The SINR of the multi-cell OMA system considered can be written as:
\begin{equation}\label{C3:MulticellSINROMA}
{\rm{SINR}_{sum,OMA}^{multicell}} = \frac{{P_{\mathrm{max}}}}{\frac{L {P_{{\mathrm{max}}}} }{G(D+D_1)}\sum\limits_{n = 1}^N {\frac{{{\beta' _n}}}{{{c'_n}}}} + \frac{1}{G} N_0} {\overline{{{\left| {{{h}}} \right|}^2}}}.
\end{equation}
Note that we have $G = K$ for SISO-OMA and MIMO-OMA with FDMA-MRC, $G = \frac{K}{M}$ for MIMO-OMA with FDMA-ZF, and $G = \frac{K}{W}$ for \emph{m}MIMO-OMA with FDMA-MRC.

\subsection{ESG in Multi-cell Systems}
It can be observed that ${I^{\mathrm{NOMA}}_{{\mathrm{inter}}}}$ in \eqref{C3:MulticellInterference2} and ${I^{\mathrm{OMA}}_{{\mathrm{inter}}}}$ in \eqref{C3:MulticellInterference3} are independent of the number of antennas employed at the serving BS, which is due to the non-coherent combining used at the serving BS ${\bf{w}}_k^{\mathrm{H}}{\bf{g}}_{k',l}$, thereby leading to the effective ICI channel becoming equivalent to a single-antenna Rayleigh fading channel.
Therefore, all the ergodic sum-rates of NOMA in single-antenna, multi-antenna, and massive MIMO single-cell systems are degraded upon replacing the noise power $N_0$ by $({{I^{\mathrm{NOMA}}_{{\mathrm{inter}}}} + N_0})$.
On the other hand, since OMA schemes only face a noise power level of $\frac{1}{G}N_0$ on each subband, all the ergodic sum-rates of the OMA schemes in single-antenna, multi-antenna, and massive MIMO single-cell systems are reduced upon substituting the noise power $\frac{1}{G}N_0$ by ${{I^{\mathrm{OMA}}_{{\mathrm{inter}}}} + \frac{1}{G}N_0} = \frac{1}{G} \left( {{I^{\mathrm{NOMA}}_{{\mathrm{inter}}}} \hspace{-1mm}+\hspace{-1mm} {N_0}} \right)$.

Given the ICI terms ${I^{\mathrm{NOMA}}_{{\mathrm{inter}}}}$ and ${I^{\mathrm{OMA}}_{{\mathrm{inter}}}}$, we have the corresponding asymptotic ESGs in single-antenna, multi-antenna, and massive MIMO multi-cell systems as follows:
\begin{align}
\mathop {\lim }\limits_{K \rightarrow \infty} \overline{G^{{\mathrm{SISO}}}}'
&\approx \ln \left( {1 + \frac{{{P_{{\mathrm{max}}}}}}{{\left( {D + {D_0}} \right)\left( {{I^{\mathrm{NOMA}}_{{\mathrm{inter}}}} \hspace{-1mm}+\hspace{-1mm} {N_0}} \right)}}\sum\limits_{n = 1}^N {\frac{{{\beta _n}}}{{{c_n}}}} } \right)\label{C3:MultiCellEPGSISOERA2}\\
&-\frac{1}{{\left( {D + {D_0}} \right)}}\sum\limits_{n = 1}^N {{\beta _n}{e^{\frac{{{c_n}\left( {{I^{\mathrm{NOMA}}_{{\mathrm{inter}}}} + {N_0}} \right)}}{{{P_{{\mathrm{max}}}}}}}}{{\cal E}_1}\left( {\frac{{{c_n}\left( {{I^{\mathrm{NOMA}}_{{\mathrm{inter}}}} +{N_0}} \right)}}{{{P_{{\mathrm{max}}}}}}} \right)}, \notag\\
\mathop {\lim }\limits_{K \rightarrow \infty}  \overline{G^{{\mathrm{MIMO}}}_{\mathrm{FDMA-ZF}}}'
&\approx M\ln \left( {1 + \frac{{{P_{{\mathrm{max}}}}}}{{\left( {D + {D_0}} \right)\left( {{I^{\mathrm{NOMA}}_{{\mathrm{inter}}}} \hspace{-1mm}+\hspace{-1mm} {N_0}} \right)}}\sum\limits_{n = 1}^N {\frac{{{\beta _n}}}{{{c_n}}}} } \right)\label{C3:MultiCellEPGMIMOERA2}\\
&-\frac{M}{{\left( {D + {D_0}} \right)}}\sum\limits_{n = 1}^N {{\beta _n}{e^{\frac{{{c_n}M\left( {{I^{\mathrm{NOMA}}_{{\mathrm{inter}}}} + {N_0}} \right)}}{{{P_{{\mathrm{max}}}}}}}}{{\cal E}_1}\left( {\frac{{{c_n}M\left( {{I^{\mathrm{NOMA}}_{{\mathrm{inter}}}} \hspace{-1mm}+\hspace{-1mm} {N_0}} \right)}}{{{P_{{\mathrm{max}}}}}}} \right)},\notag\\
\mathop {\lim }\limits_{K \rightarrow \infty}  \overline{G^{\mathrm{MIMO}}_{\mathrm{FDMA-MRC}}}' &\approx M\ln \left( {1 + \frac{{{P_{{\mathrm{max}}}}}}{{\left( {D + {D_0}} \right)\left( {{I^{\mathrm{NOMA}}_{{\mathrm{inter}}}} \hspace{-1mm}+\hspace{-1mm} {N_0}} \right)}}\sum\limits_{n = 1}^N {\frac{{{\beta _n}}}{{{c_n}}}} } \right)\label{C3:MultiCellEPGMIMOERAMRC}\\
&\hspace{-35mm}-\frac{1}{{\left( {D + {D_0}} \right)}}\sum\limits_{n = 1}^N {{\beta _n}} \left( {\frac{{{{\left( {\frac{{\left( {{I^{\mathrm{NOMA}}_{{\mathrm{inter}}}} + {N_0}} \right){c_n}}}{{{P_{{\mathrm{max}}}}}}} \right)}^M}}}{{\Gamma \left( M \right)}}G_{2,3}^{3,1}\left(\hspace{-2mm} {\begin{array}{*{20}{c}}
{ - M, - M + 1}\\
{ - M, - M,0}
\end{array}\hspace{-2mm}\left| {\frac{{\left( {{I^{\mathrm{NOMA}}_{{\mathrm{inter}}}} \hspace{-1mm}+\hspace{-1mm} {N_0}} \right){c_n}}}{{{P_{{\mathrm{max}}}}}}} \right.} \right)} \right),\notag\\
\mathop {\lim }\limits_{K \to \infty, {M \to \infty}}  \overline{G^{\mathrm{\emph{m}MIMO}}_{D > D_0}}' &\approx \mathop {\lim }\limits_{K \to \infty ,M \to \infty } \sum\limits_{k = 1}^K {\left( \hspace{-1mm} {\begin{array}{*{20}{c}}
K\\
k
\end{array}} \hspace{-1mm}\right)} \frac{k}{{D \hspace{-1mm}+\hspace{-1mm} {D_0}}}\sum\limits_{n = 1}^N {\beta _n} \label{C3:MultiCellDD0EPGmMIMOERA0}\\
&\hspace{-25mm}\times\left[ {\ln \left( {1 \hspace{-1mm}+\hspace{-1mm} \frac{{{\psi _k}'}}{{{c_n}}}} \right) - \delta\varsigma\ln \left( {1 \hspace{-1mm}+\hspace{-1mm} \frac{{{\xi}'}}{{{c_n}}}} \right)} \right] {\left( {\frac{{\phi _n^2 - D_0^2}}{{{D^2} - D_0^2}}} \right)^{k - 1}}{\left( {\frac{{{D^2} - \phi _n^2}}{{{D^2} - D_0^2}}} \right)^{K - k}}, \;\text{and} \notag\\
\mathop {\lim }\limits_{K \to \infty, {M \to \infty}}  \overline{G^{\mathrm{\emph{m}MIMO}}_{D = D_0}}'
&\approx \frac{{M}}{\varpi'\delta} \left[ \ln \left( 1 + \varpi'\delta + \varpi' \right) \left( 1 + \varpi'\delta + \varpi' \right) \right. \label{C3:MultiCellDD0EPGmMIMOERA}\\
&\hspace{-20mm} -\left.\ln \left( 1 + \varpi'\delta \right)\left( 1 + \varpi'\delta \right) - \ln \left( 1 + \varpi' \right)\left( 1 + \varpi' \right)\right] - {{\varsigma M}} \ln \left( {1 + \frac{\varpi'}{\varsigma}} \right),\notag
\end{align}
where ${\psi _k}^\prime  = \frac{{{P_{{\mathrm{max}}}}M}}{{\sum\limits_{i = k + 1}^K {{P_{{\mathrm{max}}}}{I_i} + K\left( {{I^{\mathrm{NOMA}}_{{\mathrm{inter}}}} + {N_0}} \right)} }}$, ${{\xi}'} = \frac{{{P_{{\mathrm{max}}}}M}}{{W\left( {{I^{\mathrm{NOMA}}_{{\mathrm{inter}}}} + {N_0}} \right)}}$, and $\varpi ' = \frac{{{P_{{\mathrm{max}}}}}}{{\left( {1 + D_0^\alpha } \right)\left( {{I^{\mathrm{NOMA}}_{{\mathrm{inter}}}} + {N_0}} \right)}}$.

It can be observed that compared to single-cell systems, the ESGs of NOMA over OMA in multi-cell systems are degraded due to the existence of ICI.
In particular, since the OMA schemes endure not only $\frac{1}{G}$ of noise power but also $\frac{1}{G}$ of ICI power, compared to NOMA schemes, the interference plus noise power of $({{I^{\mathrm{NOMA}}_{{\mathrm{inter}}}} + {N_0}})$ in multi-cell systems plays the same role as the noise power ${N_0}$ in single-cell systems.
Therefore, the performance analyses in single-cell systems are directly applicable to multi-cell systems via increasing the noise power ${N_0}$ to the interference plus noise power of $({{I^{\mathrm{NOMA}}_{{\mathrm{inter}}}} + {N_0}})$.
Upon utilizing the coordinate signal processing among multiple cells\cite{ShinMulticellNOMA}, the ICI power can be effectively suppressed, which may prevent the ESG degradation, when extending NOMA from single-cell to multi-cell systems.

\section{Simulations}\label{C3:Simulations}
In this section, we use simulations to evaluate our analytical results.
In the single-cell systems considered, the inner cell radius is $D_0 = 50$ m and the outer cell radius is given by $D = [50, 200, 500]$ m, which corresponds to the cases of normalized cell sizes given by $\eta = [1, 4, 10]$, respectively.
The number of users $K$ ranges from $2$ to $256$ and the number of antennas employed at the BS $M$ ranges from $1$ to $128$.
The path loss exponent is $\alpha = 3.76$ according to the 3GPP path loss model\cite{Access2010}.
The noise power is set as $N_0 = -80$ dBm.
To emphasize the effect of cell size on the ESG of NOMA over OMA, in the simulations of the single-cell systems, we characterize the system's SNR with the aid of the total average received SNR of all the users at the BS as follows\cite{Xu2017}:
\begin{equation}\label{C3:SystemSNR}
{\mathrm{SNR}_{\mathrm{sum}}} = \frac{{P_{\mathrm{max}}}}{N_0} {\overline{{{\left| {{{h}}} \right|}^2}}} = \frac{{P_{\mathrm{max}}}}{N_0} \frac{\overline{{{\left\| {{\mathbf{h}}} \right\|}^2}}}{M},
\end{equation}
where ${\overline{{{\left| {{{h}}} \right|}^2}}}$ and ${\overline{{{\left\| {{\mathbf{h}}} \right\|}^2}}}$ are given by \eqref{C3:Mean_ChannelPowerGainSISO} and \eqref{C3:Mean_ChannelPowerGainMIMO}, respectively.
The total transmit power ${P_{\mathrm{max}}}$ is adjusted adaptively for different cell sizes to satisfy ${\mathrm{SNR}_{\mathrm{sum}}}$ in \eqref{C3:SystemSNR} ranging from $0$ dB to $40$ dB.
In the \emph{m}MIMO-OMA system considered, we set the ratio between the group size and the number of antennas to $\varsigma = \frac{W}{M} = \frac{1}{16}$, hence we can assume that the favorable propagation conditions prevail in the spirit of \cite{Ngo2013}.
Additionally, in the \emph{m}MIMO-NOMA system considered, the ratio between the number of receiver antennas at the BS and the number of serving users is fixed as $\delta = \frac{M}{K} =  \frac{1}{2}$.
The important system parameters adopted in our simulations are summarized in Table \ref{C3:SysParameters}.
The specific simulation setups for each simulation scenario are shown under each figure.
All the simulation results in this chapter are obtained by averaging the system performance over both small-scale fading and large-scale fading.

\begin{table}
	\caption{System Parameters Used in Simulations}
	\centering\small
	\begin{tabular}{l|r}
		\hline
		Inner cell radius, $D_0$                 & 50 m \\\hline
		Outer cell radius, $D$                   & [50, 200, 500] m \\\hline
		Normalized cell size, $\eta$             & [1, 4, 10] \\\hline
		Number of users, $K$                     & 2 $\sim$ 256 \\\hline
		Number of receive antennas at BS, $M$    & 1 $\sim$ 128 \\\hline
		Path loss exponent, $\alpha$         & 3.76 \\\hline
		Noise power, $N_0$                   & -80 dBm \\\hline
		System SNR, ${\mathrm{SNR}_{\mathrm{sum}}}$       & 0 $\sim$ 40 dB \\\hline
		Ratio $\varsigma = \frac{W}{M}$ for \emph{m}MIMO-OMA & $\frac{1}{16}$ \\\hline
		Ratio $\delta = \frac{M}{K}$ for \emph{m}MIMO-NOMA & $\frac{1}{2}$ \\\hline
	\end{tabular}\label{C3:SysParameters}
\end{table}


\subsection{ESG versus the Number of Users in Single-cell Systems}

\begin{figure}[ptb]
\centering
\includegraphics[width=4.5in]{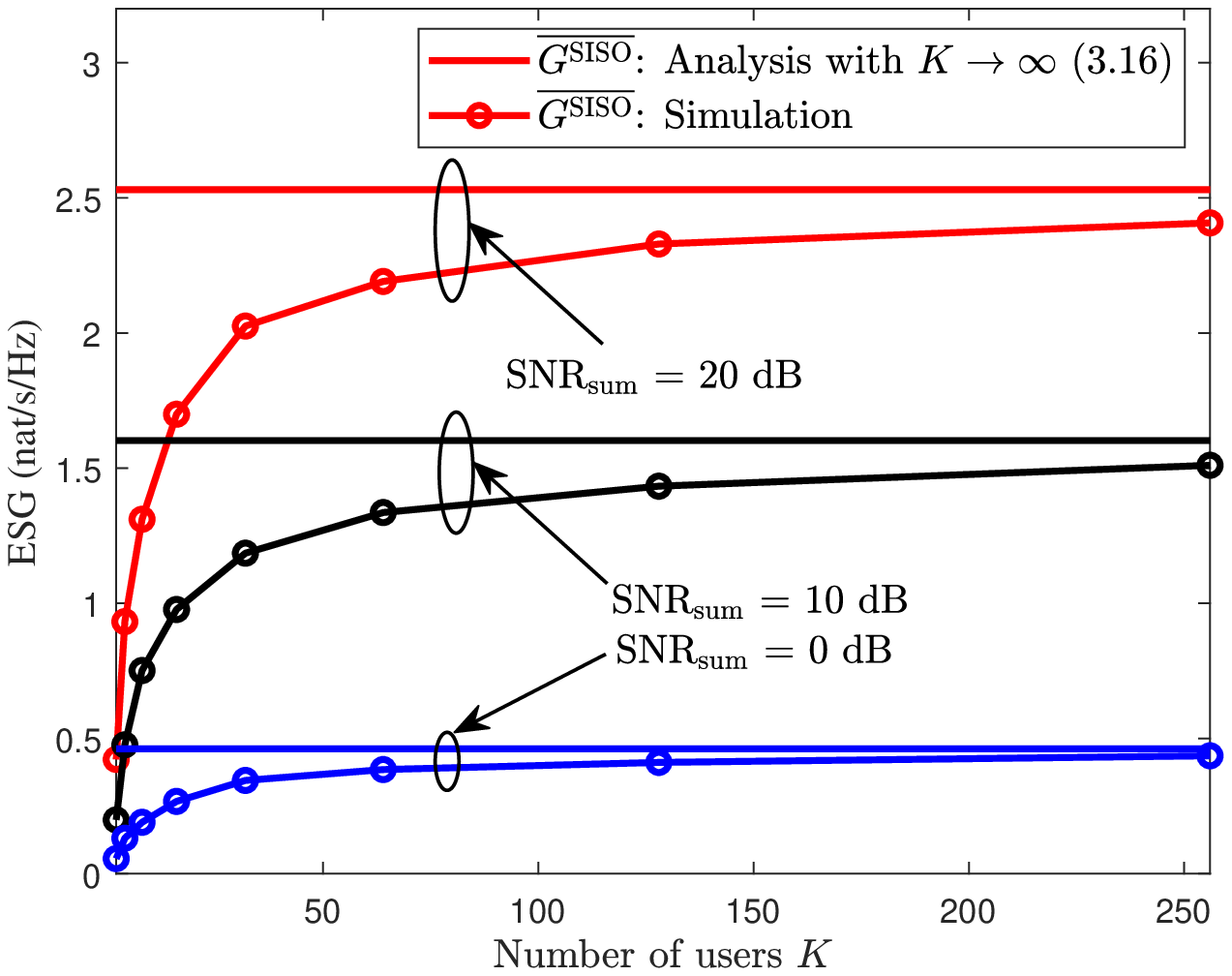}
\caption{The ESG of NOMA over OMA versus the number of users $K$ in single-antenna systems. The normalized cell size is $\eta = 10$.
{The blue, black, and red curves represent the ESG with the average received sum SNR ${\mathrm{SNR}_{\mathrm{sum}}} = [0,10,20]$ dB, respectively.}}%
\label{C3:APGVsK:a}%
\end{figure}

\begin{figure}[ptb]
\centering
\includegraphics[width=4.5in]{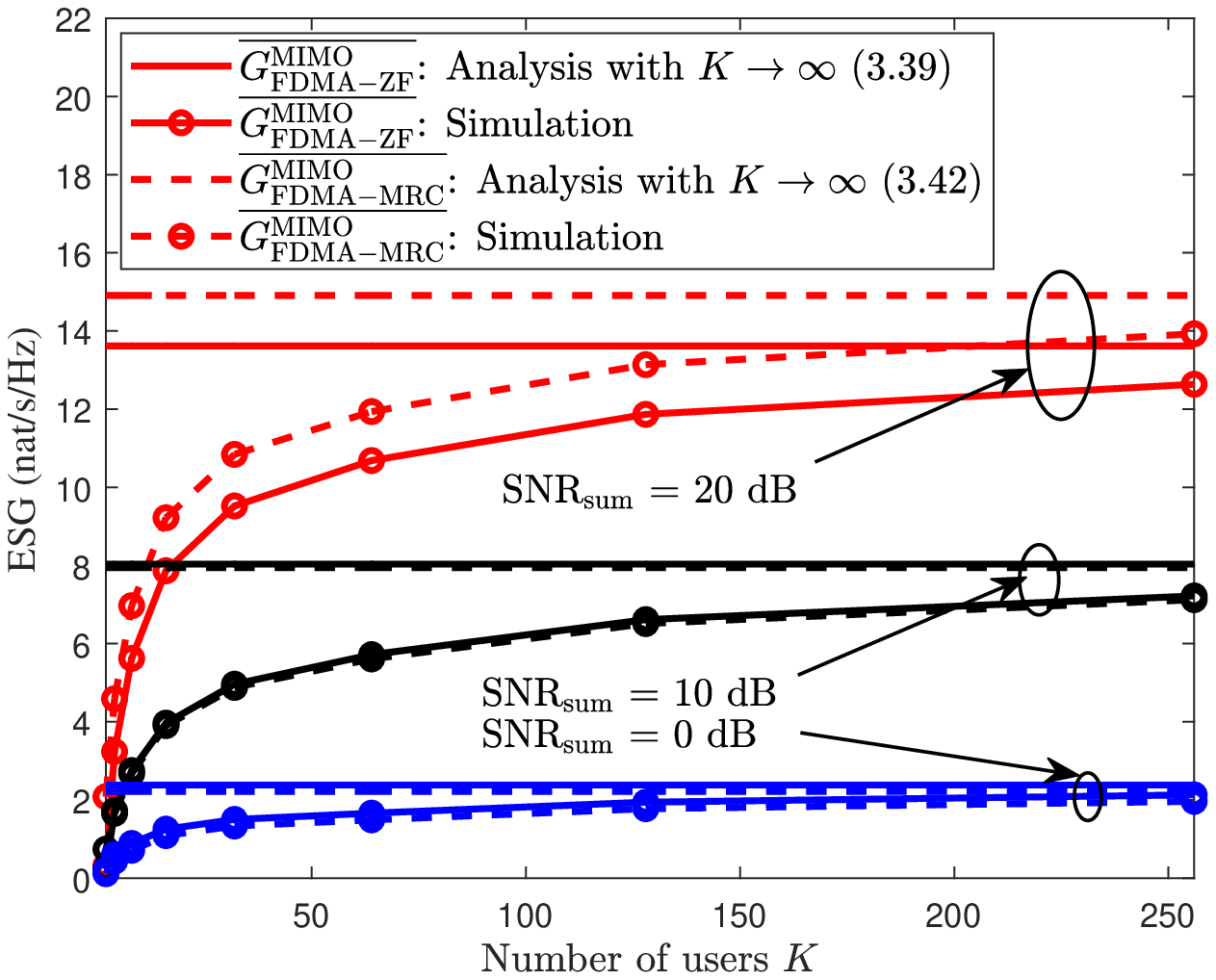}
\caption{The ESG of NOMA over OMA versus the number of users $K$ in multi-antenna systems of $M=4$. The normalized cell size is $\eta = 10$.
{The blue, black, and red curves represent the ESG with the average received sum SNR ${\mathrm{SNR}_{\mathrm{sum}}} = [0,10,20]$ dB, respectively.}}%
\label{C3:APGVsK:b}%
\end{figure}

Figure \ref{C3:APGVsK:a}, Figure \ref{C3:APGVsK:b}, and Figure \ref{C3:APGVsK:c} illustrate the ESG of NOMA over OMA versus the number of users in the single-antenna, multi-antenna, and massive MIMO single-cell systems, respectively.
In both Figure \ref{C3:APGVsK:a} and Figure \ref{C3:APGVsK:b}, we can observe that the ESG increases with the number of users $K$ and eventually approaches the asymptotic results derived for $K\to \infty$.
This is because upon increasing the number of users, the heterogeneity in channel gains among users is enhanced, which leads to an increased near-far gain.
As shown in Figure \ref{C3:APGVsK:c}, for massive MIMO systems, the asymptotic ESG per user derived in \eqref{C3:DD0EPGmMIMOERA0} closely matches with the simulations even for moderate numbers of users and SNRs.
Although \eqref{C3:DD0EPGPerUsermMIMOERA2} is derived for massive MIMO systems with $D=D_0$, in Figure \ref{C3:APGVsK:c}, we can observe a constant ESG per user in massive MIMO systems with $D>D_0$.
\begin{figure}[ptb]
	\centering \includegraphics[width=4.5in]{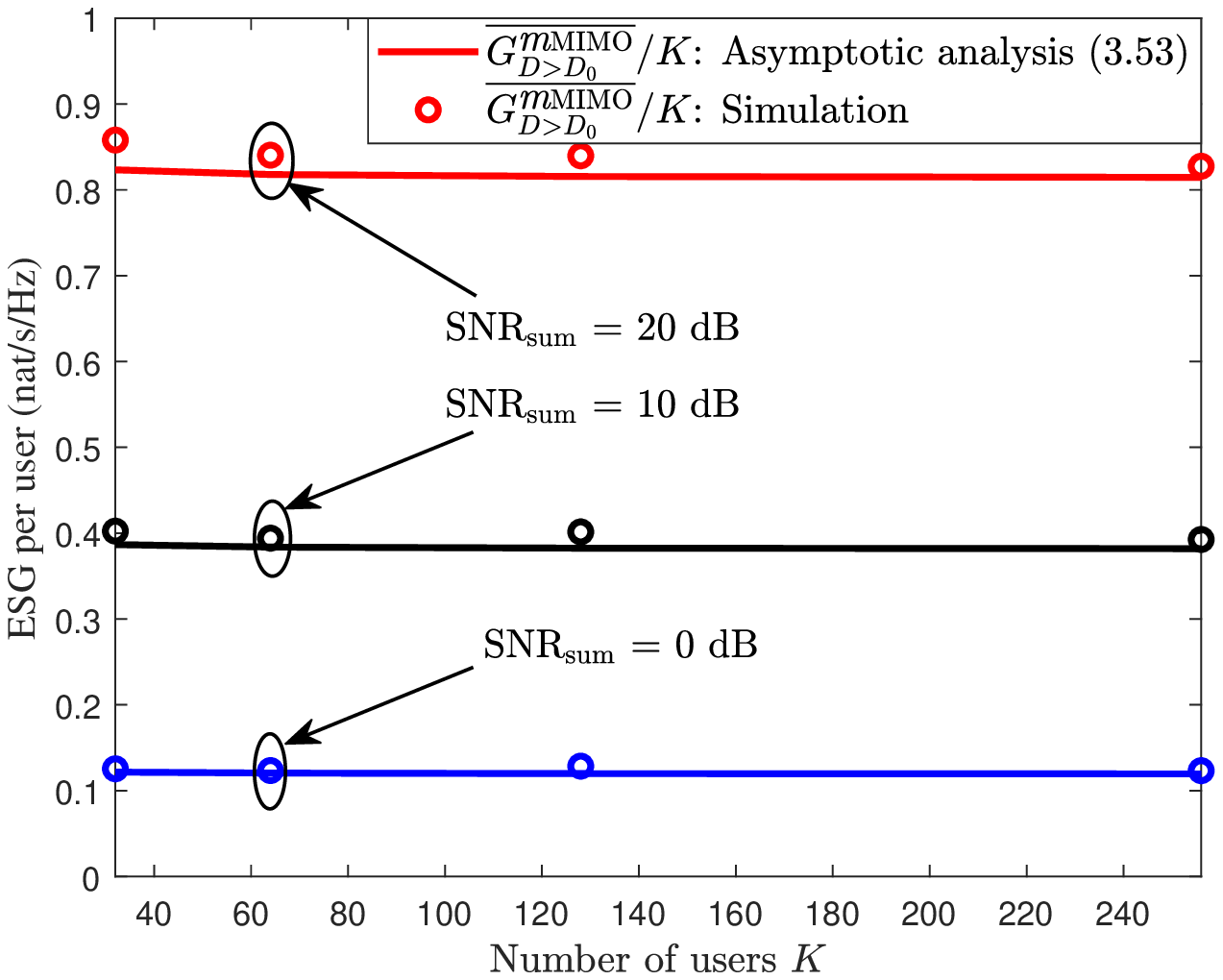}
	\caption{The ESG of NOMA over OMA versus the number of users $K$ in massive-antenna systems. The normalized cell size is $\eta = 10$. The number of antennas equipped at the BS is adjusted according to the number of users $K$ based on $M = {K}{\delta}$ with $\delta = \frac{1}{2}$. {The blue, black, and red curves represent the ESG with the average received sum SNR ${\mathrm{SNR}_{\mathrm{sum}}} = [0,10,20]$ dB, respectively.}}%
	\label{C3:APGVsK:c}%
\end{figure}
In other words, the insights obtained from the massive MIMO systems with $D=D_0$ are also applicable to the scenarios of $D>D_0$.
Compared to the ESG in the single-antenna systems of Figure \ref{C3:APGVsK:a}, the ESG in the multi-antenna systems of Figure \ref{C3:APGVsK:b} is substantially increased due to the extra spatial DoF offered by additional antennas at the BS.
Moreover, it can be observed in Figure \ref{C3:APGVsK:b} that we have $\mathop {\lim }\limits_{K \rightarrow \infty}  \overline{G^{\mathrm{MIMO}}_{\mathrm{FDMA-ZF}}} > \mathop {\lim }\limits_{K \rightarrow \infty}  \overline{G^{\mathrm{MIMO}}_{\mathrm{FDMA-MRC}}}$ in the low-SNR case, while $\mathop {\lim }\limits_{K \rightarrow \infty}  \overline{G^{\mathrm{MIMO}}_{\mathrm{FDMA-ZF}}} < \mathop {\lim }\limits_{K \rightarrow \infty}  \overline{G^{\mathrm{MIMO}}_{\mathrm{FDMA-MRC}}}$ in the high-SNR case.
This is because ZF detection outperforms MRC detection in the high-SNR regime for the MIMO-OMA system considered, while it becomes inferior to MRC detection in the low-SNR regime.
Furthermore, we can observe a higher ESG in Figure \ref{C3:APGVsK:a}, Figure \ref{C3:APGVsK:b}, and Figure \ref{C3:APGVsK:c} for the high-SNR case, e.g. ${\mathrm{SNR}_{\mathrm{sum}}} = 20$ dB.
This is due to the power-domain multiplexing of NOMA, which enables multiple users to share the same time-frequency resource and motivates a more efficient exploitation of the power resource.

\subsection{ESG versus the SNR in Single-cell Systems}
\begin{figure}[ptb]
\centering
\includegraphics[width=4.5in]{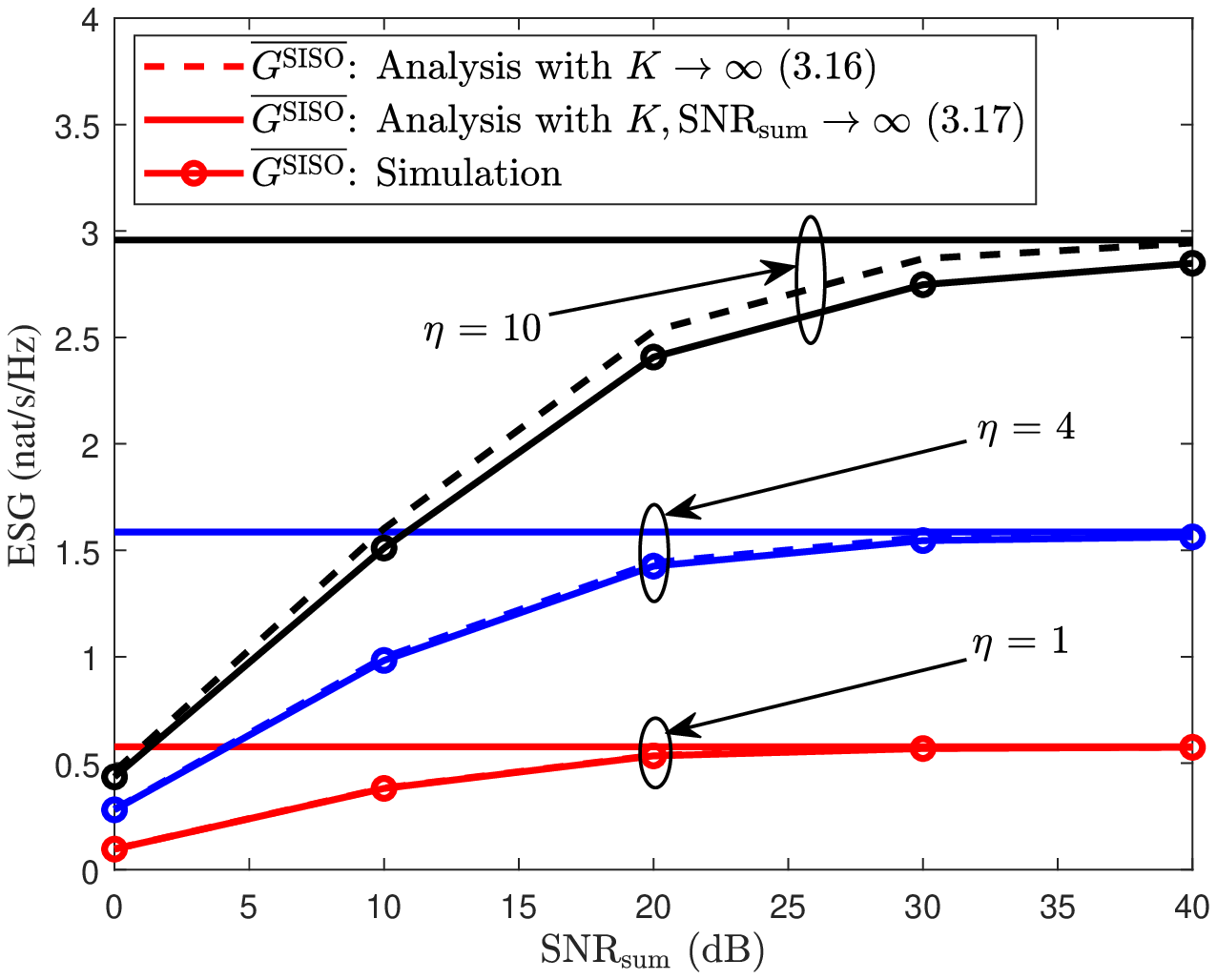}
\caption{The ESG of SISO-NOMA over SISO-OMA versus ${\mathrm{SNR}_{\mathrm{sum}}}$. The number of users is $K = 256$.
{The red, blue, and black curves represent the ESG with the normalized cell sizes $\eta = [1,4,10]$, respectively.}}%
\label{C3:APGVsSNR:a}%
\end{figure}

\begin{figure}[ptb]
\centering
\includegraphics[width=4.5in]{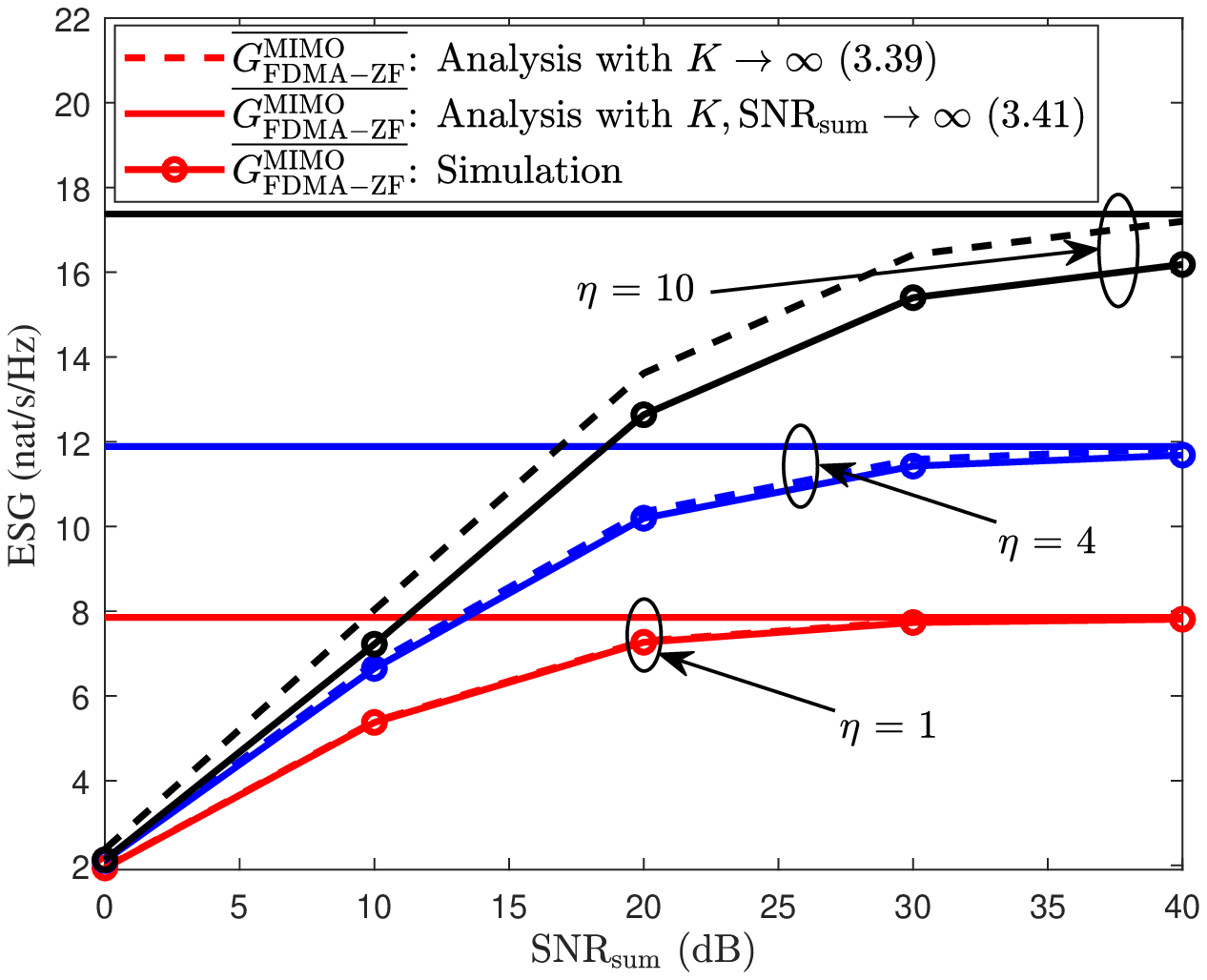}
\caption{The ESG of MIMO-NOMA over MIMO-OMA with FDMA-ZF versus ${\mathrm{SNR}_{\mathrm{sum}}}$. The number of antennas equipped at the BS $M=4$, the number of users is $K = 256$.
{The red, blue, and black curves represent the ESG with the normalized cell sizes $\eta = [1,4,10]$, respectively.}}%
\label{C3:APGVsSNR:b}%
\end{figure}

\begin{figure}[ptb]
	\centering
	\includegraphics[width=4.5in]{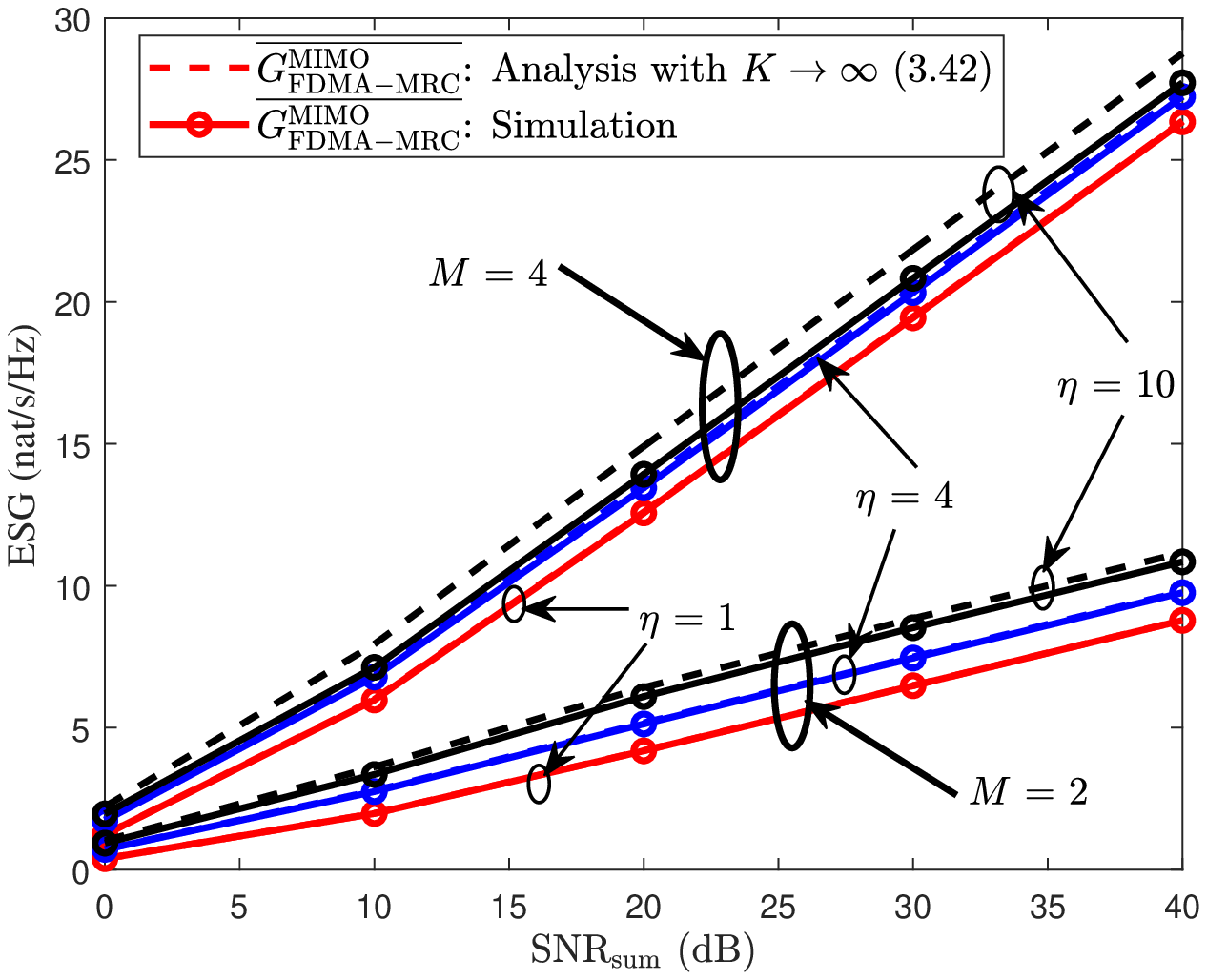}
	\caption{The ESG of MIMO-NOMA over MIMO-OMA with FDMA-MRC versus ${\mathrm{SNR}_{\mathrm{sum}}}$. The number of antennas equipped at the BS $M=[2,4]$, the number of users is $K = 256$.
{The red, blue, and black curves represent the ESG with the normalized cell sizes $\eta = [1,4,10]$, respectively.}}%
	\label{C3:APGVsSNR:c}%
\end{figure}

Figure \ref{C3:APGVsSNR:a}, Figure \ref{C3:APGVsSNR:b}, Figure \ref{C3:APGVsSNR:c}, and Figure \ref{C3:APGVsSNR:d} depict the ESG of NOMA over OMA versus the system's SNR ${\mathrm{SNR}_{\mathrm{sum}}}$ within the range of ${\mathrm{SNR}_{\mathrm{sum}}} = [0, 40]$ dB in the single-antenna, multi-antenna, and massive MIMO single-cell systems, respectively.
We can observe that the simulation results match closely our asymptotic analyses in all the considered cases.
Besides, by increasing the system SNR, the ESGs seen in Figure \ref{C3:APGVsSNR:a} and Figure \ref{C3:APGVsSNR:b} increase monotonically and approach the asymptotic analyses results derived in the high-SNR regime.
\begin{figure}[ptb]
	\centering
	\includegraphics[width=5.0in]{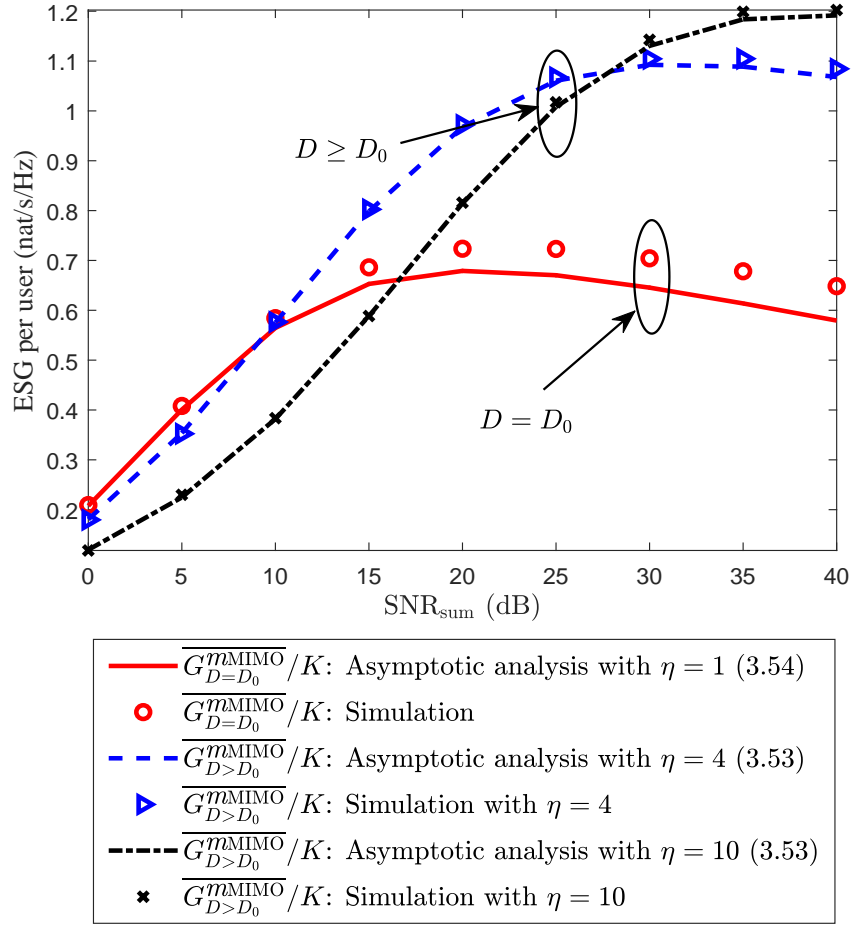}
	\caption{The ESG of \emph{m}MIMO-NOMA over \emph{m}MIMO-OMA versus ${\mathrm{SNR}_{\mathrm{sum}}}$. The number of antennas equipped at the BS $M = 128$, the number of users is $K = 256$, and $\delta = \frac{M}{K} = \frac{1}{2}$.
		The normalized cell size is $\eta = [1,4,10]$.}%
	\label{C3:APGVsSNR:d}%
\end{figure}
In other words, the ESGs seen in Figure \ref{C3:APGVsSNR:a} and Figure \ref{C3:APGVsSNR:b} are bounded from above even if ${P_{{\mathrm{max}}}} \to \infty $.
This is because there is no DoF gain in the ESG of NOMA over OMA in the pair of scenarios considered.
By contrast, as derived in \eqref{C3:EPGMIMOERA6}, the $(M-1)$-fold DoF gain in the ESG of MIMO-NOMA over MIMO-OMA with FDMA-MRC enables the ESG to increase linearly with the system's SNR in dB in the high-SNR regime, as shown in Figure \ref{C3:APGVsSNR:c}.
{Furthermore, employing more antennas provides a larger DoF gain, which leads to a steeper slope $M-1$ of ESG versus the system SNR in dB.
As a result, in the high SNR regime, the ESG with $M=4$ is almost three times of the ESG with $M=2$.}
In contrast to Figure \ref{C3:APGVsSNR:a}, Figure \ref{C3:APGVsSNR:b}, and Figure \ref{C3:APGVsSNR:c}, the ESG of \emph{m}MIMO-NOMA over \emph{m}MIMO-OMA recorded in Figure \ref{C3:APGVsSNR:d} first increases and then decreases with the system SNR, especially for a small normalized cell size.
In fact, the \emph{m}MIMO-NOMA system relying on MRC-SIC detection becomes interference-limited in the high-SNR regime, while the \emph{m}MIMO-OMA system remains interference-free, since favorable propagation conditions prevail for $\varsigma = \frac{W}{M} \ll 1$.
As a result, upon increasing the system SNR, the increased IUI of the \emph{m}MIMO-NOMA system considered neutralizes some of its ESG over the \emph{m}MIMO-OMA system, particularly for a small cell size associated with a limited large-scale near-far gain.

On the other hand, it is worth noticing in Figure \ref{C3:APGVsSNR:a}, that if all the users are randomly distributed on a circle when $D = D_0 = 50$ m, i.e., $\eta = 1$, then we have an ESG of about $0.575$ nat/s/Hz at ${\mathrm{SNR}_{\mathrm{sum}}} = 40$ dB for SISO-NOMA compared to SISO-OMA.
This again verifies the accuracy of the small-scale fading gain $\gamma$ derived in \eqref{C3:EPGSISOERA2}.
Furthermore, we can observe in Figure \ref{C3:APGVsSNR:a}, Figure \ref{C3:APGVsSNR:b}, and Figure \ref{C3:APGVsSNR:c}, that a larger normalized cell size $\eta$ results in a higher performance gain, which is an explicit benefit of the increased large-scale near-far gain $\vartheta \left( \eta \right)$.
By contrast, in Figure \ref{C3:APGVsSNR:d}, a larger cell size facilitates a higher ESG but only in the high-SNR regime, while a smaller cell size can provide a larger ESG in the low to moderate-SNR regime.
In fact, due to the large number of antennas, the IUI experienced in the \emph{m}MIMO-NOMA system is significantly reduced compared to that in single-antenna and multi-antenna systems.
As a result, in the low to moderate-SNR regime, the \emph{m}MIMO-NOMA system considered may be noise-limited rather than interference-limited, which is in line with the single-antenna and multi-antenna systems.
For instance, the noise degrades the achievable rates of the cell-edge users more severely compared to the impact of IUI in the \emph{m}MIMO-NOMA system, especially for large normalized cell sizes.
Therefore, the large-scale near-far gain cannot be fully exploited in the low to moderate-SNR regime in the massive MIMO systems.
Moreover, it can be observed in Figure \ref{C3:APGVsSNR:d} that the ESG increases faster for a larger normalized cell size $\eta$.
This is due to the enhanced large-scale near-far gain observed for a larger cell size, which enables NOMA to exploit the power resource more efficiently.

\subsection{ESG versus the Number of Antennas $M$ in Single-cell Systems}

\begin{figure}[ptb]
\centering
\includegraphics[width=4.5in]{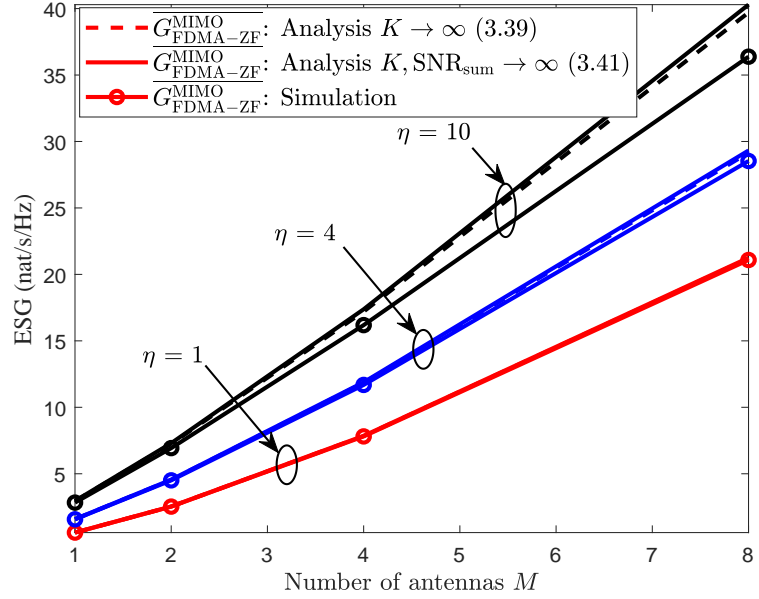}
\caption{The ESG of MIMO-NOMA over MIMO-OMA with FDMA-ZF versus the number of antennas $M$. The number of users is $K = 256$, the number of antennas $M$ equipped at the BS ranges from $1$ to $8$, the average received sum SNR is ${\mathrm{SNR}_{\mathrm{sum}}} = 40$ dB. {The red, blue, and black curves represent the ESG with the normalized cell sizes $\eta = [1,4,10]$, respectively.}}%
\label{C3:MIMOAPGVsM:a}%
\end{figure}

\begin{figure}[ptb]
\centering
\includegraphics[width=4.5in]{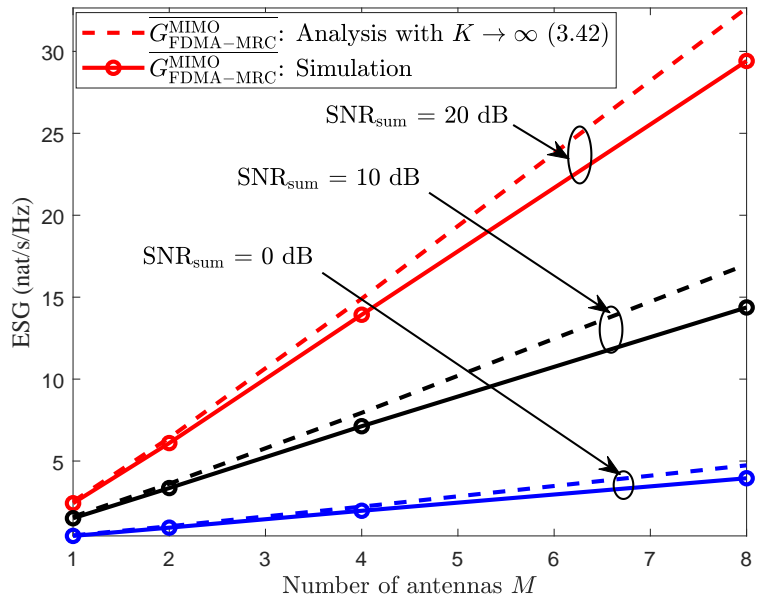}
\caption{The ESG of MIMO-NOMA over MIMO-OMA with FDMA-MRC versus the number of antennas $M$. The number of users is $K = 256$, the number of antennas $M$ equipped at the BS ranges from $1$ to $8$, and the normalized cell size is $\eta = 10$. {The blue, black, and red curves represent the ESG with the average received sum SNR ${\mathrm{SNR}_{\mathrm{sum}}} = [0,10,20]$ dB, respectively.}}%
\label{C3:MIMOAPGVsM:b}%
\end{figure}

\begin{figure}[ptb]
\centering
\includegraphics[width=5.0in]{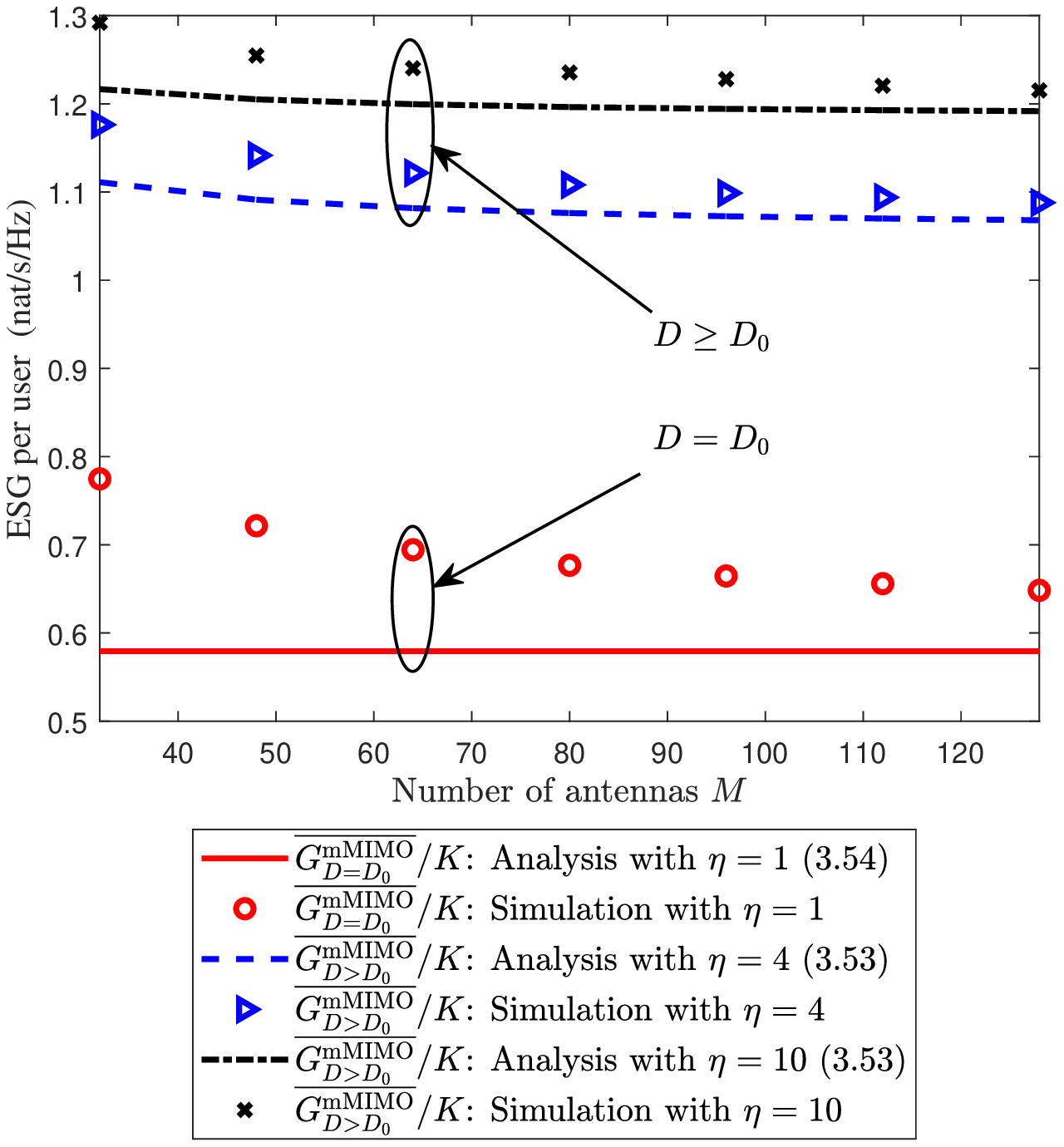}
\caption{The ESG of \emph{m}MIMO-NOMA over \emph{m}MIMO-OMA versus the number of antennas $M$. The average received sum SNR is ${\mathrm{SNR}_{\mathrm{sum}}} = 40$ dB and the normalized cell size is $\eta = [1,4,10]$.
The number of antennas $M$ equipped at the BS ranges ranges from $32$ to $128$ and the number of users $K$ is adjusted according to $M$ based on $K = \frac{M}{\delta}$ with $\delta = \frac{1}{2}$.}%
\label{C3:MIMOAPGVsM:c}%
\end{figure}

Figure \ref{C3:MIMOAPGVsM:a}, Figure \ref{C3:MIMOAPGVsM:b}, and Figure \ref{C3:MIMOAPGVsM:c} illustrate the ESG of NOMA over OMA versus the number of antennas $M$ employed at the BS in multi-antenna and massive MIMO systems, respectively.
It can be observed that the simulation results closely match our asymptotic analyses  for all the simulation scenarios.
In particular, observe for the ESG of MIMO-NOMA over MIMO-OMA with FDMA-ZF in Figure \ref{C3:MIMOAPGVsM:a} that as predicted in \eqref{C3:EPGMIMOERA3}, the asymptotic ESG $\overline{G^{{\mathrm{SISO}}}}$ of single-antenna systems is increased by $M$, when an $M$-antenna array is employed at the BS.
More importantly, a larger normalized cell size $\eta$ enables a steeper slope in the ESG versus the number of antennas $M$, which is due to the increased large-scale near-far gain $\vartheta \left( \eta \right)$, as shown in \eqref{C3:EPGMIMOERA2}.
Apart from the linearly increased component of ESG vesus $M$, an additional power gain factor of $\ln\left(M\right)$ can also be observed in Figure \ref{C3:MIMOAPGVsM:a} as derived in \eqref{C3:EPGMIMOERA3}.
Observe the ESG of MIMO-NOMA over MIMO-OMA with FDMA-MRC in Figure \ref{C3:MIMOAPGVsM:b} that the ESG grows linearly versus $M$ due to the $(M-1)$-fold of DoF gain and the corresponding slope becomes higher for a higher system SNR, as seen in \eqref{C3:EPGMIMOERA6}.
The ESG per user seen in Figure \ref{C3:MIMOAPGVsM:c} for massive MIMO systems remains almost constant upon increasing $M$, which matches for our asymptotic analysis in \eqref{C3:DD0EPGPerUsermMIMOERA2}, and is also consistent with the results of Figure \ref{C3:APGVsK:c} for the fixed ratio $\delta = \frac{M}{K}$.
Furthermore, we can observe that a large cell size offers a higher ESG per user due to the improved large-scale near-far gain.

\subsection{ESG versus the Total Transmit Power in Multi-cell Systems}
In a multi-cell system, we consider a high user density scenario within a large circular area with the radius of $D_1 = 5$ km and the user density of $\rho = 1000$ devices per $\mathrm{km}^{\mathrm{2}}$.
As a result, the total number of users in the multi-cell system considered is $K' = \left\lceil\rho \pi D_1^2\right\rceil$.
Then, the number of users in each cell is given by $K = \left\lceil {\rho \pi {D^2}} \right\rceil$ with $D$ in the unit of km.
Meanwhile, the number of adjacent cells $L$ can be obtained by $L = \left\lceil \frac{K' - K}{K} \right\rceil$, so that all the $K'$ users can be covered.
Furthermore, the $K'$ users in all the cells share a given total transmit power and the total transmit power ${P_{\mathrm{max}}'}$ of $(L+1)$ cells is within the range spanning from $20$ dBm to $60$ dBm\footnote{Since there are a larger number of users deployed in the considered area, we set a large power budget for all the users in the considered multi-cell system.}.
In this section, we also consider an equal power allocation among multiple cells and an equal power allocation among users within each cell, i.e., we have ${P_{\mathrm{max}}} = \frac{P_{\mathrm{max}}'}{L+1}$ and ${p_{k',l}} = \frac{{P_{\mathrm{max}}}}{K}$.
All the other simulation parameters are the same as those adopted in the single-cell systems.

\begin{figure}[t]
\centering
\includegraphics[width=4.5in]{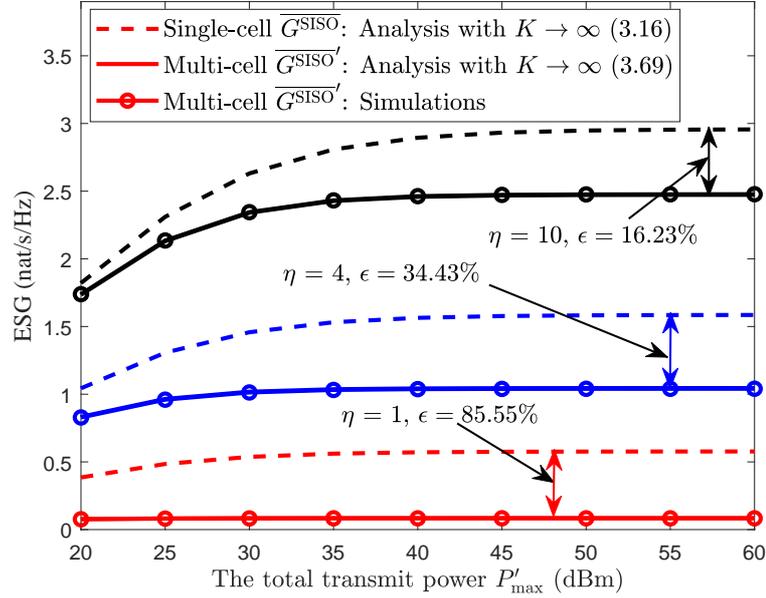}
\caption{The ESG of SISO-NOMA over SISO-OMA versus the total transmit power ${P_{\mathrm{max}}'}$ in multi-cell systems. The number of antennas equipped at each BS is $M = 1$. The ESG degradations due to the ICI are denoted by double-sided arrows. {The red, blue, and black curves represent the ESG with the normalized cell sizes $\eta = [1,4,10]$, respectively.}}%
\label{C3:MultiCell:a}%
\end{figure}

\begin{figure}[t]
\centering
\includegraphics[width=5in,height = 4.25in]{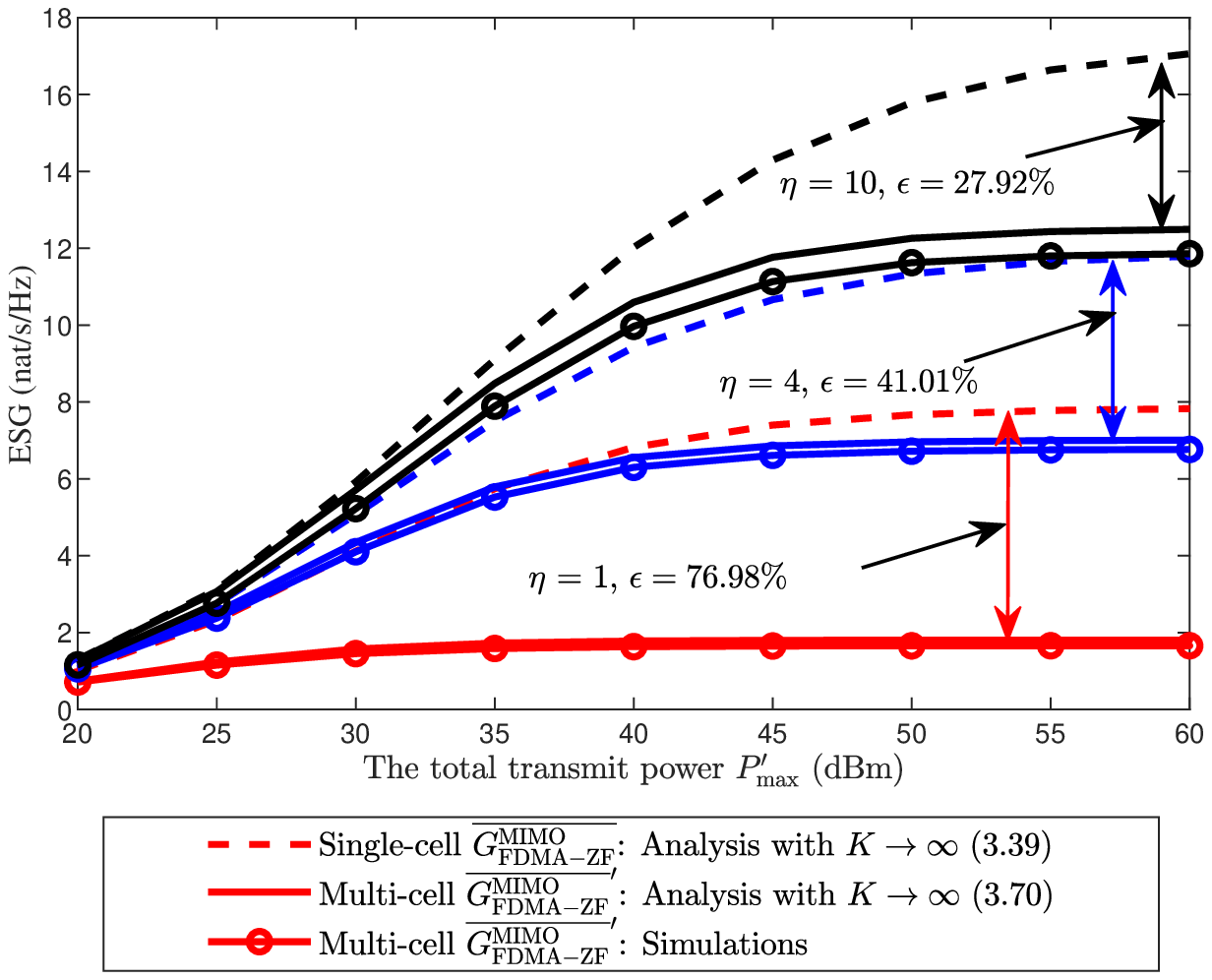}
\caption{The ESG of MIMO-NOMA over MIMO-OMA with FDMA-ZF versus the total transmit power ${P_{\mathrm{max}}'}$ in multi-cell systems. The number of antennas equipped at each BS is $M=4$. The ESG degradations due to the ICI are denoted by double-sided arrows. {The red, blue, and black curves represent the ESG with the normalized cell sizes $\eta = [1,4,10]$, respectively.}}%
\label{C3:MultiCell:b}%
\end{figure}

\begin{figure}[t]
\centering
\includegraphics[width=5in]{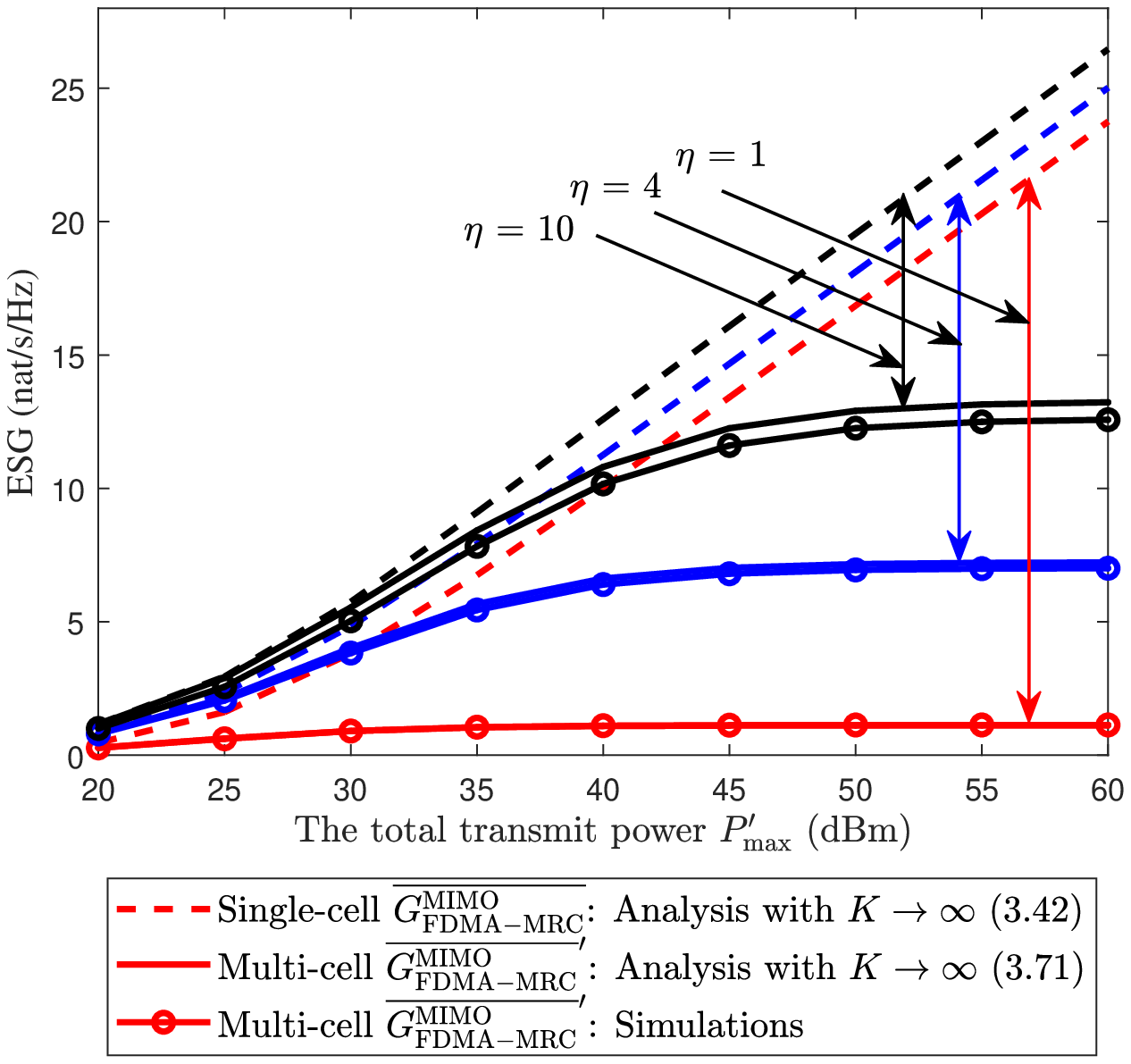}
\caption{The ESG of MIMO-NOMA over MIMO-OMA with FDMA-MRC versus the total transmit power ${P_{\mathrm{max}}'}$ in multi-cell systems. The number of antennas equipped at each BS is $M=4$. The ESG degradations due to the ICI are denoted by double-sided arrows. {The red, blue, and black curves represent the ESG with the normalized cell sizes $\eta = [1,4,10]$, respectively.}}%
\label{C3:MultiCell:c}%
\end{figure}

\begin{figure}[t]
\centering
\includegraphics[width=5in]{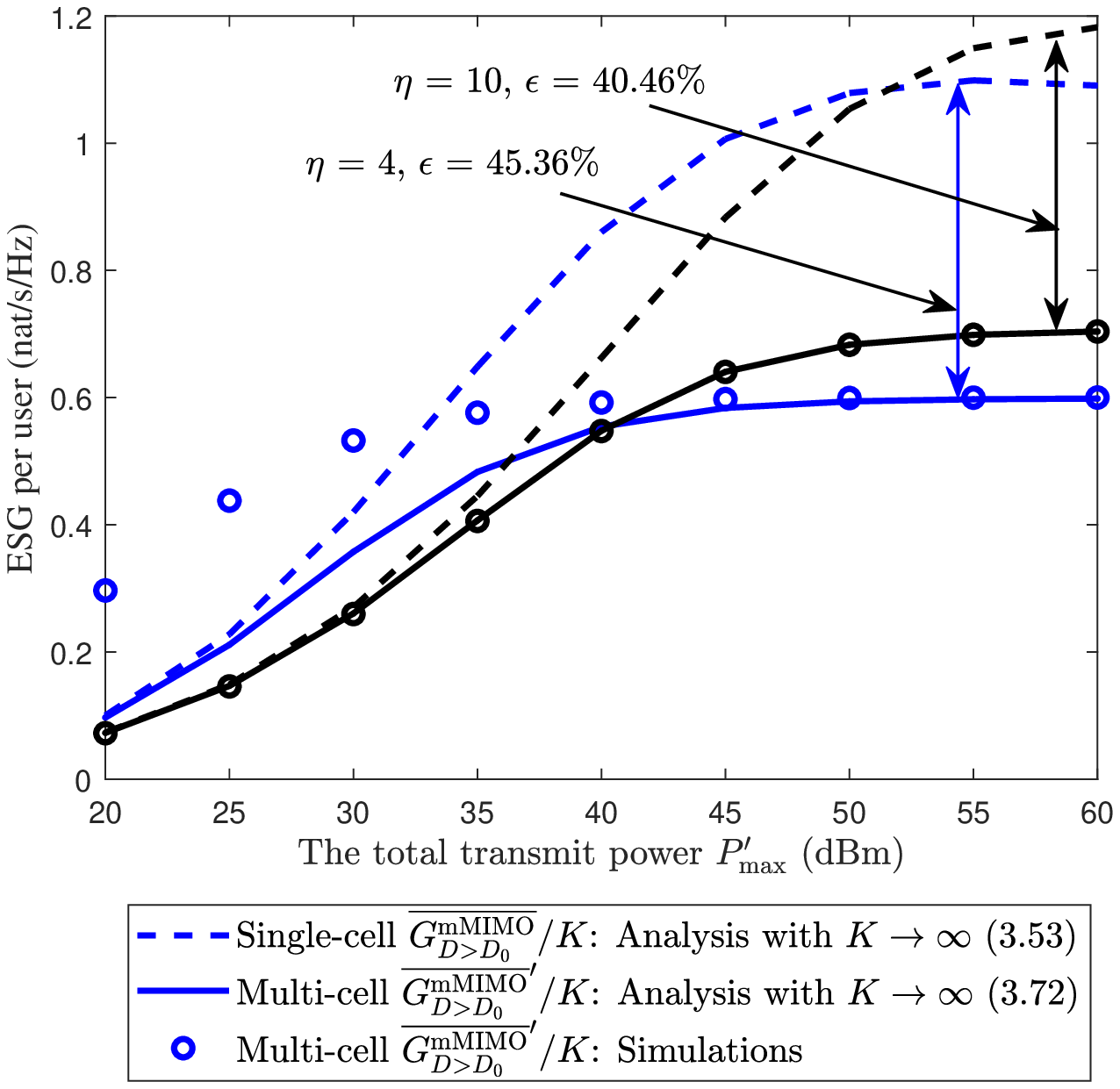}
\caption{The ESG of \emph{m}MIMO-NOMA over \emph{m}MIMO-OMA versus the total transmit power ${P_{\mathrm{max}}'}$ in multi-cell systems. The number of antennas equipped at each BS is adjusted based on the number of users in each cell via $M = \left\lceil {K}{\delta} \right\rceil$ and the group size of the considered \emph{m}MIMO-OMA system is $W = \left\lceil {\varsigma}M \right\rceil$. The ESG degradations due to the ICI are denoted by double-sided arrows. {The blue and black curves represent the ESG with the normalized cell sizes $\eta = [4,10]$, respectively.}}%
\label{C3:MultiCell:d}%
\end{figure}

In contrast to the single-cell systems, the ESG versus the total transmit power ${P_{\mathrm{max}}'}$ trends are more interesting, which is due to the less straightforward impact of ICI on the performance gain of NOMA over OMA in multi-cell systems.
Figure \ref{C3:MultiCell:a}, Figure \ref{C3:MultiCell:b}, Figure \ref{C3:MultiCell:c}, and Figure \ref{C3:MultiCell:d} show the ESG of NOMA over OMA versus the total transmit power ${P_{\mathrm{max}}'}$ in single-antenna, multi-antenna, and massive MIMO\footnote{Note that, for the considered massive MIMO multi-cell system, a small cell size leads to a small number of users $K$ in each cell and thus results in a small number of antennas $M$ due to the fixed ratio $\delta = \frac{M}{K}$. This is contradictory to our assumption of $K \to \infty$ and $M \to \infty$. Therefore, we only consider the normalized cell size of $\eta = [4,10]$ for massive MIMO multi-cell systems in Figure \ref{C3:MultiCell:d}.} multi-cell systems, respectively.
The analytical results in single-cell systems are also shown for comparison.
We can observe that the performance gains of NOMA over OMA are degraded upon extending NOMA from single-cell systems to multi-cell systems.
In fact, NOMA schemes enable all the users in adjacent cells to simultaneously transmit their signals on the same frequency band and thus the ICI level in NOMA schemes is substantially higher than that in OMA schemes, as derived in \eqref{C3:MulticellInterference3}.
For the ease of illustration, we define the normalized performance degradation of the ESG in multi-cell systems compared to that in single-cell systems as $\epsilon = \frac{{\overline G - \overline G'}}{{\overline G}}$, where $\overline G$ denotes the ESG in single-cell systems and $\overline G'$ denotes the ESG in multi-cell systems.
It can be observed that the ESG degradation is more severe for a small normalized cell size $\eta$, because multi-cell systems suffer from a more severe ICI for smaller cell sizes due to a shorter inter-site distance.
Therefore, the system performance becomes saturated even for a moderate system power budget in the case of a smaller cell size.

It is worth noting that the ESG of MIMO-NOMA over MIMO-OMA with FDMA-MRC is saturated in multi-cell systems in the high transmit power regime, as shown in Figure \ref{C3:MultiCell:c}, which is different from the trends seen for single-cell systems in Figure \ref{C3:APGVsSNR:c}.
In fact, the $(M-1)$-fold DoF gain in the ESG of MIMO-NOMA over MIMO-OMA with FDMA-MRC in single-cell systems derived in \eqref{C3:EPGMIMOERA6} can only be achieved in the high-SNR regime.
However, due to the lack of joint multi-cell signal processing to mitigate the ICI, the multi-cell system becomes interference-limited upon increasing the total transmit power.
Therefore, the multi-cell system actually operates in the low-SINR regime, which does not facilitate the exploitation of the DoF gain in single-cell systems.

\begin{table}
	\caption{Comparison on ESG (nat/s/Hz) of NOMA over OMA in the considered scenarios. The system setup is $K = 256$, $D = 200$ m, $\eta = 4$, and $M = 4$ for  multi-antenna systems.}
	\centering\small
	\begin{tabular}{c|cc|cc}
		\hline
		               & \multicolumn{2}{c}{${\mathrm{SNR}_{\mathrm{sum}}} = 0$ dB}  \vline& \multicolumn{2}{c}{${\mathrm{SNR}_{\mathrm{sum}}} = 10$ dB} \\\hline
		               & Single-cell    & Multi-cell                          & Single-cell    & Multi-cell                           \\\hline
		Single-antenna & 0.281          & 0.2639                              & 0.983          & 0.7973                                 \\
		Multi-antenna  & 2.114          & 2.0179                              & 6.65           & 5.4113                                  \\
		Massive MIMO (ESG per user)     & 0.1796         & 0.1702                              & 0.5765         & 0.4490                      \\
		\hline
	\end{tabular}\label{C3:SimulationComparison}
\end{table}

\begin{remark}
The comparison of the ESG (nat/s/Hz) results of NOMA over OMA in all the scenarios considered is summarized in Table \ref{C3:SimulationComparison}.
We consider a practical operation setup with $K = 256$, $D = 200$ m, $\eta = 4$, and $M = 4$ for the multi-antenna systems.
For fair comparison, the total transmit power ${P_{\mathrm{max}}'}$ in multi-cell systems is adjusted for ensuring that the total average received SNR ${\mathrm{SNR}_{\mathrm{sum}}}$ at the serving BS is identical to that in single-cell systems.
Note that the row of massive MIMO in Table \ref{C3:SimulationComparison} quantifies the ESG per user of \emph{m}MIMO-NOMA over \emph{m}MIMO-OMA, which is consistent with Figure \ref{C3:APGVsSNR:d}, Figure \ref{C3:MIMOAPGVsM:c} and Figure \ref{C3:MultiCell:d}.
We can observe that the ESG remains a near-constant at the low SNR of ${\mathrm{SNR}_{\mathrm{sum}}} = 0$ dB when extending NOMA from single-cell systems to multi-cell systems, while the ESG degrades substantially at the high SNR of ${\mathrm{SNR}_{\mathrm{sum}}} = 10$ dB.
In fact, the limited transmit power budget in the low-SNR regime in adjacent cells only leads to a low ICI level at the serving BS, which avoids a significant ESG degradation, when applying NOMA in multi-cell systems.
\end{remark}

\section{Summary}
In this chapter, we investigated the ESG in uplink communications attained by NOMA over OMA in single-antenna, multi-antenna, and massive MIMO systems with both single-cell and multi-cell deployments.
In the single-antenna single-cell system considered, the ESG of NOMA over OMA was quantified and two types of gains were identified in the ESG derived, i.e., the large-scale near-far gain and the small-scale fading gain.
The large-scale near-far gain increases with the cell size, while the small-scale fading gain is a constant of $\gamma = 0.57721$ nat/s/Hz in Rayleigh fading channels.
Additionally, we unveiled that the ESG of SISO-NOMA over SISO-OMA can be increased by $M$ times upon using $M$ antennas at the BS, owing to the extra spatial DoF offered by additional antennas.
In the massive MIMO single-cell system considered, the ESG of NOMA over OMA increases linearly both with the number of users and with the number of antennas at the BS.
The analytical results derived for single-cell systems were further extended to multi-cell systems via characterizing the effective ICI channel distribution and by deriving the ICI power.
We found that a larger cell size is preferred by NOMA for both single-cell and multi-cell systems, due to the enhanced large-scale near-far gain and reduced ICI, respectively.
Extensive simulation results have shown the accuracy of our performance analyses and confirmed the insights provided above.

\chapter{Joint Pilot and Payload Power Control
for Uplink MIMO-NOMA with MRC-SIC Receivers}\label{C4:chapter4}

\section{Introduction}
In the previous chapter, we have theoretically revealed that the performance gain of NOMA over OMA arises from the heterogeneity in channel gains of NOMA users.
Consequently, the performance gain of NOMA might be degraded due to some inevitable channel uncertainty in practice.
Additional, from Table \ref{C3:TransceiverProtocol}, we can observe that successive interference cancelation (SIC) plays an important role for realizing the performance gain of NOMA, which is also sensitive to channel estimation error.
In this chapter, we aim to improve the robustness of NOMA against the channel uncertainty via proposing a power allocation design.

Most of existing works on resource allocation design for NOMA focused on downlink NOMA systems\cite{sun2016optimal,Lei2016NOMA}.
However, NOMA inherently exists in uplink communications\cite{Yang2016NOMA}, since the electromagnetic waves are naturally superimposed at a receiving base station (BS).
A simple back-off power control scheme was proposed for uplink NOMA\cite{Zhangtobepublished}, while an optimal resource allocation algorithm to maximize the system sum-rate was developed in \cite{Al-Imari2014}.
Most recently, multiple-input multiple-output NOMA (MIMO-NOMA) systems are of more interests\cite{DingSignalAlignment,Xu2017}.
In particular, maximal ratio combining with SIC (MRC-SIC) is an appealing reception technique for uplink MIMO-NOMA owing to its low computational complexity.

Despite the fruitful research conducted on NOMA, only payload power allocation and ideal SIC decoding are considered in existing works, e.g.\cite{Al-Imari2014,Wei2017}.
For both single-antenna and multiple-antenna systems, it is well-known that error propagation of SIC decoding limits the promised performance gain brought by NOMA.
In practice, the sources of error propagation are two-fold: one is the channel estimation error (CEE) and the other is the erroneous in data detection.
This chapter focuses on tackling the former issue via exploiting the non-trivial trade-off between the pilot and payload power allocation for uplink MIMO-NOMA systems for a given total energy budget.
Specifically, a higher pilot power yields a better channel estimation but leads to a less payload power for data detection.
In the meantime, the reduced payload power would introduce a lower inter-user interference (IUI) for other users.
Therefore, jointly designing the pilot and payload power allocation is critical for mitigating the error propagation.

In this chapter, to alleviate the error propagation in SIC, we propose a joint pilot and payload power allocation (JPA) scheme for uplink MIMO-NOMA with a MRC-SIC receiver based on a practical minimum mean square error (MMSE) channel estimator.
The average signal-to-interference-plus-noise ratio (SINR) is derived analytically and the JPA design is formulated as a non-convex optimization problem to maximize the minimum weighted average SINR (ASINR).
The globally optimal solution is obtained by geometric programming.
Simulation results demonstrate that the proposed JPA scheme is beneficial to reduce the effect of error propagation, which enhances the data detection performance, especially in the moderate energy budget regime.

\section{System Model}

\subsection{System Model}
\begin{figure}[t]
\centering
\includegraphics[width=4.0in]{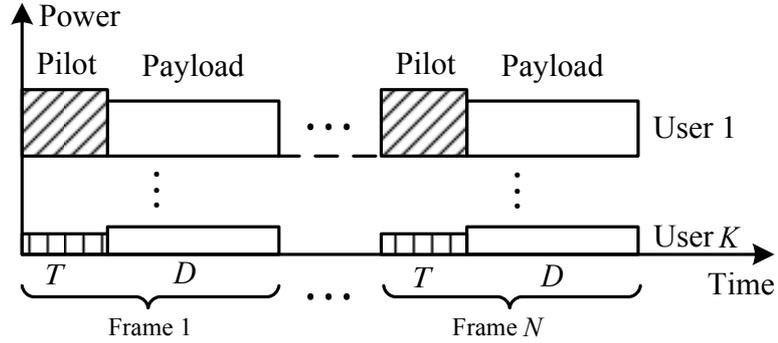}
\caption{An illustration of the frame structure of the uplink transmission.}\label{C4:FrameStructure}
\end{figure}

We consider an uplink MIMO-NOMA communication system in a single-cell with a BS equipped with $M$ antennas serving $K$ single-antenna users.
All the $K$ users are allocated on the same frequency band.
Every user transmits multiple frames over multiple coherence time intervals (CTI) to the BS, where we assume that the duration of each frame is comparable to that of a CTI.
In particular, each frame consists of $T$ pilot symbols and $D$ data symbols consecutive in time, as shown in Figure \ref{C4:FrameStructure}.
We assume that $T$ and $D$ are fixed as it is commonly implemented in practical systems for simplifying time synchronization.
Instead of considering symbol-level SIC as in most of existing works in NOMA \cite{Wei2017,DingSignalAlignment}, we adopt the codeword-level SIC to exploit coding gain.
%
%
Note that, a codeword is usually much longer than the duration of a CTI and is spread over $N$ frames, which is an important scenario for time-varying channels with a short coherence time.

In frame $n$, the received signal at the BS during pilot transmission and data transmission are given by
\begin{equation}
{{\mathbf{Y}}_{n}^{\mathrm{P}}} = {{\mathbf{H}}_n}{\mathbf{\Lambda T}} + {{\mathbf{Z}}_{n}^{\mathrm{P}}} \; \text{and}\;
{{\mathbf{Y}}_{n}^{\mathrm{D}}} = {{\mathbf{H}}_n}{\mathbf{B}}{{\mathbf{D}}_n} + {{\mathbf{Z}}_{n}^{\mathrm{D}}},
\end{equation}
respectively.
$\mathbf{T} \in \mathbb{C}^{K\times T}$ denotes the pilot matrix and ${\mathbf{D}_n} = \left[{\mathbf{d}_{n,1}},\ldots, \mathbf{d}_{n,K}\right]^{\mathrm{T}}\in \mathbb{C}^{K\times D}$ denotes the data matrix in frame $n$.
The diagonal matrices ${\mathbf{\Lambda}}$ and ${\mathbf{B}}$ are defined by ${\mathbf{\Lambda}} = \text{diag}\left\{\sqrt{\alpha_1}, \ldots, \sqrt{\alpha_K}\right\}$ and ${\mathbf{B}} = \text{diag}\left\{\sqrt{\beta_1}, \ldots, \sqrt{\beta_K}\right\}$, respectively, where $\alpha_k$ and $\beta_k$ denote the pilot and payload power of user $k$, respectively\footnote{The power allocation for pilot and payload are calculated centrally at the BS and to be distributed to all the users through some closed-loop power control scheme in control channels, e.g. \cite{Lee1996CDMA}.}.
We assume that normalized orthogonal pilots are assigned to all the users exclusively, i.e., ${\mathbf{T}}{\mathbf{T}}^{\mathrm{H}} = \mathbf{I}_K$ with $T \ge K$.
%
%
%
%
%
The matrices $\mathbf{Z}_{n}^{\mathrm{P}} \in \mathbb{C}^{M\times T}$ and $\mathbf{Z}_{n}^{\mathrm{D}} \in \mathbb{C}^{M\times D}$ denote the additive zero mean Gaussian noise with covariance matrix ${\sigma ^2}\mathbf{I}_M$ during training phase and data transmission phase in frame $n$, respectively.
The matrix $\mathbf{H}_n = \left[{{\mathbf{h}}_{n,1}},\ldots,{{\mathbf{h}}_{n,K}}\right] \in \mathbb{C}^{M\times K}$ contains the channels of all the users in frame $n$, where column $k$ denotes the channel vector of user $k$.
Rayleigh fading assumption is adopted in this chapter, i.e.,
${{\mathbf{h}}_{n,k}} \sim \mathcal{CN}\left( \mathbf{0},\nu_k^2 \mathbf{I}_M \right)$, where $\mathcal{CN}\left( \mathbf{0},\nu_k^2 \mathbf{I}_M \right)$ denotes a circularly symmetric complex Gaussian distribution with zero mean and covariance matrix $\nu_k^2 \mathbf{I}_M$.
Scalar $\nu_k^2$ denotes the large-scale fading of user $k$ capturing the effects of path loss and shadowing.
Since all the users are usually sufficiently separated apart compared to the wavelength, their channels are assumed to be independent with each other.
Therefore, their channel correlation matrix is given by a diagonal matrix ${{{\mathbf{R}}_{\mathbf{H}}}} = M\text{diag} \left\{\nu _1^2,\ldots,\nu _K^2\right\}$.
Without loss of generality, we assume that users are indexed in the descending order of large-scale fading, i.e., $\nu _1^2 \ge \nu _2^2 \ge  \ldots  \ge \nu _K^2$.
In this chapter, we define strong or weak user based on the large-scale fading since it facilitates the characterization of the channel ordering statistically across the codeword, i.e., user 1 is the strongest user, while user $K$ is the weakest user.
As a result, the SIC decoding order is assumed to be the descending order of large-scale fading, i.e., users $1,2,\ldots,K$ are decoded sequentially.

\section{Performance Analysis on ASINR}
In the $k$-th step of the MRC-SIC decoding, after cancelling the signals of the previous $k-1$ users, the post-processing signal of user $k$ in frame $n$ is given by
\begin{align}\label{C4:MRC-SIC}
{{\mathbf{y}}_{n,k}^{\mathrm{T}}}
&= \underbrace{{\mathbf{\hat h}}_{n,k}^{\mathrm{H}}{{\mathbf{\hat h}}_{n,k}}\sqrt {{\beta _k}} {{\mathbf{d}}_{n,k}}}_{\text{desired signal}} + \underbrace{{\mathbf{\hat h}}_{n,k}^{\mathrm{H}}\sum\limits_{l = 1}^{k} {\boldsymbol{\varepsilon} _{n,l}} \sqrt {{\beta _l}} {{\mathbf{d}}_{n,l}}}_{\text{residual interference}} \notag\\
&+ \underbrace{{\mathbf{\hat h}}_{n,k}^{\mathrm{H}}\sum\limits_{l = k + 1}^K {{{\mathbf{h}}_{n,l}}} \sqrt {{\beta _l}} {{\mathbf{d}}_{n,l}}}_{\text{inter-user interference}}
+ \underbrace{{\mathbf{\hat h}}_{n,k}^{\mathrm{H}}{{\mathbf{Z}}_{n,d}}}_{\text{noise}},
\end{align}
where ${{\mathbf{y}}_{n,k}} \in \mathbb{C}^{T\times 1}$, ${{{\mathbf{\hat h}}}_{n,k}} \in \mathbb{C}^{M\times 1}$ denotes the MMSE channel estimates of user $k$ in frame $n$, and ${\boldsymbol{\varepsilon} _{n,k}} = {{\mathbf{h}}_{n,k}} - {{{\mathbf{\hat h}}}_{n,k}}$ denotes the corresponding CEE.
In \eqref{C4:MRC-SIC}, we assume that the error propagation is only caused by the CEE but not affected by the erroneous in data detection.
It is a reasonable assumption if we can guarantee the ASINR of each user larger than a threshold to maintain the required bit-error-rate (BER) performance through the power control in the following.
To this end, we first define the instantaneous SINR of user $k$ in frame $n$ as:
\begin{equation}\label{C4:SINR}
{\mathrm{SIN}}{{\mathrm{R}}_{n,k}} = \frac{s_{n,k}}{G_{n,k} + Q_{n,k} + {\sigma ^2}}, \forall n,k,
\end{equation}
with $s_{n,k} = {{\mathbf{\hat h}}_{n,k}^{\mathrm{H}}{{{\mathbf{\hat h}}}_{n,k}}{\beta _k}}$, $G_{n,k} = \sum_{l = k + 1}^K {\frac{{{\mathbf{\hat h}}_{n,k}^{\mathrm{H}}{{\mathbf{h}}_{n,l}}{\mathbf{h}}_{n,l}^{\mathrm{H}}{{{\mathbf{\hat h}}}_{n,k}}}}{{{\mathbf{\hat h}}_{n,k}^{\mathrm{H}}{{{\mathbf{\hat h}}}_{n,k}}}}} {\beta _l}$, and
\begin{equation}\label{C4:Qnk}
  Q_{n,k} = \sum_{l = 1}^{k} {\frac{{{\mathbf{\hat h}}_{n,k}^{\mathrm{H}}\left( {{{\mathbf{h}}_{n,l}} - {{{\mathbf{\hat h}}}_{n,l}}} \right){{\left( {{{\mathbf{h}}_{n,l}} - {{{\mathbf{\hat h}}}_{n,l}}} \right)}^{\mathrm{H}}}{{{\mathbf{\hat h}}}_{n,k}}}}{{{\mathbf{\hat h}}_{n,k}^{\mathrm{H}}{{{\mathbf{\hat h}}}_{n,k}}}}} {\beta _l},
\end{equation}
while the ASINR of user $k$ is defined by:
\begin{equation}\label{C4:ASINR}
{\overline{\mathrm{SINR}}_k} = {\mathrm{E}}\left\{ {\frac{{{s_{n,k}}}}{{{G_{n,k}} + {Q_{n,k}} + {\sigma ^2}}}} \right\}, \forall k,
\end{equation}
where $\mathrm{E}\left\{ \cdot \right\}$ denotes the expectation operation.
In fact, for codeword-level SIC, it is the ASINR rather than the instantaneous SINR that determines the detection performance\cite{GaoAESINR}.
Yet, for mathematical tractability, in the sequel, we adopt the lower bound of $\overline{\mathrm{SINR}}_k$ proposed in \cite{GaoAESINR} as
\begin{equation}\label{C4:AESINR}
\mathrm{ASINR}_k = \frac{{\mathrm{E}\left\{ {s_{n,k}}\right\}}}{\mathrm{E}\left\{ {{G_{n,k}}} \right\} + \mathrm{E}\left\{ {{Q_{n,k}}} \right\} + {\sigma ^2}} \le {\overline{\mathrm{SINR}}_k}, \forall k.
\end{equation}
Variable $s_{n,k}$ denotes the desired signal power of user $k$ in frame $n$, ${G_{n,k}}$ denotes the IUI power in the $k$-th step of MRC-SIC, and ${{Q_{n,k}}}$ denotes the residual interference power caused by CEE.
Clearly, $s_{n,k}$, ${G_{n,k}}$, and ${Q_{n,k}}$ are functions of pilot and payload power allocation.
Now, we express the closed-form of \eqref{C4:AESINR} through the following theorem.

\begin{Thm}\label{C4:theorem1}
For two independent random vectors ${\mathbf{x}}, {\mathbf{y}} \in \mathbb{C}^{M\times 1}$ with distribution of ${\mathbf{x}} \sim \mathcal{CN}\left( \mathbf{0},\sigma_x^2 \mathbf{I}_M \right)$, we define a scalar random variable ${\phi} = {{{\mathbf{y}}^{\mathrm{H}}{\mathbf{x}}}}/{{\left| {\mathbf{y}} \right|}}$.
Then, ${\phi}$ is independent with ${\mathbf{y}}$ and it is distributed as a complex Gaussian distribution with zero mean and variance of $\sigma_x^2$, i.e., ${\phi} \sim \mathcal{CN}\left( 0, \sigma_x^2\right)$.
\end{Thm}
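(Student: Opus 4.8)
The plan is to establish the claim by exploiting the rotational (unitary) invariance of the isotropic complex Gaussian distribution of $\mathbf{x}$, conditioning on $\mathbf{y}$. First I would condition on a fixed realization of $\mathbf{y}$, so that the unit vector $\mathbf{u} = \mathbf{y}/\lvert\mathbf{y}\rvert$ becomes a deterministic unit-norm vector. Under this conditioning, $\phi = \mathbf{u}^{\mathrm{H}}\mathbf{x}$ is a fixed linear functional of the Gaussian vector $\mathbf{x}$, hence itself complex Gaussian. I would then compute its conditional mean and variance: since $\mathrm{E}\{\mathbf{x}\} = \mathbf{0}$ we get $\mathrm{E}\{\phi \mid \mathbf{y}\} = 0$, and since $\mathrm{E}\{\mathbf{x}\mathbf{x}^{\mathrm{H}}\} = \sigma_x^2 \mathbf{I}_M$ we obtain
\begin{equation}
\mathrm{E}\{\lvert\phi\rvert^2 \mid \mathbf{y}\} = \mathbf{u}^{\mathrm{H}} \mathrm{E}\{\mathbf{x}\mathbf{x}^{\mathrm{H}}\} \mathbf{u} = \sigma_x^2 \mathbf{u}^{\mathrm{H}}\mathbf{u} = \sigma_x^2,
\end{equation}
using $\lVert\mathbf{u}\rVert = 1$. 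I would also check circular symmetry, i.e. that $\mathrm{E}\{\phi^2 \mid \mathbf{y}\} = 0$, which follows from the circular symmetry of $\mathbf{x}$ (its pseudo-covariance $\mathrm{E}\{\mathbf{x}\mathbf{x}^{\mathrm{T}}\}$ vanishes).

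The decisive observation is that this conditional distribution, $\mathcal{CN}(0,\sigma_x^2)$, does not depend on the particular value of $\mathbf{y}$ at all. I would argue that because the conditional law of $\phi$ given $\mathbf{y}$ is identical for every realization of $\mathbf{y}$, the marginal (unconditional) law of $\phi$ equals this same $\mathcal{CN}(0,\sigma_x^2)$, and moreover $\phi$ is statistically independent of $\mathbf{y}$. Independence follows precisely because a conditional distribution that is constant in the conditioning variable characterizes independence; concretely, for any test functions the joint expectation factorizes. The hypothesis that $\mathbf{x}$ and $\mathbf{y}$ are independent is what licenses this conditioning step, since it guarantees that conditioning on $\mathbf{y}$ leaves the distribution of $\mathbf{x}$ unchanged as $\mathcal{CN}(\mathbf{0},\sigma_x^2\mathbf{I}_M)$.

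The main obstacle I anticipate is not the mean/variance computation, which is routine, but rather making the independence conclusion rigorous rather than merely asserting it from the constancy of the conditional law. The cleanest route is to verify directly that for arbitrary bounded measurable $g$,
\begin{equation}
\mathrm{E}\{ e^{\mathrm{i}\,\Re(\bar{t}\phi)} \, g(\mathbf{y}) \} = \mathrm{E}\{ e^{\mathrm{i}\,\Re(\bar{t}\phi)} \} \, \mathrm{E}\{ g(\mathbf{y}) \}
\end{equation}
for every $t \in \mathbb{C}$, by first taking the inner conditional expectation given $\mathbf{y}$ and observing that the conditional characteristic function $\mathrm{E}\{ e^{\mathrm{i}\,\Re(\bar{t}\phi)} \mid \mathbf{y}\} = e^{-\sigma_x^2\lvert t\rvert^2/4}$ is deterministic, hence pulls out of the outer expectation. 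A secondary subtlety worth flagging is that the unit vector $\mathbf{u} = \mathbf{y}/\lvert\mathbf{y}\rvert$ is well-defined only on the event $\{\mathbf{y}\neq\mathbf{0}\}$, which has probability one under the stated continuous distribution of $\mathbf{y}$, so this edge case is harmless and can be dismissed in a single sentence.
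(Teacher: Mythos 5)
Your proposal is correct and follows essentially the same route as the paper's own proof: condition on $\mathbf{y}$, compute the conditional mean and variance of $\phi$, observe that the resulting conditional law $\mathcal{CN}(0,\sigma_x^2)$ does not depend on $\mathbf{y}$, and conclude independence together with the unconditional distribution. The only difference is that you supply the rigor the paper glosses over (the circular-symmetry check via the vanishing pseudo-covariance, the characteristic-function factorization to justify independence, and dismissing the null event $\{\mathbf{y}=\mathbf{0}\}$), which strengthens but does not alter the argument.
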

\begin{proof}
Please refer to Appendix \ref{C4:AppendixB} for a proof of Theorem \ref{C4:theorem1}.
\end{proof}

%

Now, following the MMSE channel estimation\cite{BigueshMMSE2006} and invoking Theorem \ref{C4:theorem1}, we can easily obtain $\mathrm{E}\left\{ {{s_{n,k}}} \right\} = {{\mathbf{A}}_k^{\mathrm{H}}}{{\mathbf{\Phi }}^{ - 1}}{{\mathbf{A}}_k}$ and $\mathrm{E}\left\{ {{G_{n,k}}} \right\} = \sum\limits_{l = k + 1}^K \nu_l^2 {\beta _l}$,
where ${\mathbf{\Phi }} = {{\mathbf{T}}^{\mathrm{H}}}{\mathbf{\Lambda}}{{\mathbf{R}}_{\mathbf{H}}}{\mathbf{\Lambda}}{\mathbf{T}} + {\sigma ^2}M{\mathbf{I}}_{T}$, ${{\mathbf{A}}_k} = {{\mathbf{T}}^{\mathrm{H}}}{\mathbf{\Lambda}}{\left\{ {{{\mathbf{R}}_{\mathbf{H}}}} \right\}_{:k}}$, and ${\left\{ {{{\mathbf{R}}_{\mathbf{H}}}} \right\}_{:k}}$ denotes the $k$-th column of ${{{\mathbf{R}}_{\mathbf{H}}}}$.
Then, based on the orthogonal principle of the MMSE channel estimation\cite{BigueshMMSE2006} and Theorem \ref{C4:theorem1}, we have $\mathrm{E}\left\{ {{Q_{n,k}}} \right\} = \sum\limits_{l = 1}^{k} \sigma_{l}^2 {\beta _l}$.
Therein, variable $\sigma_{k}^2$ is the variance of the CEE of user $k$, which is obtained based on equation (27) in \cite{BigueshMMSE2006} as $\sigma _k^2 = \nu_k^2 - \frac{1}{M} \mathbf{A}_k^{\mathrm{H}}\mathbf{\Phi}^{-1}\mathbf{A}_k$.

Substituting $\mathrm{E}\left\{ {{s_{n,k}}} \right\}$, $\mathrm{E}\left\{ {{G_{n,k}}} \right\}$, and $\mathrm{E}\left\{ {{Q_{n,k}}} \right\}$ into \eqref{C4:AESINR} and using the matrix inverse lemma on ${{\mathbf{\Phi }}^{ - 1}}$, we have
\begin{equation}\label{C4:AESINR2}
\mathrm{ASINR}_k = \frac{{M\left( {\nu _k^2 - \frac{{{\sigma ^2}\nu _k^2}}{{{\sigma ^2} + {\alpha _k}\nu _k^2}}} \right){\beta _k}}}{{\sum\limits_{l = k + 1}^K {\nu _l^2} {\beta _l} + \sum\limits_{l = 1}^k {\frac{{{\sigma ^2}\nu _l^2}}{{{\sigma ^2} + {\alpha _l}\nu _l^2}}} {\beta _l} + {\sigma ^2}}}.
\end{equation}
We note that ${\mathrm{ASIN}}{{\mathrm{R}}_k}$ increases with $\left\{{\alpha _1}\right.$, $\ldots$, ${\alpha _k}$, $\left.{\beta _k}\right\}$, but decreases with $\left\{{\beta _1}\right.$, $\ldots$, ${\beta _{k-1}}$, ${\beta _{k+1}}$, $\ldots$, $\left.{\beta _K}\right\}$.
In other words, there exists a non-trivial trade-off between the allocation of pilot power and payload power.
Indeed, a higher pilot power ${\alpha _k}$ and payload power ${\beta _k}$ of user $k$ result in a higher ${\mathrm{ASIN}}{{\mathrm{R}}_k}$, while a higher payload power of other users ${\beta _1}$, $\ldots$, ${\beta _{k-1}}$, ${\beta _{k+1}}$, $\ldots$, ${\beta _{K}}$ will introduce more IUI for user $k$.
In contrast, high pilot powers ${\alpha _1}$, $\ldots$, ${\alpha _{k-1}}$ are beneficial to increase ${\mathrm{ASIN}}{{\mathrm{R}}_k}$ since they can reduce the residual interference by improving the quality of channel estimation.

\section{Joint Pilot and Payload Power Allocation}
The JPA design can be formulated to maximize the minimum weighted ${{\mathrm{ASIN}}{{\mathrm{R}}_k}}$ as follows:
\begin{align}
\label{C4:PilotPayloadPowerAllocation}
&\underset{\left\{{\alpha_1}, \ldots, {\alpha_K}\right\},\left\{{\beta_1}, \ldots, {\beta_K}\right\}}{\maxo} \;\;\underset{k}{\min} \;\;\{{c_k{\mathrm{ASIN}}{{\mathrm{R}}_k}} \} \notag\\
\mbox{s.t.}\;\;\;\;
&\mbox{C1: } \alpha_k T + \beta_k D \le E_{\mathrm{max}}, \forall k, \notag\\
&\mbox{C2: } \alpha_k, \beta_k \ge 0, \forall k, \notag\\
&\mbox{C3: } {\mathrm{ASIN}}{{\mathrm{R}}_k} \ge \gamma, \forall k.
\end{align}
\par\noindent
The constants ${\mathbf{c}} = \left[c_1, \ldots, c_K\right]$ are predefined weights for all the $K$ users.
Constraint C1 limits the pilot power ${\alpha_k}$ and the payload power $\beta_k$ with the maximum energy budget $E_{\mathrm{max}}$ for each user.
Constraint C2 ensures the non-negativity of ${\alpha_k}$ and $\beta_k$.
Constraint C3 requires the ASINR of user $k$ to be larger than a given threshold $\gamma$ to guarantee the data detection performance during SIC decoding.
Note that since the message of each user is decoded only once at the BS for uplink NOMA, a SIC decoding constraint is not required as imposed for downlink NOMA\cite{Hanif2016}.

This max-min problem formulation aims to reduce the effect of error propagation of the MRC-SIC decoding, which is dominated by the user with the minimum ASINR.
Furthermore, since the error propagation caused by the users at the forefront of the MRC-SIC decoding process, e.g. user 1, affects the data
detection of remaining undecoded users, and thus affects the system performance more significantly than other users.
Therefore, we have $c_1 \le c_2, \ldots, \le c_K$ to assign
different priorities to users in maximizing their ASINRs.
The formulated problem in (9) is a non-convex problem, where $\alpha_k$ and $\beta_k$ are coupled with each other severely in ${\mathrm{ASIN}}{{\mathrm{R}}_k}$.
%
Defining new optimization variables ${t_k} = \frac{{{\sigma ^2}\nu _k^2}}{{{\sigma ^2} + {\alpha _k}\nu _k^2}}$, $\forall k$, the problem in (9) is equivalent to the following optimization problem \cite{ChiangGP}:
\begin{align} \label{C4:PilotPayloadPowerAllocation2}
&\underset{\left\{{t_1}, \ldots, {t_K}\right\},\left\{{\beta_1}, \ldots, {\beta_K}\right\},\lambda}{\maxo} \;\;\lambda\\
\mbox{s.t.}\;\;\;\;
&\mbox{C1: } {\sigma ^2}Tt_k^{ - 1} + D{\beta _k} \le {{{\sigma ^2}T}}/{{\nu _k^2}} + {E_{{\mathrm{max}}}}, \forall k, \notag\\
&\mbox{C2: } 0 < t_k \le {{\nu _k^2}}, \beta_k \ge 0, \forall k, \notag\\
&\mbox{C3: } \sum\limits_{l = k + 1}^K {\gamma \nu _l^2{\beta _l}\beta _k^{ - 1}}  + \sum\limits_{l = 1}^k {\gamma {t_l}{\beta _l}\beta _k^{ - 1}}  + \gamma {\sigma ^2}\beta _k^{ - 1} + M{t_k} \le M\nu _k^2, \forall k, \notag\\[-1mm]
&\mbox{C4: } \sum\limits_{l = k + 1}^K {\nu _l^2\lambda {\beta _l}\beta _k^{ - 1}}  + \sum\limits_{l = 1}^k {{t_l}\lambda {\beta _l}\beta _k^{ - 1}}  + {\sigma ^2}\lambda \beta _k^{ - 1} + M{c_k}{t_k} \le M{c_k}\nu _k^2, \forall k,\notag
\end{align}
where $\lambda > 0$ is an auxiliary optimization variable.
We can easily observe that the objective function and the functions on the left side of constraints C1, C3, and C4 in (10) are all valid posynomial functions\cite[Chapter~4]{Boyd2004}.
Therefore, the reformulated problem in (10) is a standard geometric programming (GP) problem\footnote{Actually, the problem transformation in transfoming (9) to (10) is standard and can be found in \cite{ChiangGP}.
However, without the proposed steps to simplify the performance analysis, the GP transformation\cite{ChiangGP} cannot be directly applied to the considered problem.}, which can be solved efficiently by off-the-shelf numerical solvers such as CVX\cite{cvx}.

\section{Simulation Results}

We use simulations to verify the developed performance analysis and evaluate the performance of the proposed JPA scheme for both uncoded and coded systems.
Two baseline schemes are introduced for comparison, where the equal power allocation (EPA) scheme sets the equal pilot and payload power, i.e., $\alpha_k = \beta_k = \frac{E_{\mathrm{max}}}{T+D}$, but the payload power allocation (PPA) scheme only fixes the pilot power as $\alpha_k = \frac{E_{\mathrm{max}}}{T+D}$ and optimizes over the payload power $\beta_k$ subject to the same constraint set as in \eqref{C4:PilotPayloadPowerAllocation2}.

In the simulations, we set $M=2$, $T = K = 4$, $D = 96$, ${\mathbf{c}} = \left[\frac{1}{8}, \frac{1}{8}, \frac{1}{4}, \frac{1}{2}\right]$, $\gamma = 5$ dB, $E_{\mathrm{max}} = 20$ J, and ${\sigma ^2} = -100$ dBm.
It is assumed that there are $T+D = 100$ symbols in a CTI.
All the $K$ users are uniformly distributed in a single cell with a cell radius of $400$ m.
The weight ${\mathbf{c}}$ is selected deliberately to alleviate the impact of the error propagation from the previous users during the MRC-SIC decoding, where the optimal weight selection will be considered in future work.
The 3GPP urban path loss model\cite{Access2010} is adopted and quadrature phase shift keying (QPSK) modulation is used for all the simulation cases.
For the coded systems, we adopt the standard turbo code as stated in the 3GPP technical specification\cite{3GPPReportTurboCode2016}.
We assume that one codeword is spread over $N=10$ coherence intervals, which results in a codeword length of $1920$ bits.
%

\subsection{Individual ASINR}
\begin{figure}[t!]
\centering
\includegraphics[width=4.5in]{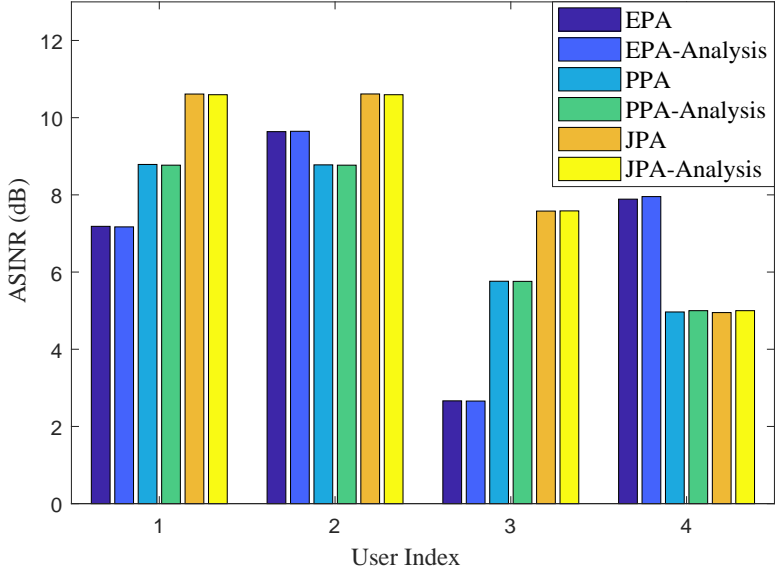}
\caption{Individual ASINR of uplink MIMO-NOMA with a MRC-SIC receiver.}
\label{C4:AESINR_Compare}
\end{figure}

Figure \ref{C4:AESINR_Compare} depicts the individual ASINR for the considered three schemes for uncoded systems.
We can observe that the simulation results match perfectly with the theoretical results in \eqref{C4:AESINR2}.
%
%
Besides, it can be observed that the lowest ASINR achieved by the PPA scheme and our proposed JPA scheme both occur at $5$ dB for user 4, which is much higher than the minimum ASINR provided by the EPA scheme occurring at $2.6$ dB for user 3.
This is owing to the adopted max-min principle and constraint C3 in the proposed problem formulation in \eqref{C4:PilotPayloadPowerAllocation}.
Nevertheless, it can be observed that our proposed scheme provides a $2$ dB higher ASINR than that of the PPA scheme for users 1, 2, and 3.
This is because our proposed scheme can utilize the energy more efficiently than that of the PPA scheme.
Moreover, our simulation results demonstrate that the optimal power allocation $\alpha^{*}_k$ and $\beta^{*}_k$ can satisfy the energy budget constraint C1 in \eqref{C4:PilotPayloadPowerAllocation}.
%

Furthermore, the Jain's fairness index (JFI) of the weighted ASINR for the considered three schemes are given by $J_{\mathrm{EPA}} = 0.6174$,
$J_{\mathrm{PPA}} = 0.9436$, and
$J_{\mathrm{JPA}} = 0.9983$,
respectively.
The EPA scheme achieves the lowest JFI, while both the PPA and JPA schemes enjoy a high JFI since they are based on the max-min resource allocation in \eqref{C4:PilotPayloadPowerAllocation}.
In addition, our proposed JPA scheme offers a slightly higher JFI than that of the PPA scheme due to its efficient utilization of energy.

\subsection{Individual BER}

\begin{figure}[t!]
\centering
\includegraphics[width=4.5in]{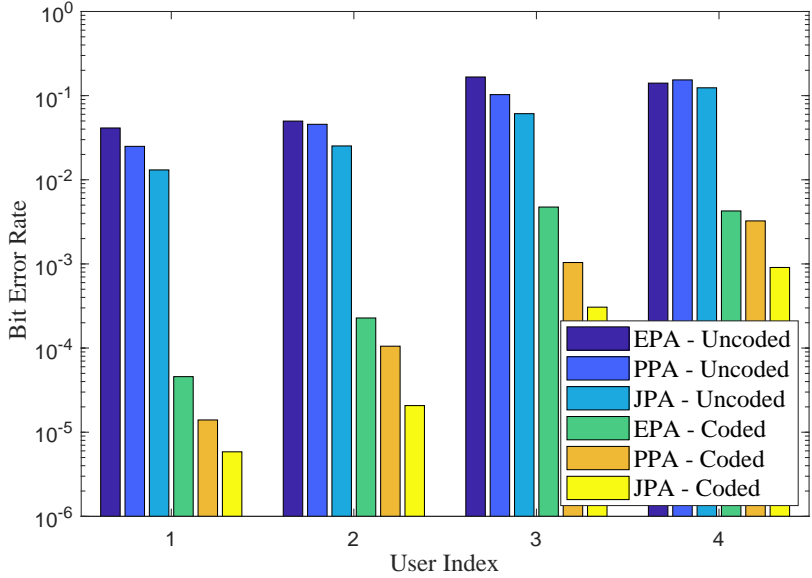}
\caption{Individual BER of uplink MIMO-NOMA with a MRC-SIC receiver.}
\label{C4:BER_Compare}
\end{figure}

Figure \ref{C4:BER_Compare} illustrates the individual BER performance for uncoded and coded systems.
We can observe that the coded system offers much lower BERs than the uncoded system owing to the coding gain.
For the EPA scheme, user 4 endures a high BER as user 3 despite it posses a larger ASINR than user 3, as shown in Figure \ref{C4:AESINR_Compare}.
This reveals the error propagation of the MRC-SIC decoding for the EPA scheme.
The PPA scheme can improve the BER performance for users 1, 2, and 3 compared to the EPA scheme, while it fails to relieve user 4 from high BER.
However, our proposed scheme always enjoys the lowest BER compared to the two baseline schemes for all the users, especially for coded systems.
In fact, our proposed scheme can mitigate the error propagation more efficiently compared to the PPA scheme by optimally balancing the pilot and payload power.
It is worth to note that, with constraint C3 in \eqref{C4:PilotPayloadPowerAllocation}, our proposed JPA scheme can guarantee the BER of all the users to be smaller than $10^{-3}$ for the coded systems, which validates the assumption about the sources of the error propagation in \eqref{C4:MRC-SIC}.

\subsection{BER versus Energy Budget}
\begin{figure}[t!]
\centering
\includegraphics[width=4.5in]{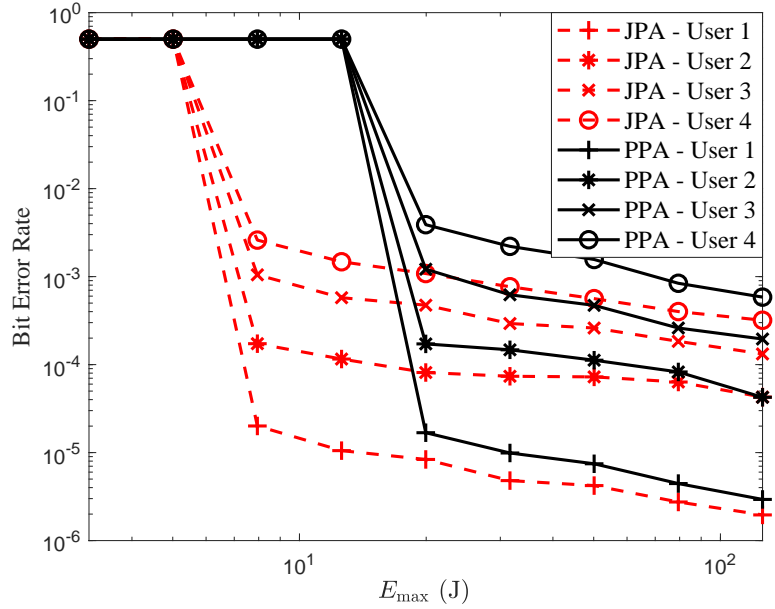}
\caption{BER performance versus energy budget $E_{\max}$ of uplink MIMO-NOMA with a MRC-SIC receiver.}
\label{C4:BER_SNR_Coded}
\end{figure}

For coded systems, Figure \ref{C4:BER_SNR_Coded} shows the BER performance of our proposed scheme over the PPA scheme versus the energy budget $E_{\mathrm{max}}$.
%
%
Note that we set $\text{BER} = 0.5$ if the optimization problem in \eqref{C4:PilotPayloadPowerAllocation} is infeasible to account the penalty of failure.
We can observe that our proposed scheme offers a much lower BER than that of the PPA scheme for all four users.
Interestingly, the BER performance gain is considerable in the moderate $E_{\mathrm{max}}$ regime, while it is marginal in the high $E_{\mathrm{max}}$ regime.
In fact, in the high $E_{\mathrm{max}}$ regime, the residual interference $Q_{n,k}$ vanishes owing to the high channel estimation accuracy.
%
%
Therefore, our proposed scheme can only offer diminishing gains in alleviating the impact of error propagation for further reducing the BER.
With the moderate $E_{\mathrm{max}}$, our proposed scheme can substantially improve the channel estimation, which can mitigate the residual interference during MRC-SIC decoding, and thus reduce the BER effectively.
In addition, it can be observed that an error floor for both
schemes appears at the BER region ranging from $10^{-2}$ to $10^{-5}$.
This early error floor is due to the joint effect of IUI, CEE, and the error propagation of the MRC-SIC decoding.
Note that an iterative receiver\cite{Xu2017} can be employed to lower the error floor level, which will be considered in our future work.

\section{Summary}
In this chapter, a joint pilot and payload power control scheme was proposed for uplink MIMO-NOMA systems with MRC-SIC receivers to mitigate the error propagation problem.
By taking into account the CEE, we analyzed the ASINR during the MRC-SIC decoding.
The JPA design was formulated as a non-convex optimization problem for maximizing the minimum weighted ASINR and was solved by geometric programming.
Simulation results verified our analysis and demonstrated that our proposed scheme is effective in mitigating the error propagation in SIC which enhances the BER performance, especially in the moderate energy budget regime.

\chapter{Optimal Resource Allocation
for Power-Efficient MC-NOMA with Imperfect Channel State Information}\label{C5:chapter5}

\section{Introduction}
\label{C5:sect1}

The previous two chapters mainly investigated the performance analysis and design for uplink NOMA systems.
Now, we move on to downlink NOMA systems and study the power efficient resource allocation design.

Resource allocation design plays a crucial role in exploiting the potential performance gain of NOMA systems, especially for multicarrier NOMA (MC-NOMA) systems\cite{Lei2016NOMA,Di2016sub,Sun2016Fullduplex}.
However, the existing works in \cite{Lei2016NOMA,Di2016sub,Sun2016Fullduplex,Liu2015b} focused on resource allocation design of NOMA based on the assumption of perfect CSIT.
Unfortunately, it is unlikely to acquire the perfect CSIT due to channel estimation error, feedback delay, and quantization error.
For the case of NOMA with imperfect CSIT, it is difficult for a base station to sort the users' channel gains and to determine the user scheduling strategy and SIC decoding policy.
More importantly, the imperfect CSIT may cause a resource allocation mismatch, which may degrade the system performance.
Therefore, it is interesting and more practical to design robust resource allocation strategy for MC-NOMA systems taking into account of CSIT imperfectness.

In the literature, there are three commonly adopted methods to address the CSIT imperfectness for resource allocation designs, including no-CSIT \cite{JorswieckNOCSI}, worst-case optimization \cite{WangWorstCase}, and stochastic approaches \cite{ZhangStochastic}.
The assumption of no-CSIT usually results in a trivial equal power allocation strategy without any preference in resource allocation \cite{JorswieckNOCSI}.
Besides, it is pessimistic to assume no-CSIT since some sorts of CSIT, e.g. imperfect channel estimates or statistical CSIT, can be easily obtained in practical systems exploiting handshaking signals.
The worst-case based methods guarantee the system performance for the maximal CSIT mismatch \cite{WangWorstCase}.
However, an exceedingly large amount of system resources are exploited for some worst cases that rarely happen.
In particular, for our considered problem, the worst-case method leads to a conservative resource allocation design, which may translate into a higher power consumption.
On the other hand, the stochastic methods aim at modeling the CSIT and/or the channel estimation error according to the long term statistic of the channel realizations \cite{ZhangStochastic}.
It is more meaningful than the no-CSIT method since the statistical CSIT is usually available based on the long term measurements in practical systems.
More importantly, the stochastic methods can guarantee the average system performance over the channel realizations with moderate system resources.
Therefore, in this chapter, we employed the stochastic method to robustly design the resource allocation strategy for MC-NOMA systems under imperfect CSIT.

Recently, green radio design has become an important focus in both academia and industry due to the growing demands of energy consumption and the arising environmental concerns around the world\cite{QingqingEE}.
To address the green radio design for NOMA systems, the authors in \cite{zhang2016energy} proposed an optimal power allocation strategy for a single-carrier NOMA system to maximize the energy efficiency, while a separate subcarrier assignment and power allocation scheme was proposed for MC-NOMA systems in \cite{FangEnergyEfficientNOMAJournal}.
To minimize the total power consumption, the authors in \cite{Lei2016NPM} designed a suboptimal ``relax-then-adjust" algorithm for MC-NOMA systems.
Nevertheless, the existing designs in \cite{zhang2016energy,FangEnergyEfficientNOMAJournal,Lei2016NPM} are based on the assumption of perfect CSIT which may not be applicable to MC-NOMA systems under imperfect CSIT.
To the best of the authors' knowledge, joint design of power allocation, rate allocation, user scheduling, and SIC decoding policy for power-efficient MC-NOMA under imperfect CSIT has not been reported yet.

In this chapter, we study the power-efficient resource allocation design for downlink MC-NOMA systems under imperfect CSIT, where each user imposes its own QoS requirement.
The joint design of power allocation, rate allocation, user scheduling, and SIC decoding policy is formulated as a non-convex optimization problem to minimize the total transmit power.
To facilitate the design of optimal SIC decoding order, we define the \emph{channel-to-noise ratio (CNR) outage threshold}, which includes the joint effect of channel conditions and QoS requirements of users.
Based on the optimal SIC decoding policy, we propose an optimal resource allocation algorithm via the branch-and-bound (B\&B) approach \cite{Konno2000,horst2013global,MARANASProofBB,Androulakis1995}, which serves as a performance benchmark for MC-NOMA systems.
Furthermore, to strike a balance between system performance and computational complexity, we propose a suboptimal iterative resource allocation algorithm based on difference of convex (D.C.) programming\cite{dinh2010local,VucicProofDC}, which has a polynomial time computational complexity and converges quickly to a close-to-optimal solution.
Our simulation results show that the proposed resource allocation schemes enable significant transmit power savings and are robust against channel uncertainty.

\section{System Model and Problem Formulation}
In this section, after introducing the adopted MC-NOMA system model under imperfect CSIT, we define the QoS requirement based on outage probability and formulate the power-efficient resource allocation design as a non-convex optimization problem.

\subsection{System Model}
A downlink MC-NOMA system\footnote{In this chapter, we focus on the power domain NOMA \cite{WeiSurvey2016} for the considered downlink communication scenario. Although the code-domain NOMA, such as sparse code multiple access (SCMA) \cite{Nikopour2013,Wang2015}, may outperform power-domain NOMA, SCMA is more suitable for the uplink communication where the reception complexity for information decoding is more affordable for base stations.} with one base station (BS) and $M$ downlink users is considered and shown in Figure \ref{C5:NOMA_model}.
All transceivers are equipped with single-antennas and there are $N_\mathrm{F}$ orthogonal subcarriers serving the $M$ users.
An overloaded scenario\footnote{Note that the proposed scheme in this chapter can also be applied to underloaded systems where the number of subcarrier $N_\mathrm{F}$ is larger than the number of users $M$, i.e., $N_\mathrm{F} > M$.
For the sake of presentation, we first focus on the overloaded scenario and then apply the proposed resource allocation algorithm to both the overloaded and underloaded systems in the simulations.
Then, our simulation results in Section \ref{C5:SimulationResults} demonstrate that the proposed scheme is more power-efficient than that of the OMA scheme for both overloaded and underloaded systems.} is considered in this chapter, i.e., $N_\mathrm{F}\le M$.
In addition, we assume that each of the $N_\mathrm{F}$ subcarriers can be allocated to at most two users via NOMA to reduce the computational complexity and delay incurred at receivers due to SIC decoding\footnote{In this chapter, we focus on the two-user MC-NOMA system since it is more practical and is more appealing in both industry\cite{Access2015} and academia \cite{Dingtobepublished,LiuSWIPT,ChenTwoUser,Sun2016Fullduplex}.
The generalization of the proposed algorithms to the case of serving multiple users on each subcarrier is left for future work.}.
As a result, we have an implicit condition $\left\lceil {\frac{M}{2}} \right\rceil \le N_\mathrm{F} \le M$ such that the system can serve at least $M$ users.
An example of a downlink MC-NOMA system with two users multiplexed on subcarrier $i$, $i \in \left\{ {1, \ldots ,N_\mathrm{F}} \right\}$, is illustrated in Figure \ref{C5:NOMA_model}.
A binary indicator variable $s_{i,m} \in \{0,1\}$, $i \in \left\{ {1, \ldots ,N_\mathrm{F}} \right\}$, and $m \in \left\{ {1, \ldots ,M} \right\}$, is introduced as the user scheduling variable, where it is one if subcarrier $i$ is assigned to user $m$, and is zero otherwise.
Thus, we have the following constraint for $s_{i,m}$:
\begin{equation}\label{C5:UserSchedulingConstraint}
\sum\limits_{m = 1}^M {{s_{i,m}}}  \le 2,\;\;\forall i.
\end{equation}

\begin{figure}[t]
\centering
\includegraphics[width=5in]{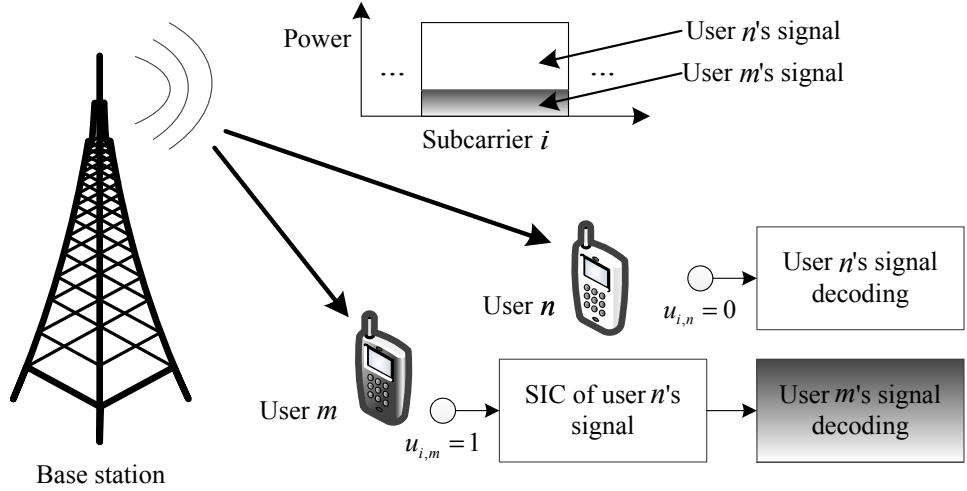}
\caption{A downlink MC-NOMA system where user $m$ and user $n$ are multiplexed on subcarrier $i$. The base station transmits two superimposed signals with different powers. User $m$ is selected to perform SIC, i.e., $u_{i,m} = 1$, while user $n$ is not selected to, i.e., $u_{i,n} = 0$. User $m$ first decodes and removes the signal of user $n$ before decoding its desired signal, while user $n$ directly decodes its own signal with user $m$'s signal treated as noise.}\label{C5:NOMA_model}
\end{figure}

\noindent At the BS side, the transmitted signal on subcarrier $i$ is given by
\begin{equation}\label{C5:Tx}
{x_i} = \sum\limits_{m = 1}^M {{s_{i,m}}\sqrt {{p_{i,m}}} {a_{i,m}}}, \;\; \forall i,
\end{equation}
where $a_{i,m}\in \mathbb{C}$ denotes the modulated symbol for user $m$ on subcarrier $i$ and $p_{i,m}$ is the allocated power for user $m$ on subcarrier $i$.
Different from most of the existing works on resource allocation of NOMA with perfect CSIT, e.g. \cite{sun2016optimal,Lei2015}, our model assumes imperfect CSIT. Under this condition, the BS needs to decide both the SIC decoding order, $u_{i,m} \in \left\{0,1\right\}$, and the rate allocation, $R_{i,m} > 0$, for each user on each subcarrier, which have critical impacts on the system power consumption.
The SIC decoding order variable is defined as follows:
\begin{equation}
u_{i,m} =
\left\{
\begin{array}{ll}
1 & \text{if}\;\text{user}\; m \;\text{on\;subcarrier}\; i \;\text{is selected to perform SIC,} \\
0 & \text{otherwise}.
\end{array}
\right.\label{C5:SICVariable}
\end{equation}

At the receiver side, the received signal at user $m$ on subcarrier $i$ is given by
\begin{equation}\label{C5:Rx}
{y_{i,m}} = {h_{i,m}}\sum\limits_{n = 1}^M {{s_{i,n}}\sqrt {{p_{i,n}}} {a_{i,n}}}  + {z_{i,m}},
\end{equation}
where $z_{i,m} \in \mathbb{C}$ denotes the additive white Gaussian noise (AWGN) for user $m$ on subcarrier $i$ with a zero-mean and variance $\sigma^2_{i,m}$, i.e., $z_{i,m} \sim {\cal CN}(0,\sigma^2_{i,m})$.
Variable ${h}_{i,m} = \frac{{g}_{i,m}}{\sqrt{\text{PL}_m}} \in \mathbb{C}$ denotes the channel coefficient between the BS and user $m$ on subcarrier $i$ capturing the joint effect of path loss and small-scale fading.
In particular, $\text{PL}_m$ denotes the path loss of user $m$, and we assume that the BS can accurately estimate the path loss of each user $\text{PL}_m$, $\forall m$, based on the long term measurements.
On the other hand, ${g}_{i,m} \sim {\cal CN}(0,1) $ denotes the small-scale fading, which is modeled as Rayleigh fading in this chapter \cite{Tse2005}.
Due to the channel estimation error and/or feedback delay, only imperfect CSIT is available for resource allocation.
To capture the channel estimation error, we model the channel coefficient for user $m$ on subcarrier $i$ as
\begin{equation}\label{C5:ChannelModel}
h_{i,m} = \hat{h}_{i,m} + \Delta h_{i,m},
\end{equation}
where $\hat{h}_{i,m}$ denotes the estimated channel coefficient for user $m$ on subcarrier $i$, $\Delta h_{i,m} \sim {\cal CN}(0,\frac{\kappa^2_{i,m}}{\text{PL}_m})$ denotes the corresponding CSIT error, and $\frac{\kappa^2_{i,m}}{\text{PL}_m} >0 $ denotes the variance of the channel estimation error. We assume that the channel estimates $\hat{h}_{i,m}$ and the channel estimation error $\Delta h_{i,m}$ are uncorrelated.
According to the SIC decoding order policy, both multiplexed users will choose to perform SIC or directly decode its own messages.

\subsection{QoS Requirements}
To facilitate our design, we define an outage probability on each subcarrier, which is commonly adopted in the literature for resource allocation design \cite{Zhu2009,Kwan_AF_2010}.
We assume that if the SIC of any user is failed, the user cannot decode its own messages, and thus an outage event occurs\cite{Ding2014}.
Therefore, if user $m$ on subcarrier $i$ is selected to perform SIC, i.e., $u_{i,m} = 1$, we have the outage probability as follows:
\begin{align}\label{C5:OutageProbabilityWithSIC}
&{\mathrm{P}}_{i,m}= {\mathrm{Pr}}\left\{ {C_{i,m}^{{\mathrm{SIC}}} < \sum\limits_{n = 1,n \ne m}^M {{s_{i,n}}R_{i,n}}} \left| {{{\hat h}_{i,m}}}, u_{i,m} = 1 \right.\right\} \notag\\
&+ {\mathrm{ Pr}}\left\{ {C_{i,m}^{{\mathrm{SIC}}} \ge \sum\limits_{n = 1,n \ne m}^M {{s_{i,n}}R_{i,n}},{C_{i,m}^{(1)}} < {R_{i,m}}} \left| {{{\hat h}_{i,m}}}, u_{i,m} = 1\right.\right\},
\end{align}
where ${\mathrm{P}}_{i,m}$ denotes the outage probability of user $m$ on subcarrier $i$ due to the channel uncertainty and $R_{i,m}$ denotes the allocated rate of user $m$ on subcarrier $i$.
Variable $C_{i,m}^{{\mathrm{SIC}}}$ denotes the achievable rate of user $m$ for decoding the interference from the other user on subcarrier $i$ and $C_{i,m}^{(1)}$ denotes the achievable rate of user $m$ on subcarrier $i$ for decoding its own message with a successful SIC. In our model, with two users multiplexing on subcarrier $i$, $C_{i,m}^{{\mathrm{SIC}}}$ and $C_{i,m}^{(1)}$ are given by
\begin{align}
C_{i,m}^{\mathrm{SIC}} &= {\log _2}\left( {1 + \frac{{{{\left| {{h_{i,m}}} \right|}^2}\sum\limits_{n = 1,\;n \ne m}^M {{s_{i,n}}{p_{i,n}}} }}{{{s_{i,m}}{p_{i,m}}{{\left| {{h_{i,m}}} \right|}^2} + \sigma _{i,m}^2}}} \right)\; \text{and} \notag\\
C_{i,m}^{(1)} &= {\log _2}\left( {1 + \frac{{{s_{i,m}}{p_{i,m}}{{\left| {{h_{i,m}}} \right|}^2}}}{{\sigma _{i,m}^2}}} \right),\label{C5:CapacityWithSIC}
\end{align}
respectively. Note that there is only one non-zero entry in the summations in \eqref{C5:OutageProbabilityWithSIC} and \eqref{C5:CapacityWithSIC}, which means that the interference to be cancelled during the SIC processing arises from only one user due to the constraint in \eqref{C5:UserSchedulingConstraint}.

On the other hand, if user $m$ on subcarrier $i$ is not selected to perform SIC, i.e., $u_{i,m} = 0$, we have the outage probability as follows:
\begin{equation}\label{C5:OutageProbabilityWithoutSIC}
{\mathrm{P}}_{i,m}' = {\mathrm{Pr}}\left\{ {C_{i,m}^{(2)} < {R_{i,m}}\left| {{{\hat h}_{i,m}}}, u_{i,m} = 0\right.} \right\},
\end{equation}
where ${\mathrm{P}}_{i,m}'$ denotes the outage probability of user $m$ on subcarrier $i$ due to the channel uncertainty. Variable ${C_{i,m}^{(2)}}$ denotes the achievable rate of user $m$ on subcarrier $i$ for decoding its own messages without performing SIC and it is given by
\begin{equation}\label{C5:CapacityWithoutSIC}
{C_{i,m}^{(2)}} = {\log _2}\left( {1 + \frac{{{s_{i,m}}{p_{i,m}}{{\left| {{h_{i,m}}} \right|}^2}}}{{{{\left| {{h_{i,m}}} \right|}^2}\sum\limits_{n=1,\;n \ne m}^M {{s_{i,n}}{p_{i,n}}}  + \sigma _{i,m}^2}}} \right).
\end{equation}

Now, we define the QoS requirement of user $m$ on subcarrier $i$ as follows:
\begin{equation}\label{C5:QoSConstraint}
{\mathrm{P}}_{i,m}^{{\mathrm{out}}} = s_{i,m} \left\{u_{i,m}{\mathrm{P}}_{i,m} + \left( {1 - {u_{i,m}}} \right){\mathrm{P}}_{i,m}' \right\} \le {\delta _{i,m}},
\end{equation}
where $\delta_{i,m}$, $0 \le \delta_{i,m} \le 1$, denotes the required outage probability of user $m$ on subcarrier $i$.
When user $m$ is not assigned on subcarrier $i$, i.e., $s_{i,m} = 0$, this inequality is always satisfied.

In OMA systems, subcarrier $i$ will be allocated to user $m$ exclusively, i.e., $\sum_{m = 1}^M {{s_{i,m}}}  = 1$.
In this case, we have ${C_{i,m}^{(1)}} = {C_{i,m}^{(2)}}$, and ${\mathrm{P}}_{i,m}'$  denotes the corresponding outage probability of user $m$ on subcarrier $i$.
Also, performing SIC at user $m$ is not required, and thus we have $u_{i,m}=0$,  ${\mathrm{P}}_{i,m}^{{\mathrm{out}}} =  {\mathrm{P}}_{i,m}'$.
In other words, the outage probability defined in the considered MC-NOMA system generalizes that of OMA systems as a subcase.
{Furthermore, although this work considers the robust resource allocation design for MC-NOMA system with imperfect CSIT, one naive option in practice can be performing resource allocation by treating the estimated channel coefficient $\hat{h}_{i,m}$ in \eqref{C5:ChannelModel} as perfect CSIT.}

\subsection{Optimization Problem Formulation}
We aim to jointly design the power allocation, rate allocation, user scheduling, and SIC
decoding policy for minimizing the total transmit power of the considered downlink MC-NOMA system under imperfect CSIT.
The joint resource allocation design can be formulated as the following optimization problem:
\begin{align} \label{C5:P1}
&\underset{\mathbf{s},\;\mathbf{u},\;\mathbf{p},\;\mathbf{r}}{\mino}\,\, \,\, \sum\limits_{m = 1}^M {\sum\limits_{i = 1}^{N_\mathrm{F}} {s_{i,m}p_{i,m}}}\\
\mbox{s.t.}\;\;
&\mbox{{C1}: } s_{i,m} \in \left\{0,\;1\right\},\;\forall i,m,\;\;\mbox{{C2}: } u_{i,m} \in \left\{0,\;1\right\},\;\forall i,m,\notag\\
&\mbox{{C3}: } p_{i,m} \ge 0,\;\forall i,m,\;\;\hspace{9.5mm}\mbox{{C4}: } R_{i,m} \ge 0,\;\forall i,m,\notag\\
&\mbox{{C5}: } {\mathrm{P}}_{i,m}^{{\mathrm{out}}} \le {\delta _{i,m}},\; \forall i,m,\;\;\hspace{4mm}\mbox{{C6}: } \sum\limits_{m = 1}^M {{s_{i,m}}}  \le 2,\;\forall i,\notag\\
&\mbox{{C7}: } \sum\limits_{i = 1}^{N_{\mathrm{F}}} {{s_{i,m}}{{R}_{i,m}}}  \ge R_m^{\mathrm{total}},\;\forall m,\notag
\end{align}
where $\mathbf{s} \in \mathbb{R}^{N_{\mathrm{F}}M \times 1}$, $\mathbf{u} \in \mathbb{R}^{N_{\mathrm{F}} M\times1}$, $\mathbf{p} \in \mathbb{R}^{N_{\mathrm{F}}M \times1}$, and $\mathbf{r} \in \mathbb{R}^{N_{\mathrm{F}} M \times 1}$ denote the sets of optimization variables $s_{i,m}$, $u_{i,m}$, $p_{i,m}$, and $R_{i,m}$.
Constraints C1 and C2 restrict the user scheduling variables and SIC decoding order variables to be binary, respectively.
Constraints C3 and C4 ensure the non-negativity of the power allocation variables and the rate allocation variables, respectively.
Constraint C5 is the QoS constraint of outage probability for user $m$ on subcarrier $i$. According to \eqref{C5:QoSConstraint}, C5 is inactive when $s_{i,m}=0$.
Constraint C6 is imposed to ensure that at most two users are multiplexed on each subcarrier.
Note that our problem includes OMA as a subcase when $\sum\limits_{m = 1}^M {{s_{i,m}}} = 1$ in C6.
In C7, constant $R_m^{\mathrm{total}} > 0$ denotes the required minimum total data rate of user $m$.
In particular, constraint C7 is introduced for rate allocation such that the required minimum total data rate of user $m$ can be guaranteed.
We note that the rate allocation $R_{i,m}$ for user $m$ on subcarrier $i$ may be very low, but the achievable rate of user $m$ is bounded below by $R_m^{\mathrm{total}}$.

This problem in \eqref{C5:P1} is a mixed combinatorial non-convex optimization problem.
In general, there is no systematic and computational efficient approach to solve \eqref{C5:P1} optimally.
The combinatorial nature comes from the binary constraints C1 and C2 while the non-convexity arises in the QoS constraint in C5.
Besides, the coupling between binary variables and continuous variables in constraint C5 yields an intractable problem.
However, we note that the power allocation variables and SIC decoding order variables are only involved in the QoS constraint C5 among all the constraints.
Therefore, by exploiting this property, we attempt to simplify the optimization problem in \eqref{C5:P1} to facilitate the resource allocation design in the next section.

\section{Problem Transformation}
In this section, we first propose the optimal SIC policy on each subcarrier taking into account the impact of imperfect CSIT.
Then, the minimum total transmit power per subcarrier is derived.
Subsequently, we transform the optimization problem which paves the way for the design of the optimal resource allocation.
\subsection{Optimal SIC Policy Per Subcarrier}
Under perfect CSIT, for a two-user NOMA downlink system, it is well-known that the optimal SIC decoding order is the descending order of channel gains for maximizing the total system sum rate \cite{Dingtobepublished}.
However, under imperfect CSIT, it is not possible to decide the SIC decoding order by comparing the actual channel gains between the multiplexed users.
To facilitate the design of resource allocation for the case of NOMA with imperfect CSIT, we define an \emph{channel-to-noise ratio (CNR) outage threshold} in the following, from which we can decide the optimal SIC decoding order to minimize the total transmit power.

\begin{Def}[CNR Outage Threshold]
For user $m$ on subcarrier $i$ with the estimated channel coefficient $\hat{h}_{i,m}$, the noise power ${{\sigma _{i,m}^2}}$, and the required outage probability ${\delta _{i,m}}$, a CNR outage event occurs when the CNR, $\frac{{{{\left| {{h_{i,m}}} \right|}^2}}}{{\sigma _{i,m}^2}}$, is smaller than a CNR outage threshold ${\beta _{i,m}}$. The CNR outage probability of user $m$ on subcarrier $i$ can be written as follows:
\begin{equation}\label{C5:OutageThreshold1}
{\mathrm{Pr}}\left\{ {\frac{{{{\left| {{h_{i,m}}} \right|}^2}}}{{\sigma _{i,m}^2}} < {\beta _{i,m}}} \left| {{{\hat h}_{i,m}}} \right. \right\} = {\delta _{i,m}},\;\;\forall i,m.
\end{equation}
Therefore, the CNR outage threshold is given by
\begin{equation}\label{C5:OutageThreshold2}
{\beta _{i,m}} = \frac{{F_{\left. {{{\left| {{h_{i,m}}} \right|}^2}} \right|{{\hat h}_{i,m}}}^{ - 1}\left( {{\delta _{i,m}}} \right)}}{{\sigma _{i,m}^2}},\;\;\forall i,m,
\end{equation}
where ${F_{\left. {{{\left| {{h_{i,m}}} \right|}^2}} \right|{{\hat h}_{i,m}}}}\left( x \right)$, $x \ge 0$, is the conditional cumulative distribution function\footnote{We note that if the fading channel is not Rayleigh distributed, the proposed schemes in this chapter are still applicable whereas we only need to update ${F_{\left. {{{\left| {{h_{i,m}}} \right|}^2}} \right|{{\hat h}_{i,m}}}}\left( x \right)$ according to the distribution of small-scale fading.} (CDF) of channel power gain of user $m$ on subcarrier $i$.
In fact, according to the channel model in \eqref{C5:ChannelModel}, ${F^{-1}_{\left. {{{\left| {{h_{i,m}}} \right|}^2}} \right|{{\hat h}_{i,m}}}}\left( x \right)$ is the inverse of a noncentral chi-square CDF\footnote{The inverse function of a noncentral chi-square CDF can be computed efficiently by standard numerical solvers or implemented as a look-up table for implementation.} with degrees of freedom of 2 and a noncentrality parameter of $\frac{{\left| {{\hat h}_{i,m}} \right|^2}}{\kappa^2_{i,m}}{\text{PL}_m}$, ${\kappa^2_{i,m}}>0$. Note that if perfect CSIT is available, i.e., ${\kappa^2_{i,m}}=0$, there is no CNR outage caused by channel estimation error and we have $\beta_{i,m} = \frac{{{{\left| {{\hat h_{i,m}}} \right|}^2}}}{{\sigma _{i,m}^2}} = \frac{{{{\left| {{ h_{i,m}}} \right|}^2}}}{{\sigma _{i,m}^2}}$.
\end{Def}

We note that the CNR outage defined in \eqref{C5:OutageThreshold1} is different from that of the outage events defined in \eqref{C5:OutageProbabilityWithSIC} and \eqref{C5:OutageProbabilityWithoutSIC}.
In particular, the former does not involve the required target data rate of users, while the latter outage event occurs when the achievable rate is smaller than a given target data rate.
In the rest of the chapter, we refer to CNR outage as in \eqref{C5:OutageThreshold1} if it is stated explicitly.
Otherwise, it refers to the outage defined as in \eqref{C5:OutageProbabilityWithSIC} and \eqref{C5:OutageProbabilityWithoutSIC}.
The CNR outage threshold defined in \eqref{C5:OutageThreshold2} involves the estimated channel gain, the channel estimation error distribution, the noise power, and the required outage probability.
It captures the joint effect of channel conditions and QoS requirements in the statistical sense for imperfect CSIT scenarios.
Specifically, the user with higher CNR outage threshold may have better channel condition and/or lower required outage probability, and vice versa.
In fact, the CNR outage threshold serves as a criterion for the optimal SIC decoding order, which is summarized in the following theorem.

\begin{Thm}[Optimal SIC Decoding Order]\label{C5:Theorem1}
Given user $m$ and user $n$ multiplexing on subcarrier $i$,
the optimal SIC decoding order is determined by the CNR outage thresholds as follows:
\begin{equation}
\left(u_{i,m},u_{i,n}\right) =
\left\{
\begin{array}{ll}
\left(1,0\right) & \text{if}\;\beta _{i,m} \ge \beta _{i,n}, \\
\left(0,1\right) & \text{if}\;\beta _{i,m} < \beta _{i,n}.
\end{array}
\right.\label{C5:SICDecodingOrderNOMA}
\end{equation}
which means that the user with a higher CNR outage threshold will perform SIC decoding.
\end{Thm}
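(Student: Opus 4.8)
The plan is to exploit the observation, already noted after \eqref{C5:P1}, that the power variables $p_{i,m}, p_{i,n}$ and the SIC variables $u_{i,m}, u_{i,n}$ enter the optimization only through the per-subcarrier QoS constraint C5. Hence the SIC ordering on subcarrier $i$ can be settled by a self-contained subproblem: for a fixed rate allocation $R_{i,m}, R_{i,n}$, compare the minimum total power $p_{i,m}+p_{i,n}$ required to satisfy C5 under the two admissible orderings $(u_{i,m},u_{i,n})=(1,0)$ and $(0,1)$, and show that the former is no larger whenever $\beta_{i,m}\ge\beta_{i,n}$. Since the argument holds for every fixed rate pair, it yields the ordering rule claimed in the theorem.

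First I would convert the probabilistic constraint C5 into deterministic power constraints. The key enabler is that each achievable rate in \eqref{C5:CapacityWithSIC} and \eqref{C5:CapacityWithoutSIC} is a strictly increasing function of the instantaneous CNR $|h_{i,m}|^2/\sigma_{i,m}^2$. Therefore every outage event reduces to a CNR-threshold event of the form $\{|h_{i,m}|^2/\sigma_{i,m}^2 < \tau\}$, and --- for the SIC user, whose outage in \eqref{C5:OutageProbabilityWithSIC} is the union of the ``cannot cancel'' and ``cannot decode own'' events --- to the single event whose $\tau$ equals the larger of the two thresholds. Invoking the definition of the CNR outage threshold in \eqref{C5:OutageThreshold2} together with the monotonicity of the conditional CDF, the constraint ${\mathrm{P}}_{i,m}^{{\mathrm{out}}}\le\delta_{i,m}$ becomes equivalent to $\tau\le\beta_{i,m}$. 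Writing $\gamma_{i,m}\defeq 2^{R_{i,m}}-1$, this produces, for ordering $(1,0)$, the linear constraints $p_{i,m}\ge\gamma_{i,m}/\beta_{i,m}$, $p_{i,n}\ge\gamma_{i,n}/\beta_{i,m}+\gamma_{i,n}p_{i,m}$ (user $m$ cancelling user $n$), and $p_{i,n}\ge\gamma_{i,n}/\beta_{i,n}+\gamma_{i,n}p_{i,m}$ (user $n$ decoding directly), together with the symmetric set for ordering $(0,1)$.

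Next I would solve each of the two resulting linear programs. Because $p_{i,n}$ is forced to grow with $p_{i,m}$ through the positive coefficient $\gamma_{i,n}$, the objective $p_{i,m}+p_{i,n}$ is minimized by driving $p_{i,m}$ down to its floor and then setting $p_{i,n}$ to the binding lower bound, so the minimizers are explicit. Assuming without loss of generality $\beta_{i,m}\ge\beta_{i,n}$, the inner maxima collapse (since $1/\beta_{i,n}\ge 1/\beta_{i,m}$), giving clean closed forms $P_1$ and $P_2$ for the two orderings. The final step is the algebraic comparison, which I expect to reduce to the identity $P_2-P_1 = \gamma_{i,m}(1+\gamma_{i,n})\bigl(1/\beta_{i,n}-1/\beta_{i,m}\bigr)\ge 0$, establishing that letting the user with the larger $\beta$ perform SIC is optimal.

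The main obstacle I anticipate lies not in the final comparison but in the careful translation of C5 into the deterministic constraints, in two respects: (i) justifying that the union-of-events outage for the SIC user collapses to a single CNR-threshold event with the maximal threshold, which relies on the disjointness of the two events in \eqref{C5:OutageProbabilityWithSIC} and on both achievable rates sharing the same monotone dependence on the CNR; and (ii) handling the degenerate regime $p_{i,n}-\gamma_{i,n}p_{i,m}\le 0$, in which the relevant SINR can never attain its target so the outage probability equals one and C5 (with $\delta_{i,m}<1$) is violated. Ruling out this regime confirms that any feasible point has a strictly positive effective-signal coefficient, so the derived feasible region and its explicit minimizers are exact rather than merely sufficient.
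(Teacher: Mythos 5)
Your translation of constraint C5 into deterministic CNR-threshold constraints, the resulting per-ordering linear programs, and the closing identity $P_2-P_1=\gamma_{i,m}(1+\gamma_{i,n})\bigl(1/\beta_{i,n}-1/\beta_{i,m}\bigr)\ge 0$ all coincide with the paper's own calculations, and the two obstacles you flag (the collapse of the disjoint ``cannot cancel'' and ``cannot decode own'' events into a single event with threshold given by the $\max$, and the degenerate regime $p_{i,n}-\gamma_{i,n}p_{i,m}\le 0$ in which the outage probability equals one) are exactly the points the paper also handles. So for the part of the problem you address, your route is essentially the paper's route.

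There is, however, a genuine gap: you compare only what you call ``the two admissible orderings'' $(u_{i,m},u_{i,n})=(1,0)$ and $(0,1)$. Constraint C2 in \eqref{C5:P1} only requires $u_{i,m}\in\{0,1\}$ for each user separately; nothing in the formulation forces exactly one user to perform SIC, so $(1,1)$ (both users perform SIC) and $(0,0)$ (neither does) are feasible configurations, and the theorem's optimality claim must be established against them too. The paper's proof treats these as Cases III and IV: their constraint sets are coupled in both directions, the minimal powers acquire a factor $1/\left(1-\gamma_{i,m}\gamma_{i,n}\right)$ together with the extra feasibility requirement $\gamma_{i,m}\gamma_{i,n}<1$, and the comparisons $p^{\mathrm{total}}_{\mathrm{III}}>p^{\mathrm{total}}_{\mathrm{I}}$ and $p^{\mathrm{total}}_{\mathrm{IV}}>p^{\mathrm{total}}_{\mathrm{I}}$ need a separate (short) argument, e.g.\ lower-bounding $p_{i,n}$ by $\gamma_{i,n}/\beta_{i,n}$ as in \eqref{C5:CaseIII1}. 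Without ruling out these two cases, your argument shows only that, among the two single-SIC orderings, the user with the larger CNR outage threshold should be the one to cancel --- not that this policy is optimal over all feasible SIC configurations. The fix is mechanical with the machinery you already set up, but as written it is a missing piece of the proof, not a stylistic difference.
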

\begin{proof}
Please refer to Appendix \ref{C5:AppendixA} for a proof of Theorem \ref{C5:Theorem1}.
\end{proof}

With only a single user allocated on subcarrier $i$, i.e., $\sum_{n = 1}^M {{s_{i,n}}} = 1$, it reduces to the OMA scenario and there is no need of SIC decoding for all the users, and thus we have
\begin{equation}\label{C5:SICDecodingOrderOMA1}
u_{i,m} = 0, \;\forall m,\;\;\text{if}\;\sum\limits_{n = 1}^M {{s_{i,n}}} = 1.
\end{equation}
Note that the optimal SIC decoding policies defined in \eqref{C5:SICDecodingOrderNOMA} and \eqref{C5:SICDecodingOrderOMA1} are conditioned on any given feasible user scheduling strategy satisfying constraints C1, C6, and C7 in (11) for minimizing the total transmit power.
More importantly, given any point in the feasible solution set spanned by C1, C4, C6, and C7 in \eqref{C5:P1}, our proposed optimal SIC decoding order always consumes the minimum total transmit power to satisfy the QoS constraint C5.
By exploiting constraint C5, the underlying relationship between SIC decoding order variables $u_{i,m}$ and user scheduling variables $s_{i,m}$ is revealed for NOMA and OMA scenarios in \eqref{C5:SICDecodingOrderNOMA} and \eqref{C5:SICDecodingOrderOMA1}, respectively.

\begin{remark}\label{C5:remark1}
In Theorem \ref{C5:Theorem1}, it is noteworthy that the optimal SIC decoding order only depends on the CNR outage threshold, $\beta _{i,m}$, and is independent of the target data rates of users.
This observation is reasonable. Let us first recall from the basic principle for SIC decoding for the case of NOMA with perfect CSIT\cite{Tse2005} and then extend it to the case of imperfect CSIT.
Specifically, for perfect CSIT, the strong user (with a higher channel gain) can decode the messages of the weak user (with a lower channel gain), if we can guarantee that the weak user can decode its own messages, no matter who has a higher target data rate.
This is due to the fact that the achievable rate for the strong user to decode the messages of the weak user is always higher than that of the weak user to decode its own.
On the other hand, if the weak user performs SIC, due to its worse channel condition, a higher transmit power is required such that the strong user's messages are decodable at the weak user, no matter whose target data rate is higher.
Similarly, given user $m$ and user $n$ multiplexed on subcarrier $i$ under imperfect CSIT, according to \eqref{C5:OutageProbabilityWithSIC}, \eqref{C5:OutageProbabilityWithoutSIC}, and \eqref{C5:OutageThreshold1}, we have the following implication under the condition of $\beta _{i,m} \ge \beta _{i,n}$:
\begin{equation}
{\mathrm{Pr}}\left\{ {C_{i,n}^{(2)} < {R_{i,n}}\left| {{{\hat h}_{i,n}}}, u_{i,n} = 0\right.} \right\}\le {\delta _{i,n}},
\Rightarrow {\mathrm{Pr}}\left\{ C_{i,m}^{{\mathrm{SIC}}} < {R_{i,n}} \left| {{{\hat h}_{i,m}}}, u_{i,m} = 1 \right.\right\}\le {\delta _{i,m}},
\end{equation}
which means that the SIC process at the user with a higher CNR outage threshold will always satisfy its QoS constraint if the other user's (with a lower CNR outage threshold) decoding process satisfy its own QoS constraint, no matter whose target data rate is higher.
Also, if the user with a lower CNR outage threshold performs SIC, it requires an extra power to guarantee the QoS constraint of the SIC process, no matter who requires a higher data rate.
Therefore, the optimal SIC decoding order is determined according to the CNR outage thresholds, and it is independent of the target data rates.
\end{remark}

\subsection{Minimum Total Transmit Power Per Subcarrier}
In this section, we exploit Theorem \ref{C5:Theorem1} to further simplify the problem in \eqref{C5:P1} via expressing the power allocation variables $p_{i,m}$ in terms of CNR outage threshold and target data rate.
Given user $m$ and user $n$ multiplexed on subcarrier $i$, i.e., $s_{i,m}=s_{i,n}=1$, according to the proof of Theorem \ref{C5:Theorem1} in Appendix \ref{C5:AppendixA}, the minimum total transmit power required on subcarrier $i$ to satisfy the QoS constraint C5 in \eqref{C5:P1} can be generally represented as:
\begin{equation}\label{C5:TotalPowerConsumptionNOMA}
p^{\mathrm{total}}_{(i,m,n)} = \frac{{{\gamma _{i,m}}}}{{{\beta _{i,m}}}} + \frac{{{\gamma _{i,n}}}}{{{\beta _{i,n}}}} + \frac{{{\gamma _{i,m}}{\gamma _{i,n}}}}{{\max \left( {{\beta _{i,m}},{\beta _{i,n}}} \right)}},
\end{equation}
where the subscript $({i,m,n})$ denotes that users $m$ and $n$ are multiplexed on subcarrier $i$, $\gamma _{i,m}$ denotes the required signal-to-interference-plus-noise ratio (SINR) to support the target data rate of user $m$ on subcarrier $i$, and it is given by ${\gamma _{i,m}} = {2^{{R_{i,m}}}} - 1$. Particularly, the power allocations for users $m$ and $n$ are given according to their CNR outage threshold as follows:
\begin{equation}\label{C5:PowerAllocation4}
\left(p_{i,m},\;p_{i,n}\right) =
\left\{
\begin{array}{ll}
\left(\frac{{{\gamma _{i,m}}}}{{{\beta _{i,m}}}},\; \frac{{{\gamma _{i,n}}}}{{{\beta _{i,n}}}} + \frac{{{\gamma _{i,n}}{\gamma _{i,m}}}}{\beta _{i,m}}\right) & \text{if}\; \beta _{i,m} \ge \beta _{i,n}, \\
\left(\frac{{{\gamma _{i,m}}}}{{{\beta _{i,m}}}} + \frac{{{\gamma _{i,m}}{\gamma _{i,n}}}}{\beta _{i,n}},\; \frac{{{\gamma _{i,n}}}}{{{\beta _{i,n}}}}\right) & \text{if}\; \beta _{i,m} < \beta _{i,n},
\end{array}
\right.
\end{equation}

We note that, in the OMA scenario, if $\sum_{n = 1}^M {{s_{i,n}}} = 1$ and $s_{i,m}=1$, the required minimum transmit power to satisfy the QoS constraint C5 in \eqref{C5:P1} is given by
\begin{equation}\label{C5:TotalPowerConsumptionOMA}
p^{\mathrm{total}}_{(i,m)} =p_{i,m}= \frac{{{\gamma _{i,m}}}}{{{\beta _{i,m}}}},
\end{equation}
where the subscript $({i,m})$ denotes that users $m$ is assigned on subcarrier $i$ exclusively.
By combining the user scheduling variables of subcarrier $i$, $s_{i,m},\;\forall m \in \left\{ {1, \ldots ,M} \right\}$, with \eqref{C5:TotalPowerConsumptionNOMA}, the total transmit power on subcarrier $i$ can be generally expressed as:
\begin{equation}\label{C5:TotalPowerConsumptionUniform}
p^{\mathrm{total}}_{i} \left(\mathbf{s},\boldsymbol{\gamma} \right) = \sum\limits_{m = 1}^M {\frac{{{s_{i,m}}{\gamma _{i,m}}}}{{{\beta _{i,m}}}}}  + \sum\limits_{m = 1}^{M-1} {\sum\limits_{n = m + 1}^M {\frac{{{s_{i,m}}{\gamma _{i,m}}{s_{i,n}}{\gamma _{i,n}}}}{{\max \left( {{\beta _{i,m}},{\beta _{i,n}}} \right)}}} },
\end{equation}
where $\boldsymbol{\gamma}  \in \mathbb{R}^{N_{\mathrm{F}} M \times 1} $ denotes the set of $\gamma_{i,m}$. Note that \eqref{C5:TotalPowerConsumptionUniform} subsumes the cases of the power consumption in the OMA scenario in \eqref{C5:TotalPowerConsumptionOMA}.
In fact, equation \eqref{C5:TotalPowerConsumptionUniform} unveils the relationship between the required minimum total transmit power on subcarrier $i$, the user scheduling variables, and the allocated data rates.
Since the SIC decoding process on each subcarrier is independent with each other, we have the total transmit power of all the subcarriers as follows:
\begin{align}\label{C5:TotalPowerConsumptionUniform2}
p^{\mathrm{total}} \left(\mathbf{s},\boldsymbol{\gamma} \right) =  \sum\limits_{i = 1}^{{N_{\mathrm{F}}}} p^{\mathrm{total}}_{i}\left(\mathbf{s},\boldsymbol{\gamma} \right).
\end{align}

Recall that Theorem \ref{C5:Theorem1} is derived from the QoS constraint C5 in \eqref{C5:P1}.
Therefore, given any user scheduling and rate allocation strategy in the feasible solution set spanned by constraints C1, C4, C6, and C7, we obtain the optimal SIC decoding order and power allocation solution from \eqref{C5:SICDecodingOrderNOMA}, \eqref{C5:SICDecodingOrderOMA1}, \eqref{C5:PowerAllocation4}, and \eqref{C5:TotalPowerConsumptionOMA}, which can satisfy the QoS constraint C5 with the minimum power consumption in \eqref{C5:TotalPowerConsumptionUniform2}.
In other words, the problem in \eqref{C5:P1} can be transformed equivalently to a simpler one to minimize the power consumption in \eqref{C5:TotalPowerConsumptionUniform2} w.r.t. the user scheduling and rate allocation variables.

\begin{remark}\label{C5:remark2}
Comparing the minimum total transmit power for NOMA and OMA in \eqref{C5:TotalPowerConsumptionNOMA} and \eqref{C5:TotalPowerConsumptionOMA}, respectively, we obtain $p^{\mathrm{total}}_{(i,m,n)} > p^{\mathrm{total}}_{(i,m)} + p^{\mathrm{total}}_{(i,n)}$.
At the first sight, it seems that OMA is more power-efficient than NOMA as the third term in \eqref{C5:TotalPowerConsumptionNOMA} is the extra power cost for users' multiplexing.
However, this comparison is unfair for NOMA since both schemes require different spectral efficiencies.
To keep the same spectral efficiency in the considered scenarios, the required target data rate for OMA users in \eqref{C5:TotalPowerConsumptionOMA} should be doubled\cite{Ding2015b,sun2016optimal}.
Due to the combinatorial nature of user scheduling, it is difficult to prove that the proposed MC-NOMA scheme is always more power-efficient than the OMA scheme with the optimal resource allocation strategy.
In fact, for a special case with $M = 2N_{\mathrm{F}}$, this conclusion can be proved mathematically based on our previous work in \cite{Wei2016NOMA}.
For the general case, we rely on the simulation results to demonstrate the power savings of our proposed schemes compared to the OMA scheme.
\end{remark}

\subsection{Problem Transformation}
Based on Theorem \ref{C5:Theorem1}, the problem in \eqref{C5:P1} is equivalent to the following optimization problem:
\begin{align}\label{C5:P2}
&\underset{\mathbf{s},\;\boldsymbol{\gamma}}{\mino}\,\, \,\,  p^{\mathrm{total}} \left(\mathbf{s},\boldsymbol{\gamma} \right) \\
\mbox{s.t.}\;\;
&\mbox{C1},\;\mbox{C6},\;\mbox{C4: } \gamma_{i,m} \ge 0,\;\forall i,m,\notag\\
&\mbox{C7: } \sum\limits_{i = 1}^{{N_{\mathrm{F}}}} {s_{i,m}}{{{\log }_2}\left( {1 +{\gamma _{i,m}}} \right)}  \ge {{R}_m^{\mathrm{total}}},\;\forall m,\notag
\end{align}
where the rate allocation variables ${{R}_{i,m}}$ are replaced by their equivalent optimization variables $\gamma_{i,m}$ in C4 and C7.

The reformulated problem in \eqref{C5:P2} is simpler than that of the problem in \eqref{C5:P1}, since the number of optimization variables is reduced and the QoS constraint C5 is safely removed. However, \eqref{C5:P2} is also difficult to solve.
In particular, C1 are binary constraints, and there are couplings between the binary variables and continuous variables in both the objective function and constraint C7.
In fact, the problem in \eqref{C5:P2} is a non-convex mixed-integer nonlinear programming problem (MINLP), which is NP-hard\cite{floudas1995nonlinear} in general. Now, we transform the problem from \eqref{C5:P2} to:
\begin{align}\label{C5:P3}
&\underset{\mathbf{{s}},\;\boldsymbol{{\gamma}}}{\mino}\,\, \,\,  p^{\mathrm{total}}_{\boldsymbol{{\gamma}}} \left(\boldsymbol{\gamma} \right) \\
\notag\mbox{s.t.}\;\;
&\mbox{C1, C4, C6, } \widetilde{\text{C7}}\mbox{: } \sum\limits_{i = 1}^{{N_{\mathrm{F}}}} {{{\log }_2}\left( {1 +{{\gamma} _{i,m}}} \right)}  \ge {{R}_m^{\mathrm{total}}},\;\forall m, \notag\\
&\mbox{C8: } {{\gamma}}_{i,m} = {{s}_{i,m}} {\gamma}_{i,m},\;\forall i,m,\notag
\end{align}
with the new objective function w.r.t. to $\boldsymbol{\gamma}$ as
\begin{equation}
p^{\mathrm{total}}_{\boldsymbol{{\gamma}}} \left(\boldsymbol{\gamma} \right) = \sum\limits_{i = 1}^{{N_{\mathrm{F}}}} {\sum\limits_{m = 1}^M {\frac{{{{\gamma}}_{i,m}}}{{{\beta _{i,m}}}}} }  + \sum\limits_{i = 1}^{{N_{\mathrm{F}}}} {\sum\limits_{m = 1}^{M-1} {\sum\limits_{n = m + 1}^M {\frac{{{{\gamma} _{i,m}}{{\gamma} _{i,n}}}}{{\max \left( {{\beta _{i,m}},{\beta _{i,n}}} \right)}}} } }.
\end{equation}
Constraint C8 is imposed to preserve the original couplings between the binary variables and the continuous variables. Constraint $\widetilde{\text{C7}}$ is obtained from constraint C7 in \eqref{C5:P2} by employing the equality ${s_{i,m}}{{{\log }_2}\left( {1 +{\gamma _{i,m}}} \right)} = {{{\log }_2}\left( {1 +{s_{i,m}}{{\gamma} _{i,m}}} \right)}$ for ${s}_{i,m} \in \left\{0,\;1\right\}$. Clearly, the problem in \eqref{C5:P3} is equivalent to the problem in \eqref{C5:P2}, when we substitute constraint C8 into the objective function and constraint $\widetilde{\text{C7}}$. Therefore, in the sequel, we focus on solving \eqref{C5:P3}.

\section{Optimal Solution}
In this section, to facilitate the design of the optimal resource allocation algorithm, we first relax the binary constraint C1 and augment the coupling constraint C8 into objective function by introducing a penalty factor.
Then, an optimal resource allocation algorithm is proposed based on B\&B approach\cite{Konno2000,horst2013global,MARANASProofBB}.

\subsection{Continuous Relaxation and Penalty Method}
To start with, we relax the binary constraint on $s_{i,m}$ in C1 which yields:
\begin{align}\label{C5:P3Continuous} &\underset{\mathbf{\overline{s}},\;\boldsymbol{{\gamma}}}{\mino}\,\, \,\,  p^{\mathrm{total}}_{\boldsymbol{{\gamma}}} \left(\boldsymbol{\gamma} \right) \\
\mbox{s.t.}\;\;
& \mbox{C4},\;\widetilde{\text{C7}},\; \overline{\text{C1}} \mbox{: } 0 \le \overline{s}_{i,m} \le 1,\;\forall i,m,
\notag\\
&\overline{\text{C6}} \mbox{: } \sum\limits_{m = 1}^M {{\overline{s}_{i,m}}} \le 2,\;\forall i,\;\;\overline{\text{C8}} \mbox{: } {{\gamma}}_{i,m} = {\overline{s}_{i,m}} \notag {\gamma}_{i,m},\;\forall i,m,
\end{align}
where ${\overline{s}_{i,m}}$ denotes the continuous relaxation of the binary variable ${s}_{i,m}$ and $\mathbf{\overline{s}} \in \mathbb{R}^{N_{\mathrm{F}} M \times 1} $ denotes the set of ${\overline{s}_{i,m}}$. $\overline{\text{C1}}$, $\overline{\text{C6}}$, and $\overline{\text{C8}}$ denote the modified constraints for C1, C6, and C8 via replacing ${s}_{i,m}$ with ${\overline{s}_{i,m}}$, correspondingly.
In general, the solution of the constraint relaxed problem in \eqref{C5:P3Continuous} provides a lower bound for the problem in \eqref{C5:P3}.
However, by utilizing the coupling relationship in constraint $\overline{\text{C8}}$, the following theorem states the equivalence between \eqref{C5:P3} and \eqref{C5:P3Continuous}.

\begin{Thm}\label{C5:Theorem2}
The relaxed problem in \eqref{C5:P3Continuous} is equivalent to the problem in \eqref{C5:P3}.
More importantly, for the optimal solution of \eqref{C5:P3Continuous}, $\left({{\overline{s}_{i,m}^*}},{\gamma}_{i,m}^*\right)$, $i \in \left\{1, \ldots ,{N_{\mathrm{F}}}\right\}$, $m \in \left\{ 1, \ldots ,M \right\}$, the optimal solution of \eqref{C5:P3} can be recovered via keeping ${\gamma}_{i,m}^*$ and performing the following mapping:
\begin{equation}\label{C5:OptimalConvert2}
{{{s}_{i,m}^*}} =
\left\{
\begin{array}{ll}
1 & \text{if}\;{\gamma}_{i,m}^* > 0,\\
0 & \text{if}\;{\gamma}_{i,m}^* = 0.
\end{array}
\right.
\end{equation}
\end{Thm}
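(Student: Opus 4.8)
The plan is to prove Theorem~\ref{C5:Theorem2} by establishing a two-way implication between the feasible/optimal solutions of \eqref{C5:P3} and those of its relaxation \eqref{C5:P3Continuous}. First I would observe that since \eqref{C5:P3Continuous} enlarges the feasible set of \eqref{C5:P3} (the binary set $\{0,1\}$ is replaced by the interval $[0,1]$ while the objective and all other constraints are unchanged), its optimal value is a lower bound: $p^{\mathrm{total}}_{\boldsymbol{\gamma}}(\boldsymbol{\gamma}^*_{\mathrm{relaxed}}) \le p^{\mathrm{total}}_{\boldsymbol{\gamma}}(\boldsymbol{\gamma}^*)$. To prove equivalence, it then suffices to show that this lower bound is attained, i.e.\ that from any optimal solution $({\overline{s}^*_{i,m}},{\gamma}^*_{i,m})$ of the relaxed problem one can construct a \emph{feasible} point of \eqref{C5:P3} with the \emph{same} objective value, which forces the two optimal values to coincide and simultaneously exhibits the recovery mapping \eqref{C5:OptimalConvert2}.

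The crux of the argument rests on the coupling constraint $\overline{\text{C8}}$: ${{\gamma}}_{i,m} = {\overline{s}_{i,m}}\,{\gamma}_{i,m}$. I would analyze this pointwise for each pair $(i,m)$. If ${\gamma}^*_{i,m} > 0$, then dividing by ${\gamma}^*_{i,m}$ in $\overline{\text{C8}}$ forces ${\overline{s}^*_{i,m}} = 1$; if ${\gamma}^*_{i,m} = 0$, then $\overline{\text{C8}}$ is satisfied for \emph{any} value of ${\overline{s}^*_{i,m}} \in [0,1]$, but crucially the choice of ${\overline{s}^*_{i,m}}$ no longer affects either the objective $p^{\mathrm{total}}_{\boldsymbol{\gamma}}$ (which depends only on $\boldsymbol{\gamma}$) or constraint $\widetilde{\text{C7}}$ (likewise). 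Hence the key observation is that an optimal relaxed solution already drives every ${\overline{s}^*_{i,m}}$ associated with a positive rate to the integer value $1$, and the remaining relaxed scheduling values (those with ${\gamma}^*_{i,m}=0$) are \emph{free} and can be rounded to $\{0,1\}$ as in \eqref{C5:OptimalConvert2} without changing the objective. I would then verify that after this rounding the reconstructed $({{s}^*_{i,m}},{\gamma}^*_{i,m})$ still satisfies $\overline{\text{C6}}$ (equivalently C6): since setting some scheduling variables from a fractional value down to $0$ can only decrease $\sum_m s_{i,m}$, and setting them to $1$ occurs only where they were already pinned to $1$, the subcarrier-occupancy constraint $\sum_m {s}_{i,m}\le 2$ is preserved.

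The main obstacle I anticipate is the $\overline{\text{C6}}$ feasibility check in the rounding step. One must rule out the degenerate possibility that the relaxed optimum places three or more subcarrier-indices $m$ with ${\gamma}^*_{i,m}>0$ on a single subcarrier $i$, which would force three scheduling variables to equal $1$ and violate C6 after rounding. I would handle this by arguing that the constraint $\overline{\text{C6}}$ already caps $\sum_m {\overline{s}^*_{i,m}} \le 2$ in the relaxed problem, and since every index with ${\gamma}^*_{i,m}>0$ contributes exactly ${\overline{s}^*_{i,m}}=1$ to this sum, at most two such indices can coexist on any subcarrier; thus the rounded solution automatically respects C6. A secondary technical point is confirming that the rate constraint $\widetilde{\text{C7}}$ transfers intact: because $\widetilde{\text{C7}}$ is stated purely in terms of $\boldsymbol{\gamma}$ and we retain $\boldsymbol{\gamma}^*$ unchanged, feasibility is immediate, and the equivalence ${s_{i,m}}\log_2(1+\gamma_{i,m}) = \log_2(1+s_{i,m}\gamma_{i,m})$ for binary $s_{i,m}$ (already used to derive $\widetilde{\text{C7}}$) guarantees consistency with the original C7 in \eqref{C5:P2}. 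Assembling these pieces yields that \eqref{C5:P3} and \eqref{C5:P3Continuous} share the same optimal value and that \eqref{C5:OptimalConvert2} recovers an optimal integer solution, completing the proof.
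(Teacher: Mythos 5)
Your proposal is correct and follows essentially the same route as the paper's proof: using constraint $\overline{\text{C8}}$ to force ${\overline{s}_{i,m}^*}=1$ whenever ${\gamma}_{i,m}^*>0$, rounding the remaining free scheduling variables to zero where ${\gamma}_{i,m}^*=0$ (which leaves the objective and $\widetilde{\text{C7}}$ untouched since both depend only on $\boldsymbol{\gamma}$), and verifying C6 via $\sum_{m} {{s}_{i,m}^*} \le \sum_{m} {\overline{s}_{i,m}^*} \le 2$. Your explicit framing of the argument as a relaxation lower bound attained by a constructed binary-feasible point is a slightly more formal packaging than the paper's direct verification that the mapped solution remains optimal, but the substance is identical.
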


\begin{proof}
Please refer to Appendix \ref{C5:AppendixB} for a proof of Theorem \ref{C5:Theorem2}.
\end{proof}

Note that the equivalence between $\widetilde{\text{C7}}$ in \eqref{C5:P3Continuous} and C7 in \eqref{C5:P2} still holds at the optimal solution via the mapping relationship in \eqref{C5:OptimalConvert2}.
It is notable that the reformulation in \eqref{C5:P3} not only transforms the couplings between binary variables and continuous variables into a single constraint C8, but also reveals the special structure that enables the equivalence between \eqref{C5:P3} and \eqref{C5:P3Continuous}.
Now, the non-convexity remaining in \eqref{C5:P3Continuous} arises from the product term in both objective function and constraint $\overline{\text{C8}}$.
Thus, we augment $\overline{\text{C8}}$ into the objective function via introducing a penalty factor $\theta$ as follows:
\begin{equation} \label{C5:P3Penalty}
\underset{\mathbf{\overline{s}},\;\boldsymbol{{\gamma}}}{\mino}\,\, \,\, G^{\theta}\left(\mathbf{\overline{s}},{\boldsymbol{{\gamma} }} \right) \;\;\;
\mbox{s.t.}\;\;
\overline{\text{C1}},\; \text{C4},\; \overline{\text{C6}},\;\widetilde{\text{C7}},
\end{equation}
where the new objective function is given by
\begin{equation}\label{C5:NewObj}
G^{\theta}\left(\mathbf{\overline{s}},{\boldsymbol{{\gamma} }} \right) =
\sum\limits_{i = 1}^{{N_{\mathrm{F}}}} {\sum\limits_{m = 1}^M {\frac{{{{\gamma}}_{i,m}}}{{{\beta _{i,m}}}}} }  + \sum\limits_{i = 1}^{{N_{\mathrm{F}}}} {\sum\limits_{m = 1}^{M-1} {\sum\limits_{n = m + 1}^M {\frac{{{{\gamma}}_{i,m}{{\gamma}}_{i,n}}}{{\max \left( {{\beta _{i,m}},{\beta _{i,n}}} \right)}}} } } +\theta \sum\limits_{i = 1}^{{N_{\mathrm{F}}}} {\sum\limits_{m = 1}^M {\left({{\gamma}}_{i,m} - {\overline{s}_{i,m}} {\gamma}_{i,m}\right)} }.
\end{equation}

\begin{Thm}\label{C5:Theorem3}
If the problem in \eqref{C5:P3Continuous} is feasible with a bounded optimal value, the problem in \eqref{C5:P3Penalty} is equivalent to \eqref{C5:P3Continuous} for a sufficient large penalty factor $\theta \gg 1$.
\end{Thm}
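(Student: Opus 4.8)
The plan is to establish Theorem~\ref{C5:Theorem3} using the standard theory of exact penalty functions for equality-constrained optimization. The essential observation is that constraint $\overline{\text{C8}}$, namely ${\gamma}_{i,m} = \overline{s}_{i,m}{\gamma}_{i,m}$, can be rewritten as ${\gamma}_{i,m}(1-\overline{s}_{i,m}) = 0$, and since $0 \le \overline{s}_{i,m} \le 1$ and ${\gamma}_{i,m} \ge 0$ on the feasible set, each term ${\gamma}_{i,m} - \overline{s}_{i,m}{\gamma}_{i,m} = {\gamma}_{i,m}(1-\overline{s}_{i,m})$ is nonnegative. Therefore the penalty term $\theta\sum_i\sum_m\left({\gamma}_{i,m} - \overline{s}_{i,m}{\gamma}_{i,m}\right)$ in \eqref{C5:NewObj} is a nonnegative quantity that vanishes exactly when $\overline{\text{C8}}$ holds. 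This sign-definiteness is what makes the penalty work cleanly, so I would state and verify it first.

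First I would show the easy direction: any feasible point of \eqref{C5:P3Continuous} (which satisfies $\overline{\text{C8}}$) is feasible for \eqref{C5:P3Penalty} and attains the same objective value there, since the penalty term is zero. Hence the optimal value of \eqref{C5:P3Penalty} is no larger than that of \eqref{C5:P3Continuous} for every $\theta$. The harder direction is to show that for $\theta$ sufficiently large, any optimal solution of \eqref{C5:P3Penalty} must satisfy $\overline{\text{C8}}$ exactly, so that the two problems share optimal solutions and optimal values. The plan here is a contradiction argument: suppose a minimizer $(\mathbf{\overline{s}},\boldsymbol{\gamma})$ of \eqref{C5:P3Penalty} violates $\overline{\text{C8}}$, so the penalty term is strictly positive. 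Because the remaining part of the objective $p^{\mathrm{total}}_{\boldsymbol{\gamma}}(\boldsymbol{\gamma})$ and the feasible region $\overline{\text{C1}},\text{C4},\overline{\text{C6}},\widetilde{\text{C7}}$ do not depend on $\theta$, and because the problem in \eqref{C5:P3Continuous} is feasible with bounded optimal value, one can bound the non-penalty portion of $G^{\theta}$ uniformly on the feasible set from below (it is nonnegative) and from above at a fixed reference feasible point. Driving $\theta \to \infty$ then forces the penalty contribution to dominate, so a point with positive penalty eventually yields a strictly larger objective than the reference feasible point, contradicting optimality.

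The main obstacle I anticipate is making the threshold on $\theta$ precise and ensuring the argument is uniform over the feasible set rather than pointwise. Concretely, I would need a strictly positive lower bound on $\sum_i\sum_m {\gamma}_{i,m}(1-\overline{s}_{i,m})$ whenever $\overline{\text{C8}}$ is violated; the difficulty is that this quantity can be made arbitrarily small while still nonzero, so the naive ``any positive penalty'' reasoning does not immediately pin down a finite $\theta^{\ast}$. To handle this I would exploit the structure of $\widetilde{\text{C7}}$, which enforces $\sum_i \log_2(1+{\gamma}_{i,m}) \ge R_m^{\mathrm{total}} > 0$ and thus prevents all ${\gamma}_{i,m}$ from collapsing to zero, together with compactness of the relaxed feasible region in $\mathbf{\overline{s}}$ and the coercivity of the objective in $\boldsymbol{\gamma}$ (large ${\gamma}_{i,m}$ inflates $p^{\mathrm{total}}_{\boldsymbol{\gamma}}$). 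These facts let me restrict attention to a compact effective domain, recover the desired uniform bound, and conclude the existence of a finite $\theta$ beyond which every minimizer of \eqref{C5:P3Penalty} satisfies $\overline{\text{C8}}$ and hence solves \eqref{C5:P3Continuous}. Combining both directions establishes the claimed equivalence.
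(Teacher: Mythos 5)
Since the paper itself offers no proof of this theorem (it simply defers to \cite{DerrickFD2016,sun2016optimal}), your proposal has to stand on its own; its overall architecture --- treating the linear term $\theta\sum_{i,m}\gamma_{i,m}(1-\overline{s}_{i,m})$ as an exact $\ell_1$-type penalty and proving equivalence in two directions --- is the standard route and matches what those references do. Your easy direction is correct: on the relaxed feasible set the penalty term is nonnegative and vanishes exactly on $\overline{\text{C8}}$, so the optimal value of \eqref{C5:P3Penalty} is at most that of \eqref{C5:P3Continuous} for every $\theta>0$.

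The gap is in the hard direction, at the exact step you flagged. You correctly observe that the violation $\sum_{i,m}\gamma_{i,m}(1-\overline{s}_{i,m})$ can be arbitrarily small yet nonzero, but then claim that $\widetilde{\text{C7}}$, compactness of the effective domain, and coercivity let you ``recover the desired uniform bound,'' i.e., a strictly positive lower bound on the violation over all violating points. No such bound exists, and compactness cannot produce one: the set of violating points is not closed --- its closure contains $\overline{\text{C8}}$-feasible points --- so the infimum of the continuous, nonnegative violation function over it is $0$. Concretely, take any feasible point of \eqref{C5:P3Continuous} with some $\gamma_{i,m}>0$ and set $\overline{s}_{i,m}=1-\delta$: the violation is $\delta\gamma_{i,m}$, arbitrarily small and positive, while the point stays in your compact effective domain and satisfies $\widetilde{\text{C7}}$. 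Your proposed fix therefore contradicts your own (correct) diagnosis. What actually closes the argument is not a lower bound on the violation but an upper bound on the achievable objective gain per unit of violation, i.e.\ a calmness/error-bound (``repair'') estimate. First replace $\overline{s}$ by its optimal vertex value for the given $\boldsymbol{\gamma}$ (this never increases the penalized objective and makes the violation equal to the total $\gamma$-mass of unscheduled users). Then show there is a constant $M$, depending only on the $\beta_{i,m}$, the $R_m^{\mathrm{total}}$, and the bound $\gamma_{i,m}\le 2^{R_m^{\mathrm{total}}}-1$ that holds without loss of optimality, such that zeroing every penalized $\gamma_{i,m}$ and compensating each user's lost rate on a subcarrier where it is scheduled increases $p^{\mathrm{total}}_{\boldsymbol{{\gamma}}}$ by at most $M\epsilon$; the degenerate case in which some user carries \emph{all} of its rate on penalized subcarriers is handled separately, since there $\widetilde{\text{C7}}$ forces $\epsilon$ above a fixed positive constant and such points are excluded for large $\theta$ by comparison with a fixed feasible reference point. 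For $\theta>M$ (and above that second threshold), any minimizer of \eqref{C5:P3Penalty} with violation $\epsilon>0$ would be strictly beaten by its repaired point, a contradiction; hence every minimizer satisfies $\overline{\text{C8}}$ and the equivalence follows. Alternatively, one can invoke exact-penalty results for linear penalties under calmness or bounded multipliers, which is in effect what \cite{DerrickFD2016,sun2016optimal} rely on.
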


\begin{proof}
Please refer to \cite{DerrickFD2016,sun2016optimal} for a proof of Theorem \ref{C5:Theorem3}.
\end{proof}

The problem in \eqref{C5:P3Penalty} is a generalized linear multiplicative programming problem over a compact convex set, where the optimal solution can be obtained via the B\&B method\cite{Konno2000}.

\subsection{B\&B Based Optimal Resource Allocation Algorithm}
The B\&B method has been widely adopted as a partial enumeration strategy for global optimization\cite{horst2013global}.
The basic principle of B\&B relies on a successive subdivision of the original region (Branch) that systematically discards non-promising subregions via employing lower bound or upper bound (Bound).
It has been proved that B\&B can converge to a globally optimal solution in finite numbers of iterations if the branching operation is consistent and the selection operation is bound improving\footnote{We note that the B\&B method cannot be directly used on the problem in \eqref{C5:P2} since a tight convex bounding function for the objective function has not been reported in the literatures and its feasible solution set is not compact. Based on Theorem \ref{C5:Theorem2} and Theorem \ref{C5:Theorem3}, we transform the problem in \eqref{C5:P2} to a equivalent generalized linear multiplicative programming problem on a convex compact feasible solution set in \eqref{C5:P3Penalty}, which can be handled by the B\&B method\cite{Konno2000}.}\cite{horst2013global,MARANASProofBB}.
In this section, we first propose the branching rule and the bounding method for the problem in \eqref{C5:P3Penalty}, and then develop the optimal resource allocation algorithm.

\noindent\textbf{\underline{{Branching Procedure}}}

From constraint $\widetilde{\text{C7}}$ in \eqref{C5:P3Penalty}, it can be observed that ${{\gamma}}_{i,m} > 2^{{R}_m^{\mathrm{total}}}-1$ is not the optimal rate allocation since the objective function is monotonically increasing with ${{\gamma}}_{i,m}$.
Therefore, we rewrite constraint C4 in \eqref{C5:P3Penalty} with a box constraint, C4: $0 \le {\gamma}_{i,m} \le 2^{{R}_m^{\mathrm{total}}}-1$, $\forall i,m$.
As a result, the optimization variables ${\mathbf{\overline{s}}}$ and $\boldsymbol{{\gamma}}$ are defined in a hyper-rectangle, which is spanned by $\overline{\text{C1}}$ and C4.
For notational simplicity, we redefine the optimization variables in \eqref{C5:P3Penalty} as follows:
\begin{equation}
{v_{i,m}} = {{\gamma} _{i,m}}\; \text{and} \; {v_{i,m+M}} = {\overline{s}_{i,m}}, \;\forall{i} \in \left\{1, \ldots ,{N_{\mathrm{F}}}\right\},\;\forall{m} \in \left\{ 1, \ldots ,M \right\}. \label{C5:VariableRedefine1}
\end{equation}
Then the product terms in $G^{\theta}\left(\mathbf{\overline{s}},{\boldsymbol{{\gamma} }} \right)$ in \eqref{C5:P3Penalty}, i.e., ${{{\gamma}}_{i,m}{{\gamma}}_{i,n}}$ and $-{\overline{s}_{i,m}} {\gamma}_{i,m}$, can be generally represented by $a_{i,m,n}{{v_{i,m}v_{i,n}}}$, where $a_{i,m,n} \in \{1,-1\}$ is a constant coefficient.
With the definition in \eqref{C5:VariableRedefine1}, we can use the new variable ${v_{i,m}}$ and the original variable ${\left({{\gamma}}_{i,m},\;{\overline{s}_{i,m}}\right)}$ interchangeably in the rest of the chapter.
Furthermore, the hyper-rectangle spanned by constraints $\overline{\text{C1}}$ and C4 can be presented by $\Phi = \left[ v _{i,m}^{{\mathrm{L}}},v _{i,m}^{{\mathrm{U}}} \right]$, $\forall{i} \in \left\{1, \ldots ,{N_{\mathrm{F}}}\right\}$, $\forall{m} \in \left\{ 1, \ldots ,2M \right\}$, where ${v _{i,m}^{\mathrm{L}}}$ and ${v_{i,m}^{\mathrm{U}}}$ denote the lower bound and upper bound for ${{v _{i,m}}}$, respectively.

\begin{figure}[t]
\centering
\includegraphics[width=5.0in]{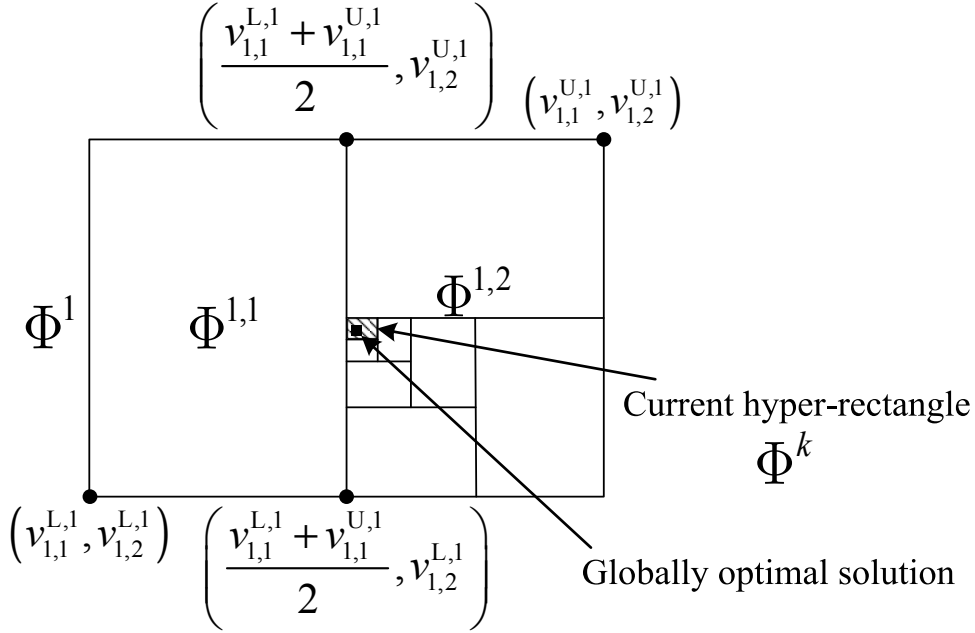}
\caption{An illustration of the successive branching procedure in a two-dimensional space.}
\label{C5:BnB:a}
\end{figure}

An successive branching procedure with bisection on the longest edge of the hyper-rectangle is adopted in this chapter\cite{Konno2000}, as illustrated in Figure \ref{C5:BnB:a}.
Particularly, in the first iteration, according to constraints $\overline{\text{C1}}$ and C4 in \eqref{C5:P3Penalty}, the initial hyper-rectangle $\Phi^1$ is characterized by
\begin{equation}\label{C5:InitialRectangle}
v _{i,m}^{{\mathrm{L}},1} = 0,\;
v _{i,m}^{{\mathrm{U}},1} = 2^{{R}_m^{\mathrm{total}}}-1,\;
v _{i,m+M}^{{\mathrm{L}},1} = 0,\;
\text{and}\;
v _{i,m+M}^{{\mathrm{U}},1} = 1.
\end{equation}
Then, in the $k$-th iteration, the current hyper-rectangle\footnote{The current hyper-rectangle selection rule will be presented in the overall algorithm, cf. \textbf{Algorithm} \ref{C5:alg1}.} $\Phi^k = \left[ v _{i,m}^{{\mathrm{L}},k},v _{i,m}^{{\mathrm{U}},k} \right]$, $\forall{i} \in \left\{1, \ldots ,{N_{\mathrm{F}}}\right\}$, $\forall{m} \in \left\{ 1, \ldots ,2M \right\}$, is partitioned into the following two subrectangles:
\begin{align}\label{C5:Subdivision}
\Phi^{k,1} &= \left[ {\begin{array}{*{20}{c}}
{v_{1,1}^{{\mathrm{L}},k}}&{v_{1,1}^{{\mathrm{U}},k}}\\
 \vdots & \vdots \\
{v_{{i^k},{m^k}}^{{\mathrm{L}},k}}&{\frac{{v_{{i^k},{m^k}}^{{\mathrm{L}},k} + v_{{i^k},{m^k}}^{{\mathrm{U}},k}}}{2}}\\
 \vdots & \vdots \\
{v_{{N_{\mathrm{F}}},2M}^{{\mathrm{L}},k}}&{v_{{N_{\mathrm{F}}},2M}^{{\mathrm{U}},k}}
\end{array}} \right]\; \text{and} \;
\Phi^{k,2} &= \left[ {\begin{array}{*{20}{c}}
{v_{1,1}^{{\mathrm{L}},k}}&{v_{1,1}^{{\mathrm{U}},k}}\\
 \vdots & \vdots \\
{\frac{{v_{{i^k},{m^k}}^{{\mathrm{L}},k} + v_{{i^k},{m^k}}^{{\mathrm{U}},k}}}{2}}&{v_{{i^k},{m^k}}^{{\mathrm{U}},k}}\\
 \vdots & \vdots \\
{v_{{N_{\mathrm{F}}},2M}^{{\mathrm{L}},k}}&{v_{{N_{\mathrm{F}}},2M}^{{\mathrm{U}},k}}
\end{array}} \right],
\end{align}
where ${v _{i,m}^{{\mathrm{L}},k}}$ and ${v_{i,m}^{{\mathrm{U}},k}}$ denote the lower bound and upper bound for ${{v _{i,m}}}$ in the $k$-th iteration. Index $\left({{i^k},{m^k}}\right)$ in \eqref{C5:Subdivision} corresponds to the variable with the longest normalized edge in $\Phi^k$, i.e., $\left({{i^k},{m^k}}\right) = \arg \underset{\left(i,m\right)}{\maxo}\; \frac{{ {v_{i,m}^{{\mathrm{U}},k} - v_{i,m}^{{\mathrm{L}},k}}}}{{ {v_{i,m}^{{\mathrm{U}},1} - v_{i,m}^{{\mathrm{L}},1}} }}$.
Figure \ref{C5:BnB:a} illustrates the successive branching procedure in a two-dimensional space with ${N_{\mathrm{F}}}=1$ and $M=1$.
Firstly, we partitioned the initial hyper-rectangle $\Phi^1$ into $\Phi^{1,1}$ and $\Phi^{1,2}$ via perform a bisection on the edge of $v_{1,1}$.
Then, $\Phi^{1,2}$ is selected as current hyper-rectangle in the 2-th iteration, namely $\Phi^2$, for subsequent branching iterations.
The shadowed region denotes the current hyper-rectangle in the $k$-th iteration, i.e., $\Phi^k$.
Note that the branching procedure is exhaustive due to the finite numbers of optimization variables and the finite volume of initial hyper-rectangle $\Phi^1$, i.e., $\underset{k \to \infty }{\lim}\underset{\left(i,m\right)}{\maxo}\; \left( v _{i,m}^{{\mathrm{U}},k}-v _{i,m}^{{\mathrm{L}},k}\right) = 0.$ Correspondingly, in Figure \ref{C5:BnB:a}, the shadowed region will collapse into a point with $k \to \infty$.

Now, in the $k$-th iteration, we can rewrite the problem in \eqref{C5:P3Penalty} within $\Phi^k$ as follows:
\begin{equation}\label{C5:P3PenaltySubrect} \underset{\left(\mathbf{\overline{s}},\;{\boldsymbol{{\gamma} }} \right) \in \Phi^k}{\mino}\,\, \,\, G^{\theta}\left(\mathbf{\overline{s}},{\boldsymbol{{\gamma} }} \right)\;\;\;
\mbox{s.t.} \;\; \overline{\text{C6}},\;\widetilde{\text{C7}}.
\end{equation}

\noindent\textbf{\underline{{Lower Bound and Upper Bound}}}

\begin{figure}[t]
\centering
\includegraphics[width=5in]{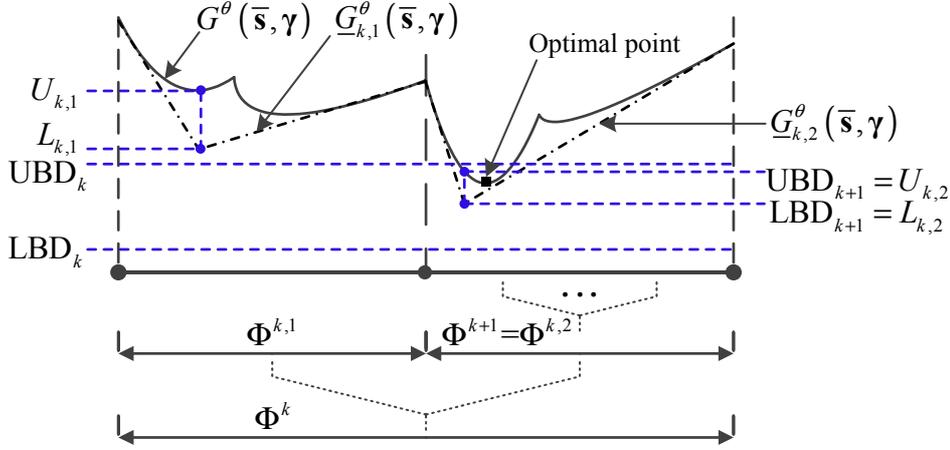}
\caption{An illustration of \textbf{Algorithm} \ref{C5:alg1} from the $k$-th iteration to the $(k+1)$-th iteration in a one-dimensional space.}
\label{C5:BnB:b}
\end{figure}

We first present the lower bound for the problem in \eqref{C5:P3PenaltySubrect}, from which the upper bound can be obtained straightforwardly in the end of this part. As it was shown in \cite{Androulakis1995}, the tightest possible convex lower bound of a product term $a_{i,m,n}{{v_{i,m}v_{i,n}}}$ inside some rectangular region $D_{i,m,n}^k = \left[ v _{i,m}^{{\mathrm{L}},k},v _{i,m}^{{\mathrm{U}},k} \right] \times \left[ v _{i,n}^{{\mathrm{L}},k},v _{i,n}^{{\mathrm{U}},k} \right]$, i.e., convex envelope in the $k$-th iteration, is given by
\begin{equation}\label{C5:ConvexEnvelope}
{l_{a_{i,m,n}}^k}\left( {{v _{i,m}},{v _{i,n}}} \right) =
\left\{
\begin{array}{ll}
a_{i,m,n} \max\left( {v_{i,m,n}^{\mathrm{LL,k}},\;v_{i,m,n}^{\mathrm{UU,k}}
} \right) & \text{if}\;{a_{i,m,n}} > 0,\\
a_{i,m,n} \min\left( {v_{i,m,n}^{\mathrm{LU,k}},\;v_{i,m,n}^{\mathrm{UL,k}}
} \right) & \text{if}\;{a_{i,m,n}} \le 0,
\end{array}
\right.
\end{equation}
where
\begin{align}\label{C5:Parameters}
  v_{i,m,n}^{\mathrm{LL,k}} & = v _{i,m}^{{\mathrm{L}},k}{v _{i,n}} + v_{i,n}^{{\mathrm{L}},k}{v _{i,m}} - v _{i,m}^{{\mathrm{L}},k}v_{i,n}^{{\mathrm{L}},k} \\
  v_{i,m,n}^{\mathrm{UU,k}} & = v_{i,m}^{{\mathrm{U}},k}{v_{i,n}} + v_{i,n}^{{\mathrm{U}},k}{v_{i,m}} - v_{i,m}^{{\mathrm{U}},k} v_{i,n}^{{\mathrm{U}},k} \\
  v_{i,m,n}^{\mathrm{LU,k}} & = v_{i,m}^{{\mathrm{L}},k}{v _{i,n}} + v_{i,n}^{{\mathrm{U}},k}{v _{i,m}} - v_{i,m}^{{\mathrm{L}},k}v_{i,n}^{{\mathrm{U}},k} \\
  v_{i,m,n}^{\mathrm{UL,k}} & = v_{i,m}^{{\mathrm{U}},k}{v_{i,n}} + v_{i,n}^{{\mathrm{L}},k}{v_{i,m}} - v_{i,m}^{{\mathrm{U}},k} v_{i,n}^{{\mathrm{L}},k}
\end{align}
Note that ${l_{a_{i,m,n}}^k}\left( {{v _{i,m}},{v _{i,n}}} \right)$ is a pointwise linear function, which serves as a convex lower bound for $a_{i,m,n}{{v_{i,m}v_{i,n}}}$ in $D_{i,m,n}^k$, i.e., ${l_{a_{i,m,n}}^k}\left( {{v _{i,m}},{v _{i,n}}} \right) \le a_{i,m,n}{{v_{i,m}v_{i,n}}}$.
Further, the maximum separation between $a_{i,m,n}{{v_{i,m}v_{i,n}}}$ and ${l_{a_{i,m,n}}^k}\left( {{v _{i,m}},{v _{i,n}}} \right)$ is equal to one-fourth of the area of rectangular region $D_{i,m,n}^k$ \cite{Androulakis1995}, which is
\begin{equation}\label{C5:MaxSeparation1}
{{\varepsilon}^k} \left( v_{i,m}, v _{i,n} \right) = \frac{1}{4}{\left( {v _{i,m}^{{\mathrm{U}},k} - v _{i,m}^{{\mathrm{L}},k}} \right)\left( {v _{i,n}^{{\mathrm{U}},k} - v _{i,n}^{{\mathrm{L}},k}} \right)}.
\end{equation}
We note that the maximum separation is bounded and ${l_{a_{i,m,n}}^k}\left( {{v _{i,m}},{v _{i,n}}} \right)$ can be arbitrarily close to $a_{i,m,n}{{v_{i,m}v_{i,n}}}$ for a small enough rectangle $D_{i,m,n}^k$.

Based on the convex envelope for product terms in \eqref{C5:P3PenaltySubrect}, in the $k$-th iteration, we have the relaxed convex minimization problem over current hyper-rectangle $\Phi^k$ as follows:
\begin{equation}\label{C5:P3PenaltySubrectConvexLowerBound} \underset{\left(\mathbf{\overline{s}},\;\boldsymbol{{\gamma}}\right) \in \Phi^k }{\mino}\;\underline{G}^{\theta}_k\left(\mathbf{\overline{s}},{\boldsymbol{{\gamma} }} \right)\;\;\;\mbox{s.t.} \;\;\overline{\text{C6}},\;\widetilde{\text{C7}},
\end{equation}
where $\underline{G}^{\theta}_k\left(\mathbf{\overline{s}},{\boldsymbol{{\gamma} }} \right)$ denotes the convex lower bounding function for $G^{\theta}\left(\mathbf{\overline{s}},{\boldsymbol{{\gamma} }} \right)$ in \eqref{C5:P3PenaltySubrect} within $\Phi^k$,
\begin{align}
\underline{G}^{\theta}_k\left(\mathbf{\overline{s}},{\boldsymbol{{\gamma} }} \right) &= \sum\limits_{i = 1}^{{N_{\mathrm{F}}}} {\sum\limits_{m = 1}^M {{{\gamma} _{i,m}}\left( {\frac{1}{{{\beta _{i,m}}}} + \theta } \right)} } + \sum\limits_{i = 1}^{{N_{\mathrm{F}}}} {\sum\limits_{m = 1}^{M - 1} {\sum\limits_{n = m + 1}^M {\frac{{{l_1^k}\left( {{{\gamma} _{i,m}},{{\gamma} _{i,n}}} \right)}}{{\max \left( {{\beta _{i,m}},{\beta _{i,n}}} \right)}}} } }  \\
&+ \theta \sum\limits_{i = 1}^{{N_{\mathrm{F}}}} {\sum\limits_{m = 1}^M {{l_{ - 1}^k}\left( {{{\overline{s}}_{i,m}}{\mathrm{,}}{{\gamma} _{i,m}}} \right)} }.\notag
\end{align}
According to \eqref{C5:MaxSeparation1}, the maximum gap between $G^{\theta}\left(\mathbf{\overline{s}},{\boldsymbol{{\gamma} }} \right)$ and $\underline{G}^{\theta}_k\left(\mathbf{\overline{s}},{\boldsymbol{{\gamma} }} \right)$ within $\Phi^k$ is given by
\begin{equation}\label{C5:MaxGap}
{\Delta _{\max }^k} = \sum\limits_{i = 1}^{{N_{\mathrm{F}}}} {\sum\limits_{m = 1}^{M - 1} {\sum\limits_{n = m + 1}^M {\frac{{{\varepsilon^k}\left( {{\gamma} _{i,m},{\gamma} _{i,n}} \right)}}{{\max \left( {{\beta _{i,m}},{\beta _{i,n}}} \right)}}} } }  +\theta \sum\limits_{i = 1}^{{N_{\mathrm{F}}}} {\sum\limits_{m = 1}^M {{\varepsilon^k}\left( { \overline{s}_{i,m},{\gamma} _{i,m}} \right)} },
\end{equation}
which will vanish when $\Phi^k$ collapses into a point with $k \to \infty$.
Now, the relaxed problem in \eqref{C5:P3PenaltySubrectConvexLowerBound} is a convex programming problem that can be solved efficiently by standard convex program solvers such as CVX \cite{cvx}.
Note that the optimal value of \eqref{C5:P3PenaltySubrectConvexLowerBound} provides a lower bound for \eqref{C5:P3PenaltySubrect} within $\Phi^k$ locally and that \eqref{C5:P3PenaltySubrect} is infeasible if \eqref{C5:P3PenaltySubrectConvexLowerBound} is infeasible.

For the upper bound, it is clear that any feasible solution of the problem in \eqref{C5:P3PenaltySubrect} attains a local upper bound within the hyper-rectangle $\Phi^k$. It can be observed that the optimal solution of the problem in \eqref{C5:P3PenaltySubrectConvexLowerBound}, denoted as $\left(\mathbf{\overline{s}}^*_k, {\boldsymbol{{\gamma} }^*_k}\right)$, is always a feasible point for \eqref{C5:P3PenaltySubrect}, since they share the same feasible solution set. Therefore, an upper bound of \eqref{C5:P3PenaltySubrect} within $\Phi^k$ can be obtained by simply calculating $G^{\theta}\left(\mathbf{\overline{s}}^*_k,{\boldsymbol{{\gamma} }^*_k} \right)$.

\noindent\textbf{\underline{{Overall Algorithm}}}

\begin{table}[ptb]
	\begin{algorithm} [H]                    
		\caption{{Optimal Resource Allocation Algorithm via B\&B}}     
		\label{C5:alg1}                          
		\begin{algorithmic} [1]
			\small          
			\STATE {\textbf{Branching on Current Rectangle:}
			Partition current rectangle $\Phi^k$ into two subrectangles $\Phi^{k,1}$ and $\Phi^{k,2}$ through \eqref{C5:Subdivision} and update the unfathomed partition set with $\mathcal{Z} = \mathcal{Z} \bigcup \left\{\Phi^{k,1},\Phi^{k,2}\right\} \setminus \Phi^k$.}
			
			\STATE {\textbf{Local Lower Bound and Upper Bound:}
			Solve the problem in \eqref{C5:P3PenaltySubrectConvexLowerBound} within $\Phi^{k,r}$ ($r = 1,2$). If it is infeasible, delete (fathoming) $\Phi^{k,r}$ from $\mathcal{Z}$. Otherwise, obtain the intermediate optimal solution, $\left(\mathbf{\overline{s}}_{k,r},{\boldsymbol{{\gamma} }_{k,r}} \right)$, and the local lower bound and upper bound within $\Phi^{k,r}$ given by
			$L_{k,r} = \underline{G}^{\theta}_{k,r}\left(\mathbf{\overline{s}}_{k,r},{\boldsymbol{{\gamma} }_{k,r}} \right)$ and
			$U_{k,r} = G^{\theta}\left(\mathbf{\overline{s}}_{k,r},{\boldsymbol{{\gamma} }_{k,r}} \right)$, respectively.}
			
			\STATE {\textbf{Update $\mathcal{Z}$, $\mathcal{W}$, $\mathcal{V}$, $\mathrm{LBD}_k$, and $\mathrm{UBD}_k$:}
			For $L_{k,r} \ge \mathrm{UBD}_k$, delete (fathoming) $\Phi^{k,r}$ from $\mathcal{Z}$.
			If $\mathcal{Z} = \emptyset$, then stop and return the incumbent point. Update parameters based on \eqref{C5:UpdateParameters1}, \eqref{C5:UpdateParameters2}, and \eqref{C5:UpdateParameters3}.}
			\STATE {\textbf{Update Current Point, Incumbent Point, and Current Rectangle:}
			$k = k+1$. Update the current point with     $\left(\mathbf{\overline{s}}_{k}^{\mathrm{cur}},{\boldsymbol{{\gamma} }_{k}^{\mathrm{cur}}} \right) = \left(\mathbf{\overline{s}}_{k^*,r^*},{\boldsymbol{{\gamma} }_{k^*,r^*}} \right)$ and the incumbent point with
			$\left(\mathbf{\overline{s}}_{k}^{\mathrm{inc}},{\boldsymbol{{\gamma} }_{k}^{\mathrm{inc}}} \right) = \left(\mathbf{\overline{s}}_{k^\circ,r^\circ},{\boldsymbol{{\gamma} }_{k^\circ,r^\circ}} \right)$,
			where $\left(k^*,r^*\right)$ and $\left(k^\circ,r^\circ\right)$ are obtained from \eqref{C5:UpdateParameters2} and \eqref{C5:UpdateParameters3}, respectively.
			Correspondingly, current rectangle is selected as $\Phi^k = \Phi^{k^*,r^*}$.}
			
			\STATE {\textbf{Convergence Check:}
			If $\left(\mathrm{UBD}_k-\mathrm{LBD}_k\right)>\epsilon$, then go to \textbf{Step} 1.
			Otherwise, ${\epsilon}\text{-}$convergence solution has been attained and return the incumbent point.}
		\end{algorithmic}
	\end{algorithm}
\end{table}

Based on the proposed branching procedure and bounding
methods, we develop the B\&B resource allocation algorithm to obtain the globally optimal solution for the problem in \eqref{C5:P3Penalty}, cf. \textbf{Algorithm} \ref{C5:alg1}.
In our algorithm, the current subrectangle is the one which possesses the minimum local lower bound among all the unfathomed subrectangles.
The current point denotes the intermediate optimal solution within the current subrectangle.
In addition, the current point denotes the intermediate optimal solution within the current subrectangle.

{The proposed algorithm is initialized as follows.
Set the convergence tolerance $\epsilon$, the iteration counter $k=1$, and initialize current subrectangle $\Phi^k$ through \eqref{C5:InitialRectangle}.
Solve the problem in \eqref{C5:P3PenaltySubrectConvexLowerBound} within $\Phi^k$ to obtain the intermediate optimal solution $\left(\mathbf{\overline{s}}_k,{\boldsymbol{{\gamma} }_k}\right)$.
Then, define the current point $\left(\mathbf{\overline{s}}_k^{\mathrm{cur}},{\boldsymbol{{\gamma} }_k^{\mathrm{cur}}}\right) = \left(\mathbf{\overline{s}}_k,{\boldsymbol{{\gamma} }_k}\right)$, the incumbent point $\left(\mathbf{\overline{s}}_k^{\mathrm{inc}},{\boldsymbol{{\gamma} }_k^{\mathrm{inc}}}\right) = \left(\mathbf{\overline{s}}_k,{\boldsymbol{{\gamma} }_k}\right)$, the local lower bound $L_k = \underline{G}^{\theta}_k\left(\mathbf{\overline{s}}_k,{\boldsymbol{{\gamma} }_k} \right)$, and the local upper bound $U_k = G^{\theta}\left(\mathbf{\overline{s}}_k,{\boldsymbol{{\gamma} }_k} \right)$.
Initialize the global lower bound and upper bound with $\mathrm{LBD}_{k} = L_k$ and $\mathrm{UBD}_{k} = U_k$, respectively.
Initialize the unfathomed partition set with $\mathcal{Z} = \left\{ \Phi^k  \right\}$.
Correspondingly, define the local lower bound set and the local upper bound set for all unfathomed rectangles in $\mathcal{Z}$ with $\mathcal{W}$ and $\mathcal{V}$, respectively.
Initialize them with $\mathcal{W} = \left\{L_k\right\}$ and $\mathcal{V} = \left\{U_k\right\}$, respectively.
Additionally, in Line 5 of Algorithm \ref{C5:alg1}, the parameters are updated as follows:
\begin{align}
\mathcal{W} &= \mathcal{W} \bigcup L_{k,r},\;
\mathcal{V} = \mathcal{V} \bigcup U_{k,r},\; \text{if} \; \Phi^{k,r} \in \mathcal{Z}, \label{C5:UpdateParameters1}\\
\mathrm{LBD}_{k+1} &= L_{k^*,r^*} = \underset{k',r'}{\mino}\;\left(\mathcal{W}\right),\; r'\in \left\{1,2\right\},\; k'\in \left\{1, \ldots , k\right\},\;\text{and} \label{C5:UpdateParameters2}\\
\mathrm{UBD}_{k+1} &= U_{k^\circ,r^\circ} = \underset{k',r'}{\mino}\;\left(\mathcal{V}\right),\;r'\in \left\{1,2\right\},\; k'\in \left\{1, \ldots , k\right\}.\label{C5:UpdateParameters3}
\end{align}}

Accordingly, Figure \ref{C5:BnB:b} illustrates a simple example of the developed algorithm in a one-dimensional space, where $\underline{G}^{\theta}_{k,r}\left(\mathbf{\overline{s}},{\boldsymbol{{\gamma} }} \right)$, $r = 1,2$, denotes the lower bounding function for $G^{\theta}\left(\mathbf{\overline{s}},{\boldsymbol{{\gamma} }} \right)$ in \eqref{C5:P3PenaltySubrect} within $\Phi^{k,r}$.
In Step 4, if $L_{k,r} \ge \mathrm{UBD}_k$, the optimal solution must not locate in $\Phi^{k,r}$, and thus we discard it from $\mathcal{Z}$, such as $\Phi^{k,1}$ in Figure \ref{C5:BnB:b}.
Note that the lower bound within subrectangle $\Phi^{k,r}$ is always larger than that within $\Phi^{k}$ since the feasible solution set becomes smaller, i.e., $L_{k,r} \ge L_{k}$.
Therefore, the global lower bound update and current rectangle selection operation in Steps 4 and 5 can generate a \emph{non-decreasing} sequence for $\mathrm{LBD}_k$.
On the other hand, the global upper bound update operation can generate a \emph{non-increasing} sequence for $\mathrm{UBD}_k$.
For example, in Figure \ref{C5:BnB:b}, we can easily observe that $\mathrm{UBD}_{k+1}\le\mathrm{UBD}_k$ and $\mathrm{LBD}_{k+1}\ge\mathrm{LBD}_k$.
It can be proved that the proposed branch and bound algorithm converges to the globally optimal solution in finite number of iterations based on the sufficient conditions stated in \cite{horst2013global}.
The proof of convergence for the adopted B\&B algorithm can be found in \cite{MARANASProofBB}.
The convergence speed of our proposed algorithm will be verified by simulations in Section \ref{C5:SimulationResults}.

\section{Suboptimal Solution} \label{C5:Suboptimal}
Compared to the brute-force search method, the proposed B\&B algorithm provides a systematic approach by exploiting the structure of the problem in \eqref{C5:P3Penalty}.
It saves a large amount of computational complexity since it discards the non-promising subrectangles\cite{horst2013global}.
More importantly, it serves as a performance benchmark for any suboptimal algorithm.
However, it has a non-polynomial time computational complexity\cite{Konno2000}.
In this section, we present a suboptimal solution for the problem in \eqref{C5:P2} by exploiting the D.C. programming\cite{dinh2010local}, which only requires a polynomial time computational complexity.

To start with, we aim to circumvent the coupling between binary variables ${s}_{i,m}$ and continuous variables ${\gamma} _{i,m}$ in \eqref{C5:P2}. We define an auxiliary variable $\widetilde{\gamma}_{i,m} = {\gamma} _{i,m}{s}_{i,m}$ and adopt the big-M formulation\cite{DerrickFD2016} to equivalently transform the problem in \eqref{C5:P2} as follows:
\begin{align}\label{C5:P4}
&\underset{\mathbf{s},\boldsymbol{\gamma},\widetilde{\boldsymbol{\gamma}}}{\mino}\,\, \,\,  p^{\mathrm{total}} \left(\mathbf{s},\widetilde{\boldsymbol{\gamma}} \right) \\
\mbox{s.t.}\;\;
&\mbox{C1},\;\mbox{C4},\;\mbox{C6},\notag\\
&\mbox{C7: } \sum\limits_{i = 1}^{{N_{\mathrm{F}}}} {s_{i,m}}{{{\log }_2}\left( {1 +\frac{\widetilde{\gamma} _{i,m}}{{s_{i,m}}}} \right)}  \ge {{R}_m^{\mathrm{total}}},\;\forall m,\notag\\
&\mbox{C9: } {{\widetilde {\gamma} }_{i,m}} \ge 0,\;\forall i,m,
\;\;\;\mbox{C10: } {{\widetilde {\gamma}}_{i,m}} \le {\gamma _{i,m}},\;\forall i,m,\notag\\
&\mbox{C11: } {{\widetilde {\gamma} }_{i,m}} \le {s_{i,m}} \left(2^{R_m^{{\mathrm{total}}}}-1\right),
\;\forall i,m,\notag\\
&\mbox{C12: } {{\widetilde {\gamma} }_{i,m}} \ge {\gamma _{i,m}} - \left( {1 - {s_{i,m}}} \right) \left(2^{R_m^{{\mathrm{total}}}}-1\right),\;\forall i,m,\notag
\end{align}
where $\widetilde{\boldsymbol{\gamma}} \in \mathbb{R}^{N_{\mathrm{F}}M \times 1}$ denotes the set of the auxiliary variables $\widetilde{\gamma}_{i,m}$ and constraints C9-C12 are imposed additionally following the big-M formulation\cite{DerrickFD2016}.
Besides, the binary constraints in C1 are another major obstacle for the design of a computationally efficient resource allocation algorithm. Hence, we rewrite the binary constraint C1 in its equivalent form:
\begin{align}\label{C5:C1AB}
&\mbox{C1a: }0 \le {s}_{i,m} \le 1 \; \text{and} \;\notag\\
&\mbox{C1b: }\sum\limits_{i = 1}^{{N_\mathrm{F}}} {\sum\limits_{m = 1}^M {{{s}_{i,m}}} }  - \sum\limits_{i = 1}^{{N_\mathrm{F}}} {\sum\limits_{m = 1}^M {s_{i,m}^2} }  \le 0, \;\;\forall i,m.
\end{align}

Furthermore, we rewrite ${{\widetilde{\gamma} _{i,m}}{\widetilde{\gamma} _{i,n}}} = \frac{1}{2}{{\left( {{{\widetilde \gamma }_{i,m}} + {{\widetilde \gamma }_{i,n}}} \right)}^2} - \frac{1}{2}{\left( {\widetilde \gamma _{i,m}^2 + \widetilde \gamma _{i,n}^2} \right)}$ and augment the D.C. constraint C1b into the objective function via a penalty factor $\eta \gg 1$. The problem in \eqref{C5:P4} can be rewritten in the canonical form of D.C. programming as follows:
\begin{align}\label{C5:P4DC}
&\underset{\mathbf{s},\boldsymbol{\gamma},\widetilde{\boldsymbol{\gamma}}}{\mino}\,\, \,\, G_{1}^{\eta}\left(\mathbf{{s}},{\boldsymbol{\widetilde{\gamma} }} \right) - G_{2}^{\eta}\left(\mathbf{{s}},{\boldsymbol{\widetilde{\gamma} }} \right) \\
\mbox{s.t.}\;\;&\mbox{C1a},\;\mbox{C4},\;\mbox{C7},\;\mbox{C6},\;\mbox{C9-C12}, \notag
\end{align}
where
\begin{align}
G_{1}^{\eta}\left(\mathbf{{s}},{\boldsymbol{\widetilde{\gamma} }} \right) &= \sum\limits_{i = 1}^{{N_\mathrm{F}}} {\sum\limits_{m = 1}^M {\frac{{{{\widetilde \gamma }_{i,m}}}}{{{\beta _{i,m}}}}} }  + \eta \sum\limits_{i = 1}^{{N_\mathrm{F}}} {\sum\limits_{m = 1}^M {{s_{i,m}}} }+\frac{1}{2}\sum\limits_{i = 1}^{{N_\mathrm{F}}} {\sum\limits_{m = 1}^{M - 1} {\sum\limits_{n = m + 1}^M {\frac{{{{\left( {{{\widetilde \gamma }_{i,m}} + {{\widetilde \gamma }_{i,n}}} \right)}^2}}}{{\max \left( {{\beta _{i,m}},{\beta _{i,n}}} \right)}}} } },\label{C5:DC1}\\
G_{2}^{\eta}\left(\mathbf{{s}},{\boldsymbol{\widetilde{\gamma} }} \right) &= \frac{1}{2}\sum\limits_{i = 1}^{{N_\mathrm{F}}} {\sum\limits_{m = 1}^{M - 1} {\sum\limits_{n = m + 1}^M {\frac{{\left( {\widetilde \gamma _{i,m}^2 + \widetilde \gamma _{i,n}^2} \right)}}{{\max \left( {{\beta _{i,m}},{\beta _{i,n}}} \right)}}} } } + \eta \sum\limits_{i = 1}^{{N_\mathrm{F}}} {\sum\limits_{m = 1}^M {s_{i,m}^2} } \label{C5:DC2}.
\end{align}

According to Theorem \ref{C5:Theorem3}, the problem in \eqref{C5:P4DC} is equivalent to the problem in \eqref{C5:P4}. Note that ${G_{1}^{\eta}} \left(\mathbf{{s}},{\boldsymbol{\widetilde{\gamma} }} \right)$ and ${G_{2}^{\eta}} \left(\mathbf{{s}},{\boldsymbol{\widetilde{\gamma} }} \right)$ are differentiable convex functions w.r.t. $s_{i,m}$ and ${{\widetilde \gamma }_{i,m}}$. Therefore, for any feasible point $\left(\mathbf{{s}}_{k},{\boldsymbol{\widetilde{\gamma} }}_{k} \right)$, we can define the global underestimator for ${G_{2}^{\eta}} \left(\mathbf{{s}},{\boldsymbol{\widetilde{\gamma} }} \right)$ based on its first order Taylor's expansion at $\left(\mathbf{{s}}_{k},{\boldsymbol{\widetilde{\gamma} }}_{k} \right)$ as follows:
\begin{align}\label{C5:Taylor1}
G_2^\eta\left( {{\mathbf{{s}}},{\boldsymbol{\widetilde{\gamma} }}} \right) &\ge G_2^\eta\left( {{{\mathbf{{s}}}_k},{\boldsymbol{\widetilde{\gamma} }_k}} \right) + {\nabla _{{\mathbf{{s}}}}}G_2^\eta{\left( {{{{\mathbf{{s}}}}_k},{{\boldsymbol{\widetilde{\gamma} }}_k}} \right)^{\mathrm{T}}}\left( {{\mathbf{{s}}} - {{{\mathbf{{s}}}}_k}} \right) \notag\\
&+ {\nabla _{\boldsymbol{\widetilde{\gamma} }}}G_2^\eta{\left( {{{{\mathbf{{s}}}}_k},{{\boldsymbol{\widetilde{\gamma} }}_k}} \right)^{\mathrm{T}}}\left( {{\boldsymbol{\widetilde{\gamma} }} - {{\boldsymbol{\widetilde{\gamma} }}_k}} \right),
\end{align}
where ${\nabla _{{\mathbf{{s}}}}}G_2^\eta{\left( {{{{\mathbf{{s}}}}_k},{{\boldsymbol{\widetilde{\gamma} }}_k}} \right)}$ and ${\nabla _{\boldsymbol{\widetilde{\gamma} }}}G_2^\eta{\left( {{{{\mathbf{{s}}}}_k},{{\boldsymbol{\widetilde{\gamma} }}_k}} \right)}$ denote the gradient vectors of ${G_2^\eta} \left(\mathbf{{s}},{\boldsymbol{\widetilde{\gamma} }} \right)$ at $\left( {{{{\mathbf{{s}}}}_k},{{\boldsymbol{\widetilde{\gamma} }}_k}} \right)$ w.r.t. ${{\mathbf{{s}}}}$ and ${\boldsymbol{\widetilde{\gamma} }}$, respectively. Then, we obtain an upper bound for the problem in \eqref{C5:P4DC} by solving the following convex optimization problem:
\begin{eqnarray}\label{C5:P4DCUpperBound} &&\underset{\mathbf{{s}},\boldsymbol{{\gamma}},\boldsymbol{\widetilde{\gamma}}}{\mino}\,\, G_1^\eta\left(\mathbf{{s}},{\boldsymbol{\widetilde{\gamma} }} \right)- G_2^\eta\left( {{{\mathbf{{s}}}_k},{\boldsymbol{\widetilde{\gamma} }_k}} \right) - {\nabla _{{\mathbf{{s}}}}}G_2^\eta{\left( {{{{\mathbf{{s}}}}_k},{{\boldsymbol{\widetilde{\gamma} }}_k}} \right)^{\mathrm{T}}}\left( {{\mathbf{{s}}} - {{{\mathbf{{s}}}}_k}} \right) \notag\\
&&- {\nabla _{\boldsymbol{\widetilde{\gamma} }}}G_2^\eta{\left( {{{{\mathbf{{s}}}}_k},{{\boldsymbol{\widetilde{\gamma} }}_k}} \right)^{\mathrm{T}}}\left( {{\boldsymbol{\widetilde{\gamma} }} - {{\boldsymbol{\widetilde{\gamma} }}_k}} \right)\\
\mbox{s.t.}&&\mbox{C1a},\;\mbox{C4},\;\mbox{C7},\;\mbox{C6},\;\mbox{C9-C12},\notag
\end{eqnarray}
where ${\nabla _{{\mathbf{{s}}}}}G_2^\eta{\left( {\mathbf{{s}_k},{{\boldsymbol{\widetilde{\gamma} }}_k}} \right)^{\mathrm{T}}}\left( {{\mathbf{{s}}} - {{{\mathbf{{s}}}}_k}} \right) =  2\eta \sum\limits_{i = 1}^{{N_{\mathrm{F}}}} {\sum\limits_{m = 1}^M { {s}_{i,m}^k } \left( {{{{s}}_{i,m}} - {s}_{i,m}^k} \right)}$ and
${\nabla _{\boldsymbol{\widetilde{\gamma} }}}G_2^\eta{\left( {{{{\mathbf{{s}}}}_k},{{\boldsymbol{\widetilde{\gamma} }}_k}} \right)^{\mathrm{T}}} \left( {{\boldsymbol{\widetilde{\gamma} }} - {{\boldsymbol{\widetilde{\gamma} }}_k}} \right) = \sum\limits_{i = 1}^{{N_{\mathrm{F}}}} {\sum\limits_{m = 1}^M {\sum\limits_{n \ne m}^M {\frac{{\gamma _{i,m}^k\left( {{\gamma _{i,m}} - \gamma _{i,m}^k} \right)}}{{\max \left( {{\beta _{i,m}},{\beta _{i,n}}} \right)}}} } }.$

\begin{table}[ptb]
\begin{algorithm} [H]                    
\caption{Suboptimal Resource Allocation Algorithm}     
\label{C5:alg2}                             
\begin{algorithmic} [1]
\small          
\STATE \textbf{Initialization}\\
Initialize the convergence tolerance $\epsilon$, the maximum number of iterations $K_\mathrm{max}$, the iteration counter $k = 1$, and the initial feasible solution $\left(\mathbf{{s}}_k,{\boldsymbol{\widetilde{\gamma} }_k} \right)$.

\REPEAT
\STATE Solve \eqref{C5:P4DCUpperBound} for a given $\left(\mathbf{{s}}_k,{\boldsymbol{\widetilde{\gamma} }_k} \right)$ and obtain the intermediate resource allcation policy $\left(\mathbf{{s}}',{\boldsymbol{\widetilde{\gamma} }'} \right)$
\STATE Set $k=k+1$ and $\left(\mathbf{{s}}_k,{\boldsymbol{\widetilde{\gamma} }_k} \right) = \left(\mathbf{{s}}',{\boldsymbol{\widetilde{\gamma} }'} \right)$
\UNTIL
$k = K_\mathrm{max}$ or $\max \left\{ {\left\| {\left(\mathbf{{s}}_k,{\boldsymbol{\widetilde{\gamma} }_k} \right) - \left(\mathbf{{s}}_{k-1},{\boldsymbol{\widetilde{\gamma} }_{k-1}} \right)} \right\|_2} \right\} \le \epsilon$
\STATE Return the optimal solution $\left(\mathbf{{s}}^*,{\boldsymbol{\widetilde{\gamma} }^*} \right) = \left(\mathbf{{s}}_k,{\boldsymbol{\widetilde{\gamma} }_k} \right)$
\end{algorithmic}
\end{algorithm}
\end{table}

Now, the problem in \eqref{C5:P4DCUpperBound} is a convex programming problem which can be easily solved by CVX \cite{cvx}.
The solution of the problem in \eqref{C5:P4DCUpperBound} provides an upper bound for the problem in \eqref{C5:P4DC}.
To tighten the obtained upper bound, we employ an iterative algorithm to generate a sequence of feasible solution successively, cf. \textbf{Algorithm} \ref{C5:alg2}.
In \textbf{Algorithm} \ref{C5:alg2}, the initial feasible solution $\left(\mathbf{{s}}_1,{\boldsymbol{\widetilde{\gamma} }_1} \right)$ is obtained via solving the problem in \eqref{C5:P4DC} with $G_{1}^{\eta}\left(\mathbf{{s}},{\boldsymbol{\widetilde{\gamma} }} \right)$
as the objective function\footnote{Note that with $G_{1}^{\eta}\left(\mathbf{{s}},{\boldsymbol{\widetilde{\gamma} }} \right)$
as the objective function, the problem in (41) is a convex problem. Therefore, the initial feasible solution can be found via existing algorithms \cite{Boyd2004} for solving convex problems with a polynomial time computational complexity.}.
The problem in \eqref{C5:P4DCUpperBound} is updated with the intermediate solution from the last iteration and is solved to generate a feasible solution for the next iteration.
Such an iterative procedure will stop when the maximum iteration number is reached or the change of optimization variables is smaller than a predefined convergence tolerance.
It has been shown that the proposed suboptimal algorithm converges a stationary point with a polynomial time computational complexity for differentiable ${G_{\eta}^{1}} \left(\mathbf{{s}},{\boldsymbol{\widetilde{\gamma} }} \right)$ and ${G_{\eta}^{2}} \left(\mathbf{{s}},{\boldsymbol{\widetilde{\gamma} }} \right)$ \cite{VucicProofDC}.
Noted that there is no guarantee that \textbf{Algorithm} \ref{C5:alg2} can
converge to a globally optimum of the problem in \eqref{C5:P2}. However, our simulation results in the next section will demonstrate its close-to-optimal performance.

In practice, different numerical methods\cite{Boyd2004} can be used to solve the convex problem in \eqref{C5:P4DCUpperBound}.
Particularly, the computational complexity of proposed suboptimal algorithm implemented by the primal-dual path-following interior-point method \cite{Nemirovski2004IPM} is given by
\begin{equation}\label{C5:Complexity}
\mathcal{O}\left(K_{\mathrm{max}}\underbrace{\left(\sqrt{8N_{\mathrm{F}}M+N_{\mathrm{F}}+M}\ln(\frac{1}{\Delta})\right)}_{\text{Number of Newton iterations}}
\underbrace{\left( \left(N_{\mathrm{F}}+M\right)(N_{\mathrm{F}}M)^2+35(N_{\mathrm{F}}M)^3\right)}_{\text{Complexity per Newton iteration}}\right),
\end{equation}
for a given solution accuracy $\Delta>0$ of the adopted numerical solver, where $\mathcal{O}(\cdot)$ is the big-O notation.
On the other hand, for the proposed optimal algorithm in \textbf{Algorithm} \ref{C5:alg1}, we note that although the B\&B algorithm is guaranteed to find the optimal solution, the required computational complexity in the worst-case is as high as that of an exhaustive search.
The computational complexity of an exhaustive search for the problem in \eqref{C5:P2} is $\mathcal{O}\Bigg(2^{N_{\mathrm{F}}M}\Big(\prod_{m=1}^{M}\frac{2^{R_m^{\mathrm{total}}}-1}{\Delta}\Big)^{N_{\mathrm{F}}}\Bigg),$
for a given solution accuracy $\Delta>0$.
Therefore, the proposed suboptimal algorithm provides a substantial saving in computational complexity compared to the exhaustive search approach.
We note that proposed suboptimal algorithm with a polynomial time computational complexity is desirable for real time implementation \cite{thomas2001introduction}.

\section{Simulation Results} \label{C5:SimulationResults}
In this section, we evaluate the performance of our proposed resource allocation algorithms through simulations.
Unless specified otherwise, the system parameters used in the our simulations are given as follows.
A single-cell with a BS located at the center with a cell size of $500$ m is considered.
The carrier center frequency is $1.9$ GHz and the bandwidth of each subcarrier is $15$ kHz.
There are $M$ users randomly and uniformly distributed between $30$ m and $500$ m, i.e., $d_i \sim U[30,\;500]$ m, and their target data rates are generated by ${R}_m^{\mathrm{total}} \sim U[1,\;10]$ bit/s/Hz.
The required outage probability of each user on each subcarrier is generated by $\delta_{i,m} \sim U[10^{-5},\;10^{-1}]$.
The user noise power on each subcarrier is $\sigma_{i,m}^2=-128$ dBm and the variance of channel estimation error is $\kappa^2_{i,m} = 0.1$, $\forall i,m$.
The 3GPP urban path loss model with a path loss exponent of $3.6$ \cite{Access2010} is adopted in our simulations.
For our proposed iterative optimal and suboptimal resource allocation algorithms, the maximum error tolerance is set to $\epsilon = 0.01$ and the penalty factors is set to a sufficiently large number such that the value of the penalty term comparable to the value of the objective function\cite{DerrickFD2016}.
The simulations shown in the sequel are obtained by averaging the results over different realizations of different user distances, target data rates, multipath fading coefficients, and outage probability requirements.

For comparison, we consider the performance of the following three baseline schemes.
For baseline 1, the conventional MC-OMA scheme is considered where each subcarrier can only be allocated to at most one user.
To support all $M$ active users and to have a fair comparison, the subcarrier spacing is changed by a factor of $\frac{N_\mathrm{F}}{M}$ to generate $M$ subcarriers.
Since our proposed scheme subsumes the MC-OMA scheme as a subcase, the minimum power consumption for baseline scheme 1 can be obtained by solving the problem in \eqref{C5:P2} by replacing C6 with $\sum\limits_{m = 1}^M {{s_{i,m}}} = 1$.
For baseline 2, a scheme of MC-NOMA with random scheduling is considered where the paired users on each subcarrier is randomly selected\cite{Wei2016NOMA}.
The minimum power consumption for baseline scheme 2 is obtained via solving the problem in \eqref{C5:P4DC} with a given random user scheduling policy $\mathbf{{s}}$ and $\eta = 0$.
For baseline 3, a scheme of MC-NOMA with an equal rate allocation is studied where the target data rate of each user is assigned equally on its allocated subcarriers\cite{Wei2016NOMA}.
Based on the equal rate allocation, the problem in \eqref{C5:P2} can be transformed to a mixed integer linear program, which can be solved by standard numerical integer program solvers, such as Mosek\cite{Mosek2010}, via some non-polynomial time algorithms.

\subsection{Convergence of Proposed Algorithms}

\begin{table}[ptb]
\center\small
\caption{Optimal Solution of \eqref{C5:P1} for A Single Channel Realization with $N_{\mathrm{F}} = 4$ and $M = 7$ in Figure \ref{C5:CaseI_Convergence}}
  \begin{tabular}{ccccccccc}
  \hline
    Subcarrier index & \multicolumn{2}{c}{1} & \multicolumn{2}{c}{2} & \multicolumn{2}{c}{3} & \multicolumn{2}{c}{4} \\ \hline
    Paired user index & 2 & 5 & 5 & 7 & 1 & 4 & 3 & 6 \\
    $\beta_{i,m}$ & 783.39 & 39.99 & 30.92 & 520.27 & 8.57 & 269.80 & 9.59 & 1349.80 \\
    $R_{i,m}$ (bit/s/Hz)& 8 & 2.03 & 4.97 & 3 & 1 & 7 & 3 & 4\\
    $p_{i,m}$ (dBm)& 25.13 & 30.35 & 31.42 & 11.29 & 27.69 & 26.73 & 29.07 & 10.46 \\
    SIC decoding   & \Checkmark & - & - & \Checkmark & - & \Checkmark & - & \Checkmark \\
    \hline
  \end{tabular}
\label{C5:PairedUsers}
\end{table}

Figure \ref{C5:CaseI_Convergence} illustrates the convergence of our proposed optimal and suboptimal resource allocation algorithms for different values\footnote{Since the computational complexity of the B\&B approach is high, we adopt small values for $M$ and $N_\mathrm{F}$ to compare the gap between the proposed optimal algorithm and the suboptimal algorithm. We note that our proposed suboptimal resource allocation algorithm is computational efficient compared to the optimal one, which can apply to scenarios with more users and subcarriers, such as the simulation case in Section \ref{C5:SimulationD}. In fact, the number of subcarriers in this chapter can be viewed as the number of resource blocks in LTE standard\cite{sesia2015lte}, where the user scheduling are performed on resource block level.} of $N_{\mathrm{F}}$ and $M$.
For the first case with $N_{\mathrm{F}} = 4$ and $M = 7$, we observe that the optimal algorithm generates a non-increasing upper bound and a non-decreasing lower bound when the number of iterations increases.
Besides, the optimal solution is found when the two bounds meet after $600$ iterations on average.
More importantly, our proposed suboptimal algorithm can converge to the optimal value within $80$ iterations on average.
For the second case with $N_{\mathrm{F}} = 8$ and $M = 15$, it can be observed that the optimal algorithm converges after $4500$ iterations on average.
In fact, the computational complexity of the proposed optimal algorithm increases exponentially w.r.t. the number of optimization variables, and thus the convergence speed is relatively slow for a larger problem size.
However, it can be observed that the suboptimal algorithm converges faster when the numbers of subcarriers and users increase, and it can achieve the optimal value with only 25 iterations in the second case on average.
This is because the time-sharing condition \cite{Yu2006Dual,DerrickLimitedBackhaul} is satisfied with a larger number of subcarriers.
In this case, the optimization problem in \eqref{C5:P4DC} tends to be convexified leading to a higher chance of holding strong duality \cite{DerrickLimitedBackhaul}.
Further, our proposed suboptimal scheme is able to exploit the ``convexity" inherent in large-scale optimization problem via the successive convex approximation while the optimal one cannot with relying on feasible set partitioning, cf. Figure \ref{C5:BnB:a}.
Therefore, the proposed suboptimal algorithm converges faster with a larger number of subcarriers in the second case.
To obtain further insight, Table \ref{C5:PairedUsers}  shows the solution of our original formulated problem in \eqref{C5:P1} via following the proposed optimal SIC decoding policy for a single channel realization with $N_{\mathrm{F}} = 4$ and $M = 7$ in Figure \ref{C5:CaseI_Convergence}.
The tick denotes that the user is selected to perform SIC.
It can be observed that two users with distinctive CNR outage thresholds are preferred to be paired together (users 1 and 4, users 3 and 6).
Also, the users (users 2, 4, 6, 7) with higher CNR outage thresholds are selected to perform SIC and only a fraction of power are allocated to them owing to their better channel conditions or non-stringent QoS requirements.
These observations are in analogy to the conclusions for the case of NOMA with perfect CSIT, where users with distinctive channel gains are more likely to be paired, more power is allocated to the weak user, and the strong user is selected to perform SIC\cite{Dingtobepublished}.
Therefore, the defined CNR outage threshold in this chapter serves as a metric for determining the optimal SIC decoding policy and resource allocation design for MC-NOMA systems with imperfect CSIT.
\begin{figure}[t]
\centering
\includegraphics[width=4.5in]{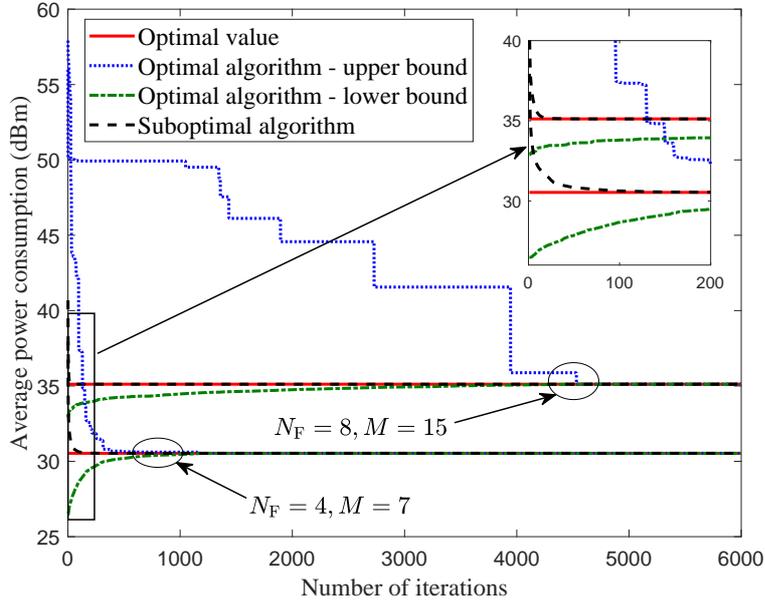}
\caption{Convergence of the proposed optimal and suboptimal resource allocation algorithms.}
\label{C5:CaseI_Convergence}
\end{figure}

\subsection{Power Consumption versus Target Data Rate}
\begin{figure}[t]
\centering
\includegraphics[width=4.5in]{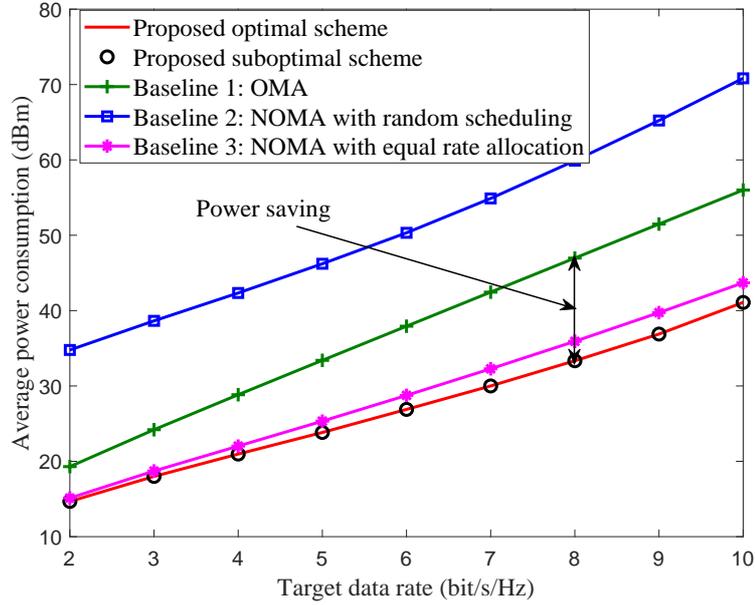}
\caption{Average power consumption (dBm) versus target data rate with $N_\mathrm{F} = 8$ and $M = 12$. The power saving gain achieved by the proposed schemes over baseline scheme 1 is denoted by the double-side arrow.}
\label{C5:CaseII_TargetRate}
\end{figure}
In Figure \ref{C5:CaseII_TargetRate}, we investigate the power consumption versus the target data rate with $N_{\mathrm{F}} = 8$ and $M = 12$.
In this simulation, all the users have an identical target data rates ${R}_m^{\mathrm{total}}$ and they are set to be from $1$ bit/s/Hz to $10$ bit/s/Hz.
The three baseline schemes are also included for comparison.
As can be observed from Figure \ref{C5:CaseII_TargetRate}, the power consumption increases monotonically with the target data rate for all the schemes.
Clearly, the BS is required to transmit with a higher power to support a more stringent data rate requirement.
Besides, our proposed optimal and suboptimal resource allocation schemes provide a significant power reduction compared to the baseline schemes.
Specifically, baseline scheme 1 requires a higher power consumption (about $3\sim15$ dB) compared to proposed schemes.
This is attributed to the fact that the proposed NOMA schemes are able to distribute the required target data rate of each user over multiple subcarriers efficiently since they admit multiplexing multiple users on each subcarrier.
For OMA schemes, the power consumption increases exponentially with the target data rate requirement since only one subcarrier is allocated to each user in the overloaded scenario.
As a result, the performance gain of NOMA over OMA in terms of power consumption becomes larger when the target data rate increases.
For baseline scheme 2, it can be observed that NOMA is very sensitive to the user scheduling strategy where NOMA with suboptimal random scheduling even consumes more power than that of OMA schemes.
Therefore, a cautiously design of the user scheduling strategy for MC-NOMA systems is fundamentally important in practice.
For baseline scheme 3, the power consumption is slightly higher than that of the proposed schemes but the performance gap is enlarged with an increasing target data rate.
In fact, baseline scheme 3 shares the target data rate equally across the allocated subcarriers of a user, which can realize most of the performance gain of NOMA in low target data rate regimes.
However, our proposed schemes consume less transmit power for high target data rate by exploiting the frequency diversity, where higher rates are allocated to the subcarriers with better channel conditions.
{We can observe that baseline scheme 3 outperforms substantially baseline scheme 2 and is closest to the proposed scheme. This is because user scheduling is more important than rate allocation for exploiting the performance gain of NOMA. For instance, when the scheduled two NOMA users possess the equal channel gain, a significant inter-user interference exists and thus degrades the system power efficiency. Therefore, baseline scheme 3 performs user scheduling with a fixed equal rate allocation strategy, which is substantially better than that of baseline scheme 2, where a random user scheduling strategy is adopted.}

\subsection{Power Consumption versus Channel Estimation Error}
Figure \ref{C5:CaseIV_ChannelError} depicts the power consumption versus the variance of channel estimation error with $N_{\mathrm{F}} = 8$, $M = 12$, and ${R}_m^{\mathrm{total}} \sim U[1,\;10]$ bit/s/Hz.
The variance of channel estimation error $\kappa^2_{i,m}$ increasing from $0$ to $0.5$, where $\kappa^2_{i,m} = 0$ denotes that perfect CSIT is available for resource allocation.
It can be observed that the power consumption increases monotonically with $\kappa^2_{i,m}$ for all the schemes.
It is expected that a higher transmit power is necessary to cope with a larger channel uncertainty to satisfy its required outage probability.
Particularly, for our proposed schemes, baseline scheme 1, and baseline scheme 3, a $6$ dB of extra power is required to handle the channel estimation error when $\kappa^2_{i,m}$ increases from $0$ to $0.5$.
However, our proposed schemes are the most power-efficient among all the schemes.
Furthermore, compared to Figure \ref{C5:CaseII_TargetRate} with identical target data rates, the gap of power consumption between baseline scheme 3 and our proposed schemes at $\kappa^2_{i,m} = 0.1$ is enlarged.
Also, the performance gain of our proposed schemes over baseline scheme 1 is larger than that of Figure \ref{C5:CaseII_TargetRate}.
In fact, our proposed schemes can exploit the heterogeneity of the target data rates via users multiplexing and rate allocation. Particularly, users multiplexing of NOMA enables rate splitting onto multiple subcarriers in the overloaded scenario. Moreover, instead of equal rate allocation, our proposed schemes are more flexible to combat the large dynamic range of target data rates via exploiting the frequency diversity. Therefore, for random target data rate, our proposed schemes are more efficient to reduce the power consumption.
\begin{figure}[t]
\centering
\includegraphics[width=4.5in]{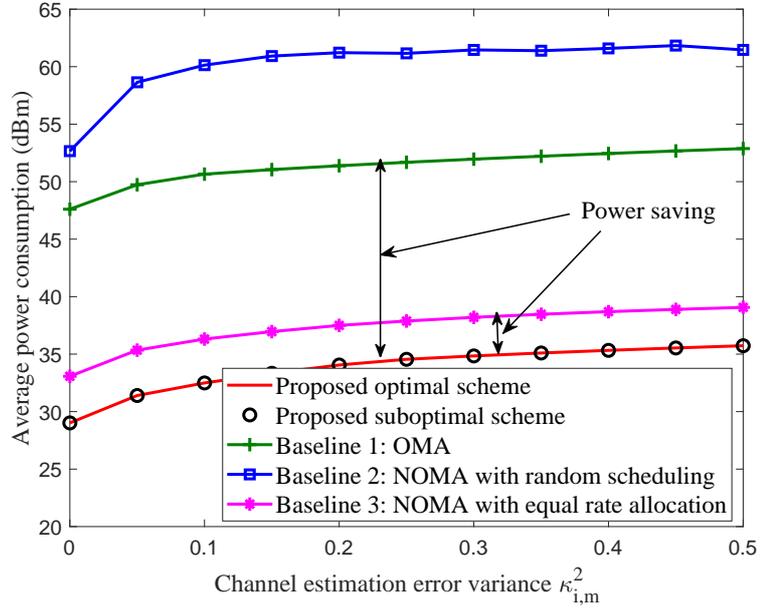}
\caption{Average power consumption (dBm) versus channel estimation error variance with $N_\mathrm{F} = 8$ and $M = 12$. The power saving gain achieved by the proposed schemes over baseline scheme 1 and baseline scheme 3 are denoted by the double-side arrows.}
\label{C5:CaseIV_ChannelError}
\end{figure}

\subsection{Power Consumption versus Number of Users}\label{C5:SimulationD}
\begin{figure}[t]
\centering
\includegraphics[width=4.5in]{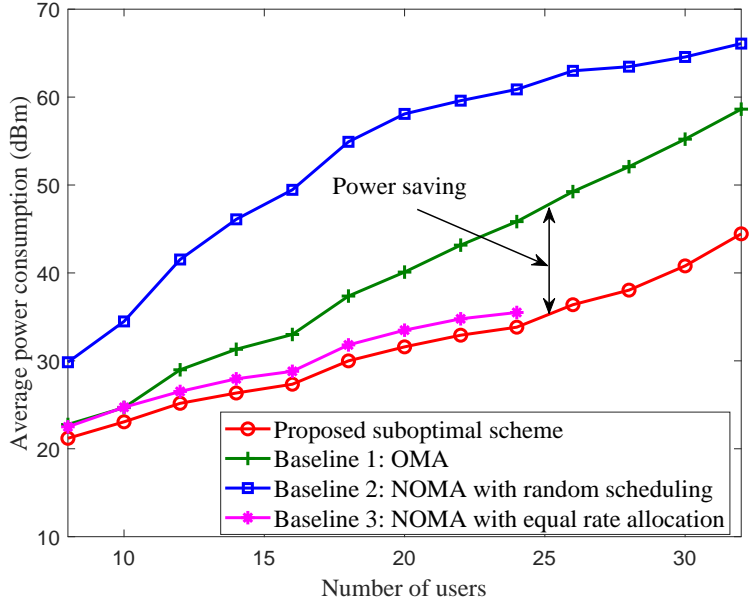}
\caption{Average power consumption (dBm) versus number of users with $N_\mathrm{F} = 16$ and ${R}_m^{\mathrm{total}} = 8$ bit/s/Hz. The power saving gain achieved by the proposed scheme over baseline scheme 1 is denoted by the double-side arrow.}
\label{C5:CaseIII_NumUser}
\end{figure}
Figure \ref{C5:CaseIII_NumUser} illustrates the power consumption versus the number of users with $N_\mathrm{F} = 16$ and ${R}_m^{\mathrm{total}} = 8$ bit/s/Hz, $\forall m$.
The proposed optimal scheme is not included here due to its exponentially computational complexity.
We observe that our proposed scheme is also applicable to underloaded systems with $N_\mathrm{F} > M$, and it is more power-efficient than that of the OMA scheme in both overloaded and underloaded systems.
Furthermore, it can be seen that the power consumption increases with the number of users for all the considered schemes.
This is because a higher power consumption is required when there are more users requiring stringent QoSs.
Besides, our proposed scheme is the most power-efficient among all the schemes.
In particular, compared to the proposed suboptimal scheme, baseline scheme 2 requires a substantially higher power consumption since NOMA requires a careful design of user scheduling to cope with the inherent interference.
On the contrary, baseline scheme 3 needs a slightly higher power than the proposed suboptimal scheme.
As mentioned before, baseline scheme 3 can exploit most of the performance gain of NOMA via enabling multiuser multiplexing with equal rate allocation.
{Note that the computational complexity of baseline scheme 3 is extremely high when the number of users larger than 25. Thus, the simulation results for baseline scheme 3 with M>25 is not shown in Fig \ref{C5:CaseIII_NumUser}.}

Compared to baseline scheme 1, we can observe that the power saving brought by our proposed suboptimal scheme increases with the number of users.
This can be attributed to the spectral efficiency gain \cite{Ding2015b} and multiuser diversity gain \cite{Sun2016Fullduplex} of NOMA.
On the one hand, NOMA allows multiuser multiplexing on each subcarrier, which provides higher spectral efficiency than that of OMA.
As a result, a smaller amount of power is able to support the NOMA users' QoS requirements than the users using OMA.
Besides, the proposed scheme can efficiently exploit the spectral efficiency gain to reduce the power consumption compared to baseline scheme 1.
In particular, with an increasing number of users, the spectrum available to each user in baseline scheme 1 becomes less due to the exclusive subcarrier allocation constraint, while relatively more spectrum is available in the proposed MC-NOMA scheme owing to the power domain multiplexing.
Consequently, the proposed scheme can save more power compared to the baseline scheme 1.
On the other hand, NOMA possesses a higher capability in exploiting the multiuser diversity than that of OMA.
Particularly, instead of scheduling a single user on each subcarrier in OMA, NOMA enables multiuser multiplexing on each subcarrier, which promises more degrees of freedom for user selection and power allocation to exploit the multiuser diversity.
Therefore, our proposed NOMA scheme with the suboptimal resource allocation design can effectively utilize the multiuser diversity to reduce the total transmit power.
In fact, in the considered MC-NOMA systems, the multiuser diversity comes from the heterogeneity of CNR outage thresholds.
The CNR outage thresholds become more heterogeneous for an increasing number of users. Thus, the power saving gain brought by the proposed NOMA scheme over the OMA scheme increases with the number of users.
\subsection{Outage Probability}
\begin{figure}[ptb]
\centering
\subfigure[Outage probability for all the users with $\kappa^2_{i,m} = 0.1$.]
{\label{C5:CaseV_OutageProbability:a} 
\includegraphics[width=4.5in]{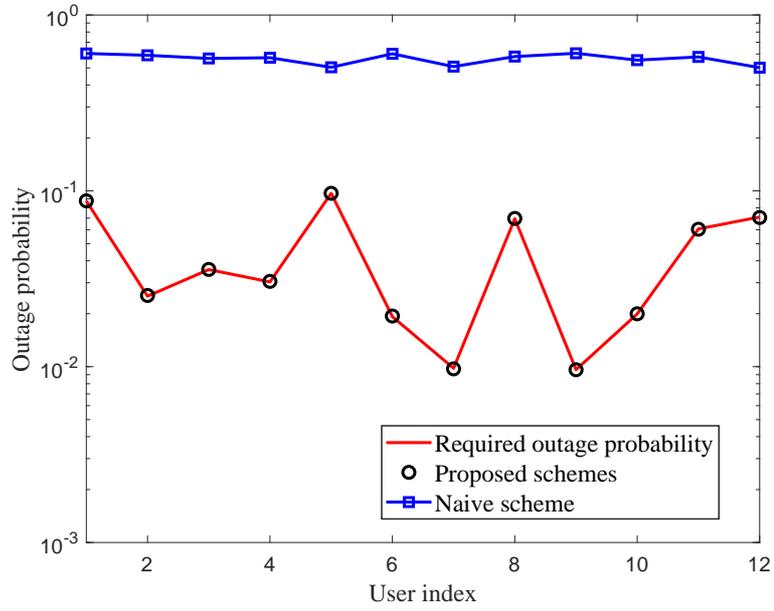}}
\subfigure[Outage probability versus $\kappa^2_{i,m}$ for user 9.]
{\label{C5:CaseV_OutageProbability:b} 
\includegraphics[width=4.5in]{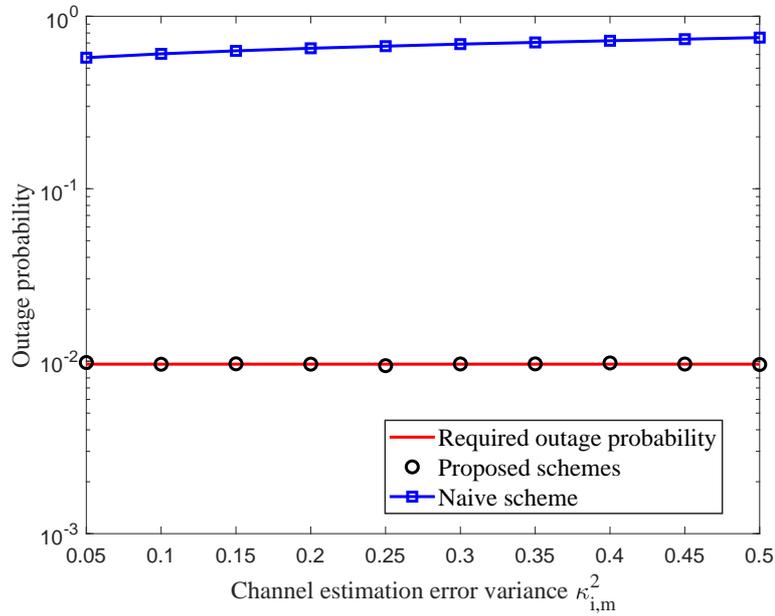}}
\caption{Outage probability of our proposed scheme and a naive scheme with $N_\mathrm{F} = 8$ and $M = 12$.}
\label{C5:CaseV_OutageProbability}%
\end{figure}

In this simulation, we introduce a naive scheme where the resource allocation is performed by treating the estimated channel coefficient $\hat{h}_{i,m}$ as perfect CSIT.
Figures \ref{C5:CaseV_OutageProbability:a} and \ref{C5:CaseV_OutageProbability:b} compare the outage probability for our proposed schemes and the naive scheme with $N_\mathrm{F} = 8$ and $M = 12$.
Figure \ref{C5:CaseV_OutageProbability:a} illustrates the outage probability for all the users with channel estimation error variance $\kappa^2_{i,m} = 0.1$.
It can be observed that our proposed schemes can satisfy the required outage probability of all the users while the naive scheme leads to a significantly higher outage probability than the required.
Figure \ref{C5:CaseV_OutageProbability:b} shows the outage probability versus $\kappa^2_{i,m}$ for user 9.
It can be observed that our proposed scheme can always satisfy the required outage probability, despite $\kappa^2_{i,m}$ increases from $0.05$ to $0.5$.
In contrast, the outage probability for the naive scheme increases with $\kappa^2_{i,m}$ due to the deteriorated quality of channel estimates.
In fact, our resource allocation design can guarantee the required outage probability, at the expense of a slightly higher transmit power compared to the case of perfect CSIT, cf. Figure \ref{C5:CaseIV_ChannelError} for $\kappa^2_{i,m} = 0$.

\section{Summary}
In this chapter, we studied the power-efficient resource allocation algorithm design for MC-NOMA systems. The resource allocation algorithm design was formulated as a non-convex optimization problem and it took into account the imperfect CSIT and heterogenous QoS requirements. We proposed an optimal resource allocation algorithm, in which the optimal SIC decoding policy was determined by the CNR outage threshold. Furthermore, a suboptimal resource allocation scheme was proposed based on D.C. programming, which can converge to a close-to-optimal solution rapidly. Simulation results showed that our proposed resource allocation schemes provide significant transmit power savings and enhanced robustness against channel uncertainty via exploiting the heterogeneity of channel conditions and QoS requirements of users in MC-NOMA systems.

\chapter{Multi-Beam NOMA for Hybrid mmWave Systems}\label{C6:chapter6}

\section{Introduction}
The previous chapters focused on non-orthogonal multiple access (NOMA) in microwave communications.
However, the spectrum resources is very scarce in microwave frequency band, which creates a fundamental bottleneck for capacity improvement and sustainable system evolution\cite{Rappaport2013,Andrews2014}.
As a result, higher frequency bands with a wider available frequency bandwidth, such as mmWave bands \cite{Rappaport2013,ZhaommWaveMulticell}, are highly desirable for future wireless networks.
From this chapter, we start to explore the application of NOMA in millimeter wave (mmWave) communications.
In this chapter, we propose a multi-beam NOMA scheme for hybrid mmWave communication systems.

As mentioned in the Chapter 1, in hybrid mmWave systems, the limited number of RF chains restricts the number of users that can be served simultaneously via orthogonal multiple access (OMA), i.e., one RF chain can serve at most one user.
In particular, in overloaded scenarios, i.e., the number of users is larger than the number of RF chains, OMA fails to accommodate all the users.
This motivates us attempting to overcome the limitation incurred by the small number of RF chains via introducing the concept of NOMA into hybrid mmWave systems.
In particular, the users within the same analog beam can be clustered as a NOMA group\cite{Ding2017RandomBeamforming} and share a RF chain, which is named as single-beam mmWave-NOMA scheme in this chapter.
However, due to the narrow analog beamwidth in hybrid mmWave systems, the number of users that can be served concurrently by the single-beam mmWave-NOMA scheme is very limited and it depends on the users' angle-of-departure (AOD) distribution.
This reduces the potential performance gain brought by NOMA in hybrid mmWave systems.
Therefore, we propose a multi-beam NOMA scheme for hybrid mmWave communication systems, which provides a higher flexibility in user clustering and thus can efficiently exploit the potential multi-user diversity.

In this chapter, we propose a multi-beam NOMA framework for a multiple RF chain hybrid mmWave system and study the resource allocation design for the proposed multi-beam mmWave-NOMA scheme.
Specifically, all the users are clustered into several NOMA groups and each NOMA group is associated with a RF chain.
Then, multiple analog beams are formed for each NOMA group to facilitate downlink NOMA transmission by exploiting the channel sparsity and the large-scale antenna array at the BS.
To this end, a novel \emph{beam splitting} technique is proposed, which dynamically divides the whole antenna array associated with a RF chain into multiple subarrays to form multiple beams.
Compared to the conventional single-beam mmWave-NOMA scheme\cite{Ding2017RandomBeamforming,Cui2017Optimal, WangBeamSpace2017}, our proposed multi-beam NOMA scheme offers a higher flexibility in serving multiple users with an arbitrary AOD distribution.
As a result, our proposed scheme can form more NOMA groups, which facilitates the exploitation of multi-user diversity to achieve a higher spectral efficiency.

To improve the performance of our proposed multi-beam mmWave-NOMA scheme, we further propose a two-stage resource allocation design.
In the first stage, given the equal power allocation and applying an identity matrix as an initial and simple digital precoder, the joint design of user grouping and antenna allocation is formulated as an optimization problem to maximize the conditional system sum-rate.
We recast the formulated problem as a coalition formation game \cite{SaadCoalitionalGame,SaadCoalitional2012,WangCoalitionNOMA,SaadCoalitionOrder,Han2012} and develop an algorithmic solution for user grouping and antenna allocation.
In the second stage, based on the obtained user grouping and antenna allocation strategy, the power allocation problem is formulated to maximize the system sum-rate by taking into account the quality of service (QoS) requirement.
A suboptimal power allocation algorithm is obtained based on the difference of convex (D.C.) programming technique.
Compared to the optimal benchmarks in the two stages, we show that our proposed resource allocation design can achieve a close-to-optimal performance in each stage in terms of the system sum-rate.
Simulation results demonstrate that the proposed multi-beam mmWave-NOMA scheme can achieve a higher spectral efficiency than that of the conventional single-beam mmWave-NOMA scheme and mmWave-OMA scheme.

\section{System Model}

\begin{figure}[t]
\centering
\subfigure[The proposed multi-beam mmWave-NOMA scheme.]
{\label{C6:MultiBeamNOMA:a} 
\includegraphics[width=0.95\textwidth]{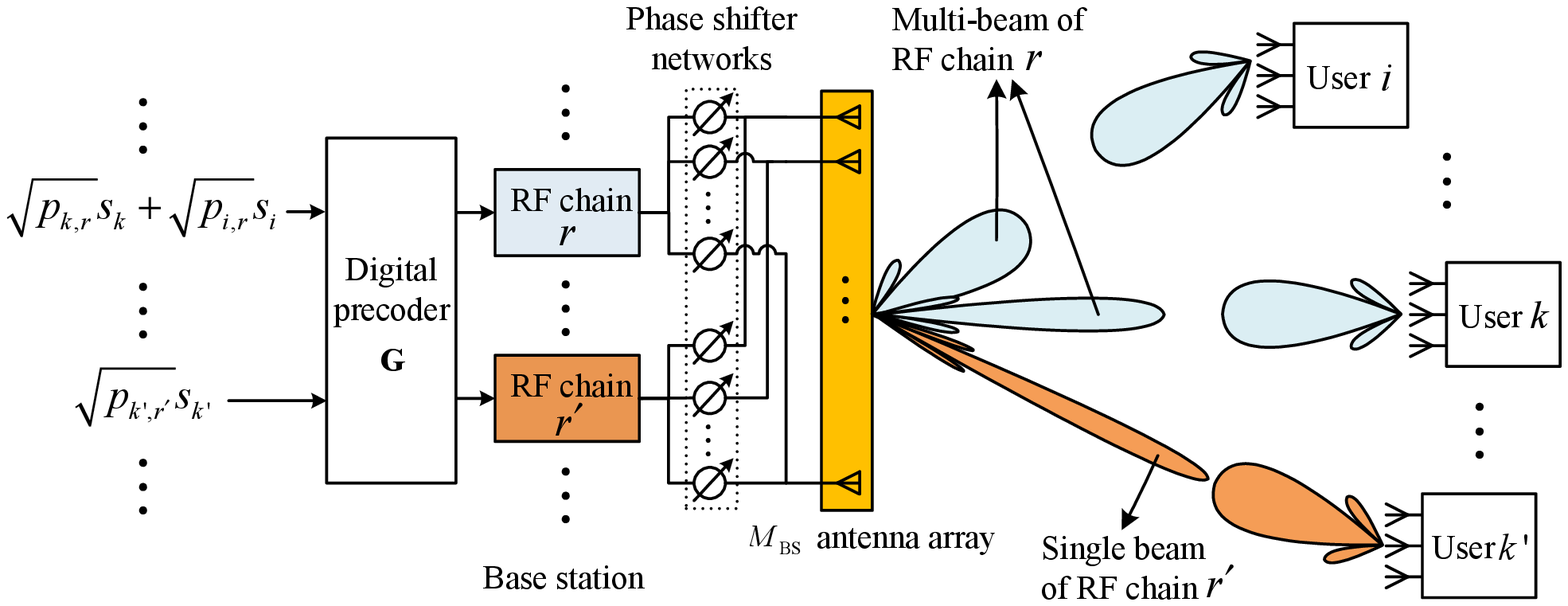}}
\subfigure[Hybrid structure receiver at users.]
{\label{C6:MultiBeamNOMA:b} 
\includegraphics[width=0.5\textwidth]{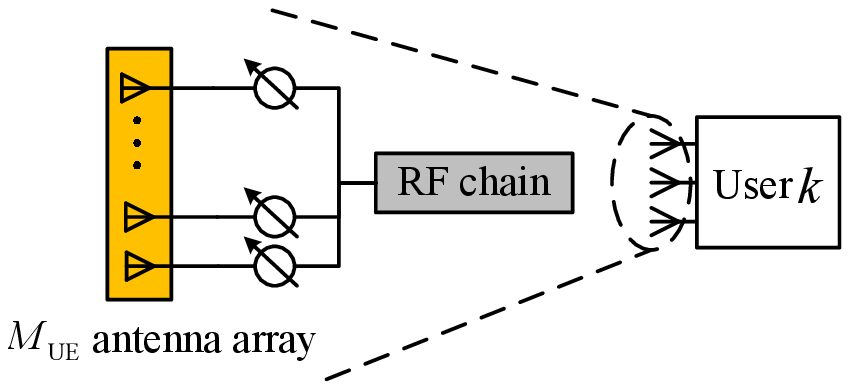}}
\caption{System model of the proposed multi-beam NOMA scheme for hybrid mmWave systems.}
\label{C6:MultiBeamNOMA}%
\end{figure}

We consider the downlink hybrid mmWave communication in a single-cell system with one base station (BS) and $K$ users, as shown in Figure \ref{C6:MultiBeamNOMA}.
In this work, we adopt a fully access hybrid structure\footnote{To simplify the presentation in this chapter, the proposed scheme and resource allocation design are based on the fully access hybrid structures as an illustrative example, while they can be easily extended to subarray hybrid structures.} to illustrate the proposed multi-beam NOMA framework for hybrid mmWave systems\cite{zhao2017multiuser,GaoSubarray,lin2016energy}.
In particular, the BS is equipped with $M_{\mathrm{BS}}$ antennas but only connected to $N_{\mathrm{RF}}$ RF chains with $M_{\mathrm{BS}} \gg N_{\mathrm{RF}}$.
We note that each RF chain can access all the $M_{\mathrm{BS}}$ antennas through $M_{\mathrm{BS}}$ phase shifters, as shown in Figure \ref{C6:MultiBeamNOMA:a}.
Besides, each user is equipped with $M_{\mathrm{UE}}$ antennas connected via a single RF chain, as shown in Figure \ref{C6:MultiBeamNOMA:b}.
We employ the commonly adopted uniform linear array (ULA) structure\cite{zhao2017multiuser} at both the BS and user terminals.
We assume that the antennas at each transceiver are deployed and separated with equal-space of half wavelength with respect to the neighboring antennas.
In this work, we focus on the overloaded scenario with $K \ge N_{\mathrm{RF}}$, which is fundamentally different from existing works in hybrid mmWave communications, e.g. \cite{zhao2017multiuser,GaoSubarray,lin2016energy}.
In fact, our considered system model is a generalization of that in existing works\cite{zhao2017multiuser,GaoSubarray,lin2016energy}.
For example, the considered system can be degenerated to the conventional hybrid mmWave systems when $K \le N_{\mathrm{RF}}$ and each NOMA group contains only a single user.

We use the widely adopted the Saleh-Valenzuela model \cite{WangBeamSpace2017} as the channel model for our considered mmWave communication systems.
In this model, the downlink channel matrix of user $k$, ${{\bf{H}}_k} \in \mathbb{C}^{{ M_{\mathrm{UE}} \times M_{\mathrm{BS}}}}$, can be represented as
\begin{equation}\label{C6:ChannelModel1}
{{\bf{H}}_k} = {\alpha _{k,0}}{{\bf{H}}_{k,0}} + \sum\limits_{l = 1}^L {{\alpha _{k,l}}{{\bf{H}}_{k,l}}},
\end{equation}
where ${\mathbf{H}}_{k,0} \in \mathbb{C}^{ M_{\mathrm{UE}} \times M_{\mathrm{BS}} }$ is the line-of-sight (LOS) channel matrix between the BS and user $k$ with ${\alpha _{k,0}}$ denoting the LOS complex path gain, ${\mathbf{H}}_{k,l} \in \mathbb{C}^{ M_{\mathrm{UE}} \times M_{\mathrm{BS}} }$ denotes the $l$-th non-line-of-sight (NLOS) path channel matrix between the BS and user $k$ with ${\alpha _{k,l}}$ denoting the corresponding NLOS complex path gains, $1 \le l \le L$, and $L$ denoting the total number of NLOS paths\footnote{If the LOS path is blocked, we treat the strongest NLOS path as ${\mathbf{H}}_{k,0}$ and all the other NLOS paths as ${\mathbf{H}}_{k,l}$.}.
In particular, ${\mathbf{H}}_{k,l}$, $\forall l \in \{0,\ldots,L\}$, is given by
\begin{equation}
{\mathbf{H}}_{k,l} = {\mathbf{a}}_{\mathrm{UE}} \left(  \phi _{k,l} \right){\mathbf{a}}_{\mathrm{BS}}^{\mathrm{H}}\left( \theta _{k,l} \right),
\end{equation}
with ${\mathbf{a}}_{\mathrm{BS}}\left( \theta _{k,l} \right) = \left[ {1, \ldots ,{e^{ - j{\left({M_{{\mathrm{BS}}}} - 1\right)}\pi  \cos \theta _{k,l} }}}\right]^{\mathrm{T}}$
denoting the array response vector \cite{van2002optimum} for the AOD of the $l$-th path ${\theta _{k,l}}$ at the BS and ${\mathbf{a}}_{\mathrm{UE}}\left( \phi _{k,l} \right) = \left[ 1, \ldots ,{e^{ - j{\left({M_{{\mathrm{UE}}}} - 1\right)}\pi \cos \phi _{k,l} }} \right] ^ {\mathrm{T}}$ denoting the array response vector for the angle-of-arrival (AOA) of the $l$-th path ${\phi _{k,l}}$ at user $k$.
Besides, we assume that the LOS channel state information (CSI), including the AODs ${\theta _{k,0}}$ and the complex path gains ${{\alpha _{k,0}}}$ for all the users, is known at the BS owing to the beam tracking techniques \cite{BeamTracking}.
For a similar reason, the AOA ${\phi _{k,0}}$ is assumed to be known at user $k$, $\forall k$.

\section{The Multi-beam NOMA scheme}

\begin{figure}[t]
\centering
\includegraphics[width=4.5in]{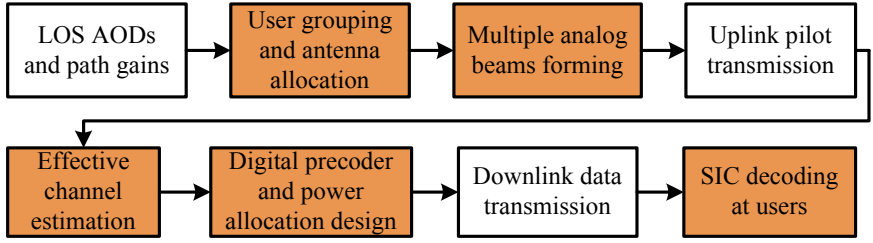}
\caption{The proposed multi-beam NOMA framework for hybrid mmWave systems. The shaded blocks are the design focuses of this chapter.}
\label{C6:MultiBeamNOMAScheme}
\end{figure}

The block diagram of the proposed multi-beam NOMA framework for the considered hybrid mmWave system is shown in Figure \ref{C6:MultiBeamNOMAScheme}.
Based on the LOS CSI, we cluster users as multiple NOMA groups and perform antenna allocation among users within a NOMA group.
Then, we control the phase shifters based on the proposed beam splitting technique to generate multiple analog beams.
Effective channel is estimated at the BS based on the uplink pilot transmission and the adopted analog beamformers.
Then, according to the effective channel, the digital precoder and power allocation are designed for downlink data transmission.
Since superposition transmission is utilized within a NOMA group, SIC decoding will be performed at the strong users as commonly adopted in traditional downlink NOMA protocol\cite{WeiSurvey2016}.
The shaded blocks are the design focuses of this chapter, which are detailed in the sequel.

\subsection{User Grouping and Antenna Allocation}
Based on the LOS AODs of all the users $\left\{ {\theta _{1,0}}, \ldots ,{\theta _{K,0}} \right\}$ and their path gains $\left\{ {\alpha _{1,0}}, \ldots ,{\alpha _{K,0}} \right\}$, we first perform user grouping and antenna allocation.
In particular, multiple users might be allocated with the same RF chain to form a NOMA group.
We can define the user scheduling variable as follows:
\begin{equation}
{u_{k,r}} = \left\{ {\begin{array}{*{20}{c}}
1,&{{\mathrm{user}}\;k\;{\mathrm{is}}\;{\mathrm{allocated}}\;{\mathrm{to}}\;{\mathrm{RF}}\;{\mathrm{chain}}\;r},\\
0,&{{\mathrm{otherwise}}}.
\end{array}} \right.
\end{equation}
To reduce the computational complexity and time delay of SIC decoding within the NOMA group, we assume that at most $2$ users\footnote{This assumption is commonly adopted in the NOMA literature, e.g. \cite{WeiTCOM2017,Sun2016Fullduplex}, to reduce the computational complexity and signal processing delay incurred in SIC decoding within the NOMA group.
However, the proposed algorithms in this chapter can be generalized to the scenarios with more than two users in each NOMA group at the expense of a more involved notations.} can be allocated with the same RF chain, i.e., $\sum\limits_{k = 1}^K {{u_{k,r}}}  \le 2$, $\forall r$.
In addition, due to the limited number of RF chains in the considered hybrid systems, we assume that each user can be allocated with at most one RF chain\footnote{Since each user terminal only equips with a single RF chain, allocating more than one RF chains to a user can only provide a limited power gain.}, i.e., $\sum\limits_{r = 1}^{{N_{{\mathrm{RF}}}}} {{u_{k,r}}}  \le 1$, $\forall k$.
The beam splitting technique proposed in this chapter involves antenna allocation within each NOMA group.
Denote $M_{k,r}$ as the number of antennas allocated to user $k$ associated with RF chain $r$, we have $\sum\limits_{k = 1}^K u_{k,r}{M_{k,r}}  \le M_{\mathrm{BS}}$, $\forall r$.

\subsection{Multiple Analog Beams with Beam Splitting}
In the conventional single-beam mmWave-NOMA schemes\cite{Ding2017RandomBeamforming,Cui2017Optimal, WangBeamSpace2017}, there is only a single beam for each NOMA group.
However, as mentioned before, the beamwidth is usually very narrow in mmWave frequency bands and a single beam can rarely cover multiple NOMA users, which restricts the potential performance gain brought by NOMA.
Therefore, we aim to generate multiple analog beams for each NOMA group, wherein each beam is steered to a user within the NOMA group.
To this end, we propose the beam splitting technique, which separates adjacent antennas to form multiple subarrays creating an analog beam via each subarray\footnote{In this chapter, the proposed beam splitting technique splits the original single analog beam into two beams since the NOMA group size is limited as two. For a more general case with a larger NOMA group size, the original analog beam can be split into more beams in a similar way.}.
For instance, in Figure \ref{C6:MultiBeamNOMA:a}, user $k$ and user $i$ are scheduled to be served by RF chain $r$ at the BS, where their allocated number of antennas are ${M_{k,r}}$ and ${M_{i,r}}$, respectively, satisfying ${M_{k,r}} + {M_{i,r}} \le {M_{\mathrm{BS}}}$.
Then, the analog beamformer for the ${M_{k,r}}$ antennas subarray is given by
\begin{equation}\label{C6:SubarayWeight1}
{\mathbf{w}}\left( {M_{k,r}},{\theta _{k,0}} \right)=
{\frac{1}{{\sqrt {{M_{BS}}} }}{\left[ {1, \ldots ,{e^{j\left( {{M_{k,r}} - 1} \right)\pi \cos  {{\theta _{k,0}}} }}} \right]} ^ {\mathrm{T}}},
\end{equation}
and the analog beamformer for the ${M_{i,r}}$ antennas subarray is given by
\begin{equation}\label{C6:SubarayWeight2}
{\mathbf{w}}\left( {M_{i,r},{\theta _{i,0}}} \right) =
{\frac{{e^{j{M_{k,r}}\pi \cos {{\theta _{k,0}}}}}}{\sqrt{M_{BS}}}{\left[ {1, \ldots ,{e^{j\left( {{M_{i,r}} - 1} \right)\pi \cos {{\theta _{i,0}}} }}} \right]} ^ {\mathrm{T}}},
\end{equation}
where $j$ is the imaginary unit, ${\mathbf{w}}\left( M_{k,r},{\theta _{k,0}} \right) \in \mathbb{C}^{ M_{k,r} \times 1}$, and ${\mathbf{w}}\left( M_{i,r},\theta _{i,0} \right) \in \mathbb{C}^{ M_{i,r} \times 1}$.
The same normalized factor $\frac{1}{\sqrt {M_{BS}} }$ is introduced in \eqref{C6:SubarayWeight1} and \eqref{C6:SubarayWeight2} to fulfill the constant modulus constraint \cite{Sohrabi2016} of phase shifters.
As a result, the analog beamformer for RF chain $r$ is given by
\begin{equation}\label{C6:ArrayWeight1}
{\mathbf{w}}_r = \left[ {
{{\mathbf{w}}^{\mathrm{T}}\left( {M_{k,r}},{\theta _{k,0}} \right)}, {{{\mathbf{w}}^{\mathrm{T}}}\left( {M_{i,r}},{\theta _{i,0}} \right)}
} \right]^{\mathrm{T}}.
\end{equation}
Note that the phase shift term ${e^{j{M_{k,r}}\pi \cos  {{\theta _{k,0}}} }}$ in \eqref{C6:SubarayWeight2} is introduced to synchronize the phases between two subarrays.
In particular, when user $k$ and user $i$ have the same AOD, i.e., ${\theta _{k,0}} = {\theta _{i,0}}$, the analog beamformer in \eqref{C6:ArrayWeight1} is degenerated to the single-beam case, i.e., ${\mathbf{w}}_r = {\frac{1}{{\sqrt {{M_{BS}}} }}{\left[ {1, \ldots ,{e^{j\left( {{M_{BS}} - 1} \right)\pi \cos  {{\theta _{k,0}}} }}} \right]} ^ {\mathrm{T}}}$.
Besides, since the phase shift term is imposed on all the elements in the ${M_{i,r}}$ antennas subarray, it does not change the beam pattern response for the ${M_{i,r}}$ antennas subarray.
On the other hand, if user $k'$ is allocated with RF chain $r'$ exclusively, then all the ${M_{\mathrm{BS}}}$ antennas of RF chain $r'$ will be allocated to user $k'$.
In this situation, the analog beamformer for user $k'$ is identical to the conventional analog beamformer in hybrid mmWave systems, i.e., no beam splitting, and it is given by
\begin{equation}\label{C6:ArrayWeight2}
{{\mathbf{w}}_{r'}} = {\mathbf{w}}\left( {{M_{\mathrm{BS}}},{\theta _{k',0}}} \right) =
{\frac{1}{{\sqrt {{M_{BS}}} }}\left[ {1, \ldots ,{e^{j{\left({M_{\mathrm{BS}}} - 1\right)}\pi \cos {{\theta _{k',0}}} }}} \right] ^ {\mathrm{T}}}.
\end{equation}
Note that, compared to the single-beam mmWave-NOMA schemes\cite{Ding2017RandomBeamforming,Cui2017Optimal, WangBeamSpace2017}, the AODs of the LOS paths ${\theta _{k,0}}$ and ${\theta _{i,0}}$ in the proposed scheme are not required to be in the same analog beam.
In other words, the proposed multi-beam NOMA scheme provides a higher flexibility in user grouping than that of the single-beam NOMA schemes.

\begin{figure}[t]
\centering
\includegraphics[width=4.5in]{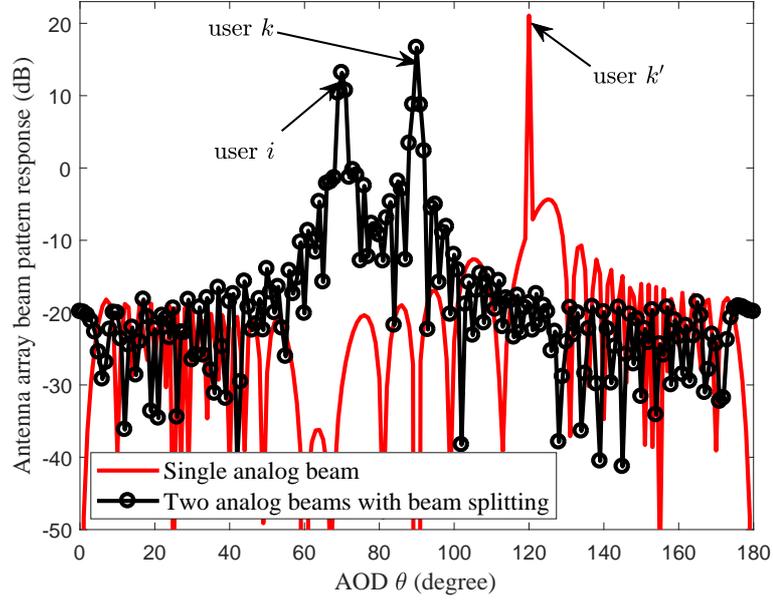}
\caption{Antenna array beam pattern response for ${{\bf{w}}_r}$ in \eqref{C6:ArrayWeight1} and ${{\bf{w}}_r'}$ in \eqref{C6:ArrayWeight2} at the BS via the proposed beam splitting technique. We assume ${M_{\mathrm{BS}}} = 128$, ${M_{i,r}} = 50$, ${M_{k,r}} = 78$, ${\theta _i} = 70^{\circ}$, ${\theta _k} = 90^{\circ}$, and ${\theta _{k'}} = 120^{\circ}$.}
\label{C6:MultiBeamNOMA2}
\end{figure}

Based on the analog beamformers ${{\mathbf{w}}_r}$ and ${{\mathbf{w}}_{r'}}$, RF chain $r$ generates two analog beams steering toward user $k$ and user $i$, respectively, while RF chain $r'$  forms a single analog beam steering to user $k'$.
The antenna array beam pattern responses for ${{\mathbf{w}}_r}$ and ${{\mathbf{w}}_{r'}}$ are shown in Figure \ref{C6:MultiBeamNOMA2} to illustrate the multiple analog beams generated via beam splitting.
Compared to the single analog beam for user $k'$, we can observe that the magnitude of the main beam response decreases and the beamwidth increases for both the two analog beams for users $k$ and $i$.
Although forming multiple analog beams via beam splitting sacrifices some beamforming gain of the original single analog beam, it can still improve the system performance via accommodating more users on each RF chain.

Now, we integrate the users scheduling variables ${u_{k,r}}$ with ${{\mathbf{w}}_r}$ as follows
\begin{equation}\label{C6:Weight1}
{{\mathbf{w}}_r} = {\left[
{{{\mathbf{w}}^{\mathrm{T}}}\left( {{u_{1,r}},{M_{1,r}},{\theta _{1,0}}} \right)}, \ldots ,{{{\mathbf{w}}^{\mathrm{T}}}\left( {{u_{K,r}},{M_{K,r}},{\theta _{K,0}}} \right)}
\right]^{\mathrm{T}}},
\end{equation}
with
\begin{equation}\label{C6:Weight2}
{\mathbf{w}}\left( {{u_{k,r}},{M_{k,r}},{\theta _{k,0}}} \right) = \left\{ {\begin{array}{*{20}{c}}
\emptyset, &{{u_{k,r}} = 0},\\
{{{{e^{j\sum\limits_{d = 1}^{k - 1} {{u_{d,r}}{M_{d,r}}\pi \cos \left( {{\theta _{d,0}}} \right)} }}}}{\mathbf{w}}\left( {{M_{k,r}},{\theta _{k,0}}} \right)}, &{{u_{k,r}} = 1}.
\end{array}} \right.
\end{equation}
It can be observed in \eqref{C6:Weight2} that ${\mathbf{w}}\left( {{u_{k,r}},{M_{k,r}},{\theta _{k,0}}} \right)$ is an empty set $\emptyset$ when ${{u_{k,r}} = 0}$, $\forall k$, and ${{\mathbf{w}}_r}$ consists of the analog beamformers for the users allocated with RF chain $r$, i.e., ${{u_{k,r}} = 1}$, $\forall k$.

\subsection{Effective Channel Estimation}
For a given user grouping strategy, antenna allocation, and multiple analog beamforming, all the users transmit their unique orthogonal pilots to the BS in the uplink to perform effective channel estimation.
In this chapter, we assume the use of time division duplex (TDD) and exploit the channel reciprocity, i.e., the estimated effective channel in the uplink can be used for digital precoder design in the downlink.
The effective channel of user $k$ on RF chain $r$ at the BS is given by
\begin{equation}\label{C6:EffectiveChannel1}
{\widetilde h_{k,r}} = {{\mathbf{v}}_k^{\mathrm{H}}}{{\mathbf{H}}_{k}}{\mathbf{w}}_r,\;\forall k,r,
\end{equation}
where ${{\mathbf{v}}_k}$ and ${\mathbf{w}}_r$ denote the analog beamformers adopted at user $k$ and the RF chain $r$ at the BS, respectively\footnote{In the proposed multi-beam NOMA framework, any existing channel estimation scheme can be used to estimate the effective channel in \eqref{C6:EffectiveChannel1}. For illustration, we adopted the strongest LOS based channel estimation scheme in \cite{AlkhateebPrecoder2015,zhao2017multiuser}.
In particular, we set the receiving analog beamformer as ${{\mathbf{v}}_k} = \frac{1}{{\sqrt {{M_{{\mathrm{UE}}}}} }}{{\mathbf{a}}_{{\mathrm{UE}}}}\left( {{\phi _{k,0}}} \right)$ and transmitting analog beamformer of RF chain $r$ ${\mathbf{w}}_r$ given by \eqref{C6:Weight1}, and utilize the minimum mean square error (MMSE) channel estimation to estimate the effective channels of all the users.}.
In the following, we denote the effective channel vector of user $k$ as ${{{\bf{\widetilde h}}}_k} = {\left[ {
{{{\widetilde h}_{k,1}}}, \ldots ,{{{\widetilde h}_{k,{N_{{\mathrm{RF}}}}}}}
} \right]}^{\mathrm{T}} \in \mathbb{C}^{{ N_{\mathrm{RF}} \times 1}}$ and denote the effective channel matrix between the BS and the $K$ users as ${\bf{\widetilde H}} = {\left[ {
{\widetilde{{\bf{h}}}_1}, \ldots ,{\widetilde{{\bf{h}}}_K}
} \right]} \in \mathbb{C}^{{ N_{\mathrm{RF}}\times K }}$.

\subsection{Digital Precoder and Power Allocation Design}
Given the estimated effective channel matrix ${{\mathbf{\widetilde H}}}$, the digital precoder and the power allocation can be designed accordingly.
With the proposed user grouping design, there are totally $N_{\mathrm{RF}}$ NOMA groups to be served in the proposed multi-beam NOMA scheme and the users within each NOMA group share the same digital precoder.
Assuming that the adopted digital precoder is denoted as ${\mathbf{G}} = \left[ {
{{{\mathbf{g}}_1}},\ldots ,{{{\mathbf{g}}_{{N_{{\mathrm{RF}}}}}}}
} \right] \in \mathbb{C}^{{ N_{\mathrm{RF}} \times N_{\mathrm{RF}}}}$, where ${{\mathbf{g}}_r}$ with ${\left\| {{{\bf{g}}_r}} \right\|^2} = 1$ denotes the digital precoder for the NOMA group associated\footnote{Note that the concept of RF chain association is more clear for a pure analog mmWave system, i.e., ${\mathbf{G}} = {{\mathbf{I}}_{{N_{{\mathrm{RF}}}}}}$, where the signal of a NOMA group or an OMA user is transmitted through its associated RF chain, as shown in Figure \ref{C6:MultiBeamNOMA:a}.
In hybrid mmWave systems with a digital precoder, ${\mathbf{G}}$, the signals of NOMA groups and OMA users are multiplexed on all the RF chains.
In this case, the RF chain allocation is essentially the spatial DoF allocation, where a NOMA group or an OMA user possesses one spatial DoF.
However, we still name it with RF chain association since each associated RF chain generates multiple analog beams for a NOMA group or a single beam for an OMA user, as shown in Figure \ref{C6:MultiBeamNOMA:a}.} with RF chain $r$.
In addition, denoting the power allocation for user $k$ associated with RF chain $r$ as $p_{k,r}$, we have the sum power constraint $ \sum_{k = 1}^K \sum_{r = 1}^{{N_{{\mathrm{RF}}}}} {{u_{k,r}}{p_{k,r}}}  \le p_{\mathrm{BS}}$, where $p_{\mathrm{BS}}$ is the power budget for the BS.
Then, the received signal at user $k$ is given by
\begin{align}\label{C6:DLRx1}
{y_k}=& {{{\mathbf{\widetilde h}}}_k^{\mathrm{H}}}{{\mathbf{G}}\mathbf{t}} + {z_k} = {{{\mathbf{\widetilde h}}}_k^{\mathrm{H}}}\sum\limits_{r = 1}^{{N_{{\mathrm{RF}}}}} {{{\mathbf{g}}_r}{t_r}}  + {z_k} \\
=& \underbrace{{{{\mathbf{\widetilde h}}}_k^{\mathrm{H}}} {{\mathbf{g}}_r} \sqrt {{p_{k,r}}} {s_k}}_{\mathrm{Desired\;signal}} + \underbrace{{{{\mathbf{\widetilde h}}}_k^{\mathrm{H}}} {{\mathbf{g}}_r}\sum\limits_{d \ne k}^K {{u_{d,r}}\sqrt {{p_{d,r}}} {s_{d}}}}_{\mathrm{Intra-group\;interference}} + \underbrace{{{{\mathbf{\widetilde h}}}_k^{\mathrm{H}}}\sum\limits_{r' \ne r}^{{N_{{\mathrm{RF}}}}} {{{\mathbf{g}}_{r'}}\sum\limits_{d = 1}^K {{u_{d,r'}}} \sqrt {{p_{d,r'}}} {s_{d}}}}_{\mathrm{Inter-group\;interference}}  + {z_k},\notag
\end{align}
where ${t_r} = \sum\limits_{k = 1}^K {{u_{k,r}}} \sqrt{p_{k,r}}{s_k}$ denotes the superimposed signal of the NOMA group associated with RF chain $r$ and ${\mathbf{t}} = {\left[
{{t_1}}, \ldots ,{{t_{{N_{{\mathrm{RF}}}}}}}
\right]^{\mathrm{T}}} \in \mathbb{C}^{{ N_{\mathrm{RF}} \times 1}}$.
Variable ${s_k} \in \mathbb{C}$ denotes the modulated symbol for user $k$ and $z_k \sim {\cal CN}(0,\sigma^2)$ is the additive white Gaussian noise (AWGN) at user $k$, where $\sigma^2$ is the noise power.
For instance, in Figure \ref{C6:MultiBeamNOMA:a}, if user $k$ and user $i$ are allocated to RF chain $r$ and user $k'$ is allocated to RF chain $r'$, we have ${t_r} = \sqrt{p_{k,r}}{s_k} + \sqrt{p_{i,r}}{s_i}$ and ${t_{r'}} = \sqrt{p_{k',r'}}{s_{k'}}$.
In \eqref{C6:DLRx1}, the first term represents the desired signal of user $k$, the second term denotes the intra-group interference caused by the other users within the NOMA group associated with RF chain $r$, and the third term is the inter-group interference originated from all the other RF chains.

\subsection{SIC Decoding at Users}
At the user side, as the traditional downlink NOMA schemes\cite{WeiTCOM2017}, SIC decoding is performed at the strong user within one NOMA group, while the weak user directly decodes the messages by treating the strong user's signal as noise.
In this chapter, we define the strong or weak user by the LOS path gain.
Without loss of generality, we assume that the users are indexed in the descending order of LOS path gains, i.e., ${\left| {{\alpha _{1,0}}} \right|^2} \ge {\left| {{\alpha _{2,0}}} \right|^2} \ge , \ldots , \ge {\left| {{\alpha _{K,0}}} \right|^2}$.

According to the downlink NOMA protocol\cite{WeiTCOM2017}, the individual data rate of user $k$ when associated with RF chain $r$ is given by
\begin{equation}\label{C6:DLIndividualRate1}
{R_{k,r}} = {\log _2}\left( {1 + \frac{{{u_{k,r}}{p_{k,r}}{{\left| {{{\mathbf{\widetilde h}}}_k^{\mathrm{H}}}{{\mathbf{g}}_r} \right|}^2}}}{{I_{k,r}^{{{\mathrm{ inter}}}} + I_{k,r}^{{{\mathrm{intra}}}} + \sigma^2}}} \right),
\end{equation}
with $I_{k,r}^{{{\mathrm{inter}}}} = \sum\limits_{r' \ne r}^{{N_{{\mathrm{RF}}}}} {{{ \left| { {{{\mathbf{\widetilde h}}}_k^{\mathrm{H}}}{{\mathbf{g}}_{r'}} } \right|}^2}\sum\limits_{d = 1}^K {{u_{d,r'}}{p_{d,r'}}} }$ and $I_{k,r}^{{{\mathrm{intra}}}} = {{\left| {{{\mathbf{\widetilde h}}}_k^{\mathrm{H}}}{{\mathbf{g}}_r} \right|}^2}\sum\limits_{d = 1}^{k-1} {{u_{d,r}}{p_{d,r}}}$
denoting the inter-group interference power and intra-group interference power, respectively.
Note that with the formulation in \eqref{C6:DLIndividualRate1}, we have ${R_{k,r}} = 0$ if ${u_{k,r}} = 0$.
If user $k$ and user $i$ are scheduled to form a NOMA group associated with RF chain $r$, $\forall i > k$, user $k$ first decodes the messages of user $i$ before decoding its own information and the corresponding achievable data rate is given by
\begin{equation}\label{C6:IndividualRate2}
R_{k,i,r} = {\log _2}\left( {1 + \frac{{{u_{i,r}}{p_{i,r}}{{\left| {{{\mathbf{\widetilde h}}}_k^{\mathrm{H}}}{{\mathbf{g}}_r} \right|}^2}}}{{I_{k,r}^{{\mathrm{inter}}} + I_{k,i,r}^{{\mathrm{intra}}} + \sigma^2}}} \right),
\end{equation}
where $I_{k,i,r}^{{\mathrm{intra}}} = {{\left| {{{\mathbf{\widetilde h}}}_k^{\mathrm{H}}}{{\mathbf{g}}_r} \right|}^2}\sum\limits_{d = 1}^{i - 1} {{u_{d,r}}{p_{d,r}}}$ denotes the intra-group interference power when decoding the message of user $i$ at user $k$.
To guarantee the success of SIC decoding, we need to maintain the rate condition as follows \cite{Sun2016Fullduplex}:
\begin{equation}\label{C6:DLSICDecoding}
R_{k,i,r} \ge {R_{i,r}}, \forall i > k.
\end{equation}
Note that, when user $i$ is not allocated with RF chain $r$, we have $R_{k,i,r} = {R_{i,r}} =0$ and the condition in \eqref{C6:DLSICDecoding} is always satisfied.
Now, the individual data rate of user $k$ is defined as $R_{k} = {\sum\limits_{r = 1}^{{N_{\mathrm{RF}}}} {{R_{k,r}}}}$, $\forall k$, and the system sum-rate is given by
\begin{equation}\label{C6:SumRate}
{R_{{\mathrm{sum}}}} = \sum\limits_{k = 1}^K {\sum\limits_{r = 1}^{{N_{RF}}} {{{\log }_2}\left( {1 + \frac{{{u_{k,r}}{p_{k,r}}{{\left| {{{\mathbf{\widetilde h}}}_k^{\mathrm{H}}}{{\mathbf{g}}_r} \right|}^2}}}{{I_{k,r}^{{{\mathrm{inter}}}} + I_{k,r}^{{{\mathrm{intra}}}} + \sigma^2}}} \right)} }.
\end{equation}

\begin{remark}
In summary, the key idea of the proposed multi-beam NOMA framework\footnote{Note that this chapter proposes a multi-beam NOMA framework for hybrid mmWave communication systems, where different analog beamformer designs, channel estimation methods, and digital precoder designs can be utilized in the proposed framework.} is to multiplex the signals of multiple users on a single RF chain via beam splitting, which generates multiple analog beams to facilitate non-orthogonal transmission for multiple users.
Intuitively, compared to the natural broadcast in conventional NOMA schemes considered in microwave frequency bands\cite{WeiSurvey2016,Ding2015b}, the proposed multi-beam NOMA scheme generates multiple non-overlapped virtual tunnels in the beam-domain and broadcast within the tunnels for downlink NOMA transmission.
It is worth to note that the beam splitting technique is essentially an allocation method of array beamforming gain.
In particular, allocating more antennas to a user means allocating a higher beamforming gain for this user, and vice versa.
Specifically, apart from the power domain multiplexing in conventional NOMA schemes\cite{WeiSurvey2016,Ding2015b}, the proposed multi-beam NOMA scheme further exploits the beam-domain for efficient multi-user multiplexing.
Besides, the proposed multi-beam NOMA scheme only relies on AODs of the LOS paths, ${\theta _{k,0}}$, and the complex path gains, ${{\alpha _{k,0}}}$, which is different from the existing fully digital MIMO-NOMA schemes\cite{Hanif2016}.
Clearly, the performance of the proposed multi-beam mmWave-NOMA scheme highly depends on the user grouping, antenna allocation, power allocation, and digital precoder design, which will be detailed in the next section.
\end{remark}

\section{Resource Allocation Design}
In this section, we focus on resource allocation design for the proposed multi-beam mmWave-NOMA scheme.
As shown in Figure \ref{C6:MultiBeamNOMAScheme}, the effective channel seen by the digital precoder depends on the structure of analog beamformers, which is determined by the user grouping and antenna allocation.
In other words, the acquisition of effective channel is coupled with the user grouping and antenna allocation.
In fact, this is fundamentally different from the resource allocation design of the fully digital MIMO-NOMA schemes\cite{Sun2017MIMONOMA,Hanif2016} and single-beam mmWave-NOMA schemes\cite{Ding2017RandomBeamforming,Cui2017Optimal, WangBeamSpace2017}.
As a result, jointly designing the user grouping, antenna allocation, power allocation, and digital precoder for our proposed scheme is very challenging and generally intractable.
To obtain a computationally efficient design, we propose a two-stage design method, which is commonly adopted for the hybrid precoder design in the literature\cite{AlkhateebPrecoder2015,Mumtaz2016mmwave}.
Specifically, in the first stage, we design the user grouping and antenna allocation based on the coalition formation game theory \cite{SaadCoalitionalGame,SaadCoalitional2012,WangCoalitionNOMA,SaadCoalitionOrder,Han2012} to maximize the conditional system sum-rate assuming that only analog beamforming is available.
In the second stage, based on the obtained user grouping and antenna allocation strategy, we adopt a ZF digital precoder to manage the inter-group interference and formulate the power allocation design as an optimization problem to maximize the system sum-rate while taking into account the QoS constraints.

\subsection{First Stage: User Grouping and Antenna allocation}

\noindent\textbf{\underline{{Problem Formulation}}}

In the first stage, we assume that only analog beamforming is available with ${\mathbf{G}} = {{\mathbf{I}}_{{N_{{\mathrm{RF}}}}}}$ and equal power allocation ${p_{k,r}} = \frac{p_{\mathrm{BS}}}{K}$, $\forall k,r$, i.e., each RF chain serves its associated NOMA group correspondingly.
The joint user grouping and antenna allocation to maximize the achievable sum-rate can be formulated as the following optimization problem:
\begin{align} \label{C6:ResourceAllocation}
&\underset{{u_{k,r}},{M_{k,r}}}{\maxo} \;\;\overline{R}_{\mathrm{sum}} = \sum\limits_{k = 1}^K \sum\limits_{r = 1}^{{N_{RF}}} {\overline{R}_{k,r}} \\
\mbox{s.t.}\;\;
&\mbox{C1: } {u_{k,r}} \in \left\{ {0,1} \right\}, {M_{k,r}} \in \mathbb{Z}{^ + }, \forall k,r,  \notag\\
&\mbox{C2: } \sum\limits_{k = 1}^K {{u_{k,r}}}  \le 2, \forall r, \;\;\mbox{C3: } \sum\limits_{r = 1}^{{N_{{\mathrm{RF}}}}} {{u_{k,r}}}  \le 1, \forall k, \notag\\
&\mbox{C4: } \sum\limits_{k = 1}^K {{u_{k,r}}{M_{k,r}}}  \le {M_{{\mathrm{BS}}}}, \forall r, \notag\\
&\mbox{C5: } {u_{k,r}}{M_{k,r}} \ge {u_{k,r}}{M_{\min }}, \forall k,r, \notag
\end{align}
where the user scheduling variable ${u_{k,r}}$ and antenna allocation variable ${M_{k,r}}$ are the optimization variables.
The objective function $\overline{R}_{\mathrm{sum}}$ denotes the conditional system sum-rate which can be given with ${R}_{\mathrm{sum}}$ in \eqref{C6:SumRate} by substituting ${p_{k,r}} = \frac{p_{\mathrm{BS}}}{K}$ and ${\mathbf{G}} = {{\mathbf{I}}_{{N_{{\mathrm{RF}}}}}}$.
Similarly, ${\overline{R}_{k,r}}$ denotes the conditional individual data rate of user $k$ associated with RF chain $r$ which can be obtained with ${{R}_{k,r}}$ in \eqref{C6:DLIndividualRate1} by substituting ${p_{k,r}} = \frac{p_{\mathrm{BS}}}{K}$ and ${\mathbf{G}} = {{\mathbf{I}}_{{N_{{\mathrm{RF}}}}}}$.
Constraint C2 restricts that each RF chain can only serve at most two users.
Constraint C3 is imposed such that one user can only be associated with at most one RF chain.
Constraint C4 guarantees that the number of all the allocated antennas on RF chain $r$ cannot be larger than $M_{\mathrm{BS}}$.
Constraint C5 is imposed to keep that the minimum number of antennas allocated to user $k$ is ${M_{\mathrm{min}}}$ if it is associated with RF chain $r$.
The reasons for introducing constraint C5 is two-fold: 1) we need to prevent the case where a user is not served by any antenna for a fair resource allocation; 2) constraint C5 can prevent the formation of high sidelobe beam, which introduces a high inter-group interference and might sacrifice the performance gain of multi-beam NOMA.
Note that constraint C5 are inactive when user $k$ is not assigned on RF chain $r$, i.e., $u_{k,r} = 0$.

The formulated problem is a non-convex integer programming problem, which is very difficult to find the globally optimal solution.
In particular, the effective channel gain in the objective function in \eqref{C6:ResourceAllocation} involves a periodic trigonometric function of ${M_{k,r}}$, which cannot be solved efficiently with existing convex optimization methods\cite{Boyd2004}.
In general, the exhaustive search can optimally solve this problem, but its computational complexity is prohibitively high, which is given by
\begin{equation}\label{C6:Complexity}
\mathcal{O}\left(({M_{{\mathrm{BS}}} - 2{M_{\min }}})^{{K - {N_{{\mathrm{RF}}}}}}\left( {\begin{array}{*{20}{c}}
K\\
{2{N_{\mathrm{RF}}} -K}
\end{array}} \right)\mathop \Pi \limits_{r = 1}^{K - {N_{{\mathrm{RF}}}}} \left( {2r - 1} \right)\right),
\end{equation}
where $\mathcal{O}\left(\cdot\right)$ is the big-O notation.
As a compromise solution, we recast the formulated problem as a coalition formation game \cite{SaadCoalitionalGame,SaadCoalitional2012,WangCoalitionNOMA,SaadCoalitionOrder,Han2012} and propose a coalition formation algorithm for obtaining an efficient suboptimal solution of the user grouping and antenna allocation.

\noindent\textbf{\underline{{The Proposed Coalition Formation Algorithm}}}

In this section, to derive an algorithm for the user grouping and antenna allocation with a low computational complexity, we borrow the concept from the coalitional game theory \cite{SaadCoalitionalGame,SaadCoalitional2012,WangCoalitionNOMA,SaadCoalitionOrder,Han2012} to achieve a suboptimal but effective solution.
Note that, although the game theory is commonly introduced to deal with the distributed resource allocation design problem\cite{SaadCoalitional2012}, it is also applicable to the scenario with a centralized utility \cite{SaadCoalitionOrder,WangCoalitionNOMA,Han2012}.
Besides, it is expected to achieve a better performance with centralized utility compared to the distributed resource allocation due to the availability of the overall resource allocation strategy and the system performance.

\textbf{Basic concepts:} We first introduce the notions from the coalitional game theory \cite{SaadCoalitionalGame,SaadCoalitional2012,WangCoalitionNOMA,SaadCoalitionOrder,Han2012} to reformulate the problem in \eqref{C6:ResourceAllocation}.
In particular, we view all the $K$ users as a set of cooperative \emph{players}, denoted by $\mathcal{K} = \{1,\ldots,K\}$, who seek to form \emph{coalitions} $S$ (NOMA groups in this chapter), $S \subseteq \mathcal{K} $, to maximize the conditional system sum-rate.
The \emph{coalition value}, denoted by ${V} \left(S\right)$, quantifies the payoff of a coalition in a game.
In this chapter, we characterize the payoff of each player as its individual conditional data rate, since we aim to maximize the conditional system sum-rate in \eqref{C6:ResourceAllocation}.
In addition, for our considered problem, since the payoff of each player in a coalition $S$ depends on the antenna allocation within the coalition and thus cannot be divided in any manner between the coalition members, our considered game has a nontransferable utility (NTU) property \cite{Han2012}.
Therefore, the coalition value is a set of payoff vectors, ${V} \left(S\right) \subseteq \mathbb{R}^{\left|S\right|}$, where each element represents the payoff of each player in $S$.
Furthermore, due to the existence of inter-group interference in our considered problem, the coalition value ${V} \left(S\right)$ depends not only on its own coalition members in $S$, but also on how the other players in $\mathcal{K}\backslash S$ are structured.
As a result, the considered game falls into the category of \emph{coalition game in partition form}, i.e., coalition formation game\cite{SaadCoalitionalGame}.

\textbf{Coalition value:} More specifically, given a coalitional structure $\mathcal{B}$, defined as a partition of $\mathcal{K}$, i.e., a collection of coalitions $\mathcal{B} = \{S_1,\ldots,S_{\left|\mathcal{B}\right|}\}$, such that $\forall r \ne r'$, $S_{r} \bigcap S_{r'} = \emptyset$, and $\cup_{r = 1}^{\left|\mathcal{B}\right|} S_{r} = \mathcal{K}$, the value of a coalition $S_{r}$ is defined as ${V} \left(S_{r},\mathcal{B}\right)$.
Moreover, based on \eqref{C6:DLIndividualRate1}, we can observe that ${\overline{R}_{k,r}}$ is affected by not only the antenna allocation strategy $\mathcal{M}_r = \{M_{k,r} | \forall k \in S_{r}\}$ in the coalition $S_{r}$, but also the antenna allocation strategy $\mathcal{M}_{r'} = \{M_{k,{r'}} | \forall k \in S_{{r'}}\}$ in other coalitions $S_{r'}$, $r' \neq r$.
Therefore, the coalition value is also a function of the antenna allocation strategy $\mathcal{M}_r$ of the coalition $S_{r}$ and the overall antenna allocation strategy $\Pi = \{\mathcal{M}_1,\ldots,\mathcal{M}_{\left|\mathcal{B}\right|}\}$, i.e., ${V} \left(S_{r},\mathcal{M}_r,\mathcal{B},\Pi\right)$.
Then, the coalition value of $S_{r}$ in place of $\mathcal{B}$ can be defined as
\begin{equation}
{V} \left(S_{r},\mathcal{M}_r,\mathcal{B},\Pi\right) = \{v_k\left(S_{r},\mathcal{M}_r,\mathcal{B},\Pi\right) | \forall k \in S_{r} \},
\end{equation}
with the payoff of user $k$ as
\begin{equation}\label{C6:PayoffUserk}
v_k \left(S_{r},\mathcal{M}_r,\mathcal{B},\Pi\right) = \left\{ {\begin{array}{*{20}{c}}
\frac{{N_{\mathrm{RF}}}}{{\left|\mathcal{B}\right|}}{\overline{R}_{k,r}},&{{\left|\mathcal{B}\right|} > {N_{\mathrm{RF}}}},\\
{\overline{R}_{k,r}},&{{\left|\mathcal{B}\right|} \le {N_{\mathrm{RF}}}},
\end{array}} \right.
\end{equation}
where $\frac{{N_{\mathrm{RF}}}}{{\left|\mathcal{B}\right|}}$ is the time-sharing factor since the number of coalitions is larger than the number of system RF chains.
To facilitate the solution design, based on \eqref{C6:DLIndividualRate1}, the conditional individual data rate of user $k$ associated with RF chain $r$, ${\overline{R}_{k,r}}$, can be rewritten with the notations in coalition game theory as follows
\begin{equation}\label{C6:DLIndividualRate1Game}
{\overline{R}_{k,r}} = \left\{ {\begin{array}{*{20}{c}}
{\log _2}\left( {1 + {\frac{{{p_{k,r}}{{\left| {{{\mathbf{\widetilde h}}}_k^{\mathrm{H}}}{{\mathbf{g}}_r} \right|}^2}}}{{I_{k,r}^{{{\mathrm{inter}}}}  + I_{k,r}^{{{\mathrm{intra}}}} + {\sigma ^2}}}}} \right), &{k \in S_{r}},\\
0,&{\mathrm{otherwise}},
\end{array}} \right.
\end{equation}
where $I_{k,r}^{{{\mathrm{inter}}}} = \sum\limits_{r' \ne r, S_{r'} \in \mathcal{B}} {{{\left| {{{\mathbf{\widetilde h}}}_k^{\mathrm{H}}}{{\mathbf{g}}_{r'}} \right|}^2}\sum\limits_{d \in S_{r'}} {{p_{d,r}}}}$ and $I_{k,r}^{{{\mathrm{intra}}}} = {{\left| {{{\mathbf{\widetilde h}}}_k^{\mathrm{H}}}{{\mathbf{g}}_r} \right|}^2}\sum\limits_{d < k, d \in S_{r}} {{p_{d,r}}}$ with ${p_{k,r}} = \frac{p_{\mathrm{BS}}}{K}$ and ${\mathbf{G}} = {{\mathbf{I}}_{{N_{{\mathrm{RF}}}}}}$.

Based on the aforementioned coalitional game theory notions, the proposed coalition formation algorithm essentially forms coalitions among players iteratively to improve the system performance.
To facilitate the presentation of the proposed coalition formation algorithm, we introduce the following basic concepts.

\begin{Def}
Given two partitions $\mathcal{B}_1$ and $\mathcal{B}_2$ of the set of users $\mathcal{K}$ with two antenna allocation strategies $\Pi_1$ and $\Pi_2$, respectively, for two coalitions ${S_r} \in \mathcal{B}_1$ and ${S_{r'}} \in \mathcal{B}_2$ including user $k$, i.e., $k \in {S_r}$ and $k \in {S_{r'}}$, the \emph{preference relationship} for user $k \in \mathcal{K}$ is defined as
\begin{equation}\label{C6:PreferenceRelationship}
\left( {{S_r},\mathcal{B}_1} \right) { \preceq _k} \left( {{S_{r'}},\mathcal{B}_2} \right)
\Leftrightarrow \mathop {\max }\limits_{{\mathcal{M}_r}}   {\overline{R}_{{\mathrm{sum}}}}\left( {{\cal B}_1,\Pi_1} \right) \le \mathop {\max } \limits_{{\mathcal{M}_{r'}}} {\overline{R}_{{\mathrm{sum}}}} \left( {{\cal B}_2,\Pi_2} \right),
\end{equation}
with
\begin{equation}\label{C6:SystemSumRateCoalitionGame}
{\overline{R}_{{\mathrm{sum}}}} \left( {{\cal B},\Pi} \right) = \sum\limits_{{S_r} \in {\cal B}} {\sum\limits_{k \in {S_r}} {{v_k}\left( {{S_r},{\mathcal{M}_r},{\cal B},\Pi} \right)} }.
\end{equation}
The notation $\left( {{S_r},\mathcal{B}_1} \right){ \preceq _k} \left( {{S_{r'}},\mathcal{B}_2} \right)$ means that user $k$ prefers to be part of coalition ${S_{r'}}$ when $\mathcal{B}_2$ is in place, over being part of coalition ${S_{r}}$ when $\mathcal{B}_1$ is in place, or at least prefers both pairs of coalitions and partitions equally.
As shown in \eqref{C6:PreferenceRelationship} and \eqref{C6:SystemSumRateCoalitionGame}, we have $\left( {{S_r},\mathcal{B}_1} \right){ \preceq _k} \left( {{S_{r'}},\mathcal{B}_2} \right)$ if and only if the resulting conditional system sum-rate with $\mathcal{B}_2$ and $\Pi_2$ is larger than or at least equal to that with $\mathcal{B}_1$ and $\Pi_1$, with the optimized antenna allocation over ${\mathcal{M}_r}$ and ${\mathcal{M}_{r'}}$, respectively.
Furthermore, we denote their asymmetric counterpart ${ \prec_k}$ and $<$ to indicate the \emph{strictly preference relationship}.
\end{Def}

Note that, the maximization $\mathop {\max }  \limits_{{\mathcal{M}_r}}   {\overline{R}_{{\mathrm{sum}}}}\left( {{\cal B}_1,\Pi_1} \right)$ in \eqref{C6:PreferenceRelationship} is only over ${\mathcal{M}_r}$ for the involved coalition ${S_r}$ via fixing all the other antenna allocation strategies in $\{\Pi_1 \backslash {\mathcal{M}_r}\}$ when $\mathcal{B}_1$ is in place.
Similarly, the maximization $\mathop {\max }  \limits_{{\mathcal{M}_{r'}}}   {\overline{R}_{{\mathrm{sum}}}}\left( {{\cal B}_2,\Pi_2} \right)$ in \eqref{C6:PreferenceRelationship} is only over ${\mathcal{M}_{r'}}$ for the involved coalition ${S_{r'}}$ via fixing all the other antenna allocation strategies in $\{\Pi_2 \backslash {\mathcal{M}_{r'}}\}$ when $\mathcal{B}_2$ is in place.
Let ${\mathcal{M}_r^{*}} = \mathop {\arg } \max \limits_{{\mathcal{M}_r}}   {\overline{R}_{{\mathrm{sum}}}}\left( {{\cal B}_1,\Pi_1} \right)$ and ${\mathcal{M}_{r'}^{*}} = \mathop {\arg } \max\limits_{{\mathcal{M}_{r'}}} {\overline{R}_{{\mathrm{sum}}}} \left( {{\cal B}_2,\Pi_2} \right)$ denote the optimal solutions for the maximization problems in both sides of  \eqref{C6:PreferenceRelationship}.
To find the optimal ${\mathcal{M}_r^{*}}$ and ${\mathcal{M}_{r'}^{*}}$ in \eqref{C6:PreferenceRelationship}, we use the full search method over ${\mathcal{M}_r}$ and ${\mathcal{M}_{r'}}$ within the feasible set of C4 and C5 in \eqref{C6:ResourceAllocation}.
Recall that there are at most two members in each coalition, the computational complexity for solving the antenna allocation in each maximization problem is acceptable via a one-dimensional full search over the integers in set $[{M_{\min }},{M_{{\mathrm{BS}}} - {M_{\min }}}]$.
Based on the defined preference relationship above, we can define the \emph{strictly preferred leaving and joining operation} in the $\mathrm{iter}$-th iteration as follows.
\begin{Def}\label{C6:LeaveAndJoin}
In the $\mathrm{iter}$-th iteration, given the current partition $\mathcal{B}_{\mathrm{iter}} = \{S_1,\ldots,S_{\left|\mathcal{B}_{\mathrm{iter}}\right|}\}$ and its antenna allocation strategy $\Pi_{\mathrm{iter}} = \{\mathcal{M}_1,\ldots,\mathcal{M}_{\left|\mathcal{B}_{\mathrm{iter}}\right|}\}$ of the set of users $\mathcal{K}$, user $k$ will \emph{leave} its current coalition ${S_{r}}$ and \emph{joint} another coalition ${S_{r'}} \in \mathcal{B}_{\mathrm{iter}}$, $r' \neq r$, if and only if it is a \emph{strictly preferred leaving and joining operation}, i.e.,
\begin{equation}\label{C6:StrictPreferredLaJ}
\left( {{S_r},\mathcal{B}_{\mathrm{iter}}} \right) { \prec _k} \left( {{S_{r'}}  \cup  \{k\} ,\mathcal{B}_{\mathrm{iter}+1}} \right),
\end{equation}
where new formed partition is $\mathcal{B}_{\mathrm{iter}+1} = \left\{ {\mathcal{B}_{\mathrm{iter}}\backslash \left\{ {{S_r},{S_{r'}}} \right\}} \right\} \cup \left\{ {{S_r}\backslash \left\{ k \right\},{S_{r'}} \cup \left\{ k \right\}} \right\}$.
Let ${{\cal M}_r^\circ}$ denote the optimal antenna allocation strategy for the coalition $\{{S_r}\backslash \left\{ k \right\}\}$ if user $k$ leaves the coalition ${S_{r}}$.
Due to ${\left|S_r\right|} \le 2$, there are at most one member left in $\{{S_r}\backslash \left\{ k \right\}\}$, and thus we have
\begin{equation}
{{\cal M}_r^\circ} = \left\{ {\begin{array}{*{20}{c}}
M_{\mathrm{BS}}, & \left|{S_r}\backslash \left\{ k \right\}\right| = 1,\\
0,           & \left|{S_r}\backslash \left\{ k \right\}\right| = 0.
\end{array}} \right.
\end{equation}
On the other hand, we denote ${\cal M}_{r'}^*$ as the optimal antenna allocation to maximize the conditional system sum-rate for the new formed coalition ${S_{r'}} \cup \left\{ k \right\}$ if user $k$ joins the coalition ${S_{r'}}$.
Similarly, we note that the maximization is only over ${\cal M}_{r'}$ via fixing all the others antenna strategies $\left\{ {\Pi_{\mathrm{iter}}\backslash \left\{ {{\cal M}_r,{{\cal M}_{r'}}} \right\}} \right\} \cup \left\{ {{{\cal M}_r^\circ} } \right\}$.
Then, we can define the new antenna allocation strategy for the new formed partition $\mathcal{B}_{\mathrm{iter}+1}$ as $\Pi_{\mathrm{iter}+1} = \left\{ {\Pi_{\mathrm{iter}}\backslash \left\{ {{\cal M}_r,{{\cal M}_{r'}}} \right\}} \right\} \cup \left\{ {{{\cal M}_r^\circ}, {{\cal M}_{r'}^*} } \right\}$.
In other words, we have the update rule as $\left\{ {{S_r},{S_{r'}}} \right\} \to \left\{ {{S_r}\backslash \left\{ k \right\},{S_{r'}} \cup \left\{ k \right\}} \right\}$, $\left\{ {{\cal M}_r,{{\cal M}_{r'}}} \right\}\to \left\{ {{{\cal M}_r^\circ},{\cal M}_{r'}^* } \right\}$, $\mathcal{B}_{\mathrm{iter}} \to \mathcal{B}_{\mathrm{iter}+1}$, and $\Pi_{\mathrm{iter}} \to \Pi_{\mathrm{iter}+1}$.
\end{Def}

The defined \emph{strictly preferred leaving and joining operation} above provides a mechanism to decide whether user $k$ should move from ${S_r}$ to ${S_{r'}}$, given that the coalition and partition pair $\left( {{S_{r'}}  \cup  \{k\} ,\mathcal{B}_{\mathrm{iter}+1}} \right)$ is strictly preferred over $\left( {{S_r},\mathcal{B}_{\mathrm{iter}}} \right)$.
However, as we mentioned before, there is a constraint on the size of coalition, i.e., ${\left|S_r\right|} \le 2$, $\forall r$.
As a result, we should prevent the size of the new coalition being larger than 2, i.e., $\left|\{{S_{r'}} \cup \left\{ k \right\}\}\right| > 2$.
To this end, we need to introduce the concept of \emph{strictly preferred switch operation} \cite{SaadCoalitional2012} as follows to enable user $k \in {S_{r}}$ and user $k' \in {S_{r'}}$ switch with each other, such that the new formed coalitions satisfy $\left| \{{S_{r}} \backslash \{k\} \cup \left\{ k' \right\}\} \right| \le 2$ and $\left| \{{S_{r'}} \backslash \{k'\} \cup \left\{ k \right\}\} \right| \le 2$.

\begin{Def}\label{C6:SwitchOperation}
In the $\mathrm{iter}$-th iteration, given a partition $\mathcal{B}_{\mathrm{iter}} = \{S_1,\ldots,S_{\left|\mathcal{B}_{\mathrm{iter}}\right|}\}$ and an corresponding antenna allocation strategy $\Pi_{\mathrm{iter}} = \{\mathcal{M}_1,\ldots,\mathcal{M}_{\left|\mathcal{B}_{\mathrm{iter}}\right|}\}$ of the set of users $\mathcal{K}$, user $k \in {S_{r}}$ and user $k' \in {S_{r'}}$ will switch with each other, if and only if it is a \emph{strictly preferred switch operation}, i.e.,
\begin{align}\label{C6:SwitchRule}
&\left( {{S_r},{S_r'},\mathcal{B}_{\mathrm{iter}}} \right){ \prec ^{k'}_k} \left({S_{r}} \backslash \{k\} \cup \left\{ k' \right\}, {{S_{r'}} \backslash \{k'\} \cup \left\{ k \right\} ,\mathcal{B}_{\mathrm{iter}+1}} \right) \notag\\
&\Leftrightarrow \mathop {\max }\limits_{{\mathcal{M}_r},{\mathcal{M}_r'}}  {\overline{R}_{{\mathrm{sum}}}}\left( {{\cal B}_{\mathrm{iter}},\Pi_{\mathrm{iter}}} \right)< \mathop {\max } \limits_{{\mathcal{M}_r},{\mathcal{M}_r'}} {\overline{R}_{{\mathrm{sum}}}}\left( {{\cal B}_{\mathrm{iter}+1},\Pi_{\mathrm{iter}+1}} \right),
\end{align}
where new formed partition is
\begin{equation}
	\mathcal{B}_{\mathrm{iter}+1} = \left\{ {{\cal B}_{\mathrm{iter}}\backslash \left\{ {{S_r},{S_{r'}}} \right\}} \right\} \cup \left\{ {\{{S_{r}} \backslash \{k\} \cup \left\{ k' \right\}\},\{{S_{r'}} \backslash \{k'\} \cup \left\{ k \right\}\}} \right\}.
\end{equation}
Let ${\cal M}_r^ +$ and ${\cal M}_{r'}^ +$ denote the optimal solutions for the maximization on the right hand side of \eqref{C6:SwitchRule}.
Then, we can define the new antenna allocation strategy for the new formed partition $\mathcal{B}_{\mathrm{iter}+1}$ as ${\Pi_{\mathrm{iter}+1} } = \left\{ {\Pi_{\mathrm{iter}}\backslash \left\{ {{\cal M}_r,{\cal M}_{r'}} \right\}} \right\} \cup \left\{ {{\cal M}_r^ + ,{\cal M}_{r'}^ + } \right\}$.
In other words, the update rules are $\left\{ {{S_r},{S_{r'}}} \right\} \to \left\{ {{S_{r}} \backslash \{k\} \cup \left\{ k' \right\},{S_{r'}} \backslash \{k'\} \cup \left\{ k \right\}} \right\}$, $\left\{ {{\cal M}_r,{{\cal M}_{r'}}} \right\}\to \left\{ {{{\cal M}_r^{+}},{\cal M}_{r'}^{+} } \right\}$, $\mathcal{B}_{\mathrm{iter}} \to \mathcal{B}_{\mathrm{iter}+1}$, and $\Pi_{\mathrm{iter}} \to \Pi_{\mathrm{iter}+1}$.
\end{Def}

Again, the maximization in \eqref{C6:SwitchRule} is only over ${\cal M}_r$ and ${\cal M}_{r'}$ within the feasible set of C4 and C5 in \eqref{C6:ResourceAllocation} for the involved coalitions ${S_{r}}$ and ${S_{r'}}$ in the switch operation, via fixing all the other antenna allocation strategies $\left\{ {\Pi_{\mathrm{iter}}\backslash \left\{ {{\cal M}_r,{\cal M}_{r'}} \right\}} \right\}$.
Also, the computational complexity for each maximization is acceptable via a two-dimensional full search\footnote{Since the NOMA group size is limited to two, there is only one antenna allocation variable for each coalition.
As a result, a two-dimensional search for antenna allocation within the involved two coalitions can be utilized to improve the system sum-rate.} over the integers in $[{M_{\min }},{M_{{\mathrm{BS}}} - {M_{\min }}}]$.
From \eqref{C6:SwitchRule}, we can observe that a switch operation is a strictly preferred switch operation of users $k$ and $k'$ if and only if when this switch operation can strictly increase the conditional system sum-rate.
The defined \emph{strictly preferred switch operation} above provides a mechanism to decide whether to switch user $k$ and user $k'$ with each other, given that it is a strictly preferred switch operation by users $k$ and $k'$.

\begin{table}
\begin{algorithm} [H]                    
\caption{User Grouping and Antenna Allocation Algorithm}
\label{C6:alg1}                             
\begin{algorithmic} [1]
\small          
\STATE \textbf{Initialization}\\
Initialize the iteration index $\mathrm{iter} = 0$.
The partition is initialized by $\mathcal{B}_0 = \mathcal{K} = \{S_1,\ldots,S_{K}\}$ with $S_k = {k}$, $\forall k$, i.e., OMA.
Correspondingly, the antenna allocation is initialized with $\Pi_0 = \{\mathcal{M}_1,\ldots,\mathcal{M}_K\}$ with $\mathcal{M}_k = \{{M_{\mathrm{BS}}}\}$, $\forall k$.
\REPEAT
\FOR{$k$ = $1$:$K$}
    \STATE User $k \in {S_{r}}$ visits each existing coalitions ${S_{r'}} \in \mathcal{B}_{\mathrm{iter}}$ with ${S_{r'}} \neq {S_{r}}$.
    \IF {$\left|\{{S_{r'}} \cup \left\{ k \right\}\}\right| \le 2$}
        \IF {$\left( {{S_r},\mathcal{B}_{\mathrm{iter}}} \right) { \prec _k} \left( {{S_{r'}}  \cup  \{k\} ,\mathcal{B}_{\mathrm{iter}+1}} \right)$}
            \STATE Execute the leaving and joining operation in Definition \ref{C6:LeaveAndJoin}.
            \STATE $\mathrm{iter} = \mathrm{iter} + 1$.
        \ENDIF
    \ELSE
        \IF {$\left( {{S_r},{S_r'},\mathcal{B}_{\mathrm{iter}}} \right) { \prec ^{k'}_k} \left({S_{r}} \backslash \{k\} \cup \left\{ k' \right\}, {{S_{r'}} \backslash \{k'\} \cup \left\{ k \right\} ,\mathcal{B}_{\mathrm{iter}+1}} \right)$}
            \STATE Execute the switch operation in Definition \ref{C6:SwitchOperation}.
            \STATE $\mathrm{iter} = \mathrm{iter} + 1$.
        \ENDIF
    \ENDIF
\ENDFOR
\UNTIL {No strictly preferred leaving and joining operation or switch operation can be found.}
\RETURN {$\mathcal{B}_{\mathrm{iter}}$ and $\Pi_{\mathrm{iter}}$}
\end{algorithmic}
\end{algorithm}
\end{table}

Now, the proposed coalition algorithm for the user grouping and antenna allocation problem in \eqref{C6:ResourceAllocation} is shown in \textbf{Algorithm} \ref{C6:alg1}.
The algorithm is initialized with each user as a coalition, i.e., OMA, and each user is allocated with the whole antenna array.
In each iteration, all the users visit all the potential coalitions except its own coalition in current coalitional structure, i.e., ${S_{r'}} \in \mathcal{B}_{\mathrm{iter}}$ and ${S_{r'}} \neq {S_{r}}$.
Then, each user checks and executes the leaving and joining operation or the switch operation based on the Definition \ref{C6:LeaveAndJoin} and Definition \ref{C6:SwitchOperation}, respectively.
The iteration stops when no more preferred operation can be found.

\noindent\textbf{\underline{{Effectiveness, Stability, and Convergence}}}

In the following, we briefly discuss the effectiveness, stability, and convergence for our proposed coalition formation algorithm.
Interested readers are referred to \cite{SaadCoalitional2012,WangCoalitionNOMA,SaadCoalitionOrder} for detailed proofs.
From the Definition \ref{C6:LeaveAndJoin} and Definition \ref{C6:SwitchOperation}, we can observe that every executed operation in \textbf{Algorithm} \ref{C6:alg1} increases the conditional system sum-rate.
In other words, \textbf{Algorithm} \ref{C6:alg1} can effectively increase the conditional system sum-rate in \eqref{C6:ResourceAllocation}.
The stability of \textbf{Algorithm} \ref{C6:alg1} can be proved by contradiction, cf. \cite{SaadCoalitional2012,WangCoalitionNOMA,SaadCoalitionOrder}.
Firstly, we need to note that each user has an incentive to leave its current coalition only and if only when this operation can increase the conditional system sum-rate, i.e., a strictly preferred leaving and joining operation or switch operation.
Then, we can define a stable coalitional structure state as a final state $\mathcal{B}^{*}$ after \textbf{Algorithm} \ref{C6:alg1} terminates where no user has an incentive to leave its current coalition.
If there exists a user $k$ want to leave its current coalition $S_r$ in the final coalitional structure $\mathcal{B}^{*}$, it means that \textbf{Algorithm} \ref{C6:alg1} will not terminate and $\mathcal{B}^{*}$ is not a final state, which causes a contradiction.

Now, the convergence of \textbf{Algorithm} \ref{C6:alg1} can be proved with the following logic.
Since the number of feasible combinations of user grouping and antenna allocation in \eqref{C6:ResourceAllocation} is finite, the number of strictly preferred operations is finite.
Moreover, according to \textbf{Algorithm} \ref{C6:alg1}, the conditional system sum-rate increases after each approved operation.
Since the conditional system sum-rate is upper bounded by above due to the limited number of RF chains and time resources, \textbf{Algorithm} \ref{C6:alg1} terminates when the conditional system sum-rate is saturated.
In other words, \textbf{Algorithm} \ref{C6:alg1} converges to the final stable coalitional structure $\mathcal{B}^{*}$ within a limited number of iterations.

\noindent\textbf{\underline{{Computational Complexity}}}

Assume that, in the $\mathrm{iter}$-th iteration,  a coalitional structure $\mathcal{B}_{\mathrm{iter}}$ consists of $\left|\mathcal{B}^{\mathrm{I}}_{\mathrm{iter}}\right|$ single-user coalitions and $\left|\mathcal{B}^{\mathrm{II}}_{\mathrm{iter}}\right|$ two-users coalitions with $\left|\mathcal{B}_{\mathrm{iter}}\right| = \left|\mathcal{B}^{\mathrm{I}}_{\mathrm{iter}}\right| + \left|\mathcal{B}^{\mathrm{II}}_{\mathrm{iter}}\right|$.
For user $k$, the computational complexity to locate a strictly preferred leaving and joining operation is $\mathcal{O}\left(2(\left|\mathcal{B}^{\mathrm{I}}_{\mathrm{iter}}\right| + 1) ({M_{{\mathrm{BS}}} - 2{M_{\min }}})\right)$ in the worst case and the counterpart to locate a strictly preferred switch operation is $\mathcal{O}\left(4\left|\mathcal{B}^{\mathrm{II}}_{\mathrm{iter}}\right| ({M_{{\mathrm{BS}}} - 2{M_{\min }}})^2\right)$ in the worst case.
As a result, the computational complexity in each iteration of our proposed coalition formation algorithm is $\mathcal{O}\left( 2(\left|\mathcal{B}^{\mathrm{I}}_{\mathrm{iter}}\right| + 1) ({M_{{\mathrm{BS}}}- 2{M_{\min }}})  + 4\left|\mathcal{B}^{\mathrm{II}}_{\mathrm{iter}}\right| ({M_{{\mathrm{BS}}} - 2{M_{\min }}})^2\right)$ in the worst case, which is substantially low compared to that of the exhaustive search. i.e., \eqref{C6:Complexity}.

\subsection{Second Stage: Digital Precoder and Power Allocation Design}

\noindent\textbf{\underline{{ZF Digital Precoder}}}

Given the obtained user grouping strategy $\mathcal{B}^{*} = \{S_1^{*},\ldots,S_{N_{\mathrm{RF}}}^{*}\}$ and the antenna allocation strategy $\Pi^{*}$ in the first stage, we can obtain the effective channel matrix ${{\mathbf{\widetilde H}}} \in \mathbb{C}^{{N_{\mathrm{RF}} \times K}}$ via uplink pilot transmission.
We adopt a ZF digital precoder to suppress the inter-group interference.
Since there might be more than one users in each group $S_r^{*} \in \mathcal{B}^{*}$, we perform singular value decomposition (SVD) on the equivalent channel for each NOMA group $S_r^{*}$ with $\left|S_r^{*}\right| = 2$.
In particular, let ${{\mathbf{\widetilde H}}}_r \in \mathbb{C}^{{N_{\mathrm{RF}} \times \left|S_r^{*}\right|}}$, $\forall r$, denotes the effective channel matrix for all the $\left|S_r^{*}\right|$ users in the coalition $S_r^{*}$.
We have the SVD for ${{\mathbf{\widetilde H}}}_r$ as follows:
\begin{equation}
{{\mathbf{\widetilde H}}}_r^{\mathrm{H}} = {\mathbf{U}_r}{\mathbf{\Sigma} _r}\mathbf{V}_r^{\mathrm{H}},
\end{equation}
where ${\mathbf{U}_r} = {\left[{{\bf{u}}_{r,{1}}}, \cdots ,{{\bf{u}}_{r,{\left|S_r\right|}}}\right]} \in \mathbb{C}^{{ \left|S_r^{*}\right| \times \left|S_r^{*}\right|}}$ is the left singular matrix, ${\mathbf{\Sigma} _r} \in \mathbb{R}^{{ \left|S_r^{*}\right| \times N_{\mathrm{RF}}}}$ is the singular value matrix with its diagonal entries as singular values in descending order, and ${\mathbf{V}_r} \in \mathbb{C}^{{ N_{\mathrm{RF}} \times N_{\mathrm{RF}}}}$ is the right singular matrix.
Then, the equivalent channel vector of the NOMA group $S_r^{*}$ is given by
\begin{equation}
{{\mathbf{\hat h}}}_r = {{\mathbf{\widetilde H}}}_r {{\bf{u}}_{r,{1}}} \in \mathbb{C}^{{N_{\mathrm{RF}} \times 1}},
\end{equation}
where ${{\bf{u}}_{r,{1}}} \in \mathbb{C}^{{ \left|S_r^{*}\right| \times 1}}$ is the first left singular vector corresponding to the maximum singular value.
Note that, the equivalent channel for a single-user coalition with $\left|S_r^{*}\right| = 1$ can be directly given with its effective channel, i.e., ${{\mathbf{\hat h}}}_r = {{\mathbf{\widetilde h}}}_r$.
Now, the equivalent channel for all the coalitions on all the RF chains can be given by
\begin{equation}
{\bf{\hat H}} = {\left[ {
{{\bf{\hat{h}}}_1}, \ldots,{{\bf{\hat{h}}}_{N_{\mathrm{RF}}}}
} \right]} \in \mathbb{C}^{{ {N_{\mathrm{RF}}} \times N_{\mathrm{RF}}}}.
\end{equation}
Furthermore, the ZF digital precoder can be obtained by
\begin{equation}
{\mathbf{G}} = {\bf{\hat H}}^{\mathrm{H}}\left({\bf{\hat H}}{\bf{\hat H}}^{\mathrm{H}}\right)^{-1} \in \mathbb{C}^{{ N_{\mathrm{RF}} \times {N_{\mathrm{RF}}} }},
\end{equation}
where ${\mathbf{G}} = \left[ {
{{{\mathbf{g}}_1}},\ldots ,{{{\mathbf{g}}_{{N_{\mathrm{RF}}}}}}
} \right]$ and ${{{\mathbf{g}}_r}}$ denotes the digital precoder shared by all the user in $S_r^{*}$.

\noindent\textbf{\underline{{Power Allocation Design}}}

Given the effective channel matrix ${{\mathbf{\widetilde H}}}_r$ and the digital precoder ${\mathbf{G}}$, the optimal power allocation can be formulated as the following optimization problem:
\begin{align} \label{C6:ResourceAllocation_Power}
&\underset{{p_{k,r}}}{\maxo} \;\;R_{\mathrm{sum}}\\
\mbox{s.t.}\;\;
&\mbox{C1: } {p_{k,r}} \ge 0, \forall k,r,
\;\;\mbox{C2: } \sum\limits_{k = 1}^K\sum\limits_{r = 1}^{N_{{\mathrm{RF}}}} {{u_{k,r}^{*}}{p_{k,r}}}  \le {p_{{\mathrm{BS}}}}, \notag\\
&\mbox{C3: } {u_{k,r}^{*}}R_{k,i,r} \ge {u_{k,r}^{*}}R_{i,r}, \forall i > k, \forall r,\;\;\mbox{C4: } \sum\limits_{r = 1}^{{N_{{\mathrm{RF}}}}} R_{k,r} \ge R_{\mathrm{min}}, \forall k, \notag
\end{align}
where $R_{k,r}$, $R_{k,i,r}$, and $R_{\mathrm{sum}}$ are given by \eqref{C6:DLIndividualRate1}, \eqref{C6:IndividualRate2}, and \eqref{C6:SumRate} with replacing ${u_{k,r}}$ with ${u_{k,r}^{*}}$, respectively.
Note that the user scheduling ${u_{k,r}^{*}}$ can be easily obtained by the following mapping:
\begin{equation}
{u_{k,r}^{*}} = \left\{ {\begin{array}{*{20}{c}}
1,&{{\mathrm{if}}\;k \in S_r^{*}},\\
0,&{{\mathrm{otherwise}}}.
\end{array}} \right.
\end{equation}
Constraint C2 is the total power constraint at the BS.
Constraint C3 is introduced to guarantee the success of SIC decoding.
Note that constraint C3 are inactive when $u_{k,r}^{*} = 0$ or $u_{i,r}^{*} = 0$.
Constraint C4 is imposed to guarantee a minimum rate requirement for each user.

The formulated problem is a non-convex optimization, but can be equivalently transformed to a canonical form of D.C. programming \cite{WeiTCOM2017} as follows:
\begin{align} \label{C6:ResourceAllocation_Power1}
&\underset{{p_{k,r}}}{\mino} \;\;{{H_1}\left( {\bf{p}} \right) - {H_2}\left( {\bf{p}} \right)}\\
\mbox{s.t.}\;\;
&\mbox{C1: } {p_{k,r}} \ge 0, \forall k,r,
\;\;\mbox{C2: } \sum\limits_{k = 1}^K\sum\limits_{r = 1}^{N_{{\mathrm{RF}}}}  {{u_{k,r}^{*}}{p_{k,r}}}  \le {p_{{\mathrm{BS}}}}, \notag\\
&\mbox{C3: } u_{k,r}^*u_{i,r}^*{{{{\left| {{{{\bf{\widetilde h}}}_i^{\mathrm{H}}}{{\bf{g}}_r}} \right|}^2}}} D_2^{k,i,r}\left( {\bf{p}} \right) \le u_{k,r}^*u_{i,r}^*{{{{\left| {{{{\bf{\widetilde h}}}_k^{\mathrm{H}}}{{\bf{g}}_r}} \right|}^2}}}D_2^{i,i,r}\left( {\bf{p}} \right), \forall i > k, \forall r,\notag\\
&\mbox{C4: } {u_{k,r}^*{p_{k,r}}{{{{\left| {{{{\bf{\widetilde h}}}_k^{\mathrm{H}}}{{\bf{g}}_r}} \right|}^2}}} \ge \left( {{2^{u_{k,r}^*{R_{{\mathrm{min}}}}}} - 1} \right)D_2^{k,k,r}\left( {\bf{p}} \right)}, \forall k,\notag
\end{align}
where ${\bf{p}} \in \mathbb{R}^{{ K {N_{\mathrm{RF}}}} \times 1}$ denotes the collection of ${p_{k,r}}$, ${H_1}\left( {\bf{p}} \right)$ and ${H_2}\left( {\bf{p}} \right)$ are given by
\begin{equation}
{H_1}\left( {\bf{p}} \right)= -\sum\limits_{k = 1}^K {\sum\limits_{r = 1}^{{N_{RF}}} {{{\log }_2}\left( D_1^{k,k,r}\left( {\bf{p}} \right) \right)} } \;\;\mathrm{and} \;\;
{H_2}\left( {\bf{p}} \right)= -\sum\limits_{k = 1}^K {\sum\limits_{r = 1}^{{N_{RF}}} {{{\log }_2}\left( D_2^{k,k,r}\left( {\bf{p}}\right) \right)} },
\end{equation}
respectively, and $D_1^{k,i,r}\left( {\bf{p}} \right)$ and $D_2^{k,i,r}\left( {\bf{p}} \right)$ are given by
\begin{align}
D_1^{k,i,r}\left( {\bf{p}} \right) =& \sum\limits_{r' \ne r}^{{N_{{\mathrm{RF}}}}} {{{\left| {{{{\bf{\widetilde h}}}_k^{\mathrm{H}}}{{\bf{g}}_{r'}}} \right|}^2}\sum\limits_{d = 1}^K {u_{d,r'}^*{p_{d,r'}}} }  + \sum\limits_{d = 1}^i {u_{d,r}^*{p_{d,r}}} {\left| {{{{\bf{\widetilde h}}}_k^{\mathrm{H}}}{{\bf{g}}_r}} \right|^2} + {\sigma ^2} \;\mathrm{and} \notag\\
D_2^{k,i,r}\left( {\bf{p}} \right) =& \sum\limits_{r' \ne r}^{{N_{{\mathrm{RF}}}}} {{{\left| {{{{\bf{\widetilde h}}}_k^{\mathrm{H}}}{{\bf{g}}_{r'}}} \right|}^2}\sum\limits_{d = 1}^K {u_{d,r'}^*{p_{d,r'}}} } + \sum\limits_{d = 1}^{i - 1} {u_{d,r}^*{p_{d,r}}} {\left| {{{{\bf{\widetilde h}}}_k^{\mathrm{H}}}{{\bf{g}}_r}} \right|^2} + {\sigma ^2},
\end{align}
respectively.
Note that ${H_1}\left( {\bf{p}} \right)$ and ${H_2}\left( {\bf{p}} \right)$ are differentiable convex functions with respect to ${\bf{p}}$.
Thus, for any feasible solution ${\bf{p}}^{\mathrm{iter}}$ in the $\mathrm{iter}$-th iteration, we can obtain a lower bound for ${H_2}\left( {\bf{p}} \right)$, which is given by
\begin{equation}
{H_2}\left( {\bf{p}} \right) \ge {H_2}\left( {\bf{p}}^{\mathrm{iter}} \right) + {\nabla _{\bf{p}}}{H_2}{\left( {\bf{p}}^{\mathrm{iter}} \right)^{\mathrm{T}}}\left( {\bf{p}} - {\bf{p}}^{\mathrm{iter}} \right),
\end{equation}
with ${\nabla _{\bf{p}}}{H_2}\left( {{{\bf{p}}^{{\mathrm{iter}}}}} \right) = \left\{ {\frac{{\partial {H_2}\left( {\bf{p}} \right)}}{{\partial {p_{k,r}}}}\left| {_{{{\bf{p}}^{{\mathrm{iter}}}}}} \right.} \right\}_{k = 1,r = 1}^{k = K,r = {N_{{\mathrm{RF}}}}} \in \mathbb{R}^{{ K {N_{\mathrm{RF}}}} \times 1}$ denoting the gradient of ${H_2}\left( \cdot\right)$ with respect to ${\bf{p}}$ and
\begin{equation}
{\frac{{\partial {H_2}\left( {\bf{p}} \right)}}{{\partial {p_{k,r}}}}\left| {_{{{\bf{p}}^{{\mathrm{iter}}}}}} \right.} = -\frac{1}{\log(2)} \sum\limits_{k' = 1}^K {\sum\limits_{r' \ne r}^{{N_{RF}}} {\frac{{{{\left| {{{{\bf{\widetilde h}}}_{k'}^{\mathrm{H}}}{{\bf{g}}_r}} \right|}^2}u_{k,r}^*}}{D_2^{k',k',r'}\left( {{{\bf{p}}^{{\mathrm{iter}}}}} \right)}} }- \frac{1}{\log(2)} \sum\limits_{k' = k + 1}^K {\frac{{{{\left| {{{{\bf{\widetilde h}}}_{k'}^{\mathrm{H}}}{{\bf{g}}_r}} \right|}^2}u_{k,r}^*}}{{D_2^{k',k',r}\left( {{{\bf{p}}^{{\mathrm{iter}}}}} \right)}}}.
\end{equation}
Then, we obtain an upper bound for the minimization problem in \eqref{C6:ResourceAllocation_Power1} by solving the following convex optimization problem:
\begin{align} \label{C6:ResourceAllocation_Power2}
&\underset{{p_{k,r}}}{\mino} \;\;{H_1}\left( {\bf{p}} \right) - {H_2}\left( {\bf{p}}^{\mathrm{iter}} \right) -{\nabla _{\bf{p}}}{H_2}{\left( {\bf{p}}^{\mathrm{iter}} \right)^{\mathrm{T}}}\left( {\bf{p}} - {\bf{p}}^{\mathrm{iter}} \right)\\
\mbox{s.t.}\;\;
&\mbox{C1: } {p_{k,r}} \ge 0, \forall k,r,
\;\;\mbox{C2: } \sum\limits_{k = 1}^K \sum\limits_{r = 1}^{N_{{\mathrm{RF}}}}  {{u_{k,r}^{*}}{p_{k,r}}}  \le {p_{{\mathrm{BS}}}}, \forall r, \notag\\
&\mbox{C3: } u_{k,r}^*u_{i,r}^*{{{{\left| {{{{\bf{\widetilde h}}}_i^{\mathrm{H}}}{{\bf{g}}_r}} \right|}^2}}} D_2^{k,i,r}\left( {\bf{p}} \right) \le u_{k,r}^*u_{i,r}^*{{{{\left| {{{{\bf{\widetilde h}}}_k^{\mathrm{H}}}{{\bf{g}}_r}} \right|}^2}}}D_2^{i,i,r}\left( {\bf{p}} \right), \forall i > k, \notag\\
&\mbox{C4: } {u_{k,r}^*{p_{k,r}}{{{{\left| {{{{\bf{\widetilde h}}}_k^{\mathrm{H}}}{{\bf{g}}_r}} \right|}^2}}} \ge \left( {{2^{u_{k,r}^*{R_{{\mathrm{min}}}}}} - 1} \right)D_2^{k,k,r}\left( {\bf{p}} \right)}, \forall k.\notag
\end{align}

\begin{table}
	\begin{algorithm} [H]                    
		\caption{Power Allocation Algorithm}     
		\label{C6:alg2}                             
		\begin{algorithmic} [1]
			\small          
			\STATE \textbf{Initialization}\\
			Initialize the convergence tolerance $\epsilon$, the maximum number of iterations $\mathrm{iter}_\mathrm{max}$, the iteration index $\mathrm{iter} = 1$, and the initial feasible solution $\mathbf{{p}}^{\mathrm{iter}}$.
			\REPEAT
			\STATE Solve \eqref{C6:ResourceAllocation_Power2} for a given $\mathbf{{p}}^{\mathrm{iter}}$ to obtain the power allocation $\mathbf{{p}}^{\mathrm{iter}+1}$.
			\STATE Set $\mathrm{iter}=\mathrm{iter}+1$.
			\UNTIL
			$\mathrm{iter} = \mathrm{iter}_\mathrm{max}$ or ${\left\| {\mathbf{{p}}^{\mathrm{iter}} - \mathbf{{p}}^{\mathrm{iter}-1}} \right\|}\le \epsilon$.
			\STATE Return the solution $\mathbf{{p}}^{*} = \mathbf{{p}}^{\mathrm{iter}}$.
		\end{algorithmic}
	\end{algorithm}
\end{table}

Now, the problem in \eqref{C6:ResourceAllocation_Power2} is a convex programming problem which can be solved efficiently by standard convex problem solvers, such as CVX \cite{cvx}.
Based on D.C. programming \cite{VucicProofDC}, an iteration algorithm is developed to tighten the obtained upper bound in \eqref{C6:ResourceAllocation_Power2}, which is shown in \textbf{Algorithm} \ref{C6:alg2}.
The power allocation algorithm is initialized with ${\bf{p}}^{1}$, which is obtained by solving the problem in \eqref{C6:ResourceAllocation_Power2} with ${H_1}\left( {\bf{p}} \right)$ as the objective function.
In the $\mathrm{iter}$-th iteration, the updated solution $\mathbf{{p}}^{\mathrm{iter}+1}$ is obtained by solving the problem in \eqref{C6:ResourceAllocation_Power2} with $\mathbf{{p}}^{\mathrm{iter}}$.
The algorithm will terminate when the maximum iteration number is reached, i.e., $\mathrm{iter} = \mathrm{iter}_\mathrm{max}$, or the change of power allocation solutions between adjacent iterations becomes smaller than a given convergence tolerance, i.e., ${\left\| {\mathbf{{p}}^{\mathrm{iter}} - \mathbf{{p}}^{\mathrm{iter}-1}} \right\|}\le \epsilon$.
Note that, with differentiable convex functions ${H_1}\left( {\bf{p}} \right)$ and ${H_2}\left( {\bf{p}} \right)$, the proposed power allocation algorithm converges to a stationary point with a polynomial time computational complexity\cite{VucicProofDC}.

\section{Simulation Results}

In this section, we evaluate the performance of our proposed multi-beam mmWave-NOMA scheme via simulations.
Unless specified otherwise, the simulation setting is given as follows.
We consider an mmWave system with carrier frequency at $28$ GHz.
There are one LOS path and $L = 10$ NLOS paths for the channel model in \eqref{C6:ChannelModel1} and the path loss models for LOS and NLOS paths follow Table I in \cite{AkdenizChannelMmWave}.
A single hexagonal cell with a BS located at the cell center with a cell radius of $200$ m is considered.
All the $K$ users are randomly and uniformly distributed in the cell unless further specified.
The maximum transmit power of the BS is 46 dBm, i.e., $p_{\mathrm{BS}} \le 46$ dBm and the noise power at all the users is assumed identical with $\sigma^2 = -80$ dBm.
We assume that there are $M_{\mathrm{BS}} = 100$ antennas equipped at the BS and $M_{\mathrm{UE}} = 10$ antennas equipped at each user terminals.
The minimum number of antennas allocated to each user is assumed as 10\% of $M_{\mathrm{BS}}$, i.e., ${M_{\min }} = \frac{1}{10} M_{\mathrm{BS}}$.
The minimum rate requirement ${R_{\min }}$ is selected from a uniform distributed random variable with the range of $(0,5]$ bit/s/Hz.
The simulation results shown in the sequel are averaged over different realizations of the large scaling fading, the small-scale fading, and the minimum rate data requirement.

To show the effectiveness of our proposed two-stage resource allocation design, we compare the performance in the two stages to their corresponding optimal benchmark, respectively.
In particular, we compare the performance of our proposed coalition formation algorithm to the optimal exhaustive user grouping and antenna allocation.
Given the same obtained user grouping and antenna allocation strategy in the first stage, the performance of our proposed digital precoder and power allocation is compared to the optimal dirty chapter coding (DPC) scheme\cite{VishwanathDuality} in the second stage.
On the other hand, to demonstrate the advantages of our proposed multi-beam mmWave-NOMA scheme, we consider two baseline schemes in our simulations.
For baseline 1, the conventional mmWave-OMA scheme is considered where each RF chain can be allocated to at most one user.
To accommodate all the $K$ users on $N_{\mathrm{RF}}$ RF chains with $N_{\mathrm{RF}} \le K \le 2N_{\mathrm{RF}}$, users are scheduled into two time slots.
In each time slot, we perform the downlink mmWave-OMA transmission via a ZF digital precoder and a power allocation design without intra-group interference terms and constraint C3 in \eqref{C6:ResourceAllocation_Power}.
Note that, for a fair comparison, a user can be allocated to at most one time slot in our considered mmWave-OMA scheme since a user can only be associated with at most one RF chain in our proposed multi-beam mmWave-NOMA scheme.
For baseline 2, the single-beam mmWave-NOMA scheme is considered where only the users' LOS AOD within the same $-3$ dB main beamwidth can be considered as a NOMA group\footnote{It has been demonstrated that, in mmWave systems, the angle difference based user pairing \cite{zhouperformanceII} is superior to the channel gain based user pairing as in conventional NOMA schemes\cite{Dingtobepublished}.}.
If the resulting number of users and single-beam NOMA groups is equal or smaller than $N_{\mathrm{RF}}$, mmWave-OMA transmission is used for spatial multiplexing.
Otherwise, all the users and single-beam NOMA groups are scheduled into two time slots and then we perform mmWave-OMA transmission on each time slot.
For a fair comparison, both our proposed scheme and the baseline schemes are based on the LOS CSI only, i.e., $\left\{ {\theta _{1,0}}, \ldots ,{\theta _{K,0}} \right\}$ and $\left\{ {\alpha _{1,0}}, \ldots ,{\alpha _{K,0}} \right\}$.

\subsection{Convergence of the Proposed Coalition Formation Algorithm}
\begin{figure}[t!]
\centering
\includegraphics[width=4.5in]{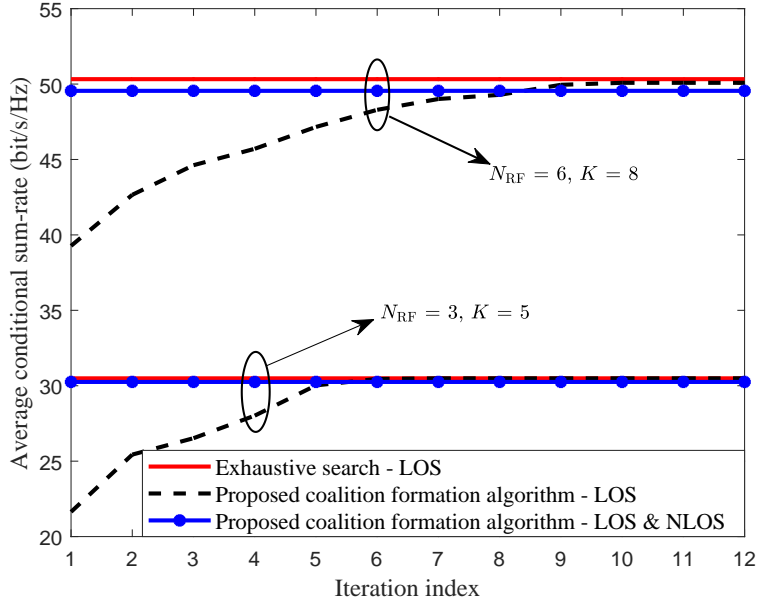}
\caption{Convergence of our proposed coalition formation algorithm in the first stage.}
\label{C6:Convergence}
\end{figure}

Figure \ref{C6:Convergence} illustrates the average conditional system sum-rate in the first stage versus the iteration index to show the convergence of our proposed coalition formation algorithm for user grouping and antenna allocation.
The performance of the exhaustive search on user grouping and antenna allocation is also shown as a benchmark.
Due to the large computational complexity of the optimal exhaustive search, we consider two simulation cases with ${N_{{\mathrm{RF}}}} = 3, K = 5$ and ${N_{{\mathrm{RF}}}} = 6, K = 8$.
The BS transmit power is set as $p_{\mathrm{BS}} = 30$ dBm.
Note that our proposed coalition formation algorithm is applicable to the case with a larger number of RF chains and users as shown in the following simulation cases.
We can observe that the average conditional system sum-rate of our proposed coalition formation algorithm monotonically increases with the iteration index.
Particularly, it can converge to a close-to-optimal performance compared to the exhaustive search within only $10$ iterations on average.
This demonstrates the fast convergence and the effectiveness of our proposed coalition formation algorithm.
With the user grouping and antenna allocation strategy obtained from our proposed coalition formation algorithm, the average conditional sum-rate performance of a multipath channel with both LOS and NLOS paths is also shown.
It can be observed that the performance degradation due to the ignorance of NLOS paths in the first stage is very limited, especially for the case with small numbers of RF chains and users.
In fact, the channel gain of the LOS path is usually much stronger than that of the NLOS paths in mmWave frequency bands due to the high attenuation in reflection and penetration, c.f. \cite{Rappaport2013,WangBeamSpace2017}.
With increasing ${N_{{\mathrm{RF}}}}$ and $K$, the probability that a NLOS path of a user lying in the beam of another user increases, which leads to a more severe inter-beam interference and a further degraded system sum-rate.
Besides, the analog beamforming of the massive antennas array at the BS can focus the energy on the LOS AOD and reduce the signal leakage to the NLOS AODs, and hence further reduce the impact of NLOS paths on the system performance.

\begin{figure}[t!]
\centering
\includegraphics[width=4.5in]{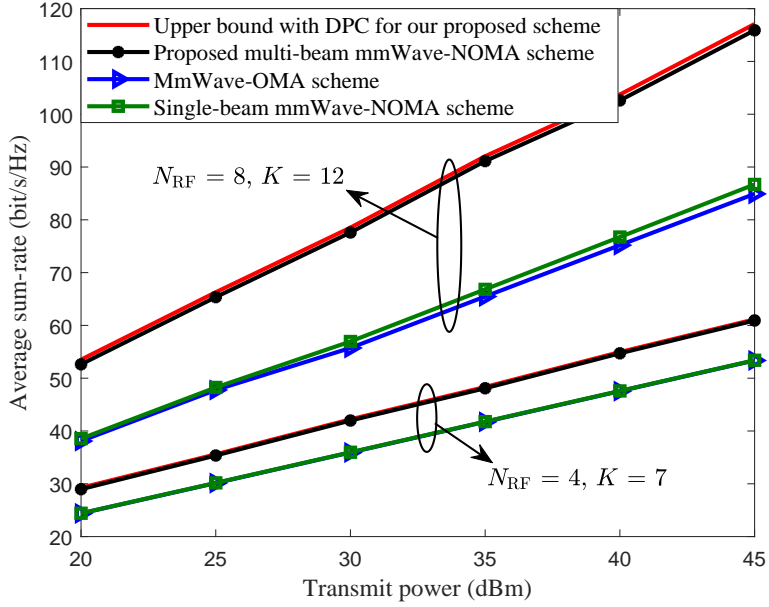}
\caption{Average system sum-rate (bit/s/Hz) versus the transmit power (dBm) at the BS.}
\label{C6:Case2}
\end{figure}

\subsection{Average System Sum-rate versus Transmit Power at the BS}
Figure \ref{C6:Case2} illustrates the average system sum-rate versus the total transmit power  $p_{\mathrm{BS}}$ at the BS.
The performance for our proposed scheme with the optimal DPC in the second stage is also shown as the performance benchmark.
Two baseline schemes are considered for comparison and two simulation cases with ${N_{{\mathrm{RF}}}} = 4, K = 7$ and ${N_{{\mathrm{RF}}}} = 8, K = 12$ are included.
We observe that the average system sum-rate monotonically increases with the transmit power since the proposed algorithm can efficiently allocate the transmit power when there is a larger transmit power budget.
Besides, it can be observed that the performance of our proposed resource allocation scheme can approach its upper bound achieved by DPC in the second stage.
It is owing to the interference management capability of our proposed resource allocation design.
In particular, the designed user grouping and antenna allocation algorithm is able to exploit the users' AOD distribution to avoid a large inter-group interference.
Besides, the adopted ZF digital precoder can further suppress the inter-group interference.
Within each NOMA group, the intra-group interference experienced at the strong user can be controlled with the SIC decoding and the intra-group interference at the weak user is very limited owing to our proposed power allocation design.

Compared to the existing single-beam mmWave-NOMA scheme and the mmWave-OMA scheme, the proposed multi-beam mmWave-NOMA scheme can provide a higher spectral efficiency.
This is because that the proposed multi-beam mmWave-NOMA scheme is able to pair two NOMA users with arbitrary AODs, which can generate more NOMA groups and exploit the multi-user diversity more efficiently.
We note that the performance of the single-beam mmWave-NOMA scheme is only slightly better than that of the mmWave-OMA scheme.
It is due to the fact that the probability of multiple users located in the same analog beam is very low\cite{zhouperformanceII}.
Note that, the average system sum-rate of mmWave systems is much larger than the typical value in microwave systems\cite{WeiTCOM2017}.
It is due to the array gain brought by the large number of antennas equipped at both the BS and user terminals\footnote{We note that the average sum-rate per user obtained in our simulations is comparable to the simulation results in the literature in the field of mmWave communications \cite{AlkhateebPrecoder2015,zhao2017multiuser}.}.

\begin{figure}[t!]
\centering
\includegraphics[width=4.5in]{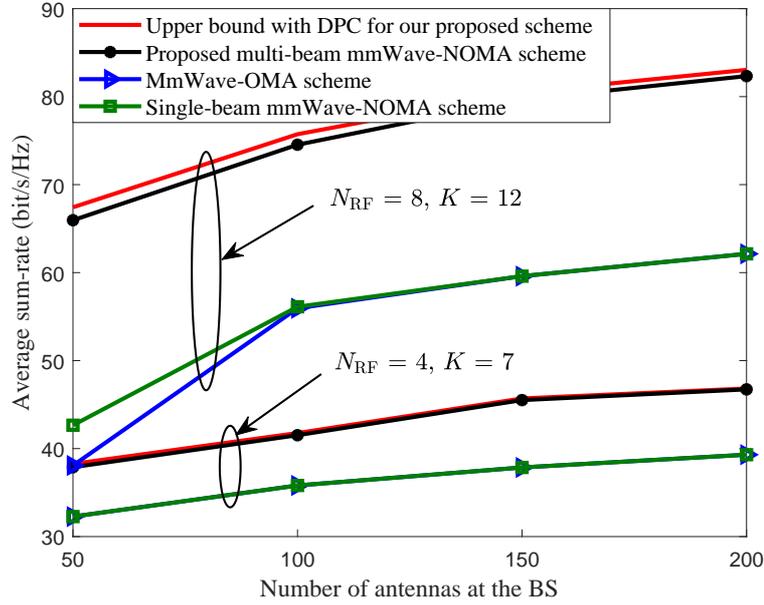}
\caption{Average system sum-rate (bit/s/Hz) versus the number of antennas equipped at the BS.}
\label{C6:Case3}
\end{figure}
\subsection{Average System Sum-rate versus Number of Antennas at the BS}

Figure \ref{C6:Case3} illustrates the average system sum-rate versus the number of antennas  $M_{\mathrm{BS}}$ equipped at the BS.
Note that, we fix the number of RF chains and only vary the number of antennas equipped at the BS ranging from $50$ to $200$ for the considered hybrid mmWave system.
The simulation setup is the same as Figure \ref{C6:Case2}, except that we fix the transmit power as $p_{\mathrm{BS}} = 30$ dBm.
We observe that the average system sum-rate increases monotonically with the number of antennas equipped at the BS due to the increased array gain.
Compared to the two baseline schemes, a higher spectral efficiency can be achieved by the proposed multi-beam NOMA scheme due to its higher flexibility of user pairing and enabling a more efficient exploitation of multi-user diversity.
In addition, it can be observed that the performance of the single-beam mmWave-NOMA scheme is almost the same as that of the mmWave-OMA scheme and it is only slightly better than that of the mmWave-OMA scheme when there is a small number of antennas.
In fact, the beamwidth is larger for a small number of antennas, which results in a higher probability for serving multiple NOMA users via the same analog beam.
Therefore, only in the case of a small number of antennas, the single-beam mmWave-NOMA scheme can form more NOMA groups and can provide a higher spectral efficiency than the mmWave-OMA scheme.

\subsection{Average System Sum-rate versus User Density}

\begin{figure}[t!]
\centering
\includegraphics[width=2.5in]{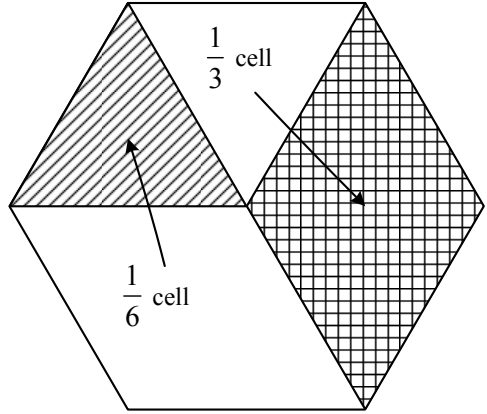}
\caption{An illustration of the selected portions of a cell.}
\label{C6:SelectedPortions}
\end{figure}

\begin{figure}[t!]
\centering
\includegraphics[width=4.5in]{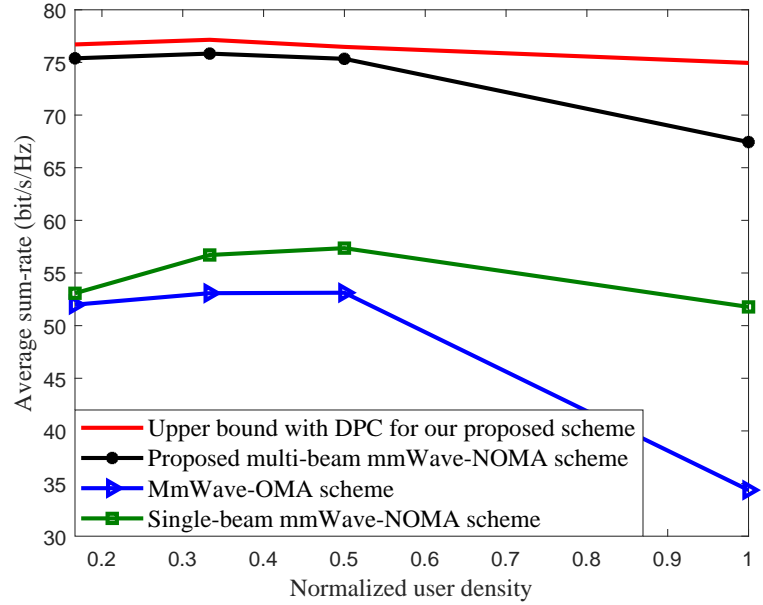}
\caption{Average system sum-rate (bit/s/Hz) versus the normalized user density.}
\label{C6:Case4}
\end{figure}

Recall that the performance of the baseline single-beam mmWave-NOMA scheme is related to all the users' AOD distribution depending on the user density.
It is interesting to investigate the relationship between the system performance and the user density $\rho = \frac{K}{\mathrm{Area}}$, where $\mathrm{Area}$ denotes the area of the selected portions in the cell.
Figure \ref{C6:Case4} illustrates the average system sum-rate versus the normalized user density.
To facilitate the simulations, we keep the number of RF chains as ${N_{{\mathrm{RF}}}} = 8$ and the number of users as $K = 12$ and change the area by selecting different portions of the cell, as illustrated in Figure \ref{C6:SelectedPortions}.
In the selected area, all the $K$ users are uniformly and randomly deployed.
For instance, the minimum normalized user density is obtained with all the $K$ users randomly scattered in the whole hexagonal cell, while the maximum normalized user density is obtained with all the $K$ users randomly deployed in the $\frac{1}{6}$ cell.
The total transmit power at the BS is $p_{\mathrm{BS}} = 30$ dBm.
We can observe that the average system sum-rate of the mmWave-OMA scheme decreases with the user density due to the high channel correlation among users, which introduces a higher inter-user interference.
In addition, the average system sum-rate of the single-beam mmWave-NOMA scheme firstly increases and then decreases with user density.
It is because that, in the low density regime, increasing the user density can provide a higher probability of multiple users located in the same analog beam and thus more NOMA groups can be formed.
On the other hand, in the high density regime, the inter-group interference becomes more severe with the increasing user density.
Besides, it can be observed that our proposed multi-beam mmWave-NOMA scheme can offer a higher average system sum-rate compared to the baseline schemes in the whole user density range.
Note that although the performance of the proposed scheme decreases with the user density, it decreases much slower than that of the mmWave-OMA scheme.
In fact, the inter-group interference becomes severe when more than two users are located in the same analog beam due to the high user density.
However, the proposed scheme can still exploit the multi-user diversity and provide a substantial system performance gain compared to the mmWave-OMA scheme.

\section{Summary}
In this chapter, we proposed a multi-beam NOMA framework for hybrid mmWave systems and studied the resource allocation design for the proposed multi-beam mmWave-NOMA scheme.
In particular, a beam splitting technique was proposed to generate multiple analog beams to serve multiple users for NOMA transmission.
Our proposed multi-beam mmWave-NOMA scheme is more practical than the conventional single-beam mmWave-NOMA schemes, which can flexibly pair NOMA users with an arbitrary AOD distribution.
As a result, the proposed scheme can generate more NOMA groups and hence is able to explore the multi-user diversity more efficiently.
To unlock the potential of the proposed multi-beam NOMA scheme, a two-stage resource allocation was designed to improve the system performance.
{More specifically}, a coalition formation algorithm based on coalition formation game theory was developed for user grouping and antenna allocation in the first stage, while a power allocation based on a ZF digital precoder was proposed to maximize the system sum-rate in the second stage.
Simulation results demonstrated that the proposed resource allocation can achieve a close-to-optimal performance in each stage and our proposed multi-beam mmWave NOMA scheme can offer a substantial system sum-rate improvement compared to the conventional mmWave-OMA scheme and the single-beam mmWave-NOMA scheme.

\chapter{NOMA for Hybrid MmWave Communication Systems with Beamwidth Control}\label{C7:chapter7}

\section{Introduction}
In the last chapter, we proposed a multi-beam non-orthogonal multiple access (NOMA) scheme for hybrid millimeter wave (mmWave) communication systems and studied its resource allocation design to maximize the system sum-rate.
Due to the prohibitively large power consumption of RF chains, even with hybrid architectures, the energy efficiency of mmWave systems, which is defined as the average number of bits delivered by consuming one joule of energy (bit/J) is generally limited \cite{GaoSubarray,lin2016energy} and remains to be further improved.
Hence, it is a fundamentally important issue to be tackled for realizing mmWave communications in future wireless networks, which is the main focus of this chapter.

{Applying NOMA in hybrid mmWave communications is potential to improve the system energy efficiency, especially considering the inevitable blockage effects in mmWave channels.}
In particular, when the line-of-sight (LOS) path of a user is blocked, the radio frequency (RF) chain serving this user can only rely on non-line-of-sight (NLOS) paths to provide communications.
However, the channel gain of the NLOS paths is usually much weaker than that of the LOS path in mmWave frequency bands, due to the high attenuation in reflection and penetration\cite{Rappaport2013}.
As a result, the dedicated RF chain serving this user wastes a lot of system resources for achieving only a low data rate, which translates into a low energy efficiency.
As demonstrated in Chapter 3, the performance gain of NOMA over OMA originates from the near-far diversity, i.e., the discrepancy in channel gains among NOMA users \cite{Wei2018PerformanceGain}.
In general, when the LOS path of a user is blocked, it should be treated as a weak user in the context of NOMA.
Thus, the weak user can be clustered with a strong user possessing a LOS link to form a NOMA group serving by only one RF chain to exploit the near-far diversity.
Meanwhile, the original RF chain and its associated phase shifters (PSs) dedicated to the weak user become idle, which can substantially save the associated circuit power consumption.
Inspired by these observations, in this chapter, we further investigate applying NOMA in hybrid mmWave communications and study its energy-efficient resource allocation design.

Owing to the high carrier frequency in mmWave frequency band, massive numbers of antennas can be equipped at transceivers and thus the beamwidths of the associated analog beams are typically narrow \cite{BusariSurveyonMMWave}.
The narrow beams impose a fundamental limit when applying NOMA in hybrid mmWave communication systems.
In particular, only the users within the main lobe of an analog beam can be clustered as a NOMA group.
Otherwise, the user located in the sidelobe suffers a dramatically signal power attenuation and cannot maintain a sustainable communication link.
{As a result, only few users with similar angles-of-departure (AODs) can be clustered as a NOMA group and thus the potential energy efficiency gain brought by NOMA is very limited.}
When the beamwidth can be controlled through an appropriate analog beamformer design, widening the analog beamwidth can effectively increase the probability of two users locating in the same analog beam.
For instance, consider a system with an $N = 128$ antenna uniform linear array equipped at the base station (BS) and assume independent and identically distributed (i.i.d.) spatial directions of all the $K = 12$ users.
According to Lemma 1 in \cite{XinyuGaoLetter}, doubling the beamwidth can increase the probability of the existence of multiple users sharing the same analog beam from 0.41 to 0.67.
Therefore, our proposed beamwidth control can increase the number of potential NOMA groups, which facilitates the design of an efficient NOMA scheme to exploit the NOMA gain.

{However, beamwidth control introduces an inevitable power loss at the main beam direction.
Note that a larger main lobe power loss could degrade the achievable data rate for a given transmit power, which leads to a lower system energy efficiency.
As a result, there is a non-trivial trade-off between the beamwidth and the system performance.
Yet, such a study has not been reported in the literature.
Therefore, this chapter proposes a beamwidth control-based hybrid mmWave NOMA scheme to overcome the limitation of narrow analog beams.
In particular, we aim to control the analog beamwidth adaptively to accommodate more NOMA groups, which enables a higher system energy efficiency since more RF chains are switched to idle to reduce the system energy consumption.}

In this chapter, we propose an mmWave NOMA scheme using beamwidth control and design the resource allocation to maximize the system energy efficiency.
Our main contributions are summarized as follows:
\begin{itemize}
\item We propose two beamwidth control methods, i.e., conventional beamforming (CBF)-based and Dolph-Chebyshev beamforming (DCBF)-based \cite{Dolph1946,Duhamel1953,Drane1968DC} beamwidth control, which are applicable to the cases with constant-modulus PSs and amplitude-adjustable PSs, respectively.
    More importantly, we characterize their main lobe power losses due to beamwidth control.
\item Based on the two beamwidth control methods, we first study an asymptotically optimal analog beamformer design to minimize the main lobe power loss in a single-RF chain system, which serves as a foundation for the design of analog beamformer in multi-RF chain systems.
\item We develop a NOMA user scheduling algorithm for the proposed hybrid mmWave NOMA scheme adopting the coalition formation game theory\cite{SaadCoalitional2012,WangCoalitionNOMA} to achieve a stable user grouping strategy.
\item Adopting the obtained NOMA user grouping strategy, the energy-efficient digital precoder design is formulated as a non-convex optimization problem to maximize the system energy efficiency taking into account the minimum individual user data rate requirement.
    Utilizing the quadratic transformation\cite{ShenFP}, the considered problem is transformed to its equivalent form which facilitates the application of an iterative block coordinate ascent algorithm to achieve a locally optimal solution.
\item {We have conducted extensive simulations to demonstrate that the proposed scheme with appropriate beamwidth control offers a substantial energy efficiency gain over the conventional OMA and NOMA schemes without beamwidth control.
    In addition, we have also demonstrated that the proposed scheme with widen analog beamwidth is more robust against the beam
    alignment error compared to the baseline schemes.}
\end{itemize}

\section{System Model}
\subsection{System Model}
\begin{figure}[t]
\centering
\includegraphics[width=2.5in]{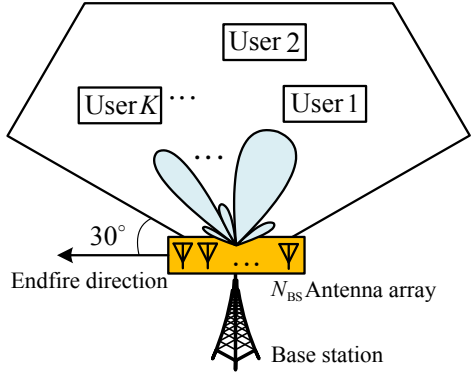}
\caption{The system model for downlink mmWave communications in a single-cell system.}
\label{C7:PortionCell}
\end{figure}

Consider downlink mmWave communications in a single-cell system with one base station (BS) and $K$ downlink users, as shown in Figure \ref{C7:PortionCell}.
The BS is located at the cell center with a cell radius of $D$ meters.
A commonly adopted uniform linear array (ULA)\cite{zhao2017multiuser} is employed at both the BS and user terminals, where the BS is equipped with $N_{\mathrm{BS}}$ antennas and each user is equipped with $N_{\mathrm{UE}}$ antennas.
To prevent an endfire beamforming \cite{van2002optimum} that sacrifices a large beamforming gain, we only consider the communication for users uniformly distributed in the $\frac{1}{3}$ cell, i.e., their angles of departure (AODs) range from 30 degrees to 150 degrees compared to endfire direction of the ULA at the BS, as illustrated in Figure \ref{C7:PortionCell}.
A hybrid fully-access architecture\footnote{In this chapter, we focus on the fully-connected hybrid architecture since it can provide a highly directional beamforming to compensate the severe path loss in mmWave frequency bands.
Besides, the analog beamformer design of fully-connected structure is a generalization of the other commonly used subarray architectures.
For example, the beamforming matrix of the partially-connected structure can be obtained by forcing the entry in the $n$-th row and the $r$-th column of the beamforming matrix of the fully-connected structure as zero if the $n$-th antenna is not connected with the $r$-th RF chain.} is adopted at both the BS and users, as shown in Figure \ref{C7:HybridNOMAStructure}.
In particular, each RF chain equipped at the BS can access all the $N_{\mathrm{BS}}$ antennas through $N_{\mathrm{BS}}$ PSs.
On the other hand, for the sake of a simple receiver design, each user has a single RF chain and connects to all its $N_{\mathrm{UE}}$ antennas via $N_{\mathrm{UE}}$ PSs.

\begin{figure}[t]
\centering
\subfigure[The transmitter structure at the BS.]
{\label{C7:HybridNOMAStructure:a} 
\includegraphics[width=0.9\textwidth]{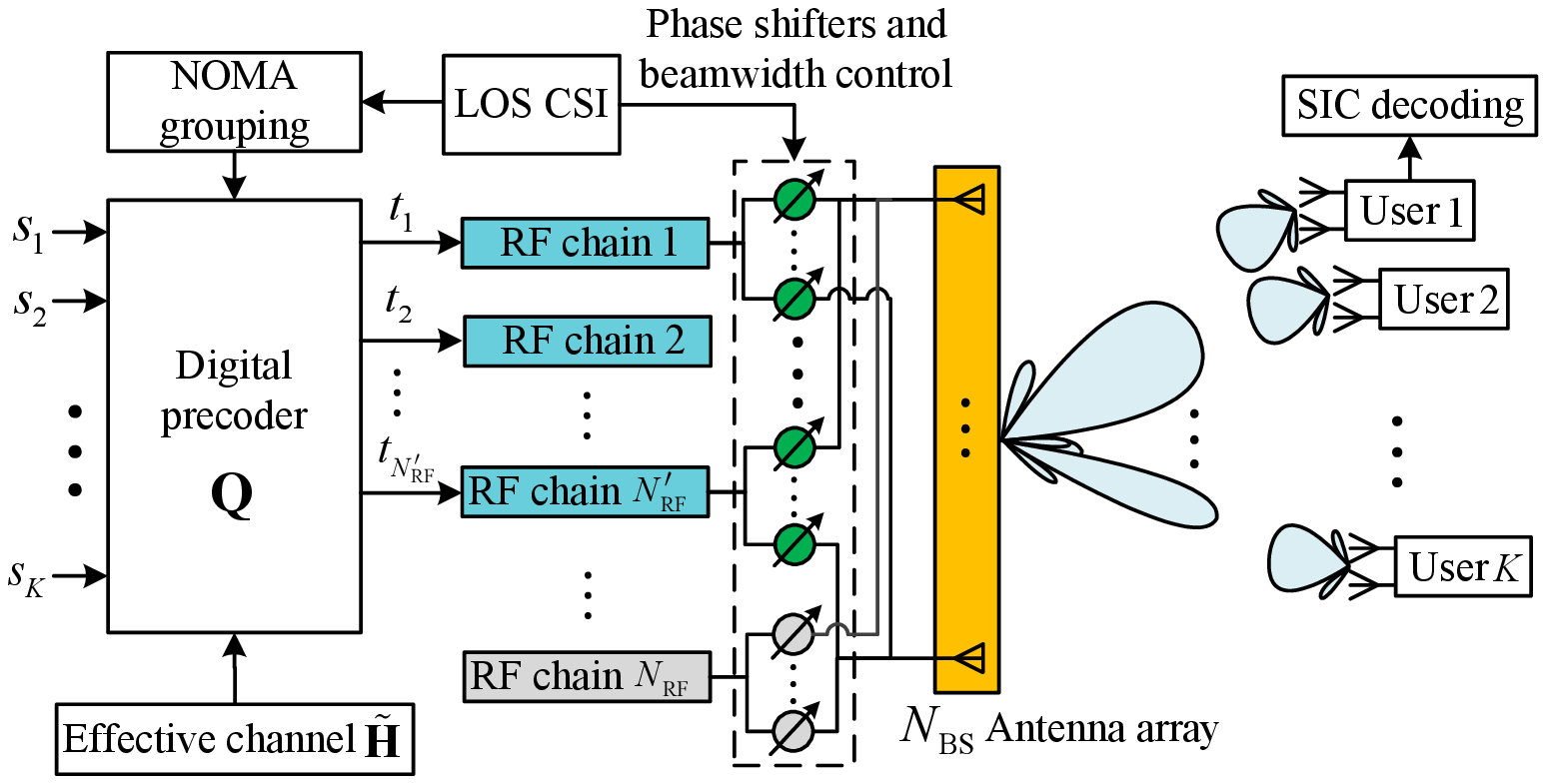}}
\subfigure[The receiver structure at user $k$.]
{\label{C7:HybridNOMAStructure:b} 
\includegraphics[width=0.5\textwidth]{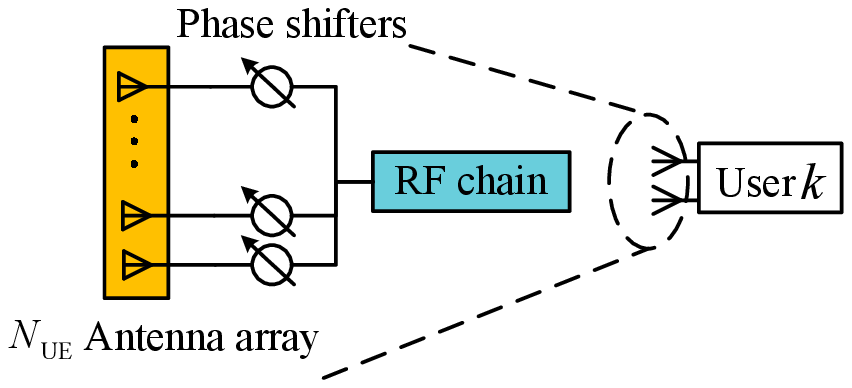}}
\caption{The transceiver structure of NOMA mmWave systems with the proposed beamwidth control. The green colored PSs in Figure \ref{C7:HybridNOMAStructure:a} denote their distinctive capability of beamwidth control compared to the PSs in Figure \ref{C7:HybridNOMAStructure:b}, where the detailed structures will be shown in Figure \ref{C7:BeamwidthControl}. The grey shadowed components are inactive in our proposed hybrid mmWave NOMA scheme for improving system efficiency.}
\label{C7:HybridNOMAStructure}%
\end{figure}

As shown in Figure \ref{C7:HybridNOMAStructure:a}, the key idea of the proposed NOMA scheme is to cluster two users to form a NOMA group according to their LOS channel state information (CSI) and to employ some beamwidth control such that an analog beam can cover the NOMA group.
As a result, the signals of the two users within a NOMA group can be superimposed for non-orthogonal transmission through a single RF chain, instead of utilizing two RF chains as in conventional OMA scheme.
Note that, due to the SIC decoding complexity and time delay, in this chapter, we assume that at most two users can be grouped as a NOMA group, as commonly adopted in the literature\cite{WeiTCOM2017,Sun2016Fullduplex,QiuLOMA}.
In addition, we assume that only the two users within the half-power beamwidth can be clustered as a NOMA group.
Otherwise, there is a substantial power loss, at least 3 dB, for the user outside the half-power beamwidth and its communication is almost impossible to guarantee.
Thanks to the beamwidth control method and the NOMA user grouping proposed in this chapter, we can assume that all the $K$ users are clustered as ${N'_{{\mathrm{RF}}}} \le {N_{{\mathrm{RF}}}}$ NOMA groups.
Correspondingly, we only need to activate ${N'_{{\mathrm{RF}}}}$ RF chains to serve ${N'_{{\mathrm{RF}}}}$ NOMA groups.
In other words, some RF chains are switched to idle which can substantially save the system circuit power consumption.
Let $u_{k,r} = 1$ denote that user $k$ is allocated to NOMA group $r \in \{1,2,\ldots,{N'_{{\mathrm{RF}}}}\}$. Otherwise, $u_{k,r} = 0$.
The signals going through all activated RF chains are given by
\begin{equation}
{\bf{t}} = {\bf{Qs}} = [t_1,t_2,\ldots,t_{N'_{{\mathrm{RF}}}}]^{\mathrm{T}} \in \mathbb{C}^{{ {N'_{{\mathrm{RF}}}} \times 1}},
\end{equation}
where $\textbf{s} = [s_1,s_2,\ldots,s_{K}]^{\mathrm{T}} \in \mathbb{C}^{{ K \times 1}}$ collects the normalized modulated symbols of all the $K$ users with ${\mathrm{E}}\left\{ {{{\left| {{s_k}} \right|}^2}} \right\} = 1$.
The digital precoder for the proposed NOMA scheme is ${\mathbf{Q}} = \left[ {
{{{\mathbf{q}}_1}},{{{\mathbf{q}}_2}},\ldots ,{{{\mathbf{q}}_{K}}}
} \right] \in \mathbb{C}^{{ {N'_{{\mathrm{RF}}}} \times K}}$, where ${{\mathbf{q}}_{k}} = \left[ {q}_{k,1},{q}_{k,2},\ldots ,{{q}}_{k,{N'_{{\mathrm{RF}}}}} \right]^{\mathrm{T}} \in \mathbb{C}^{{ {N'_{{\mathrm{RF}}}} \times 1}}$ denotes the digital precoder of user $k$.
For notational simplicity, we denote $p_k = \left\|{{{\mathbf{q}}_k}}\right\|^2$ as the power allocation for user $k$ satisfying $\sum\limits_{k = 1}^K \left\|{{{\mathbf{q}}_k}}\right\|^2  \le {p_{{\mathrm{BS}}}}$ with ${p_{{\mathrm{BS}}}}$ denoting the maximum transmit power budget for the BS.
Note that the digital precoded signal $t_r$ on RF chain $r$ contains the multiplexed signals of all the users and all the NOMA groups, owing to the adopted digital precoder ${\mathbf{Q}}$.
In fact, sharing a RF chain by a NOMA group is essentially sharing a spatial degree of freedom.
For pure analog beamforming mmWave systems, the digital precoder component in Figure \ref{C7:HybridNOMAStructure:a} can be replaced by superposition coding.
Each entry of the digital precoder ${\mathbf{Q}}$ can be obtained by
\begin{equation}\label{C7:PureAnalog1}
{q}_{k,r} = u_{k,r}\sqrt{p_{k}}, \forall k,r,
\end{equation}
and the superimposed signal on RF chain $r$ for NOMA group $r$ can be explicitly denoted by
\begin{equation}\label{C7:PureAnalog2}
t_{r} = \sum \limits_{k=1}^{K} u_{k,r} \sqrt{p_{k}} s_k,
\end{equation}
where $\sum\limits_{k = 1}^K {{p_k}}  \le {p_{{\mathrm{BS}}}}$.

At the user side, the received signal at user $k$ after processed by receiving analog beamforming is given by
\begin{equation}\label{C7:SystemModel}
{y_k} = {\bf{v}}_k^{\mathrm{H}}{{\bf{H}}_k}{\bf{Wt}} + {\bf{v}}_k^{\mathrm{H}}{{\bf{z}}_k}
=\widetilde{\mathbf{h}}^{\mathrm{H}}_k{\bf{t}} + {\bf{v}}_k^{\mathrm{H}}{{\bf{z}}_k}, \forall k,
\end{equation}
where channel matrix ${{\bf{H}}_k}$ and effective channel vector $\widetilde{\mathbf{h}}_k$ will be detailed in the channel model section.
The analog beamformer for the proposed NOMA scheme is denoted by matrix $\mathbf{W} = \left[{\mathbf{w}_1},{\mathbf{w}_2},\ldots,{\mathbf{w}_{N'_{{\mathrm{RF}}}}}\right] \in \mathbb{C}^{{{N_{{\mathrm{BS}}}} \times {N'_{{\mathrm{RF}}}}}}$, where ${\textbf{w}_{r}} \in \mathbb{C}^{{{N_{{\mathrm{BS}}}} \times 1}}$ denotes the analog beamformer for RF chain $r$ or equivalently for NOMA group $r$, with ${\left\| {{{\bf{w}}_{r}}} \right\|^2} = 1$.
Vector $\textbf{v}_k \in \mathbb{C}^{{N_{\mathrm{UE}} \times 1}}$ denotes the adopted normalized analog beamformer of user $k$ with ${\left\| {{{\bf{v}}_k}} \right\|^2} = 1$.
Vector $\textbf{z}_k \in \mathbb{C}^{{N_{\mathrm{UE}} \times 1}}$ is the additive white Gaussian noise (AWGN) at receiving antenna array of user $k$, i.e., $\textbf{z}_k \sim {\cal CN}(0,N_{0} \textbf{I}_{N_{\mathrm{UE}}})$, where $N_{0}$ is the noise power spectral density.

\subsection{Channel Model}
Matrix ${{\bf{H}}_k} \in \mathbb{C}^{{ {N_{{\mathrm{UE}}}} \times {N_{{\mathrm{BS}}}}}}$ denotes the channel matrix between the BS and user $k$.
In this chapter, we apply the widely adopted Saleh-Valenzuela model \cite{WangBeamSpace2017} for our considered mmWave system.
In particular, ${{\bf{H}}_k}$ is given by
\begin{equation}\label{C7:ChannelModel1}
{{\bf{H}}_k} = {\alpha _{k,0}}{{\bf{H}}_{k,0}} + \sum\limits_{l = 1}^L {{\alpha _{k,l}}{{\bf{H}}_{k,l}}},
\end{equation}
where ${\alpha _{k,0}}$ denotes the LOS complex path gain and ${{\bf{H}}_{k,0}}  \in \mathbb{C}^{ N_{\mathrm{UE}} \times N_{\mathrm{BS}} }$ is the LOS channel matrix between the BS and user $k$.
In \eqref{C7:ChannelModel1}, matrix ${\mathbf{H}}_{k,l} \in \mathbb{C}^{ N_{\mathrm{UE}} \times N_{\mathrm{BS}} }$ denotes the $l$-th NLOS channel matrix between the BS and user $k$, ${\alpha _{k,l}}$ denotes the corresponding $l$-th NLOS complex path gain, $1 \le l \le L$, and $L$ denotes the total number of NLOS paths.
The channel matrix ${\mathbf{H}}_{k,l}$, $\forall l \in \{0,\ldots,L\}$, can be generally given by ${\mathbf{H}}_{k,l} = {\mathbf{a}}_{\mathrm{UE}} \left(  \phi _{k,l}, {N_{{\mathrm{UE}}}} \right){\mathbf{a}}_{\mathrm{BS}}^{\mathrm{H}}\left( \theta _{k,l}, {N_{{\mathrm{BS}}}} \right)$,
with ${\mathbf{a}}_{\mathrm{BS}}\left( \theta _{k,l}, {N_{{\mathrm{BS}}}}  \right)  =  \left[ {1, {e^{ - j \zeta\left(\theta_{k,l}\right) }}, \ldots , }{e^{ - j\left({N_{{\mathrm{BS}}}} - 1\right)\zeta\left(\theta_{k,l}\right) }}\right]^{\mathrm{T}}$ $ \in \mathbb{C}^{{ {N_{{\mathrm{BS}}}} \times 1}}$
denoting the array response vector \cite{van2002optimum} for the $l$-th path of user $k$ with AOD ${\theta _{k,l}}$ at the BS and
${\mathbf{a}}_{\mathrm{UE}}\left( \phi _{k,l},{N_{{\mathrm{UE}}}} \right) = \left[ 1, {e^{ - j\zeta\left(\phi_{k,l}\right) }}, \ldots ,{e^{ - j{\left({N_{{\mathrm{UE}}}} - 1\right)}\zeta\left(\phi_{k,l}\right) }} \right] ^ {\mathrm{T}} \in \mathbb{C}^{{ {N_{{\mathrm{UE}}}} \times 1}}$ denoting the array response vector for the $l$-th path with angle of arrival (AOA) ${\phi _{k,l}}$ at user $k$.
Note that $\zeta\left(x\right) = \frac{2\pi d}{\lambda} \cos\left(x\right)$, $\lambda$ denotes the wavelength at the carrier frequency, and $d = \frac{\lambda}{2}$ denotes the space between adjacent antennas.

In this chapter, we define the strong or weak user according to their LOS path gains.
Without loss of generality, we assume that the users are indexed in the descending order of LOS path gains, i.e., ${\left| {{\alpha _{1,0}}} \right|} \ge {\left| {{\alpha _{2,0}}} \right|} \ge , \ldots , \ge {\left| {{\alpha _{K,0}}} \right|}$.
In this work, we consider a fixed SIC decoding order according to the LOS path gain order $1,2,\ldots,K$ to facilitate a tractable resource allocation design and to save the system overhead.
In particular, if user $k$ and user $i$ are scheduled to form a NOMA group associated with RF chain $r$, $\forall k<i$, user $k$ first decodes the messages of user $i$ before decoding its own information.
In contrast, user $i$ directly decodes its own information via treating the signals of user $k$ as interference.

As shown in \eqref{C7:SystemModel}, the effective channel captures both effects of the channel matrix and the analog beamformer.
The design of the analog beamformer ${\bf{W}}$ at the BS will be shown in the latter of this chapter.
In addition, every user would like to steer its receive analog beamformer to its LOS AOA to maximize the received power and thus we have ${\bf{v}}_k = {\mathbf{a}}_{\mathrm{UE}} \left(  \phi _{k,0}, {N_{{\mathrm{UE}}}} \right)$, $\forall k$.
Adopting the designed analog beamforming, the effective channel matrix can be denoted by
\begin{equation}\label{C7:EffectiveChannel}
{\widetilde{{\bf{H}}}} = \left[\widetilde{\mathbf{h}}_1,\widetilde{\mathbf{h}}_2,\ldots,\widetilde{\mathbf{h}}_K\right]\in \mathbb{C}^{{N'_{{\mathrm{RF}}}} \times K},
\end{equation}
where $\widetilde{\mathbf{h}}_k = \left[\widetilde{{{h}}}_{k,1},\widetilde{{{h}}}_{k,2},\ldots,\widetilde{{{h}}}_{k,{N'_{{\mathrm{RF}}}}}\right]^{\mathrm{T}} = \left({\bf{v}}_k^{\mathrm{H}}{{\bf{H}}_k}{\bf{W}}\right)^{\mathrm{H}} \in \mathbb{C}^{{N'_{{\mathrm{RF}}}} \times 1}$ denotes the effective channel vector between user $k$ and all the activated RF chains at the BS.

One important feature of mmWave communications is their vulnerability to blockage due to its higher penetration loss and deficiency of diffraction in mmWave frequency band\cite{Andrews2017}.
To capture the blockage effects in mmWave systems, a probabilistic model was proposed in \cite{Andrews2017}.
The probability for an arbitrary link with a distance $\varrho$ having LOS is given by
\begin{equation}
{P_{{\mathrm{LOS}}}}\left( \varrho \right) = {e^{{-\varrho}/{C}}},
\end{equation}
where $C = 200$ meters.
Note that, in this chapter, if the LOS path is blocked, we treat the strongest NLOS path as ${\mathbf{H}}_{k,0}$ and all the other NLOS paths as ${\mathbf{H}}_{k,l}$.

In the following, we introduce the CSI requirement and the adopted channel acquisition technique of our proposed scheme.
In addition, although our proposed design is based on the assumption of perfect channel estimation, the channel estimation error will be taken into account in the simulations and its model is introduced here to facilitate the presentation.

\noindent\underline{\textbf{CSI Requirement for the Proposed Scheme}}

The proposed NOMA user grouping and analog beamformer design requires the LOS CSI, including the AODs ${\theta _{k,0}}$, the AOAs ${\phi _{k,0}}$, and the complex path gains ${{\alpha _{k,0}}}$ of all the users.
Then, based on the obtained user grouping strategy and analog beamformer, the proposed digital precoder design depends on the estimated effective channel matrix ${\widetilde{{\bf{H}}}}$.

\noindent\underline{\textbf{Three-stage Channel Estimation}}

In this work, we assume that the channel acquisition technique in our previous work \cite{zhao2017multiuser} is adopted.
Since this work mainly proposes the beamwidth control-based hybrid mmWave NOMA scheme and focuses on the energy-efficient resource allocation design, we briefly discuss the adopted channel estimation method in the following.
A three-stage channel estimation scheme was proposed in \cite{zhao2017multiuser}, where the AODs ${\hat{\theta} _{k,0}}$ and the complex path gains ${{\hat{\alpha} _{k,0}}}$ are estimated in the first stage and the AOAs ${\hat{\phi} _{k,0}}$ is estimated in the second stage through exhaustive beam scanning.
In the third stage, each user transmits a unique orthogonal pilot to the BS and the effective channel matrix ${\widetilde{{\bf{H}}}}$ is estimated with classic minimum mean square error (MMSE) channel estimation \cite{BigueshMMSE2006}.
In this chapter, we assume the use of time division duplex (TDD) and exploit the channel reciprocity, i.e., the estimated effective channel matrix $\hat{\widetilde{{\bf{H}}}}$ can be used for digital precoder design.

\noindent\underline{\textbf{Channel Estimation Error}}

To account for the impacts of imperfect CSI on the performance of the proposed scheme, we need to consider the estimation error in all the three stages.
In particular, for the first stage, the AOD and path gain of user $k$ can be modeled as ${{\theta} _{k,0}} = {\hat{\theta} _{k,0}} + \Delta_{{\theta} _{k,0}}$ and
${{\alpha} _{k,0}} = {\hat{\alpha} _{k,0}} + \Delta_{{\alpha} _{k,0}}$, $\forall k$, respectively, where ${\hat{\theta} _{k,0}}$ denotes the estimated AOD of user $k$ at the BS, ${\hat{\alpha}_{k,0}}$ denotes the estimated LOS path gain of user $k$, $\Delta_{{\theta} _{k,0}} \sim {\cal CN}(0,\sigma^2_{{\theta} _{k,0}})$ and $\Delta_{{\alpha} _{k,0}} \sim {\cal CN}(0,\sigma^2_{{\alpha} _{k,0}})$ denote the corresponding AOD and LOS path gain estimation error, respectively.
Variables $\sigma^2_{{\theta} _{k,0}}$ and $\sigma^2_{{\alpha} _{k,0}}$ denote the variance of the AOD and LOS path gain estimation error, respectively.
In the second stage, the AOA of the BS at user $k$ can be modeled by ${{\phi} _{k,0}} = {\hat{\phi} _{k,0}} + \Delta_{{\phi} _{k,0}}$, $\forall k$,
where ${\hat{\phi} _{k,0}}$ denotes the estimated AOA of the BS at user $k$, $\Delta_{{\phi} _{k,0}} \sim {\cal CN}(0,\sigma^2_{{\phi} _{k,0}})$ denotes the AOA estimation error, and $\sigma^2_{{\phi} _{k,0}}$ denotes the variance of AOA estimation error.
In the third stage, the effective channel coefficient between user $k$ and RF chain $r$ at the BS can be modeled by ${{{\widetilde{{{h}}}_{k,r}}}} = \hat{\widetilde{{{h}}}}_{k,r} + \Delta_{{\widetilde{{{h}}}_{k,r}}}$, $\forall k,r$,
where $\hat{\widetilde{{{h}}}}_{k,r}$ denotes the estimated effective channel coefficient, $\Delta_{{\widetilde{{{h}}}_{k,r}}} \sim {\cal CN}(0,\sigma^2_{{\widetilde{{{h}}}_{k,r}}})$ denotes the  corresponding effective channel estimation error, and $\sigma^2_{{\widetilde{{{h}}}_{k,r}}}$ denotes the variance of the effective channel estimation error.
\subsection{Power Consumption Model}
The power consumption of the considered hybrid mmWave communication system mainly consists of the RF chain power, the PS driven power, as well as the transmit power for communications.
Therefore, we model the system power dissipation as follows\cite{GaoSubarray}:
\begin{equation}\label{C7:PowerConsumption}
{U_{{\mathrm{P}}}} = {N^{\mathrm{Active}}_{{\mathrm{RF}}}}{P_{{\mathrm{RF}}}} + {N^{\mathrm{Active}}_{{\mathrm{PS}}}}{P_{{\mathrm{PS}}}} + \frac{1}{\rho} \sum\limits_{k = 1}^{K} \left\|{{{\mathbf{q}}_k}}\right\|^2,
\end{equation}
where ${P_{{\mathrm{RF}}}}$ and ${P_{{\mathrm{PS}}}}$ denotes the power consumption of a RF chain and a PS, respectively.
Variables ${N^{\mathrm{Active}}_{{\mathrm{RF}}}}$ and ${N^{\mathrm{Active}}_{{\mathrm{PS}}}}$ denote the number of active RF chains and active PSs at the BS, respectively.
Besides, $0 < \rho \le 1$ is a constant which accounts for the efficiency of the power amplifier at the BS.

\section{Beamwidth Control and Analog Beamformer Design}
Clearly, the key challenge of the proposed scheme is to control the beamwidth with an effective analog beamformer design.
In this chapter, we first propose two types of beamwidth control methods based on conventional beamforming (CBF) and Dolph-Chebyshev beamforming (DCBF), for the cases with constant-modulus PSs and amplitude-adjustable PSs, respectively.
However, controlling the beamwidth will inevitably change the main lobe response.
In particular, a larger main lobe power loss could degrade the system sum-rate for a fixed transmit power, which translates into a lower system energy efficiency.
Therefore, through characterizing relationship between the main lobe power loss and the desired beamwidth, we propose an effective analog beamformer design to minimize the main lobe power loss.
\subsection{Two Beamwidth Control Methods}\label{C7:BeamWidthControlSection}

\begin{figure}[t]
\centering
\subfigure[CBF-based]
{\label{C7:BeamwidthControl:a} 
\includegraphics[width=0.45\textwidth]{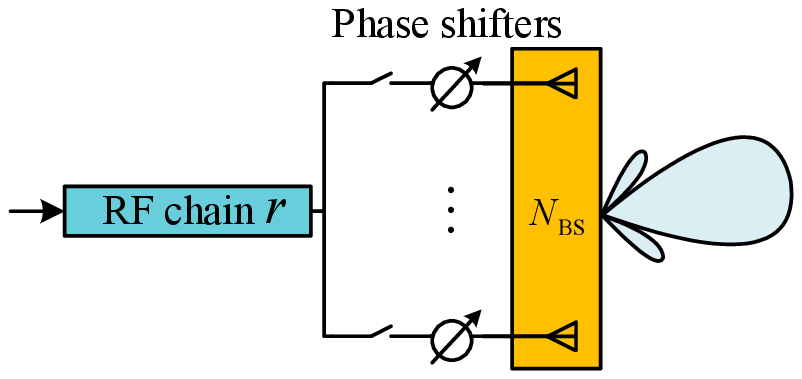}}
\subfigure[DCBF-based]
{\label{C7:BeamwidthControl:b} 
\includegraphics[width=0.45\textwidth]{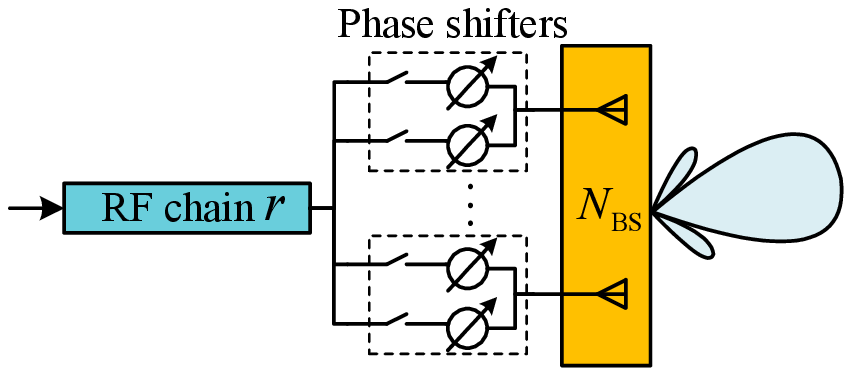}}
\caption{The PSs structures for the proposed two kinds of beamwidth control methods. The switches can activate or idle an antenna and its PS based on the adopted beamwidth control method.}
\label{C7:BeamwidthControl}%
\end{figure}

\noindent\textbf{\underline{CBF-based Beamwidth Control}}

The most natural way of beamwidth control is to reduce the number of active antennas to widen the analog beamwidth with CBF in the hybrid architecture with constant-modulus PSs, as shown in Figure \ref{C7:BeamwidthControl:a}.
Consider an ULA with $N_{\mathrm{BS}}$ antennas but only $N'_{\mathrm{BS}}$ consecutive antennas\footnote{Note that, to maintain the equal distances between adjacent antennas, we put the antennas idle one by one from the end of an antenna array.} of them are active.
The conventional beamformer to generate an analog beam steering towards AOD $\theta_0$ can be given by
\begin{equation}\label{C7:CBFWeight}
{{\bf{w}}_{{\mathrm{CBF}}}}\left(\theta_0,N'_{\mathrm{BS}}\right) =\frac{1}{\sqrt{N'_{\mathrm{BS}}}}\left[ {1,{e^{ - j\zeta\left(\theta_0\right) }},\ldots,{e^{ - j\left(N'_{\mathrm{BS}} - 1\right)\zeta\left(\theta_0\right)  }}},0,\ldots,0 \right]^{\mathrm{T}},
\end{equation}
where ${{{\bf{w}}_{{\mathrm{CBF}}}}} \in \mathbb{C}^{{N_{\mathrm{BS}}} \times 1}$, ${\left\| {{{\bf{w}}_{{\mathrm{CBF}}}}} \right\|^2} = 1$, and its last ${N_{\mathrm{BS}}} - N'_{\mathrm{BS}}$ entries are zero means that those antennas are inactive.
The main lobe direction $\theta_0$ and the number of active antennas $N'_{\mathrm{BS}}$ will be determined in Section \ref{C7:AnalogBeamformer}.
The beam response function characterizes the beamforming gain for the signal departing from an arbitrary direction $\theta$.
In particular, the beam response function of CBF can be obtained by\cite{van2002optimum}
\begin{equation}\label{C7:BeamResponseFunction}
\left|{B_{{\mathrm{CBF}}}}\left( \psi  \right)\right|^2 = \frac{1}{{N'_{\mathrm{BS}}}}\frac{{\sin^2 \left( {\frac{{N'_{\mathrm{BS}}\psi }}{2}} \right)}}{{\sin^2 \left( {\frac{\psi }{2}} \right)}},
\end{equation}
where $\psi = \frac{2\pi d}{\lambda}\left(\cos\left(\theta\right) - \cos\left(\theta_0\right)\right)$ denotes the phase difference\footnote{Note that, the beam response function in the angle domain $\left|{B_{{\mathrm{CBF}}}}\left( \theta,\theta_0  \right)\right|$  can be easily obtained via the mapping relationship $\psi = \frac{2\pi d}{\lambda}\left(\cos\left(\theta\right) - \cos\left(\theta_0\right)\right)$.} between the signals departing from directions $\theta_0$ and $\theta$.
In addition, the half-power beamwidth (i.e., -3 dB beamwidth) can be approximated by\cite{van2002optimum}
\begin{equation}\label{C7:Beamwidth3dBCBF}
{\psi^{\mathrm{CBF}}_{\mathrm{H}}} \approx 0.891\frac{2\pi}{N'_{\mathrm{BS}}}.
\end{equation}

Although the CBF-based beamwidth control is easy to implement due to its simplicity, reducing the array size decreases the beamforming gain, which translates into a lower main lobe response and a higher sidelobe power level, as shown in Figure \ref{C7:BeamPatternWithBeamWidthControl}.
Both the lower main lobe response and the higher sidelobe power level have negative impacts on the system sum-rate and thus decrease the system energy efficiency.
In particular, a lower main lobe response implies a lower power transmission efficiency which yields a lower data rate for the users at the main beam direction.
On the other hand, a higher sidelobe power level introduces a higher inter-beam interference to other users as well as NOMA groups.
Therefore, the DCBF-based beamwidth control is proposed in next section.

\begin{figure}[t]
\centering
\includegraphics[width=4.5in]{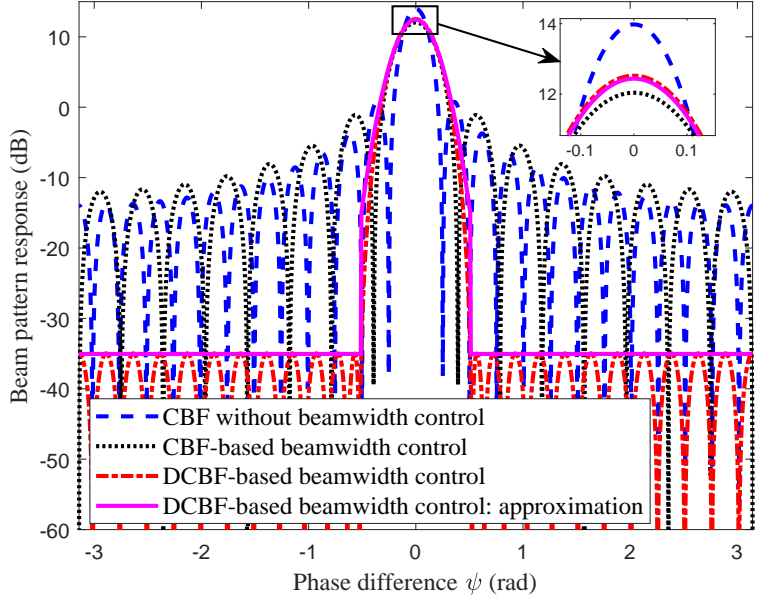}
\caption{The beam pattern response with beamwidth control for $N_{\mathrm{BS}} = 25$ and $\theta_0 = 90$ degrees. The desired half-power beamwidth of CBF-based and DCBF-based beamwidth control are the same and are 1.5 times of that for CBF without beamwidth control.}
\label{C7:BeamPatternWithBeamWidthControl}
\end{figure}

\noindent\textbf{\underline{DCBF-based Beamwidth Control}}

When the PS can adjust its amplitude adaptively, it enables the design of an optimal analog beamformer, e.g. DCBF \cite{Dolph1946}.
Note that, the capability of amplitude adjusting can be realized by
deploying \emph{one pair} of PSs at each antenna and following a similar method as in \cite{Bogale2016RFChainNumber}, as shown in Figure \ref{C7:BeamwidthControl:b}.
Given a ULA with $N_{\mathrm{BS}}$ antennas, the analog beamformer for DCBF \cite{Dolph1946} is given by
\begin{equation}\label{C7:DCBFWeight}
{{\bf{w}}_{{\mathrm{DC}}}}\left( {{\theta _0},N_{\mathrm{BS}},\Theta} \right) =  \frac{1}{{\sqrt {N_{\mathrm{BS}}} }} \left[ {\varpi _1}\left(\Theta\right),{\varpi _2}\left(\Theta\right){e^{ - j\zeta\left(\theta_0\right) }}, \ldots ,{\varpi _{N_{\mathrm{BS}}}}\left(\Theta\right){e^{ - j\left( {N_{\mathrm{BS}} -  1} \right)\zeta\left(\theta_0\right) }} \right]^{\mathrm{T}},
\end{equation}
where the optimal amplitudes $\left[{\varpi _1}\left(\Theta\right),{\varpi _2}\left(\Theta\right), \ldots ,\right.$ $\left.{\varpi _{N_{\mathrm{BS}}}}\left(\Theta\right)\right]$ across $N_{\mathrm{BS}}$ antennas are all positive real numbers.
Given $N_{\mathrm{BS}}$, each optimal amplitude coefficient ${\varpi _n}\left(\Theta\right)$ on the $n$-th antenna, $\forall n = \{1,2,\ldots,{N_{\mathrm{BS}}}\}$, is a function of the main-to-side-lobe ratio (MSLR) in voltage as $\Theta>1$.
In particular, ${\varpi _n}\left(\Theta\right)$ can be obtained by Equation (16) in \cite{Duhamel1953} according to the selected MSLR $\Theta$\footnote{Since the excitation coefficients of DCBF beamforming only depend on the selected MLSR $\Theta$ for a given $N_{\mathrm{BS}}$, one can compute the excitation coefficients in advance and refer to a look-up table for practical implementation.}.
Hence, they are treated as constants in this work with a given $N_{\mathrm{BS}}$ and MSLR $\Theta$, where the selection of $\Theta$ depends on the desired half-power beamwidth and will be detailed in Section \ref{C7:MLPvsBeamWidth}.
Since we consider the normalized analog beamformer, i.e., ${\left\| {{{\bf{w}}_{{\mathrm{DC}}}}} \right\|^2} = 1$, we have $\frac{{\varpi _n}\left(\Theta\right)}{{\sqrt {N_{\mathrm{BS}}} }} <= 1$, $\forall n = \{1,2,\ldots,{N_{\mathrm{BS}}}\}$.
According to Equation (11) in \cite{Bogale2016RFChainNumber}, the $n$-th entry of ${{{\bf{w}}_{{\mathrm{DC}}}}}$ can be decomposed as
\begin{equation}
\frac{{\varpi _n}\left(\Theta\right)}{{\sqrt {N_{\mathrm{BS}}} }}{e^{ - j\left( {n - 1} \right)\zeta\left(\theta_0\right) }} = {e^{ - j\left[ {\left( {n - 1} \right)\zeta\left(\theta_0\right) - {{\cos }^{ - 1}}\left( {\frac{{{\varpi _n}\left( \Theta  \right)}}{{2\sqrt {{N_{{\mathrm{BS}}}}} }}} \right)} \right]}} + {e^{ - j\left[ {\left( {n - 1} \right)\zeta\left(\theta_0\right) + {{\cos }^{ - 1}}\left( {\frac{{{\varpi _n}\left( \Theta  \right)}}{{2\sqrt {{N_{{\mathrm{BS}}}}} }}} \right)} \right]}}.
\end{equation}
We can observe that changing the amplitude of each antenna can be implemented via setting the phases for a pair of constant-modulus PSs attached with this antenna.
Note that DCBF-based beamwidth control does not idle any antennas.
Instead, it can change the beamwidth through varying the amplitude distribution ${\varpi _n}\left(\Theta\right)$ across the antenna array\footnote{We note that the expense of adjusting the amplitude of PSs is to double the power cost for each activated antenna since it is driven by a pair of PSs.
However, the power consumption of a PS is generally much smaller than that of a RF chain, e.g. 250 times smaller \cite{GaoSubarray}.
Therefore, controlling the beamwidth via DCBF {is} still a viable method to improve the system energy-efficiency.} based on the selected $\Theta$ \cite{Duhamel1953}.
According to \cite{Drane1968DC}, the normalized beam response function of DCBF is given by
\begin{equation}\label{C7:BeamFunctionDC}
\left|B_{\mathrm{DC}}\left( \psi  \right)\right|^2 = \frac{{{T^2_M}\left( {a\cos \psi  + b} \right)}}{{{T^2_M}\left( {{z_0}} \right)}} = \frac{{{T^2_M}\left( {a\cos \psi  + b} \right)}}{\Theta^2},
\end{equation}
with ${T_M}\left( \cdot \right)$ denoting the $M$-th degree Chebyshev polynomial of the first kind and is given by
\begin{equation}
{T_M}\left( x \right) = \left\{ {\begin{array}{*{20}{c}}
{\cos \left( {M{{\cos }^{ - 1}}\left( x \right)} \right)},&{\left| x \right| \le 1},\\
{\cosh \left( {M{{\cosh }^{ - 1}}\left( x \right)} \right)},&{\left| x \right| > 1},
\end{array}} \right.
\end{equation}
where $a = \frac{z_0+1}{2} > 1$, $b = \frac{z_0-1}{2} >0$, $N_{\mathrm{BS}} = 2M+1$, and\footnote{To facilitate the presentation, we consider a symmetric ULA containing an odd number of antennas in the sequel.
For the case with an even number of antennas, i.e., $N_{\mathrm{BS}} = 2M$, the beam response function is slightly changed and the details can be found in \cite{Dolph1946}.}
\begin{equation}\label{C7:SLL}
{z_0} = {{T^{-1}_M}\left( {\Theta} \right)} = \cosh \left( {\frac{1}{M}{{\cosh }^{ - 1}}\left( \Theta \right)} \right) > 1.
\end{equation}
When ${T^2_M}\left( {a\cos \psi  + b} \right) = \frac{\Theta^2}{{2}}$, we have $\left|B_{\mathrm{DC}}\left( \psi  \right)\right|^2 = \frac{1}{{2 }}$ and thus the half-power beamwidth of DCBF is given as follows:
\begin{equation}\label{C7:Beamwidth3dBDC}
{\psi^{\mathrm{DC}}_{\mathrm{H}}} = 2 {{{\cos }^{ - 1}}\left\{ {\frac{1}{a}\cosh \left[ {\frac{1}{M}{{\cosh }^{ - 1}}\left( {\frac{\Theta}{{\sqrt 2 }}} \right)} \right] - \frac{b}{a}} \right\}}.
\end{equation}
Similarly, through ${{T^2_M}\left( {a\cos \psi  + b} \right)} = 1$, the main lobe beamwidth\footnote{The main lobe beamwidth is defined as the beamwidth from the first left peak sidelobe to the first right peak sidelobe in the beam pattern response.} of DCBF can be obtained by
\begin{equation}\label{C7:MainBeamWidthDC}
{\psi^{\mathrm{DC}}_{\mathrm{\Theta}}} = 2 {{{\cos }^{ - 1}}\left( {\frac{{3 - {z_0}}}{{1 + {z_0}}}} \right)}.
\end{equation}
The beam pattern response of DCBF-based beamwidth control is illustrated in Figure \ref{C7:BeamPatternWithBeamWidthControl}.
With the same required half-power beamwidth, we can observe a higher main lobe response and a substantially lower sidelobe power level for the DCBF-based beamwidth control compared to the CBF-based one.
In fact, it has been proved that the DCBF is the optimal analog beamforming \cite{Dolph1946} in the sense that: 1) given the specified sidelobe power level, the beamwidth is the narrowest; or 2) given the fixed beamwidth, the sidelobe power level is minimized.
Therefore, the DCBF-based beamwidth control is a promising technique to accompany a NOMA group and to reduce the interference signal leakage.

Although the two proposed beamwidth control methods both incur a main lobe power loss, our formulated energy-efficient resource allocation design can still guarantee the quality of service (QoS) of each user within the expected fixed communication range, by considering the minimum data rate requirement C2 in \eqref{C7:ResourceAllocationOriginal} in Section \ref{C7:EEMax}.
Compared to conventional mmWave NOMA scheme, the proposed beamwidth control scheme can facilitate the formation of more NOMA groups.
As a result, more RF chains can be set idle which saves circuit power consumption for improving the system energy efficiency.

\subsection{Main Lobe Response versus the Half-power Beamwidth}\label{C7:MLPvsBeamWidth}

\noindent\textbf{\underline{{CBF-based Beamwidth Control}}}\

According to \eqref{C7:BeamResponseFunction}, the main lobe response for CBF can be easily obtained by
\begin{equation}\label{C7:MLR}
G_{\mathrm{CBF}} = \left|{B_{{\mathrm{CBF}}}}\left( 0 \right)\right|^2 = N'_{\mathrm{BS}}.
\end{equation}
Note that \eqref{C7:MLR} is obtained via $\mathop {\lim }\limits_{x \to 0} \mathrm{sinc}\left( x \right) = 1$.
Combining \eqref{C7:MLR} with \eqref{C7:Beamwidth3dBCBF}, we have
\begin{equation}\label{C7:MLRVersus3dBBeamwdith}
G_{\mathrm{CBF}} \approx 0.891\frac{2\pi}{\psi^{\mathrm{CBF}}_{\mathrm{H}}}.
\end{equation}
It can be observed from \eqref{C7:MLRVersus3dBBeamwdith} that the main lobe response $G_{\mathrm{CBF}}$ is a monotonically decreasing function with an increasing half-power beamwidth for CBF.
In addition, given the desired half-power beamwidth ${\psi^{\mathrm{CBF}}_{\mathrm{H}}}$ for CBF-based beamwidth control, we can determine the number of activated antennas via combining \eqref{C7:MLR} and \eqref{C7:MLRVersus3dBBeamwdith}.

\noindent\textbf{\underline{{DCBF-based Beamwidth Control}}}

Equation \eqref{C7:BeamFunctionDC} only provides a normalized beam response function for DCBF, i.e., $\left|{B_{{\mathrm{DC}}}}\left( 0  \right) \right|^2= 1$ with $\psi = 0$ and ${\theta = \theta_0}$, and its absolute main lobe response ${G_{\mathrm{DC}}}$ is difficult to obtained.
Therefore, we follow the practical 3GPP two-dimension directional antenna pattern approximation model\cite{MIMO3GPP} to approximate $\left|B_{\mathrm{DC}}\left( \psi  \right)\right|^2$ as follows:
\begin{equation}\label{C7:BeamResponseFunctionApproximation}
{\left|B'_{\mathrm{DC}}\left( \psi  \right)\right|^2}  = \left\{ {\begin{array}{*{20}{c}}
{{G_{\mathrm{DC}}}\cdot{{10}^{ - \frac{3}{{10}}{{\left( {\frac{{2\psi }}{{{\psi^{\mathrm{DC}} _{\mathrm{H}}}}}} \right)}^2}}},}&{{\left| \psi  \right|}  \le \frac{{{\psi^{\mathrm{DC}} _{\mathrm{\Theta}}}}}{2},}\\
{\frac{{G_{\mathrm{DC}}}}{\Theta^2},}&{{\left| \psi  \right|}  > \frac{{{\psi^{\mathrm{DC}} _{\mathrm{\Theta}}}}}{2}}
\end{array}} \right.,
\end{equation}
where an exponential function is used to approximate the beam response function in the main lobe and a constant level is adopted for modeling the sidelobe.
Note that, one important feature of DCBF is its constant sidelobe power level\cite{Dolph1946} due to the special structure of the Chebyshev polynomial.
Therefore, one can imagine that the adopted approximation model in \eqref{C7:BeamResponseFunctionApproximation} can provide a high approximation accuracy.
The main lobe response ${G_{\mathrm{DC}}}$ for DCBF can be determined by utilizing the fact that the total radiated powers of CBF and DCBF are the same, i.e.,
\begin{align}
&\int_{ - \pi }^\pi  {{{\left| {{B'_{{\mathrm{DC}}}}\left( \psi  \right)} \right|}^2}} d\psi
= \int_{ - \pi }^\pi  {{{\left| {{B_{{\mathrm{CBF}}}}\left( \psi  \right)} \right|}^2}} d\psi  = {{2\pi}}\label{C7:RadiatedPowerDCBF}\\
& \approx {G_{\mathrm{DC}}}\left[  {\frac{{2\pi  - {\psi^{\mathrm{DC}}_{\mathrm{\Theta}}}}}{{{\Theta^2}}} + \sqrt {\frac{{5\pi \left(\psi^{{\mathrm{DC}}}_{\mathrm{H}}\right)^2}}{{6\ln 10}}} }{{\mathrm{erf}}\left( {\sqrt {\frac{{6\ln 10}}{{5 \left(\psi^{{\mathrm{DC}}}_{\mathrm{H}}\right)^2}}} + \frac{{{\psi^{\mathrm{DC}}_{\mathrm{\Theta}}}}}{2}} \right)} \right],\notag
\end{align}
where $\mathrm{erf}\left( \cdot \right)$ denotes the error function.
Now, we have
\begin{equation}\label{C7:MainBeamResponseDC}
{G_{\mathrm{DC}}} \approx {\frac{{2\pi}}{{\left[  {\frac{{2\pi  - {\psi^{\mathrm{DC}}_{\mathrm{\Theta}}}}}{{{\Theta^2}}} + \sqrt {\frac{{5\pi \left(\psi^{{\mathrm{DC}}}_{\mathrm{H}}\right)^2}}{{6\ln 10}}} }{{\mathrm{erf}}\left( {\sqrt {\frac{{6\ln 10}}{{5 \left(\psi^{{\mathrm{DC}}}_{\mathrm{H}}\right)^2}}}  + \frac{{{\psi^{\mathrm{DC}}_{\mathrm{\Theta}}}}}{2}} \right)} \right]}}}.
\end{equation}

Given the desired half-power beamwidth ${\psi^{\mathrm{DC}}_{\mathrm{H}}}$ for DCBF-based beamwidth control, we can obtain beam parameters, such as the main lobe beamwidth ${\psi^{\mathrm{DC}}_{\mathrm{\Theta}}}$, the MSLR $\Theta$, and the main lobe response ${G_{\mathrm{DC}}}$, via combining \eqref{C7:SLL}, \eqref{C7:Beamwidth3dBDC}, \eqref{C7:MainBeamWidthDC}, and \eqref{C7:MainBeamResponseDC}.
As a result, the beam response approximation function of DCBF in \eqref{C7:BeamResponseFunctionApproximation} can be determined, as shown in Figure \ref{C7:BeamPatternWithBeamWidthControl}.
We can observe a tight approximation for the beam pattern response of DCBF in both the main lobe and the sidelobe.

\begin{figure}[t]
\centering
\includegraphics[width=4.5in]{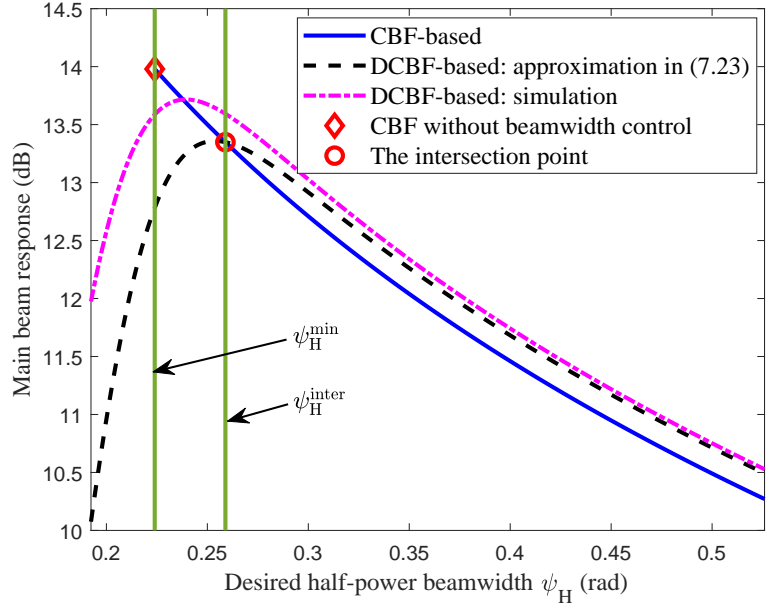}
\caption{The main lobe response versus the half-power beamwidth with $N_{\mathrm{BS}} = 25$ and $\theta_0 = 90$ degrees.}
\label{C7:MainBeamResponseBeamWidthControl}
\end{figure}

\subsection{Beamwidth Control Methods Comparison}
The trade-off between the main lobe response and the desired half-power beamwidth has been analyzed in Equations \eqref{C7:MLRVersus3dBBeamwdith} and \eqref{C7:MainBeamResponseDC} for the proposed CBF-based and DCBF-based beamwidth control, respectively.
To further visualize this trade-off, the main lobe responses versus the desired half-power beamwidth ${\psi_{\mathrm{H}}}$ of CBF-based and DCBF-based beamwidth control are compared in Figure \ref{C7:MainBeamResponseBeamWidthControl}.
We can observe that the main lobe response for the CBF-based beamwidth control $G_{\mathrm{CBF}}\left({\psi_{\mathrm{H}}}\right)$ is monotonically decreasing with an increasing ${\psi_{\mathrm{H}}}$. %
In contrast, for DCBF-based beamwidth control, ${G_{\mathrm{DC}}}\left({\psi_{\mathrm{H}}}\right)$ first increases and then decreases with ${\psi_{\mathrm{H}}}$.
In fact, decreasing the number of active antennas for the CBF-based beamwidth control always introduces a power loss.
The wider the beam, the larger the power loss, as predicted in \eqref{C7:MLRVersus3dBBeamwdith}.
On the other hand, forcing the system to shape a beam with an exceedingly narrow or wide beamwidth reduces the flexibility in DCBF analog beamformer design.
Hence, a relatively high power loss is anticipated in these two cases.
Furthermore, it can be seen from Figure \ref{C7:MainBeamResponseBeamWidthControl} that a high accuracy can be achieved in approximating the main lobe response as a function of desired half-power beamwidth for DCBF-based beamwidth control in \eqref{C7:MainBeamResponseDC}, especially for the regime with a large half-power beamwidth.
It is worth pointing out that for the proposed two beamwidth control methods, the beam gain reduction due to beamwidth widening can be as large as $3 \sim 4$ dB as shown in Figure \ref{C7:MainBeamResponseBeamWidthControl}.
However, broadening the beamwidth can increase the number of served NOMA groups, which enables a higher system energy efficiency since more RF chains are switched to idle to reduce the system energy consumption, as demonstrated in Section \ref{C7:SimulationResults}.

There are two important pieces of information in Figure \ref{C7:MainBeamResponseBeamWidthControl}, which will be adopted to serve as guidelines for the analog beamformer design in the next section.
The first important point of the CBF-based beamwidth control (marked with $\lozenge$) is obtained when all the ${N_{\mathrm{BS}}}$ antennas are active, i.e., $\psi _{\mathrm{H}}^{\mathrm{min}} = 0.891\frac{2\pi}{N_{\mathrm{BS}}}$.
We denote the half-power beamwidth associated with this point with a superscript ${\mathrm{min}}$, since this is the minimum beamwidth can be created without applying any beamwidth control.
Another point\footnote{{Note that, the beamwidth $\psi _{\mathrm{H}}^{\mathrm{min}}$ and $\psi _{\mathrm{H}}^{\mathrm{inter}}$ at the considered two points in Figure \ref{C7:MainBeamResponseBeamWidthControl} can be obtained explicitly from Equations \eqref{C7:Beamwidth3dBCBF} and \eqref{C7:RootInterSect}, respectively.}} (marked with $\circ$) is the intersection point $\psi _{\mathrm{H}}^{\mathrm{inter}}$ where CBF and DCBF have the same main lobe response, i.e., $G_{\mathrm{DC}}\left({\psi _{\mathrm{H}}}\right) = G_{\mathrm{CBF}}\left({\psi _{\mathrm{H}}}\right)$.
When the desired half-power beamwidth is smaller than $\psi _{\mathrm{H}}^{\mathrm{inter}}$, i.e., ${{\psi _{\mathrm{H}}}} < {\psi _{\mathrm{H}}^{\mathrm{inter}}}$, the CBF-based beamwidth control outperforms the DCBF-based one in terms of the power of the main lobe response.
On the other hand, the DCBF-based beamwidth control outperforms the CBF-based one when ${{\psi _{\mathrm{H}}}} > {\psi _{\mathrm{H}}^{\mathrm{inter}}}$.
In the following theorem, we show that the intersection point always exists and there is only one intersection point.
\begin{Thm}\label{C7:Theorem1}
The intersection point between $G_{\mathrm{DC}}\left({\psi _{\mathrm{H}}}\right)$ and $G_{\mathrm{CBF}}\left({\psi _{\mathrm{H}}}\right)$ can be obtained by solving the root of the following equation:
\begin{equation}\label{C7:RootInterSect}
{\frac{{2\pi}}{{\left[ {\frac{{2\pi  - {\psi^{\mathrm{DC}}_{\mathrm{\Theta}}}}}{{{\Theta^2}}} + \sqrt { \frac{{5\pi \psi^{2}_{\mathrm{H}}}}{{6\ln {10}}}} }{{\mathrm{erf}}\left( {\sqrt { \frac{{6\ln {10}}}{{5 \psi^{2}_{\mathrm{H}}}}}  +  \frac{{{\psi^{\mathrm{DC}}_{\mathrm{\Theta}}}}}{2}} \right)} \right]}}} =  0.891\frac{2\pi}{\psi _{\mathrm{H}}}.
\end{equation}
There exists only one root of the equation in \eqref{C7:RootInterSect}.
\end{Thm}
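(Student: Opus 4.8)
The plan is to recast the claim as a statement about the sign changes of a single auxiliary function and then combine an existence argument (intermediate value theorem) with a shape argument (convexity) to pin down exactly one root. First I would introduce $g(\psi_{\mathrm{H}}) \triangleq \frac{2\pi}{G_{\mathrm{DC}}(\psi_{\mathrm{H}})}$, the left-hand side of \eqref{C7:RootInterSect}, and rewrite \eqref{C7:RootInterSect} as $\phi(\psi_{\mathrm{H}}) \triangleq g(\psi_{\mathrm{H}}) - \frac{\psi_{\mathrm{H}}}{0.891} = 0$. Using $G_{\mathrm{CBF}}(\psi_{\mathrm{H}}) = 0.891\cdot\frac{2\pi}{\psi_{\mathrm{H}}}$ from \eqref{C7:MLRVersus3dBBeamwdith}, one sees $\phi(\psi_{\mathrm{H}}) = 2\pi\left(\frac{1}{G_{\mathrm{DC}}} - \frac{1}{G_{\mathrm{CBF}}}\right)$, so $\phi$ carries the opposite sign of $f(\psi_{\mathrm{H}}) \triangleq G_{\mathrm{DC}}(\psi_{\mathrm{H}}) - G_{\mathrm{CBF}}(\psi_{\mathrm{H}})$, and a root of \eqref{C7:RootInterSect} is exactly an intersection of the two main lobe responses. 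Working with $\phi$ is convenient because the subtracted term $\frac{\psi_{\mathrm{H}}}{0.891}$ is linear and hence does not affect convexity.

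Next, for existence, I would establish the endpoint signs. At the minimum achievable half-power beamwidth $\psi_{\mathrm{H}}^{\min}$ of \eqref{C7:Beamwidth3dBCBF} (all $N_{\mathrm{BS}}$ antennas active), uniform illumination attains the maximal coherent array gain $N_{\mathrm{BS}}$ among unit-norm weight vectors, whereas the Dolph--Chebyshev taper spreads the same total power across the aperture, so $G_{\mathrm{DC}}(\psi_{\mathrm{H}}^{\min}) < G_{\mathrm{CBF}}(\psi_{\mathrm{H}}^{\min})$, giving $f(\psi_{\mathrm{H}}^{\min}) < 0$ and $\phi(\psi_{\mathrm{H}}^{\min}) > 0$. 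At the wide-beam end of the feasible range, the optimality of DCBF \cite{Dolph1946} --- for a fixed beamwidth it maximizes the main lobe response at a given sidelobe level, while CBF must discard antennas and hence decays like $1/\psi_{\mathrm{H}}$ --- yields $G_{\mathrm{DC}} > G_{\mathrm{CBF}}$, i.e. $f > 0$ and $\phi < 0$. Continuity of $g$, which follows from the continuity of the composed maps $\psi_{\mathrm{H}} \mapsto \Theta \mapsto z_0 \mapsto \psi_{\mathrm{\Theta}}^{\mathrm{DC}}$ defined implicitly through \eqref{C7:Beamwidth3dBDC}, \eqref{C7:SLL} and \eqref{C7:MainBeamWidthDC}, then yields at least one root by the intermediate value theorem.

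For uniqueness I would show $\phi$ is strictly convex on the feasible interval; since $-\frac{\psi_{\mathrm{H}}}{0.891}$ is linear, this reduces to proving that $g = 2\pi/G_{\mathrm{DC}}$ is strictly convex, equivalently that $G_{\mathrm{DC}}$ in \eqref{C7:MainBeamResponseDC} is unimodal (strictly increasing, then strictly decreasing, as in Figure \ref{C7:MainBeamResponseBeamWidthControl}) with $1/G_{\mathrm{DC}}$ convex. A strictly convex function has convex sublevel sets, so $\{\phi < 0\}$ is a single interval; combined with $\phi(\psi_{\mathrm{H}}^{\min}) > 0$ and $\phi < 0$ at the upper end, $\phi$ is positive on a left sub-interval and negative on the complementary right sub-interval, leaving precisely one sign change and therefore a unique root.

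I expect the convexity (equivalently unimodality) verification to be the main obstacle: differentiating \eqref{C7:MainBeamResponseDC} requires handling the implicit, transcendental dependence of $\Theta$, $z_0$ and $\psi_{\mathrm{\Theta}}^{\mathrm{DC}}$ on $\psi_{\mathrm{H}}$ through the inverse Chebyshev relation \eqref{C7:SLL} together with the error-function term in \eqref{C7:MainBeamResponseDC}. Should a closed-form second-derivative sign prove intractable, a pragmatic fallback is a transversality argument: show that at every intersection $\psi_0$ one has $f'(\psi_0) > 0$ --- automatic on the increasing branch of $G_{\mathrm{DC}}$, where $G_{\mathrm{DC}}' > 0 > G_{\mathrm{CBF}}'$, and to be checked on the decreasing branch by bounding $|G_{\mathrm{DC}}'|$ strictly below $|G_{\mathrm{CBF}}'| = 0.891\cdot 2\pi/\psi_{\mathrm{H}}^2$ --- since a continuous function that starts negative and whose every zero is an upward crossing can admit at most one zero, which together with the established existence gives exactly one root.
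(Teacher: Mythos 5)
Your skeleton (IVT for existence, a shape argument for uniqueness) is reasonable, but the proposal has a genuine gap precisely where the theorem's content lies: uniqueness. You reduce it to strict convexity of $1/G_{\mathrm{DC}}$ on the feasible interval (note, incidentally, that this is \emph{not} equivalent to unimodality of $G_{\mathrm{DC}}$ --- unimodality only gives quasi-convexity of $1/G_{\mathrm{DC}}$, which is destroyed when you subtract the linear term $\psi_{\mathrm{H}}/0.891$), and you yourself flag that verifying this, or the fallback bound $|G_{\mathrm{DC}}'| < 0.891\cdot 2\pi/\psi_{\mathrm{H}}^2$ on the decreasing branch, is intractable because $\Theta$, $z_0$ and $\psi^{\mathrm{DC}}_{\Theta}$ depend on $\psi_{\mathrm{H}}$ only implicitly through the inverse Chebyshev relation and the error-function term. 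Nothing in the proposal closes this loop, so what remains is an existence argument plus an unproven conjecture. A secondary weakness: your endpoint signs invoke physical facts about true array gains (Cauchy--Schwarz optimality of uniform weighting, Dolph--Chebyshev optimality), but $G_{\mathrm{DC}}$ in \eqref{C7:RootInterSect} is the specific approximation \eqref{C7:MainBeamResponseDC} obtained from the radiated-power normalization with the 3GPP-style beam model, so those facts do not directly apply to the quantity the equation is actually stated in terms of.

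The paper's proof avoids the convexity question entirely by changing the parameterization: instead of treating $\psi_{\mathrm{H}}$ as the free variable, it treats the MSLR $\Theta$ as the variable, with $\psi^{\mathrm{DC}}_{\mathrm{H}}(\Theta)$ and $\psi^{\mathrm{DC}}_{\Theta}(\Theta)$ both monotonically increasing in $\Theta$ (a lower sidelobe level always widens a Chebyshev beam). After using $\psi^{\mathrm{DC}}_{\Theta} \ge \psi_{\mathrm{H}}$ to justify $\mathrm{erf}(\cdot) \approx 1$, the intersection equation collapses to $17.844\left(2\pi - \psi^{\mathrm{DC}}_{\Theta}\right) \approx \psi_{\mathrm{H}}\,\Theta^2$, whose left side is strictly decreasing and whose right side is strictly increasing in $\Theta$; two oppositely monotone functions cross at most once, and the endpoint comparison ($\Theta = \sqrt{2}$ gives LHS $>$ RHS, while $\Theta \to \infty$ sends the RHS to infinity with the LHS bounded) gives exactly one crossing. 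If you want to salvage your route, the fix is to adopt this reparameterization: first-order monotonicity in $\Theta$ is available from standard Chebyshev-array properties, whereas the second-order information your convexity argument needs is not.
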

\begin{proof}
Please refer to Appendix \ref{C7:AppendixD} for a proof of Theorem \ref{C7:Theorem1}.
\end{proof}

The existence of the unique intersection point $\psi _{\mathrm{H}}^{\mathrm{inter}}$ implies that there is always a threshold which helps to determine the superior beamwidth control method in terms of the main lobe power gain based on the AODs of the two NOMA users.

\subsection{Analog Beamformer Design}\label{C7:AnalogBeamformer}
In a multi-RF chain system, inter-beam interference exists due to concurrent transmission.
In practice, the interference can be suppressed via using a digital precoder ${\bf{Q}}$, which will be designed in Section \ref{C7:EEMax}.
Moreover, we note that, for multi-RF chain systems, the inter-beam interference diminishes with increasing the number of antennas at the BS due to the array gain.
In other words, for ${N_{\mathrm{BS}}} \to \infty$, a BS equipped with multiple independent single-RF chains can mimic a BS with multi-RF chains.
Therefore, in this section, we study an effective analog beamformer design to minimize the main lobe power loss of the single-RF chain system, which serves as a guideline for the design of efficient analog beamformers in multi-RF chain systems\footnote{In fact, the optimal analog beamformer design in a multi-RF chain system needs to take into account both the inter-beam interference and user scheduling, which is generally intractable.}.
As shown in \eqref{C7:CBFWeight} and \eqref{C7:DCBFWeight} in Section \ref{C7:BeamWidthControlSection}, the analog beamformer depends on the main beam direction $\theta_0$ and the desired half-power beamwidth $\psi_{\mathrm{H}}$.
In the following, we first deal with searching for an efficient beam direction for a given arbitrary analog beamformer.
Then, the optimal desired half-power beamwidth $\psi_{\mathrm{H}}$ can be obtained to minimize the main lobe power loss\footnote{In this chapter, after user ordering, we first perform beam rotation for a given arbitrary analog beamformer and then control the beamwidth. Note that swapping the order of beam rotation and beamwidth control does not change the result of the analog beamformer design and does not incur a system performance degradation, since they are independent with each other.}.

\noindent\textbf{\underline{{Main Beam Direction}}}

Although the beamwidth control methods have been proposed, characterized, and compared, how to select an appropriate main beam direction $\theta_0$ for the generated analog beam is still unknown and it is an important issue for the proposed scheme.
In particular, rotating the analog beam of a NOMA group changes the effective channel gains of the two NOMA users, and thus affects the SIC decoding order and the sum-rate of this NOMA group.
Since a normalized analog beamformer is employed in this chapter, i.e., ${\left\| {{{\bf{w}}_{{\mathrm{CBF}}}}} \right\|^2} = {\left\| {{{\bf{w}}_{{\mathrm{DC}}}}} \right\|^2} = 1$, changing the main beam direction does not change the system power consumption.
As a result, in this section, we aim to design the asymptotically optimal main beam direction to maximize the sum-rate of a NOMA group, so as to improve the system energy efficiency.

Since only LOS CSI is required for analog beamformer design in our proposed scheme, we assume a pure LOS mmWave channel in this section, i.e., ${{\bf{H}}_k} = {\alpha _{k,0}}{{\bf{H}}_{k,0}}$.
Note that it is a reasonable assumption since the LOS path gain is much stronger than the NLOS path gain \cite{Rappaport2013}.
In this section, for the considered single-RF chain system in pure LOS channel, the RF chain index $r$ is omitted and the index of LOS path $l = 0$ is fixed.

To design the optimal beam direction of a NOMA group, consider a single-RF chain system serving two downlink NOMA users, i.e., user 1 and user 2, where user 1 is the strong user and user 2 is the weak user, i.e., ${{\left| {{\alpha _{2,0}}} \right|}}\le {{\left| {{\alpha _{1,0}}} \right|}}$.
Through beamwidth control, an analog beam is generated and widened such that the two users are covered and NOMA transmission is performed within the beam.
In the proposed scheme, changing its main beam direction $\theta_0$ of the generated analog beam may reverse the effective channel gain order of the two users, which in turn changes the SIC decoding order and the sum-rate.
For the first case, if a selected main beam direction $\widetilde{\theta}_0$ does not change the channel gain order of the two users, i.e., $\left| {\widetilde h_1} \left(\widetilde{\theta}_0\right) \right| \ge \left| {\widetilde h_2} \left(\widetilde{\theta}_0\right) \right|$, where ${\widetilde{h}_1}\left(\widetilde{\theta}_0\right)$ and ${\widetilde{h}_2}\left(\widetilde{\theta}_0\right)$ denote the effective channels of user 1 and user 2, respectively.
According to the downlink NOMA protocol\cite{WeiTCOM2017}, the achievable sum-rate of both users is given by
\begin{equation}\label{C7:SumRate1}
{R}_{{\mathrm{sum}}}\left(\widetilde{\theta}_0\right) = \mathcal{W}{\log _2}\left( {1 + \frac{{{p_1}{{\left| {\widetilde h_1} \left(\widetilde{\theta}_0\right) \right|}^2}}}{{\mathcal{W}{N_{0}}}}} \right) + \mathcal{W}{\log _2}\left( {1 + \frac{{{p_2}{{\left| {\widetilde h_2} \left(\widetilde{\theta}_0\right) \right|}^2}}}{{{p_1}{{\left| {\widetilde h_2}\left(\widetilde{\theta}_0\right) \right|}^2} + \mathcal{W}{N_{0}}}}} \right),
\end{equation}
where the two terms denote the achievable data rates of user 1 and user 2, respectively\footnote{In the considered single-RF chain system, the SIC decoding at the strong user, i.e., user 1, can be guaranteed when $\left| {\widetilde h_1} \right| \ge \left| {\widetilde h_2} \right|$\cite{Tse2005}.}, and $\mathcal{W}$ denotes the system bandwidth.
On the other hand, in the second case, if a selected main beam direction $\overline{\theta}_0$ does reverse the channel gain order of the two users, i.e., $\left| {\widetilde h_2}\left(\overline{\theta}_0\right) \right| \ge \left| {\widetilde h_1} \left(\overline{\theta}_0\right)\right|$, similarly, the achievable sum-rate can be written as:
\begin{equation}\label{C7:SumRate2}
\overline{{R}}_{{\mathrm{sum}}} \left(\overline{\theta}_0\right) = \mathcal{W}{\log _2}\left( {1 + \frac{{{p_2}{{\left| {\widetilde h_2}\left(\overline{\theta}_0\right) \right|}^2}}}{{\mathcal{W}{N_{0}}}}} \right)+ \mathcal{W}{\log _2}\left( {1 + \frac{{{p_1}{{\left| { \widetilde h_1}\left(\overline{\theta}_0\right) \right|}^2}}}{{{p_2}{{\left| { \widetilde h_1}\left(\overline{\theta}_0\right) \right|}^2} + \mathcal{W}{N_{0}}}}} \right).
\end{equation}
Now, we study the asymptotically optimal main beam direction in the large number of antennas regime.
The main results are summarized in the following theorem.

\begin{Thm}\label{C7:BeamDirection}
Considering a single-RF chain BS equipped with a massive antenna array serving two downlink NOMA users.
For a given arbitrary analog beamformer and a fixed total transmit power: 1): the main beam direction keeping the original channel gain order outperforms the one that reversing the channel gain order in terms of system sum-rate, i.e., $\mathop {\max }\limits_{{\theta}_0}  \mathop {\lim }\limits_{N_{\mathrm{BS}} \to \infty } {R}_{{\mathrm{sum}}}\left( {\theta}_0  \right) \ge \mathop {\max }\limits_{{\theta}_0}  \mathop {\lim }\limits_{N_{\mathrm{BS}} \to \infty } \overline{{R}}_{{\mathrm{sum}}}\left( {\theta}_0  \right)$;  2): steering the beam to the strong user is asymptotically optimal in terms of the system sum-rate in the large number of antennas regime, i.e., $\theta_{1,0} = \arg\mathop {\max }\limits_{{\theta}_0}  \mathop {\lim }\limits_{N_{\mathrm{BS}} \to \infty } R_{{\mathrm{sum}}}\left( {\theta}_0  \right)$.
\end{Thm}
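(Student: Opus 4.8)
The plan is to reduce the two-dimensional comparison (over the SIC decoding order and over the steering direction $\theta_0$) to a single scalar comparison of the LOS path gains, by exploiting the massive-array/high-SNR structure. First I would evaluate the effective channel gains in the pure-LOS channel \eqref{C7:ChannelModel1}. With the matched receive combiner $\mathbf{v}_k=\frac{1}{\sqrt{N_{\mathrm{UE}}}}\mathbf{a}_{\mathrm{UE}}(\phi_{k,0},N_{\mathrm{UE}})$ and the definition of $\widetilde{\mathbf{h}}_k$ in \eqref{C7:EffectiveChannel}, the UE array response collapses to a scalar, leaving $|\widetilde{h}_k(\theta_0)|^2=|\alpha_{k,0}|^2 N_{\mathrm{UE}}\,|\mathbf{w}^{\mathrm{H}}\mathbf{a}_{\mathrm{BS}}(\theta_{k,0},N_{\mathrm{BS}})|^2$, where $\mathbf{w}$ is the transmit beamformer steered to $\theta_0$ and the last factor is exactly a beam response of the form \eqref{C7:BeamResponseFunction} evaluated at the user's AOD. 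Writing $g_k(\theta_0)=|\mathbf{w}^{\mathrm{H}}\mathbf{a}_{\mathrm{BS}}(\theta_{k,0},N_{\mathrm{BS}})|^2$, the steering is fully captured by how $g_1,g_2$ interact with the fixed, ordered gains $|\alpha_{1,0}|\ge|\alpha_{2,0}|$. The key structural fact I would record here is that, for both the CBF and DCBF beamformers of Section \ref{C7:BeamWidthControlSection}, the beam pattern is invariant under steering up to an angular shift, so $g_k(\theta_0)\le G$ with equality precisely at $\theta_0=\theta_{k,0}$, where $G$ is the common main-lobe response shared by the two steering directions.

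Second, I would take the large-array limit. As $N_{\mathrm{BS}}\to\infty$ the beamforming gain drives the effective received SNR to infinity, so in \eqref{C7:SumRate1} and \eqref{C7:SumRate2} the rate of the user that is decoded first while treating its partner's signal as interference saturates: its SINR tends to the ratio of the superposed powers, contributing the constant $\mathcal{W}\log_2(1+p_2/p_1)$ in the non-reversed case and $\mathcal{W}\log_2(1+p_1/p_2)$ in the reversed case, independent of that user's channel gain. The SIC-decoded user's rate grows like $\mathcal{W}\log_2(\,\cdot\,g)$. Collecting terms, the non-reversed sum-rate collapses to $\mathcal{W}\log_2\!\big(c\,|\alpha_{1,0}|^2 g_1(\theta_0)(p_1+p_2)\big)$ and the reversed one to $\mathcal{W}\log_2\!\big(c\,|\alpha_{2,0}|^2 g_2(\theta_0)(p_1+p_2)\big)$ for the common constant $c=N_{\mathrm{UE}}/(\mathcal{W}N_0)$ — crucially free of the power split once the interference term has saturated.

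Third, I would optimize each regime over $\theta_0$. Since each collapsed expression is monotone in the corresponding beam response and $g_k$ attains its maximum $G$ at $\theta_0=\theta_{k,0}$, the best non-reversed direction is $\theta_0=\theta_{1,0}$ and the best reversed direction is $\theta_0=\theta_{2,0}$. Their optimized values then differ only through the fixed path gains, so $\max_{\theta_0}\lim R_{\mathrm{sum}}-\max_{\theta_0}\lim\overline{R}_{\mathrm{sum}}=\mathcal{W}\log_2\!\big(|\alpha_{1,0}|^2/|\alpha_{2,0}|^2\big)\ge 0$, which is exactly Part~1. For Part~2, I observe that $\theta_0=\theta_{1,0}$ itself lies in the non-reversed regime — at it $g_1=G\ge g_2$ together with $|\alpha_{1,0}|\ge|\alpha_{2,0}|$ yields $|\widetilde{h}_1|\ge|\widetilde{h}_2|$ — so it is feasible and attains the non-reversed maximum; combined with Part~1 it attains the global maximum over all steering directions, establishing $\theta_{1,0}=\arg\max_{\theta_0}\lim_{N_{\mathrm{BS}}\to\infty}R_{\mathrm{sum}}(\theta_0)$.

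The main obstacle will be making the asymptotic saturation rigorous rather than heuristic: I must justify interchanging the $N_{\mathrm{BS}}\to\infty$ limit with the logarithmic rate expressions and, more delicately, show that the off-peak response $g_2$ under the strong-user-steered beam stays bounded away from zero, so that the interference-limited SINR genuinely converges to $p_2/p_1$ and does not itself vanish. This is precisely where the beamwidth-control guarantee is essential — both users lie within the half-power beamwidth, hence $g_k\ge G/2>0$ — which I would invoke together with the beam-pattern characterizations of Section \ref{C7:BeamWidthControlSection} to bound $g_2$ uniformly. A secondary technical point is verifying the steering-invariance of the main-lobe response $G$ for the DCBF amplitude taper, which I would read off from the approximation \eqref{C7:BeamResponseFunctionApproximation}.
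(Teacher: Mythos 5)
Your proposal is correct and follows essentially the same route as the paper's own proof: the pure-LOS reduction, the large-array/high-SNR collapse of the sum-rate to $\mathcal{W}\log_2$ of the total power times the SIC-decoded user's effective gain (with the interference-limited partner contributing only the power-ratio constant that recombines into $p_1+p_2$), maximization of the beam response at each user's AOD with the common main-lobe level $G$, and the final comparison via $\left|\alpha_{1,0}\right|\ge\left|\alpha_{2,0}\right|$. The only differences are that you make explicit two points the paper leaves implicit --- the uniform lower bound $g_k\ge G/2$ supplied by the half-power-beamwidth assumption, which justifies the saturation of the interference-limited SINR, and the feasibility check that steering at $\theta_{1,0}$ indeed preserves the original channel-gain order --- which strengthens rather than changes the argument.
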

\begin{proof}
Please refer to Appendix \ref{C7:AppendixD2} for a proof of Theorem \ref{C7:BeamDirection}.
\end{proof}

\begin{remark}
In the literature, e.g. \cite{Dingtobepublished}, it has been analytically shown that the performance gain of NOMA over OMA is enlarged when the paired two NOMA users have distinctive channel gains.
It is consistent with our conclusion shown in Theorem \ref{C7:BeamDirection}, where steering the beam to the strong user will maximize the difference of effective channel gains between the two NOMA users.
Note that the optimal main beam direction derived in Theorem \ref{C7:BeamDirection} can be easily verified via simulations using the exhaustive-based search algorithm.
\end{remark}

\noindent\textbf{\underline{{The Optimal Desired Half-power Beamwidth}}}

After determining the asymptotically optimal main beam direction for an arbitrary analog beam, the optimal desired half-power beamwidth can be obtained based on the phase difference $\psi_{12} = \frac{2\pi d}{\lambda}\left(\cos\left(\theta_{1,0}\right) - \cos\left(\theta_{2,0}\right)\right)$ between user 1 and user 2.
In particular, according to the comparison between $\psi_{12}$ and $\psi _{\mathrm{H}}^{\mathrm{inter}}$ as well as ${\psi^{\mathrm{min}} _{\mathrm{H}}}$, we discuss three cases for analog beamformer design to minimize the main lobe power loss in the following:
\begin{enumerate}[(a)]
  \item When $\left|{\psi _{12}}\right| \le \frac{{\psi^{\mathrm{min}} _{\mathrm{H}}}}{2}$, increasing the beamwidth is not necessary.
      The analog beamformer is given by CBF with the whole antenna array ${\bf{w}} = {{\bf{w}}_{{\mathrm{CBF}}}}\left( {{\theta _{1,0}},N_{{\mathrm{BS}}}} \right)$ and there is no power loss at the main beam direction.
  \item When $\frac{{\psi^{\mathrm{min}} _{\mathrm{H}}}}{2} < \left|{\psi _{12}}\right| \le \frac{\psi _{\mathrm{H}}^{\mathrm{inter}}}{2}$, CBF-based beamwidth control incurs less power loss compared to the DCBF-based one.
      Therefore, we need to reduce the array size such that $\frac{\psi _{\mathrm{H}}}{2} = 0.891\frac{\pi}{N'_{{\mathrm{BS}}}} \ge \left|{\psi _{12}}\right|$.
      As shown in \eqref{C7:MLRVersus3dBBeamwdith}, the main lobe response of CBF is monotonically decreased with increasing $\psi _{\mathrm{H}}$.
      The optimal number of active antennas for the CBF-based beamwidth control should be ${N'_{\mathrm{BS}}} = \lfloor 0.891\frac{\pi}{\left|{\psi _{12}}\right|}\rfloor$ to minimize the main lobe power loss and the analog beamformer is given by ${\bf{w}} = {{\bf{w}}_{{\mathrm{CBF}}}}\left( {{\theta _{1,0}},N'_{\mathrm{BS}}} \right)$.
  \item When $\left|{\psi _{12}}\right| > \frac{\psi _{\mathrm{H}}^{\mathrm{inter}}}{2}$, performing DCBF to control the beamwidth is preferred due to its larger main lobe response compared to the CBF-based one in this regime.
      Also, as shown in Figure \ref{C7:MainBeamResponseBeamWidthControl}, we can observe that the main lobe response of DCBF is monotonically decreasing with the desired half-power beamwidth $\psi _{\mathrm{H}}$ for $\left|{\psi _{12}}\right| > \frac{\psi _{\mathrm{H}}^{\mathrm{inter}}}{2}$.
      As a result, we set $\psi _{\mathrm{H}} = 2\left|{\psi _{12}}\right|$ to minimize the main lobe power loss.
      The analog beamformer is given by ${\bf{w}} = {{\bf{w}}_{{\mathrm{DC}}}}\left( {{\theta _{1,0}},{N_{{\mathrm{BS}}}},\Theta} \right)$, where $\Theta$ is obtained based on \eqref{C7:Beamwidth3dBDC} for $\psi _{\mathrm{H}} = 2\left|{\psi _{12}}\right|$.
\end{enumerate}

Note that the proposed analog beamformer design exploits the advantages of both the two proposed beamwidth control methods in the whole phase difference regime.
Furthermore, as the proposed analog beamformer design depends on the phase difference between two NOMA users, the user scheduling strategy would affect the analog beamformer design when extending to multi-RF chain systems.
\section{Energy-efficient Resource Allocation Design}\label{C7:EEMax}
In this section, the energy-efficient resource allocation strategy is designed for the proposed beamwidth-controlled mmWave NOMA scheme in a multi-RF chain system.
\subsection{Problem Formulation}

\noindent\textbf{\underline{Analog Beamformer in Multi-RF Chain Systems}}

Adopting the analog beamformer design for a single-RF chain system, there are four options for the analog beamformer of RF chain $r$ to be selected according to the user scheduling variable $u_{k,r}$.
These four options are listed as follows:
\begin{align}\label{C7:DCBFAnalogBeamformer}
{{\bf{w}}_r}\left({\mathbf{u} _r}\right) = \left\{  \begin{array}{ll}
{{\bf{w}}_{{\mathrm{CBF}}}}\left( {{\theta _{k',0}},{N_{{\mathrm{BS}}}}} \right),&\text{if}\;\sum\limits_{k = 1}^K {{u_{k,r}}}  = 1\;\text{and}\;{u_{k',r}} = 1\\[0mm]
{{\bf{w}}_{{\mathrm{CBF}}}}\left( {{\theta _{k',0}},{N_{{\mathrm{BS}}}}} \right),&\text{if}\;\sum\limits_{k = 1}^K {{u_{k,r}}}  = 2,{u_{k',r}} = {u_{j',r}} = 1,\\
&k' < j',\text{and}\;\left| {{\psi _{k'j'}}} \right| \le \frac{{\psi _{\mathrm{H}}^{{\mathrm{min}}}}}{2},\\[0mm]
{{\bf{w}}_{{\mathrm{CBF}}}}\left( {{\theta _{k',0}},N'_{r,\mathrm{BS}} } \right),&
\text{if}\;\sum\limits_{k = 1}^K {{u_{k,r}}}  = 2,{u_{k',r}} = {u_{j',r}} = 1,\\
&k' < j',\text{and}\;\frac{{\psi _{\mathrm{H}}^{{\mathrm{min}}}}}{2} < \left| {{\psi _{k'j'}}} \right| \le \frac{{\psi _{\mathrm{H}}^{{\mathrm{inter}}}}}{2}, \\[0mm]
{\bf{w}}_{{\mathrm{DC}}}\left( {{\theta _{k',0}},{N_{{\mathrm{BS}}}},\Theta } \right),&
\text{if}\;\sum\limits_{k = 1}^K {{u_{k,r}}}  = 2,{u_{k',r}} = {u_{j',r}} = 1, \\
&k' < j',\text{and}\;\left| {{\psi _{k'j'}}} \right| > \frac{{\psi _{\mathrm{H}}^{{\mathrm{inter}}}}}{2},
\end{array} \right.
\end{align}
where ${\mathbf{u} _r} =  \left[ {{{u} _{1,r}},{{u} _{2,r}}, \ldots ,{{u} _{K,r}}} \right]^{\mathrm{T}} \in \mathbb{B}^{K\times 1}$ collects the scheduling variables associated with RF chain $r$.
In \eqref{C7:DCBFAnalogBeamformer}, the first option means that CBF with the whole antenna array is employed for the case of OMA, i.e., $\sum\limits_{k = 1}^K {{u_{k,r}}}  = 1$.
The second, third, and fourth options follow the analog beamformer design for a NOMA group as discussed in the single-RF chain system.
Note that, for the third option, the number of active antennas associated with RF chain $r$ is $N'_{r,\mathrm{BS}} = \lfloor 0.891\frac{\pi}{\left|{\psi _{k'j'}}\right|}\rfloor$.
The calculation of MSLR $\Theta$ is based on \eqref{C7:Beamwidth3dBDC} for ${{\psi _{\mathrm{H}}} = 2\left| {{\psi _{k'j'}}} \right|}$ in the fourth option.

\noindent\textbf{\underline{{Energy Efficiency in Multi-RF Chain Systems}}}

The system energy efficiency of the proposed mmWave NOMA scheme in a multi-RF chain system is formulated in the following.
A fixed SIC decoding order $1,2,\ldots,K$ is adopted.
If user $k$ and user $i$ are scheduled to form a NOMA group associated with RF chain $r$, $\forall k<i$, user $k$ first decodes the messages of user $i$ before decoding its own information and the corresponding achievable data rate is given by
\begin{equation}\label{C7:IndividualRate2}
R_{k,i,r}=  \mathcal{W}{\log _2}\left( {1 +  \frac{{{u_{i,r}}{u_{k,r}}{{\left| {{{ {\widetilde{{\mathbf{h}}}_k}}^{\mathrm{H}}}{\bf{q}}_i} \right|}^2}}}{{I_{k,r}^{{\mathrm{inter}}} + I_{k,i,r}^{{\mathrm{intra}}} + \mathcal{W}{N_{0}}}}} \right),\forall k<i,r,
\end{equation}
where $I_{k,r}^{{\mathrm{inter}}} = \sum\limits_{r' \ne r}^{{N'_{{\mathrm{RF}}}}} {\sum\limits_{d = 1}^K {{u_{d,r'}}{{\left| {{{ {\widetilde{\bf{ h}}_k} }^{\mathrm{H}}}{\bf{q}}_d} \right|}^2}} }$ denotes the inter-beam interference power faced by user $k$ associated with RF chain $r$ and $I_{k,i,r}^{{\mathrm{intra}}} = \sum\limits_{d = 1}^{i - 1} {{u_{d,r}}{{\left| {{{ {\widetilde{\bf{ h}}_k}}^{\mathrm{H}}}{\bf{q}}_d} \right|}^2}}$ denotes the intra-beam interference power when decoding the message of user $i$ at user $k$.
After SIC decoding, the achievable rate of user $k$ associated with RF chain $r$ is given by:
\begin{equation}\label{C7:DLIndividualRate1}
R_{k,k,r} = \mathcal{W}{\log _2}\left( {1 + \frac{{{u_{k,r}}{{\left| {{{ {\widetilde{\bf{ h}}_k}}^{\mathrm{H}}}{\bf{q}}_k} \right|}^2}}}{{I_{k,r}^{{\mathrm{inter}}} +  I_{k,k,r}^{{\mathrm{intra}}} + \mathcal{W}{N_{0}}}}} \right),\forall k,r,
\end{equation}
where $I_{k,k,r}^{{\mathrm{intra}}} = \sum\limits_{d = 1}^{k - 1} {{u_{d,r}}{{\left| {{{ {\widetilde{\bf{ h}}_k}}^{\mathrm{H}}}{\bf{q}}_d} \right|}^2}}$ denotes the intra-beam interference power when user $k$ decodes its own information.
As a result, the achievable rate of user $i$ associated with RF chain $r$ is generally given by
\begin{equation}\label{C7:DLIndividualRateMin}
{R_{i,r}} = \mathop {\min }\limits_{\forall k < i, {u_{k,r}} = 1}  \left\{ {{R_{k,i,r}},{R_{i,i,r}}} \right\},\forall i,r,
\end{equation}
where $\min\left\{ \cdot \right\}$ is over the users stronger than user $i$ allocated to RF chain $r$, i.e., $\forall k < i$ and ${u_{k,r}} = 1$.
We can observe that if user $i$ is the strong user in NOMA group $r$, i.e., ${u_{k,r}} = 0$, $\forall k < i$, we have ${R_{k,i,r}} = 0$ in \eqref{C7:IndividualRate2} and thus ${R_{i,r}} = {R_{i,i,r}}$ in \eqref{C7:DLIndividualRateMin}.
On the other hand, if user $i$ is the weak user in NOMA group $r$, ${R_{i,r}}$ is obtained by the minimum value among the achievable rates of decoding user $i$ at stronger users  allocated to RF chain $r$, i.e., $\forall k < i$ and ${u_{k,r}} = 1$, and at user $i$ itself.
Therefore, the achievable rate formulation in \eqref{C7:DLIndividualRateMin} guarantees that the weaker user's message is always decodable at the strong user during SIC decoding, even adopting the fixed SIC decoding order.
The achievable system sum-rate for the proposed NOMA scheme can be obtained by
\begin{equation}\label{C7:SumRate}
R_{{\mathrm{sum}}}\left(\mathbf{U},\mathbf{Q}\right) = \sum\limits_{k = 1}^K {\sum\limits_{r = 1}^{{N'_{{\mathrm{RF}}}} } {R_{k,r}} },
\end{equation}
and the system power consumption is given by
\begin{equation}
 {U_{\mathrm{P}}} \left(\mathbf{U},\mathbf{Q}\right) =  {N'_{{\mathrm{RF}}}}{P_{{\mathrm{RF}}}} + \sum\limits_{r = 1}^{{N'_{{\mathrm{RF}}}}}  {{N_{{\mathrm{PS}}}^{r}}}\left(\mathbf{u}_r\right){P_{{\mathrm{PS}}}} +  \frac{1}{\rho} \sum\limits_{k = 1}^K  \left\|{{{\mathbf{q}}_k}}\right\|^2,
\end{equation}
where ${{N_{{\mathrm{PS}}}^{r}}}\left(\mathbf{u}_r\right)$ denotes the number of active PSs attached with RF chain $r$ and $\mathbf{U} =  \left[ {{\mathbf{u} _1},{\mathbf{u} _2}, \ldots ,{\mathbf{u} _{N'_{{\mathrm{RF}}}}}} \right] \in \mathbb{B}^{K\times {N'_{{\mathrm{RF}}}}}$.
We note that ${{N_{{\mathrm{PS}}}^{r}}}\left(\mathbf{u}_r\right)$ depends on the employed analog beamformer for RF chain $r$ and thus is affected by its user scheduling strategy $\mathbf{u}_r$.
In particular, as shown in \eqref{C7:DCBFAnalogBeamformer}, we have ${{N_{{\mathrm{PS}}}^{r}}} = N_{\mathrm{BS}}$ for options 1 and 2, ${{N_{{\mathrm{PS}}}^{r}}} = {N'_{r,\mathrm{BS}}}$ for option 3, and ${{N_{{\mathrm{PS}}}^{r}}} = 2{N_{\mathrm{BS}}}$ for option 4.
Now, the energy-efficient resource allocation design of the proposed scheme in the multi-RF chain system can be formulated as the following optimization problem\footnote{{Due to the existence of intra-beam interference and inter-beam interference among users, the analytical expression of the communication range cannot be obtained in a closed-from for a given  target minimum data rate and total transmit power budget.
Instead, we rely on the proposed optimization framework to design the user grouping policy and the digital precoder under a total transmit power constraint to satisfy the QoS requirements of all the users within an expected communication range.}}:
\begin{align} \label{C7:ResourceAllocationOriginal}
&\underset{\mathbf{U},\mathbf{Q}}{\maxo} \;\;\ {\mathrm{EE}}\left(\mathbf{U},\mathbf{Q}\right) = \frac{R_{{\mathrm{sum}}}\left(\mathbf{U},\mathbf{Q}\right)}{{U_{\mathrm{P}}}\left(\mathbf{U},\mathbf{Q}\right)}  \\[-1mm]
\mbox{s.t.}\;\;
&\mbox{C1: } \sum\limits_{k = 1}^K \left\|{{{\mathbf{q}}_k}}\right\|^2 \le {p_{{\mathrm{BS}}}},\notag\\[-1mm]
&\mbox{C2: } \sum\limits_{r = 1}^{N'_{{\mathrm{RF}}}} {R_{k,r} \ge R_{\min }}, \forall k,\;\;\mbox{C3: } u_{k,r} \in \{0,1\}, \forall k,r,\notag
\end{align}
where ${\mathrm{EE}}$ denotes the system energy efficiency, C1 is the transmit power constraint for the BS\footnote{{As shown in \cite{Vaezi2018non}, allocating a higher power to the user with the worse channel is not necessarily required in NOMA.
Therefore, we do not impose any power order constraint in our problem formulation in \eqref{C7:ResourceAllocationOriginal}.}}, and $R_{\min }$ in C2 denotes the minimum data rate requirement of each user.

In the considered system, the user scheduling strategy affects the analog beamformer on each RF chain, c.f. \eqref{C7:DCBFAnalogBeamformer}, and thus determines the effective channel in \eqref{C7:EffectiveChannel}.
In particular, combining \eqref{C7:EffectiveChannel} and \eqref{C7:DCBFAnalogBeamformer}, the effective channel gain in the rate expression in \eqref{C7:ResourceAllocationOriginal} is actually a piecewise function of user scheduling strategy $\mathbf{U}$, which leads to an intractable problem formulation.
As an alternative, we aim to design a computationally efficient resource allocation.
In the following, we first handle the user grouping design and then propose a digital precoder to maximize the system energy efficiency.

\subsection{NOMA User Grouping Algorithm}
To design an efficient NOMA user grouping method, we borrow the concept of coalition formation from game theory\cite{SaadCoalitional2012,WangCoalitionNOMA}.
The basic idea of the proposed user grouping algorithm\footnote{{Note that the formulated energy-efficient resource allocation deign is adaptive to the possible change of the order in effective channel gains with taking into account the analog beamforming and digital precoding.
As a result, the user ordering constraint is not considered in the user grouping process.
}} is to form coalitions among users iteratively to improve the system energy efficiency with equal power allocation in a pure analog beamforming mmWave system, as shown in \eqref{C7:PureAnalog1} and \eqref{C7:PureAnalog2} with $p_k = \frac{p_{{\mathrm{BS}}}}{K}$, $\forall k$.
Before presenting the algorithm, we need to define the following two preferred operations, i.e., \emph{strictly preferred leaving and joining operation} and \emph{strictly preferred switching operation}, based on the coalitional game theory\cite{SaadCoalitional2012}.
In particular, we view all the $K$ users as a set of cooperative \emph{players}, denoted by $\mathcal{K} = \{1,\ldots,K\}$, who seek to form \emph{coalitions} $S$ (NOMA groups in this chapter), $S \subseteq \mathcal{K} $, to maximize the system energy efficiency.
More specifically, we define a coalitional structure $\mathcal{B}$, as a partition of $\mathcal{K}$, i.e., a collection of coalitions $\mathcal{B} = \{S_1,\ldots,S_{{\left|\mathcal{B}\right|}}\}$, such that $\forall r \ne r'$, $S_{r} \bigcap S_{r'} = \emptyset$, and $\cup_{r = 1}^{\left|\mathcal{B}\right|} S_{r} = \mathcal{K}$.
Note that, given an arbitrary coalitional structure $\mathcal{B} = \{S_1,\ldots,S_{{\left|\mathcal{B}\right|}}\}$ with ${\left|\mathcal{B}\right|} = {N'_{{\mathrm{RF}}}}$, the user scheduling variable ${u_{k,r}}$ can be easily obtained by the following mapping:
\begin{equation}\label{C7:MappingB2U}
{u_{k,r}} = \left\{ {\begin{array}{*{20}{c}}
1&{{\mathrm{if}}\;k \in S_r},\\
0&{{\mathrm{otherwise}}}.
\end{array}} \right.
\end{equation}
As a result, the system energy efficiency of the formed coalitional structure ${\mathrm{EE}}\left( \mathcal{B} \right)$ can be obtained combining \eqref{C7:PureAnalog1}, \eqref{C7:MappingB2U}, and the objective function in \eqref{C7:ResourceAllocationOriginal}.

\begin{table}
\begin{algorithm} [H]                    
\caption{User Grouping Algorithm}
\label{C7:alg1}                             
\begin{algorithmic} [1]
\small          
\STATE \textbf{Initialization}
Initialize the iteration index $\mathrm{iter} = 0$.
The partition is initialized by $\mathcal{B}_0 = \mathcal{K} = \{S_1,\ldots,S_{K}\}$ with $S_k = {k}$, $\forall k$, i.e., OMA.
\REPEAT
\FOR{$k$ = $1$:$K$}
    \STATE User $k \in {S_{r}}$ visits coalitions ${S_{r'}} \in \mathcal{B}_{\mathrm{iter}}$ with $r' \neq r$.
    \IF {$\left|\{{S_{r'}} \cup \left\{ k \right\}\}\right| \le 2$}
        \IF {$\left( {{S_r},\mathcal{B}_{\mathrm{iter}}} \right) { \prec _k} \left( {{S_{r'}}  \cup  \{k\} ,\mathcal{B}_{\mathrm{iter}+1}} \right)$}
            \STATE Execute the leaving and joining operation in Definition \ref{C7:PairOperation}.
            \STATE $\mathrm{iter} = \mathrm{iter} + 1$.
        \ENDIF
    \ELSE
        \IF {$\left( {{S_r},{S_r'},\mathcal{B}_{\mathrm{iter}}} \right) { \prec ^{k'}_k}  \left({S_{r}}   \backslash \{k\}  \cup   \left\{ k' \right\}, {{S_{r'}} \backslash \{k'\}  \cup   \left\{ k \right\},\mathcal{B}_{\mathrm{iter}+1}} \right)$}
            \STATE Execute the switch operation in Definition \ref{C7:SwitchOperation}.
            \STATE $\mathrm{iter} = \mathrm{iter} + 1$.
        \ENDIF
    \ENDIF
\ENDFOR
\UNTIL {No strictly preferred operation can be found.}
\RETURN {$\mathcal{B}_{\mathrm{iter}}$}
\end{algorithmic}
\end{algorithm}
\end{table}

\begin{Def}\label{C7:PairOperation}
In the $\mathrm{iter}$-th iteration, given the current partition $\mathcal{B}_{\mathrm{iter}} = \{S_1,\ldots,S_{\left|\mathcal{B}_{\mathrm{iter}}\right|}\}$ of the set of users $\mathcal{K}$, user $k$ would \textbf{leave} its current coalition ${S_{r}}$ and \textbf{join} another coalition ${S_{r'}} \in \mathcal{B}_{\mathrm{iter}}$, $r' \neq r$, if and only if it is a \emph{strictly preferred leaving and joining operation}, i.e.,
\begin{equation}\label{C7:StrictPreferredLaJ}
\left( {{S_r},\mathcal{B}_{\mathrm{iter}}} \right) { \prec _k} \left( {{S_{r'}}  \cup  \{k\} ,\mathcal{B}_{\mathrm{iter}+1}} \right)\Leftrightarrow {\mathrm{EE}}\left( \mathcal{B}_{\mathrm{iter}} \right) < {\mathrm{EE}}\left( \mathcal{B}_{\mathrm{iter}+1} \right),
\end{equation}
where new formed partition is $\mathcal{B}_{\mathrm{iter}+1} = \left\{ {\mathcal{B}_{\mathrm{iter}}\backslash \left\{ {{S_r},{S_{r'}}} \right\}} \right\} \cup \left\{ {{S_r}\backslash \left\{ k \right\},{S_{r'}} \cup \left\{ k \right\}} \right\}$.
\end{Def}

\begin{Def}\label{C7:SwitchOperation}
In the $\mathrm{iter}$-th iteration, given a partition $\mathcal{B}_{\mathrm{iter}} = \{S_1,\ldots,S_{\left|\mathcal{B}_{\mathrm{iter}}\right|}\}$ of the set of users $\mathcal{K}$, user $k \in {S_{r}}$ and user $k' \in {S_{r'}}$ would switch with each other, if and only if it is a \emph{strictly preferred switch operation}, i.e.,
\begin{align}\label{C7:SwitchRule}
\left( {{S_r},{S_r'},\mathcal{B}_{\mathrm{iter}}} \right) &{ \prec ^{k'}_k} \left({S_{r}} \backslash \{k\} \cup \left\{ k' \right\}, {{S_{r'}} \backslash \{k'\} \cup \left\{ k \right\} ,\mathcal{B}_{\mathrm{iter}+1}} \right) \notag\\[0mm]
& \Leftrightarrow {\mathrm{EE}}\left( \mathcal{B}_{\mathrm{iter}} \right) < {\mathrm{EE}}\left( \mathcal{B}_{\mathrm{iter}+1} \right),
\end{align}
where new formed partition is
\begin{equation}\label{C7:ColiationStructureB}
  \mathcal{B}_{\mathrm{iter}+1} = \left\{ {{\cal B}_{\mathrm{iter}}\backslash \left\{ {{S_r},{S_{r'}}} \right\}} \right\} \cup \left\{ {\{{S_{r}} \backslash \{k\} \cup \left\{ k' \right\}\},\{{S_{r'}} \backslash \{k'\} \cup \left\{ k \right\}\}} \right\}.
\end{equation}
\end{Def}

Note that the equivalence in \eqref{C7:StrictPreferredLaJ} and  \eqref{C7:SwitchRule} imply that a \emph{strictly preferred operation} can strictly improve the system energy efficiency.
Based on the defined strictly preferred operations, the proposed user grouping algorithm is presented in \textbf{Algorithm} \ref{C7:alg1}.
The algorithm is initialized with each user as a coalition, i.e., OMA, and all the PSs of all the RF chains are active.
In each iteration, every user $k \in {S_{r}}$ visits all the potential coalitions except its own coalition in current coalitional structure, i.e., ${S_{r'}} \in \mathcal{B}_{\mathrm{iter}}$ and $r' \neq r$.
Then, each user checks and executes the strictly preferred operations based on the Definitions \ref{C7:PairOperation} and  \ref{C7:SwitchOperation}.
The iteration stops when no more preferred operation can be found, i.e., it converges to a final stable coalitional structure $\mathcal{B}^{*} = \{S_1^{*},\ldots,S_{N'_{{\mathrm{RF}}}}^{*}\}$.
The effectiveness, stability, and convergence of the proposed algorithm are omitted here and the details can be found in \cite{wei2018multibeam,SaadCoalitional2012}.

\subsection{Energy-efficient Digital Precoder Design}
Given the converged stable coalitional structure $\mathcal{B}^{*}$, we can obtain the corresponding user scheduling variable $u_{k,r}$, $\forall k,r$, and the analog beamformer ${{\bf{w}}}_r$, $\forall r$, based on \eqref{C7:MappingB2U} and \eqref{C7:DCBFAnalogBeamformer}, respectively.
Now, the energy-efficient digital precoder design can be formulated as the following optimization problem:
\begin{equation} \label{C7:ResourceAllocation00}
\underset{\mathbf{Q}}{\maxo} \;\;\ \frac{R_{{\mathrm{sum}}}\left(\mathbf{Q}\right)}{{U_{\mathrm{P}}}\left(\mathbf{Q}\right)} \;\;\mbox{s.t.} \;\; \mbox{C1},\mbox{C2}.
\end{equation}
The formulated problem is a non-convex optimization problem, which in general cannot be solved with a systematic and computationally efficient approach optimally.
In particular, the non-convexity arises from the inter-beam and intra-beam interferences in the rate functions.
The non-linear fractional objective function is also an obstacle in designing the digital precoder.
In this chapter, we adopt a novel quadratic transformation \cite{ShenFP} to transform the problem in \eqref{C7:ResourceAllocation00} into an equivalent optimization problem, which facilitates the use of the iterative block coordinate ascent algorithm to achieve a locally optimal solution with a low computational complexity.

Let us define $\mathbf{Q}_k = {{{\mathbf{q}}_k}}{{{\mathbf{q}}^{\mathrm{H}}_k}}$ and ${\widetilde{\bf{ H}}_k} = {{\widetilde{{\mathbf{h}}}_k}}{{\widetilde{{\mathbf{h}}}^{\mathrm{H}}_k}}$.
Also, we introduce the SINR variable of user $i$ associated with RF chain $r$ as
\begin{equation}\label{C7:GammaTransform}
  \gamma_{i,r} = \min \{{\frac{{{ {\Tr \left( {\widetilde{\bf{ H}}_k} {\bf{Q}}_i \right)} }}}{{I_{k,r}^{{\mathrm{inter}}} + I_{k,i,r}^{{\mathrm{intra}}} + \mathcal{W}{N_{0}}}}}, {\frac{{{ {\Tr \left( {\widetilde{\bf{ H}}_i} {\bf{Q}}_i \right)} }}}{{I_{i,r}^{{\mathrm{inter}}} + I_{i,i,r}^{{\mathrm{intra}}} + \mathcal{W}{N_{0}}}}} \}, \forall k<i,r.
\end{equation}
The objective function in \eqref{C7:ResourceAllocation00} can be rewritten as
$V_{\mathrm{I}}\left(\mathcal{Q},\mathbf{\Upsilon}\right) = \frac{R_{{\mathrm{sum}}}\left(\mathbf{\Upsilon}\right)}{{U_{\mathrm{P}}}\left(\mathcal{Q}\right)}$,
where
\vspace{-5mm}
\begin{equation}\label{C7:RsumTransform}
  {R_{{\mathrm{sum}}}\left(\mathbf{\Upsilon}\right)} = {\sum\limits_{k = 1}^K  \sum\limits_{r = 1}^{N'_{{\mathrm{RF}}}} \mathcal{W} {{\log _2}\left( {1 + u_{k,r}\gamma_{k,r}} \right) }}
  \vspace{-5mm}
\end{equation}
and
\vspace{-5mm}
\begin{equation}\label{C7:UPTransform}
    {{U_{\mathrm{P}}}\left(\mathcal{Q}\right)} = {\left( {N'_{{\mathrm{RF}}}}{P_{{\mathrm{RF}}}}  +  \sum\limits_{r = 1}^{{N'_{{\mathrm{RF}}}}} {{N_{{\mathrm{PS}}}^{r}}}\left(\mathbf{u}_r\right){P_{{\mathrm{PS}}}} + \frac{1}{\rho } \sum\limits_{k = 1}^K { { \Tr \left({\bf{Q}}_k\right)} } \right)}\vspace{-5mm}
\end{equation}
with $\mathcal{Q} = \{{\mathbf{Q}_{k}}\}_{k=1}^{K}$ denoting the digital precoder set and $\mathbf{\Upsilon} \in \mathbb{R}^{K \times {N'_{{\mathrm{RF}}}}}$ collecting all the $\gamma_{k,r}$.
Now, the formulated problem in \eqref{C7:ResourceAllocation00} can be equivalently rewritten as
\vspace{-10mm}
\begin{align} \label{C7:ResourceAllocation33}
&\underset{\mathcal{Q},\mathbf{\Upsilon}}{\maxo} \;\;\  V_{\mathrm{I}}\left(\mathcal{Q},\mathbf{\Upsilon}\right) \\
\mbox{s.t.}\;\;
&\mbox{C1: } \sum\limits_{k = 1}^K {{\Tr \left({\bf{Q}}_k\right)} }  \le {p_{{\mathrm{BS}}}},\;\;\mbox{C2: } {{{{u_{k,r}} \gamma_{k,r}}} \ge {u_{k,r}}\left(2^{R_{\min }} - 1\right)}, \forall k,r,\notag \\
&{\mbox{C3}}\mbox{: }  {\frac{{{u_{k,r}}{u_{i,r}}{ {\Tr \left( {\widetilde{\bf{ H}}_k} {\bf{Q}}_i \right)} }}}{{I_{k,r}^{{\mathrm{inter}}}\left({\mathcal{{Q}}}\right) + I_{k,i,r}^{{\mathrm{intra}}}\left({\mathcal{{Q}}}\right) + \mathcal{W}{N_{0}}}}} \ge {u_{k,r}}{u_{i,r}}\gamma_{i,r}, \forall i >  k,\forall r, \notag\\
&\mbox{C4: } {\frac{{u_{k,r}}{{ {\Tr \left( {\widetilde{\bf{ H}}_k} {\bf{Q}}_k \right)} }}}{{I_{k,r}^{{\mathrm{inter}}}\left({\mathcal{{Q}}}\right) + I_{k,k,r}^{{\mathrm{intra}}}\left({\mathcal{{Q}}}\right) + \mathcal{W} {N_{0}}}}} \ge {u_{k,r}}\gamma_{k,r},\forall k,r,\;\mbox{C5: } {\mathrm{Rank}}\left({\mathbf{Q}_{k}}\right) \le 1, \forall k,\notag
\end{align}
where  $I_{k,r}^{{\mathrm{inter}}}\left({\mathcal{{Q}}}\right) = \sum\limits_{r' \ne r}^{{N'_{{\mathrm{RF}}}}} {\sum\limits_{d = 1}^K {{u_{d,r'}}{\Tr \left( {\widetilde{\bf{ H}}_k} {\bf{Q}}_d \right)}} }$ and $I_{k,i,r}^{{\mathrm{intra}}}\left(\mathcal{Q}\right) = \sum\limits_{d = 1}^{i - 1} {{u_{d,r}}{\Tr \left( {\widetilde{\bf{ H}}_k} {\bf{Q}}_d \right)}}$.
Constraints C3 and C4 follow the definition of $\gamma_{i,r}$ and constraint C5 is imposed to guarantee that $\mathbf{Q}_k = {{{\mathbf{q}}_k}}{{{\mathbf{q}}^{\mathrm{H}}_k}}$ holds after optimization.

With adopting the quadratic transform in \cite{ShenFP}, the problem in \eqref{C7:ResourceAllocation33} can be equivalently\footnote{In this chapter, ``equivalent'' means that both problem formulations lead to the same digital precoder and the same system energy efficiency.} transformed as
\begin{align} \label{C7:ResourceAllocation55} &\underset{\mathcal{Q},\mathbf{\Upsilon},\mathbf{{A}},\mathbf{{B}},\delta}{\maxo} \;\; V_{\mathrm{II}} \left( \mathcal{Q}, \mathbf{\Upsilon}, \mathbf{{A}}, \mathbf{{B}}, \delta \right)  =  2\delta \sqrt{ R_{{\mathrm{sum}}} \left( \mathbf{\Upsilon} \right)} - \delta^2{{U_{\mathrm{P}}} \left( \mathcal{Q} \right) } \\
\mbox{s.t.}\;\;
&\mbox{C1},\;\mbox{C2},\;\mbox{C5},\;\overline{{\mbox{C3}}}\mbox{: }  u_{k,r}u_{i,r}g_1\left(a_{k,i,r},{\mathcal{{Q}}}\right) \ge u_{k,r}u_{i,r}{\gamma _{i,r}},\forall i > k, r, \notag\\
&\overline{{\mbox{C4}}}\mbox{: }  u_{k,r}g_2\left(b_{k,r},{\mathcal{{Q}}}\right) \ge u_{k,r}{\gamma _{k,r}},\forall k,r, \notag
\end{align}
where $g_1\left(a_{k,i,r},{\mathcal{{Q}}}\right) = 2{a_{k,i,r}}\sqrt{ {\Tr}\left( {\widetilde{\bf{H}}_k{\bf{Q}}_i} \right)} - a_{k,i,r}^2\left( {{I_{k,r}^{{\mathrm{inter}}}\left({\mathcal{{Q}}}\right) + I_{k,i,r}^{{\mathrm{intra}}}\left({\mathcal{{Q}}}\right) + \mathcal{W}{N_{0}}}} \right)$
and $g_2\left(b_{k,r},{\mathcal{{Q}}}\right) = \left(2{{b}_{k,r}}\sqrt{ {\Tr}\left( {\widetilde{\bf{H}}_k{\bf{Q}}_k} \right)}  - b_{k,r}^2\left( {{I_{k,r}^{{\mathrm{inter}}}\left({\mathcal{{Q}}}\right) + I_{k,k,r}^{{\mathrm{intra}}}\left({\mathcal{{Q}}}\right) + \mathcal{W} {N_{0}}}} \right)\right)$.
Variables ${a_{k,i,r}} \in \mathbb{R}$, ${b_{k,r}} \in \mathbb{R}$, and $\delta \in \mathbb{R}$ are auxiliary variables introduced to handle the fractional functions in constraints C3, C4, and the objective function in \eqref{C7:ResourceAllocation33}, respectively.
To facilitate the presentation, we denote $\mathbf{{A}} \in \mathbb{R} ^ {K \times K \times {N'_{{\mathrm{RF}}}}}$ and $\mathbf{B} \in \mathbb{R} ^ {K \times {N'_{{\mathrm{RF}}}}}$ as the collections of all the ${a_{k,i,r}}$ and ${b_{k,r}}$, respectively, and denote $V_{\mathrm{II}}\left(\mathcal{Q},\mathbf{\Upsilon},\mathbf{{A}},\mathbf{{B}},\delta\right)$ as the objective function in \eqref{C7:ResourceAllocation55}.

\begin{proof}
Please refer to Appendix \ref{C7:AppendixD3} for a proof of the equivalence between the problems in \eqref{C7:ResourceAllocation33} and \eqref{C7:ResourceAllocation55}\footnote{The proof in \cite{ShenFP} cannot be directly applied to the considered problems in \eqref{C7:ResourceAllocation33} and \eqref{C7:ResourceAllocation55}.}.
\end{proof}
\begin{table}
\begin{algorithm} [H]                    
\caption{Iterative Digital Precoder Design Algorithm}     
\label{C7:alg2}                             
\begin{algorithmic} [1]
\small          
\STATE \textbf{Initialization}\\
Initialize the convergence tolerance $\epsilon$, the maximum number of iterations $\mathrm{iter}_\mathrm{max}$, the iteration index ${\mathrm{iter}} = 1$, and the initial feasible solution $\left(\mathcal{Q}^{\mathrm{iter}},\mathbf{\Upsilon}^{\mathrm{iter}}\right)$.
\REPEAT
\STATE
Update the auxiliary variables $\left(\mathbf{{A}}^{\mathrm{iter}},\mathbf{{B}}^{\mathrm{iter}},\delta^{\mathrm{iter}}\right)$ by \eqref{C7:UpdateSlack1}, \eqref{C7:UpdateSlack2}, and \eqref{C7:UpdateSlack3} in Appendix \ref{C7:AppendixD3} with a fixed $\left(\mathcal{Q}^{\mathrm{iter}},\mathbf{\Upsilon}^{\mathrm{iter}}\right)$, respectively.
\STATE
Update the primal variables $\left(\mathcal{Q}^{\mathrm{iter} + 1},\mathbf{\Upsilon}^{\mathrm{iter} + 1}\right)$ by solving the problem in \eqref{C7:ResourceAllocation55} with a fixed $\left(\mathbf{{A}}^{\mathrm{iter}},\mathbf{{B}}^{\mathrm{iter}},\delta^{\mathrm{iter}}\right)$.
\STATE $\mathrm{iter} = \mathrm{iter} + 1$
\UNTIL $\mathrm{iter} = \mathrm{iter}_\mathrm{max}$ or $\frac{\left| {\delta^{\mathrm{iter}} - \delta^{\mathrm{iter-1}}} \right|}{\delta^{\mathrm{iter}}} \le \epsilon$
\end{algorithmic}
\end{algorithm}
\end{table}

After applying the proposed transformation, a block coordinate ascent algorithm can be utilized to achieve a locally optimal solution \cite{ShenFP} via optimizing primal variables $\left(\mathcal{Q},\mathbf{\Upsilon}\right)$ and auxiliary variables $\left(\mathbf{{A}},\mathbf{{B}},\delta\right)$ iteratively, as shown in \textbf{Algorithm} \ref{C7:alg2}.
Specifically, $\left(\mathcal{Q},\mathbf{\Upsilon}\right)$ in line 4 is updated by solving the non-convex problem in \eqref{C7:ResourceAllocation55} with a fixed $\left(\mathbf{{A}},\mathbf{{B}},\delta\right)$, where the non-convexity is due to the rank-one constraint C5.
We adopt the semi-definite programming (SDP) relaxation by removing constraint C5 in \eqref{C7:ResourceAllocation55}, which results in a convex SDP given by
\begin{equation}
\underset{\mathcal{Q},\mathbf{\Upsilon}}{\maxo} \;\;\  V_{\mathrm{II}}\left(\mathcal{Q},\mathbf{\Upsilon},\mathbf{{A}},\mathbf{{B}},\delta\right)
\;\;\mbox{s.t.}\;\;
\mbox{C1},\;\mbox{C2},\;\overline{{\mbox{C3}}},\;\overline{{\mbox{C4}}}.\label{C7:ResourceAllocation77}
\end{equation}
Now, the problem in \eqref{C7:ResourceAllocation77} is convex which can be solved efficiently with standard numerical convex program solvers, such as CVX\cite{cvx}.
Note that, for ${p_{{\mathrm{BS}}}}>0$, the optimal solution ${\bf{Q}}_k^*$, $\forall k$, of \eqref{C7:ResourceAllocation77} is guaranteed to be a rank-one matrix, which can be proved via a similar approach as in \cite{SunMISONOMA}.
In \textbf{Algorithm} \ref{C7:alg2}, the primal variables $\left(\mathcal{Q},\mathbf{\Upsilon}\right)$ are initialized by solving the convex problem in \eqref{C7:ResourceAllocation33} with all the users having the same minimum data rate, i.e., ${{{\gamma_{k,r}}} = \left(2^{R_{\min }} - 1\right)}$, $\forall k,r$.
The algorithm terminates when the maximum number of iterations is reached or the change of $\delta^{\mathrm{iter}}$ is smaller than a predefined convergence tolerance.

\section{Simulation Results}\label{C7:SimulationResults}
In this section, we evaluate the performance of our proposed beamwidth control-based mmWave NOMA scheme and the proposed resource allocation algorithm.
According to \cite{GaoSubarray}, we set ${P_{{\mathrm{RF}}}} = 250$ mW and ${P_{{\mathrm{PS}}}} = 1$ mW.
Besides, we assume that the power amplifier efficiency coefficient is $\rho = 0.8$.
Unless specified otherwise, the simulation setting is given as follows.
We consider a hybrid mmWave communication system with a carrier frequency at $28$ GHz and a system bandwidth $\mathcal{W} = 100$ MHz.
There are one LOS path and $L = 10$ NLOS paths for the channel model in \eqref{C7:ChannelModel1} and the path loss models for LOS and NLOS paths follow Table I in \cite{AkdenizChannelMmWave}.
All the $K$ users are randomly and uniformly distributed in the $\frac{1}{3}$ cell, as shown in Figure \ref{C7:PortionCell}, with a cell radius of $D = 200$ meters.
The number of RF chains at the BS is the same as the number of users, i.e., $N_{\mathrm{RF}} = K$.
For the proposed iterative digital precoder design, the convergence tolerance is $\epsilon = 0.01$ and the maximum number of iterations is $\mathrm{iter}_\mathrm{max} = 100$.
The maximum transmit power of the BS is 40 dBm, i.e., ${p_{{\mathrm{BS}}}} \le 40$ dBm and the noise power spectral density at all the users is assumed identical with $N_{0} = -168$ dBm/Hz.
The number of antennas equipped at the BS\footnote{As mentioned, we consider a symmetric ULA containing an odd number of antennas at the BS.} is $N_{\mathrm{BS}} = 129$ and the number of antennas equipped at each user terminal is $N_{\mathrm{UE}} = 16$.
The minimum data rate is set as ${R_{\min }} = 1$ Mbit/s.

To show the effectiveness of our proposed mmWave NOMA with beamwidth control, we consider two baseline schemes in our simulations.
For baseline 1, the conventional mmWave OMA scheme is considered where each user is served by a single RF chain.
As a result, we can simply set ${N'_{{\mathrm{RF}}}} = {N_{{\mathrm{RF}}}} = K$ and $\mathbf{U} = \mathbf{I}_{N_{{\mathrm{RF}}}}$ for the proposed iterative digital precoder design in \textbf{Algorithm} \ref{C7:alg2} to obtain the algorithm for the baseline OMA scheme.
For baseline 2, the conventional mmWave NOMA scheme is considered where only the users within a half-power beamwidth can be considered as a NOMA group.
Thus, the proposed user grouping algorithm in \textbf{Algorithm} \ref{C7:alg1} is also applicable to the baseline NOMA scheme and the corresponding analog beamformer is obtained by the first two lines in \eqref{C7:DCBFAnalogBeamformer}.
After obtaining the user grouping and analog beamforming strategy, \textbf{Algorithm} \ref{C7:alg2} can be used to design the digital precoder for the baseline NOMA scheme.
The simulation results shown in the sequel are obtained by averaging the system energy efficiency over multiple channel realizations.

\subsection{Convergence of the Proposed Algorithms}

\begin{figure}[t]
\centering
\subfigure[User grouping algorithm.]
{\label{C7:EE_Versus_IterationIndex} 
\includegraphics[width=2.5in]{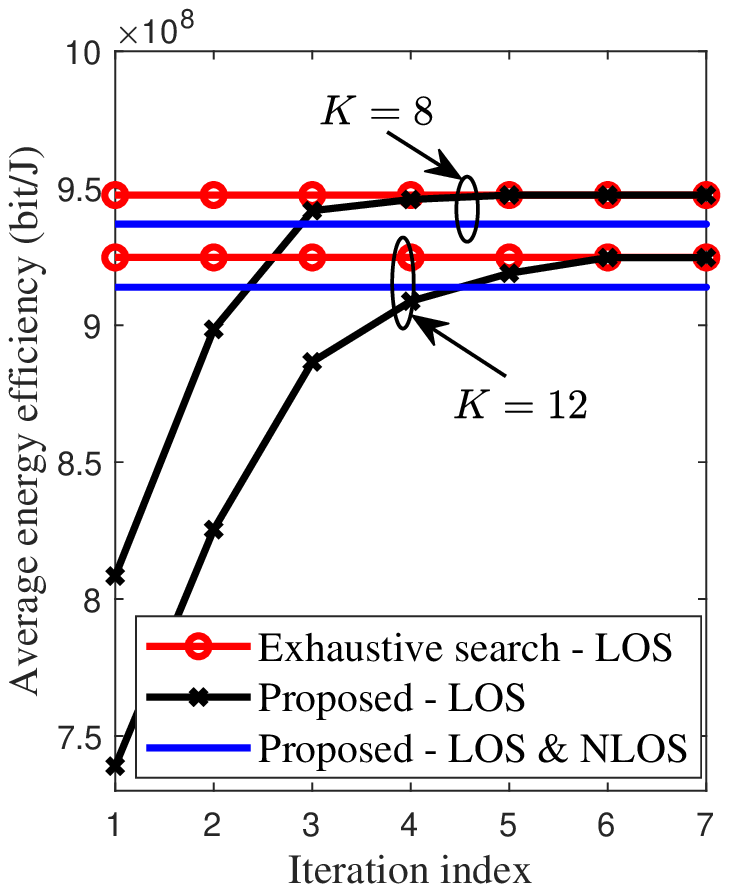}}
\subfigure[Digital precoder design algorithm.]
{\label{C7:EE_Versus_IterationIndex_Digital} 
\includegraphics[width=2.5in]{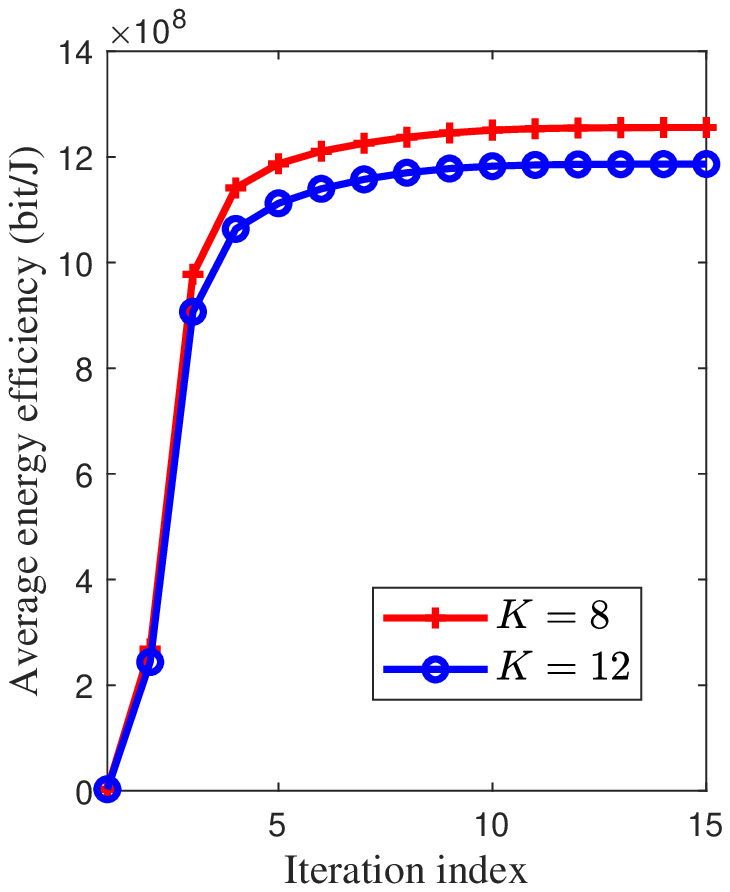}}
\caption{The convergence of our proposed resource allocation algorithms.}
\label{C7:Convergence}%
\end{figure}

Figure \ref{C7:Convergence} illustrates the convergence of our proposed resource allocation algorithms.
Two simulation cases with $K = 8$ and $K = 12$ are shown with the same transmit power ${p_{{\mathrm{BS}}}} = 30$ dBm at the BS.
In legend, ``LOS" denotes the pure LOS mmWave channels, while ``LOS \& NLOS" denotes the case with both LOS path and NLOS paths.
The exhaustive search for user grouping is also shown as a benchmark for \textbf{Algorithm} \ref{C7:alg1} in Figure \ref{C7:EE_Versus_IterationIndex}.
It can be observed that our proposed user grouping algorithm can perform closely to the optimal exhaustive search within only 6 iterations on average, which demonstrates the fast convergence and the effectiveness of our proposed user grouping algorithm.
Although the proposed user grouping and analog beamformer are designed exploiting only the LOS CSI, the system energy efficiency is still very close to that with pure LOS only scenario, even if there exists $L$ NLOS paths in mmWave channels.
It means that the performance degradation of the proposed scheme due to the lack of NLOS CSI at the BS is negligible.
This is because the LOS path is much stronger than the NLOS paths in mmWave channels\cite{Rappaport2013}.
In Figure \ref{C7:EE_Versus_IterationIndex_Digital}, we can observe that the system energy efficiency of the proposed iterative digital precoder design algorithm improves monotonically with the number of iterations.
Comparing Figure \ref{C7:EE_Versus_IterationIndex} and Figure \ref{C7:EE_Versus_IterationIndex_Digital}, we can observe that the digital precoder design can further improve the system energy efficiency of the proposed mmWave NOMA scheme, compared to the pure analog beamforming system.
This is attributed to the interference management capability of our proposed digital precoder design.

\subsection{Energy Efficiency versus the Number of Users}

\begin{figure}[t]
\centering
\includegraphics[width=4.5in]{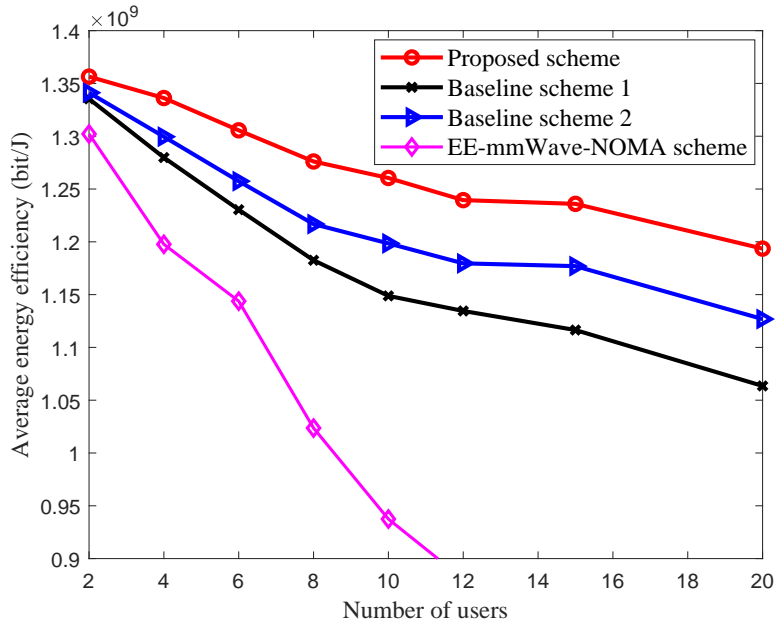}
\caption{Average energy efficiency (bit/J) versus the number of users $K$.}
\label{C7:EE_Versus_K}
\end{figure}

Figure \ref{C7:EE_Versus_K} shows the average system energy efficiency versus the number of users $K$.
We can observe that the energy efficiencies of all the schemes decrease with the increasing number of users $K$.
In fact, introducing more users makes the minimum data rate constraints more stringent, which leads to a smaller feasible solution set for solving \eqref{C7:ResourceAllocationOriginal}.
In other words, the system is less flexible in resource allocation to improve the system energy efficiency with large numbers of users $K$.
In addition, it can be observed that proposed scheme outperforms the two baseline schemes.
In fact, our proposed scheme can widen the analog beamwidth which increases the probability of forming a NOMA group and thus admits more NOMA groups.
As a result, we can put more RF chains idle which translates into a higher system energy efficiency compared to the baseline schemes.
More importantly, we can observe that the performance gain of our proposed scheme over the baseline NOMA and OMA schemes is enlarged with increasing the number of users.
This is because our proposed scheme with beamwidth control can effectively exploit the users' AOD distribution to improve the system energy efficiency.

The simulation results of the proposed ``EE-mmWave-NOMA'' scheme in \cite{HaoEEmmWaveNOMA} is reproduced in Figure \ref{C7:EE_Versus_K} for comparison.
Compared to the EE-mmWave-NOMA scheme in \cite{HaoEEmmWaveNOMA}, our proposed scheme can achieve a substantially higher system energy efficiency.
This is because the proposed beamwidth control enables the system to generate more NOMA groups which substantially saves the circuit power consumption of RF chains.
Besides, our proposed resource allocation design outperforms the EE-mmWave-NOMA scheme in \cite{HaoEEmmWaveNOMA} in the following aspects.
Firstly, instead of adopting a suboptimal correlation-based NOMA user grouping strategy as in \cite{HaoEEmmWaveNOMA}, our proposed user grouping algorithm utilizes the coalition formation game theory to maximize the system energy efficiency taking into account the inter-cluster interference.
In fact, as shown in Figure \ref{C7:EE_Versus_IterationIndex}, the proposed user grouping algorithm performs closely to the optimal exhaustive search.
Secondly, a zero-forcing (ZF) digital precoder was adopted in \cite{HaoEEmmWaveNOMA}, which may lead to an unsatisfactory system energy efficiency, especially when users' channels are correlated and a significant power loss occurs due to the channel matrix inversion.
In contrast, our proposed digital precoder design is based on the optimization framework and the ZF digital precoder in \cite{HaoEEmmWaveNOMA} is only a subcase of the optimization solution set.
Thirdly, the proposed EE-mmWave-NOMA scheme in \cite{HaoEEmmWaveNOMA} optimized the power allocation for each NOMA cluster separately.
However, our proposed scheme jointly optimizes the digital precoder of all the users.

\subsection{Energy Efficiency versus the Total Transmit Power}

\begin{figure}[t]
\centering
\includegraphics[width=4.5in]{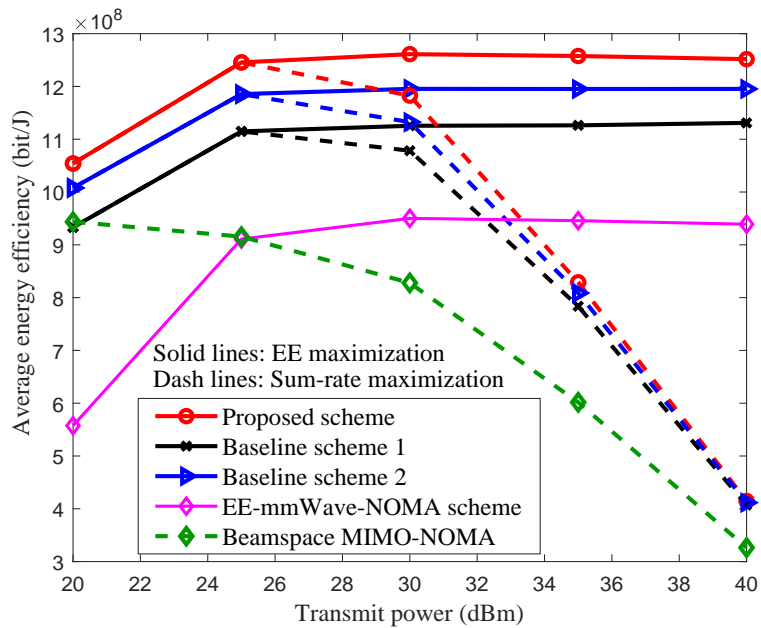}
\caption{Average energy efficiency (bit/J) versus the total transmit power.}
\label{C7:EE_Versus_Pmax}
\end{figure}

Figure \ref{C7:EE_Versus_Pmax} depicts the average system energy efficiency versus the total maximum transmit power budget at the BS with $K = 10$ users.
We can observe that the proposed scheme with beamwidth control outperforms the two baseline schemes.
Note that the considered two baseline schemes do not suffer from the main lobe power loss as the proposed beamwidth control-based scheme.
However, the proposed beamwidth control can alleviate the system performance bottleneck created by exceedingly narrow analog beams in mmWave communication systems via forming more NOMA groups, which reduces the total circuit power consumption of RF chains and thus achieves a higher system energy efficiency.
In Figure \ref{C7:EE_Versus_Pmax}, the results of the proposed EE-mmWave-NOMA scheme \cite{HaoEEmmWaveNOMA} is also shown for comparison with our proposed scheme.
It can be observed that our proposed scheme significantly outperforms the EE-mmWave-NOMA scheme in \cite{HaoEEmmWaveNOMA} since the proposed beamwidth control can facilitate the efficient exploitation of NOMA gain.
Due to our superior resource allocation design, the two baseline schemes also outperform the EE-mmWave-NOMA scheme in \cite{HaoEEmmWaveNOMA}.

In addition, we show the energy efficiency of all the schemes with the objective for maximizing the system sum-rate.
It can be observed that, in the low SNR regime, the sum-rate maximization achieves the same energy efficiency as the proposed ${\mathrm{EE}}$ maximization scheme.
Indeed, transmitting with the maximum available power is the most energy-efficient option in the low SNR regime.
However, with increasing the system transmit power budget, the energy efficiency of sum-rate maximization schemes decreases dramatically while that of the proposed ${\mathrm{EE}}$ maximization schemes saturates.
In fact, in the high SNR regime, there is a diminishing return in spectral efficiency when allocating more transmit power.
Hence, the energy consumption in the system outweighs the spectral efficiency gain in the large transmit power regime.
We note that the energy efficiency gain of the proposed scheme over the baseline NOMA scheme decreases slightly in the low SNR regime.
This is because, in the power-limited region, the system performance is more sensitive to the power loss introduced by the proposed beamwidth control scheme.
For comparison, the simulation results of the proposed ``beamspace MIMO-NOMA'' scheme in \cite{WangBeamSpace2017} is reproduced and shown in Figure \ref{C7:EE_Versus_Pmax}.
We can observe a substantially higher energy efficiency for the proposed scheme compared to the proposed beamspace MIMO-NOMA scheme in \cite{WangBeamSpace2017}.
Moreover, our two considered baseline schemes outperform the beamspace MIMO-NOMA scheme owing to the superior performance of our proposed digital precoder design than that of the adopted ZF digital precoder \cite{WangBeamSpace2017}.

\subsection{Energy Efficiency versus the Number of Antennas}

\begin{figure}[t]
\centering
\includegraphics[width=4.5in]{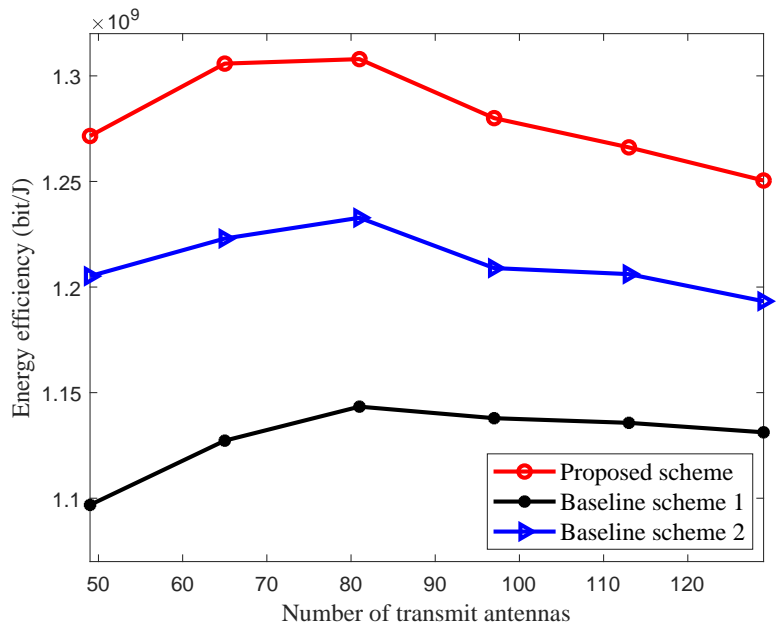}
\caption{Average energy efficiency (bit/J) versus the number of transmit antennas.}
\label{C7:EE_Versus_NBS}
\end{figure}

Figure \ref{C7:EE_Versus_NBS} demonstrates the average system energy efficiency versus the total number of antennas equipped at the BS with the number of users $K = 10$.
We can observe that the system energy efficiencies of all the schemes first increase and then decrease slightly with the increasing number of antennas $N_{\mathrm{BS}}$.
In fact, employing more antennas at the BS can increase the beamforming gain at the expense of more power consumption for PSs.
As a result, for small $N_{\mathrm{BS}}$, increasing the number of antennas can improve the system sum-rate substantially so as to increase the system energy efficiency.
However, for large $N_{\mathrm{BS}}$, the sum-rate increment of introducing extra antennas becomes diminished but the energy consumption driving the massive antenna array is still significant.
In addition, it can be observed that the energy efficiency gain of the baseline NOMA scheme without beamwidth control over the conventional OMA scheme decreases with increasing $N_{\mathrm{BS}}$.
Indeed, without beamwidth control, the large number of antennas in baseline 2 can only create a very narrow analog beam which restricts the numbers of formed NOMA groups to improve the system energy efficiency.
In contrast to the baseline 2, our proposed scheme can provide a higher energy efficiency gain over the conventional OMA scheme via utilizing beamwidth control.
We note that, with increasing $N_{\mathrm{BS}}$, the energy efficiency gain of the proposed beamwidth control scheme over the two baseline schemes also decreases.
This is due to the fact that the analog beamwidth becomes narrower with increasing the array size.
As a result, widening the beamwidth of a large-scale array to cover a NOMA group introduces a larger main lobe power loss, which reduces the energy efficiency gain brought by NOMA.

\subsection{Energy Efficiency versus the Channel Estimation Error}

\begin{figure}[t]
\centering  \includegraphics[width=4.5in]{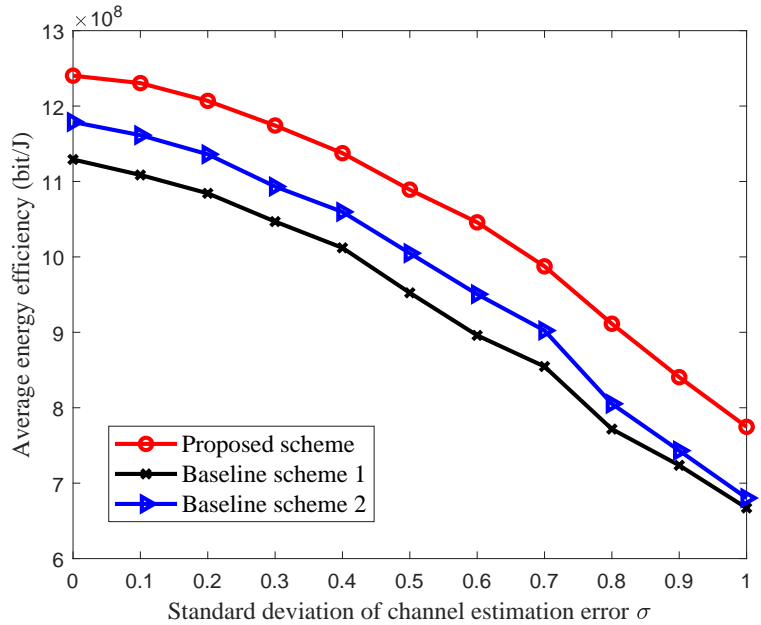}
\caption{Average energy efficiency (bit/J) versus the channel estimation error in terms of the standard deviation $\sigma$.}
\label{C7:EE_Versus_Error}
\end{figure}

In this section, we investigate the impact of imperfect CSI on the system performance achieved by the proposed scheme and algorithms through simulations.
For simplicity, we adopt a single parameter, the standard deviation $\sigma$, to measure the amount of channel estimation errors during the adopted three-stage channel estimation procedure.
In particular, we assume that $\sigma_{{\theta} _{k,0}} = \frac{\sigma}{N_{\mathrm{BS}}}$, $\sigma_{{\phi} _{k,0}} = \frac{\sigma}{N_{\mathrm{UE}}}$, ${\sigma_{{\alpha} _{k,0}}} = \sigma {\abs{{\alpha} _{k,0}}}$, and ${\sigma_{{\widetilde{{{h}}}_{k,r}}}} = \sigma{|{{{\widetilde{{{h}}}_{k,r}}}}|}$, $\forall k,r$.
The larger $\sigma$, the larger channel estimation error, and vice versa.
When $\sigma = 0$, it degenerates to the case without channel estimation error.
Please note that according to the array signal processing theory \cite{van2002optimum}, the angular resolution is inversely proportional to the array size.
Therefore, we assume that the standard deviations of AOD and AOA estimation error are given by $\sigma_{{\theta} _{k,0}} = \frac{\sigma}{N_{\mathrm{BS}}}$ and $\sigma_{{\phi} _{k,0}} = \frac{\sigma}{N_{\mathrm{UE}}}$, respectively.
	
Figure \ref{C7:EE_Versus_Error} demonstrates the average energy efficiency versus the channel estimation error of our proposed scheme.
We can observe that the system energy efficiency decreases when the channel estimation error becomes larger.
Besides, it can be observed that the proposed scheme outperforms the two considered baseline schemes in the considered channel estimation error regime.
With the increasing channel estimation error, the energy efficiency gain of our proposed scheme over the baseline NOMA scheme keeps almost a constant.
In contrast, the energy efficiency gain of the baseline NOMA scheme over the baseline OMA scheme diminishes with increasing $\sigma$.
In fact, the narrower the beamwidth, the more sensitive the performance to the beam alignment error.
Therefore, the proposed scheme with widened analog beamwidth is more robust against the beam alignment error compared to the baseline NOMA scheme without beamwidth control.

\section{Summary}

In this chapter, we proposed a novel beamwidth control-based mmWave NOMA scheme and studied its energy-efficient resource allocation design.
In particular, the proposed scheme overcomes the fundamental limit of narrow beams in hybrid mmWave systems through beamwidth control, which facilitates the exploitation of NOMA transmission to enhance the system energy efficiency.
We proposed two types of beamwidth control methods and characterized their main lobe power losses.
To facilitate the design of analog beamformers for multi-RF chain systems, we first studied the asymptotically optimal analog beamforming design for a single-RF chain system.
Subsequently, the obtained analog beamforming design was adopted for formulating the problem of energy-efficient resource allocation design in multi-RF chain systems.
A user grouping algorithm based on coalition formation game theory was developed to achieve a stable user grouping strategy, which can perform closely to the optimal exhaustive search.
In addition, adopting the quadratic transformation, a low-complexity iterative digital precoder design algorithm was proposed to converge to a locally optimal solution.
Simulation results demonstrated that the proposed scheme with beamwidth control offers a substantial energy efficiency gain over the conventional OMA and NOMA schemes without beamwidth control.

\chapter{Thesis Conclusions and Future Works}\label{C9:chapter9}

{In this chapter, we first conclude this thesis and then outline some future research directions arising from our works. }

\section{Conclusions}
In this thesis, we have studied and addressed the performance analysis and designs for NOMA in wireless communication systems.
Specifically, we have analyzed the performance gain of NOMA over OMA and have  revealed its distinctive behaviors in different scenarios.
In addition, we have proposed practical schemes and resource allocation designs for NOMA in microwave and millimeter wave communication systems.
We conclude this thesis in the following by summarizing our main contributions.

In Chapter 1, we have presented the motivation, the literature review, the outline, and the main contributions of this thesis.
In Chapter 2, we have reviewed some fundamental and related background knowledge of the thesis that are useful in the later chapters.

Then, we have presented our unified performance analysis on ESG of NOMA over OMA in single-antenna, multi-antenna and massive MIMO systems considering both single-cell and multi-cell deployments.
The corresponding ESGs were quantified via employing the asymptotic analysis and their different behaviors under different scenarios were revealed.
Our numerical results have verified the accuracy of the analytical results derived and have confirmed the insights revealed about the ESG of NOMA over OMA in different scenarios.

To enhance the robustness against the channel uncertainty, we have proposed an JPA scheme for uplink MIMO-NOMA systems with a MRC-SIC receiver.
We first analyzed ASINR of each user during the MRC-SIC decoding by taking into account the error propagation due to the channel estimation error.
The JPA design was formulated as a non-convex optimization problem to maximize the minimum weighted ASINR.
Then, the formulated problem was transformed into an equivalent geometric programming problem and was solved optimally.
Our simulation results have demonstrated that our proposed scheme can effectively alleviate the error propagation of MRC-SIC and enhance the detection performance, especially for users with moderate energy budgets.

Chapter 5 has been devoted to the power-efficient resource allocation design for downlink MC-NOMA systems with taking into account the imperfection of CSI at transmitter and QoS requirements of users.
The resource allocation design was formulated as a non-convex optimization problem which jointly designs the power allocation, rate allocation, user scheduling, and SIC decoding policy for minimizing the total transmit power.
A globally optimal solution was obtained via employing the branch-and-bound approach and a suboptimal iterative resource allocation algorithm was developed based on difference of convex programming.
Our simulation results have demonstrated that the suboptimal scheme achieves a close-to-optimal performance rapidly and the significant power savings of our proposed scheme.

Different from the previous chapters, in Chapter 6, we have further investigated applying NOMA in hybrid mmWave communications.
We have proposed a multi-beam NOMA scheme for hybrid mmWave systems and have studied its resource allocation.
In contrast to the recently proposed single-beam mmWave-NOMA scheme which can only serve multiple NOMA users within the same analog beam, the proposed scheme can perform NOMA transmission for the users with an arbitrary angle-of-departure distribution.
A suboptimal two-stage resource allocation design has been proposed for maximizing the system sum-rate.
Our simulation results have demonstrated that our designed resource allocation can achieve a close-to-optimal performance in each stage and the proposed scheme offers a substantial spectral efficiency improvement compared to that of the single-beam mmWave-NOMA and the mmWave OMA schemes.

A further work on beamwidth control-enabled mmWave NOMA scheme has been presented in Chapter 7 and we have studied its energy-efficient resource allocation design.
In particular, the proposed beamwidth control can increase the number of served NOMA groups by widening the analog beamwidth.
We have formulated the energy-efficient resource allocation design as a non-convex optimization problem which takes into account the minimum required user data rate.
A NOMA user grouping algorithm based on the coalition formation game theory was developed and a low-complexity iterative digital precoder design was proposed to achieve a locally optimal solution.
Our numerical results have verified the fast convergence and effectiveness of our proposed algorithms and have demonstrated the superior energy efficiency achieved by our proposed scheme, compared to the conventional orthogonal multiple access and NOMA schemes without beamwidth control.

\section{Future Works}
{The explosive growth of traffic demand keeps imposing unprecedentedly challenges for the development of future wireless communication systems, including supporting massive connectivity, energy efficiency as well as spectral efficiency improvement, ultra-reliable low-latency communications (URLLC), and so on.
This thesis has addressed some of these challenges via applying the concept of NOMA and improving the system performance with resource allocation design.
However, there are still many research issues to be addressed.
In the following, we propose some future research directions arising from the work presented in this thesis.}

\subsection{ESG Analysis with Imperfect CSI and SIC Decoding Error}
{In Chapter \ref{C3:chapter3}, as a first attempt to unveil fundamental insights on the performance gain of NOMA over OMA, we considered the ideal case associated with perfect CSI and error propagation-free SIC detection at the BS.
In practice, it is difficult to acquire the perfect CSI due to channel estimation errors, feedback delays, and/or quantization errors.
Similarly, the error propagation during SIC decoding is usually inevitable in practice.
Therefore, it is important to further investigate the ESG of NOMA over OMA both in the face of imperfect CSI and error propagation during SIC detection.}
\subsection{The Robust Resource Allocation Design for NOMA in High Mobility Scenario}
{In this thesis, the resource allocation designs have assumed slow fading channels, whether or not there is perfect CSIT, statistical CSIT, or partial CSIT.
However, in practice, in high mobility scenarios, NOMA may suffer from the severe Doppler spread, which results in a fast fading channel.
Therefore, it is worth to investigate the robust resource allocation design for NOMA in fast fading channels.}
\subsection{URLLC Design for NOMA Systems}
{The presented works in this thesis mainly focused on improving the NOMA systems' power efficiency, spectral efficiency, and energy efficiency, with NOMA's inherent nature of accommodating more users with limited system DoFs.
However, the future wireless networks are expected to provide services for latency sensitive devices for applications in factory automation, autonomous driving, and remote surgery.
Therefore, cross-layer designs and network architecture designs should be taken into account for NOMA systems to support URLLC in future works.}

\appendix
\onehalfspacing
\fancyhead[CE]{\leftmark}
\fancyhead[CO]{\leftmark}
\chapter{Proof of Theories of Chapter 3}\label{C3:appendix_2a}

\section{Proof of Theorem \ref{C3:Theorem1}}\label{C3:AppendixA}
To facilitate the proof, we first consider a virtual system whose capacity serves as an upper bound to that of the system in \eqref{C3:MIMONOMASystemModel}.
In particular, the virtual system is the uplink of a $K$-user $M \times M$ MIMO system with $M$ antennas employed at each user and the BS.
We assume that, in the virtual $K$-user $M \times M$ MIMO system, each user faces $M$ parallel subchannels with identical subchannel gain ${\left\| {{{\bf{h}}_k}} \right\|}$, i.e., the channel matrix between user $k$ and the BS is ${\left\| {{{\bf{h}}_k}} \right\|}{{\bf{I}}_M}$.
As a result, the signal received at the BS is given by
\begin{equation}\label{C3:MIMONOMASystemModelRelaxed}
{\bf{\tilde y}} = \sum\limits_{k = 1}^K \sqrt {{p_k}} {\left\| {{{\bf{h}}_k}} \right\|}{{\bf{I}}_M}{{{\bf{\tilde x}}}_k} + {\bf{v}},
\end{equation}
where ${{{\bf{\tilde x}}}_k} = {{\bf{u}}_k}{x_k} \in \mathbb{C}^{M \times 1}$ denotes the transmitted signal after preprocessing by a precoder ${{\bf{u}}_k}\in \mathbb{C}^{M \times 1}$.
We note that the precoder should satisfy the constraint $ \mathrm{Tr} \left({{{\bf{u}}_k}} {{{\bf{u}}_k^{\mathrm{H}}}} \right)\le 1$, so that ${\mathrm{E}}\left\{ {{\bf{\tilde x}}_k^{\mathrm{H}}{{{\bf{\tilde x}}}_k}} \right\} \le {\mathrm{E}}\left\{ {x_k^{\mathrm{2}}} \right\} = 1$.
Additionally, in the virtual $K$-user $M \times M$ MIMO system, the subchannel gain between user $k$ and the BS is forced to be identical as ${\left\| {{{\bf{h}}_k}} \right\|}$, where ${\left\| {{{\bf{h}}_k}} \right\|}$ is the corresponding channel gain value between user $k$ and the BS in the original $K$-user $1 \times M$ MIMO system in \eqref{C3:MIMONOMASystemModel}.
Furthermore, we consider an arbitrary but the identical power allocation strategy ${\bf{p}} = \left[ {{p_1}, \ldots ,{p_K}} \right]$ as that of our original system in \eqref{C3:MIMONOMASystemModel} during the following proof.
Upon comparing \eqref{C3:MIMONOMASystemModel} and \eqref{C3:MIMONOMASystemModelRelaxed}, we can observe that the specific choice of the precoder ${{{\bf{u}}_k}} = \frac{{{{\bf{h}}_k}}}{\left\| {{{\bf{h}}_k}} \right\|}$ in \eqref{C3:MIMONOMASystemModelRelaxed} would result in an equivalent system to that in \eqref{C3:MIMONOMASystemModel}.
In other words, the capacity of the system in \eqref{C3:MIMONOMASystemModelRelaxed} serves as an upper bound to that of the system in \eqref{C3:MIMONOMASystemModel}, i.e., we have:
\begin{align}\label{C3:UpperBoundMIMONOMA}
R_{\mathrm{sum}}^{{\mathrm{MIMO-NOMA}}} \mathop  =\limits^{(a)} C\left( {M, K, {\bf{p}},{\bf{H}}} \right) &\le {C}\left( {M^2, K,{\bf{p}},\widetilde{{\bf{H}}}} \right)\notag\\
&= \mathop {\max }\limits_{{\mathrm{Tr}}\left( {{{\bf{u}}_k}{\bf{u}}_k^{\mathrm{H}}} \right) \le 1} {\ln}\left| {{{\bf{I}}_M} + \frac{1}{{{N_0}}}\sum\limits_{k = 1}^K {{p_k}{{\left\| {{{\bf{h}}_k}} \right\|}^2}{{\bf{I}}_M}{{\bf{u}}_k}{\bf{u}}_k^{\mathrm{H}}{\bf{I}}_M^{\mathrm{H}}} } \right| \notag\\
&= M{\ln}\left( {1 + \frac{1}{{M{N_0}}}\sum\limits_{k = 1}^K {{p_k}{{\left\| {{{\bf{h}}_k}} \right\|}^2}} } \right),
\end{align}
where $C\left( {M, K, {\bf{p}},{\bf{H}}} \right)$ denotes the capacity for the uplink $K$-user $1 \times M$ MIMO system in \eqref{C3:MIMONOMASystemModel} for a channel matrix ${\bf{H}} = \left[ {{{\bf{h}}_1}, \ldots ,{{\bf{h}}_K}} \right]$ and power allocation ${\bf{p}}$.
Furthermore, $C\left( {M^2, K, {\bf{p}},\widetilde{{\bf{H}}}} \right)$ denotes the capacity of the virtual $K$-user $M \times M$ MIMO system in \eqref{C3:MIMONOMASystemModelRelaxed} associated with a channel matrix $\widetilde{{\bf{H}}} = \left[ {{\left\| {{{\bf{h}}_1}} \right\|}{\bf{I}}_M, \ldots ,{\left\| {{{\bf{h}}_K}} \right\|}{\bf{I}}_M} \right]$, while ${\bf{p}}$ is the value as in \eqref{C3:MIMONOMASystemModel}.
The achievable sum-rate $R_{\mathrm{sum}}^{{\mathrm{MIMO-NOMA}}}$  is given in \eqref{C3:InstantSumRateMIMONOMA} and the equality $(a)$ in \eqref{C3:UpperBoundMIMONOMA} is obtained by a capacity-achieving MMSE-SIC\cite{Tse2005}.

Now, to prove the asymptotic tightness of the upper bound considered in \eqref{C3:UpperBoundMIMONOMA}, we have to consider a lower bound of the achievable sum-rate in \eqref{C3:InstantSumRateMIMONOMA} and prove that asymptotically the upper bound and the lower bound converge to the same expression.
For the uplink $K$-user $1 \times M$ MIMO system in \eqref{C3:MIMONOMASystemModel}, we assume that all the users transmit their signals subject to the power allocation ${\bf{p}} = \left[ {{p_1}, \ldots ,{p_K}} \right]$ and the BS utilizes an MRC-SIC receiver to retrieve the messages of all the $K$ users.
Then the achievable rate for user $k$ of the MIMO-NOMA system using the MRC-SIC receiver is given by:
\begin{align}\label{C3:MIMONOMAMRCSICIndividualAchievableRate}
R_{k,\mathrm{MRC-SIC}}^{{\mathrm{MIMO-NOMA}}} = {\ln}\left(1+ {\frac{{{p_k}{{\left\| {{{\bf{h}}_k}} \right\|}^2}}}{{\sum\limits_{i = k + 1}^K {p_i}{{\left\| {{{\bf{h}}_i}} \right\|}^2} {{\left| {\bf{e}}_k^{\mathrm{H}}{{\bf{e}}_i} \right|}^2}  + {N_0}}}} \right),
\end{align}
where ${{\bf{e}}_k} = \frac{{{{\bf{h}}_k}}}{{\left\| {{{\bf{h}}_k}} \right\|}}$ denotes the channel direction of user $k$.
Then, it becomes clear that the achievable sum-rate of the MIMO-NOMA system using the MRC-SIC receiver serves as a lower bound to the channel capacity in \eqref{C3:InstantSumRateMIMONOMA}, i.e., we have
\begin{align}\label{C3:LowerBoundMIMONOMA2}
R_{\mathrm{sum,MRC-SIC}}^{{\mathrm{MIMO-NOMA}}} = \sum \limits_{k=1}^{K} R_{k,\mathrm{MRC-SIC}}^{{\mathrm{MIMO-NOMA}}} \le R_{\mathrm{sum}}^{{\mathrm{MIMO-NOMA}}}.
\end{align}

Through the following theorem and corollaries, we first characterize the statistics of ${{\bf{e}}_k}$ as well as ${{\left| {\bf{e}}_k^{\mathrm{H}}{{\bf{e}}_i} \right|}^2}$ and derive the asymptotic achievable sum-rate of MIMO-NOMA employing an MRC-SIC receiver.
Then, we show that the upper bound considered in \eqref{C3:UpperBoundMIMONOMA} and the lower bound of \eqref{C3:LowerBoundMIMONOMA2} will asymptotically converge to the same limit for $K \to \infty$.

\begin{Lem}\label{C3:UnitSphere}
For ${\bf{h}}_k \sim \mathcal{CN}\left(\mathbf{0},\frac{1}{1+d_k^{\alpha}}{{\bf{I}}_M}\right)$, the normalized random vector (channel direction) ${{\bf{e}}_k} = \frac{{{{\bf{h}}_k}}}{{\left\| {{{\bf{h}}_k}} \right\|}}$ is uniformly distributed on a unit sphere in $\mathbb{C}^{M}$.
\end{Lem}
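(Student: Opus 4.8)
The plan is to prove that the normalized channel vector ${{\bf{e}}_k} = {{\bf{h}}_k}/{\left\| {{{\bf{h}}_k}} \right\|}$ is uniformly distributed on the unit sphere in $\mathbb{C}^M$ by exploiting the rotational invariance of the isotropic complex Gaussian distribution. The key observation is that ${{\bf{h}}_k} \sim \mathcal{CN}\left(\mathbf{0},\sigma_k^2{{\bf{I}}_M}\right)$ with $\sigma_k^2 = \frac{1}{1+d_k^{\alpha}}$ has a probability density function that depends only on $\left\| {{{\bf{h}}_k}} \right\|^2$, i.e., it is spherically symmetric.

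First I would make the scaling irrelevant: since normalization removes the magnitude, the distribution of ${{\bf{e}}_k}$ is identical whether we use ${{\bf{h}}_k}$ or any positively scaled version, so without loss of generality I may treat $\sigma_k^2 = 1$. Then I would invoke the unitary invariance of the standard complex Gaussian: for any deterministic unitary matrix ${\bf{U}} \in \mathbb{C}^{M \times M}$ with ${\bf{U}}{\bf{U}}^{\mathrm{H}} = {{\bf{I}}_M}$, the transformed vector ${\bf{U}}{{\bf{h}}_k}$ remains $\mathcal{CN}\left(\mathbf{0},{{\bf{I}}_M}\right)$, because the mean is preserved and the covariance transforms as ${\bf{U}}{{\bf{I}}_M}{\bf{U}}^{\mathrm{H}} = {{\bf{I}}_M}$. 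This is the crux of the argument and follows from the change-of-variables formula together with $\left|\det{\bf{U}}\right| = 1$.

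Next I would transfer this invariance to the normalized vector. Since $\left\| {{\bf{U}}{{\bf{h}}_k}} \right\| = \left\| {{{\bf{h}}_k}} \right\|$ for any unitary ${\bf{U}}$, the map ${\bf{h}}_k \mapsto {\bf{e}}_k$ commutes with the unitary action, giving
\begin{equation}
{\bf{U}}{{\bf{e}}_k} = \frac{{\bf{U}}{{\bf{h}}_k}}{\left\| {{\bf{U}}{{\bf{h}}_k}} \right\|} \stackrel{d}{=} \frac{{{\bf{h}}_k}}{\left\| {{{\bf{h}}_k}} \right\|} = {{\bf{e}}_k},
\end{equation}
where $\stackrel{d}{=}$ denotes equality in distribution. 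Thus the distribution of ${{\bf{e}}_k}$ on the unit sphere is invariant under the full unitary group. I would then close the argument by appealing to the uniqueness of the rotation-invariant (Haar) probability measure on the unit sphere: any probability distribution supported on the unit sphere that is invariant under all unitary transformations must coincide with the uniform distribution. Since the unitary group acts transitively on the sphere, no point can be favored, which pins down the uniform law.

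The main obstacle I anticipate is not any difficult calculation but rather rigorously justifying the final uniqueness step, namely that unitary invariance forces uniformity. The cleanest route is to cite the standard result on the uniqueness of Haar measure on the compact homogeneous space $\mathbb{C}^M$-sphere (equivalently the complex projective structure), rather than constructing the measure by hand; alternatively one can note that ${{\bf{e}}_k}$ depends on ${{\bf{h}}_k}$ only through its direction and that the Gaussian density factors into a radial part and a uniform angular part, the latter being exactly the surface measure. Either presentation yields the claim, so I would keep the write-up short and lean on the rotational-invariance characterization as the decisive step.
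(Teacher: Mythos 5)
Your proposal is correct and follows essentially the same route as the paper's own proof: both arguments rest on the rotational invariance of the isotropic complex Gaussian, transfer that invariance to the normalized vector ${\bf{e}}_k$ via $\left\| {\bf{U}}{{\bf{h}}_k} \right\| = \left\| {{\bf{h}}_k} \right\|$, and conclude uniformity on the sphere from invariance under the full group of rotations. If anything, your write-up is slightly more careful than the paper's, which speaks of ``orthogonal'' matrices rather than unitary ones and asserts the final uniformity step without explicitly invoking the uniqueness of the invariant (Haar) measure on the sphere, a gap you correctly identify and close.
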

\begin{proof}
According to the system model of ${{{\bf{h}}_k}} = \frac{{\bf{g}}_k}{\sqrt{1+d_k^{\alpha}}}$ with ${\bf{g}}_k \sim \mathcal{CN}\left(\mathbf{0},{{\bf{I}}_M}\right)$, we have ${\bf{h}}_k \sim \mathcal{CN}\left(\mathbf{0},\frac{1}{1+d_k^{\alpha}}{{\bf{I}}_M}\right)$.
The distribution of ${{\bf{e}}_k}$ can be proven by exploiting the orthogonal-invariance of the multivariate normal distribution.
In particular, for any orthogonal matrix $\mathbf{Q}$, we have $\mathbf{Q}{\bf{h}}_k \sim \mathcal{CN}\left(\mathbf{0},\frac{1}{1+d_k^{\alpha}}{{\bf{I}}_M}\right)$, which means that the distribution of ${\bf{h}}_k$ is invariant to rotations (orthogonal transform).
Then, ${{\bf{e}}_k} = \frac{{{\mathbf{Q}{\bf{h}}_k}}}{{\left\| {{\mathbf{Q}{\bf{h}}_k}} \right\|}} = \frac{{{\mathbf{Q}{\bf{h}}_k}}}{{\left\| {{{\bf{h}}_k}} \right\|}}$ is also invariant to rotation.
Meanwhile, we have ${{\left\| {{{\bf{e}}_k}} \right\|}} = 1$ for sure.
Therefore, ${{\bf{e}}_k}$ must be uniformly distributed on a unit sphere on $\mathbb{C}^{M}$.
\end{proof}

\begin{Cor}\label{C3:Corollary1}
The channel direction of user $k$, ${{\bf{e}}_k}$, is independent of its channel gain ${{\left\| {{{\bf{h}}_k}} \right\|}}$.
\end{Cor}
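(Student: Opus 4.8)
The goal is to establish that for ${\bf h}_k \sim \mathcal{CN}\left(\mathbf{0}, \frac{1}{1+d_k^{\alpha}}{\bf I}_M\right)$, the channel direction ${\bf e}_k = {\bf h}_k / \|{\bf h}_k\|$ is independent of the channel gain (magnitude) $\|{\bf h}_k\|$. The plan is to decompose the Gaussian vector into its radial (magnitude) and angular (direction) components and show that the joint density factorizes, which is the definition of independence. This is a classical property of isotropic Gaussian vectors, and the preceding Lemma \ref{C3:UnitSphere} already gives us the crucial fact that ${\bf e}_k$ is uniformly distributed on the unit sphere in $\mathbb{C}^M$.

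First I would exploit the rotational (orthogonal) invariance established in the proof of Lemma \ref{C3:UnitSphere}. The key observation is that the probability density function of ${\bf h}_k$ depends only on $\|{\bf h}_k\|^2$, since the density is proportional to $\exp\left(-(1+d_k^\alpha)\|{\bf h}_k\|^2\right)$ up to a normalizing constant. Writing ${\bf h}_k = \|{\bf h}_k\| \cdot {\bf e}_k$ and changing variables from the Cartesian coordinates to polar-type coordinates $(r, {\bf e})$ with $r = \|{\bf h}_k\|$, the Jacobian of this transformation depends only on $r$ (it contributes an $r^{2M-1}$ surface-area factor for the real $2M$-dimensional representation). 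Consequently, the joint density of $(r, {\bf e}_k)$ splits as a product of a function of $r$ alone and a function of ${\bf e}_k$ alone (the latter being constant, i.e. the uniform density on the sphere). This factorization is exactly the statement of independence.

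The key steps in order would be: (i) write the joint density of ${\bf h}_k$ and note its dependence on ${\bf h}_k$ only through $\|{\bf h}_k\|^2$; (ii) perform the change of variables to $(r,{\bf e}_k)$ and identify the Jacobian factor as a function of $r$ only; (iii) observe that the resulting joint density factorizes into a radial part (yielding the distribution of $\|{\bf h}_k\|$, which in fact gives the scaled chi/Rayleigh-type law) and an angular part (constant, consistent with the uniform-on-sphere conclusion of Lemma \ref{C3:UnitSphere}); (iv) conclude independence from the factorization. Alternatively, a cleaner argument invokes the rotational invariance directly: since the distribution of ${\bf h}_k$ is invariant under any unitary rotation ${\bf Q}$, the conditional distribution of ${\bf e}_k$ given $\|{\bf h}_k\| = r$ must itself be rotation-invariant, hence uniform on the sphere and independent of $r$; since this conditional law does not depend on $r$, direction and magnitude are independent.

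I expect the main obstacle to be purely a matter of bookkeeping rather than conceptual difficulty: one must be careful that the change-of-variables Jacobian from the complex vector ${\bf h}_k \in \mathbb{C}^M$ (viewed as $\mathbb{R}^{2M}$) to magnitude-and-direction coordinates truly separates, and that the radial factor absorbs all $r$-dependence cleanly. The cleanest and most rigorous route avoids explicit Jacobians entirely by leaning on the invariance argument from Lemma \ref{C3:UnitSphere}, which I would present as the primary proof, since the uniform-on-sphere distribution of ${\bf e}_k$ has already been established there and the independence then follows almost immediately from the observation that the conditional distribution of ${\bf e}_k$ given $\|{\bf h}_k\|$ is itself rotation-invariant for every fixed value of the magnitude.
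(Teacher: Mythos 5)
Your proposal is correct, and your primary route---conditional rotation-invariance---is in essence what the paper does, though you execute it more carefully than the paper itself. The paper's proof simply invokes Lemma \ref{C3:UnitSphere} and asserts that ${\bf e}_k$ is uniform on the sphere ``regardless of the value of $\left\| {\bf h}_k \right\|$,'' then concludes independence; strictly speaking, Lemma \ref{C3:UnitSphere} only establishes the \emph{marginal} law of ${\bf e}_k$, and a uniform marginal by itself does not imply independence. You identify exactly the missing step: since every unitary ${\bf Q}$ preserves the norm, the orthogonal-invariance used in the proof of Lemma \ref{C3:UnitSphere} applies equally to the \emph{conditional} law of ${\bf e}_k$ given $\left\| {\bf h}_k \right\| = r$, forcing that conditional law to be the unique rotation-invariant probability measure on the unit sphere for every $r$; since it does not depend on $r$, independence follows. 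Your alternative density-factorization argument---noting that the Gaussian density depends on ${\bf h}_k$ only through $\left\| {\bf h}_k \right\|^2$, changing to polar-type coordinates, and observing that the $r^{2M-1}$ Jacobian keeps the radial and angular parts separated---is a genuinely different, more computational route that the paper does not take; it is fully rigorous and yields the distribution of the channel gain (a scaled chi law) as a by-product, at the price of the change-of-variables bookkeeping you anticipate. Either version is sound; the conditional-invariance one is closer to the paper's intent and is the cleaner repair of its logical shortcut.
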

\begin{proof}
According to Lemma \ref{C3:UnitSphere}, the channel direction ${{\bf{e}}_k}$ is uniformly distributed on a unit sphere on $\mathbb{C}^{M}$, regardless of the value of ${{\left\| {{{\bf{h}}_k}} \right\|}}$.
Therefore, ${{\bf{e}}_k}$ is independent of ${{\left\| {{{\bf{h}}_k}} \right\|}}$.
\end{proof}

\begin{Cor}
The mean and covariance matrix of ${{\bf{e}}_k}$ are given by
\begin{align}
{\mathrm{E}}\left\{ {{{\bf{e}}_k}} \right\} = {\bf{0}}\; \text{and} \;
{\mathrm{E}}\left\{ {{{\bf{e}}_k}{\bf{e}}_k^{\mathrm{H}}} \right\} = \frac{1}{M}{{\bf{I}}_M},
\end{align}
respectively.
\end{Cor}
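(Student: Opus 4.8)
The plan is to exploit the uniform distribution of the channel direction $\mathbf{e}_k$ on the unit sphere in $\mathbb{C}^M$, which was established in Lemma \ref{C3:UnitSphere}. The statement to prove has two parts: first that $\mathrm{E}\left\{ \mathbf{e}_k \right\} = \mathbf{0}$, and second that $\mathrm{E}\left\{ \mathbf{e}_k \mathbf{e}_k^{\mathrm{H}} \right\} = \frac{1}{M} \mathbf{I}_M$. Both parts follow from the rotational invariance (orthogonal-invariance) of the uniform distribution on the sphere rather than from any explicit integration.

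For the mean, the plan is to invoke symmetry directly. Since $\mathbf{e}_k = \mathbf{g}_k / \|\mathbf{g}_k\|$ with $\mathbf{g}_k \sim \mathcal{CN}(\mathbf{0}, \mathbf{I}_M)$, the distribution of $\mathbf{e}_k$ is invariant under the sign flip $\mathbf{g}_k \mapsto -\mathbf{g}_k$, which sends $\mathbf{e}_k \mapsto -\mathbf{e}_k$. Hence $\mathrm{E}\left\{ \mathbf{e}_k \right\} = \mathrm{E}\left\{ -\mathbf{e}_k \right\} = -\mathrm{E}\left\{ \mathbf{e}_k \right\}$, forcing $\mathrm{E}\left\{ \mathbf{e}_k \right\} = \mathbf{0}$. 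This is a one-line symmetry argument.

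For the covariance, the key steps in order would be: (i) observe that $\mathbf{e}_k$ is rotationally invariant, i.e., $\mathbf{Q}\mathbf{e}_k$ has the same distribution as $\mathbf{e}_k$ for every unitary $\mathbf{Q}$, so that $\mathbf{Q}\, \mathrm{E}\left\{ \mathbf{e}_k \mathbf{e}_k^{\mathrm{H}} \right\} \mathbf{Q}^{\mathrm{H}} = \mathrm{E}\left\{ \mathbf{e}_k \mathbf{e}_k^{\mathrm{H}} \right\}$ for all unitary $\mathbf{Q}$; (ii) conclude via Schur's lemma (the only matrices commuting with every unitary are scalar multiples of the identity) that $\mathrm{E}\left\{ \mathbf{e}_k \mathbf{e}_k^{\mathrm{H}} \right\} = c\,\mathbf{I}_M$ for some scalar $c$; (iii) pin down the constant $c$ by taking the trace, using $\mathrm{Tr}\left( \mathbf{e}_k \mathbf{e}_k^{\mathrm{H}} \right) = \mathbf{e}_k^{\mathrm{H}} \mathbf{e}_k = \|\mathbf{e}_k\|^2 = 1$. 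Taking expectations gives $\mathrm{Tr}\left( c\,\mathbf{I}_M \right) = cM = \mathrm{E}\left\{ \|\mathbf{e}_k\|^2 \right\} = 1$, hence $c = \frac{1}{M}$, which yields $\mathrm{E}\left\{ \mathbf{e}_k \mathbf{e}_k^{\mathrm{H}} \right\} = \frac{1}{M} \mathbf{I}_M$.

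There is no serious obstacle here; the whole argument is a symmetry-plus-trace calculation, and the only point requiring a little care is justifying step (ii) — that a Hermitian matrix invariant under conjugation by all unitaries must be a scalar multiple of the identity. The cleanest way to present this without explicitly citing Schur's lemma is to use coordinate permutation and sign-flip invariance: rotational invariance under permutation matrices forces all diagonal entries to be equal, while invariance under phase rotations of individual coordinates (applying $\mathbf{Q} = \mathrm{diag}\{1,\ldots,e^{j\omega},\ldots,1\}$) forces the off-diagonal entries to vanish. That reduces the matrix to $c\,\mathbf{I}_M$, and the trace normalization finishes the proof.
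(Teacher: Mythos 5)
Your proposal is correct and takes essentially the same route as the paper's proof: the mean vanishes by the sign-flip symmetry, the off-diagonal covariance entries vanish by invariance under a sign/phase flip of a single coordinate, the diagonal entries are equalized by permutation symmetry, and the constant is pinned down by the normalization $\left\| {\bf{e}}_k \right\|^2 = 1$ (the paper sums the diagonal expectations, you take the trace — the same computation). The Schur's-lemma packaging is just an optional abstract wrapper; your own "elementary" reduction via permutations and single-coordinate phase rotations is precisely the argument the paper gives.
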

\begin{proof}
Due to the symmetry of the uniform spherical distribution, ${{{\bf{e}}_k}}$ and $-{{{\bf{e}}_k}}$ have the same distribution and thus we have ${\mathrm{E}}\left\{ {{{\bf{e}}_k}} \right\} =  {\mathrm{E}}\left\{ -{{{\bf{e}}_k}} \right\}$ and hence ${\mathrm{E}}\left\{ {{{\bf{e}}_k}} \right\} = {\bf{0}}$.
For the reason of symmetry, ${{{\bf{e}}_k}} = \left[ {{e_{k,1}}, \ldots ,{e_{k,m}}, \ldots ,{e_{k,M}}} \right]$ and ${{{\bf{e}}_k'}} = \left[ {{e_{k,1}}, \ldots ,-{e_{k,m}}, \ldots ,{e_{k,M}}} \right]$ have the same distribution, where ${e_{k,m}}$ denotes the $m$-th entry in ${{{\bf{e}}_k}}$.
Therefore, we have
\begin{equation}
{\mathrm{E}}\left\{ {{e_{k,m}}e_{k,n}^*} \right\} = {\mathrm{E}}\left\{ { - {e_{k,m}}e_{k,n}^*} \right\} =  - {\mathrm{E}}\left\{ {{e_{k,m}}e_{k,n}^*} \right\}, \forall m \neq n,
\end{equation}
which implies that the covariance terms are zero, i.e., ${\mathrm{E}}\left\{ {{e_{k,m}}e_{k,n}^*} \right\} = 0$, $\forall m \neq n$.
Note that, the zero covariance terms only reflect the lack of correlation between ${e_{k,m}}$ and $e_{k,n}$, but not their independence.
In fact, the entries of ${{{\bf{e}}_k}}$ are dependent on each other, i.e., increasing one entry will decrease all the other entries due to ${{\left\| {{{\bf{e}}_k}} \right\|}} = 1$.
As for the variance, since ${{{\bf{e}}_k}}$ has been normalized, we have
\begin{equation}
\sum\limits_{m = 1}^M {{\mathrm{E}}\left\{ {e_{k,m}^2} \right\}}  = {\mathrm{E}}\left\{ {\sum\limits_{m = 1}^M {e_{k,m}^2} } \right\} = 1.
\end{equation}
Again, based on the symmetry of the uniform spherical distribution, we have ${{\mathrm{E}}\left\{ {e_{k,m}^2} \right\}} = {{\mathrm{E}}\left\{ {e_{k,n}^2} \right\}}$, $\forall m,n$, and hence we have ${{\mathrm{E}}\left\{ {e_{k,m}^2} \right\}} = \frac{1}{M}$ and ${\mathrm{E}}\left\{ {{{\bf{e}}_k}{\bf{e}}_k^{\mathrm{H}}} \right\} = \frac{1}{M}{{\bf{I}}_M}$.
This completes the proof.
\end{proof}

Let us now define a scalar random variable as ${\nu_{k,i}} = {{\bf{e}}_k^{\mathrm{H}}{{\bf{e}}_i}} \in \mathbb{C}$, which denotes the projection of channel direction of user $k$ on the channel direction of user $i$.
Note that the random variable ${\nu_{k,i}}$ can characterize the IUI during MRC in \eqref{C3:MIMONOMAMRCSICIndividualAchievableRate}.
Additionally, thanks to the independence between ${{\bf{e}}_k}$ and ${{\left\| {{{\bf{h}}_k}} \right\|}}$, ${\nu_{k,i}}$ is independent of ${{\left\| {{{\bf{h}}_k}} \right\|}}$ and ${{\left\| {{{\bf{h}}_i}} \right\|}}$.
The following Lemma characterizes the mean and variance of ${\nu_{k,i}}$.

\begin{Lem}\label{C3:Lemma2}
For ${\bf{h}}_k \sim \mathcal{CN}\left(\mathbf{0},\frac{1}{1+d_k^{\alpha}}{{\bf{I}}_M}\right)$ and ${{\bf{e}}_k} = \frac{{{{\bf{h}}_k}}}{{\left\| {{{\bf{h}}_k}} \right\|}}$, the random variable ${\nu_{k,i}} = {{\bf{e}}_k^{\mathrm{H}}{{\bf{e}}_i}}$ has a zero mean and variance of $\frac{1}{M}$.
\end{Lem}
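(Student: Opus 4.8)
The plan is to compute the mean and variance of $\nu_{k,i} = {\bf{e}}_k^{\mathrm{H}}{{\bf{e}}_i}$ by conditioning on one of the two channel directions and exploiting the statistical properties of $\mathbf{e}_k$ and $\mathbf{e}_i$ already established in the preceding corollaries. The key observation I would use is that $\mathbf{e}_k$ and $\mathbf{e}_i$ are \emph{independent} (since they are derived from independent channel vectors $\mathbf{h}_k$ and $\mathbf{h}_i$), each uniformly distributed on the unit sphere in $\mathbb{C}^M$, with ${\mathrm{E}}\{\mathbf{e}_k\} = \mathbf{0}$ and ${\mathrm{E}}\{\mathbf{e}_k\mathbf{e}_k^{\mathrm{H}}\} = \frac{1}{M}\mathbf{I}_M$.

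First I would establish the zero-mean property. Conditioning on $\mathbf{e}_i$ and using independence, I would write
\begin{equation}
{\mathrm{E}}\left\{ {\nu_{k,i}} \right\} = {\mathrm{E}}\left\{ {\mathbf{e}_k^{\mathrm{H}}{\mathbf{e}}_i} \right\} = {\mathrm{E}}\left\{ {{\mathrm{E}}\left\{ {\mathbf{e}_k^{\mathrm{H}}} \left| {\mathbf{e}}_i \right. \right\} {\mathbf{e}}_i} \right\} = {\mathrm{E}}\left\{ {\mathbf{e}}_k^{\mathrm{H}} \right\} {\mathrm{E}}\left\{ {\mathbf{e}}_i \right\} = \mathbf{0},
\end{equation}
where the separation of expectations follows from the independence of $\mathbf{e}_k$ and $\mathbf{e}_i$, and the final equality uses ${\mathrm{E}}\{\mathbf{e}_k\} = \mathbf{0}$. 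Next I would compute the variance. Since the mean is zero, the variance equals the second moment ${\mathrm{E}}\{|\nu_{k,i}|^2\}$. I would expand this as
\begin{equation}
{\mathrm{E}}\left\{ {\left| {\nu_{k,i}} \right|^2} \right\} = {\mathrm{E}}\left\{ {\mathbf{e}_i^{\mathrm{H}} {\mathbf{e}}_k {\mathbf{e}}_k^{\mathrm{H}} {\mathbf{e}}_i} \right\} = {\mathrm{E}}\left\{ {\mathbf{e}_i^{\mathrm{H}} {\mathrm{E}}\left\{ {{\mathbf{e}}_k {\mathbf{e}}_k^{\mathrm{H}}} \right\} {\mathbf{e}}_i} \right\},
\end{equation}
where I again condition on $\mathbf{e}_i$ and pull the inner expectation over $\mathbf{e}_k$ inside by independence. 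Substituting ${\mathrm{E}}\{\mathbf{e}_k\mathbf{e}_k^{\mathrm{H}}\} = \frac{1}{M}\mathbf{I}_M$ and then using $\mathbf{e}_i^{\mathrm{H}}\mathbf{e}_i = \|\mathbf{e}_i\|^2 = 1$ gives
\begin{equation}
{\mathrm{E}}\left\{ {\left| {\nu_{k,i}} \right|^2} \right\} = {\mathrm{E}}\left\{ {\mathbf{e}_i^{\mathrm{H}} \frac{1}{M}{\mathbf{I}}_M {\mathbf{e}}_i} \right\} = \frac{1}{M} {\mathrm{E}}\left\{ {\left\| {\mathbf{e}}_i \right\|^2} \right\} = \frac{1}{M}.
\end{equation}

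I do not anticipate a serious obstacle here, since all the structural facts needed are already in place from Lemma~\ref{C3:UnitSphere} and its corollaries; the proof is essentially an exercise in the law of total expectation combined with independence. The only point requiring mild care is justifying the interchange of conditioning and the factorization of expectations, which rests squarely on the independence of $\mathbf{e}_k$ and $\mathbf{e}_i$ (itself inherited from the independence of $\mathbf{h}_k$ and $\mathbf{h}_i$ assumed in the i.i.d.\ channel model) together with the normalization $\|\mathbf{e}_i\| = 1$. I would present the two displayed computations above as the complete argument and conclude that $\nu_{k,i}$ has zero mean and variance $\frac{1}{M}$, which is exactly what is needed to characterize the average IUI power term ${\mathrm{E}}\{|\mathbf{e}_k^{\mathrm{H}}\mathbf{e}_i|^2\} = \frac{1}{M}$ appearing in the MRC-SIC rate expression \eqref{C3:MIMONOMAMRCSICIndividualAchievableRate}, and thereby to prove the asymptotic tightness of the upper bound claimed in Theorem~\ref{C3:Theorem1}.
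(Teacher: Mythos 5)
Your proof is correct and follows essentially the same route as the paper: both arguments condition on one channel direction (you fix $\mathbf{e}_i$, the paper fixes $\mathbf{e}_k$, which is immaterial by symmetry), invoke independence together with ${\mathrm{E}}\{\mathbf{e}_k\} = \mathbf{0}$ and ${\mathrm{E}}\{\mathbf{e}_k\mathbf{e}_k^{\mathrm{H}}\} = \frac{1}{M}\mathbf{I}_M$, and finish using the unit norm of the conditioning vector. The only cosmetic remark is that ${\mathrm{E}}\{\mathbf{e}_k^{\mathrm{H}}\}{\mathrm{E}}\{\mathbf{e}_i\}$ is a scalar, so it should be written as $0$ rather than $\mathbf{0}$.
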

\begin{proof}
In fact, ${\nu_{k,i}}$ denotes the projection of ${\bf{e}}_k$ on ${{\bf{e}}_i}$, where ${\bf{e}}_k$ and ${{\bf{e}}_i}$ are uniformly distributed in a unit sphere on $\mathbb{C}^{M}$.
Upon fixing one channel direction ${{\bf{e}}_k}$, the conditional mean and variance of ${\nu_{k,i}}$ are given by
\begin{equation}\label{C3:ConditionalMeanVariance}
{\mathrm{E}}\left\{ {{\nu_{k,i}}\left| {{{\bf{e}}_k}} \right.} \right\} = {\bf{e}}_k^{\mathrm{H}}{\mathrm{E}}\left\{ {{{\bf{e}}_i}} \right\} =0 \;\text{and}\;
{\mathrm{E}}\left\{ {\left|{\nu_{k,i}}\right|^2\left| {{{\bf{e}}_k}} \right.} \right\} = {{\bf{e}}_k^{\mathrm{H}}{\mathrm{E}}\left\{ {{{\bf{e}}_i}{\bf{e}}_i^{\mathrm{H}}} \right\}{{\bf{e}}_k}} = \frac{1}{M},
\end{equation}
respectively.
Since ${{\bf{e}}_k}$ is uniformly distributed, the integral over ${{\bf{e}}_k}$ will not change the mean and variance.
Therefore, the mean and variance of ${\nu_{k,i}}$ are given by
\begin{align}\label{C3:ConditionalMeanVarianceII}
{\mathrm{E}}\left\{ {\nu_{k,i}} \right\} = 0 \;\text{and}\;
{\mathrm{E}}\left\{ \left|{\nu_{k,i}}\right|^2 \right\} = \frac{1}{M},
\end{align}
respectively, which completes the proof.
\end{proof}

Now, based on \eqref{C3:MIMONOMAMRCSICIndividualAchievableRate}, we have the asymptotic achievable data rate of user $k$ as follows:
\begin{align}\label{C3:AsymptoticSumRateMIMONOMA}
\mathop {\lim }\limits_{K \to \infty } R_{k,\mathrm{MRC-SIC}}^{{\mathrm{MIMO-NOMA}}} & = \mathop {\lim }\limits_{K \to \infty } {\ln}\left(1+ {\frac{{{p_k}{{\left\| {{{\bf{h}}_k}} \right\|}^2}}}{{\sum\limits_{i = k + 1}^K {p_i}{{\left\| {{{\bf{h}}_i}} \right\|}^2} \left|{\nu_{k,i}}\right|^2  + {N_0}}}} \right) \notag\\
& \mathop  =\limits^{(a)} \mathop {\lim }\limits_{K \to \infty } {\ln}\left(1+ {\frac{{{p_k}{{\left\| {{{\bf{h}}_k}} \right\|}^2}}}{{\sum\limits_{i = k + 1}^K {p_i}{{\left\| {{{\bf{h}}_i}} \right\|}^2} \frac{1}{M}  + {N_0}}}} \right) \notag\\
& \mathop  = \limits^{(b)} \mathop {\lim }\limits_{K \to \infty } M{\ln}\left(1+ {\frac{{{p_k}{{\left\| {{{\bf{h}}_k}} \right\|}^2} \frac{1}{M}}}{{\sum\limits_{i = k + 1}^K {p_i}{{\left\| {{{\bf{h}}_i}} \right\|}^2} \frac{1}{M}  + {N_0}}}} \right).
\end{align}
Note that the equality in $(a)$ holds asymptotically by applying Corollary \ref{C3:Corollary1} and Lemma \ref{C3:Lemma2} with $K \to \infty$.
In addition, the equality in $(b)$ holds with $K \to \infty$ since $\mathop {\lim }\limits_{x \to 0} {\ln}\left( {1 + Mx} \right) = \mathop {\lim }\limits_{x \to 0} M{\ln}\left( {1 + x} \right)$.
As a result, the asymptotic achievable sum-rate in \eqref{C3:LowerBoundMIMONOMA2} can be obtained by
\begin{equation}\label{C3:AchievableRateMIMONOMA2}
\mathop {\lim }\limits_{K \to \infty } R_{\mathrm{sum,MRC-SIC}}^{{\mathrm{MIMO-NOMA}}}
= \mathop {\lim }\limits_{K \to \infty } M{\ln}\left( {1 + \frac{{1}}{{M{N_0}}}\sum\limits_{k = 1}^K {p_k{{\left\| {{{\bf{h}}_k}} \right\|}^2}} } \right).
\end{equation}

Now, upon comparing \eqref{C3:UpperBoundMIMONOMA}, \eqref{C3:LowerBoundMIMONOMA2}, and \eqref{C3:AchievableRateMIMONOMA2}, it can be observed that the upper bound and the lower bound considered converge when $K \to \infty$.
In other words, for any given power allocation strategy ${\bf{p}} = \left[ {{p_1}, \ldots ,{p_K}} \right]$, the upper bound in \eqref{C3:UpperBoundMIMONOMA} is asymptotically tight.
It completes the proof.

\section{Proof of Theorem \ref{C3:Theorem2}}\label{C3:AppendixB}
Based on \eqref{C3:AsymptoticSumRateMIMONOMA} in the proof of Theorem \ref{C3:Theorem1} in Appendix \ref{C3:AppendixA}, under the equal resource allocation strategy, i.e., ${{p_{k}}} = \frac{{P_{\mathrm{max}}}}{K}$, $\forall k$, the asymptotic individual rate of user $k$ of the \emph{m}MIMO-NOMA system with the MRC-SIC detection in \eqref{C3:mMIMONOMAIndividualAchievableRate} can be obtained by
\begin{equation}\label{C3:mMIMONOMAIndividualAchievableRate2}
\mathop {\lim }\limits_{K \rightarrow \infty} R_{k}^{\mathrm{\emph{m}MIMO-NOMA}} = \mathop {\lim }\limits_{K \rightarrow \infty} {\ln}\left(1+ {\frac{{{P_{\mathrm{max}}}{{\left\| {{{\bf{h}}_k}} \right\|}^2}}}{{\sum\limits_{i = k + 1}^K {P_{\mathrm{max}}}{{\left\| {{{\bf{h}}_i}} \right\|}^2} \frac{1}{M}  + {KN_0}}}} \right).
\end{equation}
With the aid of a large-scale antenna array, i.e., $M \to \infty$, the fluctuation imposed by the small-scale fading on the channel gain can be averaged out as a benefit of channel hardening\cite{HochwaldMassiveMIMO}.
Therefore, the channel gain is mainly determined by the large-scale fading asymptotically as follows:
\begin{equation}
\mathop {\lim }\limits_{M \rightarrow \infty}\frac{{\left\| {{{\bf{h}}_k}} \right\|}^2}{M} = \frac{1}{1+d_k^{\alpha}}.
\end{equation}
As a result, the asymptotic data rate of user $k$ in \eqref{C3:mMIMONOMAIndividualAchievableRate2} is given by:
\begin{equation}\label{C3:mMIMONOMAIndividualAchievableRate3}
\mathop {\lim }\limits_{K \rightarrow \infty, M \rightarrow \infty} R_{k}^{\mathrm{\emph{m}MIMO-NOMA}} = \mathop {\lim }\limits_{K \rightarrow \infty, M \rightarrow \infty} {\ln}\left(1+ {\frac{M{{P_{\mathrm{max}}}\frac{1}{1+d_k^{\alpha}}}}{{\sum\limits_{i = k + 1}^K {P_{\mathrm{max}}}\frac{1}{1+d_i^{\alpha}}  + {KN_0}}}} \right).
\end{equation}

Based on the theory of order statistics\cite{David2003order}, the PDF of $d_k$ is given by
\begin{equation}\label{C3:OrderedDistancePDF}
{f_{{d_k}}}\left( x \right) = k\left( {\begin{array}{*{20}{c}}
{K}\\
{k}
\end{array}} \right) {F_{{d}}}^{k-1}\left( x \right) \left(1-{F_{{d}}}\left( x \right)\right)^{K-k} {f_{{d}}}\left( x \right), D_0 \le z \le D.
\end{equation}
Thus, the mean of the large-scale fading of user $k$ can be written as
\begin{align}\label{C3:OrderedDistanceMean}
I_k &= {{\mathrm{E}}_{{d_k}}}\left\{ {\frac{1}{{1 + d_k^\alpha }}} \right\} = \int_{{D_0}}^D {\frac{1}{{1 + {x^\alpha }}}} {f_{{d_k}}}\left( x \right)dx  \notag\\
& \approx \left( {\begin{array}{*{20}{c}}
K\\
k
\end{array}} \right){\frac{{k}}{{D + {D_0}}}} \sum\limits_{n = 1}^N \frac{{\beta _n}}{c_n} {\left( {\frac{{\phi _n^2 - D_0^2}}{{{D^2} - D_0^2}}} \right)^{k - 1}}{\left( {\frac{{{D^2} - \phi _n^2}}{{{D^2} - D_0^2}}} \right)^{K - k}},
\end{align}
with $\phi_n = {\frac{D-D_0}{2}\cos \frac{{2n - 1}}{{2N}}\pi  + \frac{D+D_0}{2}}$.
For a large number of users, i.e., $K \to \infty$, the random IUI term in \eqref{C3:mMIMONOMAIndividualAchievableRate3} can be approximated by a deterministic value given by
\begin{equation}\label{C3:DeterministicInterference}
\mathop {\lim }\limits_{K \rightarrow \infty} \sum\limits_{i = k + 1}^K {P_{\mathrm{max}}}\frac{1}{1+d_i^{\alpha}} \approx \mathop {\lim }\limits_{K \rightarrow \infty} \sum\limits_{i = k + 1}^K {P_{\mathrm{max}}} I_i.
\end{equation}
Now, the asymptotic ergodic data rate of user $k$ can be approximated by
\begin{align}\label{C3:mMIMONOMAErgodicRate}
&\mathop {\lim }\limits_{K \rightarrow \infty, M \rightarrow \infty} \overline{R_{k}^{\mathrm{\emph{m}MIMO-NOMA}}}= \mathop {\lim }\limits_{K \rightarrow \infty, M \rightarrow \infty} \int_{{D_0}}^D {\ln \left( {1 + \frac{{{\psi _k}}}{{1 + {x^\alpha }}}} \right){f_{{d_k}}}\left( x \right)dx} \notag\\
& \approx \mathop {\lim }\limits_{K \to \infty ,M \to \infty } \left( {\begin{array}{*{20}{c}}
K\\
k
\end{array}} \right){\frac{{k}}{{D + {D_0}}}} \sum\limits_{n = 1}^N {{\beta _n}\ln \left( {1 + \frac{{{\psi _k}}}{{{c_n}}}} \right)} {\left( {\frac{{\phi _n^2 - D_0^2}}{{{D^2} - D_0^2}}} \right)^{k - 1}}{\left( {\frac{{{D^2} - \phi _n^2}}{{{D^2} - D_0^2}}} \right)^{K - k}},
\end{align}
with ${\psi _k} = \frac{{{P_{\mathrm{max}}}M}}{{\sum\nolimits_{i = k + 1}^K {{P_{\mathrm{max}}}{I_i} + {KN_0}} }}$.
Substituting \eqref{C3:mMIMONOMAErgodicRate} into \eqref{C3:mMIMONOMASumRate} yields the asymptotic ergodic sum-rate of the \emph{m}MIMO-NOMA system with the MRC-SIC detection as in \eqref{C3:ErgodicSumRatemMIMONOMA}, which completes the proof.

\section{Proof of Theorem \ref{C3:Theorem3}}\label{C3:AppendixC}
With $D=D_0$, based on the channel hardening property\cite{HochwaldMassiveMIMO}, the channel gain can be asymptotically formulated as:
\begin{equation}\label{C3:ChannelHardening}
\mathop {\lim }\limits_{M \rightarrow \infty}\frac{{\left\| {{{\bf{h}}_k}} \right\|}^2}{M} = \frac{1}{1+D_0^{\alpha}}, \forall k.
\end{equation}
Substituting \eqref{C3:ChannelHardening} into \eqref{C3:mMIMONOMAIndividualAchievableRate2}, the asymptotic individual rate of user $k$ of the \emph{m}MIMO-NOMA system with $D=D_0$ is obtained by
\begin{align}\label{C3:DD0mMIMONOMAIndividualAchievableRate2}
\mathop {\lim }\limits_{K \rightarrow \infty, M \rightarrow \infty} R_{k}^{\mathrm{\emph{m}MIMO-NOMA}} &= \mathop {\lim }\limits_{K \rightarrow \infty, M \rightarrow \infty} {\ln}\left(1+ M{\frac{{{P_{\mathrm{max}}}\frac{1}{1+D_0^{\alpha}}}}{{\sum\limits_{i = k + 1}^K {P_{\mathrm{max}}} \frac{1}{1+D_0^{\alpha}}  + K{N_0}}}} \right) \notag\\
& = \mathop {\lim }\limits_{K \rightarrow \infty, M \rightarrow \infty} \ln \left( {1 + \frac{{\delta \varpi }}{{\left( {1 - \frac{k}{K}} \right)\varpi  + 1}}} \right),
\end{align}
where $\delta = \frac{M}{K}$ and $\varpi = \frac{{P_{\mathrm{max}}}}{\left({1+D_0^{\alpha}}\right) N_0}$.
We can observe that the asymptotic individual rate of user $k$ in \eqref{C3:DD0mMIMONOMAIndividualAchievableRate2} becomes a deterministic value for $K \rightarrow \infty$ and $M \rightarrow \infty$ due to the channel hardening property.
As a result, we have $\mathop {\lim }\limits_{K \rightarrow \infty, M \rightarrow \infty} R_{k}^{\mathrm{\emph{m}MIMO-NOMA}} = \mathop {\lim }\limits_{K \rightarrow \infty, M \rightarrow \infty} \overline{R_{k}^{\mathrm{\emph{m}MIMO-NOMA}}}$.

Now, the asymptotic ergodic sum-rate of the \emph{m}MIMO-NOMA system with MRC-SIC receiver can be obtained by
\begin{align}\label{C3:DD0mMIMONOMAErgodicRate}
&\mathop {\lim }\limits_{K \rightarrow \infty, M \rightarrow \infty} \overline{R_{\mathrm{sum}}^{\mathrm{\emph{m}MIMO-NOMA}}} = \mathop {\lim }\limits_{K \to \infty ,M \to \infty } \sum\limits_{k = 1}^K \ln \left( {1 + \frac{{\delta \varpi }}{{\left( {1 - \frac{k}{K}} \right)\varpi  + 1}}} \right) \\
&= \mathop {\lim }\limits_{K \to \infty ,M \to \infty } K {{\int_{0}^1 {\ln \left( {1 + \frac{{\delta \varpi }}{{\left( {1 - x} \right)\varpi  + 1}}} \right) dx} } } \notag\\
&= \mathop {\lim }\limits_{K \to \infty ,M \to \infty } \frac{{M}}{\varpi\delta} \left[ \ln \left( 1 \hspace{-1mm}+\hspace{-0.5mm} \varpi\delta \hspace{-0.5mm}+\hspace{-0.5mm} \varpi \right) \left( 1 \hspace{-0.5mm}+\hspace{-0.5mm} \varpi\delta \hspace{-0.5mm}+\hspace{-0.5mm} \varpi \right) - \ln \left( 1 \hspace{-0.5mm}+\hspace{-0.5mm} \varpi\delta \right)\left( 1 \hspace{-0.5mm}+\hspace{-0.5mm} \varpi\delta \right) - \ln \left( 1 \hspace{-0.5mm}+\hspace{-0.5mm} \varpi \right)\left( 1 \hspace{-0.5mm}+\hspace{-0.5mm} \varpi \right) \right], \notag
\end{align}
which completes the proof of \eqref{C3:DD0ErgodicSumRatemMIMONOMA}.

On the other hand, under the equal resource allocation strategy, the asymptotic individual rate of user $k$ of the \emph{m}MIMO-OMA system with the MRC detection in \eqref{C3:mMIMOOMAIndividualAchievableRate} can be approximated by
\begin{equation}\label{C3:DD0mMIMOOMAIndividualAchievableRate}
R_{k}^{\mathrm{\emph{m}MIMO-OMA}} \approx \delta\varsigma{\ln}\left(1+ {\frac{{{P_{\mathrm{max}}}{{\left\| {{{\bf{h}}_k}} \right\|}^2}}}{{{\varsigma M N_0}}}} \right).
\end{equation}
Exploiting the channel hardening property as stated in \eqref{C3:ChannelHardening}, the individual rate of user $k$ in \eqref{C3:DD0mMIMOOMAIndividualAchievableRate} can be approximated by a deterministic value and we have the asymptotic ergodic sum-rate of the  \emph{m}MIMO-OMA system considered as
\begin{equation}\label{C3:DD0mMIMOOMAErgodicRate}
\mathop {\lim }\limits_{M \rightarrow \infty} \overline{R_{\mathrm{sum}}^{\mathrm{\emph{m}MIMO-OMA}}}
\approx \mathop {\lim }\limits_{M \to \infty } {\varsigma M} \ln \left( {1 + \frac{\varpi}{\varsigma}}\right),
\end{equation}
which completes the proof of \eqref{C3:DD0ErgodicSumRatemMIMOOMA}.

\chapter{Proof of Theories of Chapter 4}\label{C4:appendix_2b}

\section{Proof of Theorem \ref{C4:theorem1}}\label{C4:AppendixB}
It is clear that the random variable ${\phi}$ for a given ${\mathbf{y}}$ is complex Gaussian distributed, where its conditional mean and variance are given by
\begin{align}
\mathrm{E}\left\{ {{\phi}|{\mathbf{y}}} \right\} &= {{\mathbf{y}}^{\mathrm{H}}}\mathrm{E}\left\{ {\mathbf{x}} \right\}/{{\left| {\mathbf{y}} \right|}} = 0 \;\text{and}\notag\\
\mathrm{Var}\left\{ {{\phi}|{\mathbf{y}}} \right\} &= {{{\mathbf{y}}^{\mathrm{H}}\mathrm{E}\left\{ {{\mathbf{x}}{{\mathbf{x}}^{\mathrm{H}}}} \right\}{\mathbf{y}}}}/{{{{\left| {\mathbf{y}} \right|}^2}}} = \sigma_x^2,
\end{align}
respectively.
Since the conditional mean and variance of ${\phi}$ are uncorrelated with ${\mathbf{y}}$, hence ${\phi}$ is independent of ${\mathbf{y}}$ with zero mean and variance of $\sigma_x^2$.
This completes the proof.

\chapter{Proof of Theories of Chapter 5}\label{C5:appendix_2b}

\section{Proof of Theorem \ref{C5:Theorem1}}\label{C5:AppendixA}
In the following, we prove Theorem \ref{C5:Theorem1} by comparing the total transmit power for four kinds of SIC policies. Given user $m$ and user $n$ multiplexed on subcarrier $i$, there are following four possible cases on SIC decoding order:

\begin{itemize}
  \item Case I: $u_{i,m} = 1$, $u_{i,n} = 0$,
  \item Case II: $u_{i,m} = 0$, $u_{i,n} = 1$,
  \item Case III: $u_{i,m} = 1$, $u_{i,n} = 1$,
  \item Case IV: $u_{i,m} = 0$, $u_{i,n} = 0$,
\end{itemize}
\noindent where Case I and Case II correspond to selecting only user $m$ or user $n$ to perform SIC, respectively. Case III and Case IV correspond to selecting both users or neither user to perform SIC, respectively.
In Case I, user $m$ is selected to perform SIC and user $n$ directly decodes its own message. According to \eqref{C5:OutageProbabilityWithSIC} and \eqref{C5:OutageProbabilityWithoutSIC}, the outage probabilities of users $m$ and $n$ are given by
\begin{align}
    {\mathrm{P}}_{i,m}^{{\mathrm{out}}} &= {\mathrm{Pr}}\left\{ {\frac{{{{\left| {{h_{i,m}}} \right|}^2}}}{{\sigma _{i,m}^2}}< \max \left( {\frac{{{\gamma _{i,n}}}}{{{p_{i,n}} - {p_{i,m}}{\gamma _{i,n}}}},\frac{{{\gamma _{i,m}}}}{{{p_{i,m}}}}} \right)} \right\}\; \text{and} \;\; \notag\\
    {\mathrm{P}}_{i,n}^{{\mathrm{out}}} &= {\mathrm{Pr}}\left\{ \frac{{{{\left| {{h_{i,n}}} \right|}^2}}}{{\sigma _{i,n}^2}} < \frac{{{\gamma _{i,n}}}}{{{p_{i,n}} - {p_{i,m}}{\gamma _{i,n}}}}  \right\},\label{C5:OutageProbabilityWithoutSIC2}
\end{align}
respectively. Note that a prerequisite $p_{i,n} - {p_{i,m}}{\gamma _{i,n}} > 0$ should be satisfied, otherwise the SIC will never be successful, i.e., ${\mathrm{P}}_{i,m}^{{\mathrm{out}}}=1$.

Combining the threshold definition in \eqref{C5:OutageThreshold1} and the QoS constraint C5 in \eqref{C5:P1}, the feasible solution set spanned by \eqref{C5:OutageProbabilityWithoutSIC2} can be characterized by the following equations:
\begin{align}
    p_{i,n} - {p_{i,m}}{\gamma _{i,n}} > 0,\; \frac{{{\gamma _{i,n}}}}{{{p_{i,n}} - {p_{i,m}}{\gamma _{i,n}}}} &\le {\beta _{i,n}},\; \text{and}\label{C5:FeasibleSolution13a}\\
    \max \left( {\frac{{{\gamma _{i,n}}}}{{{p_{i,n}} - {p_{i,m}}{\gamma _{i,n}}}},\frac{{{\gamma _{i,m}}}}{{{p_{i,m}}}}} \right) &\le  {\beta _{i,m}}. \label{C5:FeasibleSolution13b}
\end{align}
Then, recall that ${\beta _{i,m}} \ge {\beta _{i,n}}$, we have
${p_{i,m}} \ge \frac{{{\gamma _{i,m}}}}{{{\beta _{i,m}}}} \;\; \text{and}\;\;
{p_{i,n}} \ge \frac{{{\gamma _{i,n}}}}{{{\beta _{i,n}}}} + \frac{{{\gamma _{i,m}}{\gamma _{i,n}}}}{{{\beta _{i,m}}}}$.
Then, the optimal power allocation for user $m$ and user $n$ on subcarrier $i$ in Case I are given by
\begin{equation}\label{C5:PowerAllocation12}
    {p_{i,m}^{\mathrm{I}}} = \frac{{{\gamma _{i,m}}}}{{{\beta _{i,m}}}} \;\;  \text{and}\;\;
    {p_{i,n}^{\mathrm{I}}} = \frac{{{\gamma _{i,n}}}}{{{\beta _{i,n}}}} + \frac{{{\gamma _{i,m}}{\gamma _{i,n}}}}{{{\beta _{i,m}}}},
\end{equation}
respectively, with the total transmit power
\begin{equation}\label{C5:PowerComsumptionPerSubcarrier1}
    p^{\mathrm{total}}_{\mathrm{I}} = \frac{{{\gamma _{i,m}}}}{{{\beta _{i,m}}}} + \frac{{{\gamma _{i,n}}}}{{{\beta _{i,n}}}} + \frac{{{\gamma _{i,m}}{\gamma _{i,n}}}}{{{\beta _{i,m}}}}.
\end{equation}

Similarly, we can derive the total transmit power for Cases II, III, and IV as follows:
\begin{align}
\hspace{-1mm}p^{\mathrm{total}}_{\mathrm{II}}& \hspace{-0.5mm}=\hspace{-0.5mm} \frac{{{\gamma _{i,n}}}}{{{\beta _{i,n}}}} + \frac{{{\gamma _{i,m}}}}{{{\beta _{i,n}}}} + \frac{{{\gamma _{i,m}}{\gamma _{i,n}}}}{{{\beta _{i,n}}}},\label{C5:PowerComsumptionPerSubcarrier2}\\
\hspace{-1mm}p^{\mathrm{total}}_{\mathrm{III}}&\hspace{-0.5mm}=\hspace{-0.5mm} \frac{1}{{1 \hspace{-0.75mm}-\hspace{-0.75mm} {\gamma _{i,m}}{\gamma _{i,n}}}}\hspace{-0.75mm}\left( \frac{{{\gamma _{i,m}}} \hspace{-0.75mm}+\hspace{-0.75mm} {{\gamma _{i,m}}{\gamma _{i,n}}}}{{{\beta _{i,n}}}} \hspace{-0.75mm}+\hspace{-0.75mm} \frac{{{\gamma _{i,n}}}\hspace{-0.75mm}+\hspace{-0.75mm}{{\gamma _{i,n}}{\gamma _{i,m}}}}{{{\beta _{i,m}}}} \right)\hspace{-1mm}, \; \text{and}\label{C5:PowerComsumptionPerSubcarrier3}\\
\hspace{-1mm}p^{\mathrm{total}}_{\mathrm{IV}}&\hspace{-0.5mm}=\hspace{-0.5mm} \frac{1}{{1 \hspace{-0.75mm}-\hspace{-0.75mm} {\gamma _{i,m}}{\gamma _{i,n}}}}\hspace{-0.75mm}\left( \frac{{{\gamma _{i,m}}} \hspace{-0.75mm}+\hspace{-0.75mm} {{\gamma _{i,m}}{\gamma _{i,n}}}}{{{\beta _{i,m}}}} \hspace{-0.75mm}+\hspace{-0.75mm} \frac{{{\gamma _{i,n}}}\hspace{-0.75mm}+\hspace{-0.75mm}{{\gamma _{i,n}}{\gamma _{i,m}}}}{{{\beta _{i,n}}}} \right)\hspace{-1mm},\label{C5:PowerComsumptionPerSubcarrier4}
\end{align}
where in \eqref{C5:PowerComsumptionPerSubcarrier3} and \eqref{C5:PowerComsumptionPerSubcarrier4}, it is required that $0<1-\gamma _{i,m}\gamma _{i,n}<1$ is satisfied, otherwise no feasible power allocation can satisfy the QoS constraint. Besides, two prerequisites for \eqref{C5:PowerComsumptionPerSubcarrier3} are
\begin{align}
p_{i,m} &= \frac{1}{{1 - {\gamma _{i,m}}{\gamma _{i,n}}}}\left( \frac{{{\gamma _{i,m}}}}{{{\beta _{i,n}}}} + \frac{{{\gamma _{i,m}}{\gamma _{i,n}}}}{{{\beta _{i,m}}}}\right) \ge \frac{{{\gamma _{i,m}}}}{{{\beta _{i,m}}}}\; \text{and}\notag\\
p_{i,n} &= \frac{1}{{1 - {\gamma _{i,m}}{\gamma _{i,n}}}}\left( \frac{{{\gamma _{i,n}}}}{{{\beta _{i,m}}}} + \frac{{{\gamma _{i,m}}{\gamma _{i,n}}}}{{{\beta _{i,n}}}}\right) \ge \frac{{{\gamma _{i,n}}}}{{{\beta _{i,n}}}}, \label{C5:CaseIII1}
\end{align}
respectively, otherwise there is no feasible solution.
From \eqref{C5:PowerComsumptionPerSubcarrier1}, \eqref{C5:PowerComsumptionPerSubcarrier2}, \eqref{C5:PowerComsumptionPerSubcarrier3}, and \eqref{C5:PowerComsumptionPerSubcarrier4}, we obtain
\begin{equation}
p^{\mathrm{total}}_{\mathrm{II}} \ge p^{\mathrm{total}}_{\mathrm{I}},\;
p^{\mathrm{total}}_{\mathrm{III}} > p^{\mathrm{total}}_{\mathrm{I}},\; \text{and}\;
p^{\mathrm{total}}_{\mathrm{IV}} > p^{\mathrm{total}}_{\mathrm{I}},\label{C5:Conclusion3}
\end{equation}
which means that the SIC decoding order in Case I is optimal for minimizing the total transmit power. Note that the relationship of $p^{\mathrm{total}}_{\mathrm{III}} > p^{\mathrm{total}}_{\mathrm{I}}$ can be easily obtained by lower bounding $p_{i,n}$ by $\frac{{{\gamma _{i,n}}}}{{{\beta _{i,n}}}}$ according to \eqref{C5:CaseIII1}. Interestingly, we have $p^{\mathrm{total}}_{\mathrm{II}} = p^{\mathrm{total}}_{\mathrm{I}}$ for ${\beta _{i,m}} = {\beta _{i,n}}$, which means that it will consume the same total transmit power for Case I and Cases II when both users have the same CNR outage threshold.

\section{Proof of Theorem \ref{C5:Theorem2}} \label{C5:AppendixB}
The constraint relaxed problem in \eqref{C5:P3Continuous} is equivalent to the problem in \eqref{C5:P3} if the optimal solution of \eqref{C5:P3Continuous} still satisfies the relaxed constraints in \eqref{C5:P3}.
For the optimal solution of \eqref{C5:P3Continuous}, $\left({{\overline{s}_{i,m}^*}},{\gamma}_{i,m}^*\right)$, $i \in \left\{1, \ldots ,{N_{\mathrm{F}}}\right\}$, $m \in \left\{ 1, \ldots ,M \right\}$, and the corresponding optimal value, $p^{\mathrm{total}}_{\boldsymbol{{\gamma}}} \left({\boldsymbol{{\gamma} }^*} \right)$, we have the following relationship according to constraint $\overline{\text{C8}}$ in \eqref{C5:P3Continuous}:
\begin{equation}\label{C5:C5Again}
{\overline{s}_{i,m}^*} =
\left\{
\begin{array}{ll}
1 & \text{if}\;{\gamma}_{i,m}^* > 0,\\
\overline{s}_{i,m}^* \in \left[ {0,\;1} \right] & \text{if}\;{\gamma}_{i,m}^* = 0,
\end{array}
\right.
\end{equation}
where ${\boldsymbol{{\gamma} }^*} \in \mathbb{R}^{N_{\mathrm{F}} M \times 1}$ denotes the set of ${\gamma}_{i,m}^*$.
Note that reducing $\overline{s}_{i,m}^*$ to zero where ${\gamma}_{i,m}^* = 0$ will not change the optimal value $p^{\mathrm{total}}_{\boldsymbol{{\gamma}}} \left({\boldsymbol{{\gamma} }^*} \right)$ and will not violate constraints $\overline{\text{C1}}$, $\overline{\text{C6}}$, and $\overline{\text{C8}}$ in \eqref{C5:P3Continuous}.
Through the mapping relationship \eqref{C5:OptimalConvert2}, we have
\begin{align}
{s}_{i,m}^* \in \left\{0,\;1\right\} &\subseteq \left[ {0,\;1} \right],\;{{\gamma}}_{i,m}^* = {{s}_{i,m}^*} {\gamma}_{i,m}^*, \; \text{and} \\
\sum\limits_{m = 1}^M {{{s}_{i,m}^*}} &\le \sum\limits_{m = 1}^M {{\overline{s}_{i,m}^*}} \le 2,
\end{align}
which implies that $\left({{{s}_{i,m}^*}},{\gamma}_{i,m}^* \right)$ is also the optimal solution of \eqref{C5:P3Continuous} with the same optimal objective value $p^{\mathrm{total}}_{\boldsymbol{{\gamma}}} \left({\boldsymbol{{\gamma} }^*} \right)$. More importantly, $\left({{{s}_{i,m}^*}},{\gamma}_{i,m}^* \right)$ can also satisfy the constraints C1, C6, and C8 in \eqref{C5:P3}.
Therefore, the problem in \eqref{C5:P3Continuous} is equivalent to the problem in \eqref{C5:P3}, and the optimal solution of \eqref{C5:P3} can be obtained via the mapping relationship in \eqref{C5:OptimalConvert2} from the optimal solution of \eqref{C5:P3Continuous}.

\chapter{Proof of Theories of Chapter 7}\label{C7:appendix_2d}

\section{Proof of Theorem \ref{C7:Theorem1}}\label{C7:AppendixD}
The equation in \eqref{C7:RootInterSect} can be rewritten as
\begin{align}\label{C7:RootInterSect2}
0.891 \frac{{2\pi  - {\psi^{\mathrm{DC}}_{\mathrm{\Theta}}}}}{{{\Theta^2}}}  =
\left[{\psi _{\mathrm{H}}} - 0.891\sqrt{\frac{{5\pi {\psi^2 _{\mathrm{H}}} }}{{6\ln 10}}} {{\mathrm{erf}}\left( {\sqrt { \frac{{6\ln {10}}}{{5 {\psi^2 _{\mathrm{H}}}}}}  + \frac{{{\psi^{\mathrm{DC}}_{\mathrm{\Theta}}}}}{2}} \right)} \right].
\end{align}
By definitions, the main lobe beamwidth cannot be smaller than the half-power beamwidth, i.e., ${\psi^{\mathrm{DC}}_{\mathrm{\Theta}}} \ge {\psi _{\mathrm{H}}}$.
Therefore, we have ${\frac{1}{\psi _{\mathrm{H}}}\sqrt {\frac{{6\ln 10}}{{5}}}  + \frac{{{\psi^{\mathrm{DC}}_{\mathrm{\Theta}}}}}{2}} \ge {\frac{1}{\psi _{\mathrm{H}}}\sqrt {\frac{{6\ln 10}}{{5}}}  + \frac{{\psi _{\mathrm{H}}}}{2}}$.
For an arbitrary half-power beamwidth ${\psi _{\mathrm{H}}} \in \left[0,2\pi\right]$, we can observe that ${\mathrm{erf}}\left( {\frac{1}{\psi _{\mathrm{H}}}\sqrt {\frac{{6\ln 10}}{{5}}}  + \frac{\psi _{\mathrm{H}}}{2}} \right) \approx 1$.
Since the error function ${\mathrm{erf}}\left(\cdot\right)$ should be smaller than 1 and it is an increasing function of the input, we have ${\mathrm{erf}}\left( {\frac{1}{\psi _{\mathrm{H}}}\sqrt {\frac{{6\ln 10}}{{5}}}  + \frac{\psi _{\Theta}}{2}} \right) \approx 1$.
As a result, we have
\begin{equation}\label{C7:RootInterSect3}
0.891\frac{2\pi  - {\psi^{\mathrm{DC}}_{\mathrm{\Theta}}}}{\Theta^2} \approx  \psi_{\mathrm{H}} \left[ 1- 0.891\sqrt{\frac{5\pi}{6\ln 10}} \right],
\end{equation}
and the equation in \eqref{C7:RootInterSect3} can be further rewritten as
\begin{equation}\label{C7:RootInterSect4}
17.844 \left({2\pi  - {\psi^{\mathrm{DC}}_{\mathrm{\Theta}}}}\right) \approx \psi_{\mathrm{-3dB}}{\Theta^2}.
\end{equation}
Note that both ${\psi^{\mathrm{DC}}_{\mathrm{H}}}$ and ${\psi^{\mathrm{DC}}_{\mathrm{\Theta}}}$ are monotonically increasing functions with respect to (w.r.t.) MSLR $\Theta$, since a lower sidelobe power level always leads to a wider beam for DCBF \cite{Dolph1946}.
Therefore, the right hand side (RHS) in \eqref{C7:RootInterSect4} is monotonically increased and the left hand side (LHS) is monotonically decreased with increasing $\Theta$.
With $\Theta \to \infty$, the RHS goes to infinity while the LHS is finite.
On the other hand, when $\Theta = \sqrt{2}$, the sidelobe power level is exactly half of the main lobe response and thus we have ${\psi _{\mathrm{H}}} = {\psi_{\mathrm{\Theta}}} \ll 2\pi$.
Therefore, the LHS is larger than the RHS in \eqref{C7:RootInterSect4} at $\Theta = \sqrt{2}$.
As a result, there exists a unique root for the equation in \eqref{C7:RootInterSect4}.
In other words, there exists only one root of the equation in \eqref{C7:RootInterSect}, which completes the proof.

\section{Proof of Theorem \ref{C7:BeamDirection}}\label{C7:AppendixD2}

Firstly, it is clearly that the main beam direction should be located within the range of angles spanned by the two NOMA users, i.e., $\theta_0 \in \left[\theta_{1,0},\theta_{2,0}\right]$, as steering the beam outside this range degrades both users' effective channel gains.
Since we have assumed that only the two users within the half-power beamwidth can be clustered as a NOMA group, in the large number of antennas regime, we have $\mathop {\lim }\limits_{N_{\mathrm{BS}} \to \infty } \frac{\left| {\widetilde h_1}\left({\theta}_0\right) \right|^2}{\mathcal{W}{N_{0}}} \to \infty$ and $\mathop {\lim }\limits_{N_{\mathrm{BS}} \to \infty } \frac{\left| {\widetilde h_2}\left({\theta}_0\right) \right|^2}{\mathcal{W}{N_{0}}} \to \infty$, $\forall \theta_0 \in \left[\theta_{1,0},\theta_{2,0}\right]$.
As a result, the asymptotic sum-rates for both cases in \eqref{C7:SumRate1} and \eqref{C7:SumRate2} can be rewritten as
\begin{align}
\mathop {\lim }\limits_{N_{\mathrm{BS}} \to \infty } R_{{\mathrm{sum}}} \left(\widetilde{\theta}_0\right) &=\mathcal{W}{\log _2}\left( {\frac{{{p_{\max }}{{\left| {\widetilde h_1} \left(\widetilde{\theta}_0\right) \right|}^2}}}{{\mathcal{W}{N_{0}}}}} \right) \;\text{and}\label{C7:HighSNRRateNOMApROOF1}\\[-1mm]
\mathop {\lim }\limits_{N_{\mathrm{BS}} \to \infty } \overline{R}_{{\mathrm{sum}}} \left(\overline{\theta}_0\right) &=\mathcal{W}{\log _2}\left( {\frac{{{p_{\max }}{{\left| {\widetilde h_2}\left(\overline{\theta}_0\right) \right|}^2}}}{{\mathcal{W}{N_{0}}}}} \right),\label{C7:HighSNRRateNOMApROOF2}
\end{align}
respectively.
For both cases, we can observe that the sum-rate of the two NOMA users only depends on strength of the larger effective channel gain after analog beamforming in the large number of antennas regime.
Therefore, we have
\begin{equation}
\theta_{1,0} = \arg\mathop { \max}\limits_{{\theta}_0}  \mathop {\lim }\limits_{N_{\mathrm{BS}} \to \infty } R_{{\mathrm{sum}}}\left( {\theta}_0  \right) \;\text{and}\;
\theta_{2,0} = \arg\mathop { \max}\limits_{{\theta}_0}  \mathop {\lim }\limits_{N_{\mathrm{BS}} \to \infty } \overline{R}_{{\mathrm{sum}}}\left( {\theta}_0  \right),\label{C7:OptimalMainBeamDirection}
\end{equation}
as $\theta_{1,0}$ and $\theta_{2,0}$ can maximize the effective channel gains ${\left| {\widetilde h_1} \left({\theta}_0\right) \right|^2}$ and ${\left| {\widetilde h_2} \left({\theta}_0\right) \right|^2}$ in \eqref{C7:HighSNRRateNOMApROOF1} and \eqref{C7:HighSNRRateNOMApROOF2}, respectively.
The maximum effective channel gains of user 1 and user 2 in the considered two cases can be obtained by
\begin{equation}
{\left| {\widetilde h_1}  \left({\theta}_{0,1}\right) \right|^2} = {\left|{\alpha _{1,0}}\right|}^2 {{N_{{\mathrm{UE}}}}G} \;\;\text{and}\;\; {\left| {\widetilde h_2}  \left({\theta}_{0,2}\right) \right|^2} =  {\left|{\alpha _{2,0}}\right|}^2 {{N_{{\mathrm{UE}}}}G},
\end{equation}
respectively, where the receiving beamforming gain ${N_{{\mathrm{UE}}}}$ is obtained via ${\bf{v}}_k = {\mathbf{a}}_{\mathrm{UE}} \left(  \phi _{k,0}, {N_{{\mathrm{UE}}}} \right)$, $\forall k = \{1,2\}$.
Here, $G$ denotes the main lobe response of transmitting analog beamforming, which depends on the adopted beamwidth control method and the desired half-power beamwidth.
Moreover, rotating a beam does not change its main lobe response.
Hence, we have the same $G$ for the analog beam steered to user 1 and user 2.
Furthermore, due to ${{\left| {{\alpha _{2,0}}} \right|}} \le {{\left| {{\alpha _{1,0}}} \right|}}$, $\mathop {\max }\limits_{{\theta}_0}  \mathop {\lim }\limits_{N_{\mathrm{BS}} \to \infty } {R}_{{\mathrm{sum}}}\left( {\theta}_0  \right) \ge \mathop {\max }\limits_{{\theta}_0}  \mathop {\lim }\limits_{N_{\mathrm{BS}} \to \infty } \overline{{R}}_{{\mathrm{sum}}}\left( {\theta}_0  \right)$ holds, which completes the proof of the first part of Theorem \ref{C7:BeamDirection}.
Then, combining this result with \eqref{C7:OptimalMainBeamDirection}, we can conclude that steering the generated analog beam to the strong user, i.e., user 1, can maximize the system sum-rate in the large number of antennas regime, which completes the proof of the second part of Theorem \ref{C7:BeamDirection}.

\section{Proof of the Equivalence between \eqref{C7:ResourceAllocation33} and \eqref{C7:ResourceAllocation55}}\label{C7:AppendixD3}
We first prove that the variables $\left(\mathcal{Q}^*,\mathbf{\Upsilon}^*\right)$ can optimize the problem in \eqref{C7:ResourceAllocation33} if and only if the variables $\left(\mathcal{Q}^*,\mathbf{\Upsilon}^*,\mathbf{{A}}^*,\mathbf{{B}}^*,\delta^*\right)$ are able to optimize the problem in \eqref{C7:ResourceAllocation55}.
Also, both problem formulations achieve the same optimal value, i.e., $V_{\mathrm{I}}\left(\mathcal{Q}^*,\mathbf{\Upsilon}^*\right) =V_{\mathrm{II}}\left(\mathcal{Q}^*,\mathbf{\Upsilon}^*,\mathbf{{A}}^*,\mathbf{{B}}^*,\delta^*\right)$.

Given any $\left(\mathcal{Q},\mathbf{\Upsilon}\right)$, we can observe that
\begin{align}
{a_{k,i,r}} & = \frac{\sqrt{\Tr\left( {{\widetilde{{\bf{H}}}_k}{\bf{Q}}_i} \right)}}{{I_{k,r}^{{\mathrm{inter}}}\left({\mathcal{{Q}}}\right) + I_{k,i,r}^{{\mathrm{intra}}}\left({\mathcal{{Q}}}\right) + \mathcal{W}{N_{0}}}},\forall i > k, r, \;\text{and}\label{C7:UpdateSlack1}\\
b_{k,r} &= \frac{\sqrt{\Tr\left( {{\widetilde{{\bf{H}}}_k}{\bf{Q}}_k} \right)}}{{I_{k,r}^{{\mathrm{inter}}}\left({\mathcal{{Q}}}\right) + I_{k,k,r}^{{\mathrm{intra}}}\left({\mathcal{{Q}}}\right) + \mathcal{W} {N_{0}}}},\forall k,r,\label{C7:UpdateSlack2}
\end{align}
can maximize $g_1\left(a_{k,i,r},{\mathcal{{Q}}}\right)$ and $g_2\left(b_{k,r},{\mathcal{{Q}}}\right)$ in $\overline{{\mbox{C3}}}$ and $\overline{{\mbox{C4}}}$ in \eqref{C7:ResourceAllocation55}, respectively, which can maximize the feasible solution set of the problem in \eqref{C7:ResourceAllocation55}.
In other words, ${a_{k,i,r}}$ in \eqref{C7:UpdateSlack1} and $b_{k,r}$ in \eqref{C7:UpdateSlack2} are the optimal solution of \eqref{C7:ResourceAllocation55} given any $\left(\mathcal{Q},\mathbf{\Upsilon}\right)$ and we can observe that they lead to exactly the same feasible solution set as the problem in \eqref{C7:ResourceAllocation33}.
Furthermore, for given $\left(\mathcal{Q},\mathbf{\Upsilon}\right)$, we can observe that
\begin{equation}\label{C7:UpdateSlack3}
{\delta} = \frac{\sqrt{R_{{\mathrm{sum}}}\left(\mathbf{\Upsilon}\right)}}{{U_{\mathrm{P}}}\left(\mathcal{Q}\right)}
\end{equation}
maximizes the objective function in \eqref{C7:ResourceAllocation55}.
Since the feasible solution sets of \eqref{C7:ResourceAllocation33} and \eqref{C7:ResourceAllocation55} are identical, the auxiliary variables' update in  \eqref{C7:UpdateSlack1}, \eqref{C7:UpdateSlack2}, and \eqref{C7:UpdateSlack3} yields the same objective value, i.e.,
\begin{equation}\label{C7:SameObj}
V_{\mathrm{I}}\left(\mathcal{Q},\mathbf{\Upsilon}\right) =V_{\mathrm{II}}\left(\mathcal{Q},\mathbf{\Upsilon},\mathbf{{A}},\mathbf{{B}},\delta\right).
\end{equation}

Now, denoting the optimal solution of the problem in \eqref{C7:ResourceAllocation33} as $\left(\mathcal{Q}^*,\mathbf{\Upsilon}^*\right)$, we can obtain $\mathbf{{A}}^*$, $\mathbf{{B}}^*$, and $\delta^*$ according to \eqref{C7:UpdateSlack1}, \eqref{C7:UpdateSlack2}, and \eqref{C7:UpdateSlack3}, respectively.
Then, we have the following implications:
\begin{equation}\label{C7:InequalityI}
V_{\mathrm{II}}\left(\mathcal{Q}^*,\mathbf{\Upsilon}^*,\mathbf{{A}}^*,\mathbf{{B}}^*,\delta^*\right) \mathop= \limits^{\mathrm{\left( a \right)}}V_{\mathrm{I}}\left(\mathcal{Q}^*,\mathbf{\Upsilon}^*\right)
\mathop  \ge \limits^{\mathrm{\left( b\right)}}
V_{\mathrm{I}}\left(\mathcal{Q},\mathbf{\Upsilon}\right) \mathop  = \limits^{\mathrm{\left( c \right)}}V_{\mathrm{II}}\left(\mathcal{Q},\mathbf{\Upsilon},\mathbf{{A}},\mathbf{{B}},\delta\right),
\end{equation}
where the equalities ${\mathrm{\left( a \right)}}$ and ${\mathrm{\left( c \right)}}$ are obtained from \eqref{C7:SameObj} and the inequality ${\mathrm{\left( b \right)}}$ is obtained due to $\left(\mathcal{Q}^*,\mathbf{\Upsilon}^*\right)$ as the optimal solution of \eqref{C7:ResourceAllocation33}.
As a result, $\left(\mathcal{Q}^*,\mathbf{\Upsilon}^*,\mathbf{{A}}^*,\mathbf{{B}}^*,\delta^*\right)$ is the optimal solution of \eqref{C7:ResourceAllocation55} and both problems in \eqref{C7:ResourceAllocation33} and \eqref{C7:ResourceAllocation55} attain the same optimal value.
Similarly, assuming that $\left(\mathcal{Q}^*,\mathbf{\Upsilon}^*,\mathbf{{A}}^*,\mathbf{{B}}^*,\delta^*\right)$ can optimize the problem in \eqref{C7:ResourceAllocation55}, we have
\begin{equation}\label{C7:InequalityII}
V_{\mathrm{I}}\left(\mathcal{Q}^*,\mathbf{\Upsilon}^*\right)
= V_{\mathrm{II}}\left(\mathcal{Q}^*,\mathbf{\Upsilon}^*,\mathbf{{A}}^*,\mathbf{{B}}^*,\delta^*\right) \ge V_{\mathrm{II}}\left(\mathcal{Q},\mathbf{\Upsilon},\mathbf{{A}},\mathbf{{B}},\delta\right)
= V_{\mathrm{I}}\left(\mathcal{Q},\mathbf{\Upsilon}\right),
\end{equation}
which means that $\left(\mathcal{Q}^*,\mathbf{\Upsilon}^*\right)$ is the optimal solution of \eqref{C7:ResourceAllocation33} and both problems in \eqref{C7:ResourceAllocation33} and \eqref{C7:ResourceAllocation55} attain the same optimal value.
It completes the proof.

\clearpage{\pagestyle{empty}\cleardoublepage}

\renewcommand{\bibname}{Bibliography}
\addcontentsline{toc}{chapter}{\protect\numberline{}{Bibliography}}

\bibliographystyle{IEEEtran}
\bibliography{NOMA}

\end{document}